\newtcolorbox[auto counter]{mybox}[2]{
enhanced,
breakable,
label=#1,
colback=blue!5!white,
colframe=blue!75!black,
fonttitle=\bfseries,
title=Box \thetcbcounter: #2
}
\newtheorem{theorem}{Theorem}
\newtheorem{lemma}{Lemma}
\newtheorem{corollary}{Corollary}
\newtheorem{definition}{Definition}
\newtheorem{proposition}{Proposition}
\newtheorem{result}{Result}
\definecolor{ao}{rgb}{0.0, 0.5, 0.0}
\newcommand{\id}{\mathbb{I}}
\newcommand{\weight}{\mathrm{weight}}
\newcommand{\bE}{\mathbb{E}}
\newcommand{\bI}{\mathbb{I}}
\newcommand{\cD}{\mathcal{D}}
\newcommand{\cE}{\mathcal{E}}
\newcommand{\cN}{\mathcal{N}}
\newcommand{\cU}{\mathcal{U}}
\newcommand{\poly}{\mathrm{poly}}
\begin{document}
\title{Approximate Quantum Error Correction with 1D Log-Depth Circuits}
\author{Guoding Liu}\thanks{These authors contributed equally to this work.}
\affiliation{Center for Quantum Information, Institute for Interdisciplinary Information Sciences, Tsinghua University, Beijing, 100084 China}
\author{Zhenyu Du}\thanks{These authors contributed equally to this work.}
\affiliation{Center for Quantum Information, Institute for Interdisciplinary Information Sciences, Tsinghua University, Beijing, 100084 China}
\author{Zi-Wen Liu}
\email{zwliu0@tsinghua.edu.cn}
\affiliation{Yau Mathematical Sciences Center, Tsinghua University, Beijing 100084, China}
\author{Xiongfeng Ma}
\email{xma@tsinghua.edu.cn}
\affiliation{Center for Quantum Information, Institute for Interdisciplinary Information Sciences, Tsinghua University, Beijing, 100084 China}

\begin{abstract}
Efficient and high-performance quantum error correction is essential for achieving fault-tolerant quantum computing. Low-depth random circuits offer a promising approach to identifying effective and practical encoding strategies. In this work, we rigorously prove through information-theoretic analysis that one-dimensional logarithmic-depth random Clifford encoding circuits can achieve high quantum error correction performance. We demonstrate that these random codes typically exhibit good approximate quantum error correction capability by proving that their encoding rate achieves the hashing bound for Pauli noise and the channel capacity for erasure errors. We show that the error correction inaccuracy decays once a threshold of logarithmic depth is exceeded, resulting in negligible recovery errors. This threshold is shown to be lower than that of the simple separate block encoding, and the decay rate is higher. We further establish that these codes are optimal by proving that logarithmic depth is necessary to maintain a constant encoding rate and high error correction performance. To prove our results, we propose decoupling theorems tailored for one-dimensional low-depth circuits. These results also imply strong decoupling and rapid thermalization properties in low-depth random circuits and have potential applications in quantum information science and physics.
\end{abstract}

\maketitle

\section{Introduction}
Quantum error correction \cite{Egan2021corrected, Gong2021correcting, Postler2022tolerant, Leverrier2022QuantumTannerCodes, Zhao2022Correcting, Acharya2023Suppressing, Acharya2024QuantumEC} is crucial for protecting quantum computers from noise, so as to enable universal fault-tolerant quantum computing~\cite{Aharonov1997FTQC,Aharonov2008FTQC,gottesman2009introductionquantumerrorcorrection,Gottesman2014constant,Xu2024BosonicFT,Nelson2025FTlowdepth}, and ultimately realize scalable quantum speedups. Significant experimental and theoretical progress has been made to reduce resource requirements and enhance the performance of quantum error correction. In experiments, surface codes on a two-dimensional system have shown remarkable success in preserving logical information over extended timescales~\cite{Acharya2023Suppressing, Acharya2024QuantumEC}. Nonetheless, the surface code has a fundamental drawback of a low encoding rate. The overhead of physical qubits to make one logical qubit is large for near-term devices~\cite{Fowler2012Surface}.

Recent research on quantum low-density parity-check (qLDPC) codes \cite{Dennis2002Topological, Fowler2012Surface, Gottesman2014constant, Kovalev2013sublinear, Fawzi2020Constant, Panteleev2022goodQLDPC} establishes constant-rate encoding with robust error-correcting capability. Nonetheless, due to the requirement of substantial long-range connectivity, the realization of high-performance qLDPC codes needs all-to-all circuit architectures or prohibitively large circuit depth in low-dimensional systems, which makes practical implementation challenging. Finding encoding schemes with low overhead and minimal device requirements is crucial for achieving practical quantum error correction in near-term quantum devices. This leads to the central question that underlies this work: What circuit architecture and depth can achieve good quantum error correction performance? Ideally, we desire simple circuit geometry and low-depth implementation. Understanding this would be crucial for both the theory and experimental implementation of quantum error correction.

Random quantum circuits offer a promising approach to evaluating the typical performance of various encoding circuits and identifying practical encoding schemes~\cite{gottesman1997stabilizer, Brown2013short, Gullans2021LowDepth,Kong2022CQEC,Darmawan2024Lowdepth,Nelson2025FTlowdepth}.
Here, we conduct the analysis using the approximate quantum error correction (AQEC)~\cite{Leung1997AQEC, Claude2005AQEC, Klesse2007AQEC,Beny2010AQEC} framework, which enables rigorous quantitative analysis for the average behaviors of random circuit ensembles and facilitates analyzing the logical error and channel capacity of generic noise models~\cite{Klesse2007AQEC,wilde2013quantum}. Under suitable conditions, AQEC ensures negligible recovery errors between the initial logical state and the state after decoding, which guarantees feasibility for practical applications and allows more flexible design of codes with various advantages~\cite{Leung1997AQEC,crepeau2005approximatequantumerrorcorrectingcodes}. Moreover, it provides insights into the role of symmetry in quantum information~\cite{Kong2022CQEC} and deepens our understanding of quantum complexity and order in many-body quantum systems~\cite{Yi2024order}.

To address the motivating question, we investigate the AQEC behaviors of random circuit encodings with locality constrained by an one-dimensional (1D) geometry. Previous studies numerically found the 1D $O(\log n)$-depth local random circuits with a brickwork structure can approximately correct a linear number of random erasure errors while preserving a constant encoding rate~\cite{Gullans2021LowDepth}. Further numerical results show that such circuits can approach the hashing bound for local Pauli noise~\cite{Darmawan2024Lowdepth}, matching the performance of random stabilizer codes~\cite{wilde2013quantum}. A simple block-encoding argument provides intuition for the robustness against local noise: the $n$-qubit system is divided into $O(n/\log n)$ blocks, each encoding $\Omega(\log n)$ qubits as a random stabilizer code. This structure corrects noise within individual blocks, making it effective against local noise.

Despite certain numerical evidence and intuition, a rigorous understanding of the encoding properties under general noises has been elusive. Intuitively, 1D random circuits with interconnections between blocks can potentially further enhance the error correction capability due to a stronger scrambling power. A comprehensive analysis would be important, but it poses significant mathematical challenges.

In this work, we present the construction of random codes implemented in 1D and logarithmic-depth circuits, along with information-theoretically rigorous proofs of their powerful AQEC performance.
We specifically analyze two types of 1D random circuit architectures: the standard 1D local random circuit with a brickwork structure~\cite{Brandao2016ApproximateDesign,Dalzell2022Anticoncentrate} and the double-layer blocked circuit composed of two layers of blocked gates~\cite{schuster2024randomunitariesextremelylow,laracuente2024approximateunitarykdesignsshallow}. We will refer to these as the 1D brickwork circuit and the double-layer blocked circuit, respectively. Compared to the separate block model discussed before, there exist connections between blocks in these two models.
There have been extensive prior studies on the approximate unitary design generation properties of such circuits~\cite{Brandao2016ApproximateDesign,Dalzell2022Anticoncentrate,schuster2024randomunitariesextremelylow,laracuente2024approximateunitarykdesignsshallow}. Here we aim to examine their application as encoding schemes, which is morally relevant to design generation but essentially a different problem (as will be further explained later). For example, the 1D brickwork circuit is known to form an approximate 2-design at $O(n)$ depth, and the double-layer blocked circuit achieves this at $O(\log n)$ depth. However, this finding alone does not guarantee a good encoding scheme at the same depth. To protect logical information with a constant encoding rate, the size of the logical space is exponential in $n$, which requires an exponentially small error in the approximate design. Since the approximate 2-design error at $O(\log n)$ depth is only polynomially small~\cite{schuster2024randomunitariesextremelylow,laracuente2024approximateunitarykdesignsshallow}, our proofs for the performance of these encoding circuits require new technical work. By demonstrating their effectiveness in decoupling and error correction tasks, our research complements the common study from a design perspective~\cite{schuster2024randomunitariesextremelylow,laracuente2024approximateunitarykdesignsshallow}.

We show that at a logarithmic depth, two encoding schemes can tolerate constant-rate stochastic noises, including local Pauli, erasure, amplitude-damping noises, and weak-correlated noises, and maintain a constant encoding rate. For the double-layer blocked circuit, we prove that they can tolerate a higher encoding rate for local noises at $\omega(\log n)$ depth. For instance, they tolerate strength-$\Vec{p}$ local Pauli noise with an encoding rate $k/n$, provided that $1-k/n-h(\Vec{p})> 0$. This achieves the well-known hashing bound of the channel capacity for local Pauli noise \cite{DiVincenzo1998channelcapacity, wilde2013quantum, Ataides2021XZZX}. For independently and identically distributed (i.i.d.)~erasure errors with error rate $p$, the encoding rate can achieve the quantum channel capacity $1-2p$.

Beyond the achievable encoding rate, we show that the recovery error of the two encoding schemes decays exponentially with the depth of encoding circuits.
The proof is based on transforming the recovery error after noise and decoding into a decoupling problem, and we establish decoupling theorems for two types of 1D low-depth circuits. Particularly, we show that the error polynomially decays for $O(\log n)$-depth circuits at a constant encoding rate and vanishes for $\omega(\log n)$-depth circuits at a higher encoding rate. Compared to a previous result that $O(\log^3 n)$-depth all-to-all Clifford circuits establish decoupling against a linear-size adversarial error~\cite{brown2013scramblingspeedrandomquantum,Brown2013short}, our work focuses on decoupling against constant-rate stochastic noise. This relaxation is more practically relevant, and we show that 1D $O(\log n)$-depth circuits generally suffice for decoupling in this case. Moreover, we compare the performance between the double-layer blocked scheme and the intuitive separate block approach. We find that for certain noise and encoding rate regimes, the double-layer blocked scheme requires a lower depth to make the recovery error decay. Also, the decay rate associated with the double-layer blocked scheme is better than the separate block approach. This advantage also holds for 1D brickwork circuits.

Furthermore, we establish lower bounds on the circuit depth required to achieve high AQEC performance. Specifically, for a $D$-dimensional circuit, we prove that the recovery error decays polynomially or that a constant encoding rate is achieved, the circuit depth must be at least $\Omega\left( (\log n)^{1/D} \right)$. Therefore, our encoding schemes are optimal since they attain the 1D lower bound $\Omega(\log n)$. This lower bound also generalizes existing results for the circuit depth required to achieve a high code distance or for other quantum error correction and fault-tolerant quantum computing metrics~\cite{baspin2023lowerboundoverheadquantum,Yi2024order}.

Beyond the crucial role in quantum computing and quantum error correction, our results on low-depth random circuits are expected to find applications in quantum many-body physics. The decoupling property and the robustness of random circuits against errors relate to their capacity to scramble information and generate entanglement~\cite{Brown2013short,brown2013scramblingspeedrandomquantum,Choi2020QECMIPT}. Haar-random unitary and random Clifford operations exhibit strong scrambling power and naturally have good quantum error correction performance. For instance, random stabilizer codes \cite{gottesman1997stabilizer} achieve the well-known Singleton bound, $n-k-2(d-1) \geq 0$~\cite{Cerf1997singleton}, with a constant encoding rate $k/n$ and a linear code distance $d = \Theta(n)$. However, Haar-random operations require circuit depth exponential in the number of qubits for implementation. Emulating the utility of Haar random unitaries using low-dimensional, low-depth random circuits is essential for many practical applications like random circuit sampling~\cite{arute2019supremacy,Bouland2019sampling,YulinWu2021Superconducting,Movassagh2023hardness,Morvan2024phase,Gao2025Advantage}, quantum device benchmarking~\cite{Elben2023toolbox,huang2020shadow,chen2021robust,liu2024auxiliaryfreereplicashadowestimation,Knill2008RB,Emerson2011prlRB,Liu2024multiqubit}, and quantum gravity~\cite{Patrick2007Blackhole,Almheiri2015QEC,Pastawski2015Holographic}. Our results also imply that subsystems of 1D low-depth random circuit states rapidly thermalize, providing a powerful tool for investigating quantum many-body dynamics~\cite{Nahum2017Random,Nahum2018randomness,Zhou2019random}.

The paper is organized as follows: In Section \ref{sec:summary}, we introduce the low-depth random circuit models and summarize our main results. We also outline the theoretical approach to analyze the AQEC performance. In Section~\ref{sc:pre}, we introduce the preliminaries for this work. In Sections~\ref{sc:upper} and~\ref{sec:lower_bound}, we elaborate on the main results about the AQEC performance of random codes, as well as the lower bounds for the circuit depth required for AQEC. Finally, in Section~\ref{sc:discuss}, we conclude with a discussion of our results and implications for future research.

\section{Circuit models and summary of results}\label{sec:summary}
In this work, we consider randomized code constructions, where the encoding operation of a code is randomly drawn from a circuit ensemble. As mentioned earlier, we mainly analyze two types of 1D random circuit ensembles and prove their decoupling properties and AQEC performance at a logarithmic depth.

The 1D brickwork circuit represents a relatively standard circuit architecture.  We consider the case with the periodic boundary condition, and let $q$ be the local dimension. The open boundary condition is similar. Without loss of generality, here we consider the total number of qudits $n$ to be even. For an $n$-qudit random unitary operation $U$ composed of $s$ two-qudit gates, $\{U^{(1)}, U^{(2)}, \cdots, U^{(s)}\}$, with $U^{(i)}$ randomly sampled from the unitary group of dimension $q^2$ with support $A^{(i)}\subset [n] = \{1, 2, \cdots, n\}$, $U$ has an expression,
\begin{equation}\label{eq:randomtwounitary}
U = U^{(s)}_{A^{(s)}}U^{(s-1)}_{A^{(s-1)}}\cdots U^{(1)}_{A^{(1)}}.
\end{equation}
For the task of decoupling, each gate $U^{(i)}$ can be sampled from a unitary 2-design group, like the Clifford group, instead of the whole unitary group.

The 1D depth-$d$ brickwork circuit is composed of alternating layers of two-qudit gates in a staggered arrangement, with the formal definition shown below.
\begin{equation}\label{eq:1DLQC}
U = \prod_{l=1}^{d} U^{[l]},
\end{equation}
where
\begin{equation}
U^{[l]} =
\begin{cases}
\begin{aligned}
\bigotimes_{i=1}^{n/2} U^{((l-1)n/2+i)}_{\{2i-1,2i\}},
\end{aligned} \quad &\text{if } \mod(l,2)=1, \\
\begin{aligned}
\bigotimes_{i=1}^{n/2} U^{((l-1)n/2+i)}_{\{2i,2i+1\}},
\end{aligned} \quad &\text{if } \mod(l,2)=0.
\end{cases}
\end{equation}
Here, $U^{[l]}$ is the unitary gate on layer $l$. The qudit index $n+1$ is identified with $1$ to match the periodic boundary condition. In our work, we set depth $d=\log\frac{n}{\varepsilon}+\frac{\log n}{\log \frac{q^2+1}{2q}} + \frac{\log(e-1)}{\log \frac{q^2+1}{2q}}+1$. The local dimension $q$ is a prefixed constant, and the depth is specified by the parameter $\varepsilon$. We denote this 1D brickwork circuit ensemble by $\mathfrak{B}_n^{\varepsilon}$. When the number of qubits and circuit depth are not of primary concern, we simply refer to this architecture as $\mathfrak{B}$. A diagram of a 1D brickwork circuit is shown in Fig.~\ref{fig:1Dlocalrandomcircuit}.
This brickwork structure is particularly well-suited for experimental implementation, as it can be realized by applying gates between nearest-neighbor qudits without complex gate compilation. Furthermore, this circuit is an excellent candidate for generating randomness and entanglement. After a depth of $d$, the information initially contained in any single qudit is scrambled across $2d$ qudits.

In this work, we will use circuit $U$ as the encoding map which encodes $2^k$-dimension logical space into $2^n$ physical space. To use this circuit for encoding, we set the local dimension of each qudit to $q = 2^{\frac{n}{k}}$, where a single qubit of information is encoded per qudit. By selecting $n/k$ as a constant, the local dimension $q$ also remains constant. This configuration yields a constant encoding rate of $k/n$. The total number of qudits is correspondingly changed to $k$, which is equivalent to $n$ qubits.

\begin{figure}[htbp!]
\raggedright
\begin{minipage}[b]{0.98\linewidth}
\subfigure[]{
\label{fig:1Dlocalrandomcircuit}
\includegraphics[width=5cm]{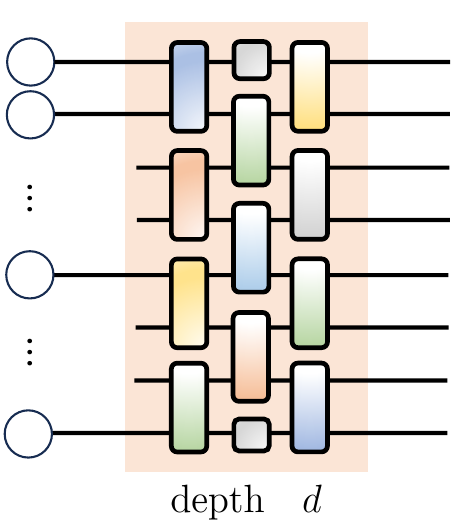}}
\subfigure[]{\label{fig:1Dlowdepthcircuit}
\includegraphics[width=10cm]{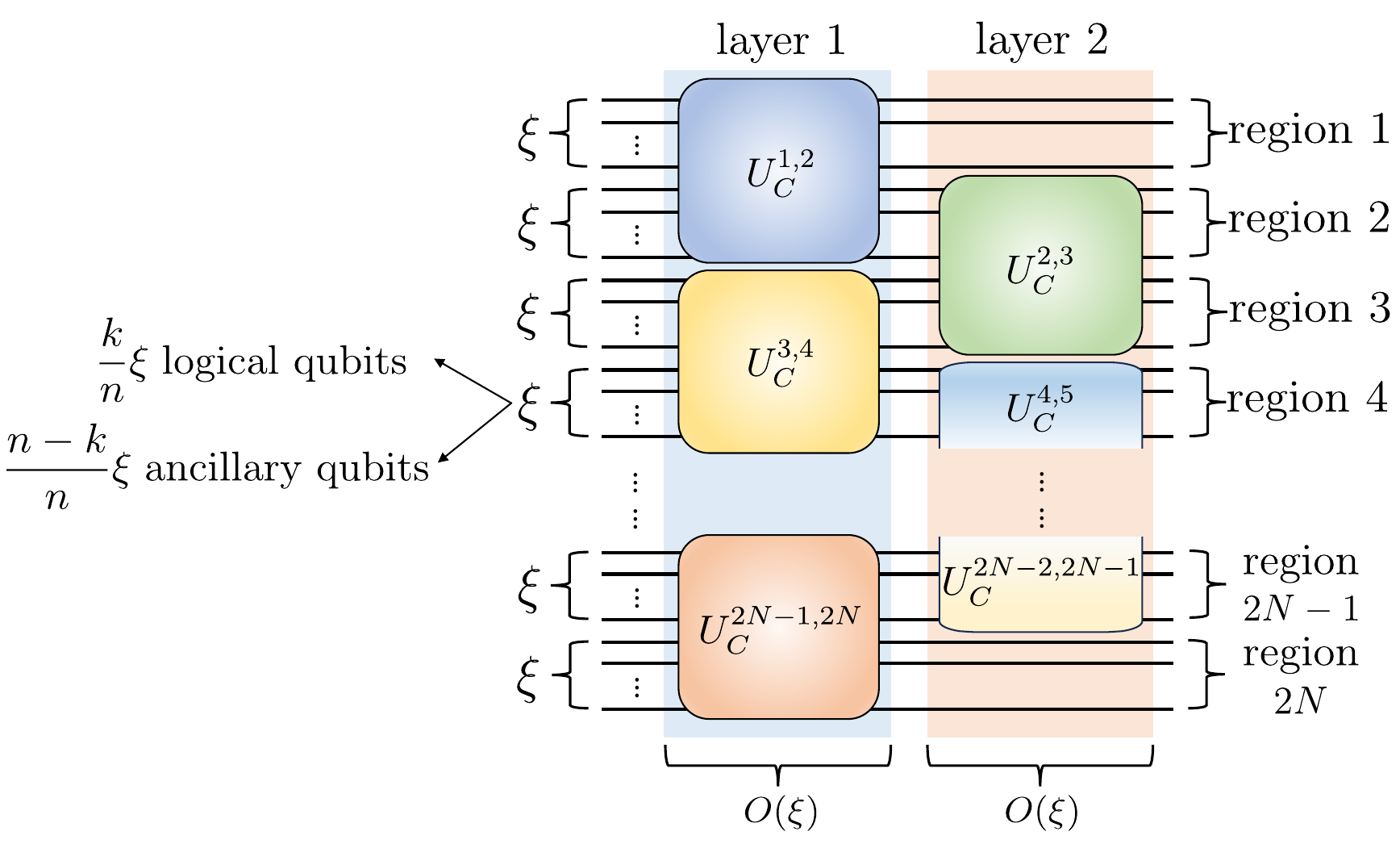}
}
\end{minipage}
\caption{(a) Diagram of the 1D brickwork circuit. The circuit is constructed by brickwise two-qudit gates. The total qudit number is $k$, and the circuit depth is $d$. When using this circuit for encoding, we set the local dimension as $q = 2^{\frac{n}{k}}$ while only one qubit of information is encoded in one qudit. In our work, $\frac{n}{k}$ is chosen as a constant, so $q$ is a constant. The encoding rate maintains $\frac{k}{n}$ in this scheme, and the total physical qubit number is $n$. Note that our results apply to arbitrary local dimensions, including 1D brickwork circuits composed of two-qubit gates acting on $n$ qubits, where each qubit encodes logical information with probability $\frac{k}{n}$. The Choi error still decays polynomially, though with a different scaling. Further discussions are provided in Appendix~\ref{appendssc:1DLRCQEC}. (b) Diagram of the 1D double-layer blocked circuit. Two layers of blocked unitary gates construct the circuit. The term $U_C^{i,i+1}$ is a random gate from a unitary 2-design ensemble on regions $i$ and $i+1$. The circuit is divided into $2N$ regions, with each region having $\xi = n/2N$ qubits. When using this circuit for encoding, the logical qubits and ancillary qubits are evenly distributed across each region. Each region contains $\frac{k}{n}\xi$ logical qubits and $\frac{n-k}{n}\xi$ ancillary qubits.}
\end{figure}

The diagram of the double-layer blocked circuit is shown in Fig.~\ref{fig:1Dlowdepthcircuit}. This circuit has a two-layer structure, with each layer consisting of blocks of random gates sampled from a unitary 2-design group, where each block has size $\xi$ and depth $d = O(\xi)$~\cite{schuster2024randomunitariesextremelylow}. In the first layer, qubits are partitioned into blocks, and random gates are applied within each block to generate entanglement. The second layer then applies random gates between these blocks, thereby creating inter-block connections. Similar to the 1D brickwork circuit, this double-layer blocked circuit scrambles the information of a single qubit across a number of qubits that is proportional to the circuit's depth. Theoretically, the double-layer blocked structure is simpler than the 1D brickwork circuit, which facilitates further theoretical analysis.

Specifically, in the double-layer blocked circuit, the $n$ qubits are divided into $2N$ regions, with each region containing $\xi = n / 2N$ qubits. In the first layer, unitaries $U_C^{2i-1, 2i}$ are applied in parallel on regions $2i-1$ and $2i$ for $1 \leq i \leq N$. In the second layer, unitaries $U_C^{2i, 2i+1}$ are applied in parallel on regions $2i$ and $2i+1$ for $1 \leq i < N$. The entire encoding unitary circuit can be written as
\begin{equation}
U = U_2\cdot U_1,
\end{equation}
where
\begin{equation}
U_1 = \bigotimes_{i=1}^N U_{C}^{2i-1, 2i}, U_2 = \bigotimes_{i=1}^{N-1} U_{C}^{2i, 2i+1}.
\end{equation}
The unitaries $U_C^{i,i+1}$ are sampled from a unitary 2-design ensemble. In this work, we take it to be the Clifford group. Since a $\xi$-qubit Clifford gate can be realized in depth $O(\xi)$ with basic CNOT, Hadamard, and phase gates in 1D circuits~\cite{Maslov2018Stabilizer,Bravyi2021Clifford}, the depth of the random circuit $U$ is $O(\xi)$. In this work, we set $\xi = \log(n/\varepsilon)$ and denote the ensemble generated by this 1D random double-layer blocked circuit as $\mathfrak{C}_n^{\varepsilon}$, where $\varepsilon$ is a parameter. We would refer to this circuit architecture as $\mathfrak{C}$ when parameters are not essential.

We use the double-layer blocked circuit as the encoding map in the following way. The circuit $U$ encodes the $k$-qubit logical system $L$ to an $n$-qubit physical system $S$. In the encoding procedure, the logical qubits and ancillary qubits are evenly distributed across each region, such that each region contains $\frac{k}{n}\xi$ logical qubits and $\frac{n-k}{n}\xi$ ancillary qubits as shown in Fig.~\ref{fig:1Dlowdepthcircuit}.

We focus on the performance of codes when the encoding circuits are of low depth, that is, when depth $d \ll n$. Due to the light-cone constraint, each qubit can interact with only $O(d)$ other qubits, limiting the spread of information across the system. Consequently, any encoding unitary from the 1D random circuits can only achieve a code distance on the order of circuit depth~\cite{Yi2024order}. Fortunately, by considering AQEC, the number of qubits protected against local noise and limited correlated noise can be much larger than the code distance. Here, the AQEC performance is quantified by the Choi error $\epsilon_{\mathrm{Choi}}$ between the identity operation and $\mathcal{D} \circ \mathcal{N} \circ \mathcal{E}$, where $\mathcal{E}$, $\mathcal{N}$, and $\mathcal{D}$ represent the encoding, noise, and decoding channels, respectively. This error measures the average distance between the initial and recovered states over all inputs, using an optimal decoding channel $\mathcal{D}$. A detailed introduction to AQEC is provided in Section~\ref{subsec:aqec}.

We rigorously analyze a host of important noise models, including local Pauli, erasure, amplitude-damping noises, and nearest neighbor $ZZ$-coupling noise. The common message is that our two 1D circuit models exhibit good quantum error correction performance at $O(\log n)$ depth for all these practical noise types. We also show that a better performance exists for the double-layer blocked circuit at $\omega(\log n)$ depth. In the following, we mainly present the results for this type of circuit.

Remarkably, we prove that the encoding schemes from the double-layer blocked circuit at $\omega(\log n)$ depth can establish a Choi error that goes to zero with an encoding rate $k/n$ under strength-$\Vec{p}$ local Pauli noise, provided that $1-k/n-h(\Vec{p})> 0$. This error decay implies that these encoding schemes can achieve the well-known hashing bound of the channel capacity for local Pauli noise\cite{DiVincenzo1998channelcapacity, wilde2013quantum, Ataides2021XZZX}. Here, $\Vec{p}=(p_I, p_X, p_Y, p_Z)$ represents the four parameters of the local Pauli noise where $p_I+p_X+p_Y+p_Z=1$. The entropy function is defined as $h(\Vec{p}) = -p_I\log p_I-p_X\log p_X-p_Y\log p_Y-p_Z\log p_Z$. When depth is $O(\log n)$, the encoding rate can achieve a smaller one, $1-f(\Vec{p})$, against the same strength of Pauli noise where $f(\Vec{p})=2\log(\sqrt{p_I}+\sqrt{p_X}+\sqrt{p_Y}+\sqrt{p_Z})$.
For strength-$p$ local erasure error, where each qubit is i.i.d.~erased with probability $p$, the encoding rate can also be constant and achieve the quantum channel capacity $1-2p$ \cite{Bennett1997Erasure}. For strength-$p$ local amplitude damping noise, the encoding rate can reach $h(\frac{1-p}{2})-h(\frac{p}{2})$. For nearest neighbor $ZZ$-coupling noise, we show that the encoding rate can achieve a constant $1-2\log(\sqrt{1-p}+\sqrt{p})$. The results for 1D brickwork circuits are the same when circuit depth is $O(\log n)$. The results are summarized in Result~\ref{result:EncodingRateBound} and depicted in Fig.~\ref{fig:EncodingRateBound}. We present the formal results in Corollaries~\ref{coro:pauli_nonsmoothing},~\ref{coro:pauli},~\ref{coro:iiderasure_nonsmoothing},~\ref{coro:iiderasure},~\ref{coro:amp_nonsmoothing},~\ref{coro:amp},~\ref{coro:zzcouple}, and Appendix~\ref{appendssc:1DLRCQEC}.

\begin{result}[Encoding rate of the double-layer blocked circuit $\mathfrak{C}$ and 1D brickwork circuit $\mathfrak{B}$, informal]\label{result:EncodingRateBound}
\begin{enumerate}
\item In the large $n$ limit, the expected Choi error for the random codes from $\mathfrak{C}$ and $\mathfrak{B}$ vanishes with encoding rate $k/n < 1-f(\vec{p})$ for strength-$\vec{p}$ local Pauli noise, $k/n < 1-\log(1+3p)$ for strength-$p$ local erasure errors, $k/n < -\log(\frac{1}{2-p}+\sqrt{\frac{p}{2-p}})$ for strength-$p$ local amplitude damping noise, and $k/n < 1-2c\log(\sqrt{1-p}+\sqrt{p})$ for strength-$p$ correlated noise at circuit depth $O(\log n)$. Here, $c=1$ for $\mathfrak{C}$ and $c = (1+\frac{k}{n})$ for $\mathfrak{B}$.
\item In the large $n$ limit, the expected Choi error for the random codes from $\mathfrak{C}$ vanishes with encoding rate $k/n < 1-h(\vec{p})$ for strength-$\vec{p}$ local Pauli noise, $k/n < 1-2p$ for strength-$p$ local erasure errors, and $k/n < h(\frac{1-p}{2})-h(\frac{p}{2})$ for strength-$p$ local amplitude damping noise at circuit depth $\omega(\log n)$.
\end{enumerate}

\end{result}

\begin{figure}[htbp!]
\centering
\includegraphics[width=.9\textwidth]{./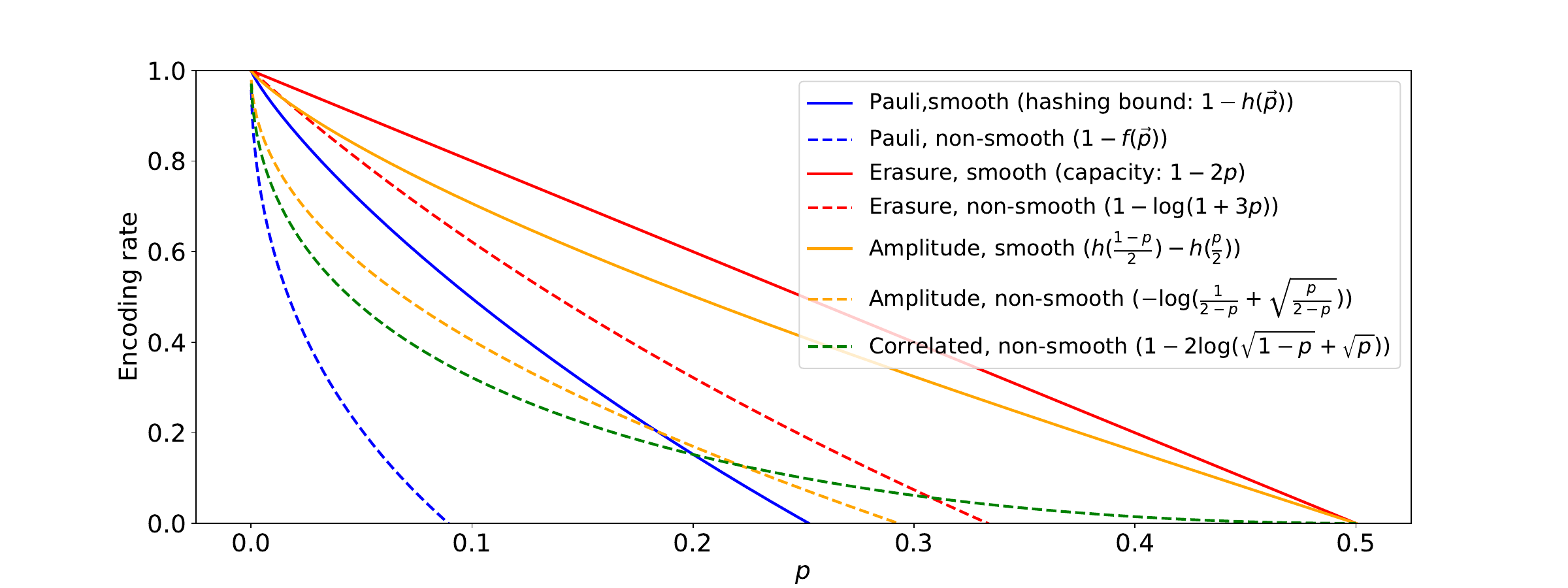}
\caption{The encoding rates of different bounds. For Pauli noise, the encoding rates given by the smooth and non-smooth decoupling theorems are bounded by $1-h(\Vec{p})$ and $1-f(\Vec{p})$, respectively. They are depicted in blue with solid and dashed lines, respectively. For the strength-$p$ depolarizing noise, the four parameters of the Pauli noise are $\vec{p} = (1-3p/4,p/4,p/4,p/4)$. The hashing bound $1-h(\Vec{p})$ is higher than the bound $1-f(\Vec{p})$. For erasure errors, the encoding rates given by the smooth and non-smooth decoupling theorems are bounded by $1-2p$ and $1-\log(1+3p)$ and depicted in red with solid and dashed lines, respectively. The bound $1-2p$ is the quantum channel capacity of the erasure error, and it is higher than the bound $1-\log(1+3p)$. For amplitude damping noise, the encoding rates given by the smooth and non-smooth decoupling theorems are bounded by $h(\frac{1-p}{2})-h(\frac{p}{2})$ and $-\log(\frac{1}{2-p}+\sqrt{\frac{p}{2-p}})$ and depicted in orange with solid and dashed lines, respectively. For strength-$p$ nearest-neighbor correlated noise, only the non-smooth decoupling theorem is applicable. The achievable encoding rate is $1-2\log(\sqrt{1-p}+\sqrt{p})$, depicted in green.}
\label{fig:EncodingRateBound}
\end{figure}

We further strengthen the result by showing that these codes exhibit a Choi error that decays exponentially with circuit depth.
\begin{result}[Choi error of encoding schemes from the double-layer blocked circuit $\mathfrak{C}$ and 1D brickwork circuit $\mathfrak{B}$, informal]\label{result:Choi_error}
The expected Choi error of the random codes formed by $\mathfrak{C}_n^{\varepsilon}$ or $\mathfrak{B}_n^{\varepsilon}$ with depth $O(\log n/\varepsilon)$ satisfies
\begin{equation}
\begin{split}
\mathbb{E}_{U\sim \mathfrak{C}_n^{\varepsilon}}\epsilon_{\mathrm{Choi}} &= O\left((\varepsilon^{1-\frac{k}{n}}n^{\frac{k}{n}})^{\frac{1}{4}}\right);\\
\mathbb{E}_{U\sim \mathfrak{B}_n^{\varepsilon}}\epsilon_{\mathrm{Choi}} &=O((\frac{\varepsilon}{k})^{\frac{n}{4k}-\frac{1}{2}}),
\end{split}
\end{equation}
with an encoding rate $k/n < 1 - f(\vec{p})$ for strength-$\vec{p}$ local Pauli noise, $k/n < 1-\log(1+3p)$ for strength-$p$ local erasure errors, $k/n < -\log(\frac{1}{2-p}+\sqrt{\frac{p}{2-p}})$ for strength-$p$ local amplitude damping noise, and $k/n < 1-2c\log(\sqrt{1-p}+\sqrt{p})$ for strength-$p$ correlated noise.
\end{result}

\noindent The formal version of this result is given in Corollaries~\ref{coro:pauli_nonsmoothing},~\ref{coro:iiderasure_nonsmoothing},~\ref{coro:amp_nonsmoothing},~\ref{coro:zzcouple}, and Appendix~\ref{appendssc:1DLRCQEC}. A direct consequence of Result \ref{result:Choi_error} is that 1D encoding schemes with depth $O(\log n)$ establish a Choi error $\epsilon_{\mathrm{Choi}} = 1/\poly(n)$. This small Choi error is particularly valuable for quantum error correction over extended periods. When the encoding rate is above the value mentioned in Result~\ref{result:Choi_error}, the scaling of the Choi error will be $O((\frac{\log n}{\log \frac{n}{\varepsilon}})^{\frac{1}{4}})$, as discussed in Appendix~\ref{app:hashingbound}.

Furthermore, we establish the lower bound on the circuit depth required to achieve high AQEC performance under the local depolarizing channel, a type of local Pauli noise. This result provides a lower bound for encoding circuit depth against local Pauli noise and can be easily extended to other local errors.

\begin{result}[Lower bound for circuit depth, informal]
For a circuit $U$ that encodes $k$ logical qubits with depth $d$ and has a Choi error $\epsilon$ against a constant-strength local depolarizing channel, then:
\begin{enumerate}
\item {If $U$ is a D-dimensional circuit, then $d = \Omega\left((\log (\frac{1}{\epsilon}) + \log k)^{1/D}\right)$.}
\item {If $U$ is an all-to-all circuit, then $d = \Omega\left(\log [\log (\frac{1}{\epsilon}) + \log k]\right)$.}
\end{enumerate}
\end{result}
\noindent This result matches the upper bound in Results \ref{result:EncodingRateBound} and \ref{result:Choi_error}, confirming the optimality of our encoding scheme. We present the formal version of this result in Theorem \ref{thm:depolarizing_lower_bound}.

To prove the AQEC performance established in Results \ref{result:EncodingRateBound} and \ref{result:Choi_error}, we employ the complementary channel approach~\cite{Beny2010AQEC}, which evaluates the Choi error by characterizing how much information leaks from the system to the environment after encoding and noise. The issue becomes how decoupled the environment and the other copy of the system are after inputting an entangled state across two copies of systems, with one undergoing encoding and the complementary channel from system to environment. This decoupling task is illustrated in Fig.~\ref{fig:decoupling}. To address this issue, we prove decoupling theorems tailored for $\mathfrak{C}$ and $\mathfrak{B}$. Compared to previous decoupling results~\cite{Dupuis2014Decoupling,Szehr2013Decoupling,brown2013scramblingspeedrandomquantum}, the new decoupling theorem considers initial states that have a tensor-product structure. This relaxation enables the establishment of the decoupling for 1D low-depth circuits instead of high-dimensional or long-depth circuits. Meanwhile, this relaxation is sufficient for evaluating the Choi error. We present the detailed results in Theorems~\ref{thm:nonsmoothdecoupling},~\ref{thm:1Dlocalrandomcircuit}, and~\ref{thm:smoothdecoupling}.

\begin{figure}[htbp!]
\centering
\includegraphics[width=.48\textwidth]{./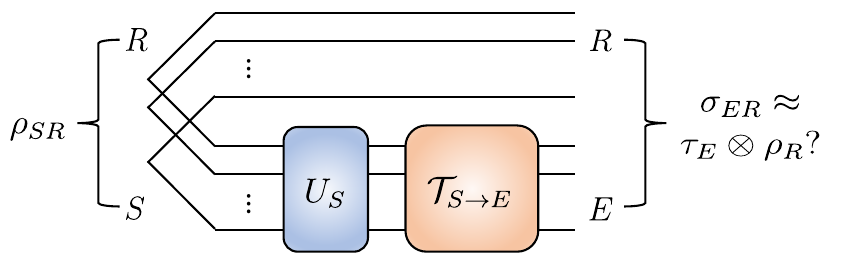}
\caption{The diagram of the decoupling task. The process begins with an entangled state, $\rho_{SR}$, where $R$ is a reference system identical to $S$. The system $S$ undergoes an encoding unitary operation, $U_S$, followed by a channel $\mathcal{T}_{S\rightarrow E}$ that maps $S$ to the environment $E$. The resulting output state is $\sigma_{ER}$. The decoupling task examines whether $\sigma_{ER}$ is sufficiently close to a tensor-product state $\tau_E \otimes \rho_R$ for certain choices of $U_S$ and $\mathcal{T}_{S\rightarrow E}$. The role of $U_S$ is to spread information within $S$, making it difficult for the environment $E$ to extract information about $S$ and $R$ through $\mathcal{T}_{S\rightarrow E}$. When $E$ has no information about $SR$, it becomes decoupled from $R$.
}
\label{fig:decoupling}
\end{figure}

\begin{result}[Decoupling theorems for the double-layer blocked circuit $\mathfrak{C}$ and 1D brickwork circuit $\mathfrak{B}$, informal]\label{thm:informaldecoupling}
Given three quantum systems $S$, $R$, and $E$,  denote the reduced density matrix of $\rho_{SR}$ on system $R$ as $\rho_R$ and the Choi-Jamio{\l}kowski representation of $\mathcal{T}_{S\rightarrow E}$ as $\tau_{SE}$ with reduced density matrix on system $E$ as $\tau_E$. Suppose $U_S$ is drawn from $\mathfrak{C}_n^{\varepsilon}$ or $\mathfrak{B}_n^{\varepsilon}$, and $\rho_{SR}$ has a tensor-product structure over different regions. Then in the large $n$ limit,
\begin{align}
\mathbb{E}_{U_S\sim \mathfrak{C}_n^{\varepsilon}}\Vert \mathcal{T}_{S\rightarrow E}(U_S\rho_{SR}U_{S}^{\dagger}) - \tau_E\otimes \rho_R \Vert_1 &= O(\varepsilon^{c_1}n^{c_2});\\
\mathbb{E}_{U_S\sim \mathfrak{B}_n^{\varepsilon}}\Vert \mathcal{T}_{S\rightarrow E}(U_S\rho_{SR}U_{S}^{\dagger}) - \tau_E\otimes \rho_R \Vert_1 &= O(\varepsilon^{c_3}n^{c_4}),
\end{align}
if $H_2(S|E)_{\tau_{SE}}+H_2(S|R)_{\rho_{SR}} > 0$, and $\varepsilon = O(1/\poly(n))$. Meanwhile,
\begin{equation}
\mathbb{E}_{U_S\sim \mathfrak{C}_n^{\varepsilon}}\Vert \mathcal{T}_{S\rightarrow E}(U_S\rho_{SR}U_{S}^{\dagger}) - \tau_E\otimes \rho_R \Vert_1 = O(\delta+\varepsilon^{c'_1}n^{c'_2})
\end{equation}
if $H_2^{\delta}(S|E)_{\tau_{SE}}+H_2(S|R)_{\rho_{SR}} > 0$, and $\varepsilon = o(1/\poly(n))$. The coefficients $c_1$, $c_2$, $c_3$, $c_4$, $c'_1$, and $c'_2$ are constant. Here, $H_2$ and $H_2^{\delta}$ are the conditional collision entropy and the smoothed version, respectively, which we will introduce in Section~\ref{ssc:notation}.
\end{result}

In the above theorem, the term $H_2(S|R)_{\rho_{SR}}$ represents the initial correlation between system $S$ and reference $R$ and equals $-k$ for $k$ EPR pairs. The term $H_2(S|E)_{\tau_{SE}}$ evaluates how the quantum channel from $S$ to $E$ preserves the correlation or the amount of information leakage from $S$ to $E$. If there is no information leakage, then environment $E$ knows nothing about $S$ and $R$, $H_2(S|E)_{\tau_{SE}}$ achieves the maximum, and $E$ and $R$ will be decoupled. Normally, a higher noise error rate implies a higher information leakage. Thus, $H_2(S|E)_{\tau_{SE}}$ decays when noise strength increases. It gives the lower encoding rate bounds like $n(1-f(\vec{p}))$ for Pauli noise. The smoothed entropy $H_2^{\delta}(S|E)_{\tau_{SE}}$ gives the higher encoding rate bounds.
The non-smooth version also implies the exponential decay of Choi error with circuit depth, as stated in Result~\ref{result:Choi_error}. Thus, $O(\log n)$-depth circuits exhibit negligible small Choi error. On the other hand, the smooth version shows the Choi error approaches 0 for an $\omega(\log n)$-depth circuit with a larger encoding rate.

The new decoupling theorems are expected to have important implications in quantum many-body physics and quantum information tasks, as outlined in the discussion section. We also generalize these results to bound the error for an arbitrary input state where $\rho_{SR}$ does not have a tensor-product structure (Proposition \ref{prop:decoupling}). Additionally, we propose a randomized encoding scheme that exhibits good recovery fidelity for arbitrary input states (Proposition \ref{prop:pauli_twirling}).

\section{Preliminaries}\label{sc:pre}
In this section, we introduce the preliminaries of our work. First, we present the basic notations and quantities used throughout this paper. Then, we introduce fundamental concepts and results in approximate quantum error correction. Next, we discuss the related concept of random unitaries and quantum design.

\subsection{Basic notations and quantities}\label{ssc:notation}
Let us start with the basic notations used in this work. We denote the Hilbert space as $\mathcal{H}$, the set of the states on $\mathcal{H}$ as $\mathcal{D}(\mathcal{H})$, the set of the general matrices on $\mathcal{H}$ as $\mathcal{M}(\mathcal{H})$. For a single qudit system with local dimension $q$, we denote the Hilbert space as $\mathcal{H}_q$ where $\dim \mathcal{H}_q = q$. The basis of $\mathcal{H}_q$ is $\{\ket{1}, \ket{2}, \cdots, \ket{q}\}$. The Hilbert space of the $n$-qudit system is $\mathcal{H}_q^{\otimes n}$. The Hilbert space $\mathcal{H}_{AB}$ of the combined system $AB$ of subsystems $A$ and $B$ is $\mathcal{H}_{AB} = \mathcal{H}_A\otimes \mathcal{H}_B$. The dimension of $\mathcal{H}_A$ is denoted as $\abs{A}$. For a quantum state $\rho_{AB}\in \mathcal{D}(\mathcal{H}_{AB})$, its reduced density matrix on system $B$ is denoted as $\rho_B = \tr_A(\rho_{AB})$, where $\tr_A$ is the partial trace over subsystem $A$. The unitary operation $U$ on system $A$ is denoted as $U_A$ with the subscript to represent the support of $U$. A completely positive and trace-preserving (CPTP) map $\mathcal{T}$ that maps a state from $\mathcal{H}_A$ to $\mathcal{H}_B$ is denoted as  $\mathcal{T}_{A \rightarrow B}$.
The Choi-Jamio{\l}kowski representation of a CPTP map $\mathcal{T}_{A\rightarrow B}$ is a quantum state $\tau_{AB}$, which is defined as
\begin{equation}
\tau_{AB} = \mathcal{T}_{A'\rightarrow B}(\ketbra{\hat{\phi}}_{AA'}),
\end{equation}
where $A'$ is a copy of system $A$, and $\ketbra{\hat{\phi}}_{AA'}$ is the maximally entangled state on system $AA'$ with
\begin{equation}
\ket{\hat{\phi}}_{AA'} = \frac{1}{\sqrt{\abs{A}}}\sum_{i=0}^{\abs{A}-1}\ket{i}_A\ket{i}_{A'}.
\end{equation}

We use several measures to quantify the distance between two states. The fidelity is defined as
\begin{equation}
F(\rho, \sigma) = \Vert \sqrt{\rho}\sqrt{\sigma}\Vert_1 = \tr\sqrt{\sqrt{\rho}\sigma\sqrt{\rho}}.
\end{equation}
For pure state $\ket{\psi}$ and mixed state $\sigma$, the fidelity is given by
\begin{equation}\label{eq:fidelity_pure_mixed}
F(\ketbra{\psi}, \sigma) = \sqrt{\bra{\psi} \sigma \ket{\psi}}.
\end{equation}
From the fidelity, we can define the purified distance:
\begin{equation}
P(\rho,\sigma) = \sqrt{1-F(\rho, \sigma)^2},
\end{equation}
which satisfies~\cite{nielsen2010quantum}
\begin{equation}\label{eq:tracedistanceineq}
\frac{1}{2}\Vert \rho - \sigma \Vert_1 \leq P(\rho, \sigma) \leq \sqrt{\Vert \rho - \sigma\Vert_1}.
\end{equation}

The trace distance between $\rho$ and $\sigma$ is $\frac{1}{2}\Vert \rho - \sigma \Vert_1$ . The definitions of fidelity $F$ and trace distance can also be generalized to arbitrary positive matrices beyond states by substituting $\rho$ and $\sigma$ with general matrices. For two quantum channels $\mathcal{T}_{A\rightarrow B}$ and $\mathcal{T}'_{A\rightarrow B}$, we normally use diamond distance to characterize their distance, defined as
\begin{equation}
\Vert \mathcal{T}_{A\rightarrow B} - \mathcal{T}'_{A\rightarrow B} \Vert_{\diamond} = \max_{\rho_{AR}} \Vert \mathcal{T}_{A\rightarrow B}(\rho_{AR}) - \mathcal{T}'_{A\rightarrow B}(\rho_{AR})  \Vert_1,
\end{equation}
where $R$ is a reference system without dimension restriction. Nonetheless, it can be shown that considering a reference system $R$ with the same dimension as $A$ is sufficient for maximization. The definition of the diamond distance can also be generalized for two completely positive maps.

In this work, we will massively use the conditional entropy. The conditional collision entropy of $A$ given $B$ is defined as
\begin{equation}
H_2(A|B)_{\rho} = -\log \tr[\left((\id_A\otimes \rho_B)^{-1/4}\rho_{AB}(\id_A\otimes \rho_B)^{-1/4}\right)^2].
\end{equation}
Its smoothed version is
\begin{equation}
H^{\delta}_{2}(A|B)_{\rho} = \sup_{\hat{\rho}_{AB}\in \mathcal{B}^{\delta}(\rho_{AB})}H_{2}(A|B)_{\hat{\rho}},
\end{equation}
where $\mathcal{B}^{\delta}(\rho) = \{\rho', P(\rho', \rho)\leq \delta\}$ is the set of all $\delta$-close states of $\rho$. For more details about quantum entropies refer to Appendix~\ref{app:entropy}.

Below, we introduce a mathematical lemma used in this work, which turns the evaluation of the 1-norm of a matrix into a second-moment quantity.

\begin{lemma}[Lemma 3.7 in Ref.~\cite{Dupuis2014Decoupling}]\label{lemma:1normbound}
Given a Hilbert space $\mathcal{H}$, let $M\in \mathcal{M}(\mathcal{H})$ and $\sigma\in \mathcal{D}(\mathcal{H})$. We have that
\begin{equation}
\Vert M \Vert_1\leq \sqrt{\tr(\sigma^{-1/4}M\sigma^{-1/2}M^{\dagger}\sigma^{-1/4})}.
\end{equation}
If $M$ is Hermitian, this simplifies to
\begin{equation}
\Vert M \Vert_1\leq \sqrt{\tr((\sigma^{-1/4}M\sigma^{-1/4})^2)}.
\end{equation}
\end{lemma}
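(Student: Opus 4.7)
The plan is to reduce $\Vert M \Vert_1$ to the absolute value of a trace via polar decomposition, then apply the Cauchy--Schwarz inequality for the Hilbert--Schmidt inner product after symmetrically inserting the weights $\sigma^{\pm 1/4}$ around $M$; a second, shorter Cauchy--Schwarz step will handle a residual normalization factor.

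First I would write the polar decomposition $M = V|M|$ with $V$ a partial isometry, and extend $V$ to a unitary $U$ on $\mathcal{H}$ without changing the value of $\tr(V^\dagger M) = \tr|M|$. This gives $\Vert M \Vert_1 = \tr(U^\dagger M)$. Assuming $\sigma$ is full-rank (the rank-deficient case follows by restricting to $\mathrm{supp}(\sigma)$ and a standard continuity argument, e.g.\ replacing $\sigma$ by $\sigma + \lambda\id$ and letting $\lambda \to 0$), I would insert $\sigma^{1/4}\sigma^{-1/4} = \id$ on each side of $M$ and use cyclicity to write
\begin{equation}
\tr(U^\dagger M) = \tr\!\bigl[(\sigma^{1/4} U^\dagger \sigma^{1/4})\,(\sigma^{-1/4} M \sigma^{-1/4})\bigr].
\end{equation}

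Next I would apply $|\tr(A^\dagger B)|^2 \leq \tr(A^\dagger A)\,\tr(B^\dagger B)$ with $A = \sigma^{1/4} U \sigma^{1/4}$ and $B = \sigma^{-1/4} M \sigma^{-1/4}$. The ``data'' factor $\tr(B^\dagger B) = \tr(\sigma^{-1/4} M^\dagger \sigma^{-1/2} M \sigma^{-1/4})$ coincides with the target $\tr(\sigma^{-1/4} M \sigma^{-1/2} M^\dagger \sigma^{-1/4})$: a quick check in any eigenbasis of $\sigma$ shows that both expressions equal $\sum_{ij} |M_{ij}|^2 (\sigma_i \sigma_j)^{-1/2}$. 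To bound the ``weight'' factor $\tr(A^\dagger A) = \tr(\sigma^{1/2} U^\dagger \sigma^{1/2} U)$ by $1$, I would apply Hilbert--Schmidt Cauchy--Schwarz a second time: $\tr(\sigma^{1/2} U^\dagger \sigma^{1/2} U) \leq \Vert \sigma^{1/2}\Vert_2 \, \Vert U^\dagger \sigma^{1/2} U\Vert_2 = \tr(\sigma) = 1$, using unitary invariance of the Hilbert--Schmidt norm. Taking square roots then establishes the first stated inequality.

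The Hermitian case is an immediate specialization: substituting $M^\dagger = M$ makes the integrand become $(\sigma^{-1/4} M \sigma^{-1/4})^2$, giving the second claim directly. The main obstacle, if any, is pinpointing the right symmetric splitting in the first step; once that choice is identified, the rest reduces to two routine applications of Cauchy--Schwarz, with the only technical bookkeeping being the treatment of singular $\sigma$ via the support-restriction/limiting argument mentioned above.
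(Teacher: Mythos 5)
Your proof is correct: the polar decomposition $\Vert M\Vert_1=\tr(U^{\dagger}M)$, the symmetric insertion of $\sigma^{\pm1/4}$, the Hilbert--Schmidt Cauchy--Schwarz step, and the bound $\tr(\sigma^{1/2}U^{\dagger}\sigma^{1/2}U)\le\tr(\sigma)=1$ all check out, and the eigenbasis identification of $\tr(B^{\dagger}B)$ with the stated right-hand side is valid. The paper itself does not reprove this lemma but imports it from Ref.~\cite{Dupuis2014Decoupling}, whose Lemma 3.7 is established by essentially the same argument, so your route matches the source.
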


\subsection{Approximate quantum error correction}\label{subsec:aqec}
In this subsection, we introduce basic concepts in approximate quantum error correction. To protect logical information, the logical system $L$, consisting of $k$ qubits, is encoded into the $n$-qubit physical system $S$ via an encoding channel $\mathcal{E}_{L \rightarrow S}$. The logical system may also be entangled with a $k$-qubit reference system $R$. The code rate is given by $k/n$. After encoding, the physical system $S$ undergoes a noise channel $\mathcal{N}$. Finally, a decoding operation $\mathcal{D}_{S \rightarrow L}$ is applied to recover the information stored in the logical system. For an initial state $\rho_{LR}$, the recovered state is given by
\begin{equation}
[(\mathcal{D}\circ\mathcal{N}\circ\mathcal{E})_L\otimes I_R](\rho_{LR}).
\end{equation}
For a unitary encoding, $\mathcal{E}_{L\rightarrow S}(\rho) = U_S\rho_L\otimes \ketbra{\mathbf{0}}_{S\backslash L}U_S^{\dagger}$, as demonstrated by Fig.~\ref{fig:qecdiagram}.

\begin{figure}[htbp!]
\centering
\includegraphics[width=.3\textwidth]{./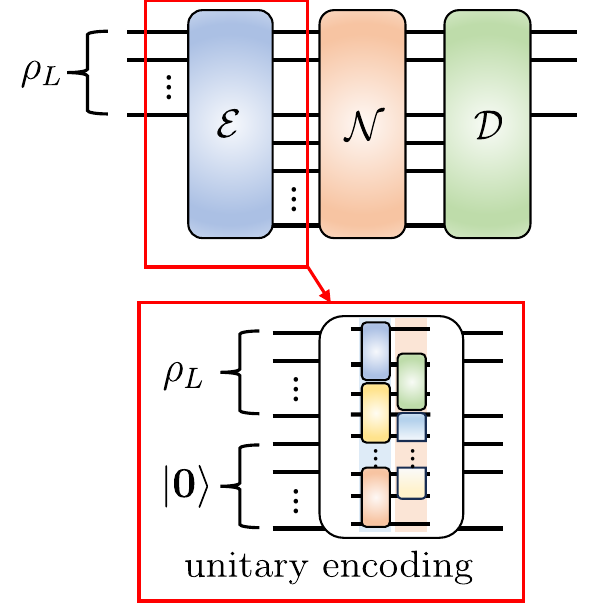}
\caption{Diagram of the procedure of quantum error correction. The initial logical information is stored in a quantum state $\rho_L$. Before undergoing noise $\mathcal{N}$, $\rho_L$ is first encoded by $\mathcal{E}$ with $\mathcal{E}(\rho) = U\rho\otimes \ketbra{\mathbf{0}}U^{\dagger}$ for unitary encoding operation $U$. Finally, one applies decoding operation $\mathcal{D}$ to retrieve logical information. More generally, $\rho_L$ may be entangled with a reference system $R$.}
\label{fig:qecdiagram}
\end{figure}

One can quantify the performance of quantum error correction by the fidelity between the initial and final states. In this work, we focus on the entanglement fidelity, or Choi fidelity, which is defined as
\begin{equation}
F_{\mathrm{Choi}} = \max_{\mathcal{D}} F\left(\hat{\phi}_{LR}, [(\mathcal{D}\circ\mathcal{N}\circ\mathcal{E})_L\otimes I_R](\hat{\phi}_{LR})\right),
\end{equation}
where $\hat{\phi}_{LR}$ is the maximally entangled state between the logical system and the reference system. The Choi error is defined as:
\begin{equation}\label{eq:choierrordef}
\begin{split}
\epsilon_{\mathrm{Choi}} &= \sqrt{1 - F^2_{\mathrm{Choi}}}\\
&= \min_{\mathcal{D}} P\left(\hat{\phi}_{LR}, [(\mathcal{D}\circ\mathcal{N}\circ\mathcal{E})_L\otimes I_R](\hat{\phi}_{LR})\right).
\end{split}
\end{equation}

The Choi fidelity quantifies how well the code can preserve entanglement between the logical and reference systems. It is also closely related to the max-average state fidelity $F_{\mathrm{ave}}$, which reflects the average performance of the AQEC codes over all input states:
\begin{equation}
F_{\mathrm{ave}}^2 = \max_{\mathcal{D}} \int d\psi \, F^2\left(\psi, (\mathcal{D}\circ\mathcal{N}\circ\mathcal{E})_L(\psi)\right).
\end{equation}
Here, $d\psi$ denotes the Haar random distribution of pure states over $\mathcal{H}_L$. The average fidelity $F_{\mathrm{ave}}$ and Choi fidelity satisfy the relation \cite{horodecki1999generalteleportationchannelsinglet, Gilchrist2005DistanceMeasures}:
\begin{equation}
F_{\mathrm{ave}}^2 = \frac{dF_{\mathrm{Choi}}^2 + 1}{d + 1},
\end{equation}
where $d = |L| = 2^k$ is the dimension of the logical system.

Directly evaluating Eq.~\eqref{eq:choierrordef} is challenging. The Choi error is often analyzed using the complementary channel formalism \cite{Beny2010AQEC,Kong2022CQEC}. In essence, the complementary channel formalism quantifies recovery performance by examining the information leakage contained in the environment. Specifically, according to Stinespring dilation, the noise channel can be modeled by interacting the physical system $S$ with the environment $E$ through a unitary $U$, followed by tracing out the environment:
\begin{equation}
\mathcal{N}_S(\rho_S) = \tr_E(U_{SE}\rho_S\otimes \ketbra{0}_EU_{SE}^{\dagger}).
\end{equation}
If we trace out the system instead of the environment, we obtain the complementary channel of $\mathcal{N}_S$:
\begin{equation}
\hat{\mathcal{N}}_{S\rightarrow E} = \tr_S(U_{SE}\rho_S\otimes \ketbra{0}_EU_{SE}^{\dagger}).
\end{equation}
For a noise channel with Kraus operators $\{K_i\}$,
\begin{equation}
\mathcal{N}_S(\rho) = \sum_i K_i \rho K_i^{\dagger},
\end{equation}
the complementary channel $\hat{\mathcal{N}}_{S\rightarrow E}$ can be written as~\cite{Devetak2005Capacity,Beny2010AQEC}:
\begin{equation}
\hat{\mathcal{N}}_{S\rightarrow E} = \sum_{ij}\tr(K_i\rho K_j^{\dagger})\ketbra{i}{j}.
\end{equation}

Note that the complementary channel of $\mathcal{N}\circ \mathcal{E}$ can be obtained by
\begin{equation}
\widehat{\mathcal{N}\circ \mathcal{E}}_{L\rightarrow E} = \hat{\mathcal{N}}_{S\rightarrow E}\circ\mathcal{E}_{L\rightarrow S}.
\end{equation}
By applying the complementary channel $\hat{\mathcal{N}}_{S \rightarrow E}$, we obtain the state in the environment, which captures the information leaked from the system to the environment. The Choi error can then be expressed as \cite{Beny2010AQEC, Kong2022CQEC}:
\begin{equation}\label{eq:choicomplementary}
\epsilon_{\mathrm{Choi}} = \min_{\zeta} P\left( (\widehat{\mathcal{N}\circ \mathcal{E}}_{L\rightarrow E}\otimes I_R)(\ketbra{\hat{\phi}}_{LR}), (\mathcal{T}^{\zeta}_{L\rightarrow E}\otimes I_R)(\ketbra{\hat{\phi}}_{LR}) \right),
\end{equation}
where $\mathcal{T}^{\zeta}$ is a quantum channel that maps any input state into a fixed state $\zeta$, satisfying:
\begin{equation}\label{eq:zeta_complementary}
(\mathcal{T}^{\zeta}_{L\rightarrow E}\otimes I_R)(\rho_{LR}) = \zeta_E\otimes \tr_L(\rho_{LR}).
\end{equation}
Eq.~\eqref{eq:choicomplementary} indicates that if the state in the environment is close to a fixed state in the tensor product with the reference system, the Choi error will be small, implying that entanglement is well preserved.

Based on the Choi error, one can define the achievable rate and the channel capacity of local noise channels~\cite{Klesse2007AQEC}. Note that the Choi error is the function of the logical qubit number $k$, the encoding map $\mathcal{E}$, and the noise channel $\mathcal{N}$. We can denote it as $\epsilon_{\mathrm{Choi}} = \epsilon_{\mathrm{Choi}}(k, \mathcal{E}, \mathcal{N})$. For a single-qubit noise channel $\mathcal{N}_1$, if there exists a sequence of logical qubit numbers $\{k_n\}$ and encoding maps $\{\mathcal{E}_n\}$ such that
\begin{equation}
\lim_{n\rightarrow \infty} \epsilon_{\mathrm{Choi}}(k_n, \mathcal{E}_n, \mathcal{N}_1^{\otimes n}) = 0,
\end{equation}
then we call
\begin{equation}
R = \limsup_{n\rightarrow \infty} \frac{k_n}{n}
\end{equation}
as an achievable rate of $\mathcal{N}_1$. The quantum channel capacity of $\mathcal{N}_1$, denoted as $Q(\mathcal{N}_1)$, is the supremum over all achievable rates.

\subsection{Noise model}
In this work, we examine a wide variety of important noise models: Pauli noises~\cite{Bennett1996QEC}, erasure errors~\cite{Grassl1997erasure}, amplitude damping noises~\cite{Giovannetti2005amplitude}, and correlated noises. Firstly, we introduce Pauli noises. For a single qubit, the Pauli noise is defined as
\begin{equation}\label{eq:pauli_noise}
\mathcal{N}_p(\rho) = p_I\rho+p_XX\rho X+p_YY\rho Y+p_ZZ\rho Z.
\end{equation}
We denote $\Vec{p} = (p_I, p_X, p_Y, p_Z)$ and refer to this noise as the strength-$\Vec{p}$ Pauli noise. When the probabilities satisfy $p_X=p_Y=p_Z=p/3$, the Pauli noise simplifies to the depolarizing noise:
\begin{equation}\label{eq:depolarizing_channel}
\mathcal{N}_d(\rho) = (1-p) \rho + p \frac{\bI}{2}.
\end{equation}
which we refer to as the strength-$p$ depolarizing noise. For an $n$-qubit state, we consider the i.i.d.~local Pauli noise on each qubit. The overall noise channel is then the tensor product of $n$ independent Pauli noises, expressed as $\mathcal{N} = \mathcal{N}_p^{\otimes n}$.

Another noise model of interest is the erasure error, which completely traces out a subsystem $T$ while simultaneously recording the location of the erased subsystem in a separate register $C$. The erasure error model is important due to its strong connection with code distance \cite{nielsen2010quantum,Grassl1997erasure}. For an erasure noise act on a fixed subsystem $T$, the register $C$ can be ignored and the noise is expressed as:
\begin{equation}
\mathcal{N}(\rho) = \tr_{T}(\rho).
\end{equation}
The complementary channel of the erasure error is
\begin{equation}
\hat{\mathcal{N}}(\rho) = \tr_{\Bar{T}}(\rho),
\end{equation}
where $\bar{T} = S \backslash T$ denotes the complement of subsystem $T$. This implies that the environment obtains the state in the traced-out subsystem.
For the random $t$-erasure error, the locations of the erasured subsystem are randomly selected among all possible $t$-qubit subsystems, and the noise is expressed as
\begin{equation}\label{eq:random_t_erasure}
\mathcal{N}(\rho) = \frac{1}{\binom{n}{t}}\sum_{\abs{T}=t} \ketbra{T}_C \otimes \tr_{T}(\rho).
\end{equation}
Here, each $T$ represents a distinct $t$-qubit subsystem being erased, and $\ket{T}_C$ records the specific location of the erasure in the register $C$. The complementary channel of this random $t$-erasure noise is~\cite{Faist2020AQEC}
\begin{equation}
\hat{\mathcal{N}}(\rho) = \frac{1}{\binom{n}{t}}\sum_{\abs{T}=t}\ketbra{T}_C\otimes \tr_{\bar{T}}(\rho),
\end{equation}
where $\ket{T}_C$ continues to label the location of the erasure errors.

Instead of considering erasure errors with a fixed number of erasures, one can also study i.i.d.~erasure errors. Particularly, we consider that each qubit experiences an erasure error i.i.d.~with probability $p$. For a single-qubit state $\rho$, the noise is described by
\begin{equation}\label{eq:iiderasure}
\mathcal{N}_e(\rho) = (1-p)\rho + p \ketbra{2},
\end{equation}
where state $\ket{2}$ is orthogonal to the space supporting $\rho$ and represents the erasure. For an $n$-qubit system, the total erasure noise is modeled as $\mathcal{N}_e^{\otimes n}$.

The local amplitude damping noise is also an i.i.d.\ noise model. We first define the strength-$p$ amplitude damping noise on a single qubit:
\begin{equation}
\mathcal{N}_a(\rho) = K_0\rho K_0^{\dagger} + K_1\rho K_1^{\dagger},
\end{equation}
with
\begin{equation}
K_0 = \begin{pmatrix}
1 & 0\\
0 & \sqrt{1-p}
\end{pmatrix},
K_1 = \begin{pmatrix}
0 & \sqrt{p}\\
0 & 0
\end{pmatrix}.
\end{equation}
For an $n$-qubit system, the total amplitude damping noise is $\mathcal{N}_a^{\otimes n}$.

The quantum channel capacity of these noise models is of fundamental importance in quantum information theory. For Pauli noise, a well-known lower bound on the quantum channel capacity is the hashing bound:
\begin{equation}\label{eq:hashingbound}
Q(\mathcal{N}_p)\geq \max\{ 0, 1 - h(\Vec{p}) \},
\end{equation}
where
\begin{equation}
h(\Vec{p}) = -p_I\log p_I - p_X\log p_X - p_Y\log p_Y - p_Z\log p_Z,
\end{equation}
is the Shannon entropy of the probability vector $\Vec{p}$. Moreover, it has been shown that the inequality in Eq.~\eqref{eq:hashingbound} is strict in some regimes \cite{DiVincenzo1998channelcapacity, wilde2013quantum, Ataides2021XZZX}. Existing upper and lower bounds for the quantum capacity of Pauli noise do not match, and the exact value of the quantum channel capacity for Pauli noise remains unknown, even for the depolarizing noise. The hashing bound is the most famous lower bound for a generic Pauli noise, which is achievable for random stabilizer codes~\cite{wilde2013quantum}.
Finding encoding schemes simpler than the random stabilizer construction to achieve the hashing bound deserves further exploration.

For erasure errors, the quantum channel capacity is much simpler and can be computed exactly~\cite{Bennett1997Erasure}:
\begin{equation}
Q(\mathcal{N}_e) = \max\{ 0, 1 - 2p\}.
\end{equation}
This result matches the Singleton bound $n-k-2t\geq 0$, which yields $k/n\leq 1-\frac{2t}{n}$. The erasure probability $p$ in the i.i.d.~case corresponds to $\frac{t}{n}$.

For amplitude damping noise, the quantum channel capacity can also be computed exactly~\cite{Giovannetti2005amplitude}. For $p > \frac{1}{2}$, $Q(\mathcal{N}_a) = 0$. For $p\leq \frac{1}{2}$,
\begin{equation}\label{eq:ampcapacity}
Q(\mathcal{N}_a) = \max_{0\leq x\leq 1}\{ h((1-p)x)-h(px)\},
\end{equation}
where $h$ is the binary entropy function.

The above noise models are all i.i.d.\ single-qubit noises, which enjoy a tensor-product structure. In this work, we also study correlated noise models, which are also practically important. Particularly, we introduce strength-$p$ 1D nearest neighbor $ZZ$-coupling noise on $n$ qubits:
\begin{equation}
\mathcal{N}_{zz}(\rho) = \circ_{i=1}^{n-1}((1-p)\mathcal{I}+p\mathcal{Z}_i\mathcal{Z}_{i+1})(\rho),
\end{equation}
where $\mathcal{I}$ is the identity channel, and $\mathcal{Z}_i$ is the Pauli $Z$ channel on qubit $i$. Note that for correlated noise, currently, there is no corresponding quantum channel capacity. But there can exist a threshold for $p$ such that the logical error rate goes to $0$ for this noise model.

\subsection{Twirling operations and approximate design}
In this part, we introduce the concept of twirling and design associated with random unitary operations. The twirling operation is defined over an ensemble of quantum unitary operations, $\mathfrak{S}$. We consider the following $t$-th order twirling operation over $\mathfrak{S}$:
\begin{equation}
\Phi^t_{\mathfrak{S}}(O) = \mathbb{E}_{U\sim\mathfrak{S}} U^{\otimes t}O U^{\dagger\otimes t}.
\end{equation}
where $U$ is drawn from the ensemble $\mathfrak{S}$; $O$ is the matrix being twirled, and its support may overpass $U^{\otimes t}$. The expectation is taken over the random unitary operations from $\mathfrak{S}$. When the ensemble $\mathfrak{S}$ corresponds to the Haar random unitaries $\mathfrak{U}_n$ on $n$ qubits, the first-order twirling operation over $\mathfrak{U}_n$ can be obtained from the Schur-Weyl duality \cite{fulton2004Representation}:
\begin{equation}\label{eq:firsttwirling}
\mathbb{E}_{U\sim \mathfrak{U}_n} U O U^{\dagger} = \tr_U(O)\frac{\id_{q}}{q},
\end{equation}
where $\tr_U$ denotes the trace over the subsystem being twirled, $q = \dim U$ is the dimension of Hilbert space that the unitary acts on, and $\id_{q}$ is the identity operator with dimension $q$. The second-order twirling operation over $\mathfrak{U}_n$ is
\begin{equation}\label{eq:secondtwirling}
\begin{split}
\mathbb{E}_{U\sim \mathfrak{U}_n} U^{\otimes 2} O U^{\dagger \otimes 2} &= B_1\tr_{U^{\otimes 2}}(OB_1)+ B_2\tr_{U^{\otimes 2}}(OB_2)\\
&= \tr_{U^{\otimes 2}}(O \frac{\id_{q^2}-q^{-1} F}{q^2-1})\id_{q^2}+\tr_{U^{\otimes 2}}(O \frac{F-q^{-1} \id_{q^2}}{q^2-1})F,
\end{split}
\end{equation}
where
\begin{equation}
B_1 = \frac{\id_{q^2}}{q}, B_2 = \frac{F-B_1}{\sqrt{q^2-1}},
\end{equation}
with $\tr_{U^{\otimes 2}}$ denoting the trace over the subsystem of being twirled, $\id_{q^2}$ is the identity operator on the $q^2$-dimension subsystem, and $F$ denotes the SWAP operator on the $q^2$-dimension subsystem.

Though the Haar random unitary ensemble possesses good twirling properties, it requires exponential resources to implement in practice. An approximate version of the Haar random unitary ensemble is usually used for practical applications. An ensemble $\mathfrak{S}$ is called a unitary $t$-design if $\Phi^t_{\mathfrak{S}} = \Phi^t_{\mathfrak{U}_n}$ and an $\varepsilon$-approximate unitary $t$-design~\cite{dankert2005efficientsimulationrandomquantum,Gross2007deisgn,Dankert2009design,Brandao2016ApproximateDesign} if
\begin{equation}
(1-\varepsilon)\Phi^t_{\mathfrak{U}_n}\leq \Phi^t_{\mathfrak{S}} \leq (1+\varepsilon)\Phi^t_{\mathfrak{U}_n},
\end{equation}
where $\Phi_1 \leq \Phi_2$ means that $\Phi_2-\Phi_1$ is a completely positive map~\cite{Brandao2016ApproximateDesign}. The term $\varepsilon$ is referred to as the relative error. An approximate design can also be defined in an additive error version:
\begin{equation}
\Vert \Phi^t_{\mathfrak{S}} - \Phi^t_{\mathfrak{U}_n} \Vert_{\diamond}\leq \varepsilon.
\end{equation}
It has been shown that an approximate design with relative error $\varepsilon$ is an approximate design with additive error $2\varepsilon$~\cite{schuster2024randomunitariesextremelylow}. Hereafter, unless stated otherwise, the error of an approximate design refers to relative error. An exact unitary $t$-design is an approximate unitary $t$-design with $\varepsilon = 0$. It was shown that the $n$-qubit Clifford group is an exact unitary 3-design~\cite{Webb2016Clifford3design,Zhu2017MultiqubitClifford}, and when the local dimension is odd prime instead of $2$, the Clifford group is an exact unitary 2-design~\cite{DiVincenzo2002hiding,Dur2005depolarization,dankert2005efficientsimulationrandomquantum,Gross2007deisgn,Dankert2009design}. As a remark, $\mathfrak{C}$ and $\mathfrak{B}$ are exact unitary 1-designs since they contain the local Pauli group, which is already a unitary 1-design.

\section{Log-depth circuit to achieve decoupling and high AQEC performance}\label{sc:upper}
In this section, we investigate the AQEC performance of the unitary operations generated by 1D log-depth circuits using the Choi error metric.

Before elaborating on the detailed results, we note that a recent surprising result shows that the 1D double-layer blocked circuit $\mathfrak{C}_n^{\varepsilon}$ forms a $9\varepsilon$-approximate unitary 3-design and $4\varepsilon$-approximate unitary 2-design~\cite{schuster2024randomunitariesextremelylow}. However, the property of approximate design, in general, does not guarantee good AQEC performance unless $\varepsilon$ is smaller than an exponential decay function. The decoupling theorem for approximate unitary designs~\cite{Szehr2013Decoupling} includes an exponential factor, as stated in Lemma \ref{lemma:appro2designdecoup} in Appendix \ref{app:entropy}. The exponential small $\varepsilon$ implies a circuit depth linear in the qubit number, which is not desired. To address this issue, we develop new decoupling theorems. We use these results to show that $\mathfrak{C}_n^{\varepsilon}$ exhibits a logical qubit count that scales linearly with $n$ while maintaining a small Choi error. Particularly, $\mathfrak{C}_n^{\varepsilon}$ can achieve the hashing bound for Pauli noises and the quantum capacity for erasure errors even if $\varepsilon$ is not exponentially small.

We begin by reformulating the analysis of the Choi error as a decoupling problem. Next, we present the decoupling theorems for 1D double-layer blocked circuit $\mathfrak{C}$ and 1D brickwork circuit $\mathfrak{B}$ as a key technical contribution. By applying the decoupling theorem to Pauli noise, erasure errors, amplitude damping noise, and correlated noise, we establish the favorable AQEC performance. Furthermore, we prove a better AQEC performance of $\mathfrak{C}$ and $\mathfrak{B}$ than the block-encoding scheme. We also extend these results to more general error metrics beyond the Choi error. In the next section, we derive the lower bound of the circuit depth required to achieve a small Choi error in AQEC, which matches the circuit depth needed for $\mathfrak{C}$ and $\mathfrak{B}$.

\subsection{Approximate quantum error correction and decoupling theorem}
We start by analyzing the expected Choi error. With the complementary channel approach, we have that
\begin{equation}
\begin{split}
\epsilon_{\mathrm{Choi}} =& \min_{\zeta} P\left( \hat{\mathcal{N}}_{S\rightarrow E}(U\ketbra{\hat{\phi}}_{LR}\otimes \ketbra{0^{n-k}} U^{\dagger}), \zeta_E\otimes \frac{\id_R}{2^k} \right)\\
\leq& P\left( \hat{\mathcal{N}}_{S\rightarrow E}(U\ketbra{\hat{\phi}}_{LR}\otimes \ketbra{0^{n-k}} U^{\dagger}), \mathbb{E}_{U\sim \mathfrak{U}_n}\hat{\mathcal{N}}_{S\rightarrow E}(U\ketbra{\hat{\phi}}_{LR}\otimes \ketbra{0^{n-k}} U^{\dagger}) \right)\\
&+ \min_{\zeta} P\left(\mathbb{E}_{U\sim \mathfrak{U}_n} \hat{\mathcal{N}}_{S\rightarrow E}(U\ketbra{\hat{\phi}}_{LR}\otimes \ketbra{0^{n-k}} U^{\dagger}), \zeta_E\otimes \frac{\id_R}{2^k} \right)\\
=& P\left( \hat{\mathcal{N}}_{S\rightarrow E}(U\ketbra{\hat{\phi}}_{LR}\otimes \ketbra{0^{n-k}} U^{\dagger}), \hat{\mathcal{N}}_{S\rightarrow E}(\frac{\id_S}{2^n})\otimes \frac{\id_R}{2^k} \right) + \min_{\zeta} P\left(\hat{\mathcal{N}}_{S\rightarrow E}(\frac{\id_S}{2^n})\otimes \frac{\id_R}{2^k}, \zeta_E\otimes \frac{\id_R}{2^k} \right)\\
=& P\left( \hat{\mathcal{N}}_{S\rightarrow E}(U\ketbra{\hat{\phi}}_{LR}\otimes \ketbra{0^{n-k}} U^{\dagger}), \hat{\mathcal{N}}_{S\rightarrow E}(\frac{\id_S}{2^n})\otimes \frac{\id_R}{2^k} \right).
\end{split}
\end{equation}
The first inequality comes from the triangle inequality of the purified distance. The second equality uses the property of the twirling over Haar random unitary gates. The third equality holds by taking the state $\zeta$ as $\hat{\mathcal{N}}_{S\rightarrow E}(\frac{\id_S}{2^n})$.

Our target is to bound $\mathbb{E}_{U}\epsilon_{\mathrm{Choi}}$. We utilize the relation between purified distance and trace distance in Eq.~\eqref{eq:tracedistanceineq} to get
\begin{equation}\label{eq:choiexpectation}
\begin{split}
\mathbb{E}_{U} \epsilon_{\mathrm{Choi}} \leq& \mathbb{E}_{U} \sqrt{\Vert \hat{\mathcal{N}}_{S\rightarrow E}(U\ketbra{\hat{\phi}}_{LR}\otimes \ketbra{0^{n-k}} U^{\dagger}) - \hat{\mathcal{N}}_{S\rightarrow E}(\frac{\id_S}{2^n})\otimes \frac{\id_R}{2^k} \Vert_1}\\
\leq& \sqrt{\mathbb{E}_{U} \Vert \hat{\mathcal{N}}_{S\rightarrow E}(U\ketbra{\hat{\phi}}_{LR}\otimes \ketbra{0^{n-k}} U^{\dagger}) - \hat{\mathcal{N}}_{S\rightarrow E}(\frac{\id_S}{2^n})\otimes \frac{\id_R}{2^k} \Vert_1}.
\end{split}
\end{equation}
The remaining part is evaluating the term in the square root, which corresponds to the decoupling issue: how the system $E$ and $R$ decouple after being acted on by a unitary and channel $\hat{\mathcal{N}}_{S\rightarrow E}$.

We now present our decoupling theorems. For the double-layer blocked ensemble $\mathfrak{C}$, we establish both non-smooth and smooth versions, and for the 1D brickwork circuit ensemble $\mathfrak{B}$, we prove the non-smooth version. We demonstrate how the environmental and reference systems are decoupled after the physical system, encoded by unitaries from $\mathfrak{C}$ or $\mathfrak{B}$, interacts with a noise channel. This step lays the foundation for proving that the encoding rate can be maintained at a constant encoding rate with a small Choi error. Subsequently, we introduce the smooth decoupling theorem for $\mathfrak{C}$. This refinement allows us to demonstrate that the encoding rate for the local Pauli noise channel can reach the hashing bound, and the encoding rate for the local erasure error can reach the channel capacity.

We first present the non-smooth decoupling theorem for the double-layer blocked circuit $\mathfrak{C}$ below.
\begin{theorem}[Non-smooth decoupling theorem for 1D double-layer blocked circuits]\label{thm:nonsmoothdecoupling}
Given three quantum systems $S = S_1S_2\cdots S_{2N}$, $R = R_1R_2\cdots R_{2N}$, and $E = E_1E_2\cdots E_{2N}$, we denote $S_i$ as the subsystem of region $i$ inside $S$, and the same for $R_i$ and $E_i$. Without loss of generality, we assume that $S$ is composed of $n$ qubits, and each $S_i$ has an equal size $\xi = \frac{n}{2N}$ qubits. Let quantum state $\rho_{SR}$ be
$\rho_{SR} = \bigotimes_{i=1}^{2N}\rho_{S_iR_i}$ with a tensor-product structure and $\mathcal{T}_{S\rightarrow E}$ be a completely positive map. Denote the reduced density matrix of $\rho_{SR}$ on system $R$ as $\rho_R$ and the Choi-Jamio{\l}kowski representation of $\mathcal{T}_{S\rightarrow E}$ as $\tau_{SE}$ with reduced density matrix on system $E$ as $\tau_E$. Suppose $U_S$ is drawn from $\mathfrak{C}_n^{\varepsilon}$ as shown in Fig.~\ref{fig:1Dlowdepthcircuit}. We obtain that
\begin{equation}
\mathbb{E}_{U_S\sim \mathfrak{C}_n^{\varepsilon}}\Vert \mathcal{T}_{S\rightarrow E}(U_S\rho_{SR}U_{S}^{\dagger}) - \tau_E\otimes \rho_R \Vert_1\leq \sqrt{2^{-H_2(S|E)_{\tau_{SE}}-H_2(S|R)_{\rho_{SR}}} + c \max_{A\subseteq [2N]} 2^{-H_2(A|R)_{\rho_{AR}}-H_2(A|E)_{\tau_{AE}}}},
\end{equation}
where
\begin{equation}
c = 2((1+\frac{2^{\xi+1}}{2^{2\xi}+1}\max_i(\max(2^{-H_2(S_i|R_i)_{\rho_{S_iR_i}}}, 2^{H_2(S_i|R_i)_{\rho_{S_iR_i}}})))^{N-1}-1).
\end{equation}
Furthermore, if $\mathcal{T}_{S\rightarrow E}$ has a tensor-product structure, $\mathcal{T}_{S\rightarrow E} = \bigotimes_{i=1}^{2N}\mathcal{T}_{S_i\rightarrow E_i}$, we obtain that
\begin{equation}
\begin{split}
&\mathbb{E}_{U_S\sim \mathfrak{C}_n^{\varepsilon}}\Vert \mathcal{T}_{S\rightarrow E}(U_S\rho_{SR}U_{S}^{\dagger}) - \tau_E\otimes \rho_R \Vert_1\\
\leq& \sqrt{2^{-\sum_{i=1}^{2N}H_2(S_i|E_i)_{\tau_{S_iE_i}}-\sum_{i=1}^{2N}H_2(S_i|R_i)_{\rho_{S_iR_i}}}+ c\prod_{i=1}^{2N} \max(1, 2^{-H_2(S_i|E_i)_{\tau_{S_iE_i}}-H_2(S_i|R_i)_{\rho_{S_iR_i}}})}.
\end{split}
\end{equation}
\end{theorem}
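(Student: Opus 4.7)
The starting point is Lemma~\ref{lemma:1normbound} applied with the reference operator $\sigma = \tau_E\otimes \rho_R$, which yields
\[
\bigl\Vert \mathcal{T}_{S\to E}(U_S\rho_{SR}U_S^\dagger)-\tau_E\otimes \rho_R\bigr\Vert_1^{\,2} \;\le\; \tr\!\left[\bigl(\sigma^{-1/4}M_U\sigma^{-1/4}\bigr)^2\right],
\]
where $M_U = \mathcal{T}_{S\to E}(U_S\rho_{SR}U_S^\dagger)-\tau_E\otimes \rho_R$. Jensen's inequality then moves the expectation over $U_S$ inside the square root, so it suffices to bound a second-moment quantity that rewrites as a contraction of $\mathbb{E}_{U_S} U_S^{\otimes 2}\rho_{SR}^{\otimes 2}U_S^{\dagger\otimes 2}$ with fixed operators built from $\tau_{SE}$ and $\rho_R$, together with a subtraction that removes the ``all-identity'' contribution of the twirl.

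The circuit factorises as $U_S = U_2 U_1$ with $U_1=\bigotimes_{i=1}^N U_C^{2i-1,2i}$ and $U_2=\bigotimes_{i=1}^{N-1} U_C^{2i,2i+1}$, where each factor is an exact unitary 2-design on a pair of adjacent regions. The Haar-twirl identity in Eq.~\eqref{eq:secondtwirling} therefore applies blockwise. Applying it first to $U_1$ replaces each two-block conjugation by a weighted sum of identity and SWAP operators on the two-copy Hilbert space of that pair, producing an expansion labelled by a subset $A_1\subseteq [N]$ of first-layer blocks that carry the SWAP contribution. Applying the same identity to $U_2$ gives a further expansion labelled by $A_2\subseteq[N-1]$. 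Combining both layers with the tensor-product form of $\rho_{SR}$, each combined pattern selects a subset of ``SWAPped'' regions $A\subseteq[2N]$, and the associated contribution factorises region by region.

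Each such piece is identified with a conditional collision entropy: the SWAP on the environment side contracted against $\tau_{SE}$ produces the factor $2^{-H_2(A|E)_{\tau_{AE}}}$, the SWAP on the reference side contracted against $\rho_{SR}$ produces $2^{-H_2(A|R)_{\rho_{AR}}}$, and regions outside $A$ contribute only trivial normalisation factors. The unique ``all-SWAP'' configuration $A=[2N]$ yields the clean decoupling term $2^{-H_2(S|E)_{\tau_{SE}}-H_2(S|R)_{\rho_{SR}}}$; the ``all-identity'' configuration $A=\emptyset$ cancels exactly against the subtracted $\tau_E\otimes \rho_R$; every remaining mixed configuration is bounded uniformly by $\max_{A\subseteq[2N]}2^{-H_2(A|R)_{\rho_{AR}}-H_2(A|E)_{\tau_{AE}}}$. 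The multiplicative constant $c$ then arises by bounding the number and weight of these mixed patterns: each of the $N-1$ second-layer gates contributes a binary identity/SWAP choice with a coefficient of the form $1+\frac{2^{\xi+1}}{2^{2\xi}+1}\max_i\max\bigl(2^{-H_2(S_i|R_i)_{\rho_{S_iR_i}}},2^{H_2(S_i|R_i)_{\rho_{S_iR_i}}}\bigr)$, and subtracting off the already-counted all-identity pattern yields the $\bigl((1+\cdots)^{N-1}-1\bigr)$ structure displayed in the statement.

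The main obstacle is the non-alignment between the two layers. After the $U_1$ twirl, each first-layer pair $(S_{2i-1},S_{2i})$ carries either identity or SWAP across its two copies, but $U_2$ subsequently twirls the straddling pair $(S_{2i},S_{2i+1})$, so it generically sees a ``half-SWAP'' operator that the 2-design breaks back into identity and SWAP pieces on the new partition. Tracking how these broken pieces propagate and keeping the combinatorial factors small is the technical heart of the argument, and it is the origin of the $\max(1,2^{-H_2(S_i|R_i)})$ terms in $c$. The tensor-product structure of $\rho_{SR}$ over regions is indispensable here, since it ensures that the half-SWAP contributions factorise cleanly and can be absorbed into region-local second-moment quantities. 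Under the further assumption $\mathcal{T}_{S\to E}=\bigotimes_i \mathcal{T}_{S_i\to E_i}$, the Choi state $\tau_{SE}$ itself factorises over regions, so $H_2(A|E)_{\tau_{AE}}=\sum_{i\in A}H_2(S_i|E_i)_{\tau_{S_iE_i}}$ and $H_2(S|E)_{\tau_{SE}}=\sum_iH_2(S_i|E_i)_{\tau_{S_iE_i}}$, which allows the uniform mixed-pattern bound to be replaced by a region-wise product $\prod_i\max\bigl(1,2^{-H_2(S_i|E_i)_{\tau_{S_iE_i}}-H_2(S_i|R_i)_{\rho_{S_iR_i}}}\bigr)$, reproducing the second bound in the statement.
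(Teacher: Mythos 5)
Your proposal follows essentially the same route as the paper's proof: Lemma~\ref{lemma:1normbound} plus Jensen to reduce to a second-moment quantity, a layer-by-layer application of the exact blockwise 2-design twirl, cancellation of the all-identity pattern against the subtracted product state, identification of the all-SWAP pattern with $2^{-H_2(S|E)-H_2(S|R)}$, and a combinatorial bound on the mismatched (``half-SWAP'') boundary configurations that produces the $2\bigl((1+2\eta\max_i(\cdot))^{N-1}-1\bigr)$ constant. The only nitpick is a late misstatement attributing $\max(1,2^{-H_2(S_i|R_i)})$ to $c$ (it is $\max(2^{-H_2(S_i|R_i)},2^{H_2(S_i|R_i)})$ there, as you correctly wrote earlier), which does not affect the argument.
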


\begin{proof}[Proof sketch]
The proof of Theorem~\ref{thm:nonsmoothdecoupling} begins with Lemma~\ref{lemma:1normbound}, where the inequality $\Vert M \Vert_1 \leq \sqrt{\tr\left( (\sigma^{-1/4} M \sigma^{-1/4})^2 \right)}$ is used to convert the 1-norm distance into the square of a matrix. By expanding this square, we apply both the first- and second-order twirling operations conducted by $\mathfrak{C}$. The analysis of the first-order twirling is straightforward, as $\mathfrak{C}$ is a unitary 1-design. However, the second-order twirling requires evaluating $\mathbb{E}_U \tr\left[ \left( \widetilde{\mathcal{T}}_{S \to E}\left(U_S \widetilde{\rho}_{SR} U_S^{\dagger}\right) \right)^2 \right]$ and bounding the difference between the expectation over $\mathfrak{C}$ and the Haar random unitary group $\mathfrak{U}_n$. The property of $\mathfrak{C}$ being an approximate 2-design does not guarantee that this term is small unless $\varepsilon$ is exponentially small. To address this, we refine the evaluation by leveraging the double-layer blocked structure of $\mathfrak{C}$. The additional condition that the state $\rho_{SR}$ exhibits a tensor-product structure simplifies the calculation, and this condition holds for EPR input states. The full proof can be found in Appendix~\ref{app:proof_nonsmooth}.
\end{proof}

The non-smooth decoupling theorem for the 1D brickwork circuits $\mathfrak{B}_n^{\varepsilon}$ is similar to that for $\mathfrak{C}_n^{\varepsilon}$, but with a different decoupling error. The results are shown below.
\begin{theorem}[Decoupling with 1D brickwork circuits]\label{thm:1Dlocalrandomcircuit}
Given three quantum systems, $S = S_1S_2\cdots S_{n}$, $R = R_1R_2\cdots R_{n}$, and $E = E_1E_2\cdots E_{n}$, we denote $S_i$ as the subsystem of region $i$ inside $S$, and the same for $R_i$ and $E_i$. Here, $S_i$ is a qudit with local dimension $q$. Let quantum state $\rho_{SR}$ be
$\rho_{SR} = \bigotimes_{i=1}^{n}\rho_{S_iR_i}$ with a tensor-product structure and $\mathcal{T}_{S\rightarrow E}$ be a completely positive map. Denote the reduced density matrix of $\rho_{SR}$ on system $R$ as $\rho_R$ and the Choi-Jamio{\l}kowski representation of $\mathcal{T}_{S\rightarrow E}$ as $\tau_{SE}$ with reduced density matrix on system $E$ as $\tau_E$. Set $\eta = \frac{q}{q^2+1}$. Suppose $U_S$ is drawn from $\mathfrak{B}_n^{\varepsilon}$ with depth $d=\log\frac{n}{\varepsilon}+\frac{\log n}{\log \frac{1}{2\eta}} + \frac{\log(e-1)}{\log \frac{1}{2\eta}}+1$. We obtain that
\begin{equation}
\mathbb{E}_{U_S\sim \mathfrak{B}_n^{\varepsilon}}\Vert \mathcal{T}_{S\rightarrow E}(U_S\rho_{SR}U_{S}^{\dagger}) - \tau_E\otimes \rho_R \Vert_1
\leq \sqrt{2^{-H_2(S|E)_{\tau_{SE}}-H_2(S|R)_{\rho_{SR}}} + (\frac{\varepsilon}{n})^{\log\frac{1}{2\eta\rho_m}} \max_{A\subseteq [n]} 2^{-H_2(A|R)_{\rho_{AR}}-H_2(A|E)_{\tau_{AE}}}},
\end{equation}
where
\begin{equation}
\rho_m = \max_{i\in [n]}(2^{H_2(S_i|R_i)_{\rho_{S_iR_i}}}, 2^{-H_2(S_i|R_i)_{\rho_{S_iR_i}}}).
\end{equation}
\end{theorem}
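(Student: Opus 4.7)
The plan is to mirror the strategy used for the double-layer blocked case in Theorem~\ref{thm:nonsmoothdecoupling}, with the structural differences of the brickwork circuit handled through a transfer-matrix analysis of the second-moment operator. Concretely, I start from Lemma~\ref{lemma:1normbound} with $M = \mathcal{T}_{S\rightarrow E}(U_S\rho_{SR}U_S^\dagger) - \tau_E\otimes \rho_R$ and $\sigma = \tau_E\otimes\rho_R$, obtaining $\|M\|_1\le \sqrt{\tr[(\sigma^{-1/4}M\sigma^{-1/4})^2]}$. Taking the expectation over $U_S\sim\mathfrak{B}_n^{\varepsilon}$ and pulling it inside the square root by Jensen's inequality, the cross-terms vanish because $\mathfrak{B}$ contains the Pauli group and is therefore an exact unitary $1$-design, so $\mathbb{E}[U_S\rho_{SR}U_S^\dagger] = (\id_S/q^n)\otimes\rho_R$. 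This reduces the problem to controlling the second-moment quantity $\mathbb{E}_{U_S\sim\mathfrak{B}_n^{\varepsilon}}\tr[(\sigma^{-1/4}\mathcal{T}_{S\rightarrow E}(U_S\rho_{SR}U_S^\dagger)\sigma^{-1/4})^2]$, with the Haar baseline providing the term $2^{-H_2(S|E)_{\tau_{SE}}-H_2(S|R)_{\rho_{SR}}}$.

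The main technical work is then to bound the deviation of $\mathbb{E}_{\mathfrak{B}}[U_S^{\otimes 2}(\cdot)U_S^{\dagger\otimes 2}]$ from $\mathbb{E}_{\mathfrak{U}}[U_S^{\otimes 2}(\cdot)U_S^{\dagger\otimes 2}]$ on the relevant operator. I would expand the second-moment operator in the standard basis $\{\id,F\}^{\otimes n}$ of SWAP configurations on the doubled Hilbert space, turning each two-qudit brickwork gate into a $2\times 2$ local transfer matrix whose nontrivial eigenvalue is $\eta=q/(q^2+1)$. A full brickwork layer therefore contracts the non-Haar component by a factor bounded by $2\eta$ (the factor $2$ accounts for the two staggered sublayers and the overlap of neighbouring gates), so after $d$ layers the contraction is at most $(2\eta)^d$. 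With the depth choice $d=\log(n/\varepsilon)+\log n/\log(1/(2\eta))+\log(e-1)/\log(1/(2\eta))+1$, this gives $(2\eta)^d = O(\varepsilon/n)$. The tensor-product hypothesis $\rho_{SR}=\bigotimes_i\rho_{S_iR_i}$ is essential here, because it makes the initial configuration vector on $\{\id,F\}^n$ factorize across sites, so the weight picked up on each site during the contraction is controlled by the local collision-entropy factor $2^{\pm H_2(S_i|R_i)_{\rho_{S_iR_i}}}\le \rho_m$. Tracking the worst-case amplification per layer yields an overall contraction of $(2\eta\rho_m)^d = (\varepsilon/n)^{\log(1/(2\eta\rho_m))}$, matching the prefactor in the theorem.

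The final step is to sum the contributions of all non-Haar configurations $A\subseteq[n]$ against the channel's Choi operator; expressing each such term using the $\sigma^{-1/4}$ weighting and the factorized form of $\tau_{SE}$ on the marginal region $A$, one obtains $2^{-H_2(A|R)_{\rho_{AR}}-H_2(A|E)_{\tau_{AE}}}$, and the maximization over $A$ together with the contraction factor produces the stated correction term. The hard part is carrying out the transfer-matrix contraction uniformly over all configurations while retaining the tensor-product factorization, since, unlike the double-layer blocked case where the two sublayers can be analyzed separately and the inter-block terms are confined to disjoint pairs of regions, the brickwork has a growing light-cone and nontrivial configurations can propagate across many sites before the depth budget is exhausted. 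The key idea is that the staggered structure guarantees that after $\log n$ layers the light cones cover the entire ring, so the contraction factor per layer can be applied uniformly and the $\log n$ term in the chosen depth precisely compensates for the $n$-fold sum over configurations, converting the per-layer $2\eta$ contraction into the stated $(\varepsilon/n)^{\log(1/(2\eta\rho_m))}$ bound.
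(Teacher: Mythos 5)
Your high-level skeleton is the same as the paper's: reduce to the second moment via Lemma~\ref{lemma:1normbound} and Jensen, cancel the cross terms using the exact $1$-design property, and then analyze the deviation of the brickwork second-moment operator from Haar by expanding in the $\{\id,F\}^{\otimes n}$ configuration basis with local transfer matrices whose nontrivial entry is $\eta=q/(q^2+1)$ (this is exactly the domain-wall/statistical-mechanics mapping of Ref.~\cite{Dalzell2022Anticoncentrate} that the paper uses in Appendix~\ref{appendssc:1DLRC}).

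However, there is a genuine gap in your central contraction step. The claim that ``a full brickwork layer contracts the non-Haar component by a factor bounded by $2\eta$, so after $d$ layers the contraction is at most $(2\eta)^d$'' is false as stated: the transfer matrix is not a uniform contraction on the orthogonal complement of the Haar fixed points. Configurations whose domain walls pairwise annihilate after a few layers reach $\id^n$ or $F^n$ and thereafter pick up weight $1$ per layer; summed over all such trajectories they contribute $O(1)$ total weight, which must be shown to reproduce (and not exceed) the Haar terms $2^{-H_2(S|E)-H_2(S|R)}$ rather than being suppressed by $(2\eta)^d$. The paper handles this by uniquely decomposing every domain-wall trajectory as $G_U\sqcup G_0$, where $G_U$ carries the domain walls that survive to depth $d$ and $G_0$ the ones that annihilate; only $G_U$ is suppressed as $\eta^{(d-1)\abs{DW(\vec{\gamma}^d)}}$ (note the suppression is per surviving domain wall, not per layer globally), while $\sum\weight(G_0)$ is bounded by comparison with the infinite-depth walk region by region. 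Two further inaccuracies in your sketch follow from missing this decomposition: the factor $2$ in $2\eta$ is the number of directions each \emph{surviving} domain wall can step at each layer, giving at most $2^{d\abs{DW(\vec{\gamma}^d)}}$ choices of $G_U$ (not ``the two staggered sublayers''), and the factor $\rho_m^{d\abs{DW}}$ arises from the displacement between the initial and final domain-wall positions when relating the boundary overlaps $\tr(O_{SR}\,\id_{E^0_{odd}}F_{E^0_{even}})$ and $\tr(O_{SE}\,\id_{E^d_{odd}}F_{E^d_{even}})$, not from a per-layer amplification. Finally, the depth offset $d^*$ is needed to absorb the binomial sum $\sum_{k_0}\binom{n}{2k_0}((2\rho_m)^d\eta^{d-1})^{2k_0}$ over final domain-wall positions; your remark about the $\log n$ term compensating the sum over configurations gestures at this but would need the explicit bound $c\le(\eta(2\rho_m)^{d/(d-1)})^{d-d^*}$ to close the argument.
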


\begin{proof}[Proof sketch]
The proof of Theorem~\ref{thm:1Dlocalrandomcircuit} is similar to that of Theorem~\ref{thm:nonsmoothdecoupling}. The key proof is to evaluate the second-order twirling $\mathbb{E}_U \tr\left[ \left( \widetilde{\mathcal{T}}_{S \to E}\left(U_S \widetilde{\rho}_{SR} U_S^{\dagger}\right) \right)^2 \right]$. To address this for $\mathfrak{B}$, we utilize the domain wall technique~\cite{Dalzell2022Anticoncentrate}, which enables us to evaluate the second moment twirling induced by 1D brickwork circuits. The full proof can be found in Appendix~\ref{appendssc:1DLRC}.
\end{proof}


With the non-smooth decoupling theorem, one can show that the Choi error can be polynomially small when $\varepsilon$ is polynomially small for Pauli noise, erasure errors, amplitude damping noise, and correlated noise, as demonstrated later. In this case, the depths of the double-layer blocked circuit $\mathfrak{C}_n^{\varepsilon}$ and 1D brickwork circuit $\mathfrak{B}_n^{\varepsilon}$ are still $O(\log n)$.

For the double-layer blocked ensemble $\mathfrak{C}$, we can further introduce a smooth decoupling theorem. Particularly, the smooth decoupling theorem replaces the conditional collision entropy $H_2$ with the smoothed conditional collision entropy $H_2^{\delta}$. The proof follows a standard smoothing technique, with a full derivation provided in Appendix~\ref{app:smoothing}. This smooth version allows us to demonstrate a higher achievable encoding rate for $\mathfrak{C}$ when the noise channel is local. The collision entropy $H_2^{\delta}({\rho^{\otimes \xi}})$ can approximate the tighter von Neumann entropy $H(\rho^{\otimes \xi})$ in the large $n$ limit. In contrast, this approach cannot increase the achievable encoding rate for 1D brickwork circuit $\mathfrak{B}$, since the upper bound contains the term $2^{H_2(S_iR_i)_{\rho_{S_iR_i}}}$ evaluated on a single qudit. In the following, we mainly present the results of $\mathfrak{C}$.

\begin{theorem}[Smooth decoupling theorem for 1D double-layer blocked circuits]\label{thm:smoothdecoupling}
With the same setting in Theorem~\ref{thm:nonsmoothdecoupling} and the tensor-product structure of $\mathcal{T}_{S\rightarrow E}$, for any $0 < \delta < 1$, we have that
\begin{equation}
\begin{split}
&\mathbb{E}_{U\sim \mathfrak{C}_n^{\varepsilon}}\Vert \mathcal{T}_{S\rightarrow E}(U_S\rho_{SR}U_{S}^{\dagger}) - \tau_E\otimes \rho_R \Vert_1\leq 16N\delta+\\
&\sqrt{2^{-\sum_{i=1}^{2N}H^{\delta}_{2}(S_i|E_i)_{\tau_{S_iE_i}}-\sum_{i=1}^{2N}H^{\delta}_{2}(S_i|R_i)_{\rho_{S_iR_i}}} + c'\prod_{i=1}^{2N} \max(1, 2^{-H^{\delta}_{2}(S_i|E_i)_{\tau_{S_iE_i}}-H^{\delta}_{2}(S_i|R_i)_{\rho_{S_iR_i}}})},
\end{split}
\end{equation}
where
\begin{equation}
c' = 2((1+\frac{2^{\xi+1}}{2^{2\xi}+1}\max_i(\max(2^{-H^{\delta}_{2}(S_i|R_i)_{\rho_{S_iR_i}}}, 2^{H^{\delta}_2(S_i|R_i)_{\rho_{S_iR_i}}})))^{N-1}-1).
\end{equation}
If we only smooth $H_2(S_i|E_i)$, we will obtain
\begin{equation}\label{eq:H2smoothSE}
\begin{split}
&\mathbb{E}_{U\sim \mathfrak{C}_n^{\varepsilon}}\Vert \mathcal{T}_{S\rightarrow E}(U_S\rho_{SR}U_{S}^{\dagger}) - \tau_E\otimes \rho_R \Vert_1\leq 8N\delta+\\
&\sqrt{2^{-\sum_{i=1}^{2N}H^{\delta}_{2}(S_i|E_i)_{\tau_{S_iE_i}}-\sum_{i=1}^{2N}H_{2}(S_i|R_i)_{\rho_{S_iR_i}}} + c\prod_{i=1}^{2N} \max(1, 2^{-H^{\delta}_{2}(S_i|E_i)_{\tau_{S_iE_i}}-H_{2}(S_i|R_i)_{\rho_{S_iR_i}}})},
\end{split}
\end{equation}
where
\begin{equation}
c = 2((1+\frac{2^{\xi+1}}{2^{2\xi}+1}\max_i(\max(2^{-H_2(S_i|R_i)_{\rho_{S_iR_i}}}, 2^{H_2(S_i|R_i)_{\rho_{S_iR_i}}})))^{N-1}-1).
\end{equation}
\end{theorem}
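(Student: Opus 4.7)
The strategy is to reduce Theorem~\ref{thm:smoothdecoupling} to the non-smooth Theorem~\ref{thm:nonsmoothdecoupling} by a region-wise smoothing argument. For each $i\in[2N]$, I pick optimizers $\hat{\rho}_{S_iR_i}\in\mathcal{B}^{\delta}(\rho_{S_iR_i})$ and $\hat{\tau}_{S_iE_i}\in\mathcal{B}^{\delta}(\tau_{S_iE_i})$ attaining the suprema defining $H_2^{\delta}(S_i|R_i)_{\rho_{S_iR_i}}$ and $H_2^{\delta}(S_i|E_i)_{\tau_{S_iE_i}}$, and form the product surrogates $\hat{\rho}_{SR}=\bigotimes_i\hat{\rho}_{S_iR_i}$ and $\hat{\tau}_{SE}=\bigotimes_i\hat{\tau}_{S_iE_i}$, letting $\hat{\mathcal{T}}_{S\rightarrow E}=\bigotimes_i\hat{\mathcal{T}}_{S_i\rightarrow E_i}$ denote the CP map with Choi representation $\hat{\tau}_{SE}$. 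Because the purified distance satisfies $P(A_1\otimes A_2,B_1\otimes B_2)\leq P(A_1,B_1)+P(A_2,B_2)$, iteration across the $2N$ regions yields $P(\hat{\rho}_{SR},\rho_{SR})\leq 2N\delta$ and $P(\hat{\tau}_{SE},\tau_{SE})\leq 2N\delta$. Crucially, both surrogates retain the tensor-product structure over the $2N$ regions required as input to Theorem~\ref{thm:nonsmoothdecoupling}.

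A four-way triangle inequality in trace norm then decomposes the quantity of interest into a state-smoothing piece $\Vert\mathcal{T}(U_S\rho_{SR}U_S^{\dagger})-\mathcal{T}(U_S\hat{\rho}_{SR}U_S^{\dagger})\Vert_1$, a channel-smoothing piece $\Vert\mathcal{T}(U_S\hat{\rho}_{SR}U_S^{\dagger})-\hat{\mathcal{T}}(U_S\hat{\rho}_{SR}U_S^{\dagger})\Vert_1$, the non-smooth decoupling piece $\Vert\hat{\mathcal{T}}(U_S\hat{\rho}_{SR}U_S^{\dagger})-\hat{\tau}_E\otimes\hat{\rho}_R\Vert_1$, and a reference-shift piece $\Vert\hat{\tau}_E\otimes\hat{\rho}_R-\tau_E\otimes\rho_R\Vert_1$. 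Taking $\mathbb{E}_{U_S\sim\mathfrak{C}_n^{\varepsilon}}$ and invoking Theorem~\ref{thm:nonsmoothdecoupling} on the tensor-product surrogates $\hat{\rho},\hat{\mathcal{T}}$ handles the third piece and reproduces exactly the square-root expression in the claim, with the smoothed entropies in place of $H_2$. The state-smoothing piece is bounded by $\Vert\rho_{SR}-\hat{\rho}_{SR}\Vert_1\leq 2P(\rho_{SR},\hat{\rho}_{SR})\leq 4N\delta$ using contractivity of the trace norm under the CPTP map $\mathcal{T}$ and under $U_S(\cdot)U_S^{\dagger}$, while the reference-shift piece is split via $\Vert A_1\otimes B_1-A_2\otimes B_2\Vert_1\leq \Vert A_1-A_2\Vert_1+\Vert B_1-B_2\Vert_1$ (using the unit-trace property) and monotonicity of $P$ under partial trace, giving a bound of $2P(\hat{\rho}_{SR},\rho_{SR})+2P(\hat{\tau}_{SE},\tau_{SE})\leq 8N\delta$.

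The main obstacle is the channel-smoothing piece: the naive inequality $\Vert\mathcal{T}-\hat{\mathcal{T}}\Vert_{\diamond}\leq|S|\,\Vert\tau_{SE}-\hat{\tau}_{SE}\Vert_1$ carries a factor $|S|$ exponential in $n$, which would ruin the bound. I would resolve this via a Stinespring-dilation trick: by Uhlmann's theorem applied to purifications of $\tau_{SE}$ and $\hat{\tau}_{SE}$, there exist isometries $V,W:S\rightarrow EF$ representing $\mathcal{T}$ and $\hat{\mathcal{T}}$ such that $\Vert(V-W)\ket{\psi}\Vert=O(P(\tau_{SE},\hat{\tau}_{SE}))=O(N\delta)$ uniformly in $\ket{\psi}$; applying these isometries and partially tracing $F$ preserves trace-norm differences, so the channel-smoothing piece is at most $4N\delta$ with no dependence on $|S|$. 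Adding $4N\delta$ (state smoothing) $+$ $4N\delta$ (channel smoothing) $+$ $8N\delta$ (reference shift) gives the prefactor $16N\delta$ of the first inequality. For the one-sided variant~\eqref{eq:H2smoothSE} the state-smoothing piece is absent and only the $\tau$-half of the reference shift survives, halving the linear contribution to $8N\delta$. The remaining work is purely bookkeeping: substituting the smoothed entropies into the non-smooth bound and renaming the prefactor as $c'$ (resp.\ $c$).
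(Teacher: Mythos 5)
Your overall architecture --- pick region-wise optimizers, form tensor-product surrogates, and peel off the smoothing error with a triangle inequality so that Theorem~\ref{thm:nonsmoothdecoupling} applies to the hatted objects --- is exactly the paper's route, and your treatment of the state-smoothing and reference-shift pieces (with the same $4N\delta$ and $8N\delta$ tallies) is sound. The gap is in the channel-smoothing piece. Your claim that Uhlmann-optimal Stinespring isometries $V,W$ satisfy $\Vert (V-W)\ket{\psi}\Vert = O(P(\tau_{SE},\hat{\tau}_{SE}))$ \emph{uniformly} in $\ket{\psi}$ is false. Uhlmann's theorem controls the overlap of the purifications of the two Choi states, i.e.\ the action of $V-W$ on the maximally entangled input; this fixes only the normalized Hilbert--Schmidt quantity $\frac{1}{\abs{S}}\Vert V-W\Vert_2^2 = 2-2F(\tau_{SE},\hat{\tau}_{SE})$, and passing from that to the operator norm $\Vert V-W\Vert_\infty$ (which is what a uniform-in-$\psi$ bound requires) costs a factor $\sqrt{\abs{S}}$. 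Indeed, if your uniform claim were true it would yield $\Vert\mathcal{T}-\hat{\mathcal{T}}\Vert_{\diamond}=O(N\delta)$ from Choi-state closeness alone, which fails by dimension factors in general --- this is precisely the blowup you set out to avoid.

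What rescues the bound is the expectation over $U$ drawn from a unitary $1$-design, which you never invoke for this piece. Since $\mathbb{E}_U U_S\rho_{SR}U_S^{\dagger}=\frac{\id_S}{\abs{S}}\otimes\rho_R$ is maximally mixed on $S$, the input exactly matches the Choi normalization. The paper exploits this by writing $\hat{\tau}_{SE}-\tau_{SE}=\Delta^{+}-\Delta^{-}$ with orthogonal positive parts, noting that the corresponding CP maps have positive outputs so each trace norm becomes a trace, commuting $\mathbb{E}_U$ inside by linearity, and obtaining $\tr(\Delta^{+}+\Delta^{-})=\Vert\hat{\tau}_{SE}-\tau_{SE}\Vert_1\le 4N\delta$ with no dimension factor. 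Your dilation idea can be repaired in the same spirit --- bound $\mathbb{E}_U\Vert(V-W)\sqrt{U_S\rho_{SR}U_S^{\dagger}}\Vert_2$ by Jensen's inequality and evaluate $\mathbb{E}_U\tr[U_S\rho_{SR}U_S^{\dagger}(V-W)^{\dagger}(V-W)]=\frac{1}{\abs{S}}\Vert V-W\Vert_2^2$ --- but the averaging over the design is essential, not optional. As written, this step of your argument does not go through.
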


In the following, we introduce the results of applying the decoupling theorems to prove the AQEC performance of $\mathfrak{C}$ and $\mathfrak{B}$. We mainly present the results associated with $\mathfrak{C}$ since its proven achievable encoding rate is better. We also show the advantage of $\mathfrak{C}$ and $\mathfrak{B}$ compared to the block-encoding method in Section~\ref{ssc:block}.

\subsection{Pauli error}
We now analyze the performance of random codes under local Pauli noise. This type of noise is particularly important as it is frequently used to model noise encountered in quantum computing~\cite{Terhal2015QEC,Flammia2020Estimation}. Notably, any noise can be turned into Pauli noise by randomized compiling~\cite{Wallman2016compiling}. Specifically, we analyze the Pauli noise model with parameter $\Vec{p} = (p_I, p_X, p_Y, p_Z)$, as defined in Eq.~\eqref{eq:pauli_noise}. By applying the non-smooth decoupling theorem, we prove the following result.

\begin{corollary}[AQEC performance of $\mathfrak{C}$ for Pauli noise]\label{coro:pauli_nonsmoothing}
In the large $n$ limit, suppose $k$ and $\Vec{p}$ satisfy $k/n\leq 1-f(\Vec{p})$ and $(\frac{\varepsilon}{n})^{1-\frac{k}{n}} = o(\frac{1}{n})$, the expected Choi error of the random codes formed by $\mathfrak{C}_n^{\varepsilon}$ against the strength-$\Vec{p}$ local Pauli noise channel satisfies
\begin{equation}
\mathbb{E}_{U\sim \mathfrak{C}_n^{\varepsilon}}\epsilon_{\mathrm{Choi}} \leq \left(2^{-n(1-f(\Vec{p})-\frac{k}{n})} + \frac{4\varepsilon^{1-\frac{k}{n}}n^{\frac{k}{n}}}{\log (n/\varepsilon)}\right)^{\frac{1}{4}},
\end{equation}
where $f(\Vec{p})= 2\log (\sqrt{p_I} + \sqrt{p_X} + \sqrt{p_Y} + \sqrt{p_Z})$. That is, the $1D$ $O(\log n)$-depth double-layer blocked circuits can achieve an encoding rate $1 - f(\Vec{p})$.
\end{corollary}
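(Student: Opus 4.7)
My plan is to apply the tensor-product form of the non-smooth decoupling theorem (Theorem~\ref{thm:nonsmoothdecoupling}) through the reduction already set up in Eq.~\eqref{eq:choiexpectation}. Concretely, I would take $\rho_{SR} = \ketbra{\hat{\phi}}_{LR}\otimes \ketbra{0^{n-k}}_{S\setminus L}$ and $\mathcal{T}_{S\to E} = \hat{\mathcal{N}}_{S\to E}$, the complementary channel of $\mathcal{N}_p^{\otimes n}$. Because the double-layer blocked encoding distributes logical qubits and ancillas evenly across the $2N$ regions, $\rho_{SR}$ factorizes as $\bigotimes_{i=1}^{2N}\rho_{S_iR_i}$ with each $\rho_{S_iR_i}$ equal to $(k/n)\xi$ EPR pairs tensored with $(n-k)\xi/n$ ancillary $\ket{0}$'s; and since the noise is i.i.d., the complementary channel also factorizes as $\hat{\mathcal{N}} = \bigotimes_{i=1}^{2N}\hat{\mathcal{N}}_{S_i\to E_i}$. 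Both tensor-product hypotheses of Theorem~\ref{thm:nonsmoothdecoupling} are therefore met.

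The next step is to compute the two conditional collision entropies that feed into the bound. For $\rho_{S_iR_i}$ the marginal $\rho_{R_i}=\id_{R_i}/2^{(k/n)\xi}$ is maximally mixed, so conjugation by $\id_{S_i}\otimes\rho_{R_i}^{-1/4}$ just multiplies $\rho_{S_iR_i}$ by the scalar $2^{(k/n)\xi/2}$; since $\rho_{S_iR_i}$ is pure, its square equals itself and has unit trace, giving $H_2(S_i|R_i)_{\rho_{S_iR_i}}=-(k/n)\xi$. By additivity of $H_2$ under tensor products, it then suffices to compute $H_2(S|E)$ for a single qubit of the complementary Pauli channel. Using the Stinespring isometry $V=\sum_a\sqrt{p_a}\,\sigma_a\otimes\ket{a}_E$ and the transpose trick $(\sigma_a)_{S'}\ket{\hat{\phi}}=(\sigma_a^T)_S\ket{\hat{\phi}}$, one obtains $\tau_E=\mathrm{diag}(p_I,p_X,p_Y,p_Z)$ and $\tau_{SE}=\frac{1}{2}\sum_{a,b}\sqrt{p_ap_b}\,\sigma_a^T\sigma_b^T\otimes\ketbra{a}{b}$. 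Conjugating by $\id_S\otimes\tau_E^{-1/4}$ converts the $\sqrt{p_ap_b}$ to $(p_ap_b)^{1/4}$; squaring, the internal sum over $b$ collapses via $(\sigma_b^T)^2=\id$, and the final trace uses $\tr(\sigma_a^T\sigma_d^T)=2\delta_{ad}$, collapsing everything to $\frac{1}{2}(\sum_a\sqrt{p_a})^2$ and hence $H_2(S|E)=1-f(\vec{p})$. Therefore $H_2(S_i|E_i)_{\tau_{S_iE_i}}=\xi(1-f(\vec{p}))$.

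Plugging these entropies into Theorem~\ref{thm:nonsmoothdecoupling}, the first term becomes $2^{-\sum_i H_2(S_i|E_i)-\sum_i H_2(S_i|R_i)}=2^{-n(1-f(\vec{p})-k/n)}$, which is exponentially small whenever $k/n\le 1-f(\vec{p})$. Under the same condition every factor $\max(1,\,2^{-H_2(S_i|E_i)-H_2(S_i|R_i)})$ equals $1$, so the product reduces to $1$ and only the prefactor $c$ must be controlled. Substituting $\xi=\log(n/\varepsilon)$, $N=n/(2\xi)$, and $\max_i\max(2^{-H_2(S_i|R_i)},2^{H_2(S_i|R_i)})=2^{(k/n)\xi}$, the ratio $\frac{2^{\xi+1}}{2^{2\xi}+1}\cdot 2^{(k/n)\xi}$ scales as $2(\varepsilon/n)^{1-k/n}$. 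The hypothesis $(\varepsilon/n)^{1-k/n}=o(1/n)$ then keeps $N\cdot(\varepsilon/n)^{1-k/n}$ small, permitting linearization of $(1+\cdot)^{N-1}$ and giving $c=O(\varepsilon^{1-k/n}n^{k/n}/\log(n/\varepsilon))$. Taking the square root in the decoupling bound and then the outer square root from Eq.~\eqref{eq:choiexpectation} produces the stated fourth-root expression.

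The principal technical difficulty is the single-qubit calculation $H_2(S|E)=1-f(\vec{p})$ for the complementary of Pauli noise: one must construct the full, non-diagonal Choi state $\tau_{SE}$ from the Stinespring dilation and track all cross terms through conjugation, squaring and trace; the two crucial collapses rely on $(\sigma_b^T)^2=\id$ and the orthogonality $\tr(\sigma_a\sigma_d)=2\delta_{ad}$. A secondary subtle point is the control of $c$: because $c$ contains $(1+x)^{N-1}$ with $N$ of order $n/\log n$, the asymptotic condition $(\varepsilon/n)^{1-k/n}=o(1/n)$ is precisely what is needed so that $Nx\to 0$ and the linearization is justified; without this condition the prefactor would blow up and the fourth-root bound would fail.
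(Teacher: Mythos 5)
Your proposal is correct and follows essentially the same route as the paper: reduce the Choi error to the decoupling bound of Theorem~\ref{thm:nonsmoothdecoupling}, evaluate $H_2(S_i|R_i)=-\tfrac{k}{n}\xi$ and $H_2(S_i|E_i)=\xi(1-f(\vec{p}))$, and control the prefactor $c$ via $(1+x)^{N-1}-1=O(Nx)$ under the hypothesis $(\varepsilon/n)^{1-k/n}=o(1/n)$. The only local difference is that you compute the single-qubit entropy $H_2(S|E)=1-f(\vec{p})$ directly from the Choi state of the complementary channel built from the Stinespring isometry, whereas the paper reaches the same value through the duality relation of Lemma~\ref{lemma:entropydual}, evaluating $-\log\tr_{S'}\bigl(\bigl(\tr_S\sqrt{\mathcal{N}_p(\ketbra{\hat{\phi}})}\bigr)^2\bigr)$ using the mutual orthogonality of the Pauli-rotated Bell states; both computations are valid and of comparable length.
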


\begin{proof}[Proof sketch]
To prove this corollary, we straightforwardly apply the non-smooth decoupling theorem or Theorem~\ref{thm:nonsmoothdecoupling}. We substitute the state $\tau_{S_iE_i}$ and $\rho_{S_iR_i}$ in Theorem~\ref{thm:nonsmoothdecoupling} with the Choi state of local Pauli noise and the maximally entangled state, respectively. For local Pauli noise, the conditional collision entropy $H_2(S|E)_{\tau_{SE}}$ is given by $n(1-f(\Vec{p}))$. Meanwhile, the entropy $H_2(S|R)_{\rho_{SR}}$ captures the entanglement of the initial state and is equal to $-k$, where $k$ denotes the number of Bell states in $\ketbra{\hat{\phi}}$. Then, we get the first term $2^{-n(1-f(\Vec{p})-\frac{k}{n})}$ in the square root. The second term comes from the bound of coefficient $c$ with standard inequalities. See Appendix~\ref{app:pauli_nonsmoothing} for the detailed proof.
\end{proof}

Corollary~\ref{coro:pauli_nonsmoothing} shows that the Choi error can be made polynomially small by choosing $\varepsilon$ to be polynomially small. The exponential decay term $2^{-n(1-f(\Vec{p})-\frac{k}{n})}$ corresponds to the result associated with random stabilizer codes, as discussed in Appendix~\ref{app:Clif}. The polynomial decay term arises from the approximation error of $\mathfrak{C}$ compared to the Haar-random ensemble. As a consequence, we show that 1D double-layer blocked $O(\log n)$-depth random circuits maintain a constant encoding and error-correcting rate under local Pauli noise. The encoding rate reaches $1 - 2 \log (\sqrt{p_I} + \sqrt{p_X} + \sqrt{p_Y} + \sqrt{p_Z})$ for local Pauli noise with a polynomially small Choi error. The polynomial decay of the Choi error is particularly beneficial when the logical state needs to be protected over extended periods.

Similar to Corollary~\ref{coro:pauli_nonsmoothing}, for the 1D brickwork circuits $\mathfrak{B}_n^{\varepsilon}$, we can also show that the encoding scheme can reach the encoding rate $1 - 2 \log (\sqrt{p_I} + \sqrt{p_X} + \sqrt{p_Y} + \sqrt{p_Z})$ for local Pauli noise. The difference is that the polynoimal decay term within the bracket changes to $(\frac{\varepsilon}{k})^{\frac{n}{k}-2}$. The full analysis is shown in Appendix~\ref{appendssc:1DLRCQEC}.

When we apply the smooth decoupling theorem, or Theorem~\ref{thm:smoothdecoupling}, we obtain a higher achievable encoding rate, known as the hashing bound. Specifically, we use Eq.~\eqref{eq:H2smoothSE} by smoothing only $H_2(S_i|E_i)$.

\begin{corollary}[Random codes from $\mathfrak{C}$ achieve the hashing bound for Pauli noise]\label{coro:pauli}
For any $\delta > 0$, in the large $n$ limit, suppose $k$ and $\Vec{p}$ satisfy $k/n\leq 1-h(\Vec{p})-\delta$ and $\lim_{n\rightarrow \infty} \frac{\log n}{\log \frac{n}{\varepsilon}} = 0$, the expected Choi error of the random codes formed by $\mathfrak{C}_n^{\varepsilon}$ against the strength-$\Vec{p}$ local Pauli noise channel satisfies
\begin{equation}
\mathbb{E}_{U\sim \mathfrak{C}_n^{\varepsilon}}\epsilon_{\mathrm{Choi}} \leq \sqrt{8\delta+\sqrt{2^{-n(1-h(\Vec{p})-\frac{k}{n}-\delta)} + \frac{4\varepsilon^{1-\frac{k}{n}}n^{\frac{k}{n}}}{\log (n/\varepsilon)}}}.
\end{equation}
Here, $h(\Vec{p}) = -p_I\log p_I - p_X\log p_X - p_Y\log p_Y - p_Z\log p_Z$. That is, the $1D$ $\omega(\log n)$-depth double-layer blocked circuits can achieve the hashing bound $1 - h(\Vec{p})$.
\end{corollary}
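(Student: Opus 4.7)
The plan is to follow exactly the reduction used in the proof of Corollary~\ref{coro:pauli_nonsmoothing} but to invoke the smooth decoupling theorem (Theorem~\ref{thm:smoothdecoupling}, in the form of Eq.~\eqref{eq:H2smoothSE}) in place of Theorem~\ref{thm:nonsmoothdecoupling}, so as to upgrade the single-qubit collision entropy $1-f(\vec{p})$ to the asymptotically larger von Neumann rate $1-h(\vec{p})$. First, I would use Eq.~\eqref{eq:choiexpectation} to reduce $\mathbb{E}_U \epsilon_{\mathrm{Choi}}$ to the square root of a decoupling error for the pre-encoding state $\rho_{SR} = \ket{\hat{\phi}}\bra{\hat{\phi}}_{LR} \otimes \ket{\mathbf{0}}\bra{\mathbf{0}}_{S\setminus L}$ acted on by the complementary channel $\hat{\mathcal{N}}_p^{\otimes n}$. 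Since the logical and ancillary qubits are evenly distributed over the $2N=n/\xi$ regions of size $\xi=\log(n/\varepsilon)$, both $\rho_{SR}$ and the Choi state $\tau_{SE}$ factorize regionwise, so the tensor-product hypothesis of Theorem~\ref{thm:smoothdecoupling} is satisfied with $\rho_{S_iR_i}$ equal to $k\xi/n$ EPR pairs tensored with $(n-k)\xi/n$ ancillary $\ket{\mathbf{0}}$'s and $\tau_{S_iE_i} = \tau_{S_1E_1}^{\otimes \xi}$.

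On each region I would evaluate the two regional entropies appearing on the right of Eq.~\eqref{eq:H2smoothSE}. A direct calculation from the definition of $H_2$ gives $H_2(S_i|R_i)_{\rho_{S_iR_i}} = -k\xi/n$, using that $\rho_{S_iR_i}$ is pure and $\rho_{R_i} = \id/2^{k\xi/n}$. For the Choi state of the single-qubit Pauli channel, a Stinespring computation with $V\ket{\psi} = \sum_\sigma \sqrt{p_\sigma}\,\sigma\ket{\psi}_{S_{\mathrm{out}}}\ket{\sigma}_E$ gives $H(\tau_{S_1E_1}) = H(\tau_{S_{\mathrm{out}}}) = 1$ and $H(\tau_{E_1}) = H(\sum_\sigma p_\sigma\ket{\sigma}\bra{\sigma}) = h(\vec{p})$, so $H(S_1|E_1)_{\tau_{S_1E_1}} = 1-h(\vec{p})$.

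The main obstacle, and the reason for using the smoothed version, is to convert $H_2^{\delta'}(S_i|E_i)_{\tau_{S_1E_1}^{\otimes \xi}}$ into the von Neumann rate $\xi(1-h(\vec{p}))$. I would apply the Fully Quantum Asymptotic Equipartition Property for the conditional collision entropy of i.i.d.\ states, which supplies $H_2^{\delta'}(A^\xi|B^\xi)_{\rho^{\otimes\xi}} \ge \xi H(A|B)_\rho - \sqrt{\xi}\,\nu(\delta')$ with $\nu(\delta') = O(\log(1/\delta'))$. Choosing $\delta' = \delta/(8N)$ inside Eq.~\eqref{eq:H2smoothSE} absorbs the additive $8N\delta'$ prefactor into the advertised $8\delta$. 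The hypothesis $\lim_{n\to\infty}\log n/\log(n/\varepsilon) = 0$, equivalently $\xi = \omega(\log n)$, then ensures $\sqrt{\xi}\,\nu(\delta') = o(\xi)$, so that $H_2^{\delta'}(S_i|E_i) \ge \xi(1-h(\vec{p})) - \xi\delta/2$ for large $n$; summing over regions yields $\sum_i H_2^{\delta'}(S_i|E_i) + \sum_i H_2(S_i|R_i) \ge n(1-h(\vec{p})-k/n-\delta)$.

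Finally, I would substitute these entropy bounds into Eq.~\eqref{eq:H2smoothSE}. The first term inside the inner square root is controlled by $2^{-n(1-h(\vec{p})-k/n-\delta)}$. The second term, involving $c$, is handled exactly as in Corollary~\ref{coro:pauli_nonsmoothing}: under the hypothesis $1-h(\vec{p}) > k/n$, each $\max(1,\cdot)$ factor in the product equals $1$, and expanding $(1+x)^{N-1}-1$ with $x = (2^{\xi+1}/(2^{2\xi}+1))\cdot 2^{k\xi/n} = O(\varepsilon^{1-k/n}n^{k/n-1})$ bounds $c$ by $4\varepsilon^{1-k/n}n^{k/n}/\log(n/\varepsilon)$. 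Chaining through Eq.~\eqref{eq:choiexpectation} then produces the claimed bound $\sqrt{8\delta + \sqrt{2^{-n(1-h(\vec{p})-k/n-\delta)} + 4\varepsilon^{1-k/n}n^{k/n}/\log(n/\varepsilon)}}$.
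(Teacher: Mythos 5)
Your proposal is correct and follows essentially the same route as the paper's proof in Appendix~\ref{app:hashingbound}: reduce to decoupling via Eq.~\eqref{eq:choiexpectation}, apply the smoothed bound Eq.~\eqref{eq:H2smoothSE} with only $H_2(S_i|E_i)$ smoothed, evaluate $H_2(S_i|R_i)=-k\xi/n$ and $H(S|E)_{\tau}=1-h(\vec{p})$, invoke the AEP regionwise, and bound $c$ exactly as in Corollary~\ref{coro:pauli_nonsmoothing}. The only substantive remark concerns your stated AEP error $\sqrt{\xi}\,\nu(\delta')$ with $\nu(\delta')=O(\log(1/\delta'))$: since $\delta'=\Theta(\delta/N)$ gives $\log(1/\delta')=O(\log n)$, that form would require $\log^2 n = o(\xi)$ rather than the hypothesis $\log n = o(\xi)$. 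The actual AEP (Lemma~\ref{lem:AEP}, as used in the paper) has error $O\bigl(\log\gamma\,\sqrt{\log(1/\delta')}/\sqrt{\xi}\bigr)$ per copy, i.e., $\nu(\delta')=O(\sqrt{\log(1/\delta')})$, under which $\sqrt{\xi\log n}=o(\xi)$ holds precisely when $\log n/\log(n/\varepsilon)\to 0$; with that correction your argument closes under the stated hypothesis.
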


\begin{proof}[Proof sketch]
The proof of Corollary \ref{coro:pauli} closely follows the proof of Corollary \ref{coro:pauli_nonsmoothing}, with the key distinction being the evaluation of the smoothed entropy. By the asymptotic equipartition property (AEP) of conditional entropy, the smoothed collision entropy converges to the von Neumann entropy. For local Pauli noise, the von Neumann entropy $H(S|E)_{\tau_{SE}}$ is given by $n(1-h(\Vec{p}))$. The entropy $H_2(S|R)_{\rho_{SR}}$ captures the initial entanglement and is equal to $-k$. Combining these results, we obtain that the encoding rate achievable under local Pauli noise approaches the hashing bound. The detailed analysis is provided in Appendix~\ref{app:hashingbound}.
\end{proof}

Note that, compared to Corollary~\ref{coro:pauli_nonsmoothing}, Corollary~\ref{coro:pauli} introduces an additional requirement: $\lim_{n \to \infty} \frac{\log n}{\log \frac{n}{\varepsilon}} = 0$. This condition arises from the AEP when the smooth conditional collision entropy approximates the conditional Shannon entropy. The condition ensures that the approximation error in AEP diminishes as $n$ increases. It can be satisfied by choosing $\log (\frac{n}{\varepsilon})$ to be any function that grows slightly faster than $\log n$, such as $(\log n) \log \log \log n$.

While smoothing the conditional collision entropy introduces an additional error term of $8\delta$, the results remain significant. Since $\delta$ can be chosen arbitrarily small, we can take the limit $\delta \to 0$ and $n \to \infty$. In this limit, the Choi error approaches 0. Therefore, for Pauli noise parameterized by $\Vec{p} = (p_I, p_X, p_Y, p_Z)$, the Choi error diminishes to zero if the encoding rate $k/n$ is less than $1 - h(\Vec{p})$. Hence, we conclude that the $\omega(\log n)$-depth encoding circuit from $\mathfrak{C}$ can achieve the hashing bound. The hashing bound $1-h(\Vec{p})$ can be much higher than the previous bound $1-f(\Vec{p})$ from the non-smooth decoupling theorem.


\subsection{Erasure error}
We now analyze the performance of random codes under erasure errors, an important error model extensively studied in prior work \cite{Gullans2021LowDepth, Kong2022CQEC} and usually used to assess the performance of quantum codes. We first consider strength-$p$ i.i.d.~erasure errors defined in Eq.~\eqref{eq:iiderasure}. The derivation is similar to that for Pauli noise, with the differences arising from the noise channel and the evaluation of $H_2(S|E)_{\tau_{SE}}$ and $H(S|E)_{\tau_{SE}}$. We provide the results below.

\begin{corollary}[AQEC performance of $\mathfrak{C}$ for i.i.d.~erasure error]\label{coro:iiderasure_nonsmoothing}
In the large $n$ limit, suppose $k$ and $p$ satisfy $k/n\leq 1-\log(1+3p)$ and $(\frac{\varepsilon}{n})^{1-\frac{k}{n}} = o(\frac{1}{n})$, the expected Choi error of the random codes formed by $\mathfrak{C}_n^{\varepsilon}$ against the strength-$p$ i.i.d.~erasure error satisfies
\begin{equation}
\mathbb{E}_{U\sim \mathfrak{C}_n^{\varepsilon}}\epsilon_{\mathrm{Choi}} \leq \left(2^{-n(1-\log(1+3p)-\frac{k}{n})} + \frac{4\varepsilon^{1-\frac{k}{n}}n^{\frac{k}{n}}}{\log (n/\varepsilon)}\right)^{\frac{1}{4}},
\end{equation}
That is, the $1D$ $O(\log n)$-depth double-layer blocked circuits can achieve an encoding rate $1 - \log(1+3p)$ for strength-$p$ erasure error.
\end{corollary}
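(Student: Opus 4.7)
The plan is to follow the strategy of Corollary~\ref{coro:pauli_nonsmoothing} verbatim, replacing the Pauli channel by the i.i.d.~erasure channel $\mathcal{N}_{e}^{\otimes n}$ and redoing only the entropy computations that depend on the noise. First I would invoke the reduction in Eq.~\eqref{eq:choiexpectation}, which already gives
\begin{equation}
\mathbb{E}_{U}\epsilon_{\mathrm{Choi}}\leq\sqrt{\mathbb{E}_{U}\Vert\hat{\mathcal{N}}_{S\rightarrow E}(U_{S}\ketbra{\hat{\phi}}_{LR}\otimes\ketbra{0^{n-k}}U_{S}^{\dagger})-\hat{\mathcal{N}}_{S\rightarrow E}(\id_{S}/2^{n})\otimes\id_{R}/2^{k}\Vert_{1}}.
\end{equation}
Because $\mathcal{N}_{e}^{\otimes n}$ has a product Kraus decomposition, its complementary channel factorizes as $\hat{\mathcal{N}}_{S\rightarrow E}=\bigotimes_{j=1}^{n}\hat{\mathcal{N}}_{e,j}$, and grouping the factors into the $2N$ blocks of $\mathfrak{C}$ puts us in the tensor-product regime of Theorem~\ref{thm:nonsmoothdecoupling}. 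Since logical qubits and ancillae are distributed evenly across regions, the input state also factors as $\rho_{SR}=\bigotimes_{i=1}^{2N}\rho_{S_{i}R_{i}}$, so the remaining hypothesis of the theorem is satisfied.

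Next I would compute the two conditional collision entropies that feed the bound. The initial state contributes $H_{2}(S|R)_{\rho_{SR}}=-k$ and $H_{2}(S_{i}|R_{i})_{\rho_{S_{i}R_{i}}}=-k\xi/n$, exactly as in the Pauli case. The new ingredient is $H_{2}(S|E)_{\tau_{SE}}$ for the Choi state of the complementary erasure channel. Working on a single qubit with Kraus operators $K_{0}=\sqrt{1-p}\,(\ketbra{0}+\ketbra{1})$, $K_{1}=\sqrt{p}\,\ketbra{2}{0}$, $K_{2}=\sqrt{p}\,\ketbra{2}{1}$, a direct calculation gives
\begin{equation}
\tau_{SE}=\tfrac{1-p}{2}\id_{S}\otimes\ketbra{0}_{E}+p\ketbra{\Psi'},\qquad\ket{\Psi'}=\tfrac{1}{\sqrt{2}}\bigl(\ket{0}_{S}\ket{1}_{E}+\ket{1}_{S}\ket{2}_{E}\bigr),
\end{equation}
and $\tau_{E}=(1-p)\ketbra{0}_{E}+\tfrac{p}{2}(\ketbra{1}_{E}+\ketbra{2}_{E})$. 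The two summands of $M=(\id_{S}\otimes\tau_{E})^{-1/4}\tau_{SE}(\id_{S}\otimes\tau_{E})^{-1/4}$ are mutually orthogonal because $(\id_{S}\otimes\ketbra{0}_{E})\ket{\Psi'}=0$, so $\tr(M^{2})=(1-p)/2+2p=(1+3p)/2$. This yields $H_{2}(S|E)_{\tau_{SE}}=1-\log(1+3p)$ per qubit, hence $n(1-\log(1+3p))$ for the full system and $\xi(1-\log(1+3p))$ per region, producing the main exponential term $2^{-n(1-\log(1+3p)-k/n)}$ inside the square root of Theorem~\ref{thm:nonsmoothdecoupling}.

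Finally I would bound the coefficient $c$ of Theorem~\ref{thm:nonsmoothdecoupling}. With $H_{2}(S_{i}|R_{i})=-k\xi/n$ the relevant maximum is $2^{k\xi/n}$, so the base of the geometric factor satisfies $1+\tfrac{2^{\xi+1}}{2^{2\xi}+1}\cdot 2^{k\xi/n}\leq 1+O(2^{-\xi(1-k/n)})$; applying $(1+x/N)^{N}\leq e^{x}$ together with $\xi=\log(n/\varepsilon)$ and the hypothesis $(\varepsilon/n)^{1-k/n}=o(1/n)$ controls $c$ times the region-wise maxima in the tensor-product form of the theorem by a constant multiple of $\varepsilon^{1-k/n}n^{k/n}/\log(n/\varepsilon)$. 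Combining this with the outer square root of Eq.~\eqref{eq:choiexpectation} yields the claimed fourth-root bound. The main obstacle is the single-qubit entropy computation: the complementary erasure channel lives on a three-dimensional output on which $\tau_{E}$ is not proportional to the identity, so one must track carefully the orthogonality between the ``unerased-identity'' component and the pure component $\ketbra{\Psi'}$ to avoid cross terms in $M^{2}$. Once that block is in hand, everything else is a direct transcription of the proof of Corollary~\ref{coro:pauli_nonsmoothing}.
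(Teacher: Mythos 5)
Your proposal is correct and follows the paper's proof essentially verbatim: reduce to the decoupling statement via Eq.~\eqref{eq:choiexpectation}, apply Theorem~\ref{thm:nonsmoothdecoupling} in its tensor-product form, plug in $H_2(S_i|R_i)=-k\xi/n$ and $H_2(S_i|E_i)=\xi(1-\log(1+3p))$, and bound $c$ exactly as in Eq.~\eqref{eq:boundofcoefc}. The only (immaterial) difference is that you evaluate $H_2(S|E)$ by writing out the Choi state of the complementary erasure channel directly, whereas the paper uses the duality formula of Eq.~\eqref{eq:dualcollisionentropy} to compute it from $\tr_{S'}\bigl(\bigl(\tr_S\sqrt{\mathcal{N}_e(\ketbra{\hat{\phi}})}\bigr)^2\bigr)$; both yield $\tfrac{1+3p}{2}$ and hence the same per-qubit entropy.
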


The detailed proof is available in Appendix~\ref{app:iiderasure_nondecouple}. Corollary~\ref{coro:iiderasure_nonsmoothing} demonstrates that the Choi error can be made polynomially small when $\varepsilon$ is polynomially small. The exponential decay term $2^{-n(1-\log(1+3p)-\frac{k}{n})}$ arises from the result associated with random stabilizer codes, while the polynomial decay term comes from the approximation error of $\mathfrak{C}_n^{\varepsilon}$. Consequently, we show that 1D double-layer blocked $O(\log n)$-depth random circuits achieve an encoding rate of $1 - \log(1 + 3p)$ for i.i.d.~erasure error with a vanishing Choi error.

Similar to Corollary~\ref{coro:iiderasure_nonsmoothing}, for the 1D brickwork circuits $\mathfrak{B}_n^{\varepsilon}$, the encoding rate $1 - \log(1 + 3p)$ is also achievable for i.i.d.~erasure error with a small Choi error $O((\frac{\varepsilon}{k})^{\frac{n}{4k}-\frac{1}{2}})$. The full analysis is shown in Appendix~\ref{appendssc:1DLRCQEC}.

If we apply the smooth decoupling theorem, or Theorem~\ref{thm:smoothdecoupling}, we find that $\mathfrak{C}$ can further achieve the quantum channel capacity for the erasure error, which is $1 - 2p$. This result is shown below. Specifically, we use Eq.~\eqref{eq:H2smoothSE} by smoothing only $H_2(S_i|E_i)$.

\begin{corollary}[Random codes from $\mathfrak{C}$ achieve the quantum channel capacity for i.i.d.~erasure error]\label{coro:iiderasure}
Given $\delta > 0$, in the large $n$ limit, suppose $k$ and $p$ satisfy $k/n\leq 1-2p-\delta$ and $\lim_{n\rightarrow \infty} \frac{\log n}{\log \frac{n}{\varepsilon}} = 0$, the expected Choi error of the random codes formed by $\mathfrak{C}_n^{\varepsilon}$ against the strength-$p$ i.i.d.~erasure error satisfies
\begin{equation}
\mathbb{E}_{U\sim \mathfrak{C}_n^{\varepsilon}}\epsilon_{\mathrm{Choi}} \leq \sqrt{8\delta+\sqrt{2^{-n(1-2p-\frac{k}{n}-\delta)} + \frac{4\varepsilon^{1-\frac{k}{n}}n^{\frac{k}{n}}}{\log (n/\varepsilon)}}}.
\end{equation}
That is, the $1D$ $\omega(\log n)$-depth double-layer blocked circuits can achieve the erasure channel capacity $1 - 2p$.
\end{corollary}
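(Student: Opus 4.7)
The plan is to mirror the proof of Corollary~\ref{coro:pauli} by invoking the partially smoothed decoupling bound Eq.~\eqref{eq:H2smoothSE}, with the Pauli-specific entropy estimates replaced by those tailored to the complementary channel of the strength-$p$ i.i.d.\ erasure noise. First, I would apply Eq.~\eqref{eq:choiexpectation} to reduce $\mathbb{E}_U\epsilon_{\mathrm{Choi}}$ to the square root of the expected decoupling error between $\hat{\mathcal{N}}_{S\to E}(U\ketbra{\hat\phi}_{LR}\otimes\ketbra{0^{n-k}}U^\dagger)$ and $\hat{\mathcal{N}}_{S\to E}(\id_S/2^n)\otimes\id_R/2^k$. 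Both the erasure complementary channel and the input state factorize across the $2N$ regions of $\mathfrak{C}$, since by construction the $k$ logical EPR halves and the $n-k$ ancilla qubits are distributed uniformly across regions, so the hypotheses of the smooth decoupling theorem are satisfied and Eq.~\eqref{eq:H2smoothSE} can be applied directly.

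Next I would compute the per-region smoothed entropies. A short direct calculation of the Choi state of the single-qubit complementary erasure channel gives $H(S|E)_{\tau^{(1)}}=1-2p$; since the channel is i.i.d., the asymptotic equipartition property for conditional collision entropy yields $H_2^\delta(S_i|E_i)\geq \xi(1-2p)-O(\sqrt{\xi\log(1/\delta)})$ for each block of $\xi$ qubits. On the reference side, $H_2(S_i|R_i)_{\rho_{S_iR_i}}=-k\xi/n$, so the summed entropies satisfy $\sum_i H_2^\delta(S_i|E_i)+\sum_i H_2(S_i|R_i)\geq n(1-2p-k/n)-o(n)$ under the hypothesis $\lim_{n\to\infty}\log n/\log(n/\varepsilon)=0$. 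The positivity of $H_2^\delta(S_i|E_i)+H_2(S_i|R_i)$ in each region, valid whenever $k/n<1-2p-\delta$, collapses each $\max(1,\cdot)$ factor in Eq.~\eqref{eq:H2smoothSE} to unity. Bounding the coefficient $c$ through the same chain of inequalities used in the proof of Corollary~\ref{coro:pauli_nonsmoothing}, namely $\frac{2^{\xi+1}}{2^{2\xi}+1}\,2^{k\xi/n}\leq 2\cdot 2^{-\xi(1-k/n)}$ and $(1+x)^{N-1}-1\leq Nx\,e^{Nx}$, then yields $c=O(\varepsilon^{1-k/n}n^{k/n}/\log(n/\varepsilon))$. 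A rescaling $\delta\mapsto \delta/(8N)$ converts the additive $8N\delta$ appearing in the theorem into the $8\delta$ stated in the corollary, and assembling all pieces produces the announced bound.

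The main obstacle will be the bookkeeping of the three small parameters---the smoothing slack $\delta$, the AEP deficit $\sqrt{\xi\log(1/\delta)}$, and the approximate-design error $\varepsilon$---so that each vanishes in the prescribed limit. After the $\delta\mapsto \delta/(8N)$ rescaling the AEP deficit inflates to $O(\sqrt{\xi\log(n/\delta)})$ per region, and certifying that this remains negligible compared to $\xi=\log(n/\varepsilon)$ is precisely why the hypothesis $\lim_{n\to\infty}\log n/\log(n/\varepsilon)=0$ is needed: it forces $\varepsilon$ to decay super-polynomially, so $\xi$ grows faster than any $\log n$ factor in the deficit. Once this is checked, letting $\delta\to 0$ followed by $n\to\infty$ with $k/n<1-2p$ drives the right-hand side to zero, establishing that $\omega(\log n)$-depth $\mathfrak{C}$ circuits saturate the erasure channel capacity $1-2p$.
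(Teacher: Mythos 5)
Your proposal is correct and follows essentially the same route as the paper's proof in Appendix~\ref{app:iiderasure}: reduce to decoupling via Eq.~\eqref{eq:choiexpectation}, apply the partially smoothed bound Eq.~\eqref{eq:H2smoothSE} with per-region smoothing parameter rescaled by the number of blocks, use AEP together with $H(S|E)_{\hat{\mathcal{N}}_e(\ketbra{\hat\phi})}=1-2p$ and $H_2(S_i|R_i)=-\tfrac{k}{n}\xi$, and bound $c$ exactly as in Eq.~\eqref{eq:boundofcoefc}. The only cosmetic difference is your rescaling $\delta\mapsto\delta/(8N)$ versus the paper's $\delta\mapsto\delta/N$, which merely changes the additive constant in front of $\delta$ and does not affect the argument.
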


The detailed proof is provided in Appendix~\ref{app:iiderasure}. The additional requirement $\lim_{n \to \infty} \frac{\log n}{\log \frac{n}{\varepsilon}} = 0$ arises from AEP, just as it does for Pauli noise. By taking the limits $\delta \to 0$ and $n \to \infty$, the Choi error diminishes to zero if the encoding rate $k/n$ is less than $1 - 2p$. Hence, we conclude that the $\omega(\log n)$-depth encoding circuit from $\mathfrak{C}$ can achieve the quantum channel capacity $1 - 2p$ for i.i.d.~erasure error.

In addition to i.i.d.~erasure, we also consider the fixed-number erasure error, which was studied in~\cite{Kong2022CQEC}. This noise model is defined in Eq.~\eqref{eq:random_t_erasure}, where the number of erasures is $t$. In this error model, a subsystem of size $t$ is randomly selected and traced out. Similar to Corollary~\ref{coro:iiderasure_nonsmoothing}, we show that these codes maintain a constant rate of information storage and error-correction capabilities. Specifically, given the logical qubit count $k$ and the number of erasure errors $t$, we have that for sufficiently large $n$, as long as $1 - \frac{k}{n} - \log(1+\frac{3t}{n}) \geq 0$, and $(\frac{\varepsilon}{n})^{1-\frac{k}{n}} = o(\frac{1}{n})$, the expected Choi error over $\mathfrak{C}_n^{\varepsilon}$ against the random $t$ erasure errors satisfies
\begin{equation}\label{eq:randomterasure}
\mathbb{E}\epsilon_{\mathrm{Choi}} \leq \left(2^{-(n-2t-k)}+\frac{4\varepsilon^{1-\frac{k}{n}}n^{\frac{k}{n}}}{\log (n/\varepsilon)}\right)^{\frac{1}{4}}.
\end{equation}

Note that $\varepsilon$ is only required to be polynomially small to satisfy $(\frac{\varepsilon}{n})^{1-\frac{k}{n}} = o(\frac{1}{n})$. Consequently, the circuit depth is $O(\log(\frac{n}{\varepsilon})) = O(\log n)$, and the Choi error decays polynomially as the number of qubits $n$ increases. The maximal number of erasure errors can be $\frac{n}{3}$. The full detail is available in Appendix~\ref{app:proof_erasure}.

For random stabilizer code, the Choi error can be bounded by the first term $2^{-\frac{n-2t-k}{4}}$. Thus, random stabilizer codes perform well when $n-2t-k > 0$, which saturates the quantum Singleton bound, $n-k-2(d-1)\geq 0$, for an $[n, k, d]$ code, as distance-$d$ code can guarantee to correct $d-1$ erasure errors~\cite{Cerf1997singleton}. Here, the codes from $\mathfrak{C}$ require the condition $n-k-n\log(1+\frac{3t}{n}) \geq 0$, which is close to the Singleton bound $n-k-2t\geq 0$ up to a constant factor. This raises a question regarding whether the low-depth codes can achieve the ``Singleton bound" for AQEC under random $t$-erasure errors. Meanwhile, notice that the condition $n-k-n\log(1+\frac{3t}{n}) \geq 0$ is in accord with the requirement when applying the non-smooth decoupling theorem to i.i.d.~erasure error. Also, applying the smooth decoupling theorem to i.i.d.~erasure error gives an encoding rate $1-2p$, which is exactly the Singleton bound. It is possible to enhance Eq.~\eqref{eq:randomterasure} by ``smoothing" the result, which we leave for future work.

\subsection{Amplitude damping noise}
We now analyze the performance of random codes under amplitude damping noise~\cite{Giovannetti2005amplitude}. The derivation is similar to that for Pauli noise and erasure errors, with differences in the noise channel and the evaluation of $H_2(S|E)_{\tau_{SE}}$ and $H(S|E)_{\tau_{SE}}$. We provide the results below.

\begin{corollary}[AQEC performance of $\mathfrak{C}$ for amplitude damping noise]\label{coro:amp_nonsmoothing}
In the large $n$ limit, suppose $k$ and $p$ satisfy $k/n\leq -\log( \frac{1}{2-p} + \sqrt{\frac{p}{2-p}} )$ and $(\frac{\varepsilon}{n})^{1-\frac{k}{n}} = o(\frac{1}{n})$, the expected Choi error of the random codes formed by $\mathfrak{C}_n^{\varepsilon}$ against the strength-$p$ local amplitude damping noise satisfies
\begin{equation}
\mathbb{E}_{U\sim \mathfrak{C}_n^{\varepsilon}}\epsilon_{\mathrm{Choi}} \leq \left(2^{-n(-\log( \frac{1}{2-p} + \sqrt{\frac{p}{2-p}} )-\frac{k}{n})} + \frac{4\varepsilon^{1-\frac{k}{n}}n^{\frac{k}{n}}}{\log (n/\varepsilon)}\right)^{\frac{1}{4}},
\end{equation}
That is, the $1D$ $O(\log n)$-depth double-layer blocked circuits can achieve an encoding rate $-\log( \frac{1}{2-p} + \sqrt{\frac{p}{2-p}} )$ for strength-$p$ local amplitude damping noise.
\end{corollary}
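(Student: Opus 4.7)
The plan is to follow exactly the template established by Corollaries~\ref{coro:pauli_nonsmoothing} and~\ref{coro:iiderasure_nonsmoothing}, invoking the complementary-channel reformulation in Eq.~\eqref{eq:choiexpectation} and then applying the non-smooth decoupling theorem (Theorem~\ref{thm:nonsmoothdecoupling}) to the Choi state of the complementary of local amplitude damping. The only genuinely new computation is the conditional collision entropy $H_2(S|E)_{\tau_{SE}}$ for a single-qubit amplitude damping channel; everything else is bookkeeping that mirrors the earlier corollaries.

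First, I would set up the standard input: take $\rho_{SR}$ to be $k$ EPR pairs on the logical slots tensored with $\ketbra{\mathbf{0}}$ on the $n-k$ ancillary slots, partitioned across the $2N$ regions of $\mathfrak{C}$ so that $\rho_{SR}=\bigotimes_i \rho_{S_iR_i}$ has the required tensor-product structure. Since the noise $\mathcal{N}_a^{\otimes n}$ is i.i.d., the complementary channel $\hat{\mathcal{N}}_{S\to E}$ also factorizes across qubits and therefore across regions, so the tensor-product form of Theorem~\ref{thm:nonsmoothdecoupling} applies. One then has $H_2(S|R)_{\rho_{SR}}=-k$ exactly as in the Pauli case.

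The main technical step is the evaluation of $H_2(S|E)_{\tau_{SE}}$ for a single-qubit amplitude damping channel with parameter $p$. Using $\hat{\mathcal{N}}(\rho)_{ij}=\tr(K_i\rho K_j^{\dagger})$ with the Kraus operators from Eq. (the definition of $K_0,K_1$), I would compute $\tau_{SE}$ in the computational basis, obtaining a rank-two state with eigenvalues $(1\pm p)/2$ and a reduced environment state $\tau_E=\mathrm{diag}((2-p)/2,p/2)$. A direct evaluation of
\begin{equation}
2^{-H_2(S|E)_{\tau_{SE}}}=\tr\bigl[((\id_S\otimes\tau_E^{-1/4})\tau_{SE}(\id_S\otimes\tau_E^{-1/4}))^2\bigr]
\end{equation}
splits into two blocks (the subspaces spanned by $\{\ket{00},\ket{11}\}$ and $\{\ket{01},\ket{10}\}$) and, after collecting terms, simplifies via the identity $(1+\sqrt{p(2-p)})^2+(1-p)^2=2(1+\sqrt{p(2-p)})$ to $(1+\sqrt{p(2-p)})/(2-p)$. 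This matches $\frac{1}{2-p}+\sqrt{\frac{p}{2-p}}$, giving the clean single-qubit value $H_2(S|E)_{\tau_{SE}}=-\log\bigl(\frac{1}{2-p}+\sqrt{\frac{p}{2-p}}\bigr)$. Additivity over the $n$ factors then yields $H_2(S|E)_{\tau_{SE}^{\otimes n}}=-n\log\bigl(\frac{1}{2-p}+\sqrt{\frac{p}{2-p}}\bigr)$.

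Finally, I would plug these two entropies into the tensor-product form of Theorem~\ref{thm:nonsmoothdecoupling}. The first term under the square root becomes $2^{-n(-\log(\frac{1}{2-p}+\sqrt{p/(2-p)})-k/n)}$, vanishing under the stated rate condition. For the second term, bounding the coefficient $c$ together with the $\max(1,\cdot)$ product is done exactly as in the proof of Corollary~\ref{coro:pauli_nonsmoothing}: use $\xi=\log(n/\varepsilon)$, $N=n/(2\xi)$, and the condition $(\varepsilon/n)^{1-k/n}=o(1/n)$ to bound everything by $4\varepsilon^{1-k/n}n^{k/n}/\log(n/\varepsilon)$. Taking the square root and composing with Eq.~\eqref{eq:choiexpectation} (another square root) produces the claimed $1/4$-power bound. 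The only place where I anticipate needing care is keeping the two-block algebra for $\tr M^2$ clean and verifying the non-obvious algebraic simplification into the form advertised in the corollary; everything after that reduces to routine estimates already carried out in the Pauli and erasure proofs.
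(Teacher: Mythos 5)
Your proposal is correct and reaches the same bound, with the overall skeleton (complementary-channel reformulation, $H_2(S|R)=-k$ for the EPR input, the bound $c\leq \frac{4\varepsilon^{1-k/n}n^{k/n}}{\log(n/\varepsilon)}$, and the two nested square roots giving the $1/4$-power) identical to the paper's Appendix proof. The one place where you genuinely diverge is the evaluation of $H_2(S|E)$ for a single amplitude-damping qubit: the paper invokes the duality relation of Lemma~\ref{lemma:entropydual} in the form of Eq.~\eqref{eq:dualcollisionentropy}, so it never constructs $\tau_{SE}$ explicitly but instead computes $-\log\tr_{S'}\bigl[(\tr_S\sqrt{\mathcal{N}_a(\ketbra{\hat{\phi}}_{SS'})})^2\bigr]$, taking the matrix square root of the rank-two noisy state by diagonalizing it. You instead build the complementary channel's Choi state directly, $\tau_{SE}=\frac12(\ket{00}+\sqrt{p}\ket{11})(\bra{00}+\sqrt{p}\bra{11})+\frac{1-p}{2}\ketbra{10}{10}$ with $\tau_E=\mathrm{diag}\bigl(\frac{2-p}{2},\frac{p}{2}\bigr)$, and evaluate the collision entropy from its definition; I verified that the block computation indeed yields $\tr M^2=\frac{1}{2-p}+\sqrt{\frac{p}{2-p}}$, matching the paper (your quoted identity $(1+\sqrt{p(2-p)})^2+(1-p)^2=2(1+\sqrt{p(2-p)})$ checks out, and note the second "block" is really just the single vector $\ket{10}$ since $\ket{01}$ carries no weight). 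Your route trades the paper's matrix square root for a $\tau_E^{-1/4}$ conjugation followed by squaring, which is arguably more elementary and self-contained, at the cost of having to write down the complementary Choi state explicitly; the paper's route generalizes more uniformly across all the noise models it treats since it only ever needs the forward channel acting on $\ketbra{\hat{\phi}}$. Either way the corollary follows by the same substitution into Theorem~\ref{thm:nonsmoothdecoupling} as in Corollary~\ref{coro:pauli_nonsmoothing}.
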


The detailed proof is available in Appendix~\ref{app:amp_nondecouple}. Corollary~\ref{coro:amp_nonsmoothing} demonstrates that the Choi error can be made polynomially small when $\varepsilon$ is polynomially small.

Similar to Corollary~\ref{coro:amp_nonsmoothing}, for the 1D brickwork circuits $\mathfrak{B}_n^{\varepsilon}$, the encoding rate $-\log( \frac{1}{2-p} + \sqrt{\frac{p}{2-p}} )$ is also achievable for local amplitude damping noise with a small Choi error $O((\frac{\varepsilon}{k})^{\frac{n}{4k}-\frac{1}{2}})$. The full analysis is shown in Appendix~\ref{appendssc:1DLRCQEC}.

If we apply the smooth decoupling theorem, we find that $\mathfrak{C}$ can achieve a better encoding rate $h(\frac{1-p}{2})-h(\frac{p}{2})$. This result is shown below.

\begin{corollary}[Better AQEC performance of $\mathfrak{C}$ for amplitude damping noise]\label{coro:amp}
Given $\delta > 0$, in the large $n$ limit, suppose $k$ and $p$ satisfy $k/n\leq h(\frac{1-p}{2})-h(\frac{p}{2})-\delta$ and $\lim_{n\rightarrow \infty} \frac{\log n}{\log \frac{n}{\varepsilon}} = 0$, the expected Choi error of the random codes formed by $\mathfrak{C}_n^{\varepsilon}$ against the strength-$p$ local amplitude damping noise satisfies
\begin{equation}
\mathbb{E}_{U\sim \mathfrak{C}_n^{\varepsilon}}\epsilon_{\mathrm{Choi}} \leq \sqrt{8\delta+\sqrt{2^{-n(h(\frac{1-p}{2})-h(\frac{p}{2})-\frac{k}{n}-\delta)} + \frac{4\varepsilon^{1-\frac{k}{n}}n^{\frac{k}{n}}}{\log (n/\varepsilon)}}}.
\end{equation}
That is, the $1D$ $\omega(\log n)$-depth double-layer blocked circuits can achieve the encoding rate $h(\frac{1-p}{2})-h(\frac{p}{2})$.
\end{corollary}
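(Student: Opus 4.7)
The plan is to follow the same template established for Corollary \ref{coro:pauli} and Corollary \ref{coro:iiderasure}, now specialized to the amplitude damping channel. First I would invoke Eq.~\eqref{eq:choiexpectation} to reduce the expected Choi error to the square root of a decoupling error, with $\mathcal{T}_{S\to E}=\hat{\mathcal{N}}_a^{\otimes n}$ the complementary channel of the i.i.d.~amplitude damping noise. Since this complementary channel has a tensor-product structure across blocks, I would apply Eq.~\eqref{eq:H2smoothSE} from Theorem \ref{thm:smoothdecoupling}, smoothing only the conditional collision entropy $H_2^{\delta}(S_i|E_i)_{\tau_{S_iE_i}}$ of the Choi state of $\hat{\mathcal{N}}_a^{\otimes \xi}$ on each block, while keeping $H_2(S_i|R_i)_{\rho_{S_iR_i}}=-(k/n)\xi$ from the uniform distribution of EPR pairs and ancillas across regions.

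Next I need to evaluate the von Neumann conditional entropy $H(S_i|E_i)_{\tau_{S_iE_i}}$ for an i.i.d.~block. By the Stinespring purification of $\mathcal{N}_a$, this equals $\xi$ times the single-qubit coherent information $I_c(\id/2,\mathcal{N}_a)=H(\mathcal{N}_a(\id/2))-H(\hat{\mathcal{N}}_a(\id/2))$. A direct Kraus-operator computation using $K_0=\mathrm{diag}(1,\sqrt{1-p})$ and $K_1=\sqrt{p}\,\ketbra{0}{1}$ yields $\mathcal{N}_a(\id/2)=\mathrm{diag}(\tfrac{1+p}{2},\tfrac{1-p}{2})$ and $\hat{\mathcal{N}}_a(\id/2)=\mathrm{diag}(1-\tfrac{p}{2},\tfrac{p}{2})$, so $I_c=h(\tfrac{1-p}{2})-h(\tfrac{p}{2})$, giving $H(S_i|E_i)_{\tau_{S_iE_i}}=\xi(h(\tfrac{1-p}{2})-h(\tfrac{p}{2}))$.

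I would then invoke the asymptotic equipartition property as in the proof of Corollary \ref{coro:pauli}: because $\tau_{S_iE_i}=\tau^{\otimes\xi}$ is i.i.d.~with $\xi=\log(n/\varepsilon)\to\infty$, the smoothed collision entropy $H_2^{\delta}(S_i|E_i)_{\tau_{S_iE_i}}$ tracks the von Neumann value up to an additive correction of order $\sqrt{\xi}\,g(\delta)/\xi$, which vanishes under the hypothesis $\lim_{n\to\infty}\log n/\log(n/\varepsilon)=0$. Substituting these into Eq.~\eqref{eq:H2smoothSE}, the exponent in the first term becomes $-n(h(\tfrac{1-p}{2})-h(\tfrac{p}{2})-k/n-\delta')$ for a small correction $\delta'$ absorbable into $\delta$, while under $k/n\le h(\tfrac{1-p}{2})-h(\tfrac{p}{2})-\delta$ each factor in the product bounding $c$ is at most $1$ up to the subleading $\tfrac{2^{\xi+1}}{2^{2\xi}+1}$ tail, producing the $\varepsilon$-dependent second term $4\varepsilon^{1-k/n}n^{k/n}/\log(n/\varepsilon)$ exactly as in Corollary \ref{coro:iiderasure}. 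Combining with the Jensen step $\mathbb{E}\sqrt{\cdot}\le\sqrt{\mathbb{E}\cdot}$ and the outer purified-to-trace-distance conversion gives the stated bound, with the additive $8N\delta$ correction rescaled to $8\delta$ by absorbing the block count into $\delta$.

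The main obstacle is the AEP step: I have to argue carefully that the smoothing parameter $\delta$ can be chosen independent of the block size $\xi$ while still making $H_2^{\delta}(S_i|E_i)\ge\xi(h(\tfrac{1-p}{2})-h(\tfrac{p}{2}))-o(\xi)$, so that the $\delta$-correction and the AEP-finite-size correction do not swamp the exponential gain $2^{-n(h(\tfrac{1-p}{2})-h(\tfrac{p}{2})-k/n)}$. This is precisely what the condition $\log n/\log(n/\varepsilon)\to 0$ is engineered to secure: it forces $\xi=\omega(\log n)$, hence the depth requirement $\omega(\log n)$. Once this is controlled, the remaining work is bookkeeping that parallels the amplitude-damping analogue of Corollary \ref{coro:iiderasure}, so no genuinely new technique beyond Theorem \ref{thm:smoothdecoupling} and the coherent-information calculation is needed.
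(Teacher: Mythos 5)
Your proposal is correct and follows essentially the same route as the paper's proof in Appendix~\ref{app:ampsmooth}: reduce to the decoupling bound of Eq.~\eqref{eq:H2smoothSE} with only $H_2^{\delta}(S_i|E_i)$ smoothed, compute $H(S_i|E_i)_{\tau_{S_iE_i}}=\xi\left(h(\tfrac{1-p}{2})-h(\tfrac{p}{2})\right)$ (your coherent-information calculation is the same computation as the paper's use of the duality $H(S|E)=-H(S|S')$), invoke AEP under $\lim_{n\to\infty}\log n/\log(n/\varepsilon)=0$, and bound the coefficient $c$ as in the erasure case. The only cosmetic difference is that the paper implements the "absorb the block count into $\delta$" step by smoothing each block with parameter $\delta/N$, which is what your rescaling amounts to.
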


The detailed proof is provided in Appendix~\ref{app:ampsmooth}. The bound of the encoding rate is $h(\frac{1-p}{2})-h(\frac{p}{2})$, which is the same as the bound for random stabilizer codes. Nonetheless, this rate may be lower than the quantum capacity of local amplitude damping noise given in Eq.~\eqref{eq:ampcapacity}. The discrepancy arises because both $\mathfrak{C}$ and random stabilizer codes are symmetric when interchanging the computational basis, whereas amplitude damping noise is inherently asymmetric.

\subsection{Correlated noise}
In the above, we show the AQEC performance of 1D log-depth circuits for local noises. Below, we demonstrate that our analysis extends beyond this type. By applying Theorem~\ref{thm:nonsmoothdecoupling}, we show that codes from $\mathfrak{C}$ can achieve a constant rate of encoding against correlated noise. The derivation is similar to that for local noise, with the differences arising from the noise channel and the evaluation of $H_2(S|E)_{\tau_{SE}}$. The result is summarized below.

\begin{corollary}[Random codes from $\mathfrak{C}$ possess a threshold for 1D nearest neighbor $ZZ$-coupling noise]\label{coro:zzcouple}
In the large $n$ limit, suppose $k$ and $p$ satisfy $k/n < 1-2\log(\sqrt{1-p}+\sqrt{p})$ and $(\frac{\varepsilon}{n})^{1-\frac{k}{n}} = o(\frac{1}{n})$, the expected Choi error of the random codes formed by $\mathfrak{C}_n^{\varepsilon}$ against the strength-$p$ nearest neighbor $ZZ$-coupling noise satisfies
\begin{equation}
\mathbb{E}_{U\sim \mathfrak{C}_n^{\varepsilon}}\epsilon_{\mathrm{Choi}} \leq \left(2^{-n(1-2\log(\sqrt{1-p}+\sqrt{p})-\frac{k}{n})} + \frac{4\varepsilon^{1-\frac{k}{n}}n^{\frac{k}{n}}}{\log (n/\varepsilon)}\right)^{\frac{1}{4}}.
\end{equation}
That is, the $1D$ $O(\log n)$-depth double-layer blocked circuits can achieve an encoding rate $1 - 2\log(\sqrt{1-p}+\sqrt{p})$ for strength-$p$ 1D nearest neighbor $ZZ$-coupling noise.
\end{corollary}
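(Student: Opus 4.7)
The plan is to apply the non-smooth decoupling theorem (Theorem~\ref{thm:nonsmoothdecoupling}) directly, mirroring the approach used for local Pauli noise in Corollary~\ref{coro:pauli_nonsmoothing}, with the only nontrivial adaptation being the computation of the conditional collision entropies for the ZZ-coupling channel. First, following Eq.~\eqref{eq:choiexpectation}, I would reduce $\mathbb{E}_{U}\epsilon_{\mathrm{Choi}}$ to $\sqrt{\mathbb{E}_U\|\hat{\mathcal{N}}_{zz}(U_S\ketbra{\hat\phi}_{LR}\otimes\ketbra{0^{n-k}}U_S^{\dagger}) - \hat{\mathcal{N}}_{zz}(\id_S/2^n)\otimes \id_R/2^k\|_1}$ using the complementary-channel formalism, so that the remaining task is a decoupling bound against the completely positive map $\mathcal{T}_{S\to E}=\hat{\mathcal{N}}_{zz,S\to E}$. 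Since the encoding distributes logical and ancillary qubits evenly across the $2N$ regions, the input state $\rho_{SR}=\ketbra{\hat\phi}_{LR}\otimes\ketbra{0^{n-k}}_{S\setminus L}$ does have a tensor-product structure across regions, which is the exact hypothesis of Theorem~\ref{thm:nonsmoothdecoupling}.

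Next, I would compute the two global conditional collision entropies. The entropy $H_2(S|R)_{\rho_{SR}}$ is immediate: each of the $k$ EPR-entangled qubits contributes $-1$ and each of the $n-k$ ancillary qubits contributes $+1$, giving $H_2(S|R)_{\rho_{SR}} = n-2k$. For $H_2(S|E)_{\tau_{SE}}$, I would exploit that $\mathcal{N}_{zz}$ is a (commuting) Pauli channel with Kraus operators $\sqrt{p_K}\,P_K$ indexed by subsets $K\subseteq\{1,\dots,n-1\}$, where $P_K=\prod_{i\in K}Z_iZ_{i+1}$ and $p_K=(1-p)^{n-1-|K|}p^{|K|}$. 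A direct calculation of the Choi state
\begin{equation}
\tau_{SE}=\frac{1}{2^n}\sum_{K,K'}\sqrt{p_Kp_{K'}}\,\ketbra{K}{K'}_E\otimes (P_KP_{K'})_S
\end{equation}
together with $\tau_E=\sum_K p_K\ketbra{K}_E$ yields $\tr\!\bigl[\bigl((\id_S\otimes\tau_E^{-1/4})\tau_{SE}(\id_S\otimes\tau_E^{-1/4})\bigr)^2\bigr]=\frac{1}{2^n}\bigl(\sum_K\sqrt{p_K}\bigr)^2$. Since the $P_K$ are all $Z$-type and the $Z_iZ_{i+1}$ are linearly independent, the sum factorizes as $\sum_K\sqrt{p_K}=(\sqrt{1-p}+\sqrt{p})^{n-1}$, so $H_2(S|E)_{\tau_{SE}}=n-2(n-1)\log(\sqrt{1-p}+\sqrt{p})$. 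Combining the two entropies gives the dominant exponential term $2^{-n(1-2\log(\sqrt{1-p}+\sqrt{p})-k/n)}$ (up to an $O(1)$ prefactor), matching the first term in the claimed bound.

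The hard part is controlling the second term of Theorem~\ref{thm:nonsmoothdecoupling}, namely $c\max_{A\subseteq[2N]}2^{-H_2(A|R)_{\rho_{AR}}-H_2(A|E)_{\tau_{AE}}}$, because $\hat{\mathcal{N}}_{zz}$ is \emph{not} tensor-product across regions---the boundary ZZ couplings create cross-region correlations, so the cleaner second inequality in Theorem~\ref{thm:nonsmoothdecoupling} does not apply directly. For any subset of regions $A$, the tensor-product structure of $\rho_{SR}$ gives $H_2(A|R)_{\rho_{AR}}=\sum_{i\in A}H_2(S_i|R_i)_{\rho_{S_iR_i}}=|A|\xi(1-2k/n)$. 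For $H_2(A|E)_{\tau_{AE}}$, I would use the Pauli-channel structure: partial-tracing $\tau_{SE}$ over $S\setminus S_A$ kills all $(K,K')$ pairs for which $P_KP_{K'}$ is nontrivial on $S\setminus S_A$, so only couplings fully inside $A$ or fully inside $A^c$ (plus at most two boundary couplings per contiguous run of $A$) survive. A short counting argument then reduces $H_2(A|E)_{\tau_{AE}}$ to a product of the same form $(\sqrt{1-p}+\sqrt{p})$-factors over the qubits that $A$ touches, up to a factor of at most $2^{O(1)}$ per boundary between $A$ and $A^c$. Bounding $c$ via the same per-region estimates used in Corollary~\ref{coro:pauli_nonsmoothing} (using $\rho_m$ bounded by a constant since $\xi=\Theta(\log(n/\varepsilon))$), and using the hypothesis $(\varepsilon/n)^{1-k/n}=o(1/n)$ to ensure the second term is controlled by $\tfrac{4\varepsilon^{1-k/n}n^{k/n}}{\log(n/\varepsilon)}$, I would conclude the inequality of Corollary~\ref{coro:zzcouple}. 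The main obstacle is thus this careful boundary bookkeeping, but since ZZ is strictly nearest-neighbor only a constant number of couplings straddle any contiguous region boundary, so the combinatorics remain tractable.
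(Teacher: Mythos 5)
Your overall strategy is the same as the paper's: reduce the Choi error to a decoupling statement via the complementary channel, apply the \emph{first} (non-tensor-product) inequality of Theorem~\ref{thm:nonsmoothdecoupling} because $\hat{\mathcal{N}}_{zz}$ does not factorize across regions, and then compute/bound the relevant conditional collision entropies. Your computation of $H_2(S|E)_{\tau_{SE}} = n - 2(n-1)\log(\sqrt{1-p}+\sqrt{p})$ via the Choi state of the complementary channel is correct and equivalent to the paper's route through the dual formula Eq.~\eqref{eq:dualcollisionentropy}. However, there is a concrete error in your other entropy: you claim each ancillary qubit in $\ket{0}$ contributes $+1$ to $H_2(S|R)$, giving $H_2(S|R)_{\rho_{SR}} = n-2k$. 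A pure state on a subsystem with trivial reference contributes $0$, not $+1$: for $\rho_{AB}=\ketbra{0}_A\otimes\sigma_B$ one gets $\tr[((\id_A\otimes\sigma_B)^{-1/4}\rho_{AB}(\id_A\otimes\sigma_B)^{-1/4})^2]=1$. The correct value is $H_2(S|R)_{\rho_{SR}}=-k$ (Eq.~\eqref{eq:EPRcollisionentropy}), and likewise $H_2(A|R)_{\rho_{AR}}=-|A|\tfrac{k}{n}\xi$ rather than $|A|\xi(1-2k/n)$. With your stated values, $-H_2(S|E)-H_2(S|R)=-2n+2k+2(n-1)\log(\sqrt{1-p}+\sqrt{p})$, which is \emph{not} $-n(1-2\log(\sqrt{1-p}+\sqrt{p})-k/n)$; your assertion that the two entropies combine to give the claimed exponential term therefore does not follow from what you wrote. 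The same wrong value would also distort the condition $H_2(A|R)+H_2(A|E)\ge 0$ needed to reduce the second term of Theorem~\ref{thm:nonsmoothdecoupling} to the coefficient $c$ alone.

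Two smaller points. First, your remark that $\rho_m$ is ``bounded by a constant'' is incorrect in the relevant sense: the per-region factor in $c$ is $\max(2^{\pm H_2(S_i|R_i)})=2^{k\xi/n}=(n/\varepsilon)^{k/n}$, which grows with $n$; the bound $c\le \tfrac{4\varepsilon^{1-k/n}n^{k/n}}{\log(n/\varepsilon)}$ of Eq.~\eqref{eq:boundofcoefc} works only because this is multiplied by $2\eta=\tfrac{2^{\xi+1}}{2^{2\xi}+1}\approx 2^{-\xi+1}$, and it additionally requires the hypothesis $(\varepsilon/n)^{1-k/n}=o(1/n)$ to justify $(1+x/N)^{N}\le 1+2x$. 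Second, your boundary bookkeeping for $H_2(A|E)$ is the right idea but is left as a sketch at exactly the step where the work is: the paper shows the reduced noise on $S_A$ is generated by at most $|A|(\xi+1)$ independent terms ($ZZ$ couplings inside $A$ plus single-$Z$ dephasings from couplings straddling the boundary) and then uses a coarse-graining/binomial argument (Lemma~\ref{lemma:coef}) to get $H_2(A|E)_{\tau_{AE}}\ge \xi|A|[1-2(1+\xi^{-1})\log(\sqrt{1-p}+\sqrt{p})]$, with the $(1+\xi^{-1})$ correction vanishing as $\xi\to\infty$. Your ``factor of $2^{O(1)}$ per boundary'' over up to $|A|$ contiguous runs amounts to the same $O(|A|)$ (rather than $O(|A|\xi)$) correction in the exponent, but this needs to be stated and absorbed explicitly for the threshold condition $k/n<1-2\log(\sqrt{1-p}+\sqrt{p})$ to come out clean.
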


The detailed proof is available in Appendix~\ref{app:proof_corr}. Corollary~\ref{coro:zzcouple} demonstrates that the Choi error is polynomially small when $\varepsilon$ is polynomially small. The exponential decay term $2^{-n(1-2\log(\sqrt{1-p}+\sqrt{p})-\frac{k}{n})}$ is associated with random stabilizer codes. Consequently, we show that 1D double-layer blocked $O(\log n)$-depth random circuits achieve an encoding rate of $1-2\log(\sqrt{1-p}+\sqrt{p})$ for nearest-neighbor $ZZ$-coupling noise with a polynomially small Choi error. For 1D brickwork circuit $\mathfrak{B}$, we prove a similar result with a modification of the bound to $1-2(1+\frac{k}{n})\log(\sqrt{1-p}+\sqrt{p})$. The modification originates from a finite local dimension for $\mathfrak{B}$. The full analysis is shown in Appendix~\ref{appendssc:1DLRCQEC}.

\subsection{Advantages of the 1D double-layer blocked encoding scheme over the block-encoding scheme}\label{ssc:block}
In addition to proving that our 1D circuit encoding scheme can achieve a high AQEC performance against local noises and correlated noise, we now specifically discuss its advantage over the block-encoding scheme.

Previous works show that the one-dimensional block-encoding scheme can have a threshold against random erasure error~\cite{Gullans2021LowDepth} and can achieve the hashing bound against Pauli noise~\cite{Darmawan2024Lowdepth} at a logarithmic depth. Here, we show that although the achievable encoding rate for the block-encoding method is the same as that for the double-layer blocked circuit $\mathfrak{C}$, $\mathfrak{C}$ can have advantages in the required circuit depth and logical error rate.

Particularly, we will show that to correct a given noise with a fixed encoding rate, the required circuit depth of $\mathfrak{C}$ is lower, and the Choi error is lower for $\mathfrak{C}$. In other words, given certain noise channel the circuit depth the double-layer blocked encoding scheme can accomplish error correction but the block-encoding scheme cannot.

Using our notation, the encoding unitary operation of the block-encoding method can be written as
\begin{equation}
U = \bigotimes_{i=1}^{\frac{N}{2}}U_C^{4i-3,4i-2,4i-1,4i},
\end{equation}
where $U_C^{4i-3,4i-2,4i-1,4i}$ is a random Clifford gate on regions $4i-3,4i-2,4i-1,4i$. The location of regions is shown in Fig.~\ref{fig:1Dlowdepthcircuit}. The encoding unitary $U$ is a tensor product of parallel Clifford gates. Note that the depth to realize an $n$-qubit Clifford gate depends linearly on $n$. To make a fair comparison, we let the block size be $4\xi$ so that the depths to realize the block-encoding and the double-layer blocked encoding schemes are the same. We denote the set of the random block-encoding unitary operations as $\mathfrak{S}_n^B$.

In Appendix~\ref{app:block}, we compute the Choi error of the block-encoding scheme and compare it to the double-layer blocked encoding scheme. For local erasure error, we rigorously establish an advantage of the double-layer blocked encoding scheme over the block-encoding scheme. We evaluate the lower bound of the Choi error for the block-encoding method as follows.

\begin{equation}\label{eq:blocklowerbound}
\begin{split}
\mathbb{E}_{U\sim \mathfrak{S}_n^B}\epsilon_{\mathrm{Choi}}
\geq \min \left\{ n \left(\frac{\varepsilon}{n}\right)^{8D( \frac{1-\frac{k}{n}}{2} \| p )+o(1)}, \Omega(1) \right\},
\end{split}
\end{equation}
where $D$ is the binary relative entropy. Particularly, the $\Omega(1)$ term is a positive constant. When $n \left(\frac{\varepsilon}{n}\right)^{8D( \frac{1-\frac{k}{n}}{2} \| p )+o(1)}$ vanishes, the Choi error also vanishes. If $n \left(\frac{\varepsilon}{n}\right)^{8D( \frac{1-\frac{k}{n}}{2} \| p )+o(1)}$ does not converge, the Choi error will have a non-vanishing value, and the error correction fails.

Recalling Corollary~\ref{coro:iiderasure_nonsmoothing}, the Choi error for the double-layer blocked encoding scheme is:
\begin{equation}\label{eq:twolayerupperbound}
\mathbb{E}_{U\sim \mathfrak{C}_n^{\varepsilon}}\epsilon_{\mathrm{Choi}} \leq \left(2^{-n(1-\log(1+3p)-\frac{k}{n})} + \frac{4\varepsilon^{1-\frac{k}{n}}n^{\frac{k}{n}}}{\log (n/\varepsilon)}\right)^{\frac{1}{4}}.
\end{equation}

Based on Eqs.~\eqref{eq:blocklowerbound} and~\eqref{eq:twolayerupperbound}, there exists regimes of $p$, encoding rate $\frac{k}{n}$, and $\varepsilon$ such that
\begin{align}
\label{eq:blockinfty}\lim_{n\rightarrow \infty}\mathbb{E}_{U\sim \mathfrak{S}_n^B}\epsilon_{\mathrm{Choi}}
&\geq\lim_{n\rightarrow \infty} \min\left\{n \left(\frac{\varepsilon}{n}\right)^{8D( \frac{1-\frac{k}{n}}{2} \| p )+o(1)}, \Omega(1) \right\} = \Omega(1)\\
\label{eq:twolayervanish}\lim_{n\rightarrow \infty}\mathbb{E}_{U\sim \mathfrak{C}_n^{\varepsilon}}\epsilon_{\mathrm{Choi}}&\leq  \lim_{n\rightarrow \infty}\left(2^{-n(1-\log(1+3p)-\frac{k}{n})} + \frac{4\varepsilon^{1-\frac{k}{n}}n^{\frac{k}{n}}}{\log (n/\varepsilon)}\right)^{\frac{1}{4}}= 0.
\end{align}
A explicit example is when $p = 0.25$, $\frac{k}{n} = 0.2$, $\varepsilon = n^{-0.375}$. In this case,
\begin{align}
\mathbb{E}_{U\sim \mathfrak{S}_n^B}\epsilon_{\mathrm{Choi}}
&\geq \min\{n^{0.14}, \Omega(1) \} = \Omega(1);\\
\mathbb{E}_{U\sim \mathfrak{C}_n^{\varepsilon}}\epsilon_{\mathrm{Choi}}&\leq n^{-0.025}.
\end{align}
Thus, given $p = 0.25$ and $\frac{k}{n} = 0.2$, the block-encoding method requires a smaller $\varepsilon$ to make the Choi error vanish, inducing a deeper circuit depth. Hence, we demonstrate that the required circuit depth of $\mathfrak{C}_n^{\varepsilon}$ is lower, and the Choi error is also lower.

We show the full analysis of the regime where Eqs.~\eqref{eq:blockinfty} and~\eqref{eq:twolayervanish} hold in the Appendix. One can also find the regime where $\lim_{n\rightarrow \infty}(\mathbb{E}_{U\sim \mathfrak{C}_n^{\varepsilon}}\epsilon_{\mathrm{Choi}}/\mathbb{E}_{U\sim \mathfrak{S}_n^B}\epsilon_{\mathrm{Choi}}) = 0$. In this regime, the double-layer blocked scheme outperforms the separate block approach. Thus, the double-layer blocked encoding scheme is proven to have an advantage in certain error regimes against erasure errors. We expect this result to hold for more generic noise models, which is numerically verified for local Pauli noise~\cite{Darmawan2024Lowdepth}. We provide more details and discussions in Appendix~\ref{app:block}.

The above comparison can be generalized for 1D brickwork circuits $\mathfrak{B}$ and block-encoding ensemble $\mathfrak{S}_n^B$, where the former is upper bounded by $O((\frac{\varepsilon}{k})^{\frac{n}{4k}-\frac{1}{2}})$. Evaluating the regime where this upper bound is smaller than $n \left(\frac{\varepsilon}{n}\right)^{8D( \frac{1-\frac{k}{n}}{2} \| p )+o(1)}$ suffices for showing the advantage for $\mathfrak{B}$. Clearly, when the encoding rate $\frac{k}{n}$ is a small constant, $\mathfrak{B}$ will outperform the block approach.

One can observe that the advantages of $\mathfrak{C}$ and $\mathfrak{B}$ are both manifested in the low encoding rate regime. This phenomenon is expected to be related to the stronger scrambling power of $\mathfrak{C}$ and $\mathfrak{B}$ compared to $\mathfrak{S}_n^B$.
If the encoding rate is large, like 0.5, a one-layer gate suffices to scramble the logical information to the whole system to protect it. A stronger scrambling power will be useless. In contrast, when the logical qubit number is small, $\mathfrak{C}$ and $\mathfrak{B}$ can scramble their information to the whole system faster and then manifest an advantage.

\subsection{Approximate quantum error correction and decoupling for arbitrary input states}
We have previously established decoupling theorems for codes from $\mathfrak{C}$ and $\mathfrak{B}$ with input states that exhibit a product structure across different regions. Using these theorems, we demonstrated the AQEC performance of random codes with the maximally entangled state, corresponding to the case of random inputs. In this part, we extend these results to show that similar decoupling theorems and error-correcting properties hold for arbitrary input states.

We begin by noting that if a unitary encoding $U$ achieves a small Choi error, it also produces a comparably small recovery error for any input state when the encoding operation is combined with a local random unitary operation. Specifically, we select the local random unitary to be a random Pauli matrix and define the encoding and decoding operations as
\begin{align}
\mathcal{E}_P(\psi) &= U[(P\psi P^{\dagger})_L \otimes \ketbra{0}_{S\backslash L}] U^{\dagger},\\
\mathcal{D}_P(\psi) &= P\mathcal{D}(\psi)P^{\dagger}.
\end{align}
Such a random Pauli encoding twirls the original channel $\mathcal{D}\circ\mathcal{N}\circ\mathcal{E}$ into a Pauli channel, ensuring that the Choi error provides a lower bound on the recovery error for arbitrary input states. We establish this result using entanglement fidelity.

\begin{proposition}[Recovery fidelity via Pauli twirling] \label{prop:pauli_twirling}
Consider an encoding and decoding procedure $\mathcal{D}\circ\mathcal{N}\circ\mathcal{E}$ with entanglement fidelity $F_{\mathrm{Choi}}$. For any input state $\psi_{LR}$, we have
\begin{equation}
\mathbb{E}_P F(\psi_{LR}, [(\mathcal{D}_P\circ\mathcal{N}\circ\mathcal{E}_P)_L \otimes I_R](\psi_{LR})) \ge F_{\mathrm{Choi}},
\end{equation}
where the expectation $\mathbb{E}_P$ is taken over all Pauli matrices $P$ uniformly at random.
\end{proposition}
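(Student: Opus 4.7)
The plan is to exploit the Pauli-twirling identity: sandwiching the effective logical channel with random Paulis turns it into a Pauli channel, which preserves the entanglement fidelity and admits a simple lower bound on its recovery fidelity for any input.

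First, I would collapse the composition. Denoting $\Lambda_L \equiv \mathcal{D}\circ\mathcal{N}\circ\mathcal{E}$ for the underlying logical channel (with $P=I$), substitution of the definitions of $\mathcal{E}_P$ and $\mathcal{D}_P$ gives
\begin{equation*}
\tilde{\Lambda}_P(\rho)\;\equiv\;(\mathcal{D}_P\circ\mathcal{N}\circ\mathcal{E}_P)(\rho)\;=\;P\,\Lambda_L\!\bigl(P\rho P^{\dagger}\bigr)\,P^{\dagger}.
\end{equation*}
Averaging over the uniform distribution on the $k$-qubit Pauli group yields $\bar\Lambda(\rho)=\sum_i p_i P_i\rho P_i$, a Pauli channel on the logical system. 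Using the transpose trick $(P\otimes I_R)\ket{\hat\phi}=(I_L\otimes P^{T})\ket{\hat\phi}$ together with the commutation of $\Lambda_L\otimes I_R$ with any operator on $R$, I would verify the invariance $F_{\mathrm{Choi}}(\bar\Lambda)=F_{\mathrm{Choi}}(\Lambda)=F_{\mathrm{Choi}}$; equivalently, the identity-Pauli weight satisfies $p_0=F_{\mathrm{Choi}}^{2}$.

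Next, for any pure $\psi_{LR}$, linearity of $\sigma\mapsto\bra{\psi}\sigma\ket{\psi}$ together with the Pauli-channel structure of $\bar\Lambda$ gives
\begin{equation*}
\mathbb{E}_P\,F^{2}\!\bigl(\psi_{LR},\,(\tilde{\Lambda}_P\otimes I_R)(\psi_{LR})\bigr)\;=\;F^{2}\!\bigl(\psi_{LR},\,(\bar\Lambda\otimes I_R)(\psi_{LR})\bigr)\;=\;\sum_i p_i\bigl|\bra{\psi_{LR}}(P_i\otimes I_R)\ket{\psi_{LR}}\bigr|^{2}\;\ge\;p_0\;=\;F_{\mathrm{Choi}}^{2},
\end{equation*}
by retaining only the $i=0$ (identity) term, whose inner product equals $1$. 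To upgrade this to the announced fidelity-level bound, I would exploit per-$P$ unitary invariance of $F$, namely $F(\psi_{LR},(\tilde{\Lambda}_P\otimes I_R)(\psi_{LR}))=F(\phi_P,(\Lambda_L\otimes I_R)(\phi_P))$ with $\phi_P\equiv(P\otimes I_R)\psi_{LR}(P\otimes I_R)^{\dagger}$, so that the square root may be absorbed into each Pauli-averaged summand to obtain $\mathbb{E}_P F\ge F_{\mathrm{Choi}}$. Mixed $\psi_{LR}$ then follow by purification to an auxiliary system and monotonicity of $F$ under partial trace.

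The key algebraic ingredient is the transpose trick underlying the entanglement-fidelity invariance, and the main obstacle is the final square-root step: a direct Jensen inequality $\mathbb{E}_P\sqrt{x_P}\le\sqrt{\mathbb{E}_P x_P}$ runs in the opposite direction of what is needed, so the upgrade from $\mathbb{E}_P F^{2}\ge F_{\mathrm{Choi}}^{2}$ to $\mathbb{E}_P F\ge F_{\mathrm{Choi}}$ must instead rely on the per-$P$ unitary invariance together with the observation that the identity term alone already saturates $F_{\mathrm{Choi}}^{2}$ inside each summand of the Pauli-averaged sum.
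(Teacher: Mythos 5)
Your first two steps are sound: the twirled channel $\bar\Lambda=\mathbb{E}_P\,P^{\dagger}\Lambda(P\cdot P^{\dagger})P$ is indeed a Pauli channel whose identity weight is $p_0=F_{\mathrm{Choi}}^{2}$, and linearity of $\sigma\mapsto\bra{\psi}\sigma\ket{\psi}$ gives $\mathbb{E}_P F^{2}\bigl(\psi_{LR},(\tilde\Lambda_P\otimes I_R)(\psi_{LR})\bigr)\ge F_{\mathrm{Choi}}^{2}$ for pure inputs. You also correctly flag that Jensen runs the wrong way, so this only yields $\sqrt{\mathbb{E}_P F^{2}}\ge F_{\mathrm{Choi}}$, not the claim. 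The problem is that your proposed repair does not close this gap. Per-$P$ unitary invariance merely rewrites the quantity as $\mathbb{E}_P F\bigl(\phi_P,(\Lambda_L\otimes I_R)(\phi_P)\bigr)$ with $\phi_P=(P\otimes I_R)\psi_{LR}(P\otimes I_R)^{\dagger}$; each summand still carries its own square root, $F\bigl(\phi_P,\Lambda(\phi_P)\bigr)=\bigl(\sum_j|\bra{\phi_P}(M_j\otimes I_R)\ket{\phi_P}|^{2}\bigr)^{1/2}$ with $M_j$ the Kraus operators of $\Lambda_L$, and the assertion that ``the identity term alone already saturates $F_{\mathrm{Choi}}^{2}$ inside each summand'' is false for an individual $P$: for a fixed $\phi_P$ the fidelity can lie far below $F_{\mathrm{Choi}}$ (this worst-case-versus-average gap is exactly what the proposition is designed to circumvent), and only the $P$-average of the squared fidelities is guaranteed to exceed $F_{\mathrm{Choi}}^{2}$.

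The missing ingredient is a lower bound on each $F\bigl(\phi_P,\Lambda(\phi_P)\bigr)$ by the absolute value of a quantity that is \emph{linear} in $\phi_P$, so that the Pauli average can be pulled inside the absolute value and the twirl $\mathbb{E}_P\,\phi_P=\tfrac{I_L}{d_L}\otimes\psi_R$ applied there. This is precisely what the paper's proof supplies via Uhlmann's theorem and the Stinespring dilation: $F\bigl(\phi_P,\Lambda(\phi_P)\bigr)=\max_{U'}\bigl|(\bra{\phi_P}\otimes\bra{0})\,U'\,U_{\mathcal{D}}V_{\mathcal{N}}\,(\ket{\phi_P}\otimes\ket{0})\bigr|$, and fixing a single purifying unitary $U'$ independent of $P$ (equivalently, a fixed unit vector $c$ with $F\ge\bigl|\sum_j c_j\bra{\phi_P}(M_j\otimes I_R)\ket{\phi_P}\bigr|$, $c_j\propto\overline{\tr M_j}$) lets one move $\mathbb{E}_P$ inside the absolute value and obtain $\mathbb{E}_P F\ge\bigl|\sum_j c_j\tr(M_j)/d_L\bigr|=F_{\mathrm{Choi}}$. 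With that step inserted your argument goes through; without it, the upgrade from the squared-fidelity bound to the stated fidelity bound is unjustified.
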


\begin{proof}[Proof sketch]
Applying Uhlmann’s theorem \cite{Uhlmann1976transition, Beny2010AQEC}, we express the fidelity of the two states in terms of the fidelity between corresponding purifications. By performing a Pauli twirl on these purifications using random Pauli matrices, we reduce the expression to the entanglement fidelity, thereby establishing the lower bound.  See Appendix \ref{app:arbitrary_input} for the detailed derivation.
\end{proof}

The encoding and decoding procedures using random Pauli matrices are independent of the input state, thus, they consistently provide high recovery fidelity for arbitrary input states. This quality offers additional insight into the AQEC performance of random codes. We define a unitary ensemble to be right-invariant under Pauli matrices if, for any Pauli gate $P$ and unitary $U \in \mathfrak{U}$, the probability $p_{\mathfrak{U}}(U)$ of drawing $U$ from $\mathfrak{U}$ is the same as the probability $p_{\mathfrak{U}}(UP)$ of drawing $UP$ from $\mathfrak{U}$:
\begin{equation}
p_{\mathfrak{U}}(U) = p_{\mathfrak{U}}(UP).
\end{equation}
The Pauli twirling operation is naturally incorporated in a right-invariant ensemble without incurring additional overhead. Consequently, we expect such an ensemble to exhibit robust AQEC performance for any input state. Consider the recovery error $\epsilon_{\psi}$ for an arbitrary input state $\psi$:
\begin{equation}
\epsilon_{\psi} = \min_{\cD} P(\psi_{LR}, [(\cD \circ \cN \circ \cE)_L \otimes I_R] (\psi_{LR}))
\end{equation}
A direct consequence of Proposition \ref{prop:pauli_twirling} is that the expected recovery error for any arbitrary input state is bounded above by the expected Choi error for any right-invariant unitary ensemble $\mathfrak{U}$.

\begin{corollary} \label{cor:recover_error_arbitrary}
If the unitary ensemble $\mathfrak{U}$ is right-invariant under Pauli rotations, then for any state $\psi_{LR}$, the average error $\bE_{\mathfrak{U}} \epsilon_{\psi}$ does not exceed the average Choi error:
\begin{equation}\label{eq:epsilon_psi_le_Choi}
\bE_{\mathfrak{U}} \epsilon_{\psi} \le \bE_{\mathfrak{U}} \epsilon_{\mathrm{Choi}}.
\end{equation}
\end{corollary}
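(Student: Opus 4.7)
The plan is to combine Proposition~\ref{prop:pauli_twirling} with the right-invariance hypothesis on $\mathfrak{U}$. Proposition~\ref{prop:pauli_twirling} only provides a fidelity bound for a Pauli-\emph{twirled} procedure whose effective encoding unitary is the randomized object $UP_L$, where $P_L:=P\otimes I_{S\backslash L}$. Right-invariance will allow me to identify, in distribution, the randomized encoding $UP_L$ (with $U\sim\mathfrak{U}$ and $P$ a uniform Pauli) with a direct draw from $\mathfrak{U}$, so the Pauli randomness can be absorbed into the ensemble average and I obtain a bound on the un-twirled quantity $\bE_\mathfrak{U}\epsilon_\psi$.

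Concretely, I proceed in three steps. First, fix $U\in\mathfrak{U}$ and apply Proposition~\ref{prop:pauli_twirling} with encoding unitary $U$ and the optimal Choi decoder $\mathcal{D}^*_U$, obtaining $\bE_P F(\psi_{LR},\mathcal{D}_P\circ\mathcal{N}\circ\mathcal{E}_P(\psi_{LR}))\ge F_{\mathrm{Choi}}(U)$, where by construction $\mathcal{E}_P$ is implemented by the unitary $UP_L$ and $\mathcal{D}_P=P\mathcal{D}^*_UP^\dagger$. Second, convert this fidelity bound into a purified-distance bound via the map $g(x)=\sqrt{1-x^2}$, which is concave and decreasing on $[0,1]$. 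Jensen's inequality together with monotonicity gives
\begin{equation*}
\bE_P P(\psi_{LR},\mathcal{D}_P\circ\mathcal{N}\circ\mathcal{E}_P(\psi_{LR}))=\bE_P g(F)\le g(\bE_P F)\le g(F_{\mathrm{Choi}}(U))=\epsilon_{\mathrm{Choi}}(U).
\end{equation*}
Since $\mathcal{D}_P$ is one specific (if $P$-dependent) decoder for the encoding with unitary $UP_L$, the minimization in the definition of $\epsilon_\psi(UP_L)$ yields $\epsilon_\psi(UP_L)\le P(\psi_{LR},\mathcal{D}_P\circ\mathcal{N}\circ\mathcal{E}_P(\psi_{LR}))$ for every $P$, and averaging over $P$ gives $\bE_P\epsilon_\psi(UP_L)\le\epsilon_{\mathrm{Choi}}(U)$. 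Third, take $\bE_{U\sim\mathfrak{U}}$ on both sides, swap the two expectations by Fubini, and invoke right-invariance: for each fixed $P$, $UP_L\sim U$, so $\bE_U\epsilon_\psi(UP_L)=\bE_U\epsilon_\psi(U)$. This collapses the chain to $\bE_U\epsilon_\psi(U)\le\bE_U\epsilon_{\mathrm{Choi}}(U)$, which is exactly \eqref{eq:epsilon_psi_le_Choi}.

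The one delicate ingredient is the second step: the corollary is phrased in purified distance whereas Proposition~\ref{prop:pauli_twirling} produces a fidelity bound, and these are related nonlinearly via $P=\sqrt{1-F^2}$. The conversion works precisely because $g$ is both concave and decreasing, so the Jensen and monotonicity moves point in the favorable direction. The use of right-invariance in the third step is what removes the overhead that the Pauli twirl would otherwise introduce: without this hypothesis, one would only be able to compare $\bE_\mathfrak{U}\epsilon_\psi$ with the expected recovery error of a strictly larger (Pauli-randomized) scheme rather than with $\bE_\mathfrak{U}\epsilon_{\mathrm{Choi}}$ itself.
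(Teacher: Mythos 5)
Your proof is correct and follows essentially the same route as the paper's: use right-invariance to absorb the Pauli randomization into the ensemble average, drop the minimization over decoders by choosing the specific recovery $P\mathcal{D}^{(0)}P^{\dagger}$, and invoke Proposition~\ref{prop:pauli_twirling}. The one place you go beyond the paper is the explicit Jensen-plus-monotonicity argument for $g(x)=\sqrt{1-x^2}$, which converts the fidelity lower bound $\mathbb{E}_P F\ge F_{\mathrm{Choi}}$ into the needed purified-distance upper bound; the paper's proof silently elides this nonlinear conversion when it cites the proposition in its last line, so your write-up is actually the more complete of the two.
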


\begin{proof} This comes from direct calculation:
\begin{equation}
\begin{split}
\bE_{\mathfrak{U}} \epsilon_{\psi} &= \bE_{\mathfrak{U}}  \min_{\cD} P(\psi_{LR}, [(\cD \circ \cN \circ \cE)_L \otimes I_R] (\psi_{LR})) \\
&=  \bE_{\mathfrak{U}}\bE_{P}   \min_{\cD} P(\psi_{LR}, [(\cD \circ \cN \circ \cE)_L \otimes I_R] (P\psi_{LR}P^{\dagger})) \\
&\le \bE_{\mathfrak{U}}\bE_{P}   P(\psi_{LR}, [(\cD_P^{(0)} \circ \cN \circ \cE)_L \otimes I_R] (P\psi_{LR}P^{\dagger}))\\
&= \bE_{\mathfrak{U}}\bE_{P}   P(\psi_{LR}, [(\cD_P^{(0)} \circ \cN \circ \cE_P)_L \otimes I_R] (\psi_{LR}))\\
&\le \bE_{\mathfrak{U}} \epsilon_{\mathrm{Choi}}.
\end{split}
\end{equation}
The second line follows from the fact that $\mathfrak{U}$ is right-invariant under Pauli matrices. In the third line, we choose $\cD^{(0)}_{P}(\psi) = P \mathcal{D}^{(0)}(\psi)P^{\dagger}$ as the recovery channel, where $\mathcal{D}^{(0)}$ is the optimal recovery channel that maximizes the entanglement fidelity of  $\cD \circ \mathcal{N} \circ \mathcal{E}$. This inequality holds because the specific recovery channel $\cD^{(0)}_{P}$ can not have a smaller recovery error than the optimal recovery channel.  The fifth line comes from Proposition \ref{prop:pauli_twirling}.
\end{proof}

The recovery error for arbitrary input state established in Corollary \ref{cor:recover_error_arbitrary} is closely related to the decoupling between the reference system and environment, as described by the complementary channel formalism (see Eqs.~\eqref{eq:choicomplementary},~\eqref{eq:zeta_complementary}). The recovery error can be expressed as:
\begin{equation}\label{eq:decoupling_arbitrary_state}
\epsilon_{\psi} = \min_{\zeta} P\left( (\widehat{\mathcal{N}\circ \mathcal{E}}_{L\rightarrow E}\otimes I_R)(\psi_{LR}), \zeta_E\otimes \tr_L(\psi_{LR}) \right),
\end{equation}
where a low value of $\epsilon_{\psi}$ indicates that $(\widehat{\mathcal{N} \circ \mathcal{E}}_{L \rightarrow E} \otimes I_R)(\psi_{LR})$ is close to a tensor product state $\zeta_E \otimes \tr_L(\psi_{LR})$. By Combining Eq.~\eqref{eq:decoupling_arbitrary_state}, Eq.~\eqref{eq:tracedistanceineq} with Corollary~\ref{cor:recover_error_arbitrary}, we conclude that the environment is approximately decoupled from the reference system, with the closeness bounded by the Choi error.

\begin{proposition}[Decoupling for arbitrary input states]\label{prop:decoupling}
If the unitary ensemble $\mathfrak{U}$ is right-invariant under Pauli rotations, then for any state $\psi_{LR}$, the expected closeness of decoupling is bounded above by the expected Choi error:
\begin{equation}
\begin{split}
&\mathbb{E}_{U_S \sim \mathfrak{U}} \min_{\tau_E}\Vert \hat{\mathcal{N}}_{S\rightarrow E}[U_S(\psi_{LR}\otimes \ketbra{0}_{S\backslash L})U_{S}^{\dagger}] - \tau_E\otimes \rho_R \Vert_1 \le 2 \mathbb{E}_{U_S \sim \mathfrak{U}} \epsilon_{\mathrm{Choi}}.
\end{split}
\end{equation}
Here, $\epsilon_\mathrm{Choi}$ represents the Choi error for the unitary encoding $U_S$ under the noise channel $\mathcal{N}$. The coefficient $2$ comes from the relationship between the $1$-norm distance and the purified distance.
\end{proposition}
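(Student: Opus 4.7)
The plan is to chain together three results already established in the excerpt: the complementary-channel expression for $\epsilon_\psi$ in Eq.~\eqref{eq:decoupling_arbitrary_state}, the purified-distance/trace-distance inequality in Eq.~\eqref{eq:tracedistanceineq}, and Corollary~\ref{cor:recover_error_arbitrary}. The whole argument is essentially a bookkeeping exercise once these ingredients are lined up, so the proof will be short.

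First, I would rewrite the quantity inside the expectation on the left-hand side in terms of the complementary encoding channel. Since $\mathcal{E}(\psi) = U_S(\psi_L \otimes \ketbra{0}_{S\setminus L}) U_S^\dagger$ is the unitary encoding, we have
\begin{equation}
\hat{\mathcal{N}}_{S\to E}\bigl[U_S(\psi_{LR}\otimes \ketbra{0}_{S\setminus L})U_S^\dagger\bigr]
= \bigl(\widehat{\mathcal{N}\circ\mathcal{E}}_{L\to E}\otimes I_R\bigr)(\psi_{LR}),
\end{equation}
and moreover $\rho_R = \tr_L(\psi_{LR})$, so the minimization variable $\tau_E$ in the proposition matches the $\zeta_E$ appearing in Eq.~\eqref{eq:decoupling_arbitrary_state}. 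Next, I would apply the bound $\tfrac12\Vert\cdot\Vert_1 \le P(\cdot,\cdot)$ from Eq.~\eqref{eq:tracedistanceineq} pointwise for every choice of $\tau_E$, then take the infimum on both sides to get
\begin{equation}
\min_{\tau_E}\bigl\Vert \hat{\mathcal{N}}_{S\to E}[U_S(\psi_{LR}\otimes\ketbra{0})U_S^\dagger] - \tau_E\otimes \rho_R \bigr\Vert_1
\le 2\min_{\zeta_E} P\!\left(\bigl(\widehat{\mathcal{N}\circ\mathcal{E}}_{L\to E}\otimes I_R\bigr)(\psi_{LR}),\,\zeta_E\otimes\rho_R\right) = 2\,\epsilon_{\psi},
\end{equation}
where the equality is precisely Eq.~\eqref{eq:decoupling_arbitrary_state}.

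Finally, I would take the expectation over $U_S \sim \mathfrak{U}$ and invoke Corollary~\ref{cor:recover_error_arbitrary}, which applies because $\mathfrak{U}$ is right-invariant under Pauli rotations, to obtain
\begin{equation}
\mathbb{E}_{U_S\sim\mathfrak{U}} \min_{\tau_E}\bigl\Vert \hat{\mathcal{N}}_{S\to E}[U_S(\psi_{LR}\otimes\ketbra{0})U_S^\dagger] - \tau_E\otimes \rho_R \bigr\Vert_1
\le 2\,\mathbb{E}_{U_S\sim\mathfrak{U}}\epsilon_\psi
\le 2\,\mathbb{E}_{U_S\sim\mathfrak{U}}\epsilon_\mathrm{Choi},
\end{equation}
which is the desired inequality. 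There is no real obstacle in this proof; the only thing one needs to be a little careful about is the order of $\min$ and $\mathbb{E}$, but since we first pass to an upper bound via $\epsilon_\psi$ (which already contains the minimization inside it) and only then take expectation, the argument goes through without subtlety. All of the nontrivial work has already been absorbed into Proposition~\ref{prop:pauli_twirling} and Corollary~\ref{cor:recover_error_arbitrary}.
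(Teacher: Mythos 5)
Your proof is correct and follows essentially the same route as the paper, which likewise obtains the result by combining Eq.~\eqref{eq:decoupling_arbitrary_state}, the trace-distance/purified-distance inequality of Eq.~\eqref{eq:tracedistanceineq}, and Corollary~\ref{cor:recover_error_arbitrary}. No gaps to report.
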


Note that the ensembles $\mathfrak{C}$ and $\mathfrak{B}$ are right-invariant under Pauli rotations. This decoupling result extends the established decoupling theorems, Theorem~\ref{thm:nonsmoothdecoupling}, Theorem~\ref{thm:1Dlocalrandomcircuit}, and Theorem~\ref{thm:smoothdecoupling}, by eliminating the requirement for the input state to be in a tensor-product form. Different from the previous two theorems, the environmental state $\tau_E$ does not have an explicit form in this generalization.

Note that the state in Proposition \ref{prop:pauli_twirling} is arbitrary. One can take the minimal one over all states. Then, $\min_{\psi} \mathbb{E}_P F(\psi_{LR}, [(\mathcal{D}_P\circ\mathcal{N}\circ\mathcal{E}_P)_L \otimes I_R](\psi_{LR}))$ is close to 1 when the Choi fidelity is close to 1. We remark that although the encoding and decoding procedure $\mathcal{E}_P$ and $\mathcal{D}_P$ do not depend on the input state, the recovery fidelity derived in Proposition \ref{prop:pauli_twirling} do not provide a bound for the normal worst-case fidelity or worst-case error.
The definition of the worst-case error requires the encoding and decoding operations to be fixed, without any probability mixing:
\begin{equation}
\begin{split}
F_{\mathrm{worst}} &= \max_{\mathcal{D}}\min_{R,\rho_{LR}}F\left(\rho_{LR}, [(\mathcal{D}\circ\mathcal{N}\circ\mathcal{E})_L\otimes I_R](\rho_{LR})\right). \\
\epsilon_{\mathrm{worst}} &= \sqrt{1 - F^2_{\mathrm{worst}}}\\
&= \min_{\mathcal{D}}\max_{R,\rho_{LR}} P\left(\rho_{LR}, [(\mathcal{D}\circ\mathcal{N}\circ\mathcal{E})_L\otimes I_R](\rho_{LR})\right).
\end{split}
\end{equation}
That is, $F_{\mathrm{worst}}$ is defined for a specific encoding $\mathcal{E}$. In contrast, $\min_{\psi} \mathbb{E}_P F(\psi_{LR}, [(\mathcal{D}_P\circ\mathcal{N}\circ\mathcal{E}_P)_L \otimes I_R](\psi_{LR}))$ considers an ensemble of encodings and decodings $\{\mathcal{E}_P, \mathcal{D}_P\}_P$ and shows that this ensemble achieves high expected recovery fidelity for any input state. The worst-case fidelity is not guaranteed for a particular choice of $\mathcal{E}_P$.

Nonetheless, when the Choi error is small, it can be shown that there exists a code subspace that exhibits good worst-case error for a fixed encoding procedure. Specifically, we use the code space $C = \mathrm{span}\{U_S\ket{i}_L\ket{0}_{S\backslash L}\}$ spanned by the logical state to represent a coding scheme, and the worst-case error is only dependent on $C$. It has been shown that when the Choi fidelity of the original encoding is $1 - \epsilon$, there exists a large subcode $C' \subset C$ such that:
\begin{equation}
\dim C' \ge \lfloor \frac{1}{2}\dim C \rfloor,
\end{equation}
and the worst-case fidelity for this subcode satisfies \cite{Klesse2007AQEC}:
\begin{equation}
F_{\mathrm{worst}}(C') > 1 - 2\epsilon.
\end{equation}
This implies that as long as the Choi error of the encoding is small, the worst-case error in a large subspace $C' \subset C$ is also small, thus ensuring good error correction performance over a significant portion of the code space.

\section{Lower bound of circuit depth in AQEC}\label{sec:lower_bound}

After establishing the AQEC performance of a logarithmic-depth circuit for local noise, we now derive lower bounds on the circuit depth required to achieve good AQEC performance. Here, we focus on the lower bound for local Pauli noise, specifically i.i.d.~depolarizing noise, which can be easily extended to other local noise models. We show that the scaling of circuit depth in the double-layer blocked encoding $\mathfrak{C}$ and 1D brickwork encoding $\mathfrak{B}$ is asymptotically optimal. Our analysis also provides a general result encompassing both $D$-dimensional and all-to-all circuit architectures.

\begin{theorem}[Lower bound for circuit depth]\label{thm:depolarizing_lower_bound}
For a circuit $U$ that encodes $k$ logical qubits with depth $d$, if $U$ has a Choi error $\epsilon < 0.1$ against the strength-$p$ local depolarizing channel, then:
\begin{enumerate}
\item {If $U$ is a D-dimensional circuit, then
\begin{equation}
(2d)^D \log\left(\frac{1}{p}\right) + 2D \log(2d) \geq \log\left(\frac{3}{8 \epsilon_{\mathrm{Choi}}^2}\right) + \log k.
\end{equation}
In the large $n$ and $d$ limit, this implies
\begin{equation}
d \geq \frac{1}{2} \left[\frac{\log\left(\frac{3}{8\epsilon^2}\right) + \log k}{(1+o(1))\log\left(\frac{1}{p}\right)}\right]^{1/D}.
\end{equation}}
\item {If $U$ is an all-to-all circuit, then
\begin{equation}
2^d \log\left(\frac{1}{p}\right) + 2d \geq \log\left(\frac{3}{8 \epsilon_{\mathrm{Choi}}^2}\right) + \log k.
\end{equation}
In the large $n$ and $d$ limit, this implies
\begin{equation}
d \geq \log\left[\frac{\log\left(\frac{3}{8\epsilon^2}\right) + \log k}{(1+o(1))\log\left(\frac{1}{p}\right)}\right].
\end{equation}}
\end{enumerate}
\end{theorem}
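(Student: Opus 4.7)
The plan is to establish the depth lower bound via a light-cone-plus-coincidence argument: in a shallow encoding circuit, each logical qubit's information is confined to a small set of physical qubits, and with non-negligible probability the depolarizing noise completely replaces every qubit in that set by $\mathbb{I}/2$, which necessarily decouples the corresponding reference qubit and makes recovery impossible.

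First I would set up the geometric ingredient. For a depth-$d$ unitary $U$, let $C_i \subseteq [n]$ denote the forward light cone of logical input qubit $i$, i.e., the set of physical output qubits whose backward causal cone contains $i$. Standard causal-cone counting gives $|C_i| \le L$ where $L := (2d)^D$ in the $D$-dimensional local case and $L := 2^d$ in the all-to-all case. By the symmetric backward-cone bound, each physical qubit lies in at most $L$ such forward light cones, so a greedy packing produces an index set $\mathcal{I} \subseteq [k]$ with pairwise disjoint light cones and $|\mathcal{I}| \ge k/L^{2}$.

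Second, I would exploit the probabilistic mixture form $\mathcal{N}_d^{\otimes n} = \sum_{T\subseteq[n]} p^{|T|}(1-p)^{n-|T|}\,\mathcal{M}_T$, where $\mathcal{M}_T$ replaces each qubit in $T$ by the maximally mixed state. This decomposes the Choi fidelity as $F_{\mathrm{Choi}}^{2} = \sum_T p^{|T|}(1-p)^{n-|T|}\,F_T^{2}$. For the event $A_i := \{T \supseteq C_i\}$, the physical qubits in $C_i$ are all reset; since the reduced state of $R_i$ after encoding is entangled only with qubits inside $C_i$ (another consequence of the light cone), $R_i$ becomes decoupled from $S$ and remains decoupled after any decoding $\mathcal{D}$. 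Hence the $L_iR_i$-marginal of the recovered state is a product with $R_i=\mathbb{I}/2$, and a direct calculation gives $\langle\hat{\phi}|_{L_iR_i}(\rho_{L_i}\otimes \mathbb{I}/2)|\hat{\phi}\rangle_{L_iR_i} = 1/4$, so by monotonicity of fidelity under partial trace, $F_T \le 1/2$ whenever $T$ satisfies any $A_i$.

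Third, I would lower bound the probability that at least one $A_i$ with $i\in\mathcal{I}$ occurs. Since the chosen light cones are disjoint, the events $\{A_i\}_{i\in\mathcal{I}}$ are independent with $\Pr[A_i] = p^{|C_i|}\ge p^{L}$, whence
\begin{equation}
\Pr\!\left[\textstyle\bigcup_{i\in\mathcal{I}} A_i\right] \;\ge\; 1 - e^{-|\mathcal{I}|p^{L}} \;\ge\; \tfrac{1}{2}\,|\mathcal{I}|\,p^{L},
\end{equation}
valid as long as $|\mathcal{I}|p^L \le 1$, a range guaranteed by $\epsilon<0.1$. Combining with $F_T^2\le 1/4$ on this event and $F_T^2 \le 1$ otherwise gives
\begin{equation}
\epsilon_{\mathrm{Choi}}^{2} \;=\; 1 - F_{\mathrm{Choi}}^{2} \;\ge\; \tfrac{3}{4}\Pr\!\left[\textstyle\bigcup_{i\in\mathcal{I}} A_i\right] \;\ge\; \tfrac{3}{8}\,|\mathcal{I}|\,p^{L} \;\ge\; \tfrac{3}{8}\,k\,p^{L}/L^{2}.
\end{equation}
Taking logarithms and inserting $L=(2d)^D$, $\log L = D\log(2d)$ yields the $D$-dimensional bound, while $L=2^d$, $\log L = d$ yields the all-to-all bound, after which the asymptotic rearrangements in the theorem statement are immediate.

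The main obstacle I anticipate is executing Step 2 rigorously: one must argue that the $L_iR_i$-marginal remains a product with $R_i=\mathbb{I}/2$ for \emph{every} decoder $\mathcal{D}$, which requires handling $\mathcal{D}$ as an arbitrary CPTP map on $S$ and invoking the fact that decoupling of $R_i$ from $S$ is preserved under any channel acting on $S$ alone. A secondary, more mechanical point is verifying the greedy packing count: each iteration removes the chosen $i$ together with every $j$ whose light cone meets $C_i$, and since each of the $|C_i|\le L$ physical qubits is shared by at most $L$ light cones, at most $L^{2}$ indices are eliminated per step, which is what yields $|\mathcal{I}|\ge k/L^{2}$.
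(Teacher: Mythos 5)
Your proposal is correct and follows essentially the same route as the paper's proof: a greedy packing of $\lceil k/M^2\rceil$ logical qubits with disjoint forward light cones, independence of the complete-erasure events on those cones under i.i.d.\ depolarizing noise, the observation that erasing a full light cone forces the corresponding reference qubit to decouple and caps the squared Bell-state overlap at $1/4$ for any decoder, and the same $\tfrac{3}{8}\,k\,p^{M}/M^{2}$ lower bound on $\epsilon^2$ followed by taking logarithms with $M=(2d)^D$ or $M=2^d$. The step you flag as the main obstacle is handled in the paper exactly as you anticipate, via the identity $\tr_{L_{i,d}}\bigl[U(\hat{\phi}_{LR}\otimes\ketbra{0^{n-k}})U^{\dagger}\bigr]=\tr_{L_{i,d}}\bigl[U(\hat{\phi}_{L-L_{i,0},R-R_{i,0}}\otimes\tfrac{\mathbb{I}_{L_{i,0}}}{2}\otimes\ketbra{0^{n-k}})U^{\dagger}\bigr]\otimes\tfrac{\mathbb{I}_{R_{i,0}}}{2}$, after which linearity of the decoder gives the $1/4$ bound.
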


\begin{proof}[Proof Sketch]
We outline the main idea of the proof and refer to Appendix \ref{app:lower_bound} for detailed arguments. The proof leverages the analysis of the light cones of each qubit within the circuit. For a circuit of depth \( d \), the light cone of each qubit is bounded by a size \( M \) that depends on the circuit depth $d$ and circuit architecture. For instance, in a 1D circuit, \( M = 2d \).

We first prove that there exist at least \( \frac{k}{M^2} \) logical qubits with disjoint light cones, and we denote this subset as \( J \). That is, the light cones of any two qubits in $J$ do not intersect with each other. If any of these light cones are totally affected by noise, then any recovery channel can not have a small recovery error. Since the light cones in \( J \) are disjoint, the events of them being totally affected are independent. The probability that none of these light cones are completely traced out and replaced by the maximally mixed state is upper bounded by
\begin{equation}
p_{\mathrm{None}} \le (1 - p^M)^{|J|}.
\end{equation}

We also prove that the Choi error is lower bounded by the probability that at least one of these light cones is entirely traced out:
\begin{equation}
\epsilon^2 \geq \frac{3}{4} \left(1 - p_{\mathrm{None}}\right).
\end{equation}

Using standard inequalities and substituting \( M \) with a function of the circuit depth \( d \) that depends on the circuit's structure, we derive the desired lower bounds on the circuit depth.
\end{proof}

The light cone argument has been utilized in previous works to establish lower bounds for quantum error-correcting codes, such as determining the number of physical qubits required for fault-tolerant quantum computation \cite{baspin2023lowerboundoverheadquantum} and the circuit depth necessary to achieve specific subsystem variance properties of codes \cite{Yi2024order}. In our work, we extend these lower bounds by considering the AQEC performance under a practical noise model, specifically establishing circuit depth lower bounds under local noise. While our results are derived for local depolarizing noise, they can be easily extended to other error models. Furthermore, our analysis applies to arbitrary circuit architectures characterized by an adjacency graph \(\mathsf{G}\). In such cases, the upper bound \(M\) on the size of the light cones depends on both the circuit depth and the adjacency graph, \(M = M(\mathsf{G}, d)\), while the remaining parts of the proof remain unchanged.

Using Theorem \ref{thm:depolarizing_lower_bound}, we now show that the circuit depth scaling of $\mathfrak{C}$ and $\mathfrak{B}$ is optimal up to constant factors. Theorem~\ref{thm:depolarizing_lower_bound} shows that the depth of a 1D circuit should be $\Omega(\log n)$ to achieve a polynomially small Choi error or a constant encoding rate. Meanwhile, 1D circuits  $\mathfrak{C}$ and $\mathfrak{B}$ can achieve such error scaling and encode \(k = \Theta(n)\) logical qubits with circuit depth $O(\log n)$. The depth is optimal up to a constant factor.

Moreover, 1D double-layer blocked circuits $\mathfrak{C}$ are nearly optimal for achieving the hashing bound, as the circuit depth $\omega(\log n)$ given in Corollary \ref{coro:pauli} is close to the lower bound $\Omega(\log n)$ given by Theorem \ref{thm:depolarizing_lower_bound}. The scaling $\omega(\log n)$ arises from suppressing errors in AEP concentration, suggesting that improving the circuit depth scaling to $O(\log n)$ may be inherently challenging. Exploring potential methods to enhance this bound is an interesting question for future research.

Moreover, we can implement the codes from \(\mathfrak{C}\) using an all-to-all circuit architecture to further reduce the circuit depth. In an all-to-all architecture, any \(\xi\)-qubit Clifford unitary can be implemented in \(O(\log \xi)\) depth with \(O(\xi^2)\) ancillary qubits \cite{Moore2001Parallel, Jiang2020OptimalSpacedepth}. Consequently, the Clifford encoding \(\mathfrak{C}\) can be realized with a depth of \(O(\log \xi) = O\left(\log \log \left(\frac{n}{\epsilon}\right)\right)\), utilizing a total of \(O(n\xi) = O\left(n \log\left(\frac{n}{\epsilon}\right)\right)\) ancillary qubits. This encoding scheme effectively encodes \(k = O(n)\) logical qubits with a circuit depth of \( O\left(\log \log \left(\frac{n}{\epsilon}\right)\right)\), which is comparable to the circuit depth lower bound \(\Omega(\log \log n)\) established in Theorem \ref{thm:depolarizing_lower_bound}. This comparison indicates that our results provide a tight lower bound applicable to a broad range of circuit architectures.
Notably, in such an all-to-all encoding strategy, the encoding rate becomes \(O\left(\frac{k}{n\xi}\right) = O\left(\log^{-1} \left(\frac{n}{\epsilon}\right)\right)\), which has a gap to a constant encoding rate. Whether all-to-all circuits with \(O(\log \log n)\) depth can achieve a constant encoding rate while maintaining good AQEC performance remains an intriguing direction for future research.

\section{Conclusion and discussion}\label{sc:discuss}
In this work, we establish the AQEC performance of 1D logarithmic-depth encoding circuits with brickwork and double-layer blocked architecture against a wide range of practical noise models. We show that logarithmic depth is both necessary and sufficient to achieve a constant encoding rate and error-correcting capability while keeping the AQEC error polynomially small. In particular, we prove that for the double-layer blocked circuit, $\omega(\log n)$ depth suffices to achieve the well-known hashing bound under local Pauli noise and the channel capacity under local erasure errors. 
Although the double-layer blocked circuit is shown to be able to form an approximate design~\cite{schuster2024randomunitariesextremelylow} with a polynomially small error, our analysis does not rely on this property. The decoupling theorem for approximate designs~\cite{Szehr2013Decoupling} generally requires an exponentially small error to achieve a negligible decoupling error. This raises an interesting question of whether more general conditions, such as forming an approximate design with polynomially small error, suffice to ensure good AQEC performance for random codes. Meanwhile, one can further unify the properties of approximate design and decoupling as the scrambling power of a random circuit ensemble and investigate more properties of $\mathfrak{C}$ like out-of-time-ordered correlation~\cite{Swingle2018OTOC,Garcia2021scrambling}.
For clarity, Table~\ref{tab:summary} summarizes the depth scaling of two classes of 1D circuits and all-to-all circuits with respect to decoupling and design properties.

\begin{table}[!ht]
	\centering
	\caption{The table summarizes the circuit depth scaling of three types of circuits with respect to decoupling against local and weakly correlated noise, as well as their approximate design properties. The second column presents our results on decoupling. The third column reports results on approximate designs, while the second row indicates areas where further improvements are possible.}
	\resizebox{.45\textwidth}{!}{
		\begin{tabular}{ccc}\hline
			& Decoupling & Approx. design  \\\hline
			1D brickwork & $\Theta(\log n)$  & $O(n)$~\cite{Brandao2016ApproximateDesign,Dalzell2022Anticoncentrate}  \\
			1D Double-layer blocked & $\Theta(\log n)$ & $\Theta(\log n)$~\cite{schuster2024randomunitariesextremelylow,laracuente2024approximateunitarykdesignsshallow} \\
			All-to-all & $\Theta(\mathrm{loglog} n)$ & $\Theta(\mathrm{loglog} n)$~\cite{schuster2024randomunitariesextremelylow,laracuente2024approximateunitarykdesignsshallow} \\
			\hline
		\end{tabular}
	}
\label{tab:summary}
\end{table}

We also introduce a random Pauli rotation before unitary encoding to show that approximate decoupling and small recovery errors are maintained for arbitrary input states. Developing new techniques to bound the worst-case recovery error for random codes without resorting to random Pauli matrices, possibly by further exploiting the internal structure of these encoding circuits, remains an interesting avenue for future work.

Moreover, exploring low-depth circuits in higher-dimensional architectures, such as 2D or all-to-all circuits, is also important. With improved connectivity in these architectures, the required circuit depth can be expected to decrease. For instance, as discussed in Section~\ref{sec:lower_bound}, all-to-all circuits can use $O(n \log n)$ physical qubits to encode $O(n)$ logical qubits with $O(\log \log n)$ circuit depth while preserving good AQEC performance. An intriguing question is whether the encoding rate can be made constant in the all-to-all circuit while still achieving good AQEC performance. Additionally, it is interesting to investigate whether the hashing bound for 1D circuits can be improved from $\omega(\log n)$ to $O(\log n)$. We also conjecture that, in a $D$-dimensional circuit architecture, circuit depths on the order of $O\bigl((\log n)^{1/D}\bigr)$ suffice to achieve a constant encoding rate and error-correcting capability in the AQEC setting.

It is worth noting that working with AQEC instead of focusing on the exact code distance offers promising avenues for improvements in various aspects of code performance such as encoding power and logical operation sets.
For example, AQEC  may allow enhanced parameters for geometrically local codes beyond known fundamental limits such as the Bravyi-Poulin-Terhal bound~\cite{Bravyi2010Tradeoffs}. Moreover, certain covariant codes under AQEC possess a continuous set of logical operations, a feature that is highly desirable for fault tolerance~\cite{Kong2022CQEC}.
Beyond evaluating AQEC performance, achieving fault-tolerant implementations of codes from low-depth Clifford encoding circuits~\cite{Nelson2025FTlowdepth} is also crucial in practice, which requires further analysis of the fault-tolerant implementation of quantum gates and the development of improved decoding algorithms for such codes.

Beyond AQEC, our analysis of 1D random circuits extends the previous domain-wall technique~\cite{Dalzell2022Anticoncentrate} and can be applied to study other properties of random circuits, such as their generic second-moment behavior~\cite{heinrich2025anticoncentrationalmostneed,belkin2025apparentuniversalbehaviorsecond}. We have developed decoupling theorems for low-depth circuits, which potentially have important applications in quantum information theory \cite{Dupuis2014Decoupling, Berta2011ReverseShannon}. Decoupling theorems have applications in quantum communication tasks like quantum state merging \cite{Horodecki2005partialquantuminformation} and in quantum cryptography, such as privacy amplification \cite{dupuis2022privacyamplificationdecouplingsmoothing}. We also envision a wide range of exciting applications in many-body physics and quantum thermodynamics~\cite{Linden2009thermalequilibrium, Rio2011thermodynamic}. For instance, a linear number of random erasure errors can be corrected, implying that the subsystem of a 1D log-depth circuit state is highly likely to thermalize. Our results may also offer insights for studies of the black hole information paradox \cite{Patrick2007Blackhole} and quantum many-body entanglement \cite{Brand_o_2014arealaw}.


\begin{acknowledgements}
We thank Junjie Chen, Chushi Qin, Zitai Xu, and Xiangran Zhang for their insightful discussions. G.L., Z.D., and X.M. are supported by the National Natural Science Foundation of China Grant No.~12575023, and the Innovation Program for Quantum Science and Technology Grant No.~2021ZD0300804 and No.~2021ZD0300702. Z.-W.L. is supported in part by NSFC under Grant No.~12475023, Dushi Program, and startup funding from YMSC.
\end{acknowledgements}

\newpage

\appendix
Here, we briefly outline the contents of the Appendices. Appendix~\ref{app:entropy} introduces the quantum entropies used in this work and reviews the decoupling theorems for the approximate 2-design group. Appendix~\ref{app:decoupling} provides detailed proofs of the decoupling theorem for 1D double-layer blocked circuits and related AQEC results. Appendix~\ref{appendssc:1DLRC} presents the results associated with 1D brickwork circuits. Finally, Appendix~\ref{app:lower_bound} presents the proof details of the circuit lower bound required to achieve a certain AQEC capability.

\section{Quantum entropies and decoupling theorems}\label{app:entropy}
In this section, we review the related concepts of quantum entropies and the decoupling theorem, which is an important tool for analyzing the AQEC performance of random circuits. First, we introduce several important entropies that we will use. Here, we mainly consider the sandwiched conditional R$\acute{e}$nyi entropy~\cite{rubboli2024quantumconditionalentropies}
\begin{equation}
\widetilde{H}^{\downarrow}_{\alpha}(A|B)_{\rho} = -\widetilde{D}_{\alpha}(\rho_{AB}\Vert \id_A\otimes \rho_B),
\end{equation}
where $\widetilde{D}_{\alpha}$ denotes the sandwiched R$\acute{e}$nyi entropy
\begin{equation}
\widetilde{D}_{\alpha}(\rho\Vert \sigma) = \frac{1}{\alpha-1}\log \tr[(\sigma^{\frac{1-\alpha}{2\alpha}}\rho \sigma^{\frac{1-\alpha}{2\alpha}})^{\alpha}].
\end{equation}
Here and in the whole work, the base of the logarithm is 2. The notation of $\downarrow$ in $\widetilde{H}^{\downarrow}_{\alpha}(A|B)_{\rho}$ is used to distinguish it from another definition of sandwiched conditional R$\acute{e}$nyi entropy,
\begin{equation}
\widetilde{H}^{\uparrow}_{\alpha}(A|B)_{\rho} = -\inf_{\sigma_B}\widetilde{D}_{\alpha}(\rho_{AB}\Vert \id_A\otimes \sigma_B).
\end{equation}
Clearly, $\widetilde{H}^{\uparrow}_{\alpha}(A|B)_{\rho}\geq \widetilde{H}^{\downarrow}_{\alpha}(A|B)_{\rho}$, and these two quantities are different in general.
When $\alpha$ takes value $\infty$, $2$, and $1/2$, we get the conditional min-entropy, the conditional collision entropy, and the conditional max-entropy, respectively, as shown below.

\begin{definition}\label{def:renyientropy}
Given a quantum state $\rho_{AB}$, the conditional min-entropy of $A$ given $B$ is defined as
\begin{align}
\widetilde{H}^{\downarrow}_{\mathrm{min}}(A|B)_{\rho} &= \sup\{\lambda\in \mathbb{R}: 2^{-\lambda}\cdot \id_A\otimes \rho_B-\rho_{AB}\geq 0\};\\
\widetilde{H}^{\uparrow}_{\mathrm{min}}(A|B)_{\rho} &= \sup_{\sigma_B}\sup\{\lambda\in \mathbb{R}: 2^{-\lambda}\cdot \id_A\otimes \sigma_B-\rho_{AB}\geq 0\};
\end{align}
the conditional collision entropy of $A$ given $B$ is defined as
\begin{align}
\widetilde{H}^{\downarrow}_2(A|B)_{\rho} &= -\log \tr[\left((\id_A\otimes \rho_B)^{-1/4}\rho_{AB}(\id_A\otimes \rho_B)^{-1/4}\right)^2];\\
\widetilde{H}^{\uparrow}_2(A|B)_{\rho} &= \sup_{\sigma_B}-\log \tr[\left((\id_A\otimes \sigma_B)^{-1/4}\rho_{AB}(\id_A\otimes \sigma_B)^{-1/4}\right)^2];
\end{align}
the conditional max-entropy of $A$ given $B$ is defined as
\begin{align}
\widetilde{H}^{\downarrow}_{\mathrm{max}}(A|B)_{\rho} &= \log F(\rho_{AB}, \id_A\otimes \rho_B)^2;\\
\widetilde{H}^{\uparrow}_{\mathrm{max}}(A|B)_{\rho} &= \sup_{\sigma_B}\log F(\rho_{AB}, \id_A\otimes \sigma_B)^2.
\end{align}
\end{definition}
When $\alpha$ goes to $1$, $\widetilde{H}^{\uparrow/\downarrow}_{\alpha}(A|B)_{\rho}$ will converge to the conditional von Neumann entropy, which is defined as
\begin{equation}
H(A|B) = H(AB)-H(B),
\end{equation}
where $H$ is the von Neumann entropy with $H(\rho) = -\tr\rho\log \rho$.

We also consider the smoothing version of the conditional entropy. We call two quantum states $\rho$ and $\sigma$ $\delta$-close if their purified distance $P(\rho,\sigma)\leq \delta$, and we denote $\mathcal{B}^{\delta}(\rho) = \{\rho', P(\rho', \rho)\leq \delta\}$ as the set of all $\delta$-close states of $\rho$. Then, we have the following smooth conditional entropies.
\begin{definition}\label{def:smoothentropy}
The $\delta$-smooth sandwiched conditional R$\acute{e}$nyi entropy of $A$ given $B$ for state $\rho_{AB}$ is the maximal sandwiched conditional R$\acute{e}$nyi entropy of states in $\mathcal{B}^{\delta}(\rho_{AB})$:
\begin{align}
\widetilde{H}^{\downarrow\delta}_{\alpha}(A|B)_{\rho} &= \sup_{\hat{\rho}_{AB}\in \mathcal{B}^{\delta}(\rho_{AB})}\widetilde{H}^{\downarrow}_{\alpha}(A|B)_{\hat{\rho}};\\
\widetilde{H}^{\uparrow\delta}_{\alpha}(A|B)_{\rho} &= \sup_{\hat{\rho}_{AB}\in \mathcal{B}^{\delta}(\rho_{AB})}\widetilde{H}^{\uparrow}_{\alpha}(A|B)_{\hat{\rho}}.
\end{align}
Particularly, the $\delta$-smooth conditional min-entropy of $A$ given $B$ is defined as
\begin{align}
\widetilde{H}^{\downarrow \delta}_{\mathrm{min}}(A|B)_{\rho} &= \sup_{\hat{\rho}_{AB}\in \mathcal{B}^{\delta}(\rho_{AB})}\widetilde{H}^{\downarrow}_{\mathrm{min}}(A|B)_{\hat{\rho}};\\
\widetilde{H}^{\uparrow \delta}_{\mathrm{min}}(A|B)_{\rho} &= \sup_{\hat{\rho}_{AB}\in \mathcal{B}^{\delta}(\rho_{AB})}\widetilde{H}^{\uparrow}_{\mathrm{min}}(A|B)_{\hat{\rho}};
\end{align}
the $\delta$-smooth conditional collision entropy of $A$ given $B$ is defined as
\begin{align}
\widetilde{H}^{\downarrow \delta}_{2}(A|B)_{\rho} &= \sup_{\hat{\rho}_{AB}\in \mathcal{B}^{\delta}(\rho_{AB})}\widetilde{H}^{\downarrow}_{2}(A|B)_{\hat{\rho}};\\
\widetilde{H}^{\uparrow \delta}_{2}(A|B)_{\rho} &= \sup_{\hat{\rho}_{AB}\in \mathcal{B}^{\delta}(\rho_{AB})}\widetilde{H}^{\uparrow}_{2}(A|B)_{\hat{\rho}};
\end{align}
the $\delta$-smooth conditional max-entropy of $A$ given $B$ is defined as
\begin{align}
\widetilde{H}^{\downarrow \delta}_{\mathrm{max}}(A|B)_{\rho} &= \sup_{\hat{\rho}_{AB}\in \mathcal{B}^{\delta}(\rho_{AB})}\widetilde{H}^{\downarrow}_{\mathrm{max}}(A|B)_{\hat{\rho}};\\
\widetilde{H}^{\uparrow \delta}_{\mathrm{max}}(A|B)_{\rho} &= \sup_{\hat{\rho}_{AB}\in \mathcal{B}^{\delta}(\rho_{AB})}\widetilde{H}^{\uparrow}_{\mathrm{max}}(A|B)_{\hat{\rho}}.
\end{align}
\end{definition}


Below, we list several relationships among different entropies.

\begin{lemma}[Corollary 4 in Ref.~\cite{Tomamichel2014renyi}]\label{lemma:entropymonotone2min}
Given quantum state $\rho_{AB}$, we have that $\widetilde{H}^{\downarrow}_{2}(A|B)_{\rho}\geq \widetilde{H}^{\uparrow}_{\min}(A|B)_{\rho}$.
\end{lemma}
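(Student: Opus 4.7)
The plan is to recast the claim in its exponentiated form, $\tr\bigl[\bigl((\id_A \otimes \rho_B)^{-1/4} \rho_{AB} (\id_A \otimes \rho_B)^{-1/4}\bigr)^2\bigr] \leq 2^{-\widetilde{H}^{\uparrow}_{\min}(A|B)_\rho}$, and to prove this by pairing the operator-inequality characterization of $\widetilde{H}^{\uparrow}_{\min}$ with a partial-trace identity for a rewritten form of the collision-entropy quantity.

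First I would use the cyclic property of the trace to identify the left-hand side with $\tr[T\,\rho_{AB}]$, where $T := (\id_A \otimes \rho_B)^{-1/2} \rho_{AB} (\id_A \otimes \rho_B)^{-1/2}$, with the inverse understood on $\mathrm{supp}(\rho_B)$. The key structural observation is then the partial-trace identity $\tr_A(T) = \rho_B^{-1/2}\,\rho_B\,\rho_B^{-1/2} = \Pi_{\rho_B}$, the projector onto the support of $\rho_B$, which is bounded above by the identity. This is the only nontrivial computation in the argument.

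Next, for an arbitrary $\sigma_B$ and $\lambda$ certifying $\widetilde{H}^{\uparrow}_{\min}$ via $\rho_{AB} \leq 2^{-\lambda}\,\id_A \otimes \sigma_B$ (as in Definition~\ref{def:renyientropy}), I would integrate this operator inequality against the positive operator $T$. Since $X \geq 0$ and $A \leq B$ imply $\tr(XA) \leq \tr(XB)$, one obtains
\begin{equation}
\tr[T\,\rho_{AB}] \;\leq\; 2^{-\lambda}\,\tr\bigl[T\,(\id_A \otimes \sigma_B)\bigr] \;=\; 2^{-\lambda}\,\tr_B[\Pi_{\rho_B}\,\sigma_B] \;\leq\; 2^{-\lambda}.
\end{equation}
Taking the supremum over all admissible $(\sigma_B,\lambda)$ on the right-hand side yields $2^{-\widetilde{H}^{\downarrow}_2(A|B)_\rho} \leq 2^{-\widetilde{H}^{\uparrow}_{\min}(A|B)_\rho}$, i.e., the desired inequality.

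The only subtlety I foresee is the handling of non-invertible $\rho_B$: the operator $T$ has to be read on $\mathrm{supp}(\rho_B)$, and any $\sigma_B$ appearing in a finite-$\lambda$ certificate for $\widetilde{H}^{\uparrow}_{\min}$ is automatically forced to dominate that support, so the final trace $\tr_B[\Pi_{\rho_B}\sigma_B]$ is well-defined and bounded by $1$. Beyond this bookkeeping the proof is entirely elementary, relying only on the cyclic trace, the monotonicity of the trace inner product against positive operators, and the one-line partial-trace identity above; no appeal to heavier machinery such as Araki--Lieb--Thirring or operator Jensen should be required.
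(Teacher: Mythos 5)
Your proof is correct. Note first that the paper itself does not prove this lemma at all: it is imported verbatim as Corollary~4 of the cited reference (Tomamichel--Berta--Hayashi), where it is obtained as a special case of a general family of ordering relations among sandwiched conditional R\'enyi entropies. So there is no in-paper argument to compare against; what you have written is a self-contained, elementary replacement for that citation. The chain of steps checks out: cyclicity gives $\tr\bigl[\bigl(\sigma^{-1/4}\rho_{AB}\sigma^{-1/4}\bigr)^2\bigr]=\tr[T\rho_{AB}]$ with $\sigma=\id_A\otimes\rho_B$ and $T=\sigma^{-1/2}\rho_{AB}\sigma^{-1/2}\ge 0$; the identity $\tr_A(T)=\rho_B^{-1/2}\rho_B\rho_B^{-1/2}=\Pi_{\mathrm{supp}(\rho_B)}\le\id_B$ is the right structural observation; and testing any certificate $\rho_{AB}\le 2^{-\lambda}\,\id_A\otimes\sigma_B$ against the positive operator $T$ gives $\tr[T\rho_{AB}]\le 2^{-\lambda}\tr[\Pi_{\rho_B}\sigma_B]\le 2^{-\lambda}$, after which taking the supremum over admissible $(\sigma_B,\lambda)$ yields $2^{-\widetilde{H}^{\downarrow}_2}\le 2^{-\widetilde{H}^{\uparrow}_{\min}}$. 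Your handling of the support issue is also right: $\rho_{AB}$ is automatically supported on $\mathcal{H}_A\otimes\mathrm{supp}(\rho_B)$, and any $\sigma_B$ admitting a finite $\lambda$ must dominate $\mathrm{supp}(\rho_B)$, so all generalized inverses and traces are well defined. The only thing worth flagging is that the inequality is genuinely nontrivial (monotonicity of $\widetilde{D}_\alpha$ in $\alpha$ alone gives $\widetilde{H}^{\downarrow}_2\ge\widetilde{H}^{\downarrow}_{\min}$, which is weaker than the claim since $\widetilde{H}^{\uparrow}_{\min}\ge\widetilde{H}^{\downarrow}_{\min}$), and your argument correctly closes that gap by exploiting that the conditioning operator in $T$ is $\rho_B$ itself, which is what makes $\tr_A(T)$ collapse to a projector regardless of which $\sigma_B$ appears in the min-entropy certificate.
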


Based on Lemma~\ref{lemma:entropymonotone2min}, we have $\widetilde{H}^{\downarrow\delta}_{2}(A|B)_{\rho}\geq \widetilde{H}^{\uparrow\delta}_{\min}(A|B)_{\rho}$. Meanwhile, since $\widetilde{D}$ is monotonically increasing in $\alpha$~\cite{Lennert2013renyi,Beigi2013renyi}, we have
\begin{lemma}\label{lemma:entropymonotone2max}
Given quantum state $\rho_{AB}$, we have that $\widetilde{H}^{\downarrow}_{2}(A|B)_{\rho}\leq \widetilde{H}^{\downarrow}_{\max}(A|B)_{\rho}\leq \widetilde{H}^{\uparrow}_{\max}(A|B)_{\rho}$.
\end{lemma}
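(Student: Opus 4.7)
The plan is to show that both inequalities are immediate consequences of facts already stated (or easily observed) in the excerpt, namely the monotonicity of $\widetilde{D}_\alpha$ in $\alpha$ and the optimization structure distinguishing the $\uparrow$ and $\downarrow$ variants. No nontrivial machinery is needed.

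For the first inequality, $\widetilde{H}^{\downarrow}_{2}(A|B)_{\rho}\leq \widetilde{H}^{\downarrow}_{\max}(A|B)_{\rho}$, I would unfold the definitions to rewrite both sides in terms of $\widetilde{D}_\alpha(\rho_{AB}\Vert \id_A\otimes\rho_B)$, with $\alpha=2$ on the left and $\alpha=1/2$ on the right (the max-entropy case). Applying the cited monotonicity $\widetilde{D}_{1/2}(\rho_{AB}\Vert \id_A\otimes\rho_B)\leq \widetilde{D}_{2}(\rho_{AB}\Vert \id_A\otimes\rho_B)$ and then negating both sides yields the desired inequality.

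For the second inequality, $\widetilde{H}^{\downarrow}_{\max}(A|B)_{\rho}\leq \widetilde{H}^{\uparrow}_{\max}(A|B)_{\rho}$, I would simply observe that, by the definitions just given, $\widetilde{H}^{\uparrow}_{\alpha}(A|B)_{\rho}$ is obtained by taking the supremum (i.e., the negative of the infimum of $\widetilde{D}_\alpha$) over all states $\sigma_B$, whereas $\widetilde{H}^{\downarrow}_{\alpha}(A|B)_{\rho}$ fixes the particular choice $\sigma_B=\rho_B$. Since the supremum is at least as large as any single evaluation, the inequality holds for every $\alpha$, and in particular for $\alpha=1/2$.

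There is no serious obstacle: both steps are definitional once the monotonicity of $\widetilde{D}_\alpha$ is taken as input, which is why the authors list this as a lemma without further remark. The only care required is making sure the $\alpha$-parametrization convention (where collision entropy corresponds to $\alpha=2$ and max-entropy to $\alpha=1/2$) is handled consistently with the sign that appears when converting from $\widetilde{D}_\alpha$ to $\widetilde{H}^{\downarrow/\uparrow}_\alpha$. Given the definitions in Definition~\ref{def:renyientropy}, this bookkeeping is straightforward.
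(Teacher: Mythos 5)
Your proposal is correct and follows exactly the paper's route: the first inequality is the monotonicity of $\widetilde{D}_{\alpha}$ in $\alpha$ (comparing $\alpha=2$ and $\alpha=1/2$ with the fixed second argument $\id_A\otimes\rho_B$), and the second is the trivial observation that the $\uparrow$ quantity is a supremum over $\sigma_B$ that includes the choice $\sigma_B=\rho_B$. The sign and parametrization bookkeeping you flag works out as you describe, so nothing further is needed.
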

Thus, $\widetilde{H}^{\downarrow\delta}_{2}(A|B)_{\rho}\leq \widetilde{H}^{\uparrow\delta}_{\max}(A|B)_{\rho}$. Note that the conditional min-entropy and the conditional max-entropy both have an asymptotic equipartition property~\cite{Tomamichel2009QAEP,tomamichel2013frameworknonasymptoticquantuminformation}, as shown below.

\begin{lemma}[AEP~\cite{Tomamichel2009QAEP,tomamichel2013frameworknonasymptoticquantuminformation}]\label{lem:AEP} For quantum state $\rho_{AB}$, we have
\begin{align}
\lim_{\delta\rightarrow 0}\lim_{n\rightarrow \infty}\frac{1}{n}\widetilde{H}^{\uparrow \delta}_{\mathrm{min}}(A|B)_{\rho^{\otimes n}} &= H(A|B)_{\rho};\\
\lim_{\delta\rightarrow 0}\lim_{n\rightarrow \infty}\frac{1}{n}\widetilde{H}^{\uparrow \delta}_{\mathrm{max}}(A|B)_{\rho^{\otimes n}} &= H(A|B)_{\rho}.
\end{align}
\end{lemma}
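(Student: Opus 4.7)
The plan is to follow the standard route to the quantum AEP, sandwiching the smooth conditional min- and max-entropies between non-smoothed sandwiched Rényi conditional entropies at orders $\alpha\neq 1$ and exploiting exact additivity of the latter on i.i.d.\ tensor products. The upper bound on the limit comes from $\alpha$ on one side of $1$, the lower bound from the other side, and pinching them as $\alpha\to 1$ collapses both to $H(A|B)_\rho$ by continuity of Rényi divergences.

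First, I would establish a non-asymptotic ``bridge'' inequality of the form
\begin{equation}
\widetilde H^{\uparrow}_{\alpha}(A|B)_{\sigma} - \frac{g_1(\alpha,\delta)}{\alpha-1} \;\le\; \widetilde H^{\uparrow\delta}_{\mathrm{min}}(A|B)_{\sigma} \;\le\; \widetilde H^{\uparrow}_{\beta}(A|B)_{\sigma} + \frac{g_2(\beta,\delta)}{\beta-1}
\end{equation}
valid for an arbitrary state $\sigma_{AB}$ and any $\alpha,\beta>1$, with correction terms $g_i$ depending only on $\alpha,\beta,\log(1/\delta)$ and crucially independent of any dimension. The derivation is a Markov/Chebyshev-type spectral argument: one projects off the small-eigenvalue part of the Rényi operator $\sigma_{AB}^{\alpha/2}(\mathbb{I}_A\otimes\omega_B)^{(1-\alpha)/2}\sigma_{AB}^{\alpha/2}$ for the optimizing marginal $\omega_B$, and the purified-distance cost of the projection is controlled by $\delta$. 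An analogous sandwich for $\widetilde H^{\uparrow\delta}_{\mathrm{max}}$ uses Rényi orders in $(1/2,1)$.

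Second, I would invoke exact additivity of the sandwiched conditional Rényi entropy on product states,
\begin{equation}
\widetilde H^{\uparrow}_{\alpha}(A|B)_{\rho^{\otimes n}} = n\,\widetilde H^{\uparrow}_{\alpha}(A|B)_{\rho},
\end{equation}
which follows from tensor-product additivity of $\widetilde D_\alpha$ together with the fact that the optimizing marginal can be chosen in i.i.d.\ form. Substituting $\sigma=\rho^{\otimes n}$ into the bridge inequality and dividing by $n$ yields
\begin{equation}
\widetilde H^{\uparrow}_{\alpha}(A|B)_{\rho} - \tfrac{g_1(\alpha,\delta)}{n(\alpha-1)} \;\le\; \tfrac{1}{n}\widetilde H^{\uparrow\delta}_{\mathrm{min}}(A|B)_{\rho^{\otimes n}} \;\le\; \widetilde H^{\uparrow}_{\beta}(A|B)_{\rho} + \tfrac{g_2(\beta,\delta)}{n(\beta-1)}.
\end{equation}
Since the $g_i$ are $n$-independent, sending $n\to\infty$ erases both corrections, then $\delta\to 0$ followed by $\alpha,\beta\to 1$, together with the continuity $\lim_{\alpha\to 1}\widetilde H^{\uparrow}_{\alpha}(A|B)_\rho = H(A|B)_\rho$ (obtained by differentiating the Rényi divergence at $\alpha=1$ to recover the relative entropy), pins the central expression to $H(A|B)_\rho$. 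The max-entropy claim follows by the same scheme with orders $\alpha<1$ and the corresponding one-sided continuity.

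The main obstacle will be Step~1: producing bridge inequalities whose correction terms are independent of $n$ and of system dimension. A naive smoothing incurs a $\log\dim$ penalty that survives the $1/n$ rescaling and would spoil the limit. The remedy, which is the technical core of Tomamichel's framework, is to work directly in the purified-distance smoothing ball and to choose a spectral cut-off at the ``typical'' eigenvalues of the Rényi operator so that the entropy shift depends only on the Rényi exponent and $\log(1/\delta)$. Once the dimension-free sandwich is in hand, the remainder is additivity and elementary limit-taking.
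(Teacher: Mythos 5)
The paper does not actually prove this lemma --- it imports it from the cited references --- and for the min-entropy half your plan is essentially the proof given there: the dimension-free bound $\widetilde{H}^{\uparrow\delta}_{\min}(A|B)_{\sigma}\ge\widetilde{H}^{\uparrow}_{\alpha}(A|B)_{\sigma}-g(\alpha,\delta)/(\alpha-1)$ for $\alpha>1$, additivity of $\widetilde{H}^{\uparrow}_{\alpha}$ on i.i.d.\ states, and $\widetilde{H}^{\uparrow}_{\alpha}\to H$ as $\alpha\to1$. Two caveats on that half. First, your claimed \emph{upper} bridge $\widetilde{H}^{\uparrow\delta}_{\min}\le\widetilde{H}^{\uparrow}_{\beta}+g_2(\beta,\delta)/(\beta-1)$ with $\beta>1$ and dimension-free $g_2$ is not among the standard ingredients and would itself need proof; the usual converse is cheaper: for the optimal $\hat\rho$ in the ball, $\widetilde{H}^{\uparrow}_{\min}(A|B)_{\hat\rho}\le H(A|B)_{\hat\rho}$, and Alicki--Fannes--Winter continuity gives an excess of order $\delta\, n\log\abs{A}$ with the \emph{single-copy} dimension, so after dividing by $n$ it is $O(\delta\log\abs{A})$ and dies in the subsequent $\delta\to0$ limit. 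Your worry that a $\log\dim$ penalty ``survives the $1/n$ rescaling'' is therefore misplaced for the converse direction.

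The genuine gap is the max-entropy half. You assert it ``follows by the same scheme with orders $\alpha<1$,'' but the scheme does not transfer under the paper's own Definition~\ref{def:smoothentropy}, which smooths \emph{all} entropies, including $\widetilde{H}^{\uparrow}_{\max}$, by a \emph{supremum} over the ball; the AEP for the max-entropy in the cited references is for the infimum-smoothed quantity. Under sup-smoothing the second display is false for generic $\rho$: since $\rho^{\otimes n}$ lies in its own smoothing ball and $\widetilde{H}^{\uparrow}_{1/2}$ is superadditive (choose a product $\sigma_{B^n}$ and use multiplicativity of the fidelity),
\begin{equation}
\frac{1}{n}\widetilde{H}^{\uparrow\delta}_{\max}(A|B)_{\rho^{\otimes n}}\;\ge\;\frac{1}{n}\widetilde{H}^{\uparrow}_{\max}(A|B)_{\rho^{\otimes n}}\;\ge\;\widetilde{H}^{\uparrow}_{1/2}(A|B)_{\rho}\;\ge\;H(A|B)_{\rho},
\end{equation}
and the last inequality is strict whenever $\alpha\mapsto\widetilde{H}^{\uparrow}_{\alpha}$ is non-constant on $[1/2,1]$ (e.g.\ trivial $B$ and non-flat $\rho_A$, where $\widetilde{H}^{\uparrow}_{1/2}=2\log\tr\sqrt{\rho_A}>H(\rho_A)$). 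So no proof of that half can exist as stated; you must either switch to the inf-smoothed max-entropy, matching the references, or flag that the statement requires that correction. This is arguably a defect of the paper's definitions rather than of your argument, but your proposal does not detect it, and the paper does invoke the upper branch (via $\widetilde{H}^{\downarrow\delta}_{2}\le\widetilde{H}^{\uparrow\delta}_{\max}$) to bound the smooth collision entropy from above, even though only the lower branch is load-bearing in its applications.
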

Since $\widetilde{H}^{\uparrow\delta}_{\min}(A|B)_{\rho}\leq \widetilde{H}^{\downarrow\delta}_{2}(A|B)_{\rho}\leq \widetilde{H}^{\uparrow\delta}_{\max}(A|B)_{\rho}$, we have
\begin{equation}
\lim_{\delta\rightarrow 0}\lim_{n\rightarrow \infty}\frac{1}{n}\widetilde{H}^{\downarrow\delta}_{2}(A|B)_{\rho^{\otimes n}} = H(A|B)_{\rho}.
\end{equation}

More exactly, for any $0<\delta<1$, as long as $n\geq -\frac{8}{5}\log(1-\sqrt{1-\delta^2})$, we have that
\begin{align}
\frac{1}{n}\widetilde{H}^{\uparrow \delta}_{\mathrm{min}}(A|B)_{\rho^{\otimes n}} &\geq H(A|B)_{\rho} - \frac{4\log \gamma \sqrt{\log \frac{2}{\delta^2}}}{\sqrt{n}};\\
\frac{1}{n}\widetilde{H}^{\uparrow \delta}_{\mathrm{max}}(A|B)_{\rho^{\otimes n}} &\leq H(A|B)_{\rho} + \frac{4\log \gamma \sqrt{\log \frac{2}{\delta^2}}}{\sqrt{n}},
\end{align}
where
\begin{equation}
\gamma \leq \sqrt{2^{-\widetilde{H}^{\uparrow \delta}_{\mathrm{min}}(A|B)_{\rho}}}+\sqrt{2^{\widetilde{H}^{\uparrow \delta}_{\mathrm{max}}(A|B)_{\rho}}}+1.
\end{equation}
Thus,
\begin{equation}
H(A|B)_{\rho} - \frac{4\log \gamma \sqrt{\log \frac{2}{\delta^2}}}{\sqrt{n}} \leq \frac{1}{n}\widetilde{H}^{\downarrow\delta}_{2}(A|B)_{\rho^{\otimes n}} \leq H(A|B)_{\rho} + \frac{4\log \gamma \sqrt{\log \frac{2}{\delta^2}}}{\sqrt{n}}.
\end{equation}

Below, we present a lemma for calculating quantum conditional entropies with the purification of a quantum state.
\begin{lemma}[Dual relations for quantum conditional entropies, Theorem~2 in Ref.~\cite{Tomamichel2014renyi}]\label{lemma:entropydual}
Given tripartite pure state $\rho_{ABC}$ with reduced density matrices on system $AB$ and $AC$ as $\rho_{AB}$ and $\rho_{AC}$, respectively, we have that
\begin{equation}
\widetilde{H}^{\downarrow}_{\alpha}(A|B)_{\rho_{AB}} = -\frac{1}{\alpha-1} \log \tr_C((\tr_A \rho_{AC}^{\frac{1}{\alpha}})^{\alpha}).
\end{equation}
\end{lemma}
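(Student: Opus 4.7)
The plan is to reduce the lemma to the operator identity
\begin{equation*}
\tr_{AB}\Bigl[\bigl((\id_A \otimes \rho_B^{(1-\alpha)/(2\alpha)})\,\rho_{AB}\,(\id_A \otimes \rho_B^{(1-\alpha)/(2\alpha)})\bigr)^\alpha\Bigr] = \tr_C\Bigl[\bigl(\tr_A \rho_{AC}^{1/\alpha}\bigr)^\alpha\Bigr],
\end{equation*}
since applying $-\tfrac{1}{\alpha-1}\log$ to both sides then yields the stated duality by the definition of $\widetilde{D}_\alpha$ and hence of $\widetilde{H}^{\downarrow}_{\alpha}(A|B)_{\rho}$. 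The main tool will be the spectrum-matching identity $\tr[(XX^\dagger)^\alpha]=\tr[(X^\dagger X)^\alpha]$, applied to a carefully chosen linear map $X:\mathcal{H}_C\to\mathcal{H}_{AB}$ built from the given purification $\rho_{ABC}=\ketbra{\psi}{\psi}$.

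First, I would fix an orthonormal basis $\{\ket{c}\}$ of $\mathcal{H}_C$ and introduce the Kraus-like map $X_0:\mathcal{H}_C\to\mathcal{H}_{AB}$ defined by $X_0\ket{c} = (\id_{AB}\otimes\bra{c}_C)\,\ket{\psi}_{ABC}$. A direct check gives $X_0 X_0^\dagger = \tr_C\ketbra{\psi}{\psi} = \rho_{AB}$. Absorbing the $\rho_B$-factors, I set
\begin{equation*}
X := (\id_A \otimes \rho_B^{(1-\alpha)/(2\alpha)})\,X_0,
\end{equation*}
so that $XX^\dagger = (\id_A\otimes\rho_B^{(1-\alpha)/(2\alpha)})\,\rho_{AB}\,(\id_A\otimes\rho_B^{(1-\alpha)/(2\alpha)})$, which makes $\tr[(XX^\dagger)^\alpha]$ exactly the left-hand side of the target identity.

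Second, I would evaluate $X^\dagger X = X_0^\dagger(\id_A\otimes\rho_B^{(1-\alpha)/\alpha})X_0$ on $\mathcal{H}_C$ using the Schmidt decomposition of $\ket{\psi}_{ABC}$ across the bipartition $B\,|\,AC$,
\begin{equation*}
\ket{\psi}_{ABC} = \sum_i \sqrt{p_i}\,\ket{b_i}_B\otimes\ket{\phi_i}_{AC},
\end{equation*}
where $\{\ket{b_i}\}$ is the eigenbasis of $\rho_B$ and $\{\ket{\phi_i}\}$ the eigenbasis of $\rho_{AC}$, both with eigenvalues $\{p_i\}$. A short matrix-element computation shows that the weights combine as $p_i\cdot p_i^{(1-\alpha)/\alpha}=p_i^{1/\alpha}$, so that $X^\dagger X$ agrees with $(\tr_A\rho_{AC}^{1/\alpha})^T$ in the basis $\{\ket{c}\}$. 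Since $\tr[M^\alpha]=\tr[(M^T)^\alpha]$, this gives $\tr[(X^\dagger X)^\alpha]=\tr_C[(\tr_A\rho_{AC}^{1/\alpha})^\alpha]$, which is the right-hand side, and invoking the spectrum-matching identity completes the operator equality.

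The main obstacle I anticipate is the singular case $\ker\rho_B\neq\{0\}$: for $\alpha>1$ the exponent $(1-\alpha)/(2\alpha)$ is negative, so $\rho_B^{(1-\alpha)/(2\alpha)}$ must be interpreted as the Moore--Penrose pseudo-inverse on $\mathrm{supp}\,\rho_B$. One has to verify that $\widetilde{D}_\alpha(\rho_{AB}\|\id_A\otimes\rho_B)$ is defined with the same support convention, and that the Schmidt-basis intertwining $(\rho_B^s\otimes\id_{AC})\ket{\psi}=(\id_B\otimes\rho_{AC}^s)\ket{\psi}$ remains valid on the joint support of $\rho_{AB}$ and $\id_A\otimes\rho_B$, so that the trace equality indeed holds on the nontrivial parts and both sides of the stated duality refer to the same (finite) quantity.
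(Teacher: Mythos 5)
The paper does not actually prove this lemma: it is imported verbatim as Theorem~2 of the cited reference \cite{Tomamichel2014renyi}, so there is no in-paper proof to compare against. Judged on its own, your argument is correct and is essentially the standard derivation of this duality. The key identity
\begin{equation*}
\tr\bigl[\bigl((\id_A\otimes \rho_B^{\frac{1-\alpha}{2\alpha}})\,\rho_{AB}\,(\id_A\otimes \rho_B^{\frac{1-\alpha}{2\alpha}})\bigr)^{\alpha}\bigr]
=\tr_C\bigl[\bigl(\tr_A \rho_{AC}^{1/\alpha}\bigr)^{\alpha}\bigr]
\end{equation*}
does follow from $\tr[(XX^{\dagger})^{\alpha}]=\tr[(X^{\dagger}X)^{\alpha}]$ with $X=(\id_A\otimes\rho_B^{\frac{1-\alpha}{2\alpha}})X_0$: one checks $\bra{c}X^{\dagger}X\ket{c'}=\tr[(\id_A\otimes\rho_B^{\frac{1-\alpha}{\alpha}}\otimes\ketbra{c}{c'}_C)\,\ketbra{\psi}{\psi}]$, and inserting the Schmidt decomposition across $B\,|\,AC$ gives the weights $\sqrt{p_i}\cdot\sqrt{p_j}\cdot p_i^{\frac{1-\alpha}{\alpha}}\delta_{ij}=p_i^{1/\alpha}$, i.e.\ $X^{\dagger}X=(\tr_A\rho_{AC}^{1/\alpha})^{T}$, whose $\alpha$-th-power trace equals that of $\tr_A\rho_{AC}^{1/\alpha}$ since transposition preserves the spectrum of a positive operator. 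Your caveat about $\ker\rho_B$ is the right one and resolves cleanly: $\mathrm{supp}\,\rho_{AB}\subseteq\mathcal{H}_A\otimes\mathrm{supp}\,\rho_B$, so the pseudo-inverse convention for $\rho_B^{(1-\alpha)/(2\alpha)}$ (which is also the convention under which $\widetilde{D}_{\alpha}(\rho_{AB}\Vert\id_A\otimes\rho_B)$ is finite) loses nothing, and the Schmidt computation only ever involves $p_i>0$. The argument is self-contained and could in principle replace the citation, though citing the reference as the paper does is of course also fine.
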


In Ref.~\cite{Dupuis2014Decoupling}, the authors prove a decoupling theorem, which describes how the two systems are decoupled after one system is acted on by a random unitary and a channel. This theorem is useful in analyzing the performance of a random quantum error correction scheme, which is shown below.
\begin{lemma}[Decoupling theorem~\cite{Dupuis2014Decoupling}]\label{lemma:haardecoupling}
Given quantum state $\rho_{SR}$ and quantum channel $\mathcal{T}_{S\rightarrow E}$, denote the reduced density matrix of $\rho_{SR}$ on system $R$ as $\rho_R$ and the Choi-Jamio{\l}kowski representation of $\mathcal{T}_{S\rightarrow E}$ as $\tau_{SE}$ with reduced density matrix on system $E$ as $\tau_E$. We obtain that,
\begin{equation}
\mathbb{E}_{U\sim \mathfrak{U}_n}\Vert \mathcal{T}_{S\rightarrow E}(U_S\rho_{SR}U_{S}^{\dagger}) - \tau_E\otimes \rho_R \Vert_1
\leq 2^{-\frac{1}{2}\widetilde{H}^{\uparrow}_{2}(S|R)_{\rho_{SR}}-\frac{1}{2}\widetilde{H}^{\uparrow}_{2}(S|E)_{\tau_{SE}}},
\end{equation}
and
\begin{equation}
\mathbb{E}_{U\sim \mathfrak{U}_n}\Vert \mathcal{T}_{S\rightarrow E}(U_S\rho_{SR}U_{S}^{\dagger}) - \tau_E\otimes \rho_R \Vert_1
\leq 2^{-\frac{1}{2}\widetilde{H}^{\uparrow\delta}_{\min}(S|R)_{\rho_{SR}}-\frac{1}{2}\widetilde{H}^{\uparrow\delta}_{\min}(S|E)_{\tau_{SE}}}+12\delta,
\end{equation}
where $\mathbb{E}_{U\sim\mathfrak{U}_n}$ denotes the expectation over the Haar random unitary operations. This result also holds for a unitary $2$-design group instead of $\mathfrak{U}_n$~\cite{Szehr2013Decoupling}.
\end{lemma}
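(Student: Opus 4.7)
The plan is to follow the standard variance/twirling strategy. Let $M_U := \mathcal{T}_{S\rightarrow E}(U_S\rho_{SR}U_S^\dagger) - \tau_E\otimes\rho_R$, and pick an auxiliary product state $\sigma = \zeta_E\otimes\sigma_R$, where $\zeta_E$ attains $\widetilde{H}^{\uparrow}_{2}(S|E)_{\tau_{SE}}$ and $\sigma_R$ attains $\widetilde{H}^{\uparrow}_{2}(S|R)_{\rho_{SR}}$. Applying Lemma~\ref{lemma:1normbound} and Jensen's inequality on the concave square root gives
\begin{equation}
\mathbb{E}_U\Vert M_U\Vert_1 \;\leq\; \sqrt{\mathbb{E}_U \tr\bigl(\sigma^{-1/4} M_U\sigma^{-1/2} M_U^\dagger \sigma^{-1/4}\bigr)}.
\end{equation}
Because $\mathfrak{U}_n$ is a unitary $1$-design, $\mathbb{E}_U \mathcal{T}_{S\rightarrow E}(U_S\rho_{SR}U_S^\dagger) = \tau_E\otimes\rho_R$, so the right-hand side is exactly the $\sigma$-weighted variance of $\mathcal{T}_{S\rightarrow E}(U_S\rho_{SR}U_S^\dagger)$: the cross terms in the expansion cancel and only a ``second moment minus squared mean'' remains.

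Next I would evaluate this second moment. Writing $\mathcal{T}_{S\rightarrow E}$ through its Choi state $\tau_{SE}$ and introducing an auxiliary copy of $S$, the expression reduces to an expectation of $U^{\otimes 2}\rho_{SR}^{\otimes 2}U^{\dagger\otimes 2}$ evaluated against a fixed operator built from $\tau_{SE}^{\otimes 2}$ and $\sigma^{-1/4}$. Applying the Haar second-moment formula of Eq.~\eqref{eq:secondtwirling} produces a linear combination of $\mathbb{I}_{S^{\otimes 2}}$ and the swap $F_{S^{\otimes 2}}$. The identity part cancels exactly against the ``mean-squared'' subtraction, while the swap part collapses via the swap-trick $\tr(F\cdot X\otimes Y)=\tr(XY)$ into a product of two traces on $R\otimes R$ and $E\otimes E$; by the choice of $\sigma_R$ and $\zeta_E$ these traces are exactly $2^{-\widetilde{H}^{\uparrow}_{2}(S|R)_{\rho_{SR}}}$ and $2^{-\widetilde{H}^{\uparrow}_{2}(S|E)_{\tau_{SE}}}$, yielding the first bound after taking the square root. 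Since only Eq.~\eqref{eq:secondtwirling} is used, the same argument goes through when $\mathfrak{U}_n$ is replaced by any unitary $2$-design.

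For the smoothed bound I would use a standard smoothing argument: pick $\hat{\rho}\in\mathcal{B}^{\delta}(\rho_{SR})$ and $\hat{\tau}\in\mathcal{B}^{\delta}(\tau_{SE})$ realizing $\widetilde{H}^{\uparrow\delta}_{\min}(S|R)$ and $\widetilde{H}^{\uparrow\delta}_{\min}(S|E)$, run the non-smooth bound on $\hat{\rho},\hat{\tau}$, and absorb the replacement cost via the triangle inequality in trace norm together with Eq.~\eqref{eq:tracedistanceineq} and monotonicity of the purified distance under CPTP maps; a careful bookkeeping of these triangle steps (for both the perturbation of the state and of the Choi state of the channel) produces the $12\delta$ correction. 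A final application of Lemma~\ref{lemma:entropymonotone2min} converts $\widetilde{H}^{\uparrow}_{2}$ on the smoothed surrogates back to $\widetilde{H}^{\uparrow\delta}_{\min}$, closing the proof. The main technical obstacle is the swap-term calculation: one must recognize, after contracting $S^{\otimes 2}$ with the swap, that the residual trace over $R\otimes R$ and $E\otimes E$ factorizes because $\sigma$ does, and that each factor matches the Choi-state definition of the ``up'' conditional collision entropy precisely when the optimal $\sigma_R,\zeta_E$ are inserted; the rest of the proof is mechanical.
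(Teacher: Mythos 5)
Your proposal is correct and reconstructs the standard argument of Ref.~\cite{Dupuis2014Decoupling}: the paper itself only cites this lemma without reproving it, and your strategy (Lemma~\ref{lemma:1normbound} plus Jensen, the $1$-design cancellation of cross terms, the second-moment twirl of Eq.~\eqref{eq:secondtwirling}, optimal choice of $\zeta_E\otimes\sigma_R$ for the $\uparrow$-entropies, and the triangle-inequality smoothing for the $12\delta$ term) is exactly the template the paper itself adapts in Appendix~\ref{app:proof_nonsmooth} for Theorems~\ref{thm:nonsmoothdecoupling}--\ref{thm:smoothdecoupling}. The one imprecision is the claim that the identity part of the twirl ``cancels exactly'' against the mean-squared subtraction: the twirl coefficient of $\id_{S^{\otimes 2}}$ is $\frac{\tr(O)-|S|^{-1}\tr(OF)}{|S|^{2}-1}$ rather than $|S|^{-2}\tr(O)$, so the cancellation is only an inequality, requiring the positivity of the swap overlaps and the purity bounds of Lemma~\ref{lemma:purity} to show the identity contribution is dominated by the mean-squared term and the swap contribution by $\tr(\widetilde{\rho}_{SR}^{2})\tr(\widetilde{\tau}_{SE}^{2})$.
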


The term $\widetilde{H}^{\uparrow}_{2}(S|R)_{\rho_{SR}}$ characterizes the initial correlation between the system and the reference part, and $\widetilde{H}^{\uparrow}_{2}(S|E)_{\tau_{SE}}$ characterizes how the channel $\mathcal{T}_{S\rightarrow E}$ preserves the correlation. When applying the decoupling theorem to AQEC, one can view the state $\rho_{SR}$ to be the initial logical state, the unitary operation $U_S$ to be the encoding map, and the channel $\mathcal{T}_{S\rightarrow E}$ to be the complementary channel from system to environment. Then the decoupling error on the right-hand side can be viewed as the Choi error.

If the system is twirled by an $\varepsilon$-approximate 2-design $\mathfrak{S}$ instead of $\mathfrak{U}_n$, we also have a result~\cite{Szehr2013Decoupling}.

\begin{lemma}[Decoupling theorem for an approximate 2-design~\cite{Szehr2013Decoupling}]\label{lemma:appro2designdecoup}
Given quantum state $\rho_{SR}$ and quantum channel $\mathcal{T}_{S\rightarrow E}$, denote the reduced density matrix of $\rho_{SR}$ on system $R$ as $\rho_R$ and the Choi-Jamio{\l}kowski representation of $\mathcal{T}_{S\rightarrow E}$ as $\tau_{SE}$ with reduced density matrix on system $E$ as $\tau_E$. We obtain that,
\begin{equation}
\mathbb{E}_{U\sim \mathfrak{S}}\Vert \mathcal{T}_{S\rightarrow E}(U_S\rho_{SR}U_{S}^{\dagger}) - \tau_E\otimes \rho_R \Vert_1
\leq \sqrt{1+4\varepsilon \abs{S}^4}2^{-\frac{1}{2}\widetilde{H}^{\uparrow}_{\min}(S|R)_{\rho_{SR}}-\frac{1}{2}\widetilde{H}^{\uparrow}_{\min}(S|E)_{\tau_{SE}}},
\end{equation}
where $\mathbb{E}_{U\sim\mathfrak{S}}$ denotes the expectation over $\mathfrak{S}$, and $\varepsilon$ is the additive error of the approximate design.
\end{lemma}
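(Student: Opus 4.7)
The plan is to follow the standard two-step template used to prove decoupling theorems (as in Dupuis~\cite{Dupuis2014Decoupling} and Szehr \emph{et al.}~\cite{Szehr2013Decoupling}), modified to accommodate the approximate-design error $\varepsilon$. Write $M(U) = \mathcal{T}_{S\rightarrow E}(U_S\rho_{SR}U_S^\dagger) - \tau_E \otimes \rho_R$. The target is to bound $\mathbb{E}_{U\sim\mathfrak{S}}\|M(U)\|_1$. By Lemma~\ref{lemma:1normbound} applied with reference operator $\sigma = \sigma_E \otimes \sigma_R$ (where $\sigma_R,\sigma_E$ are arbitrary states, to be chosen later as the optimizers appearing in the $\widetilde{H}^\uparrow_{\min}$ definitions), followed by Jensen's inequality,
\begin{equation}
\mathbb{E}_{U\sim\mathfrak{S}}\|M(U)\|_1 \le \sqrt{\mathbb{E}_{U\sim\mathfrak{S}}\tr\bigl[\sigma^{-1/4}M(U)\sigma^{-1/2}M(U)^\dagger \sigma^{-1/4}\bigr]}.
\end{equation}
Expanding $M(U)M(U)^\dagger$ produces four terms, and the first-order twirl (guaranteed since $\mathfrak{S}$ contains a 1-design, or more precisely follows from the fact that an approximate 2-design is also an approximate 1-design with the same error) causes the cross terms to exactly cancel the reference-state term up to an error controlled by $\varepsilon$. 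Thus the problem reduces to evaluating the expectation of a single trace quadratic in $U\rho U^\dagger$, which is a second-moment quantity involving $U^{\otimes 2}$.

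Next I would exploit the approximate 2-design property. The Haar-random version of the same expectation is precisely what Dupuis analyses: using Schur--Weyl duality on $\mathcal{H}_S^{\otimes 2}$ and a careful rewrite in terms of the Choi state $\tau_{SE}$ and the marginal $\rho_{SR}$, one gets the Haar bound
\begin{equation}
\mathbb{E}_{U\sim\mathfrak{U}_{|S|}}\tr\bigl[\sigma^{-1/4}M(U)\sigma^{-1/2}M(U)^\dagger \sigma^{-1/4}\bigr] \le 2^{-\widetilde{H}^\uparrow_{\min}(S|R)_{\rho_{SR}}-\widetilde{H}^\uparrow_{\min}(S|E)_{\tau_{SE}}},
\end{equation}
after optimally choosing $\sigma_R$ and $\sigma_E$. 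To pass from Haar to the approximate 2-design $\mathfrak{S}$, I would use the hypothesis $\|\Phi^2_{\mathfrak{S}}-\Phi^2_{\mathfrak{U}}\|_\diamond \le \varepsilon$. Writing the quadratic-in-$U$ trace as the evaluation of the twirl $\Phi^2$ on a suitable operator and then taking the diamond-norm difference on $U^{\otimes 2}$, the difference is bounded by $\varepsilon$ times the appropriate operator/trace norm of that operator; this produces an overall multiplicative overhead $(1+4\varepsilon|S|^4)$ with the $|S|^4$ factor arising from the dimensional cost of converting the diamond-norm bound into a bound on a specific trace expression involving $\sigma^{-1/4}$-weighted operators on the $2$-copy space $\mathcal{H}_S^{\otimes 2}$.

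Combining the Haar bound with the $\varepsilon$-correction gives
\begin{equation}
\mathbb{E}_{U\sim\mathfrak{S}}\tr\bigl[\sigma^{-1/4}M(U)\sigma^{-1/2}M(U)^\dagger \sigma^{-1/4}\bigr] \le (1+4\varepsilon|S|^4)\, 2^{-\widetilde{H}^\uparrow_{\min}(S|R)_{\rho_{SR}}-\widetilde{H}^\uparrow_{\min}(S|E)_{\tau_{SE}}},
\end{equation}
and the claimed bound follows after taking the square root.

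The main obstacle I foresee is the bookkeeping in the second step: tracking precisely how the additive diamond-norm error $\varepsilon$ of the approximate 2-design propagates through the Schur--Weyl-style reduction to yield the explicit $4\varepsilon|S|^4$ prefactor, since one must carefully choose which operator to apply the diamond-norm bound against and then convert trace-class bounds into the weighted trace appearing above. Everything else (the 1-design cancellation of the cross terms, the identification of the Haar second moment with the min-entropy expression via Lemma~\ref{lemma:entropydual}, and the final optimization over $\sigma_R,\sigma_E$) is essentially the same calculation as in the Haar case and should go through verbatim.
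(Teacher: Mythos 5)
The paper does not actually prove this lemma: it is imported verbatim from Ref.~\cite{Szehr2013Decoupling} and used as a black box (indeed, the whole point of Section IV is that this lemma's $\sqrt{1+4\varepsilon\abs{S}^4}$ prefactor is too weak for the paper's purposes, motivating the new Theorems 1--3). So there is no in-paper proof to compare against; what can be said is that your plan correctly reconstructs the argument of the cited reference, and its skeleton (Lemma~\ref{lemma:1normbound} plus Jensen, expansion of the square, first-moment cancellation of the cross terms, reduction to a two-copy twirl) is exactly the one the paper itself deploys in Appendix~\ref{app:proof_nonsmooth} for Theorem~\ref{thm:nonsmoothdecoupling}.

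Two small points of precision. First, the Haar second-moment computation naturally yields the conditional \emph{collision} entropy $\widetilde{H}_2$, not $\widetilde{H}_{\min}$; the min-entropy form follows either from the monotonicity $\widetilde{H}^{\downarrow}_2 \geq \widetilde{H}^{\uparrow}_{\min}$ (Lemma~\ref{lemma:entropymonotone2min}) or, as you implicitly do, by fixing $\sigma_R,\sigma_E$ to be the min-entropy optimizers and using the operator inequality $\rho_{AB}\leq 2^{-\widetilde{H}^{\uparrow}_{\min}}\id_A\otimes\sigma_B$ to bound the weighted purity; either route works but should be stated. Second, you need not treat the first-moment cross terms separately from the second-moment term: since $\tr[\widetilde{M}(U)^2]$ is entirely a degree-two polynomial in $U\otimes U$ (the linear cross term is a marginal of the two-copy twirl), the single diamond-norm bound $\Vert\Phi^2_{\mathfrak{S}}-\Phi^2_{\mathfrak{U}}\Vert_\diamond\leq\varepsilon$ controls the deviation of the whole expression at once, which is how the $4\varepsilon\abs{S}^4$ factor is extracted in one step rather than via a separate "1-design cancellation up to $\varepsilon$" argument. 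The only genuinely unfinished step is the one you flag yourself — tracking the dimension factors that turn the diamond-norm error into the explicit $4\varepsilon\abs{S}^4$ — and that is precisely the technical content of the cited reference, so your proposal is a correct plan rather than a complete proof.
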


Due to the coefficient $4\varepsilon \abs{S}^4$, for a $k$-qubit system $S$, $\varepsilon$ needs to be exponentially small with $k$ to make the decoupling error vanish. Thus, an approximate 2-design ensemble with polynomially decreasing error is not guaranteed to have a good performance in quantum error correction.

In this work, for the collision entropy, we will usually consider the case of arrow down. For the min-entropy and the max-entropy, we will usually consider the case of arrow up. For simplicity, below without additional clarification, we denote
\begin{align}
H_{\mathrm{min}}(A|B)_{\rho} &= \widetilde{H}^{\uparrow}_{\mathrm{min}}(A|B)_{\rho};\\
H_2(A|B)_{\rho} &= \widetilde{H}^{\downarrow}_{2}(A|B)_{\rho};\\
H_{\mathrm{max}}(A|B)_{\rho}&=\widetilde{H}^{\uparrow}_{\mathrm{max}}(A|B)_{\rho}.
\end{align}
The smooth conditional entropies are modified correspondingly.

\section{Decoupling theorem and AQEC for 1D double-layer blocked low-depth circuits}\label{app:decoupling}
In this section, we first prove the non-smooth decoupling theorem or Theorem~\ref{thm:nonsmoothdecoupling} and then use Theorem~\ref{thm:nonsmoothdecoupling} to prove the smooth decoupling theorem or Theorem~\ref{thm:smoothdecoupling}. Then, we apply the decoupling theorem for 1D double-layer blocked low-depth circuits to local and correlated noises. After that, we prove the AQEC performance against the fixed-number random erasure errors for 1D low-depth circuits. Subsequently, we give the proof of Proposition~\ref{prop:pauli_twirling}. We also present the AQEC results with random Clifford encoding and block-encoding methods as a comparison.

\subsection{Proof of non-smooth decoupling theorem or Theorem~\ref{thm:nonsmoothdecoupling}}\label{app:proof_nonsmooth}
\begin{proof}
Based on Lemma~\ref{lemma:1normbound}, for any quantum states $\sigma_E\in \mathcal{D}(\mathcal{H}_E)$ and $\zeta_R\in \mathcal{D}(\mathcal{H}_R)$,
\begin{equation}
\Vert \mathcal{T}_{S\rightarrow E}(U_S\rho_{SR}U_{S}^{\dagger}) - \tau_E\otimes \rho_R \Vert_1 \leq \sqrt{\tr[ \left( (\sigma_E\otimes \zeta_R)^{-1/4}(\mathcal{T}_{S\rightarrow E}(U_S\rho_{SR}U_{S}^{\dagger}) - \tau_E\otimes \rho_R) (\sigma_E\otimes \zeta_R)^{-1/4} \right)^2 ]}.
\end{equation}
We define completely positive map $\widetilde{\mathcal{T}}_{S\rightarrow E} = \sigma_E^{-1/4}\mathcal{T}_{S\rightarrow E} \sigma_E^{-1/4}$ with the corresponding Choi-Jomio{\l}kowski representation $\widetilde{\tau}_{SE} = \sigma_E^{-1/4}\tau_{SE} \sigma_E^{-1/4}$ and $\widetilde{\rho}_{SR} = \zeta_R^{-1/4}\rho_{SR}\zeta_R^{-1/4}$. Recall that $S = S_1S_2\cdots S_{2N}$ and $R = R_1R_2\cdots R_{2N}$. In the following, we may use $A\subseteq [2N]$ or $S_A$ to represent the subsystem $\bigcup_{i\in A}S_i$. Also, $\rho_{SR}$ has a structure of tensor product, $\rho_{SR} = \bigotimes_{i=1}^{2N}\rho_{S_iR_i}$. Below, we let $\zeta_R$ also have a tensor product structure:
\begin{equation}
\zeta_R = \bigotimes_{i=1}^{2N} \zeta_{R_i}.
\end{equation}
Thanks to the tensor product structure of $\rho_{SR}$ and $\zeta_R$, we have that
\begin{equation}
\widetilde{\rho}_{SR} = \bigotimes_{i=1}^{2N}\widetilde{\rho}_{S_iR_i},
\end{equation}
where $\widetilde{\rho}_{S_iR_i} = \zeta_{R_i}^{-1/4}\rho_{S_iR_i}\zeta_{R_i}^{-1/4}$.

We then get
\begin{equation}
\Vert \mathcal{T}_{S\rightarrow E}(U_S\rho_{SR}U_{S}^{\dagger}) - \tau_E\otimes \rho_R \Vert_1 \leq \sqrt{\tr[ \left( \widetilde{\mathcal{T}}_{S\rightarrow E}(U_S\widetilde{\rho}_{SR}U_{S}^{\dagger}) - \widetilde{\tau}_E\otimes \widetilde{\rho}_R  \right)^2 ]}.
\end{equation}

Using Jensen's inequality, we have that
\begin{equation}
\mathbb{E}_U\Vert \mathcal{T}_{S\rightarrow E}(U_S\rho_{SR}U_{S}^{\dagger}) - \tau_E\otimes \rho_R \Vert_1 \leq \sqrt{\mathbb{E}_U\tr[ \left( \widetilde{\mathcal{T}}_{S\rightarrow E}(U_S\widetilde{\rho}_{SR}U_{S}^{\dagger}) - \widetilde{\tau}_E\otimes \widetilde{\rho}_R  \right)^2 ]}.
\end{equation}
Below, we focus on the term inside the square root.
\begin{equation}
\begin{split}
&\mathbb{E}_U\tr[ \left( \widetilde{\mathcal{T}}_{S\rightarrow E}(U_S\widetilde{\rho}_{SR}U_{S}^{\dagger}) - \widetilde{\tau}_E\otimes \widetilde{\rho}_R  \right)^2]\\
=&\mathbb{E}_U\tr[ \left( \widetilde{\mathcal{T}}_{S\rightarrow E}(U_S\widetilde{\rho}_{SR}U_{S}^{\dagger}) \right)^2] + \tr[\left( \widetilde{\tau}_E\otimes \widetilde{\rho}_R  \right)^2] - 2\mathbb{E}_U\tr[\widetilde{\mathcal{T}}_{S\rightarrow E}(U_S\widetilde{\rho}_{SR}U_{S}^{\dagger}) (\widetilde{\tau}_E\otimes \widetilde{\rho}_R)].
\end{split}
\end{equation}

Since $U$ is randomly sampled from $\mathfrak{C}_n^{\varepsilon}$, which forms a unitary 1-design, we have
\begin{equation}
\mathbb{E}_U\widetilde{\mathcal{T}}_{S\rightarrow E}(U_S\widetilde{\rho}_{SR}U_{S}^{\dagger}) = \widetilde{\tau}_E\otimes \widetilde{\rho}_R,
\end{equation}
and
\begin{equation}
\tr[\left( \widetilde{\tau}_E\otimes \widetilde{\rho}_R  \right)^2] - 2\mathbb{E}_U\tr[\widetilde{\mathcal{T}}_{S\rightarrow E}(U_S\widetilde{\rho}_{SR}U_{S}^{\dagger}) (\widetilde{\tau}_E\otimes \widetilde{\rho}_R)] = -\tr(\widetilde{\tau}^2_E)\tr(\widetilde{\rho}^2_R).
\end{equation}
For the remaining part $\mathbb{E}_U\tr[ \left( \widetilde{\mathcal{T}}_{S\rightarrow E}(U_S\widetilde{\rho}_{SR}U_{S}^{\dagger}) \right)^2]$, we use $\tr A^2 = \tr A^{\otimes 2}F$ to evaluate it with $F$ a SWAP operator and separate the two layers of the encoding unitary gate. Then,
\begin{equation}
\begin{split}
\mathbb{E}_U\tr[ \left( \widetilde{\mathcal{T}}_{S\rightarrow E}(U_S\widetilde{\rho}_{SR}U_{S}^{\dagger}) \right)^2]
&= \tr_S( \mathbb{E}_{U_2}[U_2^{\otimes 2} \mathbb{E}_{U_1}[U_1^{\otimes 2}\tr_R(\widetilde{\rho}_{SR}^{\otimes 2} F_R)U_1^{\dagger \otimes 2}] U_2^{\dagger\otimes 2}](\widetilde{\mathcal{T}}_{S\rightarrow E}^{\dagger})^{\otimes 2}(F_E))\\
&= \tr_S( M_2M_1( \tr_R(\widetilde{\rho}_{SR}^{\otimes 2} F_R) )  (\widetilde{\mathcal{T}}_{S\rightarrow E}^{\dagger})^{\otimes 2}(F_E)),
\end{split}
\end{equation}
where $U_1$ and $U_2$ are the first and second layers of the encoding unitary $U$, respectively, and $\widetilde{\mathcal{T}}_{S\rightarrow E}^{\dagger}$ is the adjoint channel of $\widetilde{\mathcal{T}}_{S\rightarrow E}$. We also define $M_i = \mathbb{E}_{U_i} U_i^{\otimes 2}\cdot U_i^{\dagger\otimes 2}$ as the second-order twirling operation of layer $i$. Thus, the key technical part is to evaluate the action of $M_2M_1$.

Note that when $\xi=1$, $U_i$ is generated by two-qubit gates. In this case, $(M_2M_1)^{l}$ simulates $2l$-depth 1D brickwork circuits~\cite{Dalzell2022Anticoncentrate}. In this part, we focus mainly on the double-layer blocked structure and $l=1$, which enables a simpler evaluation. The case of 1D brickwork circuits can be analyzed using the domain wall approach~\cite{Dalzell2022Anticoncentrate}, which we will discuss in Appendix~\ref{appendssc:1DLRC}.

In the following, for simplicity, we set $O_{SR} = \tr_R(\widetilde{\rho}_{SR}^{\otimes 2} F_R)$ and $O_{SE} = (\widetilde{\mathcal{T}}_{S\rightarrow E}^{\dagger})^{\otimes 2}(F_E)$. We also denote $O_{S_iR_i} = \tr_{R_i}(\widetilde{\rho}_{S_iR_i}^{\otimes 2} F_{R_i})$. Thus, $O_{SR} = \bigotimes_{i}O_{S_iR_i}$. Our task is to evaluate $\tr(M_2M_1(O_{SR})O_{SE})$.

Note that $S = S_1S_2\cdots S_{2N}$ and $R = R_1R_2\cdots R_{2N}$. Based on Eq.~\eqref{eq:secondtwirling} and the tensor product structures of $U_1$, we have that
\begin{equation}
M_1(O_{SR}) = \mathbb{E}_{U_1}[U_1^{\otimes 2}O_{SR}U_1^{\dagger \otimes 2}] = \frac{1}{[(2^{2\xi})^2-1]^N} \sum_{\vec{\gamma}\in \{\id\otimes\id, F\otimes F\}^{\otimes N} } \tr(O_{SR} g_{\vec{\gamma}})\vec{\gamma} = \sum_{\vec{\gamma}\in \{\id\otimes\id, F\otimes F\}^{\otimes N} }c_{\vec{\gamma}}\vec{\gamma}.
\end{equation}
Here, $\id$ is the identity operation on two copies of a subsystem $i$, and $F$ is the SWAP operation on two copies of a subsystem $i$; $\vec{\gamma}$ can be viewed as a vector with $2N$ components. The $j$-th component of $g_{\vec{\gamma}}$, denoted as $g_{\gamma_j}$ satisfies: $g_{\gamma_{2j-1}}\otimes g_{\gamma_{2j}} = \id\otimes\id-2^{-2\xi}F\otimes F$ if $\gamma_{2j-1}\otimes \gamma_{2j} = \id\otimes\id$, and $g_{\gamma_{2j-1}}\otimes g_{\gamma_{2j}} = F\otimes F-2^{-2\xi}\id\otimes \id$ if $g_{\gamma_{2j-1}}\otimes g_{\gamma_{2j}} = F\otimes F$ where $\gamma_j$ is the $j$-th component of $\Vec{\gamma}$. Using $O_{SR} = \bigotimes_i O_{S_iR_i}$ and Lemma~\ref{lemma:purity} below, it is straightforward to upper bound the coefficient $c_{\vec{\gamma}}$:
\begin{equation}\label{eq:OSRinequality}
c_{\vec{\gamma}} = \frac{1}{[(2^{2\xi})^2-1]^N} \tr(O_{SR}g_{\vec{\gamma}})\leq (2^{-2\xi})^{2N}\tr(O_{SR}\vec{\gamma}) = 2^{-2n}\tr(O_{SR}\vec{\gamma}).
\end{equation}

\begin{lemma}[Lemma 3.6 in Ref.~\cite{Dupuis2014Decoupling}]\label{lemma:purity}
Given a Hilbert space $\mathcal{H}_{AB}$, let $\widetilde{\rho}_{AB}$ be a non-negative operation on $\mathcal{H}_{AB}$. We have that
\begin{equation}
\frac{1}{\abs{A}}\leq \frac{\tr(\widetilde{\rho}^2_{AB})}{\tr(\widetilde{\rho}^2_B)}\leq \abs{A}.
\end{equation}
\end{lemma}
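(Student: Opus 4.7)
The plan is to prove the two bounds separately. The lower bound follows from Cauchy--Schwarz for the Hilbert--Schmidt inner product, while the upper bound follows from a spectral decomposition combined with the purity lower bound for the marginal of a pure bipartite state.

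For the lower bound $\tr(\widetilde{\rho}_B^2) \le \abs{A}\,\tr(\widetilde{\rho}_{AB}^2)$, I would first use $\tr_A \widetilde{\rho}_{AB}=\widetilde{\rho}_B$ to rewrite $\tr(\widetilde{\rho}_B^2)=\tr_{AB}\bigl(\widetilde{\rho}_{AB}(\id_A\otimes\widetilde{\rho}_B)\bigr)$. Cauchy--Schwarz then gives
\begin{equation*}
\tr(\widetilde{\rho}_B^2) \le \sqrt{\tr(\widetilde{\rho}_{AB}^2)}\,\sqrt{\tr\bigl((\id_A\otimes\widetilde{\rho}_B)^2\bigr)} = \sqrt{\abs{A}\,\tr(\widetilde{\rho}_{AB}^2)\,\tr(\widetilde{\rho}_B^2)},
\end{equation*}
and squaring and dividing by $\tr(\widetilde{\rho}_B^2)$ yields the claim.

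For the upper bound $\tr(\widetilde{\rho}_{AB}^2) \le \abs{A}\,\tr(\widetilde{\rho}_B^2)$, I would spectrally decompose $\widetilde{\rho}_{AB}=\sum_k \lambda_k \ketbra{v_k}$ with $\lambda_k \ge 0$ and orthonormal $\{\ket{v_k}\}$, and set $\rho_B^{(k)}=\tr_A \ketbra{v_k}$, which is a normalized density operator on $B$. Expanding $\widetilde{\rho}_B=\sum_k \lambda_k \rho_B^{(k)}$ gives
\begin{equation*}
\tr(\widetilde{\rho}_B^2) = \sum_{k,k'}\lambda_k\lambda_{k'}\,\tr(\rho_B^{(k)}\rho_B^{(k')}) \ge \sum_k \lambda_k^2\,\tr\bigl((\rho_B^{(k)})^2\bigr),
\end{equation*}
where we drop the off-diagonal terms, all of which are non-negative since $\rho_B^{(k)}$ and $\rho_B^{(k')}$ are PSD. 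Each $\ket{v_k}$ is pure on $AB$, so its Schmidt rank is at most $\min(\abs{A},\abs{B})\le \abs{A}$, giving the marginal purity bound $\tr((\rho_B^{(k)})^2)\ge 1/\abs{A}$. Substituting yields $\tr(\widetilde{\rho}_B^2)\ge (1/\abs{A})\sum_k \lambda_k^2 = \tr(\widetilde{\rho}_{AB}^2)/\abs{A}$.

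The main obstacle is the upper bound. A direct SWAP-trick approach using $\tr(\widetilde{\rho}_{AB}^2)=\tr(\widetilde{\rho}_{AB}^{\otimes 2}(F_A\otimes F_B))$ and $\tr(\widetilde{\rho}_B^2)=\tr(\widetilde{\rho}_{AB}^{\otimes 2}(\id_{AA'}\otimes F_B))$ does not go through: although $\abs{A}\id_{AA'}-F_A\ge 0$, tensoring with the non-positive $F_B$ destroys the operator inequality against the PSD operator $\widetilde{\rho}_{AB}^{\otimes 2}$. The spectral-decomposition argument sidesteps this by exploiting the manifest positivity of the cross-overlaps $\tr(\rho_B^{(k)}\rho_B^{(k')})\ge 0$ in the expansion of $\tr(\widetilde{\rho}_B^2)$, reducing the problem to the elementary purity bound for reduced pure states.
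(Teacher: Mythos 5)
The paper does not actually prove this lemma; it imports it verbatim as Lemma 3.6 of Ref.~\cite{Dupuis2014Decoupling} and uses it as a black box, so there is no in-paper proof to compare against. Judged on its own, your argument is correct and self-contained. The lower bound of the ratio (i.e., $\tr(\widetilde{\rho}_B^2)\le\abs{A}\tr(\widetilde{\rho}_{AB}^2)$) via the identity $\tr(\widetilde{\rho}_B^2)=\tr_{AB}\bigl(\widetilde{\rho}_{AB}(\id_A\otimes\widetilde{\rho}_B)\bigr)$ followed by Hilbert--Schmidt Cauchy--Schwarz is clean, and the degenerate case $\tr(\widetilde{\rho}_B^2)=0$ forces $\widetilde{\rho}_{AB}=0$ by positivity, so the division is harmless. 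The upper bound via the spectral decomposition is also sound: the cross terms $\tr(\rho_B^{(k)}\rho_B^{(k')})$ are non-negative because both factors are positive semidefinite, each eigenvector of $\widetilde{\rho}_{AB}$ is pure on $AB$ so its $B$-marginal has rank at most $\abs{A}$ and hence purity at least $1/\abs{A}$, and summing gives $\tr(\widetilde{\rho}_B^2)\ge\tr(\widetilde{\rho}_{AB}^2)/\abs{A}$. Your remark about why the naive SWAP-operator inequality fails for the upper bound (tensoring $\abs{A}\id_{AA'}-F_{AA'}\ge 0$ with the indefinite $F_{BB'}$ does not preserve the inequality under the trace against $\widetilde{\rho}_{AB}^{\otimes 2}$) is a legitimate observation and correctly identifies why the positivity of the cross-overlaps, rather than an operator inequality, is the mechanism that makes the bound work.
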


The proof of Eq.~\eqref{eq:OSRinequality} is mainly based on
\begin{equation}
\begin{split}
\tr(O_{S_{2i-1}R_{2i-1}}\otimes O_{S_{2i}R_{2i}} (\id\otimes\id-2^{-2\xi}F\otimes F)) &= \tr( \widetilde{\rho}^2_{R_{2i-1}}\widetilde{\rho}^2_{R_{2i}}) - 2^{-2\xi}\tr(\widetilde{\rho}^2_{S_{2i-1}R_{2i-1}}\widetilde{\rho}^2_{S_{2i}R_{2i}})\\
&\leq (1-(2^{-2\xi})^2)\tr( \widetilde{\rho}^2_{R_{2i-1}}\widetilde{\rho}^2_{R_{2i}});\\
\tr(O_{S_{2i-1}R_{2i-1}}\otimes O_{S_{2i}R_{2i}} (F\otimes F-2^{-2\xi}\id\otimes\id)) &= \tr(\widetilde{\rho}^2_{S_{2i-1}R_{2i-1}}\widetilde{\rho}^2_{S_{2i}R_{2i}})- 2^{-2\xi}\tr( \widetilde{\rho}^2_{R_{2i-1}}\widetilde{\rho}^2_{R_{2i}}) \\
&\leq (1-(2^{-2\xi})^2)\tr(\widetilde{\rho}^2_{S_{2i-1}R_{2i-1}}\widetilde{\rho}^2_{S_{2i}R_{2i}}).
\end{split}
\end{equation}
where the inequalities are based on Lemma~\ref{lemma:purity}.

The next step is to evaluate $M_2M_1(O_{SR}) = \sum_{\vec{\gamma}}c_{\vec{\gamma}}M_2(\vec{\gamma})$. Based on the tensor product structure of $U_2$, we can analyze the action of each block part of $M_2$ on $\vec{\gamma}$. Since $U_2 = \bigotimes_{i=1}^{N-1} U_C^{2i,2i+1}$, $M_2$ can be written as $\bigotimes_{i=1}^{N-1} M^{2i,2i+1}$ where $M^{2i,2i+1}$ is the second-order twirling operation on subsystems $S_{2i}S_{2i+1}$. Based on Eq.~\eqref{eq:secondtwirling}, the action of $M^{2i,2i+1}$ on $\{\id, F\}^{\otimes 2}$ is
\begin{align}
M^{2i,2i+1}(\id_{2i}\otimes \id_{2i+1}) &= \id_{2i}\otimes \id_{2i+1},\\
M^{2i,2i+1}(F_{2i}\otimes F_{2i+1}) &= F_{2i}\otimes F_{2i+1},\\
M^{2i,2i+1}(\id_{2i}\otimes F_{2i+1}) &= \eta(\id_{2i}\otimes \id_{2i+1}+F_{2i}\otimes F_{2i+1}),\\
M^{2i,2i+1}(F_{2i}\otimes \id_{2i+1}) &= \eta(\id_{2i}\otimes \id_{2i+1}+F_{2i}\otimes F_{2i+1}).
\end{align}
Here, $\eta = 2^{\xi}/(2^{2\xi}+1)$. Note that only if $\gamma_{2i}\neq \gamma_{2i+1}$, $M^{2i,2i+1}$ can alter $\vec{\gamma}$.

Note that $\vec{\gamma}\in \{\id\otimes \id, F\otimes F\}^{\otimes N}$. We always have $\gamma_{2i-1}=\gamma_{2i}$. Given a vector $\vec{\gamma}\in \{\id\otimes \id, F\otimes F\}^{\otimes N}$, we count all indexes $j$ such that $\gamma_{2j}\neq \gamma_{2j+1}$ and denote this set as $\mathcal{G} = \{j_1, j_2, \cdots, j_g\}$. Obviously, $0\leq g\leq N-1$, and there is $\binom{N-1}{g}$ choices of $\mathcal{G}$ after fixing $g$. We also denote the index set $\mathcal{A} = \{2j,2j+1|j\in \mathcal{G}\}$. We also define $S_{\mathcal{A}} = \bigcup_{i\in \mathcal{A}} S_i$ and $S_{\Bar{\mathcal{A}}} = \bigcup_{i\in \Bar{\mathcal{A}}} S_i$. Then, $M_2$ only alters $\vec{\gamma}$ within the support of $S_{\mathcal{A}}$ and leaves the remained part unchanged. We use $\gamma_{\Bar{\mathcal{A}}}$ to denote this unchanged part. Then,
\begin{equation}
M_2(\vec{\gamma}) = \eta^g \gamma_{\Bar{\mathcal{A}}}\otimes \bigotimes_{j\in \mathcal{G}}( \id_{2j}\otimes \id_{2j+1}+F_{2j}\otimes F_{2j+1} ).
\end{equation}
Thus, the final target can be evaluated by
\begin{equation}
\begin{split}
\tr(M_2M_1(O_{SR})O_{SE}) &= \sum_{\vec{\gamma\in \{\id\otimes \id, F\otimes F\}^{\otimes N}}} c_{\vec{\gamma}} \tr( M_2(\vec{\gamma}) O_{SE}) \\
&= \sum_{\gamma_1 = \id, F}\sum_{g=0}^{N-1} \sum_{\mathcal{G}}  c_{\vec{\gamma}} \tr( M_2(\vec{\gamma}) O_{SE})\\
&= \sum_{\gamma_1 = \id, F}\sum_{g=0}^{N-1} \eta^g \sum_{\mathcal{G}}  c_{\vec{\gamma}}  \tr( \gamma_{\Bar{\mathcal{A}}}\otimes \bigotimes_{j\in \mathcal{G}}( \id_{2j}\otimes \id_{2j+1}+F_{2j}\otimes F_{2j+1} ) O_{SE}).
\end{split}
\end{equation}
Note that given a vector $\vec{\gamma} = F_{A}\otimes \id_{\Bar{A}}$, we always have
\begin{equation}
\begin{split}
\tr(\vec{\gamma}O_{SE}) &= \tr( \widetilde{\mathcal{T}}^{\otimes 2}_{S\rightarrow E}(\vec{\gamma}) F_E )\\
&= 2^{2n}\tr(\widetilde{\tau}^{\otimes 2}_{SE} F_{A}\otimes \id_{\Bar{A}} \otimes F_E )\\
&= 2^{2n}\tr_{AE} (\tr_{\Bar{A}}\widetilde{\tau}_{SE})^2\\
&\geq 0.
\end{split}
\end{equation}
Thus, using Eq.~\eqref{eq:OSRinequality}, we have
\begin{equation}
\begin{split}
\tr(M_2M_1(O_{SR})O_{SE})
\leq& \sum_{\gamma_1 = \id, F}\sum_{g=0}^{N-1} \eta^g \sum_{\mathcal{G}} 2^{-2n} \tr(O_{SR}\vec{\gamma})  \tr( \gamma_{\Bar{\mathcal{A}}}\otimes \bigotimes_{j\in \mathcal{G}}( \id_{2j}\otimes \id_{2j+1}+F_{2j}\otimes F_{2j+1} ) O_{SE})\\
\leq& 2^{-2n}(\tr(O_{SR}\id^{\otimes 2N})\tr(O_{SE}\id^{\otimes 2N}) + \tr(O_{SR}F^{\otimes 2N})\tr(O_{SE}F^{\otimes 2N}))\\
&+\sum_{\gamma_1 = \id, F}\sum_{g=1}^{N-1} (2\eta)^g \sum_{\mathcal{G}} 2^{-2n} \max_{\gamma'_{\mathcal{A}}=\gamma_{\mathcal{A}}, \gamma'_{\Bar{\mathcal{A}}}\in \{\id\otimes \id, F\otimes F\}^{\otimes g}} \tr(O_{SR}\gamma_{\mathcal{A}}\otimes \gamma_{\Bar{\mathcal{A}}})  \tr(O_{SE}\gamma'_{\mathcal{A}}\otimes \gamma'_{\Bar{\mathcal{A}}})\\
\leq& \tr(\widetilde{\tau}^2_E)\tr(\widetilde{\rho}^2_R)+\tr(\widetilde{\tau}^2_{SE})\tr(\widetilde{\rho}^2_{SR})\\
&+ \sum_{\gamma_1 = \id, F}\sum_{g=1}^{N-1} (2\eta)^g \sum_{\mathcal{G}} 2^{-2n} \max_{\gamma'_{\mathcal{A}}=\gamma_{\mathcal{A}}, \gamma'_{\Bar{\mathcal{A}}}\in \{\id\otimes \id, F\otimes F\}^{\otimes g}} \tr(O_{SR}\vec{\gamma'})  \tr(O_{SE}\vec{\gamma'}) \frac{\tr(O_{S_{\mathcal{\Bar{A}}}R_{\mathcal{\Bar{A}}}}\gamma_{\mathcal{\Bar{A}}})}{\tr(O_{S_{\mathcal{\Bar{A}}}R_{\mathcal{\Bar{A}}}}\gamma'_{\mathcal{\Bar{A}}})}\\
\leq& \tr(\widetilde{\tau}^2_E)\tr(\widetilde{\rho}^2_R)+\tr(\widetilde{\tau}^2_{SE})\tr(\widetilde{\rho}^2_{SR})\\
&+ \sum_{\gamma_1 = \id, F}\sum_{g=1}^{N-1} (2\eta)^g \sum_{\mathcal{G}} 2^{-2n} \max_{\vec{\gamma}\in \{\id, F\}^{\otimes 2N}} (\tr(O_{SR}\vec{\gamma})  \tr(O_{SE}\vec{\gamma}))
(\max_i(\frac{\tr(O_{S_iR_i} \id_i)}{\tr(O_{S_iR_i} F_i)}, \frac{\tr(O_{S_iR_i} F_i)}{\tr(O_{S_iR_i} \id_i)}))^g\\
\leq& \tr(\widetilde{\tau}^2_E)\tr(\widetilde{\rho}^2_R)+\tr(\widetilde{\tau}^2_{SE})\tr(\widetilde{\rho}^2_{SR})\\
&+ 2\sum_{g=1}^{N-1} (2\eta\max_i(\frac{\tr(O_{S_iR_i} \id_i)}{\tr(O_{S_iR_i} F_i)}, \frac{\tr(O_{S_iR_i} F_i)}{\tr(O_{S_iR_i} \id_i)}))^g \binom{N-1}{g} \max_{\vec{\gamma}} (2^{-2n}\tr(O_{SR}\vec{\gamma})  \tr(O_{SE}\vec{\gamma}))\\
=& \tr(\widetilde{\tau}^2_E)\tr(\widetilde{\rho}^2_R)+\tr(\widetilde{\tau}^2_{SE})\tr(\widetilde{\rho}^2_{SR})\\
&+ 2\sum_{g=1}^{N-1} (2\eta\max_i(\frac{\tr(\widetilde{\rho}_{R_i}^2)}{\tr(\widetilde{\rho}_{S_iR_i}^2)}, \frac{\tr(\widetilde{\rho}_{S_iR_i}^2)}{\tr(\widetilde{\rho}_{R_i}^2)}))^g \binom{N-1}{g} \max_{A\subseteq [2N]} ( \tr_{AR}(\tr_{\Bar{A}} \widetilde{\rho}_{SR})^2 \tr_{AE}(\tr_{\Bar{A}} \widetilde{\tau}_{SE})^2 ) \\
=& \tr(\widetilde{\tau}^2_E)\tr(\widetilde{\rho}^2_R)+\tr(\widetilde{\tau}^2_{SE})\tr(\widetilde{\rho}^2_{SR}) + c\max_{A\subseteq [2N]} ( \tr_{AR}(\tr_{\Bar{A}} \widetilde{\rho}_{SR})^2 \tr_{AE}(\tr_{\Bar{A}} \widetilde{\tau}_{SE})^2 ),
\end{split}
\end{equation}
where
\begin{equation}
c = 2((1+2\eta\max_i(\max(\frac{\tr(\widetilde{\rho}_{R_i}^2)}{\tr(\widetilde{\rho}_{S_iR_i}^2)},\frac{\tr(\widetilde{\rho}_{S_iR_i}^2)}{\tr(\widetilde{\rho}_{R_i}^2)})))^{N-1}-1).
\end{equation}

Consequently,
\begin{equation}
\begin{split}
&\mathbb{E}_U\Vert \mathcal{T}_{S\rightarrow E}(U_S\rho_{SR}U_{S}^{\dagger}) - \tau_E\otimes \rho_R \Vert_1\\
\leq& \sqrt{\mathbb{E}_U\tr[ \left( \widetilde{\mathcal{T}}_{S\rightarrow E}(U_S\widetilde{\rho}_{SR}U_{S}^{\dagger}) - \widetilde{\tau}_E\otimes \widetilde{\rho}_R  \right)^2 ]}\\
\leq& \sqrt{-\tr(\widetilde{\tau}^2_E)\tr(\widetilde{\rho}^2_R)+\tr(\widetilde{\tau}^2_E)\tr(\widetilde{\rho}^2_R)+\tr(\widetilde{\tau}^2_{SE})\tr(\widetilde{\rho}^2_{SR}) + c \max_{A\subseteq [2N]} ( \tr_{AR}(\tr_{\Bar{A}} \widetilde{\rho}_{SR})^2 \tr_{AE}(\tr_{\Bar{A}} \widetilde{\tau}_{SE})^2 )}\\
\leq& \sqrt{\tr(\widetilde{\tau}^2_{SE})\tr(\widetilde{\rho}^2_{SR}) + c \max_{A\subseteq [2N]} ( \tr_{AR}(\tr_{\Bar{A}} \widetilde{\rho}_{SR})^2 \tr_{AE}(\tr_{\Bar{A}} \widetilde{\tau}_{SE})^2 )}.
\end{split}
\end{equation}

Note that $\widetilde{\rho}_{SR} = \bigotimes_{i=1}^{2N}\widetilde{\rho}_{S_iR_i}$ and $\widetilde{\rho}_{S_iR_i} = \zeta_{R_i}^{-1/4}\rho_{S_iR_i}\zeta_{R_i}^{-1/4}$. We are free to choose any state as $\zeta_{R_i}$. By the definition of conditional collision entropy in Definition~\ref{def:renyientropy}, we can choose $\zeta_{R_i} = \rho_{R_i}$ or $\zeta_{R} = \rho_{R}$ to let
\begin{equation}
\tr(\widetilde{\rho}^2_{R_i}) = 1, \tr(\widetilde{\rho}^2_{S_iR_i}) = 2^{-H_2(S_i|R_i)_{\rho_{S_iR_i}}}.
\end{equation}
Thus,
\begin{equation}
\tr_{AR}(\tr_{\Bar{A}} \widetilde{\rho}_{SR})^2 = \tr_{AR}( \widetilde{\rho}_{AR}^2 ) = 2^{-H_2(A|R)_{\rho_{AR}}}.
\end{equation}

Also, we can choose $\sigma_E = \tau_E$ such that
\begin{equation}
\tr(\widetilde{\tau}^2_{E}) = 1, \tr(\widetilde{\tau}^2_{SE}) = 2^{-H_2(S|E)_{\tau_{SE}}}.
\end{equation}
In this case,
\begin{equation}
\tr_{AE}(\tr_{\Bar{A}} \widetilde{\tau}_{SE})^2 = \tr_{AE}( \widetilde{\tau}_{AE}^2 ) = 2^{-H_2(A|E)_{\tau_{AE}}}.
\end{equation}
Finally, we get
\begin{equation}
\begin{split}
&\mathbb{E}_U\Vert \mathcal{T}_{S\rightarrow E}(U_S\rho_{SR}U_{S}^{\dagger}) - \tau_E\otimes \rho_R \Vert_1\\
\leq& \sqrt{2^{-H_2(S|E)_{\tau_{SE}}-H_2(S|R)_{\rho_{SR}}} + c \max_{A\subseteq [2N]} 2^{-H_2(A|R)_{\rho_{AR}}-H_2(A|E)_{\tau_{AE}}}},
\end{split}
\end{equation}
where
\begin{equation}
c = 2\left((1+2\eta\max_i(\max(2^{-H_2(S_i|R_i)_{\rho_{S_iR_i}}}, 2^{H_2(S_i|R_i)_{\rho_{S_iR_i}}})))^{N-1}-1\right).
\end{equation}

Furthermore, if $\mathcal{T}_{S\rightarrow E}$ has a tensor product structure, $\mathcal{T}_{S\rightarrow E} = \bigotimes_{i=1}^{2N}\mathcal{T}_{S_i\rightarrow E_i}$, then $\widetilde{\tau}_{SE} = \bigotimes_{i=1}^{2N}\widetilde{\tau}_{S_iE_i}$ with $\widetilde{\tau}_{S_iE_i} = \tau_{E_i}^{-1/4}\tau_{S_iE_i} \tau_{E_i}^{-1/4}$.
In this case,
\begin{align}
H_2(A|R)_{\rho_{AR}} &= \sum_{i\in A} H_2(S_i|R_i)_{\rho_{S_iR_i}};\\
H_2(A|E)_{\tau_{AE}} &= \sum_{i\in A} H_2(S_i|E_i)_{\tau_{S_iE_i}}.
\end{align}
Thus, we get the following decoupling result
\begin{equation}
\begin{split}
&\mathbb{E}_U\Vert \mathcal{T}_{S\rightarrow E}(U_S\rho_{SR}U_{S}^{\dagger}) - \tau_E\otimes \rho_R \Vert_1\\
\leq& \sqrt{2^{-\sum_{i=1}^{2N}H_2(S_i|E_i)_{\tau_{S_iE_i}}-\sum_{i=1}^{2N}H_2(S_i|R_i)_{\rho_{S_iR_i}}} + c\prod_{i=1}^{2N} \max(1, 2^{-H_2(S_i|E_i)_{\tau_{S_iE_i}}-H_2(S_i|R_i)_{\rho_{S_iR_i}}})}.
\end{split}
\end{equation}
Proof is done.
\end{proof}

\subsection{Proof of smooth decoupling theorem or Theorem~\ref{thm:smoothdecoupling}}\label{app:smoothing}
\begin{proof}
Theorem~\ref{thm:nonsmoothdecoupling} already tells us that
\begin{equation}
\begin{split}
&\mathbb{E}_U\Vert \mathcal{T}_{S\rightarrow E}(U_S\rho_{SR}U_{S}^{\dagger}) - \tau_E\otimes \rho_R \Vert_1\\
\leq& \sqrt{2^{-\sum_{i=1}^{2N}H_2(S_i|E_i)_{\tau_{S_iE_i}}-\sum_{i=1}^{2N}H_2(S_i|R_i)_{\rho_{S_iR_i}}}+ c\prod_{i=1}^{2N} \max(1, 2^{-H_2(S_i|E_i)_{\tau_{S_iE_i}}-H_2(S_i|R_i)_{\rho_{S_iR_i}}})},
\end{split}
\end{equation}
where
\begin{equation}
c = 2((1+2\eta\max_i(\max(2^{-H_2(S_i|R_i)_{\rho_{S_iR_i}}}, 2^{H_2(S_i|R_i)_{\rho_{S_iR_i}}})))^{N-1}-1).
\end{equation}
The remaining part considers the smoothing of the conditional collision entropy. Take $\hat{\rho}_{S_iR_i}\in \mathcal{B}^{\delta}(\rho_{S_iR_i})$ such that $H^{\delta}_{2}(S_i|R_i)_{\rho_{S_iR_i}} = H_{2}(S_i|R_i)_{\hat{\rho}_{S_iR_i}}$ and $\hat{\tau}_{S_iE_i}\in \mathcal{B}^{\delta}(\tau_{S_iE_i})$ such that $H^{\delta}_{2}(S_i|E_i)_{\tau_{S_iE_i}} = H_{2}(S_i|E_i)_{\hat{\tau}_{S_iE_i}}$. By Eq.~\eqref{eq:tracedistanceineq}, we have $\Vert \rho_{S_iR_i} - \hat{\rho}_{S_iR_i} \Vert_1\leq 2\delta$ and $\Vert \tau_{S_iE_i} - \hat{\tau}_{S_iE_i} \Vert_1\leq 2\delta$. Thus,
\begin{equation}
\begin{split}
&\sqrt{2^{-\sum_{i=1}^{2N}H^{\delta}_{2}(S_i|E_i)_{\tau_{S_iE_i}}-\sum_{i=1}^{2N}H^{\delta}_{2}(S_i|R_i)_{\rho_{S_iR_i}}} + c'\prod_{i=1}^{2N} \max(1, 2^{-H^{\delta}_{2}(S_i|E_i)_{\tau_{S_iE_i}}-H^{\delta}_{2}(S_i|R_i)_{\rho_{S_iR_i}}})}\\
=&\sqrt{2^{-\sum_{i=1}^{2N}H_{2}(S_i|E_i)_{\hat{\tau}_{S_iE_i}}-\sum_{i=1}^{2N}H_{2}(S_i|R_i)_{\hat{\rho}_{S_iR_i}}}+ c'\prod_{i=1}^{2N} \max(1, 2^{-H_{2}(S_i|E_i)_{\hat{\tau}_{S_iE_i}}-H_{2}(S_i|R_i)_{\hat{\rho}_{S_iR_i}}})}\\
\geq& \mathbb{E}_U\Vert \hat{\mathcal{T}}_{S\rightarrow E}(U_S\hat{\rho}_{SR}U_{S}^{\dagger}) - \hat{\tau}_E\otimes \hat{\rho}_R \Vert_1\\
\geq& \mathbb{E}_U\Vert \hat{\mathcal{T}}_{S\rightarrow E}(U_S\hat{\rho}_{SR}U_{S}^{\dagger}) - \tau_E\otimes \rho_R \Vert_1 - \Vert \tau_E\otimes \rho_R - \hat{\tau}_E\otimes \hat{\rho}_R \Vert_1\\
\geq& \mathbb{E}_U\Vert \hat{\mathcal{T}}_{S\rightarrow E}(U_S\hat{\rho}_{SR}U_{S}^{\dagger}) - \tau_E\otimes \rho_R \Vert_1 - \sum_{i=1}^{2N}\Vert \tau_{E_i}\otimes \rho_{R_i} - \hat{\tau}_{E_i}\otimes \hat{\rho}_{R_i} \Vert_1\\
\geq& \mathbb{E}_U\Vert \hat{\mathcal{T}}_{S\rightarrow E}(U_S\hat{\rho}_{SR}U_{S}^{\dagger}) - \tau_E\otimes \rho_R \Vert_1 - \sum_{i=1}^{2N}(\Vert \tau_{E_i} - \hat{\tau}_{E_i} \Vert_1 +\Vert \rho_{R_i} - \hat{\rho}_{R_i} \Vert_1)\\
\geq& \mathbb{E}_U\Vert \hat{\mathcal{T}}_{S\rightarrow E}(U_S\hat{\rho}_{SR}U_{S}^{\dagger}) - \tau_E\otimes \rho_R \Vert_1 - \sum_{i=1}^{2N}(\Vert \tau_{S_iE_i} - \hat{\tau}_{S_iE_i} \Vert_1 + \Vert \rho_{S_iR_i} - \hat{\rho}_{S_iR_i} \Vert_1)\\
\geq& \mathbb{E}_U\Vert \hat{\mathcal{T}}_{S\rightarrow E}(U_S\hat{\rho}_{SR}U_{S}^{\dagger}) - \tau_E\otimes \rho_R \Vert_1 - 8N\delta\\
\geq& \mathbb{E}_U\Vert \mathcal{T}_{S\rightarrow E}(U_S\rho_{SR}U_{S}^{\dagger}) - \tau_E\otimes \rho_R \Vert_1 - \mathbb{E}_U\Vert \hat{\mathcal{T}}_{S\rightarrow E}(U_S\hat{\rho}_{SR}U_{S}^{\dagger}) - \hat{\mathcal{T}}_{S\rightarrow E}(U_S\rho_{SR}U_{S}^{\dagger})\Vert_1\\
&- \mathbb{E}_U\Vert \hat{\mathcal{T}}_{S\rightarrow E}(U_S\rho_{SR}U_{S}^{\dagger}) - \mathcal{T}_{S\rightarrow E}(U_S\rho_{SR}U_{S}^{\dagger}) \Vert_1 - 8N\delta,\\
\end{split}
\end{equation}
where the coefficient $c'$ is
\begin{equation}
\begin{split}
c' &= 2((1+2\eta\max_i(\max(2^{-H_{2}(S_i|R_i)_{\hat{\rho}_{S_iR_i}}}, 2^{H_2(S_i|R_i)_{\hat{\rho}_{S_iR_i}}})))^{N-1}-1)\\
&= 2((1+2\eta\max_i(\max(2^{-H^{\delta}_{2}(S_i|R_i)_{\rho_{S_iR_i}}}, 2^{H^{\delta}_2(S_i|R_i)_{\rho_{S_iR_i}}})))^{N-1}-1).
\end{split}
\end{equation}
In the derivation, we repeatedly use the triangle inequality of the one norm. In the fifth line, we utilize the fact that for quantum states $\rho_i$ and $\sigma_i$
\begin{equation}
\begin{split}
\Vert \bigotimes_{i=1}^{r} \rho_i - \bigotimes_{i=1}^{r} \sigma_i \Vert_1
&= \Vert \sum_{i=1}^{r} \bigotimes_{j=1}^{i-1} \rho_j  \otimes (\rho_i - \sigma_i) \otimes  \bigotimes_{j=i+1}^{r} \sigma_j \Vert_1\\
&\leq \sum_{i=1}^{r} \Vert \bigotimes_{j=1}^{i-1} \rho_j  \otimes (\rho_i - \sigma_i) \otimes  \bigotimes_{j=i+1}^{r} \sigma_j \Vert_1\\
&\leq \sum_{i=1}^r \prod_{j=1}^{i-1} \Vert\rho_j\Vert_{\infty}  \Vert \rho_i - \sigma_i \Vert_1 \prod_{j=i+1}^{r} \Vert\sigma_j\Vert_{\infty}\\
&\leq \sum_{i=1}^r \Vert \rho_i - \sigma_i \Vert_1.
\end{split}
\end{equation}
Note that $\Vert AB\Vert_1\leq \Vert A\Vert_1\Vert B\Vert_{\infty}$ and $\Vert \rho\Vert_{\infty}\leq 1$ for quantum state $\rho$.

Meanwhile, let $\delta^+_{SR}-\delta^-_{SR} = \hat{\rho}_{SR}-\rho_{SR}$ where $\delta^+_{SR}$ and $\delta^-_{SR}$ have orthogonal support and they are both positive. Since the two have orthogonal support, we have
\begin{equation}
\tr(\delta^+_{SR}) = \tr(\delta^-_{SR})  = \frac{1}{2}\Vert\hat{\rho}_{SR}-\rho_{SR} \Vert_1 \leq \frac{1}{2} \sum_{i=1}^{2N} \Vert\hat{\rho}_{S_iR_i}-\rho_{S_iR_i} \Vert_1 \leq 2N\delta.\\
\end{equation}
Thus,
\begin{equation}
\begin{split}
&\mathbb{E}_U\Vert \hat{\mathcal{T}}_{S\rightarrow E}(U_S\hat{\rho}_{SR}U_{S}^{\dagger}) - \hat{\mathcal{T}}_{S\rightarrow E}(U_S\rho_{SR}U_{S}^{\dagger})\Vert_1\\
=&\mathbb{E}_U\Vert \hat{\mathcal{T}}_{S\rightarrow E}(U_S\delta^+_{SR}U_{S}^{\dagger}) - \hat{\mathcal{T}}_{S\rightarrow E}(U_S\delta^-_{SR}U_{S}^{\dagger})\Vert_1\\
\leq& \mathbb{E}_U\Vert \hat{\mathcal{T}}_{S\rightarrow E}(U_S\delta^+_{SR}U_{S}^{\dagger})\Vert_1 + \mathbb{E}_U\Vert \hat{\mathcal{T}}_{S\rightarrow E}(U_S\delta^-_{SR}U_{S}^{\dagger})\Vert_1\\
=& \mathbb{E}_U\tr[\hat{\mathcal{T}}_{S\rightarrow E}(U_S\delta^+_{SR}U_{S}^{\dagger})] + \mathbb{E}_U\tr[\hat{\mathcal{T}}_{S\rightarrow E}(U_S\delta^-_{SR}U_{S}^{\dagger})]\\
=& \tr(\delta^+_{SR}+\delta^-_{SR})\\
\leq& 4N\delta.
\end{split}
\end{equation}
In the fifth line, we utilize that $\hat{\mathcal{T}}_{S\rightarrow E}$ is trace-preserving. Similarly, let $\Delta^+_{SE}-\Delta^-_{SE} = \hat{\tau}_{SE}-\tau_{SE}$ where $\Delta^+_{SE}$ and $\Delta^-_{SE}$ have orthogonal support and they are both positive. We also have
\begin{equation}
\tr(\Delta^+_{SE}) = \tr(\Delta^-_{SE}) = \frac{1}{2} \Vert\hat{\tau}_{SE}-\tau_{SE} \Vert_1 \leq \frac{1}{2}\sum_{i=1}^{2N} \Vert\hat{\tau}_{S_iE_i}-\tau_{S_iE_i} \Vert_1 \leq 2N\delta.
\end{equation}
Meanwhile, we denote the Choi-Jamio{\l}kowski representations of $\Delta^+_{SE}$ and $\Delta^-_{SE}$ as $\mathcal{D}^+_{S\rightarrow E}$ and $\mathcal{D}^-_{S\rightarrow E}$, respectively. Thus,
\begin{equation}
\begin{split}
&\mathbb{E}_U\Vert \hat{\mathcal{T}}_{S\rightarrow E}(U_S\rho_{SR}U_{S}^{\dagger}) - \mathcal{T}_{S\rightarrow E}(U_S\rho_{SR}U_{S}^{\dagger}) \Vert_1\\
=&\mathbb{E}_U\Vert \mathcal{D}^+_{S\rightarrow E}(U_S\rho_{SR}U_{S}^{\dagger}) - \mathcal{D}^-_{S\rightarrow E}(U_S\rho_{SR}U_{S}^{\dagger})\Vert_1\\
\leq& \mathbb{E}_U\Vert \mathcal{D}^+_{S\rightarrow E}(U_S\rho_{SR}U_{S}^{\dagger})\Vert_1 + \mathbb{E}_U\Vert \mathcal{D}^-_{S\rightarrow E}(U_S\rho_{SR}U_{S}^{\dagger})\Vert_1\\
=& \mathbb{E}_U\tr[\mathcal{D}^+_{S\rightarrow E}(U_S\rho_{SR}U_{S}^{\dagger})] + \mathbb{E}_U\tr[\mathcal{D}^-_{S\rightarrow E}(U_S\rho_{SR}U_{S}^{\dagger})]\\
=& \tr[\mathcal{D}^+_{S\rightarrow E}(\frac{\id_S}{\abs{S}}\otimes \rho_R)] + \tr[\mathcal{D}^-_{S\rightarrow E}(\frac{\id_S}{\abs{S}}\otimes \rho_R)]\\
=&\tr(\Delta^+_{SE}\otimes \rho_R)+\tr(\Delta^-_{SE}\otimes \rho_R)\\
=& \tr(\Delta^+_{SE}+\Delta^-_{SE})\\
\leq& 4N\delta.
\end{split}
\end{equation}
In the fifth line, we utilize the 1-design property of the random unitary such that
\begin{equation}
\mathbb{E}_UU_S\rho_{SR}U_{S}^{\dagger}=\frac{\id_S}{\abs{S}}\otimes \rho_R.
\end{equation}
Finally, we get
\begin{equation}
\begin{split}
&\mathbb{E}_U\Vert \mathcal{T}_{S\rightarrow E}(U_S\rho_{SR}U_{S}^{\dagger}) - \tau_E\otimes \rho_R \Vert_1\leq 16N\delta+\\
&\sqrt{2^{-\sum_{i=1}^{2N}H^{\delta}_{2}(S_i|E_i)_{\tau_{S_iE_i}}-\sum_{i=1}^{2N}H^{\delta}_{2}(S_i|R_i)_{\rho_{S_iR_i}}} + c'\prod_{i=1}^{2N} \max(1, 2^{-H^{\delta}_{2}(S_i|E_i)_{\tau_{S_iE_i}}-H^{\delta}_{2}(S_i|R_i)_{\rho_{S_iR_i}}})},
\end{split}
\end{equation}
where
\begin{equation}
c' = 2((1+2\eta\max_i(\max(2^{-H^{\delta}_{2}(S_i|R_i)_{\rho_{S_iR_i}}}, 2^{H^{\delta}_2(S_i|R_i)_{\rho_{S_iR_i}}})))^{N-1}-1).
\end{equation}
\end{proof}

Note that we can only smooth part of conditional entropies in the smoothing process. For instance, we only smooth $H_2(S_i|E_i)$ while maintaining $H_2(S_i|R_i)$. Then, we get
\begin{equation}\label{eq:smoothSE}
\begin{split}
&\mathbb{E}_U\Vert \mathcal{T}_{S\rightarrow E}(U_S\rho_{SR}U_{S}^{\dagger}) - \tau_E\otimes \rho_R \Vert_1\leq 8N\delta+\\
&\sqrt{2^{-\sum_{i=1}^{2N}H^{\delta}_{2}(S_i|E_i)_{\tau_{S_iE_i}}-\sum_{i=1}^{2N}H_{2}(S_i|R_i)_{\rho_{S_iR_i}}} + c\prod_{i=1}^{2N} \max(1, 2^{-H^{\delta}_{2}(S_i|E_i)_{\tau_{S_iE_i}}-H_{2}(S_i|R_i)_{\rho_{S_iR_i}}})},
\end{split}
\end{equation}
where
\begin{equation}
c = 2((1+2\eta\max_i(\max(2^{-H_2(S_i|R_i)_{\rho_{S_iR_i}}}, 2^{H_2(S_i|R_i)_{\rho_{S_iR_i}}})))^{N-1}-1).
\end{equation}
Similarly, we only smooth $H_2(S_i|R_i)$ while maintaining $H_2(S_i|E_i)$ to get
\begin{equation}
\begin{split}
&\mathbb{E}_U\Vert \mathcal{T}_{S\rightarrow E}(U_S\rho_{SR}U_{S}^{\dagger}) - \tau_E\otimes \rho_R \Vert_1\leq 8N\delta+\\
&\sqrt{2^{-\sum_{i=1}^{2N}H_{2}(S_i|E_i)_{\tau_{S_iE_i}}-\sum_{i=1}^{2N}H^{\delta}_{2}(S_i|R_i)_{\rho_{S_iR_i}}} + c'\prod_{i=1}^{2N} \max(1, 2^{-H_{2}(S_i|E_i)_{\tau_{S_iE_i}}-H^{\delta}_{2}(S_i|R_i)_{\rho_{S_iR_i}}})},
\end{split}
\end{equation}
where
\begin{equation}
c' = 2((1+2\eta\max_i(\max(2^{-H^{\delta}_{2}(S_i|R_i)_{\rho_{S_iR_i}}}, 2^{H^{\delta}_2(S_i|R_i)_{\rho_{S_iR_i}}})))^{N-1}-1).
\end{equation}
The proof of the above formulas is the same as that of Theorem~\ref{thm:smoothdecoupling}.

\subsection{Proof of Corollary~\ref{coro:pauli_nonsmoothing} by applying non-smooth decoupling theorem to Pauli noise}\label{app:pauli_nonsmoothing}
\begin{proof}
We adopt the complementary channel to analyze the Choi error. Specifically,
\begin{equation}
\epsilon_{\mathrm{Choi}} = \min_{\zeta} P\left( (\widehat{\mathcal{N}\circ \mathcal{E}}_{L\rightarrow E}\otimes I_R)(\ketbra{\hat{\phi}}_{LR}), (\mathcal{T}^{\zeta}_{L\rightarrow E}\otimes I_R)(\ketbra{\hat{\phi}}_{LR}) \right).
\end{equation}
Since $\mathcal{T}^{\zeta}_{L\rightarrow E}$ maps all states to a fixed state $\zeta$, $(\mathcal{T}^{\zeta}_{L\rightarrow E}\otimes I_R)(\ketbra{\hat{\phi}}_{LR}) = \zeta_E\otimes \tr_L(\ketbra{\hat{\phi}}_{LR}) = \zeta_E\otimes \frac{\id_R}{2^k}$. Meanwhile, $\widehat{\mathcal{N}\circ \mathcal{E}}_{L\rightarrow E} = \hat{\mathcal{N}}_{S\rightarrow E}\circ \mathcal{E}_{L\rightarrow S}$ with $\mathcal{E}_{L\rightarrow S}$ the encoding map satisfying
\begin{equation}
\mathcal{E}_{L\rightarrow S}(\rho) = U_S (\rho_L\otimes \ketbra{0}^{n-k}) U_S^{\dagger},
\end{equation}
where $U_S$ is the encoding unitary, generated by 1D $\log(n/\varepsilon)$-depth Clifford circuit ensemble $\mathfrak{C}_n^{\varepsilon}$ in our case. The term $\ketbra{0}^{n-k}$ is the ancillary qubits used for quantum error correction. Hence,
\begin{equation}
\begin{split}
\epsilon_{\mathrm{Choi}} =& \min_{\zeta} P\left( \hat{\mathcal{N}}_{S\rightarrow E}(U\ketbra{\hat{\phi}}_{LR}\otimes \ketbra{0^{n-k}} U^{\dagger}), \zeta_E\otimes \frac{\id_R}{2^k} \right)\\
\leq& P\left( \hat{\mathcal{N}}_{S\rightarrow E}(U\ketbra{\hat{\phi}}_{LR}\otimes \ketbra{0^{n-k}} U^{\dagger}), \mathbb{E}_{U\sim \mathfrak{U}_n}\hat{\mathcal{N}}_{S\rightarrow E}(U\ketbra{\hat{\phi}}_{LR}\otimes \ketbra{0^{n-k}} U^{\dagger}) \right)\\
&+ \min_{\zeta} P\left(\mathbb{E}_{U\sim \mathfrak{U}_n} \hat{\mathcal{N}}_{S\rightarrow E}(U\ketbra{\hat{\phi}}_{LR}\otimes \ketbra{0^{n-k}} U^{\dagger}), \zeta_E\otimes \frac{\id_R}{2^k} \right)\\
=& P\left( \hat{\mathcal{N}}_{S\rightarrow E}(U\ketbra{\hat{\phi}}_{LR}\otimes \ketbra{0^{n-k}} U^{\dagger}), \hat{\mathcal{N}}_{S\rightarrow E}(\frac{\id_S}{2^n})\otimes \frac{\id_R}{2^k} \right) + \min_{\zeta} P\left(\hat{\mathcal{N}}_{S\rightarrow E}(\frac{\id_S}{2^n})\otimes \frac{\id_R}{2^k}, \zeta_E\otimes \frac{\id_R}{2^k} \right)\\
=& P\left( \hat{\mathcal{N}}_{S\rightarrow E}(U\ketbra{\hat{\phi}}_{LR}\otimes \ketbra{0^{n-k}} U^{\dagger}), \hat{\mathcal{N}}_{S\rightarrow E}(\frac{\id_S}{2^n})\otimes \frac{\id_R}{2^k} \right).
\end{split}
\end{equation}
Here, in the first inequality, we utilize the triangle inequality of the purified distance. In the second equality, we utilize the property of the twirling over Haar random unitary gates. In the third equality, we take the state $\zeta$ as $\hat{\mathcal{N}}_{S\rightarrow E}(\frac{\id_S}{2^n})$.

To further bound $\mathbb{E}_{U\sim \mathfrak{C}_n^{\varepsilon}}\epsilon_{\mathrm{Choi}}$, we utilize the relation between purified distance and trace distance in Eq.~\eqref{eq:tracedistanceineq} to get
\begin{equation}\label{eq:Paulichoi}
\begin{split}
\mathbb{E}_{U} \epsilon_{\mathrm{Choi}} \leq& \mathbb{E}_{U} \sqrt{\Vert \hat{\mathcal{N}}_{S\rightarrow E}(U\ketbra{\hat{\phi}}_{LR}\otimes \ketbra{0^{n-k}} U^{\dagger}) - \hat{\mathcal{N}}_{S\rightarrow E}(\frac{\id_S}{2^n})\otimes \frac{\id_R}{2^k} \Vert_1}\\
\leq& \sqrt{\mathbb{E}_{U} \Vert \hat{\mathcal{N}}_{S\rightarrow E}(U\ketbra{\hat{\phi}}_{LR}\otimes \ketbra{0^{n-k}} U^{\dagger}) - \hat{\mathcal{N}}_{S\rightarrow E}(\frac{\id_S}{2^n})\otimes \frac{\id_R}{2^k} \Vert_1}.
\end{split}
\end{equation}
For simplicity, we omit $\mathfrak{C}_n^{\varepsilon}$ in the subscript of expectation. For i.i.d.~Pauli noise, we have
\begin{equation}
\hat{\mathcal{N}}_{S\rightarrow E} = (\hat{\mathcal{N}}_p)^{\otimes n}
\end{equation}
with $\hat{\mathcal{N}}_p$ as the complementary channel of the single-qubit Pauli noise $\mathcal{N}_p(\rho) = p_I\rho+p_XX\rho X+p_YY\rho Y+p_ZZ\rho Z$.

Now we apply the non-smooth decoupling theorem. Based on Theorem~\ref{thm:nonsmoothdecoupling}, the term in the square root in Eq.~\eqref{eq:Paulichoi} satisfies
\begin{equation}
\begin{split}
&\mathbb{E}_{U} \Vert \hat{\mathcal{N}}_{S\rightarrow E}(U\ketbra{\hat{\phi}}_{LR}\otimes \ketbra{0^{n-k}} U^{\dagger}) - \hat{\mathcal{N}}_{S\rightarrow E}(\frac{\id_S}{2^n})\otimes \frac{\id_R}{2^k} \Vert_1\\
\leq&\sqrt{2^{-\sum_{i=1}^{2N}H_{2}(S_i|E_i)_{\tau_{S_iE_i}}-\sum_{i=1}^{2N}H_{2}(S_i|R_i)_{\rho_{S_iR_i}}} + c\prod_{i=1}^{2N} \max(1, 2^{-H_{2}(S_i|E_i)_{\tau_{S_iE_i}}-H_{2}(S_i|R_i)_{\rho_{S_iR_i}}})},\\
\end{split}
\end{equation}
where $\tau_{S_iE_i} = (\hat{\mathcal{N}}_p(\ketbra{\hat{\phi}}))^{\otimes \xi}$, $\rho_{S_iR_i} = \ketbra{\hat{\phi}}^{\frac{k}{n}\xi}\otimes \ketbra{0^{\frac{n-k}{n}\xi}}$, and
\begin{equation}
c = 2((1+2\eta\max_i(\max(2^{-H_{2}(S_i|R_i)_{\rho_{S_iR_i}}}, 2^{H_2(S_i|R_i)_{\rho_{S_iR_i}}})))^{N-1}-1).
\end{equation}
Note that $H_{2}(S_i|R_i)_{\rho_{S_iR_i}}$ is given by
\begin{equation}\label{eq:EPRcollisionentropy}
\begin{split}
H_2(S_i|R_i)_{\rho_{S_iR_i}} &= H_2(L_i|R_i)_{\ketbra{\hat{\phi}}_{L_iR_i}}\\
&= \frac{k}{n}\xi H_2(L|R)_{\ketbra{\hat{\phi}}}\\
&= -\frac{k}{n}\xi.
\end{split}
\end{equation}

Also, given state $\ketbra{\psi}_{SS'}$ and the purification of $\hat{\mathcal{N}}_{S'\rightarrow E}(\ketbra{\psi}_{SS'})$ denoted as $\ket{\phi}_{SS'E}$, from Lemma~\ref{lemma:entropydual} we can get
\begin{equation}\label{eq:dualcollisionentropy}
\begin{split}
H_2(S|E)_{\hat{\mathcal{N}}_{S'\rightarrow E}(\ketbra{\psi}_{SS'})} &= H_2(S|E)_{\tr_{S'}(\ketbra{\phi})}\\
&= -\log \tr_{S'}( (\tr_S( \sqrt{\tr_E(\ketbra{\phi})} ))^2)\\
&=-\log \tr_{S'}( (\tr_S(\sqrt{\mathcal{N}_{S'}(\ketbra{\psi}_{SS'})}))^2).
\end{split}
\end{equation}
Thus,
\begin{equation}\label{eq:paulicollision}
\begin{split}
H_2(S_i|E_i)_{\tau_{S_iE_i}} &= \xi H_2(S|E)_{\hat{\mathcal{N}}_p(\ketbra{\hat{\phi}})}\\
&= -\xi \log \tr_{S'}( (\tr_S(\sqrt{\mathcal{N}_{p}(\ketbra{\hat{\phi}}_{SS'})}))^2)\\
&= -\xi \log \tr_{S'}( (\tr_S(\sqrt{p_I \ketbra{\hat{\phi}} + p_X X\ketbra{\hat{\phi}}X + p_Y Y\ketbra{\hat{\phi}}Y + p_Z Z\ketbra{\hat{\phi}}Z}))^2)\\
&= -\xi \log \tr_{S'}( (\tr_S(\sqrt{p_I} \ketbra{\hat{\phi}} + \sqrt{p_X} X\ketbra{\hat{\phi}}X + \sqrt{p_Y} Y\ketbra{\hat{\phi}}Y + \sqrt{p_Z} Z\ketbra{\hat{\phi}}Z))^2)\\
&= -\xi \log \tr_{S'}( ((\sqrt{p_I} + \sqrt{p_X} + \sqrt{p_Y} + \sqrt{p_Z})\frac{\id}{2})^2)\\
&= \xi(1 - 2\log (\sqrt{p_I} + \sqrt{p_X} + \sqrt{p_Y} + \sqrt{p_Z})).
\end{split}
\end{equation}
Denote $f(\Vec{p}) = 2\log (\sqrt{p_I} + \sqrt{p_X} + \sqrt{p_Y} + \sqrt{p_Z})$, we have $H_2(S_i|E_i)_{\tau_{S_iE_i}} = 1-f(\Vec{p})$. Below, we consider the region of $\Vec{p}$ such that $0 < f(\Vec{p}) < 1$. For sufficiently large $n$, utilizing $(1+\frac{x}{n})^n\leq 1+2x$ for small $x$, we have that
\begin{equation}\label{eq:boundofcoefc}
\begin{split}
c &\leq 8N\max(2^{-\xi(1+\frac{k}{n})}, 2^{-\xi(1-\frac{k}{n})})\\
&\leq \frac{4n2^{-\xi (1-\frac{k}{n})}}{\xi}\\
&= \frac{4\varepsilon^{1-\frac{k}{n}}n^{\frac{k}{n}}}{\log (n/\varepsilon)}.
\end{split}
\end{equation}
The condition that $x$ is small can be achieved by $(\frac{\varepsilon}{n})^{1-\frac{k}{n}} = o(1/n)$.

Thus, for sufficiently large $n$, we have that
\begin{equation}\label{eq:twolayerNonsmooth}
\begin{split}
&\mathbb{E}_{U} \Vert \hat{\mathcal{N}}_{S\rightarrow E}(U\ketbra{\hat{\phi}}_{LR}\otimes \ketbra{0^{n-k}} U^{\dagger}) - \hat{\mathcal{N}}_{S\rightarrow E}(\frac{\id_S}{2^n})\otimes \frac{\id_R}{2^k} \Vert_1\\
\leq&\sqrt{2^{-\sum_{i=1}^{2N}H_{2}(S_i|E_i)_{\tau_{S_iE_i}}-\sum_{i=1}^{2N}H_{2}(S_i|R_i)_{\rho_{S_iR_i}}} + c\prod_{i=1}^{2N} \max(1, 2^{-H_{2}(S_i|E_i)_{\tau_{S_iE_i}}-H_{2}(S_i|R_i)_{\rho_{S_iR_i}}})}\\
\leq&\sqrt{2^{-n(1-f(\Vec{p})-\frac{k}{n})} + \frac{4\varepsilon^{1-\frac{k}{n}}n^{\frac{k}{n}}}{\log (n/\varepsilon)}\prod_{i=1}^{2N} \max(1, 2^{-\xi(1-f(\Vec{p})-\frac{k}{n})})}.
\end{split}
\end{equation}
From the above equation, as long as $\varepsilon$ decays faster than $n^{-\frac{k}{n-k}}$, and encoding rate $k/n\leq 1-f(\Vec{p})$, the Choi error decays polynomially. Hence, we conclude that 1D $O(\log(n))$-depth circuit achieves encoding rate $1-f(\Vec{p})$. As a reminder, this rate is lower than the hashing bound $1-h(\Vec{p})$.
\end{proof}

\subsection{Proof of Corollary~\ref{coro:pauli} that
1D low-depth circuit achieves the hashing bound with smooth decoupling theorem}\label{app:hashingbound}
\begin{proof}

We follow the derivation in the last subsection but substitute the non-smooth decoupling theorem with the smooth decoupling theorem. Based on Eq.~\eqref{eq:smoothSE}, the term in the square root in Eq.~\eqref{eq:Paulichoi} satisfies
\begin{equation}
\begin{split}
&\mathbb{E}_{U} \Vert \hat{\mathcal{N}}_{S\rightarrow E}(U\ketbra{\hat{\phi}}_{LR}\otimes \ketbra{0^{n-k}} U^{\dagger}) - \hat{\mathcal{N}}_{S\rightarrow E}(\frac{\id_S}{2^n})\otimes \frac{\id_R}{2^k} \Vert_1\leq 8\delta+\\
&\sqrt{2^{-\sum_{i=1}^{2N}H^{\delta/N}_{2}(S_i|E_i)_{\tau_{S_iE_i}}-\sum_{i=1}^{2N}H_{2}(S_i|R_i)_{\rho_{S_iR_i}}} + c\prod_{i=1}^{2N} \max(1, 2^{-H^{\delta/N}_{2}(S_i|E_i)_{\tau_{S_iE_i}}-H_2(S_i|R_i)_{\rho_{S_iR_i}}})}.
\end{split}
\end{equation}
where $\tau_{S_iE_i} = \hat{\mathcal{N}}_{S'\rightarrow E}(\ketbra{\hat{\phi}}_{SS'}) = (\hat{\mathcal{N}}_p(\ketbra{\hat{\phi}}))^{\otimes \xi}$, $\rho_{S_iR_i} = \ketbra{\hat{\phi}}_{L_iR_i}\otimes \ketbra{0^{\frac{n-k}{n}\xi}} = \ketbra{\hat{\phi}}^{\frac{k}{n}\xi}\otimes \ketbra{0^{\frac{n-k}{n}\xi}}$, and
\begin{equation}
c = 2((1+2\eta\max_i(\max(2^{-H_{2}(S_i|R_i)_{\rho_{S_iR_i}}}, 2^{H_2(S_i|R_i)_{\rho_{S_iR_i}}})))^{N-1}-1).
\end{equation}

By Eq.~\eqref{eq:EPRcollisionentropy}, we obtain that $H_2(S_i|R_i)_{\rho_{S_iR_i}} = -\frac{k}{n}\xi$. Below, we analyze $H^{\delta/N}_{2}(S_i|E_i)_{\tau_{S_iE_i}}$. Note that for any $0 < \delta < 1$ as long as $\xi\geq -\frac{8}{5}\log(1-\sqrt{1-\delta^2/N^2})$, we have
\begin{equation}
H(S|E)_{\rho} - \frac{4\log \gamma \sqrt{\log \frac{2N^2}{\delta^2}}}{\sqrt{\xi}} \leq \frac{1}{\xi}H^{\delta/N}_{2}(S|E)_{\rho^{\otimes \xi}} \leq H(S|E)_{\rho} + \frac{4\log \gamma \sqrt{\log \frac{2N^2}{\delta^2}}}{\sqrt{\xi}}.
\end{equation}
with
\begin{equation}
\gamma \leq \sqrt{2^{-H^{\delta}_{\mathrm{min}}(S|E)_{\rho}}}+\sqrt{2^{H^{\delta}_{\mathrm{max}}(S|E)_{\rho}}}+1.
\end{equation}
The condition $\xi\geq -\frac{8}{5}\log(1-\sqrt{1-\delta^2/N^2})$ can be fulfilled as long as $n^{\frac{11}{8}}\epsilon^{\frac{5}{8}}$ decays with $n$. In this case, given any $0 < \delta < 1$, for sufficiently large $n$, we have
\begin{equation}
2N^2(\frac{\epsilon}{n})^{\frac{5}{8}}\leq \delta^2,
\end{equation}
which implies
\begin{equation}
(1-(\frac{\epsilon}{n})^{\frac{5}{8}})^2\geq 1-\frac{\delta^2}{N^2},
\end{equation}
and
\begin{equation}
\xi\geq -\frac{8}{5}\log(1-\sqrt{1-\delta^2/N^2}).
\end{equation}

Note that based on the definition of the complementary channel~\cite{Beny2010AQEC}, given state $\ketbra{\psi}_{SS'}$ and the purification of $\hat{\mathcal{N}}_{S'\rightarrow E}(\ketbra{\psi}_{SS'})$ denoted as $\ket{\phi}_{SS'E}$, we can get
\begin{equation}
\tr_E \ketbra{\phi}_{SS'E} = \mathcal{N}_{S'}(\ketbra{\psi}_{SS'}).
\end{equation}
Thus,
\begin{equation}\label{eq:dualvonneumann}
\begin{split}
H(S|E)_{\hat{\mathcal{N}}_{S'\rightarrow E}(\ketbra{\psi}_{SS'})} &= H(SE)-H(E)\\
&= H(S')-H(SS')\\
&=-H(S|S')_{\mathcal{N}_{S'}(\ketbra{\psi}_{SS'})}.
\end{split}
\end{equation}
With direct calculation, we have that
\begin{equation}
\mathcal{N}_p(\ketbra{\hat{\phi}}_{SS'}) = p_I \ketbra{\hat{\phi}} + p_X X\ketbra{\hat{\phi}}X + p_Y Y\ketbra{\hat{\phi}}Y + p_Z Z\ketbra{\hat{\phi}}Z,
\end{equation}
and
\begin{equation}
\tr_{S}\mathcal{N}_p(\ketbra{\hat{\phi}}_{SS'}) = \frac{\id}{2}.
\end{equation}Note that $\ketbra{\hat{\phi}}$, $X\ketbra{\hat{\phi}}X$, $Y\ketbra{\hat{\phi}}Y$, and $Z\ketbra{\hat{\phi}}Z$ are mutually orthogonal. Hence, $H(S|E)_{\hat{\mathcal{N}}_p(\ketbra{\hat{\phi}})} = H(\tr_{S}\mathcal{N}_p(\ketbra{\hat{\phi}}_{SS'})) - H(\mathcal{N}_p(\ketbra{\hat{\phi}}_{SS'})) = 1-h(\Vec{p})$. Meanwhile, for any $0<\delta<1$,
\begin{equation}
\lim_{n\rightarrow \infty}4\log\gamma\sqrt{\frac{\log \frac{2N^2}{\delta^2}}{\xi}} = \lim_{n\rightarrow \infty}4\log\gamma\sqrt{\frac{\log \frac{2N^2}{\delta^2}}{\log \frac{n}{\epsilon}}} = 4\log\gamma\sqrt{2\lim_{n\rightarrow \infty} \frac{\log n}{\log \frac{n}{\varepsilon}}}.
\end{equation}
As long as $\lim_{n\rightarrow \infty} \frac{\log n}{\log \frac{n}{\varepsilon}} = 0$, for sufficiently large $n$, we have that
\begin{equation}\label{eq:deltaineq}
4\log\gamma\sqrt{\frac{\log \frac{2N^2}{\delta^2}}{\xi}} \leq \delta.
\end{equation}
This can be achieved by choosing $\log n/\varepsilon$ as a function increasing slightly faster than $\log n$ like $(\log n)\log\log\log n$. This additional factor $\log\log\log n$ is normally negligible for a practical number $n$. Also, recall that by Eq.~\eqref{eq:boundofcoefc}, $c\leq \frac{4\varepsilon^{1-\frac{k}{n}}n^{\frac{k}{n}}}{\log (n/\varepsilon)}$.
Thus, for sufficiently large $n$ and any $0<\delta<1$, we have that
\begin{equation}
\begin{split}
&\mathbb{E}_{U} \Vert \hat{\mathcal{N}}_{S\rightarrow E}(U\ketbra{\hat{\phi}}_{LR}\otimes \ketbra{0^{n-k}} U^{\dagger}) - \hat{\mathcal{N}}_{S\rightarrow E}(\frac{\id_S}{2^n})\otimes \frac{\id_R}{2^k} \Vert_1\\
\leq&8\delta+\sqrt{2^{-\sum_{i=1}^{2N}\xi(1-h(\Vec{p})-\frac{k}{n}-\delta)} + \frac{4\varepsilon^{1-\frac{k}{n}}n^{\frac{k}{n}}}{\log (n/\varepsilon)}\prod_{i=1}^{2N} \max(1, 2^{-\xi(1-h(\Vec{p})-\frac{k}{n}-\delta)})}\\
=&8\delta+\sqrt{2^{-n(1-h(\Vec{p})-\frac{k}{n}-\delta)} + \frac{4\varepsilon^{1-\frac{k}{n}}n^{\frac{k}{n}}}{\log (n/\varepsilon)}\prod_{i=1}^{2N} \max(1, 2^{-\xi(1-h(\Vec{p})-\frac{k}{n}-\delta)})}.
\end{split}
\end{equation}
When $1-h(\Vec{p})-\frac{k}{n}-\delta > 0$, we have that
\begin{equation}\label{eq:twolayerhashing}
\mathbb{E}_{U} \Vert \hat{\mathcal{N}}_{S\rightarrow E}(U\ketbra{\hat{\phi}}_{LR}\otimes \ketbra{0^{n-k}} U^{\dagger}) - \hat{\mathcal{N}}_{S\rightarrow E}(\frac{\id_S}{2^n})\otimes \frac{\id_R}{2^k} \Vert_1\leq 8\delta+\sqrt{2^{-n(1-h(\Vec{p})-\frac{k}{n}-\delta)} + \frac{4\varepsilon^{1-\frac{k}{n}}n^{\frac{k}{n}}}{\log (n/\varepsilon)}}.
\end{equation}
Substituting the above equation into Eq.~\eqref{eq:Paulichoi} completes the proof. The term $2^{-n(1-h(\Vec{p})-\frac{k}{n}-\delta)}$ is the error associated with the random Clifford encoding, as demonstrated in Appendix~\ref{app:Clif}.

As a remark, the decoupling error within Eq.~\eqref{eq:twolayerhashing} is dominated by $8\delta$. Thus, the scaling of $\delta$ controls the decay rate of the decoupling error and the associated Choi error. From Eq.~\eqref{eq:deltaineq}, we obtain that $\delta$ can be chosen as the scaling of $O(\sqrt{\frac{\log n}{\log \frac{n}{\varepsilon}}})$. Thus, the scaling of the Choi error is $O((\frac{\log n}{\log \frac{n}{\varepsilon}})^{\frac{1}{4}})$.

\end{proof}

\subsection{Proof of Corollary~\ref{coro:iiderasure_nonsmoothing} by applying non-smooth decoupling theorem to i.i.d.~erasure error}\label{app:iiderasure_nondecouple}
\begin{proof}
Following the previous derivation, the expectation of Choi error over $\mathfrak{C}_n^{\varepsilon}$ can be bounded by:
\begin{equation}\label{eq:erasurechoi}
\mathbb{E}_{U} \epsilon_{\mathrm{Choi}} \leq \sqrt{\mathbb{E}_{U} \Vert \hat{\mathcal{N}}_{S\rightarrow E}(U\ketbra{\hat{\phi}}_{LR}\otimes \ketbra{0^{n-k}} U^{\dagger}) - \hat{\mathcal{N}}_{S\rightarrow E}(\frac{\id_S}{2^n})\otimes \frac{\id_R}{2^k} \Vert_1}.
\end{equation}
For i.i.d.~erasure error, we have
\begin{equation}
\hat{\mathcal{N}}_{S\rightarrow E} = (\hat{\mathcal{N}}_e)^{\otimes n}
\end{equation}
with $\hat{\mathcal{N}}_e$ as the complementary channel of the single-qubit erasure error $\mathcal{N}_e(\rho) = (1-p)\rho + p\ketbra{2}$.

Applying the non-smooth decoupling theorem or Theorem~\ref{thm:nonsmoothdecoupling}, the term in the square root in Eq.~\eqref{eq:Paulichoi} satisfies
\begin{equation}
\begin{split}
&\mathbb{E}_{U} \Vert \hat{\mathcal{N}}_{S\rightarrow E}(U\ketbra{\hat{\phi}}_{LR}\otimes \ketbra{0^{n-k}} U^{\dagger}) - \hat{\mathcal{N}}_{S\rightarrow E}(\frac{\id_S}{2^n})\otimes \frac{\id_R}{2^k} \Vert_1\\
\leq&\sqrt{2^{-\sum_{i=1}^{2N}H_{2}(S_i|E_i)_{\tau_{S_iE_i}}-\sum_{i=1}^{2N}H_{2}(S_i|R_i)_{\rho_{S_iR_i}}} + c\prod_{i=1}^{2N} \max(1, 2^{-H_{2}(S_i|E_i)_{\tau_{S_iE_i}}-H_{2}(S_i|R_i)_{\rho_{S_iR_i}}})},\\
\end{split}
\end{equation}
where $\tau_{S_iE_i} = (\hat{\mathcal{N}}_e(\ketbra{\hat{\phi}}))^{\otimes \xi}$, $\rho_{S_iR_i} = \ketbra{\hat{\phi}}^{\frac{k}{n}\xi}\otimes \ketbra{0^{\frac{n-k}{n}\xi}}$, and
\begin{equation}
c = 2((1+2\eta\max_i(\max(2^{-H_{2}(S_i|R_i)_{\rho_{S_iR_i}}}, 2^{H_2(S_i|R_i)_{\rho_{S_iR_i}}})))^{N-1}-1).
\end{equation}
Note that $H_{2}(S_i|R_i)_{\rho_{S_iR_i}} = -\frac{k}{n}\xi$ given by Eq.~\eqref{eq:EPRcollisionentropy}. Also, from Eq.~\eqref{eq:dualcollisionentropy} we can get
\begin{equation}
\begin{split}
H_2(S_i|E_i)_{\tau_{S_iE_i}} &= \xi H_2(S|E)_{\hat{\mathcal{N}}_e(\ketbra{\hat{\phi}})}\\
&= -\xi \log \tr_{S'}( (\tr_S(\sqrt{\mathcal{N}_{e}(\ketbra{\hat{\phi}}_{SS'})}))^2)\\
&= -\xi \log \tr_{S'}( (\tr_S(\sqrt{ (1-p)\ketbra{\hat{\phi}}_{SS'} + p \frac{\id_S}{2}\otimes \ketbra{2}_{S'} }))^2)\\
&= -\xi \log \tr_{S'}( (\tr_S( \sqrt{1-p}\ketbra{\hat{\phi}}_{SS'} + \sqrt{\frac{p}{2}}\id_S\otimes \ketbra{2}_{S'} ) )^2)\\
&= -\xi \log \tr_{S'}( ( \sqrt{1-p}\frac{\id_{S'}}{2} + \sqrt{2p} \ketbra{2}_{S'} )^2)\\
&= -\xi \log \tr_{S'}( \frac{1-p}{4}\id_{S'} + 2p \ketbra{2}_{S'} )\\
&= -\xi \log( \frac{1-p}{2} + 2p )\\
&= \xi (1-\log(1+3p)). \\
\end{split}
\end{equation}
By Eq.~\eqref{eq:boundofcoefc}, $c\leq \frac{4\varepsilon^{1-\frac{k}{n}}n^{\frac{k}{n}}}{\log (n/\varepsilon)}$.
Thus, for sufficiently large $n$, we have that
\begin{equation}
\begin{split}
&\mathbb{E}_{U} \Vert \hat{\mathcal{N}}_{S\rightarrow E}(U\ketbra{\hat{\phi}}_{LR}\otimes \ketbra{0^{n-k}} U^{\dagger}) - \hat{\mathcal{N}}_{S\rightarrow E}(\frac{\id_S}{2^n})\otimes \frac{\id_R}{2^k} \Vert_1\\
\leq&\sqrt{2^{-\sum_{i=1}^{2N}H_{2}(S_i|E_i)_{\tau_{S_iE_i}}-\sum_{i=1}^{2N}H_{2}(S_i|R_i)_{\rho_{S_iR_i}}} + c\prod_{i=1}^{2N} \max(1, 2^{-H_{2}(S_i|E_i)_{\tau_{S_iE_i}}-H_{2}(S_i|R_i)_{\rho_{S_iR_i}}})}\\
\leq&\sqrt{2^{-n(1-\log(1+3p)-\frac{k}{n})} + \frac{4\varepsilon^{1-\frac{k}{n}}n^{\frac{k}{n}}}{\log (n/\varepsilon)}\prod_{i=1}^{2N} \max(1, 2^{-\xi(1-\log(1+3p)-\frac{k}{n})})}.
\end{split}
\end{equation}
Substituting the above equation into Eq.~\eqref{eq:erasurechoi} completes the proof. From the above equation, we have that as long as $\varepsilon$ decays faster than $n^{-\frac{k}{n-k}}$, and encoding rate $k/n\leq 1-\log(1+3p)$, the Choi error decays polynomially. Hence, we conclude that 1D $O(\log(n))$-depth circuit achieves encoding rate $1-\log(1+3p)$ for i.i.d.~erasure error.
\end{proof}

\subsection{Proof of Corollary~\ref{coro:iiderasure} by applying smooth decoupling theorem to i.i.d.~erasure error}\label{app:iiderasure}
\begin{proof}
We follow the derivation in the last subsection but substitute the non-smooth decoupling theorem with the smooth decoupling theorem. Based on Eq.~\eqref{eq:smoothSE}, we have
\begin{equation}
\begin{split}
&\mathbb{E}_{U} \Vert \hat{\mathcal{N}}_{S\rightarrow E}(U\ketbra{\hat{\phi}}_{LR}\otimes \ketbra{0^{n-k}} U^{\dagger}) - \hat{\mathcal{N}}_{S\rightarrow E}(\frac{\id_S}{2^n})\otimes \frac{\id_R}{2^k} \Vert_1\leq 8\delta+\\
&\sqrt{2^{-\sum_{i=1}^{2N}H^{\delta/N}_{2}(S_i|E_i)_{\tau_{S_iE_i}}-\sum_{i=1}^{2N}H_{2}(S_i|R_i)_{\rho_{S_iR_i}}} + c\prod_{i=1}^{2N} \max(1, 2^{-H^{\delta/N}_{2}(S_i|E_i)_{\tau_{S_iE_i}}-H_2(S_i|R_i)_{\rho_{S_iR_i}}})}.
\end{split}
\end{equation}
where $\tau_{S_iE_i} = \hat{\mathcal{N}}_{S'\rightarrow E}(\ketbra{\hat{\phi}}_{SS'}) = (\hat{\mathcal{N}}_e(\ketbra{\hat{\phi}}))^{\otimes \xi}$, $\rho_{S_iR_i} = \ketbra{\hat{\phi}}_{L_iR_i}\otimes \ketbra{0^{\frac{n-k}{n}\xi}} = \ketbra{\hat{\phi}}^{\frac{k}{n}\xi}\otimes \ketbra{0^{\frac{n-k}{n}\xi}}$, and
\begin{equation}
c = 2((1+2\eta\max_i(\max(2^{-H_{2}(S_i|R_i)_{\rho_{S_iR_i}}}, 2^{H_2(S_i|R_i)_{\rho_{S_iR_i}}})))^{N-1}-1).
\end{equation}
Note that $H_{2}(S_i|R_i)_{\rho_{S_iR_i}} = -\frac{k}{n}\xi$ given by Eq.~\eqref{eq:EPRcollisionentropy}. Same with Appendix~\ref{app:hashingbound}, as long as $\lim_{n\rightarrow \infty} \frac{\log n}{\log \frac{n}{\varepsilon}} = 0$, for sufficiently large $n$, we have that
\begin{equation}
\abs{\frac{1}{\xi}H^{\delta/N}_{2}(S|E)_{\rho^{\otimes \xi}} - H(S|E)_{\rho}}\leq \delta.
\end{equation}
Meanwhile, from Eq.~\eqref{eq:dualvonneumann} we have
\begin{equation}
\begin{split}
H(S_i|E_i)_{\tau_{S_iE_i}} &= \xi H(S|E)_{\hat{\mathcal{N}}_e(\ketbra{\hat{\phi}})}\\
&= -\xi H(S|S')_{\mathcal{N}_e(\ketbra{\hat{\phi}})}\\
&= -\xi H(S|S')_{(1-p)\ketbra{\hat{\phi}}_{SS'} + p \frac{\id_S}{2}\otimes \ketbra{2}_{S'}}\\
&= \xi(H(S')_{\frac{1-p}{2}\id_{S'}+p\ketbra{2}_{S'}} - H(SS')_{(1-p)\ketbra{\hat{\phi}}_{SS'} + p \frac{\id_S}{2}\otimes \ketbra{2}_{S'}})\\
&= \xi(-p\log p- (1-p)\log\frac{1-p}{2} + (1-p)\log(1-p) + p\log \frac{p}{2})\\
&= \xi(1-2p).\\
\end{split}
\end{equation}
Also, by Eq.~\eqref{eq:boundofcoefc}, $c\leq \frac{4\varepsilon^{1-\frac{k}{n}}n^{\frac{k}{n}}}{\log (n/\varepsilon)}$. Thus, for sufficiently large $n$ and $0<\delta<1$, we have that
\begin{equation}
\begin{split}
&\mathbb{E}_{U} \Vert \hat{\mathcal{N}}_{S\rightarrow E}(U\ketbra{\hat{\phi}}_{LR}\otimes \ketbra{0^{n-k}} U^{\dagger}) - \hat{\mathcal{N}}_{S\rightarrow E}(\frac{\id_S}{2^n})\otimes \frac{\id_R}{2^k} \Vert_1\\
\leq&8\delta+\sqrt{2^{-\sum_{i=1}^{2N}\xi(1-2p-\frac{k}{n}-\delta)} + \frac{4\varepsilon^{1-\frac{k}{n}}n^{\frac{k}{n}}}{\log (n/\varepsilon)}\prod_{i=1}^{2N} \max(1, 2^{-\xi(1-2p-\frac{k}{n}-\delta)})}\\
=&8\delta+\sqrt{2^{-n(1-2p-\frac{k}{n}-\delta)} + \frac{4\varepsilon^{1-\frac{k}{n}}n^{\frac{k}{n}}}{\log (n/\varepsilon)}\prod_{i=1}^{2N} \max(1, 2^{-\xi(1-2p-\frac{k}{n}-\delta)})}.
\end{split}
\end{equation}
When $1-2p-\frac{k}{n}-\delta > 0$, we have that
\begin{equation}
\mathbb{E}_{U} \Vert \hat{\mathcal{N}}_{S\rightarrow E}(U\ketbra{\hat{\phi}}_{LR}\otimes \ketbra{0^{n-k}} U^{\dagger}) - \hat{\mathcal{N}}_{S\rightarrow E}(\frac{\id_S}{2^n})\otimes \frac{\id_R}{2^k} \Vert_1\leq 8\delta+\sqrt{2^{-n(1-2p-\frac{k}{n}-\delta)} + \frac{4\varepsilon^{1-\frac{k}{n}}n^{\frac{k}{n}}}{\log (n/\varepsilon)}}.
\end{equation}
Substituting the above equation into Eq.~\eqref{eq:erasurechoi} completes the proof. From the above equation, we have that the encoding rate $k/n$ can achieve $1-2p$ for i.i.d.~erasure error with 1D $\omega(\log(n))$-depth circuit. This encoding rate is also the quantum channel capacity of the erasure error.
\end{proof}

\subsection{Proof of Corollary~\ref{coro:amp_nonsmoothing} by applying non-smooth decoupling theorem to local amplitude damping noise}\label{app:amp_nondecouple}
\begin{proof}
Following the previous derivation for local Pauli noise and erasure errors, the key point is to evaluate $H_{2}(S_i|R_i)_{\rho_{S_iR_i}}$ and $H_2(S_i|E_i)_{\tau_{S_iE_i}}$. Similar to previous analysis, $H_{2}(S_i|R_i)_{\rho_{S_iR_i}} = -\frac{k}{n}\xi$.

For local amplitude damping noise, we have
\begin{equation}
\hat{\mathcal{N}}_{S\rightarrow E} = (\hat{\mathcal{N}}_a)^{\otimes n},
\end{equation}
with $\hat{\mathcal{N}}_a$ as the complementary channel of the amplitude damping noise. Thus,
\begin{equation}
\begin{split}
H_2(S_i|E_i)_{\tau_{S_iE_i}}
&= -\xi \log \tr_{S'}( (\tr_S(\sqrt{\mathcal{N}_{a}(\ketbra{\hat{\phi}}_{SS'})}))^2)\\
&= -\xi \log \tr_{S'}( (\tr_S(\sqrt{ \frac{1}{2}(\ket{00}+\sqrt{1-p}\ket{11})(\bra{00}+\sqrt{1-p}\bra{11})_{SS'}+\frac{p}{2}\ketbra{10}{10}_{SS'} }))^2)\\
&= -\xi \log \tr_{S'}( (\tr_S(\sqrt{ \frac{2-p}{2}(\sqrt{\frac{1}{2-p}}\ket{00}+\sqrt{\frac{1-p}{2-p}}\ket{11})(\sqrt{\frac{1}{2-p}}\bra{00}+\sqrt{\frac{1-p}{2-p}}\bra{11})_{SS'}+\frac{p}{2}\ketbra{10}{10}_{SS'} }))^2)\\
&= -\xi \log \tr_{S'}( (\tr_S(\sqrt{\frac{2-p}{2}}(\sqrt{\frac{1}{2-p}}\ket{00}+\sqrt{\frac{1-p}{2-p}}\ket{11})(\sqrt{\frac{1}{2-p}}\bra{00}+\sqrt{\frac{1-p}{2-p}}\bra{11})_{SS'}+\sqrt{\frac{p}{2}}\ketbra{10}{10}_{SS'} ))^2)\\
&= -\xi \log \tr_{S'}( ( (\frac{1}{\sqrt{4-2p}}+\sqrt{\frac{p}{2}})\ketbra{0}{0}_{S'}+\frac{1-p}{\sqrt{4-2p}}\ketbra{1}{1}_{S'} )^2)\\
&= -\xi \log \tr_{S'}(  (\frac{1}{\sqrt{4-2p}}+\sqrt{\frac{p}{2}})^2\ketbra{0}{0}_{S'}+\frac{(1-p)^2}{4-2p}\ketbra{1}{1}_{S'} )\\
&= -\xi \log( (\frac{1}{\sqrt{4-2p}}+\sqrt{\frac{p}{2}})^2 + \frac{(1-p)^2}{4-2p} )\\
&= -\xi \log( \frac{1}{2-p} + \sqrt{\frac{p}{2-p}} ). \\
\end{split}
\end{equation}
By Eq.~\eqref{eq:boundofcoefc}, $c\leq \frac{4\varepsilon^{1-\frac{k}{n}}n^{\frac{k}{n}}}{\log (n/\varepsilon)}$.
Thus, for sufficiently large $n$, we have that
\begin{equation}
\begin{split}
&\mathbb{E}_{U} \Vert \hat{\mathcal{N}}_{S\rightarrow E}(U\ketbra{\hat{\phi}}_{LR}\otimes \ketbra{0^{n-k}} U^{\dagger}) - \hat{\mathcal{N}}_{S\rightarrow E}(\frac{\id_S}{2^n})\otimes \frac{\id_R}{2^k} \Vert_1\\
\leq&\sqrt{2^{-\sum_{i=1}^{2N}H_{2}(S_i|E_i)_{\tau_{S_iE_i}}-\sum_{i=1}^{2N}H_{2}(S_i|R_i)_{\rho_{S_iR_i}}} + c\prod_{i=1}^{2N} \max(1, 2^{-H_{2}(S_i|E_i)_{\tau_{S_iE_i}}-H_{2}(S_i|R_i)_{\rho_{S_iR_i}}})}\\
\leq&\sqrt{2^{-n(-\log( \frac{1}{2-p} + \sqrt{\frac{p}{2-p}} )-\frac{k}{n})} + \frac{4\varepsilon^{1-\frac{k}{n}}n^{\frac{k}{n}}}{\log (n/\varepsilon)}\prod_{i=1}^{2N} \max(1, 2^{-\xi(-\log( \frac{1}{2-p} + \sqrt{\frac{p}{2-p}}-\frac{k}{n})})}.
\end{split}
\end{equation}
From the above equation, we have that as long as $\varepsilon$ decays faster than $n^{-\frac{k}{n-k}}$, and encoding rate $k/n\leq -\log( \frac{1}{2-p} + \sqrt{\frac{p}{2-p}} )$, the Choi error decays polynomially. Hence, we conclude that 1D $O(\log(n))$-depth circuit achieves encoding rate $-\log( \frac{1}{2-p} + \sqrt{\frac{p}{2-p}} )$ for local amplitude damping noise.
\end{proof}

\subsection{Proof of Corollary~\ref{coro:amp} by applying smooth decoupling theorem to local amplitude damping noise}\label{app:ampsmooth}
\begin{proof}
Following the previous derivation, the key point is to evaluate $H_{2}(S_i|R_i)_{\rho_{S_iR_i}}$ and $H(S_i|E_i)_{\tau_{S_iE_i}}$. Similar to previous analysis, $H_{2}(S_i|R_i)_{\rho_{S_iR_i}} = -\frac{k}{n}\xi$. Meanwhile, from Eq.~\eqref{eq:dualvonneumann} we have
\begin{equation}
\begin{split}
H(S_i|E_i)_{\tau_{S_iE_i}} &= -\xi H(S|S')_{\mathcal{N}_a(\ketbra{\hat{\phi}})}\\
&= -\xi H(S|S')_{\frac{1}{2}(\ket{00}+\sqrt{1-p}\ket{11})(\bra{00}+\sqrt{1-p}\bra{11})_{SS'}+\frac{p}{2}\ketbra{10}{10}_{SS'}}\\
&= \xi(H(S')_{\frac{1+p}{2}\ketbra{0}_{S'}+\frac{1-p}{2}\ketbra{1}_{S'}} - H(SS')_{\frac{1}{2}(\ket{00}+\sqrt{1-p}\ket{11})(\bra{00}+\sqrt{1-p}\bra{11})_{SS'}+\frac{p}{2}\ketbra{10}{10}_{SS'}})\\
&= \xi(h(\frac{1-p}{2})-h(\frac{p}{2})).\\
\end{split}
\end{equation}
Also, by Eq.~\eqref{eq:boundofcoefc}, $c\leq \frac{4\varepsilon^{1-\frac{k}{n}}n^{\frac{k}{n}}}{\log (n/\varepsilon)}$. Thus, for sufficiently large $n$ and $0<\delta<1$, we have that
\begin{equation}
\begin{split}
&\mathbb{E}_{U} \Vert \hat{\mathcal{N}}_{S\rightarrow E}(U\ketbra{\hat{\phi}}_{LR}\otimes \ketbra{0^{n-k}} U^{\dagger}) - \hat{\mathcal{N}}_{S\rightarrow E}(\frac{\id_S}{2^n})\otimes \frac{\id_R}{2^k} \Vert_1\\
\leq&8\delta+\sqrt{2^{-n(h(\frac{1-p}{2})-h(\frac{p}{2})-\frac{k}{n}-\delta)} + \frac{4\varepsilon^{1-\frac{k}{n}}n^{\frac{k}{n}}}{\log (n/\varepsilon)}\prod_{i=1}^{2N} \max(1, 2^{-\xi(h(\frac{1-p}{2})-h(\frac{p}{2})-\frac{k}{n}-\delta)})}.
\end{split}
\end{equation}
When $h(\frac{1-p}{2})-h(\frac{p}{2})-\frac{k}{n}-\delta > 0$, we have that
\begin{equation}
\mathbb{E}_{U} \Vert \hat{\mathcal{N}}_{S\rightarrow E}(U\ketbra{\hat{\phi}}_{LR}\otimes \ketbra{0^{n-k}} U^{\dagger}) - \hat{\mathcal{N}}_{S\rightarrow E}(\frac{\id_S}{2^n})\otimes \frac{\id_R}{2^k} \Vert_1\leq 8\delta+\sqrt{2^{-n(h(\frac{1-p}{2})-h(\frac{p}{2})-\frac{k}{n}-\delta)} + \frac{4\varepsilon^{1-\frac{k}{n}}n^{\frac{k}{n}}}{\log (n/\varepsilon)}}.
\end{equation}
From the above equation, we have that the encoding rate $k/n$ can achieve $h(\frac{1-p}{2})-h(\frac{p}{2})$ for local amplitude damping noise with 1D $\omega(\log(n))$-depth circuit, which is the same as the bound with random stabilizer codes.
\end{proof}

\subsection{Proof of Corollary~\ref{coro:zzcouple} by applying the non-smooth decoupling theorem to nearest-neighbour $ZZ$-coupling noise}\label{app:proof_corr}
\begin{proof}
Following the previous derivation, the expectation of Choi error over $\mathfrak{C}_n^{\varepsilon}$ can be bounded by:
\begin{equation}\label{eq:zzcouplechoi}
\mathbb{E}_{U} \epsilon_{\mathrm{Choi}} \leq \sqrt{\mathbb{E}_{U} \Vert \hat{\mathcal{N}}_{S\rightarrow E}(U\ketbra{\hat{\phi}}_{LR}\otimes \ketbra{0^{n-k}} U^{\dagger}) - \hat{\mathcal{N}}_{S\rightarrow E}(\frac{\id_S}{2^n})\otimes \frac{\id_R}{2^k} \Vert_1}.
\end{equation}
For 1D nearest neighbor $ZZ$-coupling noise, we have
\begin{equation}
\mathcal{N}_{S} = \mathcal{N}_{zz} = \circ_{i=1}^{n-1}((1-p)\mathcal{I}+p\mathcal{Z}_i\mathcal{Z}_{i+1}).
\end{equation}

Applying the non-smooth decoupling theorem or Theorem~\ref{thm:nonsmoothdecoupling}, the term in the square root in Eq.~\eqref{eq:Paulichoi} satisfies
\begin{equation}
\begin{split}
&\mathbb{E}_{U} \Vert \hat{\mathcal{N}}_{S\rightarrow E}(U\ketbra{\hat{\phi}}_{LR}\otimes \ketbra{0^{n-k}} U^{\dagger}) - \hat{\mathcal{N}}_{S\rightarrow E}(\frac{\id_S}{2^n})\otimes \frac{\id_R}{2^k} \Vert_1\\
\leq&\sqrt{2^{-H_2(S|E)_{\tau_{SE}}-H_2(S|R)_{\rho_{SR}}} + c \max_{A\subseteq [2N]} 2^{-H_2(A|R)_{\rho_{AR}}-H_2(A|E)_{\tau_{AE}}}},
\end{split}
\end{equation}
where $\tau_{SE} = \mathcal{N}_{zz}(\ketbra{\hat{\phi}}^{\otimes n})$, $\rho_{S_iR_i} = \ketbra{\hat{\phi}}^{\frac{k}{n}\xi}\otimes \ketbra{0^{\frac{n-k}{n}\xi}}$, $\tau_{AE} = \tr_{\Bar{A}}\tau_{SE}$, $\rho_{SR} = \bigotimes_{i=1}^{2N}\rho_{S_iR_i}$, and
\begin{equation}
c = 2((1+2\eta\max_i(\max(2^{-H_{2}(S_i|R_i)_{\rho_{S_iR_i}}}, 2^{H_2(S_i|R_i)_{\rho_{S_iR_i}}})))^{N-1}-1).
\end{equation}
Note that $H_{2}(S_i|R_i)_{\rho_{S_iR_i}} = -\frac{k}{n}\xi$ given by Eq.~\eqref{eq:EPRcollisionentropy}. Thus, $H_2(A|R)_{\rho_{AR}} = -\frac{k\abs{A}}{n}\xi$.

Also, from Eq.~\eqref{eq:dualcollisionentropy} we can get
\begin{equation}
\begin{split}
H_2(A|E)_{\tau_{AE}}
&= -\log \tr_{S'}[\tr_A\sqrt{ \tr_{\Bar{A}} \mathcal{N}_{zz}(\ketbra{\hat{\phi}}^{\otimes n})} ]\\
&= -\log \tr_{A'}[\tr_A\sqrt{\prod_{i: i\in S_A, i+1\in S_{\Bar{A}}}[(1-p)\mathcal{I}+p\mathcal{Z}_i] \prod_{\{i,i+1\}\subseteq S_A}[(1-p)\mathcal{I}+p\mathcal{Z}_i\mathcal{Z}_{i+1}] (\ketbra{\hat{\phi}}_{AA'})} ]\\
&= \abs{A}\xi-2\log ( \sum_{\vec{Z}\in \{\mathcal{I},\mathcal{Z}\}^{\otimes \abs{A}\xi}}\Pr(\vec{Z}) )\\
&\geq \abs{A}\xi - 2\log( (\sqrt{1-p}+\sqrt{p})^{\abs{A}(\xi+1)})\\
&= \xi\abs{A}[1 - 2(1+\xi^{-1})\log(\sqrt{1-p}+\sqrt{p})]\\
&\approx \xi \abs{A}[1 -2\log(\sqrt{1-p}+\sqrt{p})].
\end{split}
\end{equation}
Here, $\Pr(\vec{Z})$ is the probability of pattern $\vec{Z}$ generated by $\prod_{i: i\in S_A, i+1\in S_{\Bar{A}}}[(1-p)\mathcal{I}+p\mathcal{Z}_i] \prod_{\{i,i+1\}\subseteq S_A}[(1-p)\mathcal{I}+p\mathcal{Z}_i\mathcal{Z}_{i+1}]$. Note that this probability can be obtained by coarse-graining the binomial distribution $B(\abs{A}(\xi+1), p)$. This can be seen by counting the generators of this probability distribution:
\begin{equation}
\abs{A}(\xi+1)\geq \abs{\{ i | i\in S_A, i+1\in S_{\Bar{A}} \}} + \abs{ \{ i| \{i,i+1\}\subseteq S_A \} }.
\end{equation}
The coarse-graining operation can only decrease $\sum \Pr(\vec{Z})$, which gives us the inequality in the fourth line.

By Eq.~\eqref{eq:boundofcoefc}, $c\leq \frac{4\varepsilon^{1-\frac{k}{n}}n^{\frac{k}{n}}}{\log (n/\varepsilon)}$.
Thus, for sufficiently large $n$, we have $\xi^{-1} \approx 0$ and obtain that
\begin{equation}
\begin{split}
&\mathbb{E}_{U} \Vert \hat{\mathcal{N}}_{S\rightarrow E}(U\ketbra{\hat{\phi}}_{LR}\otimes \ketbra{0^{n-k}} U^{\dagger}) - \hat{\mathcal{N}}_{S\rightarrow E}(\frac{\id_S}{2^n})\otimes \frac{\id_R}{2^k} \Vert_1\\
\leq&\sqrt{2^{-H_2(S|E)_{\tau_{SE}}-H_2(S|R)_{\rho_{SR}}} + c \max_{A\subseteq [2N]} 2^{-H_2(A|R)_{\rho_{AR}}-H_2(A|E)_{\tau_{AE}}}}\\
\leq&\sqrt{2^{-n(1-2\log(\sqrt{1-p}+\sqrt{p})-\frac{k}{n})} + \frac{4\varepsilon^{1-\frac{k}{n}}n^{\frac{k}{n}}}{\log (n/\varepsilon)}\prod_{i=1}^{2N} \max(1, 2^{-\xi(1-2\log(\sqrt{1-p}+\sqrt{p})-\frac{k}{n})})}.
\end{split}
\end{equation}
Substituting the above equation into Eq.~\eqref{eq:zzcouplechoi} completes the proof. From the above equation, we have that as long as $\varepsilon$ decays faster than $n^{-\frac{k}{n-k}}$, and encoding rate $k/n < 1-2\log(\sqrt{1-p}+\sqrt{p})$, the Choi error decays polynomially. Hence, we conclude that a 1D $O(\log(n))$-depth circuit achieves an encoding rate $1-2\log(\sqrt{1-p}+\sqrt{p})$ for nearest neighbor $ZZ$-coupling noise, which is the same with that of random Clifford encoding scheme, manifested by Eq.~\eqref{eq:zzcouplingrandomClif}.
\end{proof}

\subsection{Approximate quantum error correction against the fixed-number random erasure errors for $\mathfrak{C}_n^{\varepsilon}$}\label{app:proof_erasure}
We first present the results of AQEC performance of $\mathfrak{C}_n^{\varepsilon}$ for random $t$-erasure error below. Then we give the full proof. The proof idea is similar to that of the non-decoupling theorem.
\begin{theorem}[AQEC performance of $\mathfrak{C}_n^{\varepsilon}$ for random $t$-erasure error]\label{thm:erasure}
In the large $n$ limit, suppose $k$ and $t$ satisfy $1 - \frac{k}{n} - \log(1+\frac{3t}{n}) \geq 0$, and $\varepsilon$ satisfies that $(\frac{\varepsilon}{n})^{1-\frac{k}{n}} = o(\frac{1}{n})$, then the expected Choi error of the random codes from $\mathfrak{C}_n^{\varepsilon}$ against the random $t$ erasure errors satisfies
\begin{equation}
\mathbb{E}\epsilon_{\mathrm{Choi}} \leq \left(2^{-(n-2t-k)}+\frac{4\varepsilon^{1-\frac{k}{n}}n^{\frac{k}{n}}}{\log (n/\varepsilon)}\right)^{\frac{1}{4}}.
\end{equation}
\end{theorem}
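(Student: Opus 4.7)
The plan is to follow the same template as Corollary~\ref{coro:iiderasure_nonsmoothing}, but now invoking the general (non-tensor-product) form of Theorem~\ref{thm:nonsmoothdecoupling}, because the random $t$-erasure channel of Eq.~\eqref{eq:random_t_erasure} does not factorize across the blocks of $\mathfrak{C}_n^{\varepsilon}$. First I would reduce $\mathbb{E}_U\epsilon_{\mathrm{Choi}}$ to a decoupling quantity via the complementary-channel argument that yielded Eq.~\eqref{eq:Paulichoi}, and then bound the inner trace norm by
\begin{equation}
\sqrt{2^{-H_2(S|E)_{\tau_{SE}}-H_2(S|R)_{\rho_{SR}}} + c\,\max_{A\subseteq[2N]} 2^{-H_2(A|R)_{\rho_{AR}}-H_2(A|E)_{\tau_{AE}}}}.
\end{equation}
The reference-side entropies are inherited directly from the EPR-plus-ancilla input: Eq.~\eqref{eq:EPRcollisionentropy} gives $H_2(S_i|R_i)=-k\xi/n$, hence $H_2(S|R)=-k$ and $H_2(A|R)=-ka/n$, where $a \equiv |A|\xi$ is the qubit count of the block subset $A$; and the same estimate used in Eq.~\eqref{eq:boundofcoefc} gives $c\le 4\varepsilon^{1-k/n}n^{k/n}/\log(n/\varepsilon)$.

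Next, I would compute the environment-side entropies from scratch. For the full system, orthogonality of the different-$T$ branches through the classical register $C$ makes $\sqrt{\tau_{SE}}$ block-diagonal in $T$, and evaluating Eq.~\eqref{eq:dualcollisionentropy} on the resulting mixture—after direct bookkeeping of the partial traces of EPR pairs—produces $H_2(S|E)_{\tau_{SE}}=n-2t$; this reproduces the Singleton-like term $2^{-(n-2t-k)}$ in the bound. For a general block subset $A$, the same orthogonality gives
\begin{equation}
2^{-H_2(A|E)_{\tau_{AE}}} = \frac{1}{\binom{n}{t}\,2^{a}}\sum_{j=0}^{\min(a,t)}\binom{a}{j}\binom{n-a}{t-j}\,4^{j} = \frac{1}{2^{a}}\,\mathbb{E}_{J\sim H(n,a,t)}[4^{J}],
\end{equation}
where $J$ is the hypergeometric random variable counting the overlap between the random erasure set $T$ and $A$.

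The main obstacle—and the only step that is genuinely new relative to the i.i.d.\ case—is extracting a clean exponential rate from this hypergeometric moment generating function. The key trick I would use is the symmetry $H(n,a,t)=H(n,t,a)$ together with Hoeffding's convex-function domination of the hypergeometric by the corresponding binomial: since $j\mapsto 4^{j}$ is convex,
\begin{equation}
\mathbb{E}_{J\sim H(n,t,a)}[4^{J}]\le \mathbb{E}_{J\sim \mathrm{Bin}(a,t/n)}[4^{J}]=(1+3t/n)^{a},
\end{equation}
so $H_2(A|E)_{\tau_{AE}}\ge a\bigl(1-\log(1+3t/n)\bigr)$, reproducing the rate of Corollary~\ref{coro:iiderasure_nonsmoothing}. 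Combined with $H_2(A|R)=-ka/n$, this yields $2^{-H_2(A|R)-H_2(A|E)}\le 2^{-a(1-k/n-\log(1+3t/n))}\le 1$ uniformly in $a\ge 0$ under the hypothesis $1-k/n-\log(1+3t/n)\ge 0$. Putting the pieces together reproduces $2^{-(n-2t-k)}+4\varepsilon^{1-k/n}n^{k/n}/\log(n/\varepsilon)$ inside the square root, and Eq.~\eqref{eq:Paulichoi} (applied with Jensen) then produces the claimed fourth-root form. The naive Vandermonde estimate $4^{j}\le 4^{t}$ only gives $H_2(A|E)\ge a-2t$, which is too wasteful for small $a$; the hypergeometric-to-binomial comparison is what restores an $a$-linear rate and makes the maximum over $A$ controllable.
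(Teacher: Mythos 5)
Your proposal is correct and arrives at exactly the stated bound, but it takes a genuinely different route from the paper's own proof. The paper does not invoke Theorem~\ref{thm:nonsmoothdecoupling} for this noise model at all: in Appendix~\ref{app:proof_erasure} it converts the trace norm to a $2$-norm (picking up a prefactor $(\dim E\,\dim R)^{1/4}=2^{(t+k)/4}$ that is later cancelled), redoes the two-layer second-moment computation from scratch as a $2\times 2$ transfer-matrix product, expands in the number of off-diagonal insertions, and bounds the resulting hypergeometric moment $\mathbb{E}_{\mathcal{N}}2^{2X}\le(1+\tfrac{3t}{n})^{m}$ by a bespoke generating-function comparison (Lemma~\ref{lemma:coef}: the coefficient of $z^m$ in $(1+4z)^t(1+z)^{n-t}$ is dominated by that of $(1+(1+\tfrac{3t}{n})z)^n$). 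You instead reuse the already-proven general (non-tensor-product-channel) form of Theorem~\ref{thm:nonsmoothdecoupling} and reduce everything to computing $H_2(S|E)_{\tau_{SE}}$ and $H_2(A|E)_{\tau_{AE}}$ for the flagged random-erasure complementary channel; the orthogonality of the flag register makes these entropies explicit (your branch-by-branch purity computation giving $n-2t$ for the full system and the hypergeometric sum for a block subset $A$ both check out), and your key combinatorial input---Hoeffding's convex-order domination of sampling without replacement by sampling with replacement, applied to $f(j)=4^j$---is precisely the same inequality the paper establishes with Lemma~\ref{lemma:coef}, obtained by citation rather than by induction on polynomial coefficients. Your route is shorter because it amortizes the transfer-matrix work into the decoupling theorem, at the cost of importing an external convex-ordering result; the paper's route is self-contained and makes visible exactly where the average over erasure locations enters the second moment. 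Your closing observation that the crude estimate $4^j\le 4^t$ only yields $H_2(A|E)\ge a-2t$, which is useless for small $a$ and would let $\max_{A}2^{-H_2(A|R)-H_2(A|E)}$ blow up, correctly identifies the one step where a naive argument fails, and the hypergeometric-to-binomial comparison is the right fix.
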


\begin{proof}
Here, we analyze the random erasure error with a fixed number of erasures. We utilize the complementary channel to analyze the Choi error. Specifically,
\begin{equation}
\epsilon_{\mathrm{Choi}} = \min_{\zeta} P\left( (\widehat{\mathcal{N}\circ \mathcal{E}}_{L\rightarrow E}\otimes I_R)(\ketbra{\hat{\phi}}_{LR}), (\mathcal{T}^{\zeta}_{L\rightarrow E}\otimes I_R)(\ketbra{\hat{\phi}}_{LR}) \right).
\end{equation}
Since $\mathcal{T}^{\zeta}_{L\rightarrow E}$ maps all states to a fixed state $\zeta$, $(\mathcal{T}^{\zeta}_{L\rightarrow E}\otimes I_R)(\ketbra{\hat{\phi}}_{LR}) = \zeta_E\otimes \tr_L(\ketbra{\hat{\phi}}_{LR}) = \zeta_E\otimes \frac{\id_R}{2^k}$. Meanwhile, $\widehat{\mathcal{N}\circ \mathcal{E}}_{L\rightarrow E} = \hat{\mathcal{N}}_{S\rightarrow E}\circ \mathcal{E}_{L\rightarrow S}$ with $\mathcal{E}_{L\rightarrow S}$ the encoding map satisfying
\begin{equation}
\mathcal{E}_{L\rightarrow S}(\rho) = U_S (\rho_L\otimes \ketbra{0}^{n-k}) U_S^{\dagger},
\end{equation}
where $U_S$ is the encoding unitary, generated by 1D $\log(n/\varepsilon)$-depth Clifford circuits in our case. The term $\ketbra{0}^{n-k}$ is the ancillary qubits used for quantum error correction. For a fixed $t$-erasure errors, $\hat{\mathcal{N}}_{S\rightarrow E}(\rho) = \tr_{n-t}(\rho)$. For a random $t$-erasure error, which is a mixing of $t$-erasure errors at different locations, $\hat{\mathcal{N}}_{S\rightarrow E}(\rho) = \frac{1}{\binom{n}{t}}\sum_{\abs{T}=t}\ketbra{T}\otimes \tr_{S\backslash T}(\rho)$~\cite{Faist2020AQEC} where $\ket{T}$ labels the location of the erasure errors.
Hence,
\begin{equation}
\begin{split}
\epsilon_{\mathrm{Choi}} =& \min_{\zeta} P\left( \hat{\mathcal{N}}_{S\rightarrow E}(U\ketbra{\hat{\phi}}_{LR}\otimes \ketbra{0^{n-k}} U^{\dagger}), \zeta_E\otimes \frac{\id_R}{2^k} \right)\\
\leq& P\left( \hat{\mathcal{N}}_{S\rightarrow E}(U\ketbra{\hat{\phi}}_{LR}\otimes \ketbra{0^{n-k}} U^{\dagger}), \mathbb{E}_{U\sim \mathfrak{U}_n}\hat{\mathcal{N}}_{S\rightarrow E}(U\ketbra{\hat{\phi}}_{LR}\otimes \ketbra{0^{n-k}} U^{\dagger}) \right)\\
&+ \min_{\zeta} P\left(\mathbb{E}_{U\sim \mathfrak{U}_n} \hat{\mathcal{N}}_{S\rightarrow E}(U\ketbra{\hat{\phi}}_{LR}\otimes \ketbra{0^{n-k}} U^{\dagger}), \zeta_E\otimes \frac{\id_R}{2^k} \right)\\
=& P\left( \hat{\mathcal{N}}_{S\rightarrow E}(U\ketbra{\hat{\phi}}_{LR}\otimes \ketbra{0^{n-k}} U^{\dagger}), \hat{\mathcal{N}}_{S\rightarrow E}(\frac{\id_S}{2^n})\otimes \frac{\id_R}{2^k} \right) + \min_{\zeta} P\left(\hat{\mathcal{N}}_{S\rightarrow E}(\frac{\id_S}{2^n})\otimes \frac{\id_R}{2^k}, \zeta_E\otimes \frac{\id_R}{2^k} \right)\\
=& P\left( \hat{\mathcal{N}}_{S\rightarrow E}(U\ketbra{\hat{\phi}}_{LR}\otimes \ketbra{0^{n-k}} U^{\dagger}), \hat{\mathcal{N}}_{S\rightarrow E}(\frac{\id_S}{2^n})\otimes \frac{\id_R}{2^k} \right).
\end{split}
\end{equation}
Here, in the first inequality, we utilize the triangle inequality of the purified distance. In the second equality, we utilize the property of the twirling over Haar random unitary gates. In the third equality, we take the state $\zeta$ as $\hat{\mathcal{N}}_{S\rightarrow E}(\frac{\id_S}{2^n})$.

In our case, the encoding unitary is generated by random 1D $\log(n/\varepsilon)$-depth Clifford circuits with the ensemble denoted as $\mathfrak{C}_n^{\varepsilon}$. Then, we aim to bound $\mathbb{E}_{U\sim \mathfrak{C}_n^{\varepsilon}}\epsilon_{\mathrm{Choi}}$. For simplicity, we omit $\mathfrak{C}_n^{\varepsilon}$ in the subscript. Utilizing Eq.~\eqref{eq:tracedistanceineq} we get
\begin{equation}
\begin{split}
\mathbb{E}_{U} \epsilon_{\mathrm{Choi}} \leq& \mathbb{E}_{U} \sqrt{\Vert \hat{\mathcal{N}}_{S\rightarrow E}(U\ketbra{\hat{\phi}}_{LR}\otimes \ketbra{0^{n-k}} U^{\dagger}) - \hat{\mathcal{N}}_{S\rightarrow E}(\frac{\id_S}{2^n})\otimes \frac{\id_R}{2^k} \Vert_1}\\
\leq& \sqrt{\mathbb{E}_{U} \Vert \hat{\mathcal{N}}_{S\rightarrow E}(U\ketbra{\hat{\phi}}_{LR}\otimes \ketbra{0^{n-k}} U^{\dagger}) - \hat{\mathcal{N}}_{S\rightarrow E}(\frac{\id_S}{2^n})\otimes \frac{\id_R}{2^k} \Vert_1}.
\end{split}
\end{equation}
When the complementary channel is $\mathbb{E}_i \ketbra{i}\otimes \hat{\mathcal{N}}^i_{S\rightarrow E}$ where $\hat{\mathcal{N}}^i_{S\rightarrow E}$ is the complementary channel of $\mathcal{N}^i_{S\rightarrow E}$, we have that
\begin{equation}
\begin{split}
&\Vert \hat{\mathcal{N}}_{S\rightarrow E}(U\ketbra{\hat{\phi}}_{LR}\otimes \ketbra{0^{n-k}} U^{\dagger}) - \hat{\mathcal{N}}_{S\rightarrow E}(\frac{\id_S}{2^n})\otimes \frac{\id_R}{2^k} \Vert_1\\
= &\mathbb{E}_i\Vert \hat{\mathcal{N}}^i_{S\rightarrow E}(U\ketbra{\hat{\phi}}_{LR}\otimes \ketbra{0^{n-k}} U^{\dagger}) - \hat{\mathcal{N}}^i_{S\rightarrow E}(\frac{\id_S}{2^n})\otimes \frac{\id_R}{2^k} \Vert_1,
\end{split}
\end{equation}
due to that $\Vert \mathbb{E}_i \ketbra{i} \otimes A_i \Vert_1 = \mathbb{E}_i\Vert A_i \Vert_1$. Below, we use $\mathbb{E}_{\mathcal{N}}$ to represent a case of mixing noise channels. For the mixing noise channel, we have
\begin{equation}
\mathbb{E}_{U} \epsilon_{\mathrm{Choi}} \leq \sqrt{\mathbb{E}_{U} \mathbb{E}_{\mathcal{N}}\Vert \hat{\mathcal{N}}_{S\rightarrow E}(U\ketbra{\hat{\phi}}_{LR}\otimes \ketbra{0^{n-k}} U^{\dagger}) - \hat{\mathcal{N}}_{S\rightarrow E}(\frac{\id_S}{2^n})\otimes \frac{\id_R}{2^k} \Vert_1}.
\end{equation}
For further evaluation, we use $\Phi_{Haar}$ to denote $\hat{\mathcal{N}}_{S\rightarrow E}(\frac{\id_S}{2^n})\otimes \frac{\id_R}{2^k}$. Using $\Vert A\Vert_1 \leq \sqrt{\abs{A}}\Vert A\Vert_2$, we further get
\begin{equation}\label{eq:choierrorerasurebound}
\begin{split}
\mathbb{E}_{U} \epsilon_{\mathrm{Choi}}
\leq& \sqrt{\mathbb{E}_{U} \mathbb{E}_{\mathcal{N}}\Vert \hat{\mathcal{N}}_{S\rightarrow E}(U\ketbra{\hat{\phi}}_{LR}\otimes \ketbra{0^{n-k}} U^{\dagger}) - \Phi_{Haar} \Vert_1}\\
\leq& \sqrt{\sqrt{\dim E \dim R} \mathbb{E}_{\mathcal{N}}\mathbb{E}_{U} \Vert \hat{\mathcal{N}}_{S\rightarrow E}(U\ketbra{\hat{\phi}}_{LR}\otimes \ketbra{0^{n-k}} U^{\dagger}) - \Phi_{Haar} \Vert_2}\\
\leq& (\dim E \dim R)^{\frac{1}{4}}\left(\mathbb{E}_{\mathcal{N}}\mathbb{E}_{U} \tr(\hat{\mathcal{N}}_{S\rightarrow E}(U\ketbra{\hat{\phi}}_{LR}\otimes \ketbra{0^{n-k}} U^{\dagger}) - \Phi_{Haar} )^2 \right)^{\frac{1}{4}}\\
=& (\dim E \dim R)^{\frac{1}{4}}( \mathbb{E}_{\mathcal{N}}[\tr \mathbb{E}_{U}[\hat{\mathcal{N}}_{S\rightarrow E}(U\ketbra{\hat{\phi}}_{LR}\otimes \ketbra{0^{n-k}} U^{\dagger})]^2 + \tr \Phi^2_{Haar}\\
&- 2\tr \hat{\mathcal{N}}_{S\rightarrow E}(\mathbb{E}_{U}[U\ketbra{\hat{\phi}}_{LR}\otimes \ketbra{0^{n-k}} U^{\dagger}])\Phi_{Haar}] )^{\frac{1}{4}}.\\
\end{split}
\end{equation}
In the last line, we expand the 2-norm. The error terms can be divided into two parts:
\begin{equation}\label{eq:seconderror}
\tr \mathbb{E}_{U}[\hat{\mathcal{N}}_{S\rightarrow E}(U\ketbra{\hat{\phi}}_{LR}\otimes \ketbra{0^{n-k}} U^{\dagger})]^2,
\end{equation}
and
\begin{equation}\label{eq:firsterror}
\tr \Phi^2_{Haar}- 2\tr \hat{\mathcal{N}}_{S\rightarrow E}(\mathbb{E}_{U}[U\ketbra{\hat{\phi}}_{LR}\otimes \ketbra{0^{n-k}} U^{\dagger}])\Phi_{Haar}.
\end{equation}
Eqs.~\eqref{eq:seconderror} and~\eqref{eq:firsterror} can be interpreted as the error brought by the second and the first moments of random unitary operations, respectively. Provided that $\mathfrak{C}_n^{\varepsilon}$ forms a unitary 1-design, we have
\begin{equation}
\mathbb{E}_{U}[U\ketbra{\hat{\phi}}_{LR}\otimes \ketbra{0^{n-k}} U^{\dagger}] = \frac{\id_S}{2^n}\otimes \frac{\id_R}{2^k}.
\end{equation}
Thus,
\begin{equation}\label{eq:erasure_linearterm}
\tr\Phi^2_{Haar} - 2\tr \hat{\mathcal{N}}_{S\rightarrow E}(\mathbb{E}_{U}[U\ketbra{\hat{\phi}}_{LR}\otimes \ketbra{0^{n-k}} U^{\dagger}])\Phi_{Haar} = -\tr\Phi^2_{Haar},
\end{equation}
where
\begin{equation}
\begin{split}
\tr\Phi^2_{Haar} &= \tr([\hat{\mathcal{N}}_{S\rightarrow E}(\frac{\id_S}{2^n})]^2\otimes \frac{\id_R}{2^{2k}})\\
&= \frac{1}{2^{k}} \tr([\hat{\mathcal{N}}_{S\rightarrow E}(\frac{\id_S}{2^n})]^2).
\end{split}
\end{equation}
For the erasure channel, $\hat{\mathcal{N}}_{S\rightarrow E} = \tr_{n-t}$, $\tr\Phi^2_{Haar} = \frac{1}{2^{k+t}} = \frac{1}{\dim E\dim R}$, and $\mathbb{E}_{\mathcal{N}}\tr\Phi^2_{Haar} = \frac{1}{2^{k+t}}$.

The remained part to bound $\mathbb{E}_U \epsilon_{\mathrm{Choi}}$ is $\mathbb{E}_{\mathcal{N}}\tr \mathbb{E}_{U}[\hat{\mathcal{N}}_{S\rightarrow E}(U\ketbra{\hat{\phi}}_{LR}\otimes \ketbra{0^{n-k}} U^{\dagger})]^2$. We first evaluate $\tr \mathbb{E}_{U}[\hat{\mathcal{N}}_{S\rightarrow E}(U\ketbra{\hat{\phi}}_{LR}\otimes \ketbra{0^{n-k}} U^{\dagger})]^2$ and then take the expectation $\mathbb{E}_{\mathcal{N}}$. We use $\tr A^2 = \tr A^{\otimes 2}F$ to evaluate it with $F$ being a SWAP operator. Hence,
\begin{equation}
\begin{split}
&\tr \mathbb{E}_{U}[\hat{\mathcal{N}}_{S\rightarrow E}(U\ketbra{\hat{\phi}}_{LR}\otimes \ketbra{0^{n-k}} U^{\dagger})]^2\\
=& \tr \mathbb{E}_{U}[(\hat{\mathcal{N}}_{S\rightarrow E}(U\ketbra{\hat{\phi}}_{LR}\otimes \ketbra{0^{n-k}} U^{\dagger}))^{\otimes 2}] F_E\otimes F_R\\
=& \tr \mathbb{E}_{U}[U^{\otimes 2}(\ketbra{\hat{\phi}}_{LR}\otimes \ketbra{0^{n-k}})^{\otimes 2} U^{\dagger \otimes 2}] (\hat{\mathcal{N}}_{S\rightarrow E}^{\dagger})^{\otimes 2}(F_E)\otimes F_R,
\end{split}
\end{equation}
where $F_E$ denotes the SWAP operator acting on two copies of system $E$, and $F_R$ is defined similarly. The term $\hat{\mathcal{N}}_{S\rightarrow E}^{\dagger}$ is the adjoint map of $\hat{\mathcal{N}}_{S\rightarrow E}$. To further evaluate the above quantity, we need to investigate the structure of $\mathfrak{C}_n^{\varepsilon}$. Note that the random encoding unitary $U$ can be generated by two layers, as shown in Fig.~\ref{fig:1Dlowdepthcircuit}. The qubits are divided into $2N$ parts, labelled from $1$ to $2N$, with each part having an equal size of $\xi = \frac{n}{2N}$ qubits. The first layer of the encoding unitary is a tensor product of random Clifford gates acting on the $2i-1$ and $2i$ parts, and the second layer is a tensor product of random Clifford gates acting on the $2i$ and $2i+1$ parts. That is,
\begin{equation}
U = U_2\cdot U_1,
\end{equation}
where
\begin{equation}\label{eq:1Dunitary}
U_1 = \bigotimes_{i=1}^N U_{C}^{2i-1, 2i}, U_2 = \bigotimes_{i=1}^{N-1} U_{C}^{2i, 2i+1}.
\end{equation}
The gate $U_C^{i,i+1}$ is uniformly and randomly sampled from the Clifford group in the regions $i$ and $i+1$. As a result,
\begin{equation}
\begin{split}
&\tr \mathbb{E}_{U}[\hat{\mathcal{N}}_{S\rightarrow E}(U\ketbra{\hat{\phi}}_{LR}\otimes \ketbra{0^{n-k}} U^{\dagger})]^2\\
=& \tr \mathbb{E}_{U_1}[U_1^{\otimes 2}(\ketbra{\hat{\phi}}_{LR}\otimes \ketbra{0^{n-k}})^{\otimes 2} U_1^{\dagger \otimes 2}] \mathbb{E}_{U_2}[U_2^{\dagger\otimes 2}(\hat{\mathcal{N}}_{S\rightarrow E}^{\dagger})^{\otimes 2}(F_E)U_2^{\otimes 2}]\otimes F_R\\
=&\tr( \tr_R(\mathbb{E}_{U_1}[U_1^{\otimes 2}(\ketbra{\hat{\phi}}_{LR}\otimes \ketbra{0^{n-k}})^{\otimes 2} U_1^{\dagger \otimes 2}]F_R) \mathbb{E}_{U_2}[U_2^{\dagger\otimes 2}(\hat{\mathcal{N}}_{S\rightarrow E}^{\dagger})^{\otimes 2}(F_E)U_2^{\otimes 2}]).
\end{split}
\end{equation}
We make the two layers of random unitary gates act on different terms for further evaluation.  We first focus on the first term,
\begin{equation}
\tr_R(\mathbb{E}_{U_1}[U_1^{\otimes 2}(\ketbra{\hat{\phi}}_{LR}\otimes \ketbra{0^{n-k}})^{\otimes 2} U_1^{\dagger \otimes 2}]F_R).
\end{equation}
For simplicity, we consider each part to have the same number of encoding qubits and ancillary qubits. That is, in part $i$, there will be $\frac{k}{n}\xi = \frac{k}{2N}$ encoding qubits and $\frac{n-k}{n}\xi = \frac{n-k}{2N}$ ancillary qubits. Thus, based on Eqs.~\eqref{eq:secondtwirling} and~\eqref{eq:1Dunitary}, we get
\begin{equation}
\begin{split}
&\mathbb{E}_{U_1}[U_1^{\otimes 2}(\ketbra{\hat{\phi}}_{LR}\otimes \ketbra{0^{n-k}})^{\otimes 2} U_1^{\dagger \otimes 2}]\\
=&\bigotimes_{i=1}^N \left(\frac{\id_{d^2_{S_{2i-1, 2i}}}}{{d^2_{S_{2i-1, 2i}}}}\otimes \frac{\id_{d^2_{R_{2i-1, 2i}}}}{{d^2_{R_{2i-1, 2i}}}} + \frac{F_{S_{2i-1,2i}} - \frac{\id_{d^2_{S_{2i-1, 2i}}}}{{d_{S_{2i-1, 2i}}}}}{d^2_{S_{2i-1, 2i}}-1}\otimes (F_{R_{2i-1,2i}}-\frac{1}{d_{S_{2i-1,2i}}})\frac{\id_{d^2_{R_{2i-1, 2i}}}}{{d^2_{R_{2i-1, 2i}}}} \right).
\end{split}
\end{equation}
Since $F_R = \bigotimes_{i=1}^N F_{R_{2i-1,2i}}$, we have
\begin{equation}
\begin{split}
&\tr_R(\mathbb{E}_{U_1}[U_1^{\otimes 2}(\ketbra{\hat{\phi}}_{LR}\otimes \ketbra{0^{n-k}})^{\otimes 2} U_1^{\dagger \otimes 2}]F_R)\\
=&\bigotimes_{i=1}^N \left(\frac{\id_{d^2_{S_{2i-1, 2i}}}}{{d^2_{S_{2i-1, 2i}}d_{R_{2i-1, 2i}}}} + \frac{F_{S_{2i-1,2i}} - \frac{\id_{d^2_{S_{2i-1, 2i}}}}{{d_{S_{2i-1, 2i}}}}}{d^2_{S_{2i-1, 2i}}-1}(1-\frac{1}{d_{S_{2i-1,2i}}d_{R_{2i-1,2i}}}) \right)\\
=&\bigotimes_{i=1}^N \left(\frac{\id_{d^2_{S_{2i-1, 2i}}}}{{d^4_{S_i}d^2_{R_i}}} + \frac{F_{S_{2i-1,2i}} - \frac{\id_{d^2_{S_{2i-1, 2i}}}}{{d^2_{S_i}}}}{d^4_{S_i}-1}(1-\frac{1}{d^2_{S_i}d^2_{R_i}}) \right)\\
=&\bigotimes_{i=1}^N (a\id_{d^2_{S_{2i-1}}}\otimes \id_{d^2_{S_{2i}}}+bF_{S_{2i-1}}\otimes F_{S_{2i}})\\
=&\bigotimes_{i=1}^N (a\id_{2i-1}\id_{2i}+bF_{2i-1}F_{2i}).
\end{split}
\end{equation}
Here, the last line is a simple notation of the third line. That is, we use $\id_i$ and $F_i$ to denote $\id_{d^2_{S_{i}}}$ and $F_{S_i}$, respectively, and we omit the tensor product; $d_{S_i} = 2^{\xi}$ represents the dimension of part $i$ and correspondingly, $d_{R_i} = 2^{\frac{k}{n}\xi}$. The coefficients
\begin{equation}
\begin{split}
a &= \frac{1}{{d^4_{S_i}d^2_{R_i}}} - \frac{1}{d^2_{S_i}(d^4_{S_i}-1)}(1-\frac{1}{d^2_{S_i}d^2_{R_i}})\\
&= \frac{1}{2^{4\xi+\frac{2k}{n}\xi}} - \frac{1}{2^{2\xi}(2^{4\xi}-1)}(1-\frac{1}{2^{2\xi+\frac{2k}{n}\xi}}),
\end{split}
\end{equation}
and
\begin{equation}
\begin{split}
b &= \frac{1}{d^4_{S_i}-1}(1-\frac{1}{d^2_{S_i}d^2_{R_i}})\\
&= \frac{1}{2^{4\xi}-1}(1-\frac{1}{2^{2\xi+\frac{2k}{n}\xi}}).
\end{split}
\end{equation}
It is easy to see that $a\leq 2^{-4\xi-\frac{2k}{n}\xi}$ and $b\leq 2^{-4\xi}$.

Next, we consider the second term $\mathbb{E}_{U_2}[U_2^{\dagger\otimes 2}(\hat{\mathcal{N}}_{S\rightarrow E}^{\dagger})^{\otimes 2}(F_E)U_2^{\otimes 2}]$ with $\mathcal{N}$ a fixed $t$-erasure channel. Note that $(\hat{\mathcal{N}}_{S\rightarrow E}^{\dagger})^{\otimes 2}(F_E) = F_E\otimes \id_{\Bar{E}}$, where $\id_{\Bar{E}}$ denotes the identity operator on two copies of subsystem $\Bar{E}$.
Combined with Eqs.~\eqref{eq:secondtwirling} and~\eqref{eq:1Dunitary}, we have
\begin{equation}
\mathbb{E}_{U_2}[U_2^{\dagger\otimes 2}(\hat{\mathcal{N}}_{S\rightarrow E}^{\dagger})^{\otimes 2}(F_E)U_2^{\otimes 2}] = (F_E\otimes \id_{\Bar{E}})_1 \bigotimes_{i=1}^{N-1} (a_{2i, 2i+1}\id_{2i}\id_{2i+1}+b_{2i,2i+1}F_{2i}F_{2i+1})(F_E\otimes \id_{\Bar{E}})_{2N},
\end{equation}
where
\begin{equation}
a_{2i, 2i+1} = \frac{2^{n^{2i}_I+n^{2i+1}_I}}{2^{2\xi}}\frac{2^{4\xi}-2^{2n^{2i}_F+2n^{2i+1}_F}}{2^{4\xi}-1}, b_{2i, 2i+1} = \frac{2^{n^{2i}_I+n^{2i+1}_I}(2^{2n^{2i}_F+2n^{2i+1}_F}-1)}{2^{4\xi}-1}.
\end{equation}
Here, $(F_E\otimes \id_{\Bar{E}})_i$ denotes the restriction of $F_E\otimes \id_{\Bar{E}}$ is part $i$. Note that $(F_E\otimes \id_{\Bar{E}})_i$ is a tensor product of identity operators and SWAP operators. The terms $n^i_I$ and $n^i_F$ denote the number of identity operators and SWAP operators in $(F_E\otimes \id_{\Bar{E}})_i$, respectively. Hence, $n^i_I+n^i_F = \xi$ and $\sum_i n^i_F = t$.

Combined the results above, we get
\begin{equation}
\begin{split}
&\tr \mathbb{E}_{U}[\hat{\mathcal{N}}_{S\rightarrow E}(U\ketbra{\hat{\phi}}_{LR}\otimes \ketbra{0^{n-k}} U^{\dagger})]^2\\
=&\tr( \tr_R(\mathbb{E}_{U_1}[U_1^{\otimes 2}(\ketbra{\hat{\phi}}_{LR}\otimes \ketbra{0^{n-k}})^{\otimes 2} U_1^{\dagger \otimes 2}]F_R) \mathbb{E}_{U_2}[U_2^{\dagger\otimes 2}(\hat{\mathcal{N}}_{S\rightarrow E}^{\dagger})^{\otimes 2}(F_E)U_2^{\otimes 2}])\\
=&\tr(\left(\bigotimes_{i=1}^N (a\id_{2i-1}\id_{2i}+bF_{2i-1}F_{2i})\right) \left( (F_E\otimes \id_{\Bar{E}})_1 \bigotimes_{i=1}^{N-1} (a_{2i, 2i+1}\id_{2i}\id_{2i+1}+b_{2i,2i+1}F_{2i}F_{2i+1})(F_E\otimes \id_{\Bar{E}})_{2N}\right))\\
=& \begin{pmatrix}
2^{2n^{2N}_I+n^{2N}_F} & 2^{n^{2N}_I+2n^{2N}_F}
\end{pmatrix}\left(\prod_{i=1}^{N-1}
\begin{pmatrix}
a & 0\\
0 & b
\end{pmatrix}
\begin{pmatrix}
2^{2\xi} & 2^{\xi}\\
2^{\xi} & 2^{2\xi}
\end{pmatrix}
\begin{pmatrix}
a_{2i, 2i+1} & 0\\
0 & b_{2i, 2i+1}
\end{pmatrix}
\begin{pmatrix}
2^{2\xi} & 2^{\xi}\\
2^{\xi} & 2^{2\xi}
\end{pmatrix}
\right)
\begin{pmatrix}
a2^{2n^{1}_I+n^{1}_F} \\ b2^{n^{1}_I+2n^{1}_F}
\end{pmatrix}\\
=& 2^n\begin{pmatrix}
2^{n^{2N}_I} & 2^{n^{2N}_F}
\end{pmatrix}\left(\prod_{i=1}^{N-1}
\begin{pmatrix}
a & 0\\
0 & b
\end{pmatrix}
\begin{pmatrix}
u_{2i}u_{2i+1} & \eta(u_{2i}u_{2i+1}+v_{2i}v_{2i+1})\\
\eta(u_{2i}u_{2i+1}+v_{2i}v_{2i+1}) & v_{2i}v_{2i+1}
\end{pmatrix}
\right)
\begin{pmatrix}
a & 0\\
0 & b
\end{pmatrix}
\begin{pmatrix}
2^{n^{1}_I} \\ 2^{n^{1}_F}
\end{pmatrix}
.
\end{split}
\end{equation}
Here, the third line is a direct substitution. The fourth line is done by direct calculation. Note that
\begin{equation}
\tr_1((a\id_1\id_2+bF_1F_2) (F_E\otimes \id_{\Bar{E}})_1) = (a2^{2n^1_I+n^1_F}\id_2 + b2^{n^1_I+2n^1_F}F_2),
\end{equation}
and
\begin{equation}\label{eq:partialtrace}
\tr_1((x\id_1+yF_1)(a\id_1\id_2+bF_1F_2)) = (a2^{2\xi}x+a2^{\xi}y)\id_2+(b2^{\xi}x+b2^{2\xi}y)F_2.
\end{equation}
By using the above two equalities, we derive the fourth line. For the fifth line, we use that $n^i_I+n^i_F = \xi$ and $2N\xi = n$. Also, we calculate
\begin{equation}
\begin{split}
\begin{pmatrix}
2^{\xi} & 1\\
1 & 2^{\xi}
\end{pmatrix}
\begin{pmatrix}
a_{2i, 2i+1} & 0\\
0 & b_{2i, 2i+1}
\end{pmatrix}
\begin{pmatrix}
2^{\xi} & 1\\
1 & 2^{\xi}
\end{pmatrix} &= \begin{pmatrix}
2^{2\xi}a_{2i,2i+1}+b_{2i,2i+1} & 2^{\xi}(a_{2i,2i+1}+b_{2i,2i+1})\\
2^{\xi}(a_{2i,2i+1}+b_{2i,2i+1}) & a_{2i,2i+1}+2^{2\xi}b_{2i,2i+1}
\end{pmatrix}\\
&= \begin{pmatrix}
u_{2i}u_{2i+1} & \eta(u_{2i}u_{2i+1}+v_{2i}v_{2i+1})\\
\eta(u_{2i}u_{2i+1}+v_{2i}v_{2i+1}) & v_{2i}v_{2i+1}
\end{pmatrix},
\end{split}
\end{equation}
where
\begin{equation}
u_i = 2^{n^i_I}, v_i = 2^{n^i_F}, \eta = \frac{2^{\xi}}{2^{2\xi}+1}\leq 2^{-\xi}.
\end{equation}
Denote $D_{2i,2i+1} = \begin{pmatrix}
u_{2i}u_{2i+1} & 0\\
0 & v_{2i}v_{2i+1}
\end{pmatrix}$, $E_{2i,2i+1} = \eta(u_{2i}u_{2i+1}+v_{2i}v_{2i+1})\begin{pmatrix}
0 & 1\\
1 & 0
\end{pmatrix}$, and $C = \begin{pmatrix}
a & 0\\ 0 & b
\end{pmatrix}$, we have
\begin{equation}
\tr \mathbb{E}_{U}[\hat{\mathcal{N}}_{S\rightarrow E}(U\ketbra{\hat{\phi}}_{LR}\otimes \ketbra{0^{n-k}} U^{\dagger})]^2
= 2^n\begin{pmatrix}
u_{2N} & v_{2N}
\end{pmatrix}\left(\prod_{i=1}^{N-1}
C(D_{2i, 2i+1}+E_{2i,2i+1})
\right)
C
\begin{pmatrix}
u_1 \\ v_1
\end{pmatrix}
.
\end{equation}
This equation is similar to $(D+E)^{N-1}$, which enables us to expand it according to the number of ``$E$." We first consider the expansion term with no $E$. That is,
\begin{equation}\label{eq:erasure_gzero}
\begin{split}
2^n\begin{pmatrix}
u_{2N} & v_{2N}
\end{pmatrix}\left(\prod_{i=1}^{N-1}
CD_{2i, 2i+1}
\right)
C
\begin{pmatrix}
u_1 \\ v_1
\end{pmatrix} &= 2^n(a^N\prod_{i=1}^{2N}u_i+b^N\prod_{i=1}^{2N}v_i)\\
&= 2^n(a^N2^{n-t}+b^N2^t)\\
&\leq 2^n(2^{-2n-k}2^{n-t}+2^{-2n}2^t)\\
&= 2^{-t-k}+2^{-n+t}\\
\end{split}
\end{equation}
The first term $2^{-t-k}$ in this equation equals $\tr\Phi^2_{Haar}$ and will be eliminated in the final result. The second term $2^{-n+t}$ exponentially decays with respect to $n$. Combining with the coefficient $\dim E\dim R = 2^{t+k}$, this is $2^{-(n-2t-k)}$ corresponding to a result of random Clifford encoding, as demonstrated in Appendix~\ref{app:Clif}. In fact, if the encoding unitary is drawn from the whole unitary group instead of $\mathfrak{C}_n^{\varepsilon}$, $\eta$ will quickly decay to 0, and it is unnecessary to consider $E$ in the expansion. The expansion terms with $E$ can be viewed as extra errors.

Below, we consider the case there are $g$ numbers of $E$ in the expansion. We set the location of $E$ as $j_1, j_2, \cdots, j_g$ with $1\leq j_1 < j_2 < \cdots < j_g \leq N-1$. That is,
\begin{equation}
\begin{split}
&2^n\begin{pmatrix}
u_{2N} & v_{2N}
\end{pmatrix}\left(\prod_{i=j_g+1}^{N-1}
CD_{2i, 2i+1}
\right)
CE_{2j_g,2j_g+1}
\cdots
CE_{2j_2,2j_2+1}
\left(\prod_{i=j_1+1}^{j_2-1}
CD_{2i, 2i+1}
\right)
CE_{2j_1,2j_1+1}
\left(\prod_{i=1}^{j_1-1}
CD_{2i, 2i+1}
\right)
C
\begin{pmatrix}
u_1 \\ v_1
\end{pmatrix}\\
=& 2^n\eta^g \prod_{r=1}^g(u_{2j_r}u_{2j_r+1}+v_{2j_r}v_{2j_r+1})\times \\
&
\begin{pmatrix}
b^{N-j_g}v_{2j_g+2\rightarrow 2N} & a^{N-j_g}u_{2j_g+2\rightarrow 2N}
\end{pmatrix}
\left(\prod_{s=1}^{g-1}
\begin{pmatrix}
0 & a^{j_{s+1}-j_s}u_{2j_s+2\rightarrow 2j_{s+1}-1}\\
b^{j_{s+1}-j_s}v_{2j_s+2\rightarrow 2j_{s+1}-1} & 0 \\
\end{pmatrix}
\right)
\begin{pmatrix}
a^{j_1}u_{1\rightarrow 2j_1-1} \\ b^{j_1}v_{1\rightarrow 2j_1-1}
\end{pmatrix}
\\
=& 2^n\eta^g \prod_{r=1}^g(u_{2j_r}u_{2j_r+1}+v_{2j_r}v_{2j_r+1})
(
a^{-j_1+j_2-j_3+\cdots}b^{j_1-j_2+j_3-\cdots}u_{2j_1+2\rightarrow 2j_2-1, 2j_3+2\rightarrow 2j_4-1, \cdots}v_{1\rightarrow 2j_1-1, 2j_2+2\rightarrow 2j_3-1, \cdots}
\\
&+
a^{j_1-j_2+j_3-\cdots}b^{-j_1+j_2-j_3+\cdots}u_{1\rightarrow 2j_1-1, 2j_2+2\rightarrow 2j_3-1, \cdots}v_{2j_1+2\rightarrow 2j_2-1, 2j_3+2\rightarrow 2j_4-1, \cdots}
)\\
=& 2^n\eta^g \prod_{r=1}^g(\frac{2^{2\xi}}{v_{2j_r}v_{2j_r+1}}+v_{2j_r}v_{2j_r+1})\times
\\
&
\begin{cases}
\begin{aligned}
&a^{N-j}b^{j}2^{(2N-2j-g)\xi}\frac{v_{1\rightarrow 2j_1-1, 2j_2+2\rightarrow 2j_3-1, \cdots, 2j_{g-1}+2\rightarrow 2j_g-1}}{v_{2j_1+2\rightarrow 2j_2-1\cdots 2j_{g-2}+2\rightarrow 2j_{g-1}-1, 2j_g+2\rightarrow 2N}}+\\
&a^{j}b^{N-j}2^{(2j-g)\xi}\frac{v_{2j_1+2\rightarrow 2j_2-1\cdots 2j_{g-2}+2\rightarrow 2j_{g-1}-1, 2j_g+2\rightarrow 2N}}{v_{1\rightarrow 2j_1-1, 2j_2+2\rightarrow 2j_3-1, \cdots, 2j_{g-1}+2\rightarrow 2j_g-1}}
\end{aligned}, & j = \sum_{s=1}^g(-1)^{s+1} j_s, \quad \text{if } \mod(g,2)=1, \\
\begin{aligned}
&a^{N-j}b^{j}2^{(2N-2j-g)\xi}\frac{v_{2j_1+2\rightarrow 2j_2-1, 2j_3+2\rightarrow 2j_4-1, \cdots 2j_{g-1}+2\rightarrow 2j_{g}-1}}{v_{1\rightarrow 2j_1-1, \cdots, 2j_{g-1}+2\rightarrow 2j_g-1, 2j_g+2\rightarrow 2N}}+\\
&a^{j}b^{N-j}2^{(2j-g)\xi}\frac{v_{1\rightarrow 2j_1-1, \cdots, 2j_{g-1}+2\rightarrow 2j_g-1, 2j_g+2\rightarrow 2N}}{v_{2j_1+2\rightarrow 2j_2-1, 2j_3+2\rightarrow 2j_4-1, \cdots 2j_{g-1}+2\rightarrow 2j_{g}-1}}
\end{aligned}, & j = \sum_{s=1}^g(-1)^{s} j_s, \quad \text{if } \mod(g,2)=0.
\end{cases}
\\
=& 2^{n-t}\eta^g \prod_{r=1}^g(2^{2\xi}+v^2_{2j_r}v^2_{2j_r+1})\times
\\
&\begin{cases}
\begin{aligned}
&a^{N-j}b^{j}2^{(2N-2j-g)\xi}v^2_{1\rightarrow 2j_1-1, 2j_2+2\rightarrow 2j_3-1, \cdots, 2j_{g-1}+2\rightarrow 2j_g-1}+\\
&a^{j}b^{N-j}2^{(2j-g)\xi}v^2_{2j_1+2\rightarrow 2j_2-1\cdots 2j_{g-2}+2\rightarrow 2j_{g-1}-1, 2j_g+2\rightarrow 2N},
\end{aligned} & j = \sum_{s=1}^g(-1)^{s+1} j_s, \quad \text{if } \mod(g,2)=1, \\
\begin{aligned}
&a^{N-j}b^{j}2^{(2N-2j-g)\xi}v^2_{2j_1+2\rightarrow 2j_2-1, 2j_3+2\rightarrow 2j_4-1, \cdots 2j_{g-1}+2\rightarrow 2j_{g}-1}+\\
&a^{j}b^{N-j}2^{(2j-g)\xi}v^2_{1\rightarrow 2j_1-1, \cdots, 2j_{g-1}+2\rightarrow 2j_g-1, 2j_g+2\rightarrow 2N},
\end{aligned} & j = \sum_{s=1}^g(-1)^{s} j_s, \quad \text{if } \mod(g,2)=0.
\end{cases}
\end{split}
\end{equation}
Here, $u_{i_1\rightarrow j_1, \cdots, i_s\rightarrow j_s}$ means $\prod_{r=1}^s \prod_{t=i_r}^{j_r}u_t$. The meaning is the same for $v_{i_1\rightarrow j_1, \cdots, i_s\rightarrow j_s}$. We also use $t = \sum_{i=1}^{2N}n^i_F$ or $2^t = \prod_{i=1}^{2N}v_i$ in the above derivation.

Now, we come back to considering the condition of random erasure noise channels and take the expectation $\mathbb{E}_{\mathcal{N}}$. Note that the erasure channel $\mathcal{N}$ only influences the value of $v_i$. For a quantity $v_{i_1\rightarrow j_1, \cdots, i_s\rightarrow j_s}$, we have
\begin{equation}
\begin{split}
\mathbb{E}_{\mathcal{N}} v^2_{i_1\rightarrow j_1, \cdots, i_s\rightarrow j_s} &= \mathbb{E}_{\mathcal{N}} 2^{2\sum_{r=1}^s\sum_{t=i_r}^{j_r}{n^t_F}}.
\end{split}
\end{equation}
Denote $X = \sum_{r=1}^s\sum_{t=i_r}^{j_r}{n^t_F}$. This is a random variable representing the number of erasures in the region $\{i_1,i_1+1\cdots,j_1,i_2,i_2+1\cdots,j_2,\cdots,i_s,i_s+1,\cdots,j_s\}$.  Since we consider the $t$ erasures to be equally distributed on all of the qubits, the random variable $X$ satisfies
\begin{equation}
\Pr_{\mathcal{N}}(X = x) = \frac{\binom{m}{x}\binom{n-m}{t-x}}{\binom{n}{t}} = \frac{\binom{t}{x}\binom{n-t}{m-x}}{\binom{n}{m}},
\end{equation}
where $m = \abs{\{i_1,i_1+1\cdots,j_1,i_2,i_2+1\cdots,j_2,\cdots,i_s,i_s+1,\cdots,j_s\}}\xi$. We prove that
\begin{equation}
\begin{split}
\mathbb{E}_{\mathcal{N}}2^{2X}&=\sum_{x=\max(0, m+t-n)}^{\max(m,t)}2^{2x}\frac{\binom{t}{x}\binom{n-t}{m-x}}{\binom{n}{m}}\\
&\leq (1+\frac{3t}{n})^m\\
&\leq 2^{m\log(1+\frac{3t}{n})}.
\end{split}
\end{equation}
This can be seen that $\sum_{x=\max(0, m+t-n)}^{\max(m,t)}2^{2x}\binom{t}{x}\binom{n-t}{m-x}$ is the coefficient of $z^m$ in the expansion of the function $(1+4z)^{t}(1+z)^{n-t}$. With Lemma~\ref{lemma:coef} proved later, the coefficients of $z^m$ of $(1+4z)^{t}(1+z)^{n-t}$ is not larger than that of $(1+\frac{4t+n-t}{n}z)^n$, of which the coefficient of $z^m$ is $(1+\frac{3t}{n})^m\binom{n}{m}$.

Thus, under the expectation of the location of the erasures, we get
\begin{equation}
\begin{split}
&\mathbb{E}_{\mathcal{N}}2^n\begin{pmatrix}
u_{2N} & v_{2N}
\end{pmatrix}\left(\prod_{i=j_g+1}^{N-1}
CD_{2i, 2i+1}
\right)
CE_{2j_g,2j_g+1}
\cdots
CE_{2j_2,2j_2+1}
\left(\prod_{i=j_1+1}^{j_2-1}
CD_{2i, 2i+1}
\right)
CE_{2j_1,2j_1+1}
\left(\prod_{i=1}^{j_1-1}
CD_{2i, 2i+1}
\right)
C
\begin{pmatrix}
u_1 \\ v_1
\end{pmatrix}\\
\leq& 2^{n-t}\eta^g \sum_{h=0}^g\binom{g}{h}(a^{N-j}b^j2^{(2N-2j-g+2h)\xi+(2j+g-2h)\xi\log(1+\frac{3t}{n})}+a^{j}b^{N-j}2^{(2j-g+2h)\xi+(2N-2j+g-2h)\xi\log(1+\frac{3t}{n})})\\
\leq& 2^{n-t}2^{-g\xi} \sum_{h=0}^g\binom{g}{h}(2^{-2n-\frac{N-j}{N}k}2^{(2N-2j-g+2h)\xi+(2j+g-2h)\xi\log(1+\frac{3t}{n})}+2^{-2n-\frac{j}{N}k}2^{(2j-g+2h)\xi+(2N-2j+g-2h)\xi\log(1+\frac{3t}{n})})\\
\leq& 2^{-t-k}2^{-\frac{g}{2N}n} \sum_{h=0}^g\binom{g}{h}(2^{\frac{2j\xi}{n}k}2^{-(2j+g-2h)\xi+(2j+g-2h)\xi\log(1+\frac{3t}{n})}+2^{(1-\frac{2j\xi}{n})k}2^{-(2N-2j+g-2h)\xi+(2N-2j+g-2h)\xi\log(1+\frac{3t}{n})})\\
=& 2^{-t-k}2^{-\frac{g}{2N}(n-k)} \sum_{h=0}^g\binom{g}{h}(2^{\frac{(2j-g)\xi}{n}k}2^{-(2j+g-2h)\xi+(2j+g-2h)\xi\log(1+\frac{3t}{n})}+2^{(1-\frac{(2j+g)\xi}{n})k}2^{-(2N-2j+g-2h)\xi+(2N-2j+g-2h)\xi\log(1+\frac{3t}{n})})\\
=& 2^{-t-k}2^{-\frac{g}{2N}(n-k)} \sum_{h=0}^g\binom{g}{h}(2^{\frac{(2h-2g)\xi}{n}k}2^{-(2j+g-2h)\xi(1-\frac{k}{n}-\log(1+\frac{3t}{n}))}+2^{\frac{(2h-2g)\xi}{n}k}2^{-(2N-2j+g-2h)\xi(1-\frac{k}{n}-\log(1+\frac{3t}{n}))}).
\end{split}
\end{equation}
Note that $j = \sum_{s=1}^g(-1)^{s+1} j_s$ or $j = \sum_{s=1}^g(-1)^{s} j_s$ imply $j\geq \frac{g}{2}$. As long as
\begin{equation}\label{eq:erasurebound}
1-\frac{k}{n}-\log(1+\frac{3t}{n}) \geq 0,
\end{equation}
we have
\begin{equation}\label{eq:erasure_gnonzero}
\begin{split}
&\mathbb{E}_{\mathcal{N}}2^n\begin{pmatrix}
u_{2N} & v_{2N}
\end{pmatrix}\left(\prod_{i=j_g+1}^{N-1}
CD_{2i, 2i+1}
\right)
CE_{2j_g,2j_g+1}
\cdots
CE_{2j_2,2j_2+1}
\left(\prod_{i=j_1+1}^{j_2-1}
CD_{2i, 2i+1}
\right)
CE_{2j_1,2j_1+1}
\left(\prod_{i=1}^{j_1-1}
CD_{2i, 2i+1}
\right)
C
\begin{pmatrix}
u_1 \\ v_1
\end{pmatrix}\\
&\leq 2^{1-t-k}2^{-\frac{g}{2N}(n-k)} \sum_{h=0}^g\binom{g}{h}2^{\frac{(2h-2g)\xi}{n}k}\\
&= 2^{1-t-k} (2^{-\xi(1+\frac{k}{n})} + 2^{-\xi(1-\frac{k}{n})})^g.
\end{split}
\end{equation}
Combining Eq.~\eqref{eq:erasure_gzero} and \eqref{eq:erasure_gnonzero}, we obtain
\begin{equation}\label{eq:erasure_nonlinearterm}
\begin{split}
&\mathbb{E}_{\mathcal{N}}\tr \mathbb{E}_{U}[\hat{\mathcal{N}}_{S\rightarrow E}(U\ketbra{\hat{\phi}}_{LR}\otimes \ketbra{0^{n-k}} U^{\dagger})]^2\\
&= 2^n\begin{pmatrix}
u_{2N} & v_{2N}
\end{pmatrix}\left(\prod_{i=1}^{N-1}
C(D_{2i, 2i+1}+E_{2i,2i+1})
\right)
C
\begin{pmatrix}
u_1 \\ v_1
\end{pmatrix}\\
\leq& 2^{-t-k}+2^{-n+t}+\sum_{g=1}^{N-1}\binom{N-1}{g}2^{1-t-k}(2^{-\xi(1+\frac{k}{n})} + 2^{-\xi(1-\frac{k}{n})})^g\\
=& 2^{-t-k}+2^{-n+t}+2^{1-t-k}((1+2^{-\xi(1+\frac{k}{n})} + 2^{-\xi(1-\frac{k}{n})})^{N-1}-1).
\end{split}
\end{equation}
Note that $\xi = \log(n/\varepsilon)$ and $N = n/2\xi$. Utilizing $(1+x/n)^n\leq e^x\leq 1+2x$ for sufficiently small $x$, we get
\begin{equation}
\begin{split}
(1+2^{-\xi(1+\frac{k}{n})} + 2^{-\xi(1-\frac{k}{n})})^{N-1}-1 &\leq 2N( 2^{-\xi(1+\frac{k}{n})} + 2^{-\xi(1-\frac{k}{n})} )\\
&\leq 4N2^{-\xi(1-\frac{k}{n})}\\
&= \frac{2n}{\log(n/\varepsilon)}2^{-\log(n/\varepsilon)(1-\frac{k}{n})}\\
&= \frac{2\varepsilon^{1-\frac{k}{n}}n^{\frac{k}{n}}}{\log(n/\varepsilon)}.
\end{split}
\end{equation}
The condition that $x$ is small can be achieved by $(\frac{\varepsilon}{n})^{1-\frac{k}{n}} = o(\frac{1}{n})$.

Integrating Eqs.~\eqref{eq:choierrorerasurebound},~\eqref{eq:erasure_linearterm}, and~\eqref{eq:erasure_nonlinearterm}, we obtain
\begin{equation}
\begin{split}
\mathbb{E}_U\epsilon_{\mathrm{Choi}}&\leq \left(2^{t+k}(-2^{-t-k}+2^{-t-k}+2^{-n+t}+2^{1-t-k}((1+2^{-\xi(1+\frac{k}{n})} + 2^{-\xi(1-\frac{k}{n})})^{N-1}-1))\right)^{\frac{1}{4}}\\
&= \left(2^{-(n-2t-k)}+2((1+2^{-\xi(1+\frac{k}{n})} + 2^{-\xi(1-\frac{k}{n})})^{N-1}-1)\right)^{\frac{1}{4}}\\
&\leq \left(2^{-(n-2t-k)}+\frac{4\varepsilon^{1-\frac{k}{n}}n^{\frac{k}{n}}}{\log(n/\varepsilon)}\right)^{\frac{1}{4}}.
\end{split}
\end{equation}
Proof is done.
\end{proof}

\begin{lemma}\label{lemma:coef}
Given $a, b > 0$, $m, n\in \mathbb{Z}_+$, the coefficient of $z^t$ in the expansion of $(1+az)^m(1+bz)^n$ is not larger than that in the expansion of $(1+(am+bn)z/(m+n))^{m+n}$ for any $0\leq t\leq m+n$.
\end{lemma}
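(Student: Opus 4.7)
The plan is to recognize both sides as evaluations of the $t$-th elementary symmetric polynomial on a shared multiset, after which the desired coefficient bound reduces to a classical inequality for symmetric functions.

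First, I will set $N = m+n$ and introduce the multiset $\{c_1,\ldots,c_N\}$ consisting of $m$ copies of $a$ followed by $n$ copies of $b$. Then factoring gives $(1+az)^m(1+bz)^n = \prod_{i=1}^N (1+c_i z) = \sum_{t=0}^N e_t(c_1,\ldots,c_N)\, z^t$, so the coefficient of $z^t$ on the left-hand side is precisely the $t$-th elementary symmetric polynomial $e_t(c_1,\ldots,c_N)$. On the right-hand side, the coefficient of $z^t$ in $\bigl(1+\tfrac{am+bn}{N}z\bigr)^N$ is $\binom{N}{t}\bigl(\tfrac{am+bn}{N}\bigr)^t$. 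Since $e_1(c_1,\ldots,c_N) = ma+nb$, this equals $\binom{N}{t}(e_1/N)^t$.

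With this reformulation, the lemma reduces to the Maclaurin inequality
$$\frac{e_t(c_1,\ldots,c_N)}{\binom{N}{t}} \le \left(\frac{e_1(c_1,\ldots,c_N)}{N}\right)^{t}$$
for non-negative reals, which I will cite directly. If a self-contained argument is preferred, Maclaurin's inequality follows from Newton's inequalities $p_k^2 \ge p_{k-1}p_{k+1}$ with $p_k := e_k/\binom{N}{k}$, which imply that the sequence $p_k^{1/k}$ is non-increasing, and in particular $p_t \le p_1^{\,t}$. Since $a,b > 0$, all $c_i$ are non-negative and these inequalities apply verbatim.

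I do not anticipate any serious technical obstacle; the only step requiring care is the correct symmetric-function interpretation of the two coefficients, after which the classical inequality closes the argument in one stroke. An alternative route, avoiding any appeal to Newton or Maclaurin, is a direct AM-GM: each of the $\binom{N}{t}$ monomials $c_{i_1}\cdots c_{i_t}$ appearing in $e_t$ is bounded by $\bigl(\tfrac{c_{i_1}+\cdots+c_{i_t}}{t}\bigr)^t$, and averaging over all $t$-subsets recovers exactly the same bound. Either route delivers the conclusion uniformly in $0 \le t \le N$, which is sufficient for the application in the proof of Theorem~\ref{thm:erasure}.
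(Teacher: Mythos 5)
Your primary argument is correct and takes a genuinely different route from the paper. The paper proves the lemma by induction on $n$: it first handles $n=1$ directly, reducing the coefficient inequality $\binom{m}{t}a^t+\binom{m}{t-1}a^{t-1}b\le\binom{m+1}{t}\bigl(\tfrac{ma+b}{m+1}\bigr)^t$ to Bernoulli's inequality $(1+x)^t\ge 1+tx$, and then applies the inductive hypothesis twice to peel off one factor of $(1+bz)$ at a time. You instead observe that the left-hand coefficient is the elementary symmetric polynomial $e_t$ of the multiset with $m$ copies of $a$ and $n$ copies of $b$, and that the right-hand coefficient is $\binom{N}{t}(e_1/N)^t$, so the claim is exactly Maclaurin's inequality $p_t\le p_1^t$ with $p_k=e_k/\binom{N}{k}$. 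This is shorter, and it buys more: it immediately yields the coefficientwise bound $\prod_i(1+c_iz)\le(1+\bar c\,z)^N$ for an arbitrary multiset of positive reals, whereas the paper's induction is tailored to two distinct values (though it could be iterated). The paper's route is self-contained at the cost of a page of computation; yours outsources the work to a classical inequality (or to Newton's inequalities if a self-contained derivation is wanted).

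However, your ``alternative route'' via AM--GM does not work. After bounding each monomial $\prod_{i\in S}c_i$ by $y_S^t$ with $y_S=\tfrac{1}{t}\sum_{i\in S}c_i$, you need $\sum_S y_S^t\le\binom{N}{t}\bar y^{\,t}$ where $\bar y=e_1/N$; but since $y\mapsto y^t$ is convex for $t\ge 1$, Jensen gives the reverse inequality $\tfrac{1}{\binom{N}{t}}\sum_S y_S^t\ge\bar y^{\,t}$, with equality only when all $y_S$ coincide. Concretely, for $c=(1,1,4)$ and $t=2$ one has $e_2=9\le 12=\binom{3}{2}(e_1/3)^2$, but $\sum_S y_S^2=1+6.25+6.25=13.5>12$, so the chain $e_t\le\sum_S y_S^t\le\binom{N}{t}(e_1/N)^t$ breaks at the second step. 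Keep the Maclaurin (or Newton) argument and drop the AM--GM variant.
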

\begin{proof}
We prove Lemma~\ref{lemma:coef} by induction. For any $a, b > 0$, $m\in \mathbb{Z}_+$, when $n=1$, the coefficient of $z^t$ is
\begin{equation}
\binom{m}{t}a^t + \binom{m}{t-1}a^{t-1}b,
\end{equation}
for $(1+az)^m(1+bz)^n$, and is
\begin{equation}
\binom{m+1}{t}(\frac{ma+b}{m+1})^t
\end{equation}
for $(1+(am+bn)z/(m+n))^{m+n}$; $t$ takes values from $0$ to $m+1$. Now we need to prove
\begin{equation}
\binom{m}{t}a^t + \binom{m}{t-1}a^{t-1}b\leq \binom{m+1}{t}(\frac{ma+b}{m+1})^t,
\end{equation}
which is equivalent to
\begin{equation}\label{eq:binominequ}
1+\frac{t}{m+1}(\frac{b}{a}-1)\leq (1+\frac{1}{m+1}(\frac{b}{a}-1))^t.
\end{equation}
Note that $\frac{1}{m+1}(\frac{b}{a}-1) > -1$ and when $x> -1$, $(1+x)^r\geq 1+rx$ for any $r\geq 1$ and $r=0$. This can be obtained by considering $g(r) = (1+x)^r - (1+rx)$. First, $g(0) = g(1) = 0$. Meanwhile, for $r\geq 1$,
\begin{equation}
\begin{split}
g'(r) &= (1+x)^r\ln (1+x) - x\\
&\geq (1+x)^r\frac{x}{1+x}-x\\
&= ((1+x)^{r-1}-1)x\\
&\geq 0.
\end{split}
\end{equation}
In the second line, we use the inequality $\ln(1+x)\geq x/(1+x)$. Thus, for any $r\geq 1$, $g(r)\geq 0$. As a consequence, Eq.~\eqref{eq:binominequ} is correct, and we prove the case for $n=1$.

Now assuming that for any $a, b > 0$ and $m\in \mathbb{Z}_+$, the claim is correct for $n\leq n_0$, we try to prove the claim for $n = n_0+1$. First, using the case $(m, n_0)$, we have
\begin{equation}
(1+az)^{m}(1+bz)^{n_0}\leq (1+\frac{am+bn_0}{m+n_0}z)^{m+n_0}.
\end{equation}

Here, $p_1(z) \leq p_2(z)$ means that each coefficient of $z^t$ in $p_1(z)$ is less or equal to that in $p_2(z)$. Next, using the case $(m+n_0, 1)$, we have
\begin{equation}
(1+\frac{am+bn_0}{m+n_0}z)^{m+n_0}(1+bz)\leq (1+\frac{a(m+1)+bn_0}{m+1+n_0}z)^{m+n_0+1}.
\end{equation}
Combining the equations above, we prove the case for $n = n_0+1$. The whole proof is done by induction.
\end{proof}

\subsection{Proof of Proposition \ref{prop:pauli_twirling}} \label{app:arbitrary_input}
In this part, we prove Proposition~\ref{prop:pauli_twirling}, which shows that applying random Pauli matrices before encoding and after decoding ensures that the recovery fidelity for any arbitrary input state is lower-bounded by the Choi fidelity.

\begin{proof}[Proof of Proposition~\ref{prop:pauli_twirling}]
Consider the Stinespring dilation of the channel $\cN \circ \cE$, which can be represented as an isometry $V_{\cN}$ from $L$ to $SE$. The pure state $V_{\cN} \ket{\psi}_{LR}$ is a purification of $\cN \circ \cE(\psi_{LR})$. A recovery channel $\cD$ can also be represented as a unitary $U_{\cD}$ acting on $S$ and an ancillary system $A$. By Uhlmann's Theorem \cite{Uhlmann1976transition, Beny2010AQEC}, the Choi fidelity can be expressed as:
\begin{equation}
\begin{split}
F_{\mathrm{Choi}} &= \max_{U'_{(S\backslash L), AE}} \abs{(\bra{\phi}_{LR} \otimes \bra{0}_{(S\backslash L), AE}) U'_{(S\backslash L), AE} U_{\cD}V_{\cN} (\ket{\phi}_{LR} \otimes \ket{0}_{A})} \\
&= \frac{1}{d_L}  \max_{U'_{(S\backslash L), AE}} \abs{\tr_L[(\bra{0}_{(S\backslash L), AE}) U'_{(S\backslash L), AE} U_{\cD}V_{\cN} (I_L\otimes \ket{0}_{A}) ]} .
\end{split}
\end{equation}
The recovery fidelity for input state $\psi$ using the recovery channel $\mathcal{\cD}$ can be written as
\begin{equation}
F_{\psi} = \max_{U'_{(S\backslash L), AE}} \abs{(\bra{\psi}_{LR} \otimes \bra{0}_{(S\backslash L), AE}) U'_{(S\backslash L), AE} U_{\cD}V_{\cN} (\ket{\psi}_{LR} \otimes \ket{0}_{A})}.
\end{equation}

Now, we evaluate the expected recovery fidelity when applying a random Pauli matrix. The expression for the expected recovery fidelity is given by:
\begin{equation}
\begin{split}
\bE_{P_L} F_{\psi} &= \bE_{P_L} \max_{U'_{(S\backslash L), AE}} \abs{(\bra{\psi}_{LR} \otimes \bra{0}_{(S\backslash L), AE}) U'_{(S\backslash L), AE} P^{\dagger}_L U_{\cD}V_{\cN} P_L (\ket{\psi}_{LR} \otimes \ket{0}_{A})} \\
&\ge \max_{U'_{(S\backslash L), AE}} \bE_{P_L}  \abs{(\bra{\psi}_{LR} \otimes \bra{0}_{(S\backslash L), AE}) U'_{(S\backslash L), AE}P^{\dagger}_L U_{\cD}V_{\cN} P_L (\ket{\psi}_{LR} \otimes \ket{0}_{A})} \\
&\ge  \max_{U'_{(S\backslash L), AE}}   \abs{ \bE_{P_L}(\bra{\psi}_{LR} \otimes \bra{0}_{(S\backslash L), AE}) U'_{(S\backslash L), AE} P^{\dagger}_L U_{\cD}V_{\cN} P_L (\ket{\psi}_{LR} \otimes \ket{0}_{A})} \\
&= \frac{1}{d_L} \max_{U'_{(S\backslash L), AE}} \abs{\tr_L[(\bra{0}_{(S\backslash L), AE}) U'_{(S\backslash L), AE} U_{\cD_0}V_{\cN} (I_L\otimes \ket{0}_{A}) ]} \\
&= F_{\mathrm{Choi}}.
\end{split}
\end{equation}
In the first line, we apply $P_L$ before encoding and set $P^{\dagger}_L U_{\cD}$ as the recovery channel. The second line swaps the maximum outside, which reduces the average fidelity. In the third line, we move the expectation inside the absolute value, further reducing the fidelity. Finally, the fourth line applies Pauli twirling. Specifically, for any state $\ket{\psi_{LR}}$, we have
\begin{equation}
\bE_{P_L} [P_L \psi_{LR} P^{\dagger}_L] = \tr_L(\psi_{LR}) \otimes \frac{I_L}{d_L}.
\end{equation}

\end{proof}

\subsection{The Choi error of random Clifford encoding}\label{app:Clif}
In this part, we introduce the bound of the Choi error of random Clifford encoding. The results can be viewed as a baseline to measure the performance of other encoding methods. Note that the random Clifford encoding is the random stabilizer code~\cite{gottesman1997stabilizer}, which is a good code with a constant encoding rate and linear code distance in general.

Recall that the Choi error is bounded by
\begin{equation}
\mathbb{E}_{C}\epsilon_{\mathrm{Choi}}\leq \sqrt{ \mathbb{E}_{C} \Vert \hat{\mathcal{N}}_{S\rightarrow E}(U\ketbra{\hat{\phi}}_{LR}\otimes \ketbra{0^{n-k}} U^{\dagger}) - \hat{\mathcal{N}}_{S\rightarrow E}(\frac{\id_S}{2^n})\otimes \frac{\id_R}{2^k} \Vert_1},
\end{equation}
where $\mathbb{E}_{C}$ is the expectation over the $n$-qubit Clifford group. We can apply the non-smooth decoupling theorem for the 2-design group, or Lemma~\ref{lemma:haardecoupling} for further evaluation:
\begin{equation}
\mathbb{E}_{C} \Vert \hat{\mathcal{N}}_{S\rightarrow E}(U\ketbra{\hat{\phi}}_{LR}\otimes \ketbra{0^{n-k}} U^{\dagger}) - \hat{\mathcal{N}}_{S\rightarrow E}(\frac{\id_S}{2^n})\otimes \frac{\id_R}{2^k} \Vert_1\leq 2^{-\frac{1}{2}H_2(S|R)_{\ketbra{\hat{\phi}}_{LR}\otimes \ketbra{0^{n-k}}}-\frac{1}{2}H_2(S|E)_{\hat{\mathcal{N}}_{S'\rightarrow E}(\ketbra{\hat{\phi}}_{SS'})}}.
\end{equation}
Another choice is the smooth decoupling theorem shown in Lemma~\ref{lemma:haardecoupling}:
\begin{equation}
\mathbb{E}_{C} \Vert \hat{\mathcal{N}}_{S\rightarrow E}(U\ketbra{\hat{\phi}}_{LR}\otimes \ketbra{0^{n-k}} U^{\dagger}) - \hat{\mathcal{N}}_{S\rightarrow E}(\frac{\id_S}{2^n})\otimes \frac{\id_R}{2^k} \Vert_1\leq 2^{-\frac{1}{2}H_2(S|R)_{\ketbra{\hat{\phi}}_{LR}\otimes \ketbra{0^{n-k}}}-\frac{1}{2}H^{\delta}_{\min}(S|E)_{\hat{\mathcal{N}}_{S'\rightarrow E}(\ketbra{\hat{\phi}}_{SS'})}}+8\delta.
\end{equation}
Note that we only smooth $H_2(S|E)$ with $H^{\delta}_{\min}(S|E)$ while maintaining $H_2(S|R)$ to evaluate a tighter bound like what did in the proof of Corollary~\ref{coro:pauli}. Particularly,
\begin{equation}
\begin{split}
H_2(S|R)_{\ketbra{\hat{\phi}}_{LR}\otimes \ketbra{0^{n-k}}} &= H_2(L|R)_{\ketbra{\hat{\phi}}_{LR}}\\
&= -k.
\end{split}
\end{equation}

Below, we continue the evaluation for specific noise models. For the Pauli noise,
\begin{equation}
\hat{\mathcal{N}}_{S\rightarrow E} = (\hat{\mathcal{N}}_p)^{\otimes n}
\end{equation}
with $\hat{\mathcal{N}}_p$ as the complementary channel of the single-qubit Pauli noise $\mathcal{N}_p(\rho) = p_I\rho+p_XX\rho X+p_YY\rho Y+p_ZZ\rho Z$.  Particularly,
\begin{equation}
\hat{\mathcal{N}}_{S'\rightarrow E}(\ketbra{\hat{\phi}}_{SS'}) = (\hat{\mathcal{N}}_p(\ketbra{\hat{\phi}}))^{\otimes n}.
\end{equation}
When applying the non-smooth decoupling theorem, we first calculate
\begin{equation}
H_2(S|E)_{(\hat{\mathcal{N}}_p(\ketbra{\hat{\phi}}))^{\otimes n}(\ketbra{\hat{\phi}}_{SS'})} = n(1-f(\Vec{p}))
\end{equation}
by Eq.~\eqref{eq:paulicollision}. Thus,
\begin{equation}
\mathbb{E}_{C}\epsilon_{\mathrm{Choi}}\leq 2^{-\frac{n}{4}(1-f(\Vec{p})-\frac{k}{n})}.
\end{equation}

Next, we apply the smooth decoupling theorem to Pauli noise. Note that
\begin{equation}
\frac{1}{n}H^{\delta}_{\mathrm{min}}(S|E)_{\rho^{\otimes n}} \geq H(S|E)_{\rho} - \frac{4\log \gamma \sqrt{\log \frac{2}{\delta^2}}}{\sqrt{n}},
\end{equation}
and
\begin{equation}
\lim_{n\rightarrow \infty} \frac{4\log \gamma \sqrt{\log \frac{2}{\delta^2}}}{\sqrt{n}} = 0.
\end{equation}
For sufficiently large $n$, we have $\frac{4\log \gamma \sqrt{\log \frac{2}{\delta^2}}}{\sqrt{n}} < \delta$. Meanwhile, $H(S|E)_{\hat{\mathcal{N}}_p(\ketbra{\hat{\phi}})} = 1-h(\Vec{p})$. Hence, for any $\delta > 0$,
\begin{equation}
\mathbb{E}_{C} \Vert \hat{\mathcal{N}}_{S\rightarrow E}(U\ketbra{\hat{\phi}}_{LR}\otimes \ketbra{0^{n-k}} U^{\dagger}) - \hat{\mathcal{N}}_{S\rightarrow E}(\frac{\id_S}{2^n})\otimes \frac{\id_R}{2^k} \Vert_1\leq 2^{-\frac{n}{2}(1-h(\Vec{p})-\frac{k}{n}-\delta)}+8\delta;
\end{equation}
\begin{equation}
\mathbb{E}_{C}\epsilon_{\mathrm{Choi}}\leq \sqrt{8\delta + 2^{-\frac{n}{2}(1-h(\Vec{p})-\frac{k}{n}-\delta)}}.
\end{equation}

The derivation for the i.i.d.~erasure error and local amplitude damping noise is similar. For strength-$p$ i.i.d.~erasure error, we can get
\begin{equation}
\mathbb{E}_{C}\epsilon_{\mathrm{Choi}}\leq 2^{-\frac{n}{4}(1-\log(1+3p)-\frac{k}{n})}
\end{equation}
with the non-smooth decoupling theorem, and
\begin{equation}
\mathbb{E}_{C}\epsilon_{\mathrm{Choi}}\leq \sqrt{8\delta + 2^{-\frac{n}{2}(1-2p-\frac{k}{n}-\delta)}}
\end{equation}
with the smooth decoupling theorem.

For strength-$p$ local amplitude damping noise, we can get
\begin{equation}
\mathbb{E}_{C}\epsilon_{\mathrm{Choi}}\leq 2^{-\frac{n}{4}(-\log( \frac{1}{2-p} + \sqrt{\frac{p}{2-p}} )-\frac{k}{n})}
\end{equation}
with the non-smooth decoupling theorem, and
\begin{equation}
\mathbb{E}_{C}\epsilon_{\mathrm{Choi}}\leq \sqrt{8\delta + 2^{-\frac{n}{2}(h(\frac{1-p}{2})-h(\frac{p}{2})-\frac{k}{n}-\delta)}}
\end{equation}
with the smooth decoupling theorem.

For strength-$p$ 1D nearest neighbor $ZZ$-coupling noise, the analysis is done by the non-smooth decoupling theorem. The $H_2(S|R)$ term is still equal to $-k$, and the $H_2(S|E)$ term is given by
\begin{equation}
\begin{split}
H_2(S|E)_{\tau_{SE}}
&= -\log \tr_{S'}[\tr_S\sqrt{ \mathcal{N}_{zz}(\ketbra{\hat{\phi}}^{\otimes n})} ]\\
&= -\log \tr_{S'}[\tr_S \prod_{\{i,i+1\}\subseteq [n]}[(1-p)\mathcal{I}+p\mathcal{Z}_i\mathcal{Z}_{i+1}] (\ketbra{\hat{\phi}}_{SS'}) ]\\
&= n - 2\log( (\sqrt{1-p}+\sqrt{p})^{n-1})\\
&\geq n [1 -2\log(\sqrt{1-p}+\sqrt{p})].
\end{split}
\end{equation}
Thus, for $ZZ$-coupling noise, we have
\begin{equation}\label{eq:zzcouplingrandomClif}
\mathbb{E}_{C}\epsilon_{\mathrm{Choi}}\leq 2^{-\frac{n}{4}( 1-2\log(\sqrt{1-p}+\sqrt{p})-\frac{k}{n})}.
\end{equation}

Finally, we consider the $t$-erasure errors. The random Clifford encoding is robust to a fixed $t$-erasure error rather than a random erasure error. The number of erasure errors should satisfy the Singleton bound $n-2t-k\geq 0$~\cite{Cerf1997singleton}. The corresponding complementary channel is $\hat{\mathcal{N}}_{S\rightarrow E} = \tr_{n-t}$. By applying the non-smooth decoupling theorem, we have that
\begin{equation}
\hat{\mathcal{N}}_{S'\rightarrow E}(\ketbra{\hat{\phi}}_{SS'}) = \ketbra{\hat{\phi}}^{\otimes t}_{TE} \otimes \frac{\id_{\Bar{T}}}{2^{n-t}};
\end{equation}
\begin{equation}
\begin{split}
H_2(S|E)_{\hat{\mathcal{N}}_{S'\rightarrow E}(\ketbra{\hat{\phi}}_{SS'})} &= -\log \tr( ((\frac{\id_{E}}{2^t})^{-\frac{1}{4}}
\ketbra{\hat{\phi}}^{\otimes t}_{TE} \otimes \frac{\id_{\Bar{T}}}{2^{n-t}} (\frac{\id_{E}}{2^t})^{-\frac{1}{4}})^2)\\
&= n - 2t.
\end{split}
\end{equation}
Here, $T$ represents the region of the erasure errors. Combining the results above, we get the upper bound of the Choi error against a fixed $t$-erasure error for the random Clifford encoding:
\begin{equation}
\mathbb{E}_{C}\epsilon_{\mathrm{Choi}}\leq 2^{-\frac{1}{4}(n-2t-k)}.
\end{equation}

\subsection{Comparison to the block-encoding scheme}\label{app:block}
In this part, we compare the double-layer blocked encoding scheme and the block-encoding scheme. We consider two types of noise models. For local Pauli noise, we provide evidence that the double-layer blocked encoding scheme might have an advantage. For local erasure error, we rigorously establish an advantage of the double-layer blocked encoding scheme over the block-encoding scheme in certain error regimes.

\textbf{Local Pauli noise.} We first consider the local Pauli noise. As mentioned in the main text, the encoding unitary operation of the block-encoding method is
\begin{equation}
U = \bigotimes_{i=1}^{\frac{N}{2}}U_C^{4i-3,4i-2,4i-1,4i},
\end{equation}
where $U_C^{4i-3,4i-2,4i-1,4i}$ is a random Clifford gate on regions $4i-3,4i-2,4i-1,4i$. The set of the random block-encoding unitary operations is denoted as $\mathfrak{S}_n^B$.

To compare the performance of the two encoding methods, we only need to evaluate
\begin{equation}
\mathbb{E}_{U} \Vert \hat{\mathcal{N}}_{S\rightarrow E}(U\ketbra{\hat{\phi}}_{LR}\otimes \ketbra{0^{n-k}} U^{\dagger}) - \hat{\mathcal{N}}_{S\rightarrow E}(\frac{\id_S}{2^n})\otimes \frac{\id_R}{2^k} \Vert_1
\end{equation}
under two encoding ensembles $\mathfrak{C}_n^{\varepsilon}$ and $\mathfrak{S}_n^B$. For simplicity, here we only analyze the results associated with the non-smooth decoupling theorem. The case with the smooth decoupling theorem is similar.

The result of $\mathfrak{C}_n^{\varepsilon}$ is given by Eq.~\eqref{eq:twolayerNonsmooth}. The result of $\mathfrak{S}_n^B$ can be established by considering the error of each block and the triangle inequality. Specifically,
\begin{equation}
\begin{split}
&\mathbb{E}_{U\sim \mathfrak{S}_n^B} \Vert \hat{\mathcal{N}}_{S\rightarrow E}(U\ketbra{\hat{\phi}}_{LR}\otimes \ketbra{0^{n-k}} U^{\dagger}) - \hat{\mathcal{N}}_{S\rightarrow E}(\frac{\id_S}{2^n})\otimes \frac{\id_R}{2^k} \Vert_1\\
\leq&\frac{n}{4\xi} \mathbb{E}_{U\sim \mathfrak{S}_{4\xi}^B} \Vert \hat{\mathcal{N}}_{S\rightarrow E}(U\ketbra{\hat{\phi}}_{LR}\otimes \ketbra{0^{\frac{4\xi(n-k)}{n}}} U^{\dagger}) - \hat{\mathcal{N}}_{S\rightarrow E}(\frac{\id_S}{2^{4\xi}})\otimes \frac{\id_R}{2^{\frac{4\xi k}{n}}} \Vert_1\\
\leq&\frac{n}{4\xi}2^{-\frac{1}{2}H_2(L|R)_{\ketbra{\hat{\phi}}^{\otimes 4\xi k/n}}-\frac{1}{2}H_2(S|E)_{(\hat{\mathcal{N}}_p(\ketbra{\hat{\phi}}))^{\otimes 4\xi}}}.
\end{split}
\end{equation}

Using $H_2(S|E)_{\hat{\mathcal{N}}_p(\ketbra{\hat{\phi}})} = 1-f(\Vec{p})$, we have that
\begin{equation}
\begin{split}
&\mathbb{E}_{U\sim \mathfrak{S}_n^B} \Vert \hat{\mathcal{N}}_{S\rightarrow E}(U\ketbra{\hat{\phi}}_{LR}\otimes \ketbra{0^{n-k}} U^{\dagger}) - \hat{\mathcal{N}}_{S\rightarrow E}(\frac{\id_S}{2^n})\otimes \frac{\id_R}{2^k} \Vert_1\\
\leq&\frac{n}{4\xi}(2^{-2\xi(1-f(\Vec{p})-\frac{k}{n})})\\
=&\frac{n}{4\log\frac{n}{\varepsilon}}(\frac{\varepsilon}{n})^{2(1-f(\Vec{p})-\frac{k}{n})}.
\end{split}
\end{equation}
Recall that the result of the double-layer blocked encoding is given by
\begin{equation}
\begin{split}
&\mathbb{E}_{U\sim \mathfrak{C}_n^{\varepsilon}} \Vert \hat{\mathcal{N}}_{S\rightarrow E}(U\ketbra{\hat{\phi}}_{LR}\otimes \ketbra{0^{n-k}} U^{\dagger}) - \hat{\mathcal{N}}_{S\rightarrow E}(\frac{\id_S}{2^n})\otimes \frac{\id_R}{2^k} \Vert_1\\
\leq& \sqrt{2^{-n(1-f(\Vec{p})-\frac{k}{n})} + \frac{4\varepsilon^{1-\frac{k}{n}}n^{\frac{k}{n}}}{\log (n/\varepsilon)}}.
\end{split}
\end{equation}
For block encoding, the dependence of error on $n$ is given by $\frac{n}{4\log\frac{n}{\varepsilon}}(\frac{\varepsilon}{n})^{2(1-f(\Vec{p})-\frac{k}{n})}$. For double-layer blocked encoding, it is approximately given by two terms $\sqrt{2^{-n(1-f(\Vec{p})-\frac{k}{n})}}$ and $\sqrt{\frac{4\varepsilon^{1-\frac{k}{n}}n^{\frac{k}{n}}}{\log (n/\varepsilon)}}$. Note that $\varepsilon$ is normally chosen to make $\frac{\varepsilon}{n} = 1/\poly(n)$. Meanwhile, $1-f(\Vec{p})-\frac{k}{n}$ is often a constant. Thus, an exponentially decay term must vanish faster:
\begin{equation}
\sqrt{2^{-n(1-f(\Vec{p})-\frac{k}{n})}} \ll \frac{n}{4\log\frac{n}{\varepsilon}}(\frac{\varepsilon}{n})^{2(1-f(\Vec{p})-\frac{k}{n})}.
\end{equation}
Also, when $1-f(\Vec{p})-\frac{k}{n} < \frac{1}{4}(1-\frac{k}{n})$, we have
\begin{equation}
\lim_{n\rightarrow \infty} \frac{\sqrt{\frac{4\varepsilon^{1-\frac{k}{n}}n^{\frac{k}{n}}}{\log (n/\varepsilon)}}}{\frac{n}{4\log\frac{n}{\varepsilon}}(\frac{\varepsilon}{n})^{2(1-f(\Vec{p})-\frac{k}{n})}} = 8\lim_{n\rightarrow \infty} \sqrt{\frac{\log\frac{n}{\varepsilon}}{n}} (\frac{\varepsilon}{n})^{2f(\vec{p})-\frac{3}{2}(1-\frac{k}{n})} = 0.
\end{equation}
Thus,
\begin{equation}
\sqrt{\frac{4\varepsilon^{1-\frac{k}{n}}n^{\frac{k}{n}}}{\log (n/\varepsilon)}} \ll \frac{n}{4\log\frac{n}{\varepsilon}}(\frac{\varepsilon}{n})^{2(1-f(\Vec{p})-\frac{k}{n})}.
\end{equation}
That is, given the noise strength $\vec{p}$, when the encoding rate satisfies $\frac{k}{n}>1-\frac{4}{3}f(\vec{p})$, the error given by the double-layer blocked encoding decays much faster. In other words, if we fix a required Choi error decay scaling, the encoding rate offered by the double-layer blocked scheme can be higher. This provides evidence that the double-layer blocked encoding scheme can have an advantage in some error regimes. Nonetheless, it is a comparison of the upper bound of the error. Below, we consider local erasure errors and evaluate a lower bound of the Choi error given by the block-encoding method. This establishes a rigorous advantage of the double-layer blocked encoding scheme.

\textbf{Local Erasure error.} To establish the lower bound of the Choi error given by the block-encoding method, we only need to evaluate the lower bound of the Choi error of each block, which corresponds to a random stabilizer encoding.

For a random stabilizer code $[[n, k]]$, correctability under erasure error requires the number of erasure errors $t$ to satisfy the singleton bound:
\begin{equation}
2t + k \le n \quad \Rightarrow \quad t \le \left\lfloor \frac{n - k}{2} \right\rfloor.
\end{equation}
Define $\tau = \frac{1}{2}(1-\frac{k}{n})$. Then any erasure pattern erasing more than $\tau n$ qubits is uncorrectable. Moreover, as long as the number of erasure errors is $t > \left\lfloor \frac{n - k}{2} \right\rfloor$, we can prove that there exists a lower bound of the Choi error for random stabilizer codes. The result is summarized as follows.

\begin{theorem}\label{thm:choi_random_erasure}
Let $t\le n/2$. Suppose $Q$ is a subset of $n$ qubits with $\abs{Q}=t$. Then the expectation of Choi error against an erasure error on $Q$ over $[[n, k]]$ random stabilizer codes has a lower bound:
\begin{equation}
\mathbb{E}_{C}\epsilon_Q
\ge\frac{2t-(n-k)-e^{-1}}{8k},
\end{equation}
where $\epsilon_Q$ is the Choi error $\epsilon_{\mathrm{Choi}}$ for erasure on $Q$.
\end{theorem}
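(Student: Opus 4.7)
The plan is to use the complementary-channel form of the Choi error and the stabilizer structure to turn the problem into a comparison of conditional entropies. Starting from Eq.~\eqref{eq:choicomplementary} and the purified-distance-to-trace-distance bound in Eq.~\eqref{eq:tracedistanceineq},
\begin{equation*}
\epsilon_Q \ge \tfrac{1}{2}\min_{\zeta} \|\rho_{QR} - \zeta_Q\otimes I_R/2^k\|_1,
\end{equation*}
where $\rho_{QR} = \tr_{\bar Q}\bigl[U_S(\ketbra{\hat\phi}_{LR}\otimes \ketbra{0^{n-k}})U_S^\dagger\bigr]$ for a random Clifford encoding $U_S$. The Alicki-Fannes-Winter continuity inequality applied to $H(R|Q)$ for $\rho_{QR}$ versus $\zeta_Q\otimes I_R/2^k$ (whose conditional entropy equals $k$) gives a per-instance bound
\begin{equation*}
\epsilon_Q \ge \frac{k - H(R|Q)_{\rho_{QR}} - g(\epsilon_Q)}{2k}, \qquad g(x) = (1+x)h\!\bigl(\tfrac{x}{1+x}\bigr),
\end{equation*}
so the task reduces to estimating the conditional-entropy gap $k - H(R|Q)_{\rho_{QR}}$.

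Next, I would compute $H(R|Q)_{\rho_{QR}}$ exactly using the stabilizer formalism. For a stabilizer code $[[n,k]]$ with stabilizer group $\mathcal{S}$, the reduced state on any subregion $Q$ of size $t$ is a scaled projector onto the codespace of the restricted stabilizer, so $H(Q)=t-\ell_Q$ with $\ell_Q = \log_2|\{g\in\mathcal{S}: g|_{\bar Q}=I\}|$, and symmetrically $H(\bar Q)=(n-t)-\ell_{\bar Q}$. Purity of $\rho_{SR}$ (the encoded maximally entangled input is a pure state on $SR$) yields $H(QR)=H(\bar Q)$, hence
\begin{equation*}
k - H(R|Q)_{\rho_{QR}} = 2t-(n-k) + \ell_{\bar Q}-\ell_Q.
\end{equation*}
Averaging over random Clifford encodings, a uniformly random stabilizer group of size $2^{n-k}$ contains at most $2^{-(2t-(n-k))}$ nontrivial elements supported entirely in $\bar Q$ in expectation (and similarly for $Q$); a standard Markov-type argument shows that $\mathbb{E}_C\,\ell_Q$ and $\mathbb{E}_C\,\ell_{\bar Q}$ are negligible when $2t>n-k$ and can be absorbed into the additive $e^{-1}$ correction in the numerator.

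The main obstacle will be the nonlinear dependence of $g(\epsilon_Q)$ on $\epsilon_Q$ when taking the ensemble average. I plan to exploit the concavity of $g$ on $[0,1]$ so that Jensen's inequality gives $\mathbb{E}_C\,g(\epsilon_Q)\le g(\mathbb{E}_C\,\epsilon_Q)$, and then invoke $\max_{x\in[0,1]}(-x\ln x)=e^{-1}$ to bound $g$ in the relevant regime by $e^{-1}$ plus a linear-in-$\epsilon$ slack. Solving the resulting implicit inequality for $\mathbb{E}_C\,\epsilon_Q$ yields the advertised estimate, with the denominator $8k$ arising from absorbing the slack together with the small residual contributions from $\mathbb{E}_C[\ell_Q-\ell_{\bar Q}]$, and the $e^{-1}$ in the numerator matching the bound on $g$. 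The assumption $t\le n/2$ ensures both that the Singleton defect $2t-(n-k)$ is well defined and that the random-stabilizer expectations of $\ell_Q,\ell_{\bar Q}$ remain uniformly subdominant.
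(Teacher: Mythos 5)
Your route is genuinely different from the paper's. The paper never touches the stabilizer structure: it fixes two \emph{disjoint} regions $Q_1,Q_2$ of size $t$, proves a deterministic, per-encoding bound on $\tfrac12(\epsilon_{Q_1}+\epsilon_{Q_2})$ by running Fannes--Audenaert on the recovered state, converting the recovery error into a bound on the mutual information $I(Q_i:R)$, and chaining subadditivity to get $\delta_1+\delta_2\ge 2t-(n-k)$ with $\delta_i=4k\epsilon_{Q_i}+e^{-1}$; the randomness of the code enters only through the permutation symmetry $\mathbb{E}_C\epsilon_{Q_1}=\mathbb{E}_C\epsilon_{Q_2}$. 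You instead bound a single region via the complementary-channel/decoupling picture plus Alicki--Fannes--Winter, compute $H(R|Q)$ exactly for stabilizer states, and use the ensemble average to kill the fluctuation terms $\ell_Q,\ell_{\bar Q}$. Your skeleton is sound up to and including the identity $k-H(R|Q)_{\rho}=2t-(n-k)+\ell_{\bar Q}-\ell_Q$ (and your identification of the relevant subgroup with code stabilizers supported in $Q$ is correct, since stabilizers of the encoded Choi state acting trivially on $R$ are exactly the code stabilizers). For large $k$ your method in fact yields a stronger denominator ($\approx 2k$ rather than $8k$).

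The gap is in the last step, which is asserted rather than carried out, and the stated constants do not come out. First, the AFW correction is $g(\epsilon)=(1+\epsilon)h(\epsilon/(1+\epsilon))=(1+\epsilon)\log_2(1+\epsilon)-\epsilon\log_2\epsilon$; its additive part is bounded by $\tfrac{1}{e\ln 2}\approx 0.53$, not $e^{-1}\approx 0.37$, and its linear part contributes an extra $O(\epsilon)$ that inflates the denominator for small $k$. Second, and more seriously, your claim that the expected number of stabilizers inside $Q$ scales ``similarly'' to $2^{-(2t-(n-k))}$ is wrong: for $|Q|=t$ the expected count is $\approx 2^{2t-n-k}$, and it is precisely this term (entering with a minus sign through $-\ell_Q$) that threatens the lower bound. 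At $t=n/2$ this is $\Theta(2^{-k})$, so $\mathbb{E}_C\ell_Q$ is a constant for constant $k$ and is not absorbable into an $e^{-1}$-sized slack. Consequently, in the regime $2t-(n-k)=O(1)$ with small $k$ --- exactly the regime the paper invokes downstream (any $t>\lfloor(n-k)/2\rfloor$ forces $\epsilon_Q=\Omega(1/k)$) --- your numerator $2t-(n-k)-\mathbb{E}_C\ell_Q-\tfrac{1}{e\ln 2}$ can be zero or negative and the bound becomes vacuous, whereas the theorem's is not. To close the gap you would need either a sharper tail bound on $\ell_Q$ (not just Markov on the expectation) or to restrict to $t$ strictly below $n/2$ or $k$ growing; alternatively, the paper's two-region pairing sidesteps the issue entirely because its per-code inequality never needs the stabilizer count.
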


We postpone the proof of Theorem~\ref{thm:choi_random_erasure} to the end of this subsection. With the number of erasure errors increasing, the Choi error also monotonically increases. Thus, we have the following corollary.
\begin{corollary}
For $t>n/2$,
\begin{equation}
\mathbb{E}_{C}\epsilon_Q
\ge
\mathbb{E}_{C}\epsilon_{Q'}
=\frac18-\frac{1}{ek},
\end{equation}
where $\abs{Q'} = n/2$.
\end{corollary}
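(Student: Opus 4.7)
The plan is to prove the corollary in two steps mirroring its statement: a monotonicity step that reduces the case $|Q|>n/2$ to the boundary $|Q'|=n/2$, followed by a direct evaluation of $\mathbb{E}_C\epsilon_{Q'}$ via Theorem~\ref{thm:choi_random_erasure} applied at $t=n/2$.

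For the monotonicity, I would combine two ingredients. The first is permutation invariance: since qubit permutations lie in the Clifford group, the uniform measure on the $n$-qubit Clifford group is invariant under conjugation by permutation matrices, and therefore $\mathbb{E}_C\epsilon_Q$ depends on $Q$ only through its cardinality. Hence for any $Q$ with $|Q|>n/2$, I may choose an arbitrary subset $Q''\subset Q$ of size $n/2$ and deduce $\mathbb{E}_C\epsilon_{Q''}=\mathbb{E}_C\epsilon_{Q'}$. The second ingredient is a pointwise data-processing inequality $\epsilon_Q\ge\epsilon_{Q''}$ valid for each realization of the encoding: because the erasure channel factorizes as $\mathcal{N}_Q=\mathcal{N}_{Q\setminus Q''}\circ\mathcal{N}_{Q''}$, every decoder $\mathcal{D}$ for $\mathcal{N}_Q$ yields a valid decoder $\mathcal{D}\circ\mathcal{N}_{Q\setminus Q''}$ for $\mathcal{N}_{Q''}$ that reproduces exactly the same recovered state, so the minimization over decoders defining $\epsilon_{Q''}$ is no larger than the one defining $\epsilon_Q$. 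Taking expectations over the Clifford ensemble and chaining the two observations yields $\mathbb{E}_C\epsilon_Q\ge\mathbb{E}_C\epsilon_{Q'}$.

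For the boundary evaluation, I would substitute $t=n/2$ directly into Theorem~\ref{thm:choi_random_erasure}, which is admissible at the saturating value of the hypothesis $t\le n/2$, giving
\begin{equation}
\mathbb{E}_C\epsilon_{Q'}\ge\frac{2(n/2)-(n-k)-e^{-1}}{8k}=\frac{k-e^{-1}}{8k}=\frac{1}{8}-\frac{1}{8ek},
\end{equation}
which reproduces the form of the corollary up to the constant in the correction term. I anticipate that the only real subtlety lies in the pointwise data-processing step, specifically making sure the argument respects the min-over-decoders structure of the Choi error and that the infimum is not changed by the pre-composition with $\mathcal{N}_{Q\setminus Q''}$; this is standard, so I expect the corollary to follow as a short consequence of Theorem~\ref{thm:choi_random_erasure} together with the Clifford symmetry rather than any new computation.
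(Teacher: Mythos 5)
Your proposal is correct and takes essentially the same route as the paper, which merely asserts that the Choi error is monotone in the number of erased qubits and then substitutes $t=n/2$ into Theorem~\ref{thm:choi_random_erasure}; your data-processing argument (pre-composing any decoder for erasure on $Q$ with $\tr_{Q\setminus Q''}$) together with the permutation invariance of Lemma~\ref{lem:fix_subsystem} is exactly the justification the paper leaves implicit. One remark: your constant $\frac18-\frac{1}{8ek}$ is what the substitution actually yields and is stronger than the paper's stated $\frac18-\frac{1}{ek}$ (which appears to have dropped a factor of $8$ in the correction term, and is moreover written with an unwarranted equality sign where only a lower bound is available); since your bound implies the paper's, the corollary follows a fortiori.
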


For strength-$p$ local erasure errors, each qubit is erased independently with probability $p$. The number of erased qubits satisfies a binomial distribution $B(n, p)$, where we denote this number or random variable as $E_n$. Based on Theorem~\ref{thm:choi_random_erasure}, the Choi error is always lower bounded by $\frac{1-e^{-1}}{8k}\geq \frac{1}{16n}$ when $E_n > \tau n$. Then the Choi error of a random stabilizer code is lower bounded by $\frac{1}{16n}\Pr\left[ E_n > \tau n \right]$. Using Chernoff bounds, for $\tau > p$ or $1-\frac{k}{n}-2p>0$, we have
\begin{equation}
\Pr\left[ E_n > \tau n \right] = \exp\left( -n \cdot D\left( \tau \,\|\, p \right) + o(n) \right),
\end{equation}
where $D(a \| b)$ is the binary relative entropy:
\begin{equation}
D(a \| b) = a \log_2\left( \frac{a}{b} \right) + (1 - a) \log_2\left( \frac{1 - a}{1 - b} \right).
\end{equation}

Thus, the Choi error satisfies
\begin{equation}
\mathbb{E}_{C}\epsilon_{\mathrm{Choi}} \ge \frac{1}{16n}\exp\left( -n \cdot D\left( \tau \Big\| p \right) + o(n) \right) = \exp\left( -n D\left( \tau \Big\| p \right) + o(n) \right).
\end{equation}

To further give a lower bound of the Choi error for the block-encoding scheme, we introduce the near-optimal fidelity~\cite{Guo2024NearOptimal}, denoted as $\widetilde{F}_{\mathrm{Choi}}$, with the corresponding near-optimal Choi error denoted as $\widetilde{\epsilon}_{\mathrm{Choi}}$. Compared to the definition of original Choi fidelity,
\begin{equation}
F_{\mathrm{Choi}} = \max_{\mathcal{D}} F\left(\hat{\phi}_{LR}, [(\mathcal{D}\circ\mathcal{N}\circ\mathcal{E})_L\otimes I_R](\hat{\phi}_{LR})\right),
\end{equation}
the near-optimal fidelity uses the transpose channel recovery for decoding,
\begin{equation}
\widetilde{F}_{\mathrm{Choi}} = F\left(\hat{\phi}_{LR}, [(\mathcal{D}^{TC}\circ\mathcal{N}\circ\mathcal{E})_L\otimes I_R](\hat{\phi}_{LR})\right).
\end{equation}
The transpose channel recovery $\mathcal{D}^{TC}$ is defined as~\cite{Guo2024NearOptimal}
\begin{equation}
\mathcal{D}^{TC} = P_L\circ \mathcal{N}^{\dagger} \circ \mathcal{N}^{-\frac{1}{2}}(P_L).
\end{equation}
Here, $\mathcal{N}$ is the noise channel. The map $P_L$ satisfies $P_L(\rho) = \Pi_L \rho \Pi_L$, where $\Pi_L$ is the projector to the logical space. The map $\mathcal{N}^{-\frac{1}{2}}(P_L)$ satisfies $[\mathcal{N}^{-\frac{1}{2}}(P_L)](\rho) = \mathcal{N}^{-\frac{1}{2}}(\Pi_L)\rho \mathcal{N}^{-\frac{1}{2}}(\Pi_L)$.

It has been proved that there exists two-sided bounds between the optimal fidelity and the near-optimal fidelity~\cite{Barnum2002Reversing,Guo2024NearOptimal}:
\begin{equation}
\frac{1}{2}(1-\widetilde{F}_{\mathrm{Choi}})\leq 1-F_{\mathrm{Choi}}\leq 1-\widetilde{F}_{\mathrm{Choi}}.
\end{equation}
Using the relation $F = \sqrt{1-\epsilon^2}$, we have that for sufficiently small $\epsilon_{\mathrm{Choi}}$,
\begin{equation}
\epsilon_{\mathrm{Choi}}\leq \widetilde{\epsilon}_{\mathrm{Choi}}\leq \sqrt{2}\epsilon_{\mathrm{Choi}}.
\end{equation}
The left inequality follows from the near-optimality. The right inequality is derived by
\begin{equation}
\begin{split}
&\frac{1}{2}(1-\sqrt{1-\widetilde{\epsilon}^2_{\mathrm{Choi}}})\leq 1-\sqrt{1-\epsilon^2_{\mathrm{Choi}}}\\
\Leftrightarrow &  4(1-\epsilon^2_{\mathrm{Choi}})\leq 2-\widetilde{\epsilon}^2_{\mathrm{Choi}} + 2\sqrt{1-\widetilde{\epsilon}^2_{\mathrm{Choi}}}\leq 2-\widetilde{\epsilon}^2_{\mathrm{Choi}} + 2(1-\frac{1}{2}\widetilde{\epsilon}^2_{\mathrm{Choi}})\\
\Leftrightarrow & \widetilde{\epsilon}_{\mathrm{Choi}}\leq \sqrt{2}\epsilon_{\mathrm{Choi}}.
\end{split}
\end{equation}
Note that in the second line we use $\sqrt{1-x}\leq 1-\frac{1}{2}x$ for sufficiently small $x$.

From the definition of the transpose recovery map, it can be seen that for a block-encoding scheme $\mathcal{E} = \bigotimes_i \mathcal{E}_i$ and block noise model $\mathcal{N} = \bigotimes_i \mathcal{N}_i$, the transpose recovery map is also a tensor product, $\mathcal{D}^{TC} = \bigotimes_i \mathcal{D}^{TC}_i$. Thus, given a block-encoding scheme, since each block is independent, and the noise also acts independently on each block, the near-optimal fidelity for the whole block-encoding scheme is the product of the near-optimal fidelity for each block.
\begin{equation}
\widetilde{F}_{\mathrm{Choi}} = \prod_{i=1}^{\frac{n}{4\xi}}\widetilde{F}_{\mathrm{block}_i},
\end{equation}
or equivalently,
\begin{equation}
\widetilde{F}^2_{\mathrm{Choi}} = \prod_{i=1}^{\frac{n}{4\xi}}\widetilde{F}^2_{\mathrm{block}_i}.
\end{equation}
Then,
\begin{equation}
1-\widetilde{\epsilon}_{\mathrm{Choi}}^2 = \prod_{i=1}^{\frac{n}{4\xi}}(1-\widetilde{\epsilon}_{\mathrm{block}_i}^2),
\end{equation}
where
\begin{equation}
\mathbb{E}_{U\sim \mathfrak{S}_n^B}\widetilde{\epsilon}_{\mathrm{block}_i}\ge \mathbb{E}_{U\sim \mathfrak{S}_n^B}\epsilon_{\mathrm{block}_i} \ge \exp\left( -4\xi \cdot D\left( \tau \Big\| p \right) + o(\xi) \right).
\end{equation}
Thus, we obtain the lower bound of the Choi error for the block-encoding scheme:
\begin{equation}
\begin{split}
\mathbb{E}_{U\sim \mathfrak{S}_n^B}\epsilon_{\mathrm{Choi}}
&\geq \frac{1}{\sqrt{2}} \mathbb{E}_{U\sim \mathfrak{S}_n^B}\widetilde{\epsilon}_{\mathrm{Choi}}\\
&\geq \frac{1}{\sqrt{2}} \mathbb{E}_{U\sim \mathfrak{S}_n^B}\widetilde{\epsilon}^2_{\mathrm{Choi}}\\
&= \frac{1}{\sqrt{2}}\mathbb{E}_{U\sim \mathfrak{S}_n^B}(1-\prod_i(1-\widetilde{\epsilon}_{\mathrm{block}_i}^2))\\
&= \frac{1}{\sqrt{2}}(1-\prod_i(1-\mathbb{E}_{U\sim \mathfrak{S}_n^B}\widetilde{\epsilon}_{\mathrm{block}_i}^2))\\
&\geq \frac{1}{\sqrt{2}}(1-\prod_i(1-[\mathbb{E}_{U\sim \mathfrak{S}_n^B}\widetilde{\epsilon}_{\mathrm{block}_i}]^2))\\
&\geq \frac{1}{\sqrt{2}}(1-(1-\exp\left( -8\xi \cdot D\left( \tau \Big\| p \right) + o(\xi) \right))^{\frac{n}{4\xi}})\\
&\geq\min \left\{ \frac{n}{8\sqrt{2}\xi}\exp\left( -8\xi \cdot D\left( \tau \Big\| p \right) + o(\xi) \right), \Omega(1) \right\}\\
&=\min\left\{ n\exp\left( -8\xi D\left( \tau \Big\| p \right) + o(\xi) \right), \Omega(1) \right\}.
\end{split}
\end{equation}
Here, the fourth and fifth lines use the independence of each block and $\mathbb{E}x^2\geq (\mathbb{E}x)^2$, respectively. The last inequality comes as follows.  Let $m\coloneqq \frac{n}{4\xi}$ and $x \coloneqq m \exp(-8\xi \cdot D\left( \tau \Big\| p \right) + o(\xi))$. Then
\begin{equation}
    (1-(1-\exp\left( -8\xi \cdot D\left( \tau \Big\| p \right) + o(\xi) \right))^{\frac{n}{4\xi}}) = \left(1 - (1-\frac{x}{m})^m\right),
\end{equation}
Using $(1-\frac{x}{m})^{m}\le 1- \frac{x}{2}$ for $0\le x\le 1$, we obtain the first term; while for $x>1$, $(1-(1-\frac{x}{m})^m) \ge (1-(1-\frac{1}{m})^m) = \Omega(1)$, we obtain the second term.

Substituting $\xi$ with $\log(n/\varepsilon)$, we get that
\begin{equation}
\begin{split}
\mathbb{E}_{U\sim \mathfrak{S}_n^B}\epsilon_{\mathrm{Choi}}
\geq \min \left\{ n \left(\frac{\varepsilon}{n}\right)^{8D( \frac{1-\frac{k}{n}}{2} \| p )+o(1)}, \Omega(1) \right\}.
\end{split}
\end{equation}
This equation shows that, for the block-encoding scheme, given the noise strength $p$, a trade-off relation between the encoding rate $\frac{k}{n}$ and the Choi error exists. If one fixes an encoding rate, the Choi error will have a lower bound, and vice versa. Below, we prove that there exists a regime of the strength parameter $p$ where the double-layer blocked encoding scheme achieves both a higher encoding rate and a lower Choi error. Moreover, there exist regimes of noise strength and encoding rate such that the double-layer blocked encoding scheme requires a lower depth to make the Choi error decay.

Recall Corollary~\ref{coro:iiderasure_nonsmoothing}, the Choi error for the double-layer blocked encoding scheme is:
\begin{equation}
\mathbb{E}_{U\sim \mathfrak{C}_n^{\varepsilon}}\epsilon_{\mathrm{Choi}} \leq (2^{-n(1-\log(1+3p)-\frac{k}{n})} + \frac{4\varepsilon^{1-\frac{k}{n}}n^{\frac{k}{n}}}{\log (n/\varepsilon)})^{\frac{1}{4}}.
\end{equation}
The main contributions are $2^{-\frac{n}{4}(1-\log(1+3p)-\frac{k}{n})}$ and $(\frac{4\varepsilon^{1-\frac{k}{n}}n^{\frac{k}{n}}}{\log (n/\varepsilon)})^{\frac{1}{4}}$. In the following, we discuss the case of $1-\log(1+3p)-\frac{k}{n} > 0$. For simplicity, we set $r = \frac{k}{n}$ and $\varepsilon = n^{-\frac{\alpha+r}{1-r}}$ where $r$ and $\alpha$ and both positive constants; $p$ is also a constant. Then, $\frac{n}{\varepsilon} = n^{\frac{1+\alpha}{1-r}}$. The block size parameter $\xi = \log\frac{n}{\varepsilon} = \frac{1+\alpha}{1-r}\log n$. The upper bound of the Choi error for the double-layer blocked encoding scheme is simplified to:
\begin{equation}
\mathbb{E}_{U\sim \mathfrak{C}_n^{\varepsilon}}\epsilon_{\mathrm{Choi}} \leq (2^{-n(1-\log(1+3p)-r)} + \frac{4(1-r)}{(1+\alpha)n^{\alpha}\log n})^{\frac{1}{4}}=O(n^{-\frac{\alpha}{4}}).
\end{equation}
The lower bound of the Choi error for the block-encoding scheme is simplified to:
\begin{equation}\label{eq:blocklowerboundsimple}
\mathbb{E}_{U\sim \mathfrak{S}_n^B}\epsilon_{\mathrm{Choi}}
\geq \min\left\{n^{1 - \frac{8(1+\alpha)}{1-r}D(\frac{1-r}{2}\|p)+o(1)} = n^{c_{\mathrm{block}}},\Omega(1)\right\}.
\end{equation}

In the case that $0<p<1/3,0<r<1,\alpha>0$ and $1-\log(1+3p)-r > 0$, the Choi error of the double-layer blocked encoding scheme always decays. The decay scale is approximately $O(n^{-\frac{\alpha}{4}})$ where we ignore $\log n = n^{o(1)}$. On the other hand, the exponential coefficient within Eq.~\eqref{eq:blocklowerboundsimple} is
\begin{equation}
c_{\mathrm{block}} = 1 - \frac{8(1+\alpha)}{1-r}D(\frac{1-r}{2}\|p)+o(1).
\end{equation}
Given noise strength $p$, under the constraint of $r < 1-\log(1+3p)$ and $\alpha > 0$, $c_{\mathrm{block}}$ can be greater than 0. For instance,
\begin{equation}
p = 0.25, r = 0.2, \alpha = 0.1\Rightarrow c_{\mathrm{block}} \approx 0.14 \Rightarrow \mathbb{E}_{U\sim \mathfrak{S}_n^B}\epsilon_{\mathrm{Choi}}
\geq \min\{n^{0.14}, \Omega(1) \} = \Omega(1).
\end{equation}
That is, given the noise strength $p = 0.25$ and encoding rate $r = 0.1$, we obtain that the double-layer blocked encoding scheme can have an error decay $n^{-0.025}$, and the block-encoding scheme cannot make the Choi error decay. In this case, $\xi = 1.375\log n$. As mentioned in the main text, the depths for the two encoding schemes are the same after fixing $\xi$. This manifests that for $p = 0.25$ and $r = 0.1$, the double-layer blocked encoding scheme requires a lower depth to make the Choi error vanish. On the other hand, this implies that at the same circuit depth, the double-layer blocked encoding scheme can achieve a higher encoding rate.

More concretely, one can find parameter regimes such that $c_{\mathrm{block}} > 0$. In this case, the double-layer blocked encoding scheme can correct errors, but the block-encoding scheme cannot. One can also find parameter regimes such that $c_{\mathrm{block}} > -\frac{\alpha}{4}$. In this case, the double-layer blocked encoding scheme always outperforms the block-encoding scheme by manifesting a faster error decay. We plot the regimes that $c_{\mathrm{block}} > 0$ and $c_{\mathrm{block}} > -\frac{\alpha}{4}$ in Figs.~\ref{fig:blockfail} and~\ref{fig:blockcompare}, respectively.

\begin{figure}[htbp!]
\raggedright
\begin{minipage}[b]{0.98\linewidth}
\subfigure[]{\label{fig:blockfail}
\includegraphics[width=8.5cm]{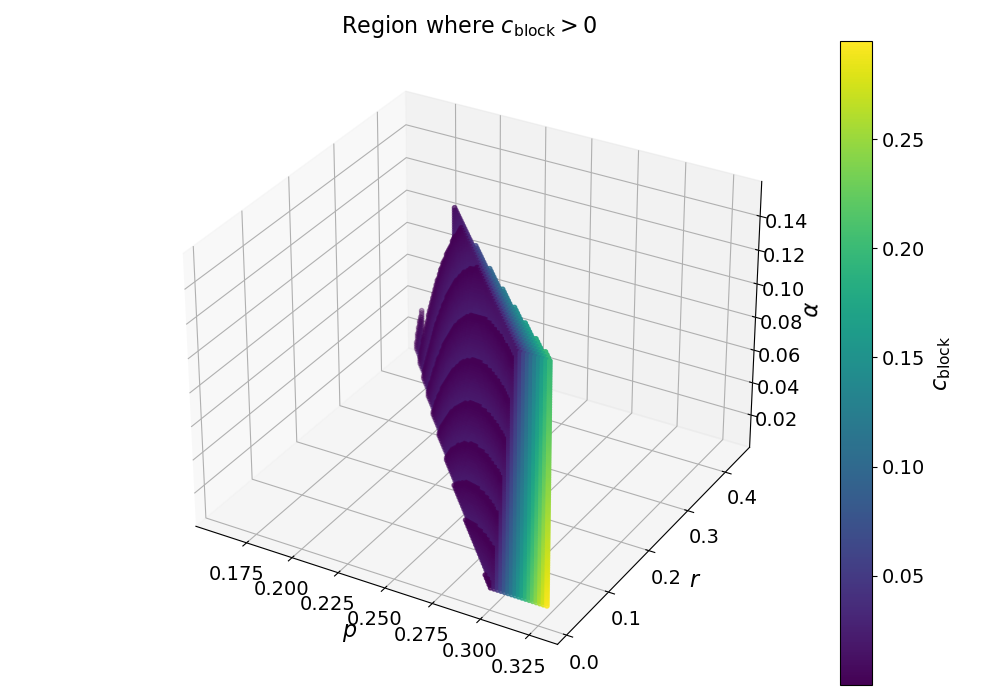}
}
\subfigure[]{
\label{fig:blockcompare}
\includegraphics[width=8.5cm]{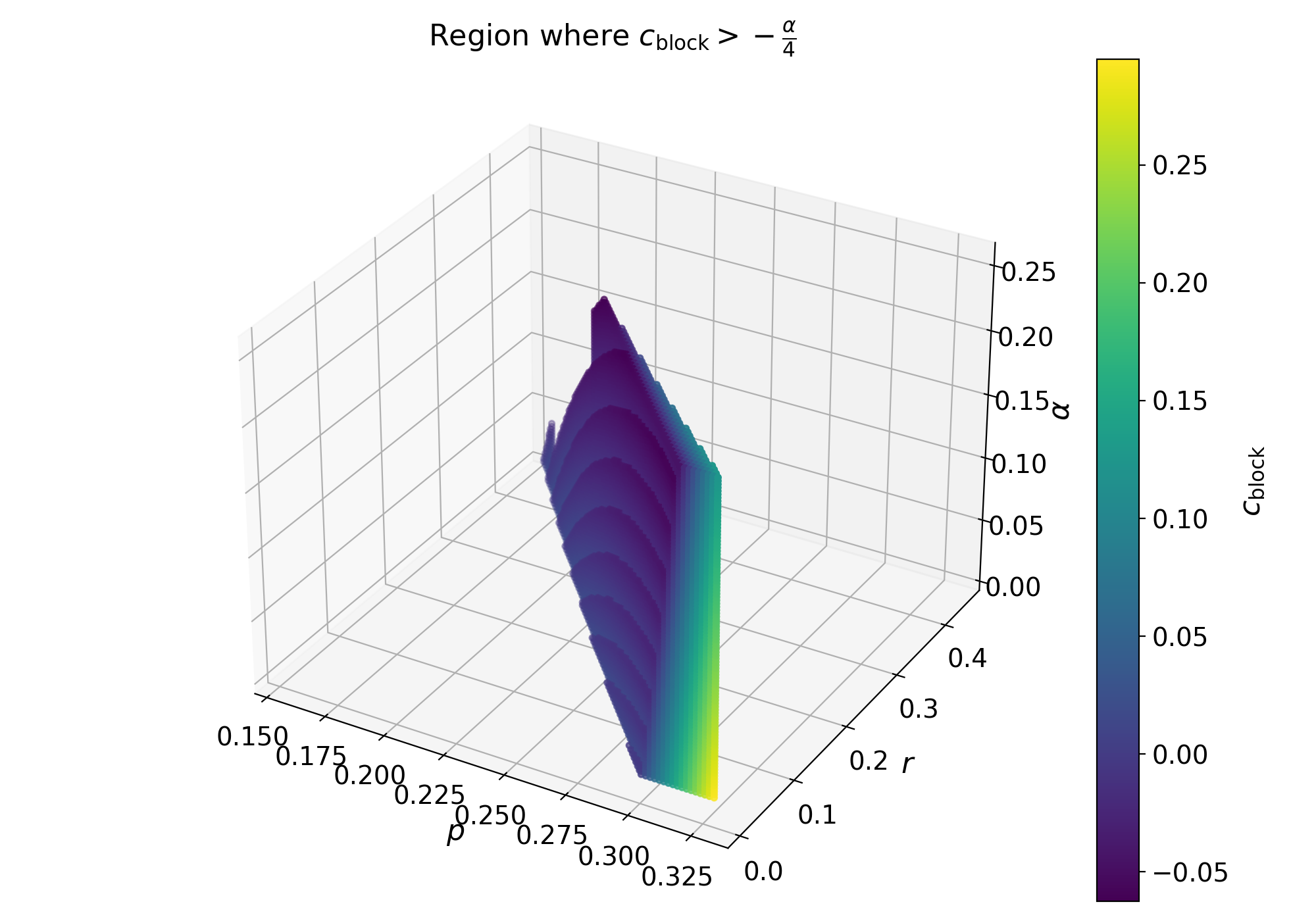}}
\end{minipage}
\caption{(a) Parameter regime where $c_{\mathrm{block}} > 0$. In this case, the double-layer blocked encoding scheme can make the Choi error decay, but the block-encoding scheme cannot. (b) Parameter regime where $c_{\mathrm{block}} > -\frac{\alpha}{4}$. In this case, the double-layer blocked encoding scheme outperforms the block-encoding scheme.}
\end{figure}

Below, we show the proof of Theorem~\ref{thm:choi_random_erasure}.
\begin{proof}
We first observe that the choice of subset $Q$ is irrelevant when considering the expectation over the random stabilizer encoding scheme.

\begin{lemma}\label{lem:fix_subsystem}
For any fixed subsets $Q_1, Q_2\subseteq [n]$ with $|Q_1|=|Q_2|=t$,
\begin{equation}
\mathbb{E}_{C}\epsilon_{C,Q_1} =\mathbb{E}_{C}\epsilon_{C,Q_2},
\end{equation}
where $\epsilon_{C,Q_i}$ is the Choi error for encoding unitary $C$ and erasure errors on $Q_i$.
\end{lemma}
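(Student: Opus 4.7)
The plan is to exploit the permutation symmetry of the uniform measure on the Clifford group. Since $|Q_1|=|Q_2|=t$, there exists a qubit permutation $\pi\in S_n$ such that $\pi(Q_1)=Q_2$. Let $U_\pi$ denote the corresponding permutation unitary on $n$ qubits; this is itself a Clifford unitary (it is a product of SWAP gates). The key fact is that the uniform distribution on the Clifford group is invariant under left multiplication: if $C$ is uniform, so is $U_\pi C$.

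First I would write the Choi error in the form
\[
\epsilon_{C,Q} \;=\; \min_{\mathcal D}\, P\!\left(\,\ketbra{\hat\phi}_{LR},\; (\mathcal D\circ \mathrm{tr}_Q\circ \mathcal E_C\otimes I_R)(\ketbra{\hat\phi}_{LR})\,\right),
\]
where $\mathcal E_C(\rho)=C(\rho\otimes\ketbra{0^{n-k}})C^\dagger$. Then I would use the identity
\[
\mathrm{tr}_{Q_2}\bigl(U_\pi\,\sigma\,U_\pi^\dagger\bigr) \;=\; V_\pi \,\mathrm{tr}_{Q_1}(\sigma)\, V_\pi^\dagger,
\]
where $V_\pi$ is the induced permutation unitary on $[n]\setminus Q_1 \to [n]\setminus Q_2$ (i.e.\ the restriction of $U_\pi$ to the surviving qubits). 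Substituting $\sigma = \mathcal E_C(\rho)$ and using $\mathcal E_{U_\pi C}=U_\pi\,\mathcal E_C(\cdot)\,U_\pi^\dagger$ gives
\[
\mathrm{tr}_{Q_2}\circ \mathcal E_{U_\pi C} \;=\; V_\pi\,\bigl(\mathrm{tr}_{Q_1}\circ \mathcal E_C\bigr)\,V_\pi^\dagger.
\]

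Next, I would absorb $V_\pi$ into the decoding map: since $V_\pi$ is a fixed unitary and the minimum in $\epsilon$ ranges over all recovery channels $\mathcal D$, the map $\mathcal D\mapsto \mathcal D\circ V_\pi$ is a bijection on the set of decoders, so
\[
\epsilon_{U_\pi C,\, Q_2} \;=\; \epsilon_{C,\, Q_1}.
\]
Taking expectations over $C$ uniform on the Clifford group and invoking left-invariance ($U_\pi C$ distributed as $C$) yields
\[
\mathbb E_C\,\epsilon_{C,Q_2} \;=\; \mathbb E_C\,\epsilon_{U_\pi C,\,Q_2} \;=\; \mathbb E_C\,\epsilon_{C,Q_1},
\]
which is the claim.

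There is no real obstacle: the only subtlety is the book-keeping of the surviving subsystem relabeling $V_\pi$, and this is made harmless by the fact that $\epsilon_{C,Q}$ is defined through an optimization over all decoders. The argument extends unchanged if one replaces the uniform Clifford measure with any ensemble invariant under left multiplication by permutation unitaries, which is the only property actually used.
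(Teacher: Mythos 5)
Your proposal is correct and follows essentially the same route as the paper: the paper likewise notes that the SWAP permutation taking $Q_1$ to $Q_2$ is a Clifford unitary $S_{12}$, writes $\mathbb{E}_{C}\epsilon_{C,Q_2}=\mathbb{E}_{C}\epsilon_{S_{12}C,Q_1}$, and concludes by left-invariance of the uniform Clifford measure. Your version merely spells out the book-keeping (the relabeling unitary $V_\pi$ absorbed into the decoder) that the paper leaves implicit.
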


\begin{proof}
Because the SWAP operation is Clifford, there exists a Clifford $S_{12}$ that swaps $Q_1$ to $Q_2$. Hence
\begin{equation}
\begin{split}
\mathbb{E}_{C}\epsilon_{C,Q_2}
&=\mathbb{E}_{C}\epsilon_{S_{12}C,Q_1}\\
&=\mathbb{E}_{C}\epsilon_{C,Q_1}.
\end{split}
\end{equation}
\end{proof}

Fix the subsets $Q_1=[t]$, $Q_2=\{t+1,\dots,2t\}$, and $Q_3=[n]\backslash(Q_1\cup Q_2)$ so that $Q_3$ contains the remaining $n-2t$ physical qubits. We now prove the following.
\begin{lemma}\label{lem:choi_two_t_subsystem}
For any unitary encoding $C$,
\begin{equation}
\frac12\bigl(\epsilon_{Q_1}+\epsilon_{Q_2}\bigr)
\ge
\frac{2t-(n-k)-e^{-1}}{8k}.
\end{equation}
\end{lemma}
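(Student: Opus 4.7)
The plan is to translate each Choi error $\epsilon_{Q_i}$ into an information-theoretic decoupling statement between $Q_i$ and the reference $R$ via the complementary-channel formulation, and then exploit the dimensional constraint that $|Q_3|=2^{n-2t}$ is too small to accommodate the reference's entanglement whenever $2t-(n-k)>0$. First, using Eq.~\eqref{eq:choicomplementary} (the complementary channel of erasure on $Q_i$ is the partial trace $\tr_{Q_i^c}$), $\epsilon_{Q_i}$ equals the minimum purified distance of $\rho_{Q_iR}$ from a product of the form $\zeta_{Q_i}\otimes \id_R/2^k$. The relation $\|\cdot\|_1\le 2P(\cdot,\cdot)$ from Eq.~\eqref{eq:tracedistanceineq} together with a triangle inequality against the choice $\zeta_{Q_i}=\rho_{Q_i}$ yields $\|\rho_{Q_iR}-\rho_{Q_i}\otimes\id_R/2^k\|_1\le 4\epsilon_{Q_i}$. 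Feeding this into the Alicki--Fannes--Winter continuity bound for conditional entropy gives
\begin{equation}
I(Q_i:R)_\rho \;\le\; 4k\,\epsilon_{Q_i} + (1+2\epsilon_{Q_i})\,h\!\left(\tfrac{2\epsilon_{Q_i}}{1+2\epsilon_{Q_i}}\right),
\end{equation}
whose leading linear term, summed over $i=1,2$, supplies the $8k$ in the denominator of the target bound.

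Second, I lower bound $I(R:Q_1)+I(R:Q_2)$ using only dimensions. Since $\rho_R=\id_R/2^k$ and $|\psi\rangle_{RS}$ is pure, $I(R:S)=2k$; the mutual-information chain rule $I(R:S)=I(R:Q_1Q_2)+I(R:Q_3\mid Q_1Q_2)$ combined with the dimension bound $I(R:Q_3\mid Q_1Q_2)\le 2\log|Q_3|=2(n-2t)$ yields $I(R:Q_1Q_2)\ge 2[\,2t-(n-k)\,]$. I then unfold $I(R:Q_1Q_2)$ using the identity
\begin{equation}
I(R:Q_1Q_2)\;=\;I(R:Q_1)+I(R:Q_2)+\big[\,I(Q_1:Q_2\mid R)-I(Q_1:Q_2)\,\big],
\end{equation}
and bound the quantum-conditional correction $|I(Q_1:Q_2\mid R)-I(Q_1:Q_2)|$ by the dimension-free estimate $2\min(t,k)$, absorbing this into the continuity overhead. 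Combining with Step 1 and rearranging gives $8k\cdot\tfrac12(\epsilon_{Q_1}+\epsilon_{Q_2})+\text{(binary-entropy overhead)}\ge 2t-(n-k)$, where the overhead is controlled by the identity $\max_{x\in[0,1]}(-x\ln x)=e^{-1}$, producing exactly the claimed inequality.

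The main obstacle will be Step 2: quantum mutual information does not satisfy the classical monogamy $I(A:B\mid C)\le I(A:B)$, so the translation from $I(R:Q_1Q_2)$ down to $I(R:Q_1)+I(R:Q_2)$ incurs a genuinely quantum correction through $I(Q_1:Q_2\mid R)$, which can be as large as $2\min(t,k)$ due to GHZ-type correlations created by the encoder $C$. I expect to handle this either by folding the dimension bound $I(Q_1:Q_2\mid R)\le 2\min(t,k)$ directly into the $e^{-1}$ overhead, or by symmetrizing the chain rule across the two orderings $(Q_1,Q_2)$ and $(Q_2,Q_1)$ and averaging so that the asymmetric conditional contributions partially cancel. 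In either case, the very specific constant $e^{-1}$ in the statement almost certainly traces back to $\max_x(-x\ln x)=e^{-1}$ in the Alicki--Fannes--Winter continuity function, which is the only place a numerical constant of this flavor naturally arises in the argument.
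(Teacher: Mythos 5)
Your Step~1 is sound and is genuinely a different route to the same intermediate bound: the paper never invokes the complementary-channel/decoupling formulation here, but instead takes the recovery map $\mathcal{D}_{Q_2Q_3\to L}$ achieving $\epsilon_1$, applies the Fannes--Audenaert inequality to the recovered state to get $S(\sigma_{LR})\le 4k\epsilon_1+e^{-1}$, and then uses $I(Q_1:R)\le I(Q_1:LR)\le 2S(\sigma_{LR})$ together with the fact that $\mathcal{D}$ does not touch $Q_1R$. Either way one lands on $I(Q_i:R)_\rho\le O(k)\,\epsilon_{Q_i}+O(1)$, and your identification of $e^{-1}$ with $\max_x(-x\ln x)$ is the right one.

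The gap is in Step~2, and it is fatal as written. The decomposition $I(R:Q_1Q_2)=I(R:Q_1)+I(R:Q_2)+\bigl[I(Q_1:Q_2\mid R)-I(Q_1:Q_2)\bigr]$ forces you to upper-bound $I(Q_1:Q_2\mid R)-I(Q_1:Q_2)$, and the best dimension bound, $2\min(t,k)$ (which you correctly state), is \emph{extensive}: it cannot be folded into the $e^{-1}$ overhead, and subtracting it destroys the bound --- for instance, with $t=n/2$ and $k=n/5$ you would get $I(R:Q_1)+I(R:Q_2)\ge 2[2t-(n-k)]-2k=0$, which proves nothing. The symmetrization idea also fails: averaging the two chain-rule orderings gives $I(R:Q_1)+I(R:Q_2)=2I(R:Q_1Q_2)-\bigl[I(R:Q_2\mid Q_1)+I(R:Q_1\mid Q_2)\bigr]$ with nonnegative conditional terms, i.e.\ an \emph{upper} bound on the sum, the wrong direction; and the correction $I(Q_1:Q_2\mid R)-I(Q_1:Q_2)$ is already symmetric under $Q_1\leftrightarrow Q_2$, so there is nothing to cancel. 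Indeed no inequality of the form $I(R:Q_1)+I(R:Q_2)\ge I(R:Q_1Q_2)-O(1)$ can hold in general, since information about $R$ can be encoded in $Q_1Q_2$ so that neither part alone sees it. The paper bypasses $I(R:Q_1Q_2)$ entirely: it uses purity of the global encoded state to write $S(Q_1R)=S(Q_2Q_3)$ and $S(Q_2R)=S(Q_1Q_3)$, applies subadditivity to each right-hand side, and \emph{adds} the two inequalities so that $S(Q_1)$ and $S(Q_2)$ cancel, leaving $I(Q_1:R)+I(Q_2:R)\ge 2S(R)-2S(Q_3)\ge 2[2t-(n-k)]$. Replacing your Step~2 with this purity-plus-subadditivity argument, while keeping your Step~1, yields a correct proof.
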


\begin{proof}
Let $\epsilon_1\coloneqq\epsilon_{Q_1}$ and $\epsilon_2\coloneqq\epsilon_{Q_2}$. Also, recall that we denote $L$, $R$, and $S$ as the logical, reference, and physical systems, respectively. The initial input state is the $\hat{\phi}_{LR}$, and after encoding and noise, the state is denoted as $\hat{\phi}_{LR}$.

By definition of $\epsilon_Q$, there exists a recovery map $\mathcal{D}_{Q_2Q_3\!\rightarrow L}$ such that
\begin{equation}
\sigma=\mathcal{D}\!\bigl(\hat{\phi}_{LR}\bigr),
\qquad
P\bigl(\sigma_{LR},\hat{\phi}_{LR}\bigr)=\epsilon_1.
\end{equation}
Because the purified distance bounds the trace distance,
\begin{equation}
d_{\tr}\bigl(\sigma_{LR},\hat{\phi}_{LR}\bigr)\le\epsilon_1 .
\end{equation}
Applying the Fannes--Audenaert inequality~\cite{nielsen2010quantum} with subsystem size $|LR|=2k$ gives
\begin{equation}\label{eq:delta_1}
S(\sigma_{LR})
\le
S(\hat{\phi}_{LR})+2\epsilon_1 (2k)+\tfrac1e
=
4k\epsilon_1+\tfrac1e
\coloneqq \delta_1 .
\end{equation}
Therefore,
\begin{equation}
I_\sigma(Q_1:R) \le I_{\sigma}(Q_1:LR) \le 2S(\sigma_{LR}) \le 2 \delta_1.
\end{equation}
Here, $I_{\sigma}(A:B)$ is the mutual information between subsystems $A$ and $B$ for state $\sigma$. The first inequality comes from the data processing inequality for mutual information, and the second uses $I_\sigma(A:B)\le2S(\sigma_B)$ for arbitrary $\sigma$.
Let $\rho\coloneqq\hat{\phi}_{LR}$.  Because the recovery map $\mathcal{D}$ acts only on $Q_2Q_3$, the registers $Q_1$ and $R$ are untouched, so
\begin{equation}
I_{\rho}(Q_1:R) = I_{\sigma}(Q_1:R) \le 2\delta_1
\end{equation}
Expanding mutual information and using that $\rho$ is pure, we obtain
\begin{equation}
S(\rho_{Q_1})+S(\rho_R) \le S(\rho_{Q_1R})+2\delta_1 = S(\rho_{Q_2Q_3}) + 2\delta_1 \le S(\rho_{Q_2}) + S(\rho_{Q_3}) + 2\delta_1,
\end{equation}
where the last step uses subadditivity. Repeating the same argument for $Q_2$ gives
\begin{equation}
S(\rho_{Q_2})+S(\rho_R) \le S(\rho_{Q_2R})+2\delta_2 = S(\rho_{Q_1Q_3}) + 2\delta_2 \le S(\rho_{Q_1}) + S(\rho_{Q_3}) + 2\delta_2.
\end{equation}
Adding the two inequalities yields
\begin{equation}
2 S(\rho_R) \le 2S(\rho_{Q_3}) +2( \delta_1 + \delta_2).
\end{equation}
With $S(\rho_R)=k$ and $S(\rho_{Q_3})\le|Q_3|=n-2t$, we obtain
\begin{equation}
\delta_1 + \delta_2 \ge 2t - (n-k).
\end{equation}
Substituting \eqref{eq:delta_1}, we have
\begin{equation}
\frac{1}{2}(\epsilon_1 + \epsilon_2) \ge \frac{2t-(n-k) - e^{-1}}{8k}.
\end{equation}
\end{proof}

Based on Lemma~\ref{lem:fix_subsystem} and Lemma~\ref{lem:choi_two_t_subsystem}, we prove Theorem~\ref{thm:choi_random_erasure} as follows:
\begin{equation}
\mathbb{E}_{C}\epsilon_{Q_1} = \frac{1}{2}(\mathbb{E}_{C}\epsilon_{Q_1}+\mathbb{E}_{C}\epsilon_{Q_2}) = \mathbb{E}_{C}\frac{\epsilon_{Q_1}+\epsilon_{Q_2}}{2}\ge
\frac{2t-(n-k)-e^{-1}}{8k}.
\end{equation}
\end{proof}

\section{Analysis for 1D brickwork circuits}\label{appendssc:1DLRC}
In this part, we prove that the decoupling can be achieved by 1D log-depth brickwork circuits. We start with introducing preliminaries, which discuss how to evaluate $\tr O_2 \Phi^2_{\mathcal{E}} (O_1)$ by transforming it into a stochastic process. Then, we present the main results of the decoupling theorem for 1D brickwork circuits and prove Theorem~\ref{thm:1Dlocalrandomcircuit}. Finally, we apply this decoupling theorem to quantum error correction and show that there exists a constant threshold for encoding schemes with such circuits against local noise. Note that here we prove the result for a generic local dimension $q$.

\subsection{Random quantum circuits to stochastic processes}\label{sc:stochastic}
Here, we show how to evaluate $\tr O_2 \Phi^2_{\mathcal{E}} (O_1)$ by transforming it into a stochastic process for certain observables $O_1$ and $O_2$, which lays a foundation for further analyzing the decoupling error. The evaluation is equivalent to evaluating a statistical-mechanical partition function. Most of the analysis is similar to that in Ref.~\cite{Dalzell2022Anticoncentrate}. Nonetheless, in Ref.~\cite{Dalzell2022Anticoncentrate}, $O_1$ and $O_2$ take specific forms of $\ketbra{1^n}^{\otimes 2}$, which makes the boundary condition in the statistical-mechanical mapping easier. For general operators $O_1$ and $O_2$, the boundary condition is more complicated, and a more refined analysis is necessary.

We first evaluate $\Phi^2_{\mathcal{E}} (O_1)$. By definition,
\begin{equation}
\Phi^2_{\mathcal{E}} (O_1) = \mathbb{E}_{U\sim \mathcal{E}} U^{\otimes 2} O_1 U^{\dagger\otimes 2},
\end{equation}
where
\begin{equation}
U = U^{(s)}_{A^{(s)}}U^{(s-1)}_{A^{(s-1)}}\cdots U^{(1)}_{A^{(1)}}.
\end{equation}
We would like to evaluate terms like $\mathbb{E}_{U} U^{\otimes k} O U^{\dagger\otimes k}$. Also, $U^{(i)}$ is a Haar-random gate of dimension $q^2$. Inserting random single-qudit operations before all two-qudit gates does not change the expectation. Thus, we further consider the random unitary operation of the following form.
\begin{equation}\label{eq:eq:randomtwounitaryplusone}
U = U^{(s)}_{A^{(s)}}U^{(s-1)}_{A^{(s-1)}}\cdots U^{(1)}_{A^{(1)}}\bigotimes_{i=1}^n U^{(-i)}_{\{i\}}.
\end{equation}
Here, $U^{(-i)}_{\{i\}}$ is a random single-qudit operation with dimension $q\times q$ acting on qudit $i$. It is worth noting that when analyzing 1D brickwork circuits, we also insert a random single-qudit operation $\bigotimes_{i=1}^n U^{(-i)}_{\{i\}}$ before the random two-qudit unitary operations within Eq.~\eqref{eq:1DLQC}.

We continue the analysis by evaluating the actions of each part within $U$. The action of $\bigotimes_{i=1}^n U^{(-i)}_{\{i\}}$ on $O_1$ can be evaluated by Eq.~\eqref{eq:secondtwirling}. We obtain that
\begin{equation}
\mathbb{E}_{ U^{(-i)}}(\bigotimes_{i=1}^n U^{(-i)\otimes 2}_{\{i\}}) O_1 (\bigotimes_{i=1}^n U^{(-i)\dagger\otimes 2}_{\{i\}}) = \frac{1}{(q^2-1)^n}\sum_{\Vec{\gamma}\in \{I,S\}^n} \tr(O_1 \bigotimes_{j=1}^n g_{\gamma_j}) \bigotimes_{j=1}^n \gamma_j.
\end{equation}
Here, $\gamma_j$ is the $j$-th component of $\Vec{\gamma}$ and equals identity operation $\id$ or SWAP operation $F$. The term $g_{\gamma_j} = \id-q^{-1}F$ if $\gamma_j = \id$, and $g_{\gamma_j} = F-q^{-1}\id$ if $\gamma_j = F$. The vector $\gamma$ is called a configuration, corresponding to the spin configuration is the statistical-mechanical mapping. For simplicity, we denote $c_{\Vec{\gamma}}(O_1) = \frac{1}{(q^2-1)^n}\tr(O_1 \bigotimes_{j=1}^n g_{\gamma_j})$ and use $\Vec{\gamma}$ to represent $\bigotimes_{j=1}^n \gamma_j$ in the above equation. That is,
\begin{equation}
\mathbb{E}_{ U^{(-i)}}(\bigotimes_{i=1}^n U^{(-i)\otimes 2}_{\{i\}}) O_1 (\bigotimes_{i=1}^n U^{(-i)\dagger\otimes 2}_{\{i\}}) = \sum_{\Vec{\gamma}\in \{\id,F\}^n} c_{\Vec{\gamma}}(O_1)\Vec{\gamma}.
\end{equation}

The next step is evaluating the action of a random two-qudit gate $U^{(i)}_{A^{(i)}}$ on $\Vec{\gamma}$, which was previously studied in Ref.~\cite{Dalzell2022Anticoncentrate}. Since $A^{(i)}$ only contains two qudits, it suffices to analyze how $U^{(i)}_{A^{(i)}}$ acts on $\gamma$ within support $A^{(i)}$. Denote $M^{(i)}(O) = \mathbb{E}_{U^{(i)}}U^{(i)\otimes 2}_{A^{(i)}}O U^{(i)\dagger\otimes 2}_{A^{(i)}}$, we have that
\begin{equation}
M^{(i)}(O) = \tr_{A^{(i)}}(O\frac{\id_{A^(i)}-q^{-2}F_{A^(i)}}{q^4-1})\id_{A^{(i)}} + \tr_{A^{(i)}}(O\frac{F_{A^(i)}-q^{-2}\id_{A^(i)}}{q^4-1})F_{A^{(i)}}.
\end{equation}
Particularly, we consider that $O$ equals $\id\otimes \id$, $\id\otimes F$, $F\otimes \id$, and $F\otimes F$ within support $A^{(i)}$. We have that
\begin{align}
M^{(i)}(O_{\overline{A^{(i)}}}\otimes (\id\otimes \id)_{A^{(i)}}) &= O_{\overline{A^{(i)}}}\otimes (\id\otimes \id)_{A^{(i)}},\\
M^{(i)}(O_{\overline{A^{(i)}}}\otimes (F\otimes F)_{A^{(i)}}) &= O_{\overline{A^{(i)}}}\otimes (F\otimes F)_{A^{(i)}},\\
M^{(i)}(O_{\overline{A^{(i)}}}\otimes (\id\otimes F)_{A^{(i)}}) &= O_{\overline{A^{(i)}}}\otimes\left( \frac{q}{q^2+1}(\id\otimes \id)_{A^{(i)}} + \frac{q}{q^2+1}(F\otimes F)_{A^{(i)}} \right),\\
M^{(i)}(O_{\overline{A^{(i)}}}\otimes (F\otimes \id)_{A^{(i)}}) &= O_{\overline{A^{(i)}}}\otimes\left( \frac{q}{q^2+1}(\id\otimes \id)_{A^{(i)}} + \frac{q}{q^2+1}(F\otimes F)_{A^{(i)}} \right).
\end{align}
Here, $O_{\overline{A^{(i)}}}$ only has support outside $A^{(i)}$. From the above equation, we obtain that if $O\in \{\id,F\}^n$, then $M^{(i)}(O)\in \{\id,F\}^n$. Particularly, $M^{(i)}$ is a linear transformation over $\{\id,F\}^n$, and we use $M^{(i)}_{\Vec{\nu}\Vec{\gamma}}$ to denote the matrix element such that
\begin{equation}
M^{(i)}(\Vec{\gamma}) = \sum_{\Vec{\nu}\in \{\id,F\}^n}   M^{(i)}_{\Vec{\nu}\Vec{\gamma}}\Vec{\nu}.
\end{equation}
Suppose $A^{(i)}=\{a_i,b_i\}$, the matrix element $M^{(i)}_{\Vec{\nu}\Vec{\gamma}}$ satisfies
\begin{equation}\label{eq:transitionmatrix}
M^{(i)}_{\Vec{\nu}\Vec{\gamma}} =
\begin{cases}
\begin{aligned}
1,
\end{aligned} \quad &\text{if } \gamma_{a_i}=\gamma_{b_i}\text{ and } \Vec{\gamma}=\Vec{\nu},\\
\begin{aligned}
\frac{q}{q^2+1},
\end{aligned} \quad &\text{if } \gamma_{a_i}\neq\gamma_{b_i}, \nu_{a_i}=\nu_{b_i}, \text{and } \gamma_c=\nu_c, \forall c\in [n]\backslash\{a_i, b_i\}, \\
\begin{aligned}
0,
\end{aligned} \quad &\text{otherwise.}
\end{cases}
\end{equation}
It can be seen that each action of $M^{(i)}$ maintains the configuration $\Vec{\gamma}$ with weight $1$, or flips one term in $\Vec{\gamma}$ with weight $\frac{q}{q^2+1}$. Thus, $\Phi^2_{\mathcal{E}} (O_1)$ can be expressed as follows.
\begin{equation}
\begin{split}
\Phi^2_{\mathcal{E}} (O_1) &= M^{(s)}\cdots M^{(2)}M^{(1)}( \sum_{\Vec{\gamma}\in \{\id,F\}^n} c_{\Vec{\gamma}}(O_1)\Vec{\gamma} )\\
&= \sum_{\Vec{\gamma}\in \{\id,F\}^n} c_{\Vec{\gamma}}(O_1) M^{(s)}\cdots M^{(2)}M^{(1)}(\Vec{\gamma})\\
&= \sum_{\Gamma=(\Vec{\gamma}^{(0)},\cdots,\Vec{\gamma}^{(s)})\in \{\id,F\}^{n\times (s+1)}} c_{\Vec{\gamma}^{(0)}}(O_1)\Vec{\gamma}^{(s)} \prod_{i=1}^{s} M^{(i)}_{\Vec{\gamma}^{(i)}\Vec{\gamma}^{(i-1)}}.
\end{split}
\end{equation}
Here, $\Gamma=(\Vec{\gamma}^{(0)},\cdots,\Vec{\gamma}^{(s)})$ is called a trajectory, which is a sequence of $s+1$ configurations. We also use $\Gamma_i$ to denote its $i$-th component. Here, $\Gamma_i = \Vec{\gamma}^{(i-1)}$. Furthermore,
\begin{equation}
\begin{split}
\tr (O_2\Phi^2_{\mathcal{E}} (O_1)) &= \sum_{\Gamma}\weight(\Gamma) c_{\Vec{\gamma}^{(0)}}(O_1) \tr ( O_2\Vec{\gamma}^{(s)})\\
&= \sum_{\Gamma}\weight(\Gamma) c_{\Vec{\gamma}^{(0)}\Vec{\gamma}^{(s)}}(O_1,O_2),
\end{split}
\end{equation}
where
\begin{align}
\weight(\Gamma) &= \prod_{i=1}^{s} M^{(i)}_{\Vec{\gamma}^{(i)}\Vec{\gamma}^{(i-1)}},\\
\label{eq:cgamma0gammas}c_{\Vec{\gamma}^{(0)}\Vec{\gamma}^{(s)}}(O_1,O_2) &= \frac{1}{(q^2-1)^n}  \tr(O_1 \bigotimes_{j=1}^n g_{\gamma^{(0)}_j}) \tr(O_2 \bigotimes_{j=1}^n \gamma^{(s)}_j).
\end{align}
When $O_1=O_2=\ketbra{1^n}^{\otimes 2}$, $c_{\Vec{\gamma}^{(0)}\Vec{\gamma}^{(s)}}(O_1,O_2) = \frac{1}{(q+1)^n}$, which is the case studied in Ref.~\cite{Dalzell2022Anticoncentrate}. By mapping $\{\id,F\}$ to $\{+1, -1\}$, a configuration $\Vec{\gamma}$ is a spin variable, and the trajectory $\Gamma$ is a two-dimensional Ising spin configuration. Correspondingly, $\tr (O_2\Phi^2_{\mathcal{E}} (O_1))$ is mapped into a partition function of a classical Ising spin model, which is a weighted summation of different trajectories. For each trajectory $\Gamma$, the total weight depends on two terms. The first one is $\weight(\Gamma)$, where the value depends on the whole trajectory. The second term is $c_{\Vec{\gamma}^{(0)}\Vec{\gamma}^{(s)}}(O_1,O_2)$, and the value only depends on the initial and ending configurations of the trajectory. Note that $\weight(\Gamma)$ is always non-negative since the matrix element $M^{(i)}_{\Vec{\gamma}^{(i)}\Vec{\gamma}^{(i-1)}}$ is always non-negative.

For further analysis, we write
\begin{equation}\label{eq:randomcircuittrace}
\tr (O_2\Phi^2_{\mathcal{E}} (O_1)) = \sum_{\Vec{\gamma}^{(0)}\Vec{\gamma}^{(s)}} c_{\Vec{\gamma}^{(0)}\Vec{\gamma}^{(s)}}(O_1,O_2) \sum_{\Gamma\backslash\Vec{\gamma}^{(0)}\Vec{\gamma}^{(s)}}\weight(\Gamma).
\end{equation}
Here $\sum_{\Gamma\backslash\Vec{\gamma}^{(0)}\Vec{\gamma}^{(s)}}$ takes the summation of trajectories that start with $\Vec{\gamma}^{(0)}$ and end with $\Vec{\gamma}^{(s)}$.
In the following, we denote $Z_{\Vec{\gamma}^{(0)}\Vec{\gamma}^{(s)}} = \sum_{\Gamma\backslash\Vec{\gamma}^{(0)}\Vec{\gamma}^{(s)}}\weight(\Gamma)$ and evaluate this term to analyze $\tr (O_2\Phi^2_{\mathcal{E}} (O_1))$. We start with analyzing $\weight(\Gamma)$ with the method of random walk.

Given a fixed set of supports $\{A^{(1)}, A^{(2)}, \cdots, A^{(s)}\}$, the action of $\prod_{i=1}^{s}M^{(i)}$ on a configuration $\Vec{\gamma}^{(0)}$ results in a length-$(s+1)$ random walk $\Gamma$ over $\{\id, F\}^n$. The initial location of the random walk is a fixed configuration, $\Vec{\gamma}^{(0)}$, and at time step $t$, the location transitions from $\Vec{\gamma}^{(t-1)}$ to $\Vec{\gamma}^{(t)}$. The configuration after step $1$ may not be fixed. Based on Eq.~\eqref{eq:transitionmatrix}, the transition must follow the following rule: the configuration is unchanged from $\Vec{\gamma}^{(t-1)}$ to $\Vec{\gamma}^{(t)}$ if the operators of $\Vec{\gamma}^{(t-1)}$ within support $A^{(t)}$ are $\id\,\id$ and $FF$; there is an identity operation flipped into a SWAP operation or a SWAP operation flipped into an identity operation if the operators of $\Vec{\gamma}^{(t-1)}$ within support $A^{(t)}$ are $\id F$ and $F \id$, and the probability of each case of flip is $\frac{1}{2}$. If the configuration is unchanged, the contributed weight of this transition to $\weight(\Gamma)$ is $1$, and if there is a flip, the contributed weight is $\frac{q}{q^2+1}$. As a result, $\weight(\Gamma)$ depends on how many flips within trajectory $\Gamma$. That is,
\begin{equation}
\weight(\Gamma) = (\frac{q}{q^2+1})^{\#F_{\Gamma}},
\end{equation}
where $\#F_{\Gamma}$ is the number of flips within $\Gamma$. It can also be written as
\begin{equation}\label{eq:probgamma}
\begin{split}
\weight(\Gamma) &= (\frac{1}{2})^{\#F_{\Gamma}}(\frac{2q}{q^2+1})^{\#F_{\Gamma}}\\
&= \Pr(\Gamma)(\frac{2q}{q^2+1})^{\#F_{\Gamma}},
\end{split}
\end{equation}
where $\Pr(\Gamma) = (\frac{1}{2})^{\#F_{\Gamma}}$ is the probability of trajectory $\Gamma$ in this random walk.

To calculate $Z_{\Vec{\gamma}^{(0)}\Vec{\gamma}^{(s)}}$, we need to count all trajectories satisfying the transition rule and require the start and ending configurations to be $\Vec{\gamma}^{(0)}$ and $\Vec{\gamma}^{(s)}$, respectively. Then, we obtain that
\begin{equation}
Z_{\Vec{\gamma}^{(0)}\Vec{\gamma}^{(s)}} = \sum_{\Gamma}\weight(\Gamma)\mathbf{1}_{\Gamma_{1}=\Vec{\gamma}^{(0)}}\mathbf{1}_{\Gamma_{s+1}=\Vec{\gamma}^{(s)}}.
\end{equation}
Here, $\mathbf{1}_A$ is the indicator function, which equals $1$ if $A$ is satisfied and 0 otherwise. Using Eq.~\eqref{eq:probgamma}, we have that
\begin{equation}
Z_{\Vec{\gamma}^{(0)}\Vec{\gamma}^{(s)}} = \mathbb{E}_{\Gamma}(\frac{2q}{q^2+1})^{\#F_{\Gamma}}\mathbf{1}_{\Gamma_{1}=\Vec{\gamma}^{(0)}}\mathbf{1}_{\Gamma_{s+1}=\Vec{\gamma}^{(s)}}.
\end{equation}

Note that the weight of a particular walk or trajectory in the above random walk relates to the number of flips during the walk, which brings challenges in evaluating its value. To tackle this issue, we turn the above unbiased random walk into a biased random walk. In the original unbiased random walk, the flip has an equal probability from $\id$ to $F$ and from $F$ to $\id$. In the new biased random walk, we set the probability from $\id$ to $F$ as $\frac{1}{q^2+1}$, and the probability from $F$ to $\id$ as $\frac{q^2}{q^2+1}$. Accordingly, we can modify the form of the weight of a trajectory into the following.
\begin{equation}\label{eq:probgammabias}
\begin{split}
\weight(\Gamma) = &(\frac{q}{q^2+1})^{\#F_{\Gamma}}\\
= &(\frac{q}{q^2+1})^{\#F^{\id\rightarrow F}_{\Gamma}}(\frac{q}{q^2+1})^{\#F^{F\rightarrow \id}_{\Gamma}}\\
= &(\frac{1}{q^2+1})^{\#F^{\id\rightarrow F}_{\Gamma}}(\frac{q^2}{q^2+1})^{\#F^{F\rightarrow \id}_{\Gamma}}q^{\#F^{\id\rightarrow F}_{\Gamma}-\#F^{F\rightarrow \id}_{\Gamma}}\\
= &\widetilde{\Pr}(\Gamma)q^{\abs{\Vec{\gamma}^{(s)}}-\abs{\Vec{\gamma}^{(0)}}},
\end{split}
\end{equation}
where $\#F^{\id\rightarrow F}_{\Gamma}$ is the number of flips from $\id$ to $F$ within $\Gamma$, $\#F^{F\rightarrow \id}_{\Gamma}$ is the number of flips from $F$ to $\id$ within $\Gamma$, and $\widetilde{\Pr}(\Gamma) = (\frac{1}{q^2+1})^{\#F^{\id\rightarrow F}_{\Gamma}}(\frac{q^2}{q^2+1})^{\#F^{F\rightarrow \id}_{\Gamma}}$ is the probability of trajectory $\Gamma$ in the biased random walk, and $\abs{\Vec{\gamma}^{(i)}}$ is the number of SWAP operators in $\Vec{\gamma}^{(i)}$. Then,
\begin{equation}\label{eq:Zgamma}
\begin{split}
Z_{\Vec{\gamma}^{(0)}\Vec{\gamma}^{(s)}} &= \sum_{\Gamma}\weight(\Gamma)\mathbf{1}_{\Gamma_{1}=\Vec{\gamma}^{(0)}}\mathbf{1}_{\Gamma_{s+1}=\Vec{\gamma}^{(s)}}\\
&= \sum_{\Gamma}\widetilde{\Pr}(\Gamma)q^{\abs{\Vec{\gamma}^{(s)}}-\abs{\Vec{\gamma}^{(0)}}} \mathbf{1}_{\Gamma_{1}=\Vec{\gamma}^{(0)}}\mathbf{1}_{\Gamma_{s+1}=\Vec{\gamma}^{(s)}}\\
&= q^{\abs{\Vec{\gamma}^{(s)}}-\abs{\Vec{\gamma}^{(0)}}}\widetilde{\mathbb{E}}_{\Gamma}\mathbf{1}_{\Gamma_{1}=\Vec{\gamma}^{(0)}}\mathbf{1}_{\Gamma_{s+1}=\Vec{\gamma}^{(s)}}\\
&= q^{\abs{\Vec{\gamma}^{(s)}}-\abs{\Vec{\gamma}^{(0)}}}\widetilde{\Pr}(\Gamma: \Vec{\gamma}^{(0)}\rightarrow \Vec{\gamma}^{(s)}).
\end{split}
\end{equation}
Here, $\widetilde{\mathbb{E}}_{\Gamma}$ represents the expectation over biased random walk, and $\widetilde{\Pr}(\Gamma: \Vec{\gamma}^{(0)}\rightarrow \Vec{\gamma}^{(s)})$ represents the probability of this biased random walk transitioning from $\Vec{\gamma}^{(0)}$ to $\Vec{\gamma}^{(s)}$.

For an infinite-size circuit or $s\rightarrow\infty$, the final configuration of the random walk can only take $\id^n$ and $F^n$ since they are the fixed points of the biased random walk. It was found that in this case the probability of $\Vec{\gamma}^{(0)}$ transitioning into $\id^n$ or $F^n$ only depends on the weight of $\Vec{\gamma}^{(0)}$ or $\abs{\Vec{\gamma}^{(0)}}$~\cite{Dalzell2022Anticoncentrate}. The concrete values are given by
\begin{align}
\label{eq:probgammatoI}\widetilde{\Pr}(\Gamma: \Vec{\gamma}^{(0)}\rightarrow \id^n) &= \frac{1-q^{-2n+2\abs{\Vec{\gamma}^{(0)}}}}{1-q^{-2n}}\\
\label{eq:probgammatoS}\widetilde{\Pr}(\Gamma: \Vec{\gamma}^{(0)}\rightarrow F^n) &= \frac{q^{-2n+2\abs{\Vec{\gamma}^{(0)}}}-q^{-2n}}{1-q^{-2n}}.
\end{align}

We can use the above results of the infinite-size circuit to derive the $\tr O_2\Phi^2_{Haar}(O_1)$ for the Haar-random ensemble. This can be used to check the soundness of the derivation.

From Eq.~\eqref{eq:secondtwirling}, we have that
\begin{equation}\label{eq:Haar}
\begin{split}
\tr O_2\Phi^2_{Haar}(O_1) &= \tr O_2 \left( \tr(O_1 \frac{\id^n-q^{-n}F^n}{q^{2n}-1}) \id^n + \tr(O_1 \frac{F^n-q^{-n}\id^n}{q^{2n}-1}) F^n\right)\\
&= \frac{1}{q^{2n}-1} (\tr O_1\id^n\tr O_2\id^n + \tr O_1F^n\tr O_2F^n - q^{-n}\tr O_1\id^n\tr O_2F^n - q^{-n}\tr O_1F^n\tr O_2\id^n),
\end{split}
\end{equation}
where $\id$ and $F$ are the identity and SWAP operators between two qudits, respectively.

Based on Eqs.~\eqref{eq:cgamma0gammas},~\eqref{eq:randomcircuittrace},~\eqref{eq:Zgamma},~\eqref{eq:probgammatoI}, and~\eqref{eq:probgammatoS}, we have that
\begin{equation}
\begin{split}
\tr O_2\Phi^2_{Haar} (O_1) =& \sum_{\Vec{\gamma}^{(0)}\Vec{\gamma}^{(s)}} c_{\Vec{\gamma}^{(0)}\Vec{\gamma}^{(s)}}(O_1,O_2) Z_{\Vec{\gamma}^{(0)}\Vec{\gamma}^{(s)}}\\
=& \sum_{\Vec{\gamma}^{(0)}\Vec{\gamma}^{(s)}} c_{\Vec{\gamma}^{(0)}\Vec{\gamma}^{(s)}}(O_1,O_2) q^{\abs{\Vec{\gamma}^{(s)}}-\abs{\Vec{\gamma}^{(0)}}}\widetilde{\Pr}(\Gamma: \Vec{\gamma}^{(0)}\rightarrow \Vec{\gamma}^{(s)})\\
=& \sum_{\Vec{\gamma}^{(0)}} c_{\Vec{\gamma}^{(0)}\id^n}(O_1,O_2) q^{-\abs{\Vec{\gamma}^{(0)}}}\widetilde{\Pr}(\Gamma: \Vec{\gamma}^{(0)}\rightarrow \id^n) + \sum_{\Vec{\gamma}^{(0)}} c_{\Vec{\gamma}^{(0)}F^n}(O_1,O_2) q^{n-\abs{\Vec{\gamma}^{(0)}}}\widetilde{\Pr}(\Gamma: \Vec{\gamma}^{(0)}\rightarrow F^n)\\
=& \sum_{\Vec{\gamma}^{(0)}} c_{\Vec{\gamma}^{(0)}\id^n}(O_1,O_2) q^{-\abs{\Vec{\gamma}^{(0)}}}\frac{1-q^{-2n+2\abs{\Vec{\gamma}^{(0)}}}}{1-q^{-2n}} + \sum_{\Vec{\gamma}^{(0)}} c_{\Vec{\gamma}^{(0)}F^n}(O_1,O_2) q^{n-\abs{\Vec{\gamma}^{(0)}}}\frac{q^{-2n+2\abs{\Vec{\gamma}^{(0)}}}-q^{-2n}}{1-q^{-2n}}\\
=& \sum_{\Vec{\gamma}^{(0)}} \frac{1}{(q^2-1)^n}  ( \tr(O_1 \bigotimes_{j=1}^n g_{\gamma^{(0)}_j}) \tr(O_2 \id^n)) q^{-\abs{\Vec{\gamma}^{(0)}}}\frac{1-q^{-2n+2\abs{\Vec{\gamma}^{(0)}}}}{1-q^{-2n}}\\
&+ \sum_{\Vec{\gamma}^{(0)}} \frac{1}{(q^2-1)^n}  ( \tr(O_1 \bigotimes_{j=1}^n g_{\gamma^{(0)}_j}) \tr(O_2 F^n)) q^{n-\abs{\Vec{\gamma}^{(0)}}}\frac{q^{-2n+2\abs{\Vec{\gamma}^{(0)}}}-q^{-2n}}{1-q^{-2n}}\\
=& \frac{1}{q^{2n}-1} \tr(\tr O_1\id^n\tr O_2\id^n + \tr O_1F^n\tr O_2F^n - q^{-n}\tr O_1\id^n\tr O_2F^n - q^{-n}\tr O_1F^n\tr O_2\id^n).
\end{split}
\end{equation}
In the final equation, we use the fact that
\begin{equation}\label{eq:summation_gamma_0}
\begin{split}
\sum_{\Vec{\gamma}^{(0)}} q^{-\abs{\Vec{\gamma}^{(0)}}} \bigotimes_{j=1}^n g_{\gamma^{(0)}_j} &= (1-q^{-2})^n \id^n,\\
\sum_{\Vec{\gamma}^{(0)}} q^{\abs{\Vec{\gamma}^{(0)}}} \bigotimes_{j=1}^n g_{\gamma^{(0)}_j} &= (q-q^{-1})^n F^n.
\end{split}
\end{equation}
Thus, our derivation from the infinite-size circuit is in accordance with that from the Haar-random ensemble, verifying the soundness of the derivation. In the next section, we will focus on the 1D log-depth brickwork circuits.

\subsection{Decoupling with 1D brickwork circuits and proof of Theorem~\ref{thm:1Dlocalrandomcircuit}}\label{appendssc:1D}
Below, we give the formal proof of Theorem~\ref{thm:1Dlocalrandomcircuit}.

As shown in Appendix~\ref{app:proof_nonsmooth}, since $\mathfrak{B}_n^{\varepsilon}$ is a unitary-$1$ design, we have that
\begin{equation}
\mathbb{E}_U\Vert \mathcal{T}_{S\rightarrow E}(U_S\rho_{SR}U_{S}^{\dagger}) - \tau_E\otimes \rho_R \Vert_1 \leq \sqrt{\mathbb{E}_U\tr[ \left( \widetilde{\mathcal{T}}_{S\rightarrow E}(U_S\widetilde{\rho}_{SR}U_{S}^{\dagger}) \right)^2] -\tr(\widetilde{\tau}^2_E)\tr(\widetilde{\rho}^2_R)},
\end{equation}
and
\begin{equation}
\begin{split}
\mathbb{E}_U\tr[ \left( \widetilde{\mathcal{T}}_{S\rightarrow E}(U_S\widetilde{\rho}_{SR}U_{S}^{\dagger}) \right)^2] = \tr_S( \Phi^2_{\mathfrak{B}_n^{\varepsilon}}( \tr_R(\widetilde{\rho}_{SR}^{\otimes 2} F_R) )  (\widetilde{\mathcal{T}}_{S\rightarrow E}^{\dagger})^{\otimes 2}(F_E)),
\end{split}
\end{equation}
where $\widetilde{\mathcal{T}}_{S\rightarrow E} = \sigma_E^{-1/4}\mathcal{T}_{S\rightarrow E} \sigma_E^{-1/4}$ with the corresponding Choi-Jomio{\l}kowski representation $\widetilde{\tau}_{SE} = \sigma_E^{-1/4}\tau_{SE} \sigma_E^{-1/4}$ and $\widetilde{\rho}_{SR} = \zeta_R^{-1/4}\rho_{SR}\zeta_R^{-1/4}$. In the following, we denote $O_{SR} = \tr_R(\widetilde{\rho}_{SR}^{\otimes 2} F_R)$ and $O_{SE} = (\widetilde{\mathcal{T}}_{S\rightarrow E}^{\dagger})^{\otimes 2}(F_E)$. The following analysis is to give the upper bound of $\tr O_{SE}\Phi^2_{\mathcal{L}_n^d} (O_{SR})$.

Note that for 1D brickwork circuits, the random unitary is denoted as
\begin{equation}
U = \prod_{l=1}^{d} U^{[l]}.
\end{equation}
Thus, instead of defining $\Vec{\gamma}^{(i)}$ by transforming $\Vec{\gamma}^{(i-1)}$ with a two-qudit gate $U^{(i)}$, we define $\Vec{\gamma}^{i}$ by transforming $\Vec{\gamma}^{i-1}$ with a layer of two-qudit gates $U^{i}$. Particularly, we set $\Vec{\gamma}^{0}$ equaling $\Vec{\gamma}^{(0)}$ and define $M^i$ such that
\begin{equation}
M^i(O) = \mathbb{E}_{U^i} U^{i\otimes 2} O U^{i\dagger\otimes 2}.
\end{equation}
Then,
\begin{equation}
\begin{split}
\Phi^2_{\mathcal{L}_n^d} (O_{SR}) &= M^{d}\cdots M^{2}M^{1}( \sum_{\Vec{\gamma}\in \{\id,F\}^n} c_{\Vec{\gamma}}(O_{SR})\Vec{\gamma} )\\
&= \sum_{\Vec{\gamma}\in \{\id,F\}^n} c_{\Vec{\gamma}}(O_{SR}) M^{d}\cdots M^{2}M^{1}(\Vec{\gamma})\\
&= \sum_{\Gamma=(\Vec{\gamma}^{0},\cdots,\Vec{\gamma}^{d})\in \{\id,F\}^{n\times (d+1)}} c_{\Vec{\gamma}^{0}}(O_{SR})\Vec{\gamma}^{d} \prod_{i=1}^{d} M^{i}_{\Vec{\gamma}^{i}\Vec{\gamma}^{i-1}}.
\end{split}
\end{equation}
Here, $\Gamma=(\Vec{\gamma}^{0},\cdots,\Vec{\gamma}^{d})$ is called a trajectory, which is a sequence of $d+1$ configurations. Note that we view $M^{i}$ as a transfer matrix as follows.
\begin{equation}
M^{i}(\Vec{\gamma}) = \sum_{\Vec{\nu}\in \{\id,F\}^n}   M^{i}_{\Vec{\nu}\Vec{\gamma}}\Vec{\nu}.
\end{equation}
Correspondingly,
\begin{equation}
\tr O_{SE}\Phi^2_{\mathcal{L}_n^d} (O_{SR}) = \sum_{\Gamma}\weight(\Gamma) c_{\Vec{\gamma}^{(0)}\Vec{\gamma}^{(s)}}(O_{SR},O_{SE}),
\end{equation}
where
\begin{equation}
\weight(\Gamma) = \prod_{i=1}^{d} M^{i}_{\Vec{\gamma}^{i}\Vec{\gamma}^{i-1}},
\end{equation}
and $c_{\Vec{\gamma}^{(0)}\Vec{\gamma}^{(s)}}(O_{SR},O_{SE})$ is given by Eq.~\eqref{eq:cgamma0gammas}.

Following the analysis in Appendix~\ref{sc:stochastic}, we have that
\begin{equation}
\tr O_{SE}\Phi^2_{\mathcal{L}_n^d} (O_{SR}) = \sum_{\Vec{\gamma}^{0}\Vec{\gamma}^{d}} c_{\Vec{\gamma}^{0}\Vec{\gamma}^{d}}(O_{SR},O_{SE}) Z_{\Vec{\gamma}^{0}\Vec{\gamma}^{d}},
\end{equation}
where
\begin{equation}
Z_{\Vec{\gamma}^{0}\Vec{\gamma}^{d}} = \sum_{\Gamma}(\frac{q}{q^2+1})^{\#F_{\Gamma}}\mathbf{1}_{\Gamma_{1}=\Vec{\gamma}^{(0)}}\mathbf{1}_{\Gamma_{d+1}=\Vec{\gamma}^{d}}.
\end{equation}

In the following, we will use the domain wall method to analyze the above quantity.

\subsubsection{Definition of domain wall}
Given a configuration $\vec{\gamma}\in \{\id, F\}^n$, we define the associated set of domain-wall locations as
\begin{equation}
DW(\vec{\gamma}):= \{e\in \{1, 2, \cdots, n\}: \gamma_e\neq \gamma_{e+1}\}.
\end{equation}
Define $g^i = DW(\vec{\gamma}^i)$. The number of domain walls within $g^i$ is denoted as $\abs{g^i}$. For a trajectory $\Gamma=(\Vec{\gamma}^{0},\cdots,\Vec{\gamma}^{d})$, we can define its domain-wall trajectory $G(\Gamma)=(g^{0},\cdots,g^{d})$. If $\weight(\Gamma)\neq 0$, then its associated domain-wall trajectory must satisfy the following rule:
If there is a domain wall at location $e$, and a gate acts on qudits $\{e, e+1\}$, then the domain wall must move from $e$ to $e-1$ or $e+1$. If there already exists a domain wall at these locations, the two domain walls will annihilate with each other. On the other hand, a pair of domain walls cannot be created from a vacuum. Meanwhile, due to the boundary condition, the number of domain walls must be even.

Based on Eq.~\eqref{eq:1DLQC}, the $l$-th layer quantum gate $U^{[l]}$ only contains two-qudit gates acting on $\{2i-\mod(l, 2), 2i+1-\mod(l, 2)\}$. Thus, the domain walls within $g^l$ must have opposite parity with respect to $l$. For a specific domain wall within $g^0$, it may move or stay still at the first step. Starting with the second step, the domain wall will keep moving and survive until the final step or annihilate with another domain wall during the movement. The number of domain walls cannot increase within a trajectory.

For each trajectory $\Gamma$, its weight, $\weight(\Gamma)=(\frac{q}{q^2+1})^{\#F_{\Gamma}}$, is uniquely determined by its associated domain-wall trajectory $G(\Gamma)$. Each movement of a domain wall within $G(\Gamma)$ corresponds to a flip. Hence, we denote $\weight(\Gamma) = \weight(G(\Gamma))$.

A domain-wall trajectory $G$ can always be decomposed into two disjoint parts. That is, we define $G_1 = (g^{0}_1,\cdots,g^{d}_1)$ and $G_2 = (g^{0}_2,\cdots,g^{d}_2)$ such that
\begin{equation}
G = G_1\sqcup G_2 \equiv (g^{0}_1\sqcup g^{0}_2,\cdots,g^{d}_1\sqcup g^{d}_2),
\end{equation}
where $\sqcup$ is the disjoint union. Thus, $g^{l}_1\cap g^{l}_2 = \emptyset$ for all $l$. This decomposition satisfies that $\weight(G) = \weight(G_1)\weight(G_2)$~\cite{Dalzell2022Anticoncentrate}. This relation is essential in evaluating the bound of $\weight(G)$.

For each trajectory $\Gamma$, we decompose its domain-wall trajectory $G(\Gamma)$ into two disjoint domain-wall trajectories, $G(\Gamma) = G_U \sqcup G_0$, as shown in Fig.~\ref{fig:domainwall}. Here, $G_U$ has a conserved number of domain walls, and the domain walls of $G_0$ will annihilate before the final step. That is, $g_U^d = g^d$, and $g_0^{d} = \emptyset$. This decomposition is unique~\cite{Dalzell2022Anticoncentrate}. In the following, we separately give the analysis of the upper bound and lower bound for $\weight(\Gamma)$ and $Z_{\Vec{\gamma}^{0}\Vec{\gamma}^{d}}$. For further elaboration, we denote $\mathcal{G}_U$ and $\mathcal{G}_0$ as the subsets of domain-wall trajectories without annihilation and without a surviving domain wall at the end, respectively. Also, we denote $\mathcal{G}_{U,k}\subset \mathcal{G}_U$ as the set of domain-wall trajectories always with $k$ domain walls. For a 1D periodic circuit structure, $k$ must be even to make $\mathcal{G}_{U,k}\neq \emptyset$.

\begin{figure}[hbt!]
\centering

\begin{tikzpicture}[scale=0.5]

\begin{scope}[shift={(0,0)}]
\foreach \t/\row in {
0/{0,0,0,0,0,0,0,0},
1/{1,0,1,0,0,0,0,0},
2/{1,1,1,1,0,0,0,0},
3/{1,1,1,1,0,0,0,0},
4/{1,1,1,0,0,0,0,0},
5/{1,1,0,0,0,0,0,0},
6/{1,0,0,1,0,0,0,0},
7/{0,0,1,1,1,0,1,0},
8/{0,1,1,1,1,1,1,1},
9/{1,1,1,1,0,1,0,1},
10/{1,0,1,0,0,0,0,0},
11/{0,0,0,0,1,0,1,0},
12/{0,0,0,1,1,1,1,0},
13/{1,0,1,1,1,1,0,0},
14/{1,1,1,0,1,0,0,0},
15/{0,1,0,0,0,0,0,0},
16/{0,0,0,0,0,0,0,0},
} {
    \foreach \x [count=\i] in \row {
        \pgfmathtruncatemacro{\val}{\x}
        \ifnum\val=0
            \definecolor{cellcolor}{rgb}{0.8,0.9,1} 
            \def\labeltext{$I$}
        \else
            \definecolor{cellcolor}{rgb}{0.7,1.0,0.7} 
            \def\labeltext{$F$}
        \fi
        \fill[cellcolor] (\i-1,-\t) rectangle ++(1,-1);
        \draw[gray] (\i-1,-\t) rectangle ++(1,-1);
        \node at (\i-0.5,-\t-0.5) {\footnotesize \labeltext};
    }
}

\draw[red, thick]
(4.5,-3) -- (3.5,-2) -- (2.5,-1) -- (1.5,-2) -- (0.5,-1);

\draw[red, thick]
(4.5,-3) -- (3.5,-4) -- (2.5,-5) -- (1.5,-6) -- (0.5,-7);

\draw[red, thick]
(7.5,-8) -- (6.5,-7) -- (5.5,-8) -- (4.5,-7) -- (3.5,-6) -- (2.5,-7) -- (1.5,-8) -- (0.5,-9);

\draw[red, thick]
(7.5,-10) -- (6.5,-9) -- (5.5,-10) -- (4.5,-9) -- (3.5,-10) -- (2.5,-11) -- (1.5,-10) -- (0.5,-11);

\draw[red, thick]
(7.5,-12) -- (6.5,-11) -- (5.5,-12) -- (4.5,-11) -- (3.5,-12) -- (2.5,-13) -- (1.5,-14) -- (0.5,-13);

\draw[red, thick]
(7.5,-12) -- (6.5,-13) -- (5.5,-14) -- (4.5,-15) -- (3.5,-14) -- (2.5,-15) -- (1.5,-16) -- (0.5,-15);

\foreach \x/\y in {
    3.5/2,
    2.5/1,
    1.5/2,
    0.5/1
} {
    \filldraw[red] (\x,-\y) circle (4pt);
}

\foreach \x/\y in {
    3.5/4,
    2.5/5,
    1.5/6,
    0.5/7
} {
    \filldraw[red] (\x,-\y) circle (4pt);
}

\foreach \x/\y in {
    7.5/8,
    6.5/7,
    5.5/8,
    4.5/7,
    3.5/6,
    2.5/7,
    1.5/8,
    0.5/9
} {
    \filldraw[red] (\x,-\y) circle (4pt);
}

\foreach \x/\y in {
    7.5/10,
    6.5/9,
    5.5/10,
    4.5/9,
    3.5/10,
    2.5/11,
    1.5/10,
    0.5/11
} {
    \filldraw[red] (\x,-\y) circle (4pt);
}

\foreach \x/\y in {
    6.5/11,
    5.5/12,
    4.5/11,
    3.5/12,
    2.5/13,
    1.5/14,
    0.5/13
} {
    \filldraw[red] (\x,-\y) circle (4pt);
}

\foreach \x/\y in {
    6.5/13,
    5.5/14,
    4.5/15,
    3.5/14,
    2.5/15,
    1.5/16,
    0.5/15
} {
    \filldraw[red] (\x,-\y) circle (4pt);
}

\foreach \i in {0,...,7} {
    \node at (\i+0.6,0.5) {\scriptsize $\vec{\gamma}^{\i}$};
}

\foreach \t in {1,...,17} {
    \node[left] at (-0.1,-\t+0.48) {\footnotesize $q_{\t}$};
}

\node at (4,-17.5) {\large $G$};
\node at (9.8,-17.5) {\large $=$};

\node[right] at (9.1,-8) {\huge $=$};

\end{scope}

\begin{scope}[shift={(12,0)}]
\foreach \t/\row in {
0/{0,0,0,0,0,0,0,0},
1/{0,0,0,0,0,0,0,0},
2/{0,0,0,0,0,0,0,0},
3/{0,0,0,0,0,0,0,0},
4/{0,0,0,0,0,0,0,0},
5/{0,0,0,0,0,0,0,0},
6/{0,0,0,1,0,0,0,0},
7/{0,0,1,1,1,0,1,0},
8/{0,1,1,1,1,1,1,1},
9/{1,1,1,1,0,1,0,1},
10/{1,0,1,0,0,0,0,0},
11/{0,0,0,0,0,0,0,0},
12/{0,0,0,0,0,0,0,0},
13/{0,0,0,0,0,0,0,0},
14/{0,0,0,0,0,0,0,0},
15/{0,0,0,0,0,0,0,0},
16/{0,0,0,0,0,0,0,0},
} {
    \foreach \x [count=\i] in \row {
        \pgfmathtruncatemacro{\val}{\x}
        \ifnum\val=0
            \definecolor{cellcolor}{rgb}{0.8,0.9,1} 
            \def\labeltext{$I$}
        \else
            \definecolor{cellcolor}{rgb}{0.7,1.0,0.7} 
            \def\labeltext{$F$}
        \fi
        \fill[cellcolor] (\i-1,-\t) rectangle ++(1,-1);
        \draw[gray] (\i-1,-\t) rectangle ++(1,-1);
        \node at (\i-0.5,-\t-0.5) {\footnotesize \labeltext};
    }
}


\draw[red, thick]
(7.5,-8) -- (6.5,-7) -- (5.5,-8) -- (4.5,-7) -- (3.5,-6) -- (2.5,-7) -- (1.5,-8) -- (0.5,-9);

\draw[red, thick]
(7.5,-10) -- (6.5,-9) -- (5.5,-10) -- (4.5,-9) -- (3.5,-10) -- (2.5,-11) -- (1.5,-10) -- (0.5,-11);


\foreach \x/\y in {
    7.5/8,
    6.5/7,
    5.5/8,
    4.5/7,
    3.5/6,
    2.5/7,
    1.5/8,
    0.5/9
} {
    \filldraw[red] (\x,-\y) circle (4pt);
}

\foreach \x/\y in {
    7.5/10,
    6.5/9,
    5.5/10,
    4.5/9,
    3.5/10,
    2.5/11,
    1.5/10,
    0.5/11
} {
    \filldraw[red] (\x,-\y) circle (4pt);
}

\foreach \i in {0,...,7} {
    \node at (\i+0.6,0.5) {\scriptsize $\vec{\gamma}^{\i}$};
}

\foreach \t in {1,...,17} {
    \node[left] at (-0.1,-\t+0.48) {\footnotesize $q_{\t}$};
}

\node at (4,-17.5) {\large $G_U$};
\node at (9.3,-17.5) {\large $\sqcup$};

\node at (9.2,-8) {\huge $\sqcup$};
\end{scope}

\begin{scope}[shift={(23,0)}]
\foreach \t/\row in {
0/{0,0,0,0,0,0,0,0},
1/{1,0,1,0,0,0,0,0},
2/{1,1,1,1,0,0,0,0},
3/{1,1,1,1,0,0,0,0},
4/{1,1,1,0,0,0,0,0},
5/{1,1,0,0,0,0,0,0},
6/{1,0,0,0,0,0,0,0},
7/{0,0,0,0,0,0,0,0},
8/{0,0,0,0,0,0,0,0},
9/{0,0,0,0,0,0,0,0},
10/{0,0,0,0,0,0,0,0},
11/{0,0,0,0,1,0,1,0},
12/{0,0,0,1,1,1,1,0},
13/{1,0,1,1,1,1,0,0},
14/{1,1,1,0,1,0,0,0},
15/{0,1,0,0,0,0,0,0},
16/{0,0,0,0,0,0,0,0},
} {
    \foreach \x [count=\i] in \row {
        \pgfmathtruncatemacro{\val}{\x}
        \ifnum\val=0
            \definecolor{cellcolor}{rgb}{0.8,0.9,1} 
            \def\labeltext{$I$}
        \else
            \definecolor{cellcolor}{rgb}{0.7,1.0,0.7} 
            \def\labeltext{$F$}
        \fi
        \fill[cellcolor] (\i-1,-\t) rectangle ++(1,-1);
        \draw[gray] (\i-1,-\t) rectangle ++(1,-1);
        \node at (\i-0.5,-\t-0.5) {\footnotesize \labeltext};
    }
}

\draw[red, thick]
(4.5,-3) -- (3.5,-2) -- (2.5,-1) -- (1.5,-2) -- (0.5,-1);

\draw[red, thick]
(4.5,-3) -- (3.5,-4) -- (2.5,-5) -- (1.5,-6) -- (0.5,-7);

\draw[red, thick]
(7.5,-12) -- (6.5,-11) -- (5.5,-12) -- (4.5,-11) -- (3.5,-12) -- (2.5,-13) -- (1.5,-14) -- (0.5,-13);

\draw[red, thick]
(7.5,-12) -- (6.5,-13) -- (5.5,-14) -- (4.5,-15) -- (3.5,-14) -- (2.5,-15) -- (1.5,-16) -- (0.5,-15);

\foreach \x/\y in {
    3.5/2,
    2.5/1,
    1.5/2,
    0.5/1
} {
    \filldraw[red] (\x,-\y) circle (4pt);
}

\foreach \x/\y in {
    3.5/4,
    2.5/5,
    1.5/6,
    0.5/7
} {
    \filldraw[red] (\x,-\y) circle (4pt);
}

\foreach \x/\y in {
    6.5/11,
    5.5/12,
    4.5/11,
    3.5/12,
    2.5/13,
    1.5/14,
    0.5/13
} {
    \filldraw[red] (\x,-\y) circle (4pt);
}

\foreach \x/\y in {
    6.5/13,
    5.5/14,
    4.5/15,
    3.5/14,
    2.5/15,
    1.5/16,
    0.5/15
} {
    \filldraw[red] (\x,-\y) circle (4pt);
}

\foreach \i in {0,...,7} {
    \node at (\i+0.6,0.5) {\scriptsize $\vec{\gamma}^{\i}$};
}

\foreach \t in {1,...,17} {
    \node[left] at (-0.1,-\t+0.48) {\footnotesize $q_{\t}$};
}

\node at (4,-17.5) {\large $G_0$};
\end{scope}

\end{tikzpicture}
\caption{Diagram for the domain-wall trajectory decomposition. The total qubit number is $17$ and the circuit depth is $7$. The first figure is a trajectory $\Gamma$ with corresponding domain-wall trajectory $G(\Gamma)$. The number of domain walls within $G(\Gamma)$ does not conserve. $G(\Gamma)$ can be decomposed into two disjoint domain-wall trajectories, $G(\Gamma) = G_U \sqcup G_0$. $G_U$ always contains $2$ domain walls, and the domain walls of $G_0$ annihilate at the final depth.}
\label{fig:domainwall}
\end{figure}

\subsubsection{Upper bound of $Z_{\Vec{\gamma}^{0}\Vec{\gamma}^{d}}$}
In this part, we derive the upper bound of $Z_{\Vec{\gamma}^{0}\Vec{\gamma}^{d}}$, presented in the following lemma.

\begin{lemma}\label{lemma:Zgammabound}
Given the set of trajectories $\Gamma$ with initial and final configurations as $\vec{\gamma}^0$ and $\vec{\gamma}^d$, we divide these trajectories into different sets with fixed $G_U$,
\begin{equation}
\mathcal{S}_U^G = \{\Gamma | G(\Gamma) = G_U\sqcup G_0\}.
\end{equation}
Given the domain-wall trajectory $G_U$ with initial configuration $g_U^0=\{e^0_1, e^0_2, \cdots, e^0_{2k_0-1}, e^0_{2k_0}\}$. Here, $2k_0 = \abs{DW(\vec{\gamma}^d)}$ is even. Denote $E^0_{odd} = \{1, \cdots, e^0_1, e^0_2+1, \cdots, e^0_3, \cdots, e^0_{2k_0}+1, \cdots, n\}$ and $E^0_{even} = [n]\backslash E^0_{odd} = \overline{E^0_{odd}}$. We also denote $\vec{\gamma}_{A}$ as the component of $\vec{\gamma}$ within support $A$, and denote $\abs{\vec{\gamma}_A = F}$ and $\abs{\vec{\gamma}_A = \id}$ as the number of SWAP and identity operators within $\vec{\gamma}_{A}$, respectively. Then,
\begin{equation}
Z_{\Vec{\gamma}^{0}\Vec{\gamma}^{d}} \leq \sum_{\mathcal{S}_U^G} \eta^{(d-1)\abs{DW(\vec{\gamma}^d)}} q^{-\abs{\vec{\gamma}^0_{E^0_{odd}}=\overline{\vec{\gamma}^d_{\{1\}}}}-\abs{\vec{\gamma}^0_{E^0_{even}}=\vec{\gamma}^d_{\{1\}}}}.
\end{equation}
Here, we denote $\overline{F} = \id$ and $\overline{\id} = F$. And $\eta = q/(q^2+1)$.
\end{lemma}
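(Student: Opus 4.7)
The plan is to use the canonical domain-wall decomposition $G(\Gamma)=G_U\sqcup G_0$ together with the weight factorization $\weight(\Gamma)=\weight(G_U)\weight(G_0)$ to split
\begin{equation*}
Z_{\vec{\gamma}^0\vec{\gamma}^d}=\sum_{G_U}\weight(G_U)\sum_{G_0\text{ compatible}}\weight(G_0),
\end{equation*}
where the outer sum runs over the family $\mathcal{S}_U^G$. I would then bound the two factors separately, aiming for $\weight(G_U)\le\eta^{(d-1)\cdot 2k_0}$ and $\sum_{G_0}\weight(G_0)\le q^{-\abs{\vec{\gamma}^0_{E^0_{odd}}=\overline{\vec{\gamma}^d_{\{1\}}}}-\abs{\vec{\gamma}^0_{E^0_{even}}=\vec{\gamma}^d_{\{1\}}}}$, so that multiplying them and summing over $G_U\in\mathcal{S}_U^G$ reproduces the claim.

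The bound on $\weight(G_U)=\eta^{\#F_{G_U}}$ would come from a minimum-move count for each surviving wall, exploiting the brickwork parity. Odd layers act on edges $(2i-1,2i)$ and even layers on $(2i,2i+1)$, so a wall at an odd (resp.\ even) edge is exposed only at odd (resp.\ even) layers. Since every movement flips a wall's edge parity, once a surviving wall is first exposed it is then exposed—and forced to move—at every subsequent layer. A wall starting at an odd edge therefore moves $d$ times, while one starting at an even edge first moves at layer $2$, contributing $d-1$ moves. Taking the worst case over all $2k_0=\abs{DW(\vec{\gamma}^d)}$ surviving walls gives the target bound.

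For $\sum_{G_0}\weight(G_0)$ the plan is to decompose into the intervals of $[n]\setminus g_U^0$, which are exactly the connected components of $E^0_{odd}\sqcup E^0_{even}$. Because the canonical decomposition forbids $G_0$ walls from crossing surviving walls, $G_0$ factorizes independently over intervals $R$ and it is enough to bound each factor. Within a single interval, the biased-walk rewrite of Eq.~\eqref{eq:probgammabias} gives $\sum_{G_0\cap R}\weight(G_0\cap R)=\widetilde{\Pr}_R(\vec{\gamma}^0_R\to\vec{\gamma}^d_R)\,q^{\abs{\vec{\gamma}^d_R}-\abs{\vec{\gamma}^0_R}}$, where $\vec{\gamma}^d_R$ is the uniform value $\vec{\gamma}^d_{\{1\}}$ if $R\subset E^0_{odd}$ and $\overline{\vec{\gamma}^d_{\{1\}}}$ if $R\subset E^0_{even}$. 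Applying the infinite-depth hitting probabilities Eqs.~\eqref{eq:probgammatoI}--\eqref{eq:probgammatoS} to the size-$\abs{R}$ subsystem gives $\widetilde{\Pr}_R\le 1$ when $\vec{\gamma}^d_R=\id^{\abs{R}}$, and $\widetilde{\Pr}_R\le q^{-2(\abs{R}-\abs{\vec{\gamma}^0_R})}/(1-q^{-2\abs{R}})$ when $\vec{\gamma}^d_R=F^{\abs{R}}$. A short calculation then shows that in either case the combined $q$-factor collapses to $q^{-\abs{\vec{\gamma}^0_R=\overline{\vec{\gamma}^d_R}}}$, i.e.\ $q$ to the negative of the number of disagreements between $\vec{\gamma}^0$ and $\vec{\gamma}^d$ inside $R$, and multiplying over intervals reproduces the full $q$-exponent in the lemma.

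The main obstacle will be justifying this regional step. Eqs.~\eqref{eq:probgammatoI}--\eqref{eq:probgammatoS} were derived for the full lattice at infinite depth, whereas here they must be applied inside a single interval whose boundaries are not fixed walls but surviving $G_U$ walls that themselves move with time. The plan is to invoke the uniqueness of the $G_U\sqcup G_0$ decomposition to argue that $G_0$ walls inside each interval behave exactly as independent biased random walks on a subsystem of size $\abs{R}$ with annihilating boundary conditions, and that the finite-depth hitting probabilities needed here are dominated by their infinite-depth limits. Verifying this confinement—which is the technical heart of the bound—is where the argument will need the most care; everything else then reduces to the two elementary bookkeeping calculations above.
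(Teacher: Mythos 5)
Your proposal follows essentially the same route as the paper: the same decomposition of $Z_{\vec{\gamma}^0\vec{\gamma}^d}$ over fixed $G_U$, the same $\eta^{(d-1)|DW(\vec{\gamma}^d)|}$ bound from the minimum move count of surviving walls (your parity argument is a clean justification of the paper's ``at least $(d-1)$ movements'' claim), and the same regional factorization of $\sum\weight(G_0)$ bounded via the infinite-depth biased-walk hitting probabilities. The confinement issue you flag is resolved in the paper exactly by the domination you anticipate — each region is extended to a fixed rectangle filled with the uniform configuration and the depth is relaxed to infinity, both of which only add nonnegative terms to the weight sum — so your plan is correct and complete in outline.
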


\begin{proof}
Recall that
\begin{equation}
Z_{\Vec{\gamma}^{0}\Vec{\gamma}^{d}} = \sum_{\Gamma}\weight(\Gamma)\mathbf{1}_{\Gamma_{1}=\Vec{\gamma}^{0}}\mathbf{1}_{\Gamma_{d+1}=\Vec{\gamma}^{d}}.
\end{equation}
The summation is over all trajectories from $\vec{\gamma}^0$ to $\vec{\gamma}^d$ with non-zero weight. Note that $\weight(\Gamma) = \weight(G(\Gamma))$. We can analyze the domain-wall trajectories of all $\Gamma$ for evaluating the weight. Note that $\Gamma$ has a fixed ending configuration $\vec{\gamma}^d$, implying that $G(\Gamma)$ also has a fixed ending point $DW(\vec{\gamma}^d)$. Following the analysis of the previous subsection, the domain-wall trajectory $G(\Gamma)$ can be uniquely decomposed into the disjoint union of two domain-wall trajectories $G_U$ and $G_0$ such that $g_U^d = DW(\vec{\gamma}^d)$ and $g_0^d = \emptyset$. After the decomposition, we determine and denote $g_U^0$ as the initial point of the domain-wall trajectory $G_U$. Note that $G_U\in \mathcal{G}_{U, \abs{DW(\vec{\gamma}^d)}}$ and $\abs{g_U^0}=\abs{g_U^d}=\abs{DW(\vec{\gamma}^d)}$.

We then divide all trajectories $\Gamma$, equivalently, all $G(\Gamma)$, according to $G_U$. Particularly, we define the set of trajectories with fixed $G_U$:
\begin{equation}
\mathcal{S}_U^G = \{\Gamma | G(\Gamma) = G_U\sqcup G_0\}.
\end{equation}
Given a fixed $\vec{\gamma}^d$, the number of distinct $G_U$ or different sets $\mathcal{S}_U^G$ is upper bounded by $2^{d \abs{DW(\vec{\gamma}^d)}}$. For each domain wall $e$ inside $DW(\vec{\gamma}^d)$, moving it forward only has two directions. Moving it back to the first layer only has $2^d$ choices. Since there is only $\abs{DW(\vec{\gamma}^d)}$ domain walls, moving all of them back to the first layer only has $2^{d \abs{DW(\vec{\gamma}^d)}}$ options, which gives the upper bound of the number of distinct $G_U$ and $\mathcal{S}_U^G$. Then, we have
\begin{equation}\label{eq:Zgammadecompose}
Z_{\Vec{\gamma}^{0}\Vec{\gamma}^{d}} = \sum_{\mathcal{S}_U^G}\sum_{\Gamma\in \mathcal{S}_U^G}\weight(\Gamma)\mathbf{1}_{\Gamma_{1}=\Vec{\gamma}^{(0)}}\mathbf{1}_{\Gamma_{d+1}=\Vec{\gamma}^{d}}.
\end{equation}
Below, we analyze the bound of $\sum_{\Gamma\in \mathcal{S}_U^G}\weight(\Gamma)$. For each trajectory $\Gamma$, we have $G(\Gamma) = G_U\sqcup G_0$. Since there is no annihilation of domain walls within $G_U$, the number of domain-wall movements is at least $(d-1)\abs{DW(\vec{\gamma}^d)}$, implying that
$\weight(G_U)\leq \eta^{(d-1)\abs{DW(\vec{\gamma}^d)}}$ with $\eta = q/(q^2+1)$. Thus,
\begin{equation}
\weight(\Gamma) = \weight(G_U)\weight(G_0)\leq \eta^{(d-1)\abs{DW(\vec{\gamma}^d)}}\weight(G_0),
\end{equation}
and
\begin{equation}\label{eq:weightgamma}
\sum_{\Gamma\in \mathcal{S}_U^G}\weight(\Gamma)\leq \eta^{(d-1)\abs{DW(\vec{\gamma}^d)}} \sum_{\Gamma\in \mathcal{S}_U^G} \weight(G_0).
\end{equation}

The following derivation is to further evaluate the upper bound of $\sum_{\Gamma\in \mathcal{S}_U^G} \weight(G_0)$. Given the domain-wall trajectory $G_U$ with initial configuration $g_U^0$ and ending configuration $g_U^d = DW(\vec{\gamma}^d)$, we write $g_U^i$ as $\{e^i_1, e^i_2, \cdots, e^i_{2k_0-1}, e^i_{2k_0}\}$. Here, $2k_0 = \abs{DW(\vec{\gamma}^d)}$ is even. We denote the support $E^i_{odd} = \{1, \cdots, e^i_1, e^i_2+1, \cdots, e^i_3, \cdots, e^i_{2k_0}+1, \cdots, n\}$ and $E^i_{even} = [n]\backslash E^i_{odd} = \overline{E^i_{odd}}$ as demonstrated in Fig.~\ref{fig:domainwallupper}. We also denote $\vec{\gamma}_{A}$ as the component of $\vec{\gamma}$ within support $A$, and denote $\abs{\vec{\gamma}_A = F}$ and $\abs{\vec{\gamma}_A = \id}$ as the number of SWAP and identity operators within $\vec{\gamma}_{A}$, respectively. We prove the following lemma.

\begin{figure}[hbt!]
\centering

\begin{tikzpicture}[scale=0.5]

\begin{scope}[shift={(0,0)}]
\foreach \t/\row in {
0/{0,0,0,0,0,0,0,0},
1/{1,0,1,0,0,0,0,0},
2/{1,1,1,1,0,0,0,0},
3/{1,1,1,1,0,0,0,0},
4/{1,1,1,0,0,0,0,0},
5/{1,1,0,0,0,0,0,0},
6/{1,0,0,1,0,0,0,0},
7/{0,0,1,1,1,0,1,0},
8/{0,1,1,1,1,1,1,1},
9/{1,1,1,1,0,1,0,1},
10/{1,0,1,0,0,0,0,0},
11/{0,0,0,0,1,0,1,0},
12/{0,0,0,1,1,1,1,0},
13/{1,0,1,1,1,1,0,0},
14/{1,1,1,0,1,0,0,0},
15/{0,1,0,0,0,0,0,0},
16/{0,0,0,0,0,0,0,0},
} {
    \foreach \x [count=\i] in \row {
        \pgfmathtruncatemacro{\val}{\x}
        \ifnum\val=0
            \definecolor{cellcolor}{rgb}{0.8,0.9,1} 
            \def\labeltext{$I$}
        \else
            \definecolor{cellcolor}{rgb}{0.7,1.0,0.7} 
            \def\labeltext{$F$}
        \fi
        \fill[cellcolor] (\i-1,-\t) rectangle ++(1,-1);
        \draw[gray] (\i-1,-\t) rectangle ++(1,-1);
        \node at (\i-0.5,-\t-0.5) {\footnotesize \labeltext};
    }
}

\draw[red, thick]
(4.5,-3) -- (3.5,-2) -- (2.5,-1) -- (1.5,-2) -- (0.5,-1);

\draw[red, thick]
(4.5,-3) -- (3.5,-4) -- (2.5,-5) -- (1.5,-6) -- (0.5,-7);

\draw[red, thick]
(7.5,-8) -- (6.5,-7) -- (5.5,-8) -- (4.5,-7) -- (3.5,-6) -- (2.5,-7) -- (1.5,-8) -- (0.5,-9);

\draw[red, thick]
(7.5,-10) -- (6.5,-9) -- (5.5,-10) -- (4.5,-9) -- (3.5,-10) -- (2.5,-11) -- (1.5,-10) -- (0.5,-11);

\draw[red, thick]
(7.5,-12) -- (6.5,-11) -- (5.5,-12) -- (4.5,-11) -- (3.5,-12) -- (2.5,-13) -- (1.5,-14) -- (0.5,-13);

\draw[red, thick]
(7.5,-12) -- (6.5,-13) -- (5.5,-14) -- (4.5,-15) -- (3.5,-14) -- (2.5,-15) -- (1.5,-16) -- (0.5,-15);

\foreach \x/\y in {
    3.5/2,
    2.5/1,
    1.5/2,
    0.5/1
} {
    \filldraw[red] (\x,-\y) circle (4pt);
}

\foreach \x/\y in {
    3.5/4,
    2.5/5,
    1.5/6,
    0.5/7
} {
    \filldraw[red] (\x,-\y) circle (4pt);
}

\foreach \x/\y in {
    7.5/8,
    6.5/7,
    5.5/8,
    4.5/7,
    3.5/6,
    2.5/7,
    1.5/8,
    0.5/9
} {
    \filldraw[red] (\x,-\y) circle (4pt);
}

\foreach \x/\y in {
    7.5/10,
    6.5/9,
    5.5/10,
    4.5/9,
    3.5/10,
    2.5/11,
    1.5/10,
    0.5/11
} {
    \filldraw[red] (\x,-\y) circle (4pt);
}

\foreach \x/\y in {
    6.5/11,
    5.5/12,
    4.5/11,
    3.5/12,
    2.5/13,
    1.5/14,
    0.5/13
} {
    \filldraw[red] (\x,-\y) circle (4pt);
}

\foreach \x/\y in {
    6.5/13,
    5.5/14,
    4.5/15,
    3.5/14,
    2.5/15,
    1.5/16,
    0.5/15
} {
    \filldraw[red] (\x,-\y) circle (4pt);
}

\foreach \i in {0,...,7} {
    \node at (\i+0.6,0.5) {\scriptsize $\vec{\gamma}^{\i}$};
}

\foreach \t in {1,...,9} {
    \node[left] at (-0.1,-\t+0.48) {\footnotesize \color{blue} $q_{\t}$};
}

\foreach \t in {10,11} {
    \node[left] at (-0.1,-\t+0.48) {\footnotesize \color{ao} $q_{\t}$};
}

\foreach \t in {12,...,17} {
    \node[left] at (-0.1,-\t+0.48) {\footnotesize \color{blue} $q_{\t}$};
}

\draw [decorate,decoration={brace,amplitude=5pt,mirror},thick,blue]
(-1.2,-0.2) -- (-1.2,-8.8);
\draw [decorate,decoration={brace,amplitude=5pt,mirror},thick,blue]
(-1.2,-11.2) -- (-1.2,-16.8);
\node[blue,left] at (-1.5,-4.5) {$E^0_{odd}\supseteq$};
\node[blue,left] at (-1.5,-14) {$E^0_{odd}\supseteq$};

\draw [decorate,decoration={brace,amplitude=5pt,mirror},thick,ao]
(-1.2,-9.2) -- (-1.2,-10.8);
\node[ao,left] at (-1.5,-10) {$E^0_{even}\supseteq$};

\draw [decorate,decoration={brace,amplitude=5pt},thick,blue]
(8.2,-0.2) -- (8.2,-7.8);
\draw [decorate,decoration={brace,amplitude=5pt},thick,blue]
(8.2,-10.2) -- (8.2,-16.8);
\node[blue,left] at (11,-4) {$\subseteq E^d_{odd}$};
\node[blue,left] at (11,-13.5) {$\subseteq E^d_{odd}$};

\draw [decorate,decoration={brace,amplitude=5pt},thick,ao]
(8.2,-8.2) -- (8.2,-9.8);
\node[ao,left] at (11,-9) {$\subseteq E^d_{\text{even}}$};

\node at (4,-17.5) {\large $G$};

\end{scope}

\begin{scope}[shift={(15,0)}]
\foreach \t/\row in {
0/{0,0,0,0,0,0,0,0},
1/{0,0,0,0,0,0,0,0},
2/{0,0,0,0,0,0,0,0},
3/{0,0,0,0,0,0,0,0},
4/{0,0,0,0,0,0,0,0},
5/{0,0,0,0,0,0,0,0},
6/{0,0,0,1,0,0,0,0},
7/{0,0,1,1,1,0,1,0},
8/{0,1,1,1,1,1,1,1},
9/{1,1,1,1,0,1,0,1},
10/{1,0,1,0,0,0,0,0},
11/{0,0,0,0,0,0,0,0},
12/{0,0,0,0,0,0,0,0},
13/{0,0,0,0,0,0,0,0},
14/{0,0,0,0,0,0,0,0},
15/{0,0,0,0,0,0,0,0},
16/{0,0,0,0,0,0,0,0},
} {
    \foreach \x [count=\i] in \row {
        \pgfmathtruncatemacro{\val}{\x}
        \ifnum\val=0
            \definecolor{cellcolor}{rgb}{0.8,0.9,1} 
            \def\labeltext{$I$}
        \else
            \definecolor{cellcolor}{rgb}{0.7,1.0,0.7} 
            \def\labeltext{$F$}
        \fi
        \fill[cellcolor] (\i-1,-\t) rectangle ++(1,-1);
        \draw[gray] (\i-1,-\t) rectangle ++(1,-1);
        \node at (\i-0.5,-\t-0.5) {\footnotesize \labeltext};
    }
}


\draw[red, thick]
(7.5,-8) -- (6.5,-7) -- (5.5,-8) -- (4.5,-7) -- (3.5,-6) -- (2.5,-7) -- (1.5,-8) -- (0.5,-9);

\draw[red, thick]
(7.5,-10) -- (6.5,-9) -- (5.5,-10) -- (4.5,-9) -- (3.5,-10) -- (2.5,-11) -- (1.5,-10) -- (0.5,-11);


\foreach \x/\y in {
    7.5/8,
    6.5/7,
    5.5/8,
    4.5/7,
    3.5/6,
    2.5/7,
    1.5/8,
    0.5/9
} {
    \filldraw[red] (\x,-\y) circle (4pt);
}

\foreach \x/\y in {
    7.5/10,
    6.5/9,
    5.5/10,
    4.5/9,
    3.5/10,
    2.5/11,
    1.5/10,
    0.5/11
} {
    \filldraw[red] (\x,-\y) circle (4pt);
}

\foreach \i in {0,...,7} {
    \node at (\i+0.6,0.5) {\scriptsize $\vec{\gamma}^{\i}$};
}

\foreach \t in {1,...,17} {
    \node[left] at (-0.1,-\t+0.48) {\footnotesize $q_{\t}$};
}

\node at (4,-17.5) {\large $G_U$};

\end{scope}
\end{tikzpicture}
\caption{Diagram to show the supports $E^i_{odd}$ and $E^i_{even}$ within $G$. We also draw $G_U$ on the right as a reference.}
\label{fig:domainwallupper}
\end{figure}

\begin{lemma}\label{lemma:1Dweightup}
If $\vec{\gamma}^d_{\{1\}} = \id$,
\begin{equation}
\sum_{\Gamma\in \mathcal{S}_U^G} \weight(G_0) \leq q^{-\abs{\vec{\gamma}^0_{E^0_{odd}}=F}-\abs{\vec{\gamma}^0_{E^0_{even}}=\id}};
\end{equation}
if $\vec{\gamma}^d_{\{1\}} = F$,
\begin{equation}
\sum_{\Gamma\in \mathcal{S}_U^G} \weight(G_0) \leq q^{-\abs{\vec{\gamma}^0_{E^0_{odd}}=\id}-\abs{\vec{\gamma}^0_{E^0_{even}}=F}}.
\end{equation}
\end{lemma}
\begin{proof}
For $\Gamma\in \mathcal{S}^G_U$, $G_U$ within the decomposition of $\Gamma$ is fixed. After fixing $G_U$, we can separate $G_0$ into $\abs{DW(\vec{\gamma}^d)}$ regions.

In step $i$, the $j$-th region is defined as $R_j^i = \{e^i_{j-1}+1,e^i_{j-1}+2\cdots, e^i_j\}$. And we divide $g^i_0\in G_0$ into the union of its parts in different regions, $g^i_0 = g^i_0|_{R_1^i} + g^i_0|_{R_2^i} + \cdots + g^i_0|_{R_{\abs{DW(\vec{\gamma}^d)}}^i}$. In the overall picture, the region $j$ is the union of all $R_j^i$ across different steps. Correspondingly, we can decompose $G_0$ into the disjoint union of trajectories in different regions: $G_0 = \bigsqcup_{j=1}^{\abs{DW(\vec{\gamma}^d)}} G_0^j$ where $G_0^j = (g^0_0|_{R_j^0}, g^1_0|_{R_j^1}, \cdots, g^d_0|_{R_j^d})$. An illustration of regions is demonstrated in Fig.~\ref{fig:domainwallregion}.

\begin{figure}[hbt!]
\centering

\begin{tikzpicture}[scale=0.5]

\begin{scope}[shift={(0,0)}]
\foreach \t/\row in {
0/{0,0,0,0,0,0,0,0},
1/{1,0,1,0,0,0,0,0},
2/{1,1,1,1,0,0,0,0},
3/{1,1,1,1,0,0,0,0},
4/{1,1,1,0,0,0,0,0},
5/{1,1,0,0,0,0,0,0},
6/{1,0,0,1,0,0,0,0},
7/{0,0,1,1,1,0,1,0},
8/{0,1,1,1,1,1,1,1},
9/{1,1,1,1,0,1,0,1},
10/{1,0,1,0,0,0,0,0},
11/{0,0,0,0,1,0,1,0},
12/{0,0,0,1,1,1,1,0},
13/{1,0,1,1,1,1,0,0},
14/{1,1,1,0,1,0,0,0},
15/{0,1,0,0,0,0,0,0},
16/{0,0,0,0,0,0,0,0},
} {
    \foreach \x [count=\i] in \row {
        \pgfmathtruncatemacro{\val}{\x}
        \ifnum\val=0
            \definecolor{cellcolor}{rgb}{0.8,0.9,1} 
            \def\labeltext{$I$}
        \else
            \definecolor{cellcolor}{rgb}{0.7,1.0,0.7} 
            \def\labeltext{$F$}
        \fi
        \fill[cellcolor] (\i-1,-\t) rectangle ++(1,-1);
        \draw[gray] (\i-1,-\t) rectangle ++(1,-1);
        \node at (\i-0.5,-\t-0.5) {\footnotesize \labeltext};
    }
}

\draw[red, thick]
(4.5,-3) -- (3.5,-2) -- (2.5,-1) -- (1.5,-2) -- (0.5,-1);

\draw[red, thick]
(4.5,-3) -- (3.5,-4) -- (2.5,-5) -- (1.5,-6) -- (0.5,-7);

\draw[red, thick]
(7.5,-8) -- (6.5,-7) -- (5.5,-8) -- (4.5,-7) -- (3.5,-6) -- (2.5,-7) -- (1.5,-8) -- (0.5,-9);

\draw[red, thick]
(7.5,-10) -- (6.5,-9) -- (5.5,-10) -- (4.5,-9) -- (3.5,-10) -- (2.5,-11) -- (1.5,-10) -- (0.5,-11);

\draw[red, thick]
(7.5,-12) -- (6.5,-11) -- (5.5,-12) -- (4.5,-11) -- (3.5,-12) -- (2.5,-13) -- (1.5,-14) -- (0.5,-13);

\draw[red, thick]
(7.5,-12) -- (6.5,-13) -- (5.5,-14) -- (4.5,-15) -- (3.5,-14) -- (2.5,-15) -- (1.5,-16) -- (0.5,-15);

\foreach \x/\y in {
    3.5/2,
    2.5/1,
    1.5/2,
    0.5/1
} {
    \filldraw[red] (\x,-\y) circle (4pt);
}

\foreach \x/\y in {
    3.5/4,
    2.5/5,
    1.5/6,
    0.5/7
} {
    \filldraw[red] (\x,-\y) circle (4pt);
}

\foreach \x/\y in {
    7.5/8,
    6.5/7,
    5.5/8,
    4.5/7,
    3.5/6,
    2.5/7,
    1.5/8,
    0.5/9
} {
    \filldraw[red] (\x,-\y) circle (4pt);
}

\foreach \x/\y in {
    7.5/10,
    6.5/9,
    5.5/10,
    4.5/9,
    3.5/10,
    2.5/11,
    1.5/10,
    0.5/11
} {
    \filldraw[red] (\x,-\y) circle (4pt);
}

\foreach \x/\y in {
    6.5/11,
    5.5/12,
    4.5/11,
    3.5/12,
    2.5/13,
    1.5/14,
    0.5/13
} {
    \filldraw[red] (\x,-\y) circle (4pt);
}

\foreach \x/\y in {
    6.5/13,
    5.5/14,
    4.5/15,
    3.5/14,
    2.5/15,
    1.5/16,
    0.5/15
} {
    \filldraw[red] (\x,-\y) circle (4pt);
}

\foreach \i in {0,...,7} {
    \node at (\i+0.6,0.5) {\scriptsize $\vec{\gamma}^{\i}$};
}

\foreach \t in {1,...,17} {
    \node[left] at (-0.1,-\t+0.48) {\footnotesize $q_{\t}$};
}

\node[right] at (8.8,-8) {\huge $\Rightarrow$};

\end{scope}

\begin{scope}[shift={(12,0)}]
\foreach \t/\row in {
0/{0,0,0,0,0,0,0,0},
1/{1,0,1,0,0,0,0,0},
2/{1,1,1,1,0,0,0,0},
3/{1,1,1,1,0,0,0,0},
4/{1,1,1,0,0,0,0,0},
5/{1,1,0,0,0,0,0,0},
6/{1,0,0,2,0,0,0,0},
7/{0,0,2,2,2,0,2,0},
8/{0,2,2,2,2,2,2,2},
9/{2,2,2,2,0,2,0,2},
10/{2,0,2,0,0,0,0,0},
11/{0,0,0,0,1,0,1,0},
12/{0,0,0,1,1,1,1,0},
13/{1,0,1,1,1,1,0,0},
14/{1,1,1,0,1,0,0,0},
15/{0,1,0,0,0,0,0,0},
16/{0,0,0,0,0,0,0,0},
} {
    \foreach \x [count=\i] in \row {
        \pgfmathtruncatemacro{\val}{\x}
        \ifnum\val=0
            \definecolor{cellcolor}{rgb}{0.8,0.9,1} 
            \def\labeltext{$I$}
        \else
            \ifnum\val=1
                \definecolor{cellcolor}{rgb}{0.7,1.0,0.7} 
                \def\labeltext{$F$}
            \else
                \definecolor{cellcolor}{rgb}{1,1,1}
                \def\labeltext{}
            \fi
        \fi
        \fill[cellcolor] (\i-1,-\t) rectangle ++(1,-1);
        \draw[gray] (\i-1,-\t) rectangle ++(1,-1);
        \node at (\i-0.5,-\t-0.5) {\footnotesize \labeltext};
    }
}

\draw[red, thick]
(4.5,-3) -- (3.5,-2) -- (2.5,-1) -- (1.5,-2) -- (0.5,-1);

\draw[red, thick]
(4.5,-3) -- (3.5,-4) -- (2.5,-5) -- (1.5,-6) -- (0.5,-7);

\draw[red, thick]
(7.5,-8) -- (6.5,-7) -- (5.5,-8) -- (4.5,-7) -- (3.5,-6) -- (2.5,-7) -- (1.5,-8) -- (0.5,-9);

\draw[red, thick]
(7.5,-10) -- (6.5,-9) -- (5.5,-10) -- (4.5,-9) -- (3.5,-10) -- (2.5,-11) -- (1.5,-10) -- (0.5,-11);

\draw[red, thick]
(7.5,-12) -- (6.5,-11) -- (5.5,-12) -- (4.5,-11) -- (3.5,-12) -- (2.5,-13) -- (1.5,-14) -- (0.5,-13);

\draw[red, thick]
(7.5,-12) -- (6.5,-13) -- (5.5,-14) -- (4.5,-15) -- (3.5,-14) -- (2.5,-15) -- (1.5,-16) -- (0.5,-15);

\foreach \x/\y in {
    3.5/2,
    2.5/1,
    1.5/2,
    0.5/1
} {
    \filldraw[red] (\x,-\y) circle (4pt);
}

\foreach \x/\y in {
    3.5/4,
    2.5/5,
    1.5/6,
    0.5/7
} {
    \filldraw[red] (\x,-\y) circle (4pt);
}



\foreach \x/\y in {
    6.5/11,
    5.5/12,
    4.5/11,
    3.5/12,
    2.5/13,
    1.5/14,
    0.5/13
} {
    \filldraw[red] (\x,-\y) circle (4pt);
}

\foreach \x/\y in {
    6.5/13,
    5.5/14,
    4.5/15,
    3.5/14,
    2.5/15,
    1.5/16,
    0.5/15
} {
    \filldraw[red] (\x,-\y) circle (4pt);
}

\foreach \i in {0,...,7} {
    \node at (\i+0.6,0.5) {\scriptsize $\vec{\gamma}^{\i}$};
}

\foreach \t in {1,...,17} {
    \node[left] at (-0.1,-\t+0.48) {\footnotesize $q_{\t}$};
}

\node at (4,-17.5) {\large Region $1$};

\node at (9.1,-8) {\huge $+$};

\end{scope}

\begin{scope}[shift={(23,0)}]
\foreach \t/\row in {
0/{2,2,2,2,2,2,2,2},
1/{2,2,2,2,2,2,2,2},
2/{2,2,2,2,2,2,2,2},
3/{2,2,2,2,2,2,2,2},
4/{2,2,2,2,2,2,2,2},
5/{2,2,2,2,2,2,2,2},
6/{2,2,2,1,2,2,2,2},
7/{2,2,1,1,1,2,1,2},
8/{2,1,1,1,1,1,1,1},
9/{1,1,1,1,2,1,2,1},
10/{1,2,1,2,2,2,2,2},
11/{2,2,2,2,2,2,2,2},
12/{2,2,2,2,2,2,2,2},
13/{2,2,2,2,2,2,2,2},
14/{2,2,2,2,2,2,2,2},
15/{2,2,2,2,2,2,2,2},
16/{2,2,2,2,2,2,2,2},
} {
    \foreach \x [count=\i] in \row {
        \pgfmathtruncatemacro{\val}{\x}
        \ifnum\val=0
            \definecolor{cellcolor}{rgb}{0.8,0.9,1} 
            \def\labeltext{$I$}
        \else
            \ifnum\val=1
                \definecolor{cellcolor}{rgb}{0.7,1.0,0.7} 
                \def\labeltext{$F$}
            \else
                \definecolor{cellcolor}{rgb}{1,1,1}
                \def\labeltext{}
            \fi
        \fi
        \fill[cellcolor] (\i-1,-\t) rectangle ++(1,-1);
        \draw[gray] (\i-1,-\t) rectangle ++(1,-1);
        \node at (\i-0.5,-\t-0.5) {\footnotesize \labeltext};
    }
}

\draw[red, thick]
(7.5,-8) -- (6.5,-7) -- (5.5,-8) -- (4.5,-7) -- (3.5,-6) -- (2.5,-7) -- (1.5,-8) -- (0.5,-9);

\draw[red, thick]
(7.5,-10) -- (6.5,-9) -- (5.5,-10) -- (4.5,-9) -- (3.5,-10) -- (2.5,-11) -- (1.5,-10) -- (0.5,-11);

\foreach \i in {0,...,7} {
    \node at (\i+0.6,0.5) {\scriptsize $\vec{\gamma}^{\i}$};
}

\foreach \t in {1,...,17} {
    \node[left] at (-0.1,-\t+0.48) {\footnotesize $q_{\t}$};
}

\node at (4,-17.5) {\large Region $2$};
\end{scope}

\end{tikzpicture}
\caption{Diagram of separating a trajectory into 2 regions.}
\label{fig:domainwallregion}
\end{figure}

Since $G_0 = \bigsqcup_{j=1}^{\abs{DW(\vec{\gamma}^d)}} G_0^j$, we have that
\begin{equation}
\weight(G_0) = \prod_{j=1}^{\abs{DW(\vec{\gamma}^d)}} \weight(G_0^j),
\end{equation}
and
\begin{equation}\label{eq:weightG0product}
\sum_{\Gamma\in \mathcal{S}^G_U}\weight(G_0) = \prod_{j=1}^{\abs{DW(\vec{\gamma}^d)}} \sum_{\Gamma|_{\text{Region } j}}\weight(G_0^j).
\end{equation}
In the following, we will derive the upper bound of $\sum_{\Gamma|_{\text{Region } j}}\weight(G_0^j)$. Since the domain walls within $G_0$ annihilate in the end, $\vec{\gamma}^d$ must be the tensor product of all identity operations or SWAP operations within a region. Depending on which case holds, we call this region a type-$\id$ region or a type-$F$ region. If Region $j$ is a type-$\id$ region, we will show that
\begin{equation}\label{eq:typeI}
\sum_{\Gamma|_{\text{Region } j}}\weight(G_0^j)\leq q^{-\abs{\vec{\gamma}^0_{R_j^0}=F}}.
\end{equation}
If Region $j$ is a type-$F$ region, we will show that
\begin{equation}\label{eq:typeF}
\sum_{\Gamma|_{\text{Region } j}}\weight(G_0^j)\leq q^{-\abs{\vec{\gamma}^0_{R_j^0}=\id}}.
\end{equation}

We first focus on the type-$\id$ region and prove Eq.~\eqref{eq:typeI}. An example of a type-$\id$ region is shown in Fig.~\ref{fig:typeIregion}. It is Region 1 in Fig.~\ref{fig:domainwallregion}. Given this region, we first extend it to a wider region. Particularly, we define $e_{j-1}^{\min} = \min_{i\in \{0, 1, \cdots, d\}} e_j^i$ , $e_{j}^{\max} = \max_{i\in \{0, 1, \cdots, d\}} e_j^i$, and $R_j = \{e_{j-1}^{\min}+1, e_{j-1}^{\min}+2, \cdots, e_{j}^{\max}\}$. Then for each configuration $\vec{\gamma}^i|_{R_j^i}$, we extend it to $\vec{\gamma'}^i|_{R_j}$ by filling the extra components with identity operations. Region $j$ is correspondingly extended to Region $j'$ as shown in Fig.~\ref{fig:typeIregion}. By extending the region, we make sure that the configurations of different steps have equal length. Also, since we extend the region, the number of valid domain-wall configurations within the new region can only increase. Thus, we have

\begin{figure}[hbt!]
\centering

\begin{tikzpicture}[scale=0.5]

\begin{scope}[shift={(0,0)}]
\foreach \t/\row in {
0/{0,0,0,0,0,0,0,0},
1/{1,0,1,0,0,0,0,0},
2/{1,1,1,1,0,0,0,0},
3/{1,1,1,1,0,0,0,0},
4/{1,1,1,0,0,0,0,0},
5/{1,1,0,0,0,0,0,0},
6/{1,0,0,2,0,0,0,0},
7/{0,0,2,2,2,0,2,0},
8/{0,2,2,2,2,2,2,2},
9/{2,2,2,2,0,2,0,2},
10/{2,0,2,0,0,0,0,0},
11/{0,0,0,0,1,0,1,0},
12/{0,0,0,1,1,1,1,0},
13/{1,0,1,1,1,1,0,0},
14/{1,1,1,0,1,0,0,0},
15/{0,1,0,0,0,0,0,0},
16/{0,0,0,0,0,0,0,0},
} {
    \foreach \x [count=\i] in \row {
        \pgfmathtruncatemacro{\val}{\x}
        \ifnum\val=0
            \definecolor{cellcolor}{rgb}{0.8,0.9,1} 
            \def\labeltext{$I$}
        \else
            \ifnum\val=1
                \definecolor{cellcolor}{rgb}{0.7,1.0,0.7} 
                \def\labeltext{$F$}
            \else
                \definecolor{cellcolor}{rgb}{1,1,1}
                \def\labeltext{}
            \fi
        \fi
        \fill[cellcolor] (\i-1,-\t) rectangle ++(1,-1);
        \draw[gray] (\i-1,-\t) rectangle ++(1,-1);
        \node at (\i-0.5,-\t-0.5) {\footnotesize \labeltext};
    }
}

\draw[red, thick]
(4.5,-3) -- (3.5,-2) -- (2.5,-1) -- (1.5,-2) -- (0.5,-1);

\draw[red, thick]
(4.5,-3) -- (3.5,-4) -- (2.5,-5) -- (1.5,-6) -- (0.5,-7);

\draw[red, thick]
(7.5,-8) -- (6.5,-7) -- (5.5,-8) -- (4.5,-7) -- (3.5,-6) -- (2.5,-7) -- (1.5,-8) -- (0.5,-9);

\draw[red, thick]
(7.5,-10) -- (6.5,-9) -- (5.5,-10) -- (4.5,-9) -- (3.5,-10) -- (2.5,-11) -- (1.5,-10) -- (0.5,-11);

\draw[red, thick]
(7.5,-12) -- (6.5,-11) -- (5.5,-12) -- (4.5,-11) -- (3.5,-12) -- (2.5,-13) -- (1.5,-14) -- (0.5,-13);

\draw[red, thick]
(7.5,-12) -- (6.5,-13) -- (5.5,-14) -- (4.5,-15) -- (3.5,-14) -- (2.5,-15) -- (1.5,-16) -- (0.5,-15);

\draw[red, dashed]
(8.5,-6) -- (-0.5,-6);

\draw[red, dashed]
(8.5,-8.9) -- (-0.5,-8.9);

\draw[red, dashed]
(8.5,-9.1) -- (-0.5,-9.1);

\draw[red, dashed]
(8.5,-11) -- (-0.5,-11);

\foreach \x/\y in {
    3.5/2,
    2.5/1,
    1.5/2,
    0.5/1
} {
    \filldraw[red] (\x,-\y) circle (4pt);
}

\foreach \x/\y in {
    3.5/4,
    2.5/5,
    1.5/6,
    0.5/7
} {
    \filldraw[red] (\x,-\y) circle (4pt);
}



\foreach \x/\y in {
    6.5/11,
    5.5/12,
    4.5/11,
    3.5/12,
    2.5/13,
    1.5/14,
    0.5/13
} {
    \filldraw[red] (\x,-\y) circle (4pt);
}

\foreach \x/\y in {
    6.5/13,
    5.5/14,
    4.5/15,
    3.5/14,
    2.5/15,
    1.5/16,
    0.5/15
} {
    \filldraw[red] (\x,-\y) circle (4pt);
}

\foreach \i in {0,...,7} {
    \node at (\i+0.6,0.5) {\scriptsize $\vec{\gamma}^{\i}$};
}

\foreach \t in {1,...,17} {
    \node[left] at (-0.1,-\t+0.48) {\footnotesize $q_{\t}$};
}

\node at (4,-17.5) {\large Region $1$};

\node at (9.6,-8) {\huge $\Rightarrow$};

\end{scope}

\begin{scope}[shift={(12,0)}]
\foreach \t/\row in {
0/{0,0,0,0,0,0,0,0},
1/{1,0,1,0,0,0,0,0},
2/{1,1,1,1,0,0,0,0},
3/{1,1,1,1,0,0,0,0},
4/{1,1,1,0,0,0,0,0},
5/{1,1,0,0,0,0,0,0},
6/{1,0,0,0,0,0,0,0},
7/{0,0,0,0,0,0,0,0},
8/{0,0,0,0,0,0,0,0},
9/{0,0,0,0,0,0,0,0},
10/{0,0,0,0,0,0,0,0},
11/{0,0,0,0,1,0,1,0},
12/{0,0,0,1,1,1,1,0},
13/{1,0,1,1,1,1,0,0},
14/{1,1,1,0,1,0,0,0},
15/{0,1,0,0,0,0,0,0},
16/{0,0,0,0,0,0,0,0},
} {
    \foreach \x [count=\i] in \row {
        \pgfmathtruncatemacro{\val}{\x}
        \ifnum\val=0
            \definecolor{cellcolor}{rgb}{0.8,0.9,1} 
            \def\labeltext{$I$}
        \else
            \ifnum\val=1
                \definecolor{cellcolor}{rgb}{0.7,1.0,0.7} 
                \def\labeltext{$F$}
            \else
                \definecolor{cellcolor}{rgb}{1,1,1}
                \def\labeltext{}
            \fi
        \fi
        \fill[cellcolor] (\i-1,-\t) rectangle ++(1,-1);
        \draw[gray] (\i-1,-\t) rectangle ++(1,-1);
        \node at (\i-0.5,-\t-0.5) {\footnotesize \labeltext};
    }
}

\draw[red, thick]
(4.5,-3) -- (3.5,-2) -- (2.5,-1) -- (1.5,-2) -- (0.5,-1);

\draw[red, thick]
(4.5,-3) -- (3.5,-4) -- (2.5,-5) -- (1.5,-6) -- (0.5,-7);

\draw[red, thick]
(7.5,-8) -- (6.5,-7) -- (5.5,-8) -- (4.5,-7) -- (3.5,-6) -- (2.5,-7) -- (1.5,-8) -- (0.5,-9);

\draw[red, thick]
(7.5,-10) -- (6.5,-9) -- (5.5,-10) -- (4.5,-9) -- (3.5,-10) -- (2.5,-11) -- (1.5,-10) -- (0.5,-11);

\draw[red, thick]
(7.5,-12) -- (6.5,-11) -- (5.5,-12) -- (4.5,-11) -- (3.5,-12) -- (2.5,-13) -- (1.5,-14) -- (0.5,-13);

\draw[red, thick]
(7.5,-12) -- (6.5,-13) -- (5.5,-14) -- (4.5,-15) -- (3.5,-14) -- (2.5,-15) -- (1.5,-16) -- (0.5,-15);

\draw[red, dashed]
(8.5,-6) -- (-0.5,-6);

\draw[red, dashed]
(8.5,-8.9) -- (-0.5,-8.9);

\draw[red, dashed]
(8.5,-9.1) -- (-0.5,-9.1);

\draw[red, dashed]
(8.5,-11) -- (-0.5,-11);

\foreach \x/\y in {
    3.5/2,
    2.5/1,
    1.5/2,
    0.5/1
} {
    \filldraw[red] (\x,-\y) circle (4pt);
}

\foreach \x/\y in {
    3.5/4,
    2.5/5,
    1.5/6,
    0.5/7
} {
    \filldraw[red] (\x,-\y) circle (4pt);
}



\foreach \x/\y in {
    6.5/11,
    5.5/12,
    4.5/11,
    3.5/12,
    2.5/13,
    1.5/14,
    0.5/13
} {
    \filldraw[red] (\x,-\y) circle (4pt);
}

\foreach \x/\y in {
    6.5/13,
    5.5/14,
    4.5/15,
    3.5/14,
    2.5/15,
    1.5/16,
    0.5/15
} {
    \filldraw[red] (\x,-\y) circle (4pt);
}

\foreach \i in {0,...,7} {
    \node at (\i+0.6,0.5) {\scriptsize $\vec{\gamma}^{\i}$};
}

\foreach \t in {1,...,17} {
    \node[left] at (-0.1,-\t+0.48) {\footnotesize $q_{\t}$};
}

\node at (4,-17.5) {\large Region $1'$};

\end{scope}

\end{tikzpicture}
\caption{Extending Region $1$ to Region $1'$ by filling the extra components with identity operations. The configurations in the new region all have equal lengths.}
\label{fig:typeIregion}
\end{figure}

\begin{equation}
\sum_{\Gamma|_{\text{Region } j}}\weight(G_0^j)\leq \sum_{\Gamma'|_{\text{Region } j}}\weight(G_0^{\prime j}).
\end{equation}
Also, the above trajectory is induced by 1D depth-$d$ brickwork circuits. If we consider all trajectories induced by infinite-size circuits, the number of valid domain-wall configurations also increases. Thus,
\begin{equation}
\sum_{\Gamma'_{1D}, \Gamma'|_{\text{Region } j}}\weight(G_0^{\prime j})\leq \sum_{\Gamma'_{inf}, \Gamma'|_{\text{Region } j}}\weight(G_0^{\prime j}).
\end{equation}

For the extended configurations, using Eqs.~\eqref{eq:Zgamma} and~\eqref{eq:probgammatoI}, we have that
\begin{equation}
\sum_{\Gamma'_{inf}|_{\text{Region } j}}\weight(G_0^{\prime j}) = q^{- \abs{\vec{\gamma'}^0|_{\text{Region } j} = F}}\frac{1-q^{-2n+2\abs{\vec{\gamma}^0}}}{1-q^{-2n}}\leq q^{- \abs{\vec{\gamma'}^0|_{\text{Region } j} = F}} = q^{- \abs{\vec{\gamma}^0|_{R_j^0} = F}}.
\end{equation}
Hence,
\begin{equation}
\sum_{\Gamma|_{\text{Region } j}}\weight(G_0^j)\leq q^{- \abs{\vec{\gamma}^0|_{R_j^0} = F}}.
\end{equation}

Similarly, an example of a type-$F$ region is shown in Fig.~\ref{fig:typeFregion}, which is Region 2 in Fig.~\ref{fig:domainwallregion}. Similar to previous arguments, for each configuration $\vec{\gamma}^i|_{R_j^i}$, we extend it to $\vec{\gamma'}^i|_{R_j}$ by filling the extra components with SWAP operations. Region $j$ is correspondingly extended to Region $j'$ as shown in Fig.~\ref{fig:typeIregion}. Then, we have

\begin{figure}[hbt!]
\centering

\begin{tikzpicture}[scale=0.5]

\begin{scope}[shift={(0,0)}]
\foreach \t/\row in {
0/{2,2,2,2,2,2,2,2},
1/{2,2,2,2,2,2,2,2},
2/{2,2,2,2,2,2,2,2},
3/{2,2,2,2,2,2,2,2},
4/{2,2,2,2,2,2,2,2},
5/{2,2,2,2,2,2,2,2},
6/{2,2,2,1,2,2,2,2},
7/{2,2,1,1,1,2,1,2},
8/{2,1,1,1,1,1,1,1},
9/{1,1,1,1,2,1,2,1},
10/{1,2,1,2,2,2,2,2},
11/{2,2,2,2,2,2,2,2},
12/{2,2,2,2,2,2,2,2},
13/{2,2,2,2,2,2,2,2},
14/{2,2,2,2,2,2,2,2},
15/{2,2,2,2,2,2,2,2},
16/{2,2,2,2,2,2,2,2},
} {
    \foreach \x [count=\i] in \row {
        \pgfmathtruncatemacro{\val}{\x}
        \ifnum\val=0
            \definecolor{cellcolor}{rgb}{0.8,0.9,1} 
            \def\labeltext{$I$}
        \else
            \ifnum\val=1
                \definecolor{cellcolor}{rgb}{0.7,1.0,0.7} 
                \def\labeltext{$F$}
            \else
                \definecolor{cellcolor}{rgb}{1,1,1}
                \def\labeltext{}
            \fi
        \fi
        \fill[cellcolor] (\i-1,-\t) rectangle ++(1,-1);
        \draw[gray] (\i-1,-\t) rectangle ++(1,-1);
        \node at (\i-0.5,-\t-0.5) {\footnotesize \labeltext};
    }
}

\draw[red, thick]
(7.5,-8) -- (6.5,-7) -- (5.5,-8) -- (4.5,-7) -- (3.5,-6) -- (2.5,-7) -- (1.5,-8) -- (0.5,-9);

\draw[red, thick]
(7.5,-10) -- (6.5,-9) -- (5.5,-10) -- (4.5,-9) -- (3.5,-10) -- (2.5,-11) -- (1.5,-10) -- (0.5,-11);

\draw[red, dashed]
(8.5,-6) -- (-0.5,-6);

\draw[red, dashed]
(8.5,-11) -- (-0.5,-11);

\foreach \i in {0,...,7} {
    \node at (\i+0.6,0.5) {\scriptsize $\vec{\gamma}^{\i}$};
}

\foreach \t in {1,...,17} {
    \node[left] at (-0.1,-\t+0.48) {\footnotesize $q_{\t}$};
}

\node at (4,-17.5) {\large Region $2$};

\node at (9.6,-8) {\huge $\Rightarrow$};

\end{scope}

\begin{scope}[shift={(12,0)}]
\foreach \t/\row in {
0/{2,2,2,2,2,2,2,2},
1/{2,2,2,2,2,2,2,2},
2/{2,2,2,2,2,2,2,2},
3/{2,2,2,2,2,2,2,2},
4/{2,2,2,2,2,2,2,2},
5/{2,2,2,2,2,2,2,2},
6/{1,1,1,1,1,1,1,1},
7/{1,1,1,1,1,1,1,1},
8/{1,1,1,1,1,1,1,1},
9/{1,1,1,1,1,1,1,1},
10/{1,1,1,1,1,1,1,1},
11/{2,2,2,2,2,2,2,2},
12/{2,2,2,2,2,2,2,2},
13/{2,2,2,2,2,2,2,2},
14/{2,2,2,2,2,2,2,2},
15/{2,2,2,2,2,2,2,2},
16/{2,2,2,2,2,2,2,2},
} {
    \foreach \x [count=\i] in \row {
        \pgfmathtruncatemacro{\val}{\x}
        \ifnum\val=0
            \definecolor{cellcolor}{rgb}{0.8,0.9,1} 
            \def\labeltext{$I$}
        \else
            \ifnum\val=1
                \definecolor{cellcolor}{rgb}{0.7,1.0,0.7} 
                \def\labeltext{$F$}
            \else
                \definecolor{cellcolor}{rgb}{1,1,1}
                \def\labeltext{}
            \fi
        \fi
        \fill[cellcolor] (\i-1,-\t) rectangle ++(1,-1);
        \draw[gray] (\i-1,-\t) rectangle ++(1,-1);
        \node at (\i-0.5,-\t-0.5) {\footnotesize \labeltext};
    }
}

\draw[red, thick]
(7.5,-8) -- (6.5,-7) -- (5.5,-8) -- (4.5,-7) -- (3.5,-6) -- (2.5,-7) -- (1.5,-8) -- (0.5,-9);

\draw[red, thick]
(7.5,-10) -- (6.5,-9) -- (5.5,-10) -- (4.5,-9) -- (3.5,-10) -- (2.5,-11) -- (1.5,-10) -- (0.5,-11);

\draw[red, dashed]
(8.5,-6) -- (-0.5,-6);

\draw[red, dashed]
(8.5,-11) -- (-0.5,-11);

\foreach \i in {0,...,7} {
    \node at (\i+0.6,0.5) {\scriptsize $\vec{\gamma}^{\i}$};
}

\foreach \t in {1,...,17} {
    \node[left] at (-0.1,-\t+0.48) {\footnotesize $q_{\t}$};
}

\node at (4,-17.5) {\large Region $2'$};
\end{scope}

\end{tikzpicture}
\caption{Extending Region $2$ to Region $2'$ by filling the extra components with SWAP operations. The configurations in the new region all have equal lengths.}
\label{fig:typeFregion}
\end{figure}

\begin{equation}
\begin{split}
\sum_{\Gamma|_{\text{Region } j}}\weight(G_0^j)&\leq \sum_{\Gamma'|_{\text{Region } j}}\weight(G_0^{\prime j})\\
&\leq \sum_{\Gamma'_{inf}, \Gamma'|_{\text{Region } j}}\weight(G_0^{\prime j})\\
&= q^{- \abs{\vec{\gamma'}^0|_{\text{Region } j} = \id}} \frac{q^{ 2\abs{\vec{\gamma'}^0|_{\text{Region } j} = \id}}-q^{-2n}}{1-q^{-2n}}\\
&\leq q^{- \abs{\vec{\gamma'}^0|_{R_j^0} = \id}}\\
&= q^{- \abs{\vec{\gamma}^0|_{R_j^0} = \id}}.
\end{split}
\end{equation}

Hence, Eqs.~\eqref{eq:typeI} and~\eqref{eq:typeF} are proved. Combining these two equations with Eq.~\eqref{eq:weightG0product}, we obtain that
\begin{equation}\label{eq:weightG0upperbound}
\sum_{\Gamma\in \mathcal{S}^G_U}\weight(G_0) \leq \prod_{j: \text{ type-}\id}q^{- \abs{\vec{\gamma}^0|_{R_j^0} = F}} \prod_{j: \text{ type-}F}q^{- \abs{\vec{\gamma}^0|_{R_j^0} = \id}}.
\end{equation}

When $\vec{\gamma}^d_{\{1\}} = \id$, Eq.~\eqref{eq:weightG0upperbound} reduces to
\begin{equation}
\sum_{\Gamma\in \mathcal{S}_U^G} \weight(G_0) \leq q^{-\abs{\vec{\gamma}^0_{E^0_{odd}}=F}-\abs{\vec{\gamma}^0_{E^0_{even}}=\id}}.
\end{equation}
When $\vec{\gamma}^d_{\{1\}} = F$, Eq.~\eqref{eq:weightG0upperbound} reduces to
\begin{equation}
\sum_{\Gamma\in \mathcal{S}_U^G} \weight(G_0) \leq q^{-\abs{\vec{\gamma}^0_{E^0_{odd}}=\id}-\abs{\vec{\gamma}^0_{E^0_{even}}=F}}.
\end{equation}
Proof is done.
\end{proof}

Applying Lemma~\ref{lemma:1Dweightup} and Eq.~\eqref{eq:weightgamma} to Eq.~\eqref{eq:Zgammadecompose}, we get
\begin{equation}
Z_{\Vec{\gamma}^{0}\Vec{\gamma}^{d}} \leq \sum_{\mathcal{S}_U^G} \eta^{(d-1)\abs{DW(\vec{\gamma}^d)}} q^{-\abs{\vec{\gamma}^0_{E^0_{odd}}=\overline{\vec{\gamma}^d_{\{1\}}}}-\abs{\vec{\gamma}^0_{E^0_{even}}=\vec{\gamma}^d_{\{1\}}}}.
\end{equation}
Note that we denote $\overline{F} = \id$ and $\overline{\id} = F$.
\end{proof}

\subsubsection{Proof of Theorem~\ref{thm:1Dlocalrandomcircuit}}
Applying Lemma~\ref{lemma:Zgammabound}, we get the upper bound of $\tr O_{SE}\Phi^2_{\mathcal{L}_n^d} (O_{SR})$. Recall that
\begin{equation}
\tr O_{SE}\Phi^2_{\mathcal{L}_n^d} (O_{SR}) = \sum_{\Vec{\gamma}^{0}\Vec{\gamma}^{d}} c_{\Vec{\gamma}^{0}\Vec{\gamma}^{d}}(O_{SR},O_{SE}) Z_{\Vec{\gamma}^{0}\Vec{\gamma}^{d}},
\end{equation}
where
\begin{equation}
c_{\Vec{\gamma}^{0}\Vec{\gamma}^{d}}(O_{SR},O_{SE}) = \frac{1}{(q^2-1)^n}  \tr(O_{SR} \bigotimes_{j=1}^n g_{\gamma^{(0)}_j}) \tr(O_{SE} \bigotimes_{j=1}^n \gamma^{(s)}_j).
\end{equation}
Note that $O_{SR} = \tr_R(\widetilde{\rho}_{SR}^{\otimes 2} F_R)$ and $O_{SE} = (\widetilde{\mathcal{T}}_{S\rightarrow E}^{\dagger})^{\otimes 2}(F_E)$. Below, we utilize the tensor product structure of $\rho_{SR}$, $\rho_{SR} = \bigotimes_{i=1}^n \rho_{S_iR_i}$. Then, $\widetilde{\rho}_{SR} = \bigotimes_{i=1}^n \widetilde{\rho}_{S_iR_i}$ and $O_{SR} = \bigotimes_{i=1}^n O_{S_iR_i}$. Since $\tr(O_{S_iR_i} g_{\id}) = \tr(\widetilde{\rho}_{R_i}^2)-q^{-1}\tr(\widetilde{\rho}_{S_iR_i}^2)\geq 0$ and $\tr(O_{S_iR_i} g_F) = \tr(\widetilde{\rho}_{S_iR_i}^2)-q^{-1}\tr(\widetilde{\rho}_{R_i}^2)\geq 0$, we have $\tr(O_{SR} \bigotimes_{j=1}^n g_{\gamma^{(0)}_j})\geq 0$. Meanwhile, $\tr(O_{SE}F_A\otimes \id_{\Bar{A}}) = q^{2n}\tr_{AE} (\tr_{\Bar{A}}\widetilde{\tau}_{SE})^2\geq 0$. Thus, we will always have $c_{\Vec{\gamma}^0\Vec{\gamma}^d}(O_{SR},O_{SE})\geq 0$. Then,
\begin{equation}
\begin{split}
\tr O_{SE}\Phi^2_{\mathcal{L}_n^d} (O_{SR}) =& \sum_{\Vec{\gamma}^{0}\Vec{\gamma}^{d}} c_{\Vec{\gamma}^{0}\Vec{\gamma}^{d}}(O_{SR},O_{SE}) Z_{\Vec{\gamma}^{0}\Vec{\gamma}^{d}}\\
\leq& \sum_{\Vec{\gamma}^{0}\Vec{\gamma}^{d}} c_{\Vec{\gamma}^{0}\Vec{\gamma}^{d}}(O_{SR},O_{SE})\sum_{\mathcal{S}_U^G} \eta^{(d-1)\abs{DW(\vec{\gamma}^d)}} q^{-\abs{\vec{\gamma}^0_{E^0_{odd}}=\overline{\vec{\gamma}^d_{\{1\}}}}-\abs{\vec{\gamma}^0_{E^0_{even}}=\vec{\gamma}^d_{\{1\}}}}\\
=&
\sum_{\Vec{\gamma}^{0}\Vec{\gamma}^{d}} \frac{1}{(q^2-1)^n} \tr(O_{SR} \bigotimes_{j=1}^n g_{\gamma^0_j}) \tr(O_{SE} \bigotimes_{j=1}^n \gamma^d_j)\sum_{\mathcal{S}_U^G} \eta^{(d-1)\abs{DW(\vec{\gamma}^d)}} q^{-\abs{\vec{\gamma}^0_{E^0_{odd}}=\overline{\vec{\gamma}^d_{\{1\}}}}-\abs{\vec{\gamma}^0_{E^0_{even}}=\vec{\gamma}^d_{\{1\}}}}\\
=& \sum_{\Vec{\gamma}^{d}, \mathcal{S}_U^G} \frac{\eta^{(d-1)\abs{DW(\vec{\gamma}^d)}}}{(q^2-1)^n}\sum_{\vec{\gamma}^0} \tr(O_{SR} \bigotimes_{j=1}^n g_{\gamma^0_j}) \tr(O_{SE} \bigotimes_{j=1}^n \gamma^d_j)q^{-\abs{\vec{\gamma}^0_{E^0_{odd}}=\overline{\vec{\gamma}^d_{\{1\}}}}-\abs{\vec{\gamma}^0_{E^0_{even}}=\vec{\gamma}^d_{\{1\}}}}\\
=& \sum_{DW(\Vec{\gamma}^{d}), \mathcal{S}_U^G} \frac{\eta^{(d-1)\abs{DW(\vec{\gamma}^d)}}}{(q^2-1)^n}\sum_{\vec{\gamma}^0} \tr(O_{SR} \bigotimes_{j=1}^n g_{\gamma^0_j}) \tr(O_{SE} \bigotimes_{j=1}^n \gamma^d_j)(q^{-\abs{\vec{\gamma}^0_{E^0_{odd}}=\id}-\abs{\vec{\gamma}^0_{E^0_{even}}=F}}+q^{-\abs{\vec{\gamma}^0_{E^0_{odd}}=F}-\abs{\vec{\gamma}^0_{E^0_{even}}=\id}})\\
=& \sum_{DW(\Vec{\gamma}^{d}), \mathcal{S}_U^G}\eta^{(d-1)\abs{DW(\vec{\gamma}^d)}} q^{-2n}(\tr O_{SR} \id_{E^0_{odd}}F_{E^0_{even}} \tr O_{SE} \id_{E^d_{odd}}F_{E^d_{even}} + \tr O_{SR} \id_{E^0_{even}}F_{E^0_{odd}} \tr O_{SE} \id_{E^d_{even}}F_{E^d_{odd}}) \\
=& q^{-2n}(\tr O_{SR}\id^n \tr O_{SE} \id^n + \tr O_{SR}F^n \tr O_{SE} F^n)+q^{-2n} \sum_{\abs{DW(\Vec{\gamma}^{d})}\in \{2, 4, \cdots, 2\lfloor n/4 \rfloor \}}\eta^{(d-1)\abs{DW(\vec{\gamma}^d)}}\times\\
&\sum_{\mathcal{S}_U^G} (\tr O_{SR} \id_{E^0_{odd}}F_{E^0_{even}} \tr O_{SE} \id_{E^d_{odd}}F_{E^d_{even}} + \tr O_{SR} \id_{E^0_{even}}F_{E^0_{odd}} \tr O_{SE} \id_{E^d_{even}}F_{E^d_{odd}}).
\end{split}
\end{equation}
Here, $\id_A$ and $F_A$ means identity and SWAP operations on the support $A$ The last inequality comes from Eq.~\eqref{eq:summation_gamma_0}. Note that
\begin{align}
q^{-2n}\tr O_{SR}\id^n \tr O_{SE} \id^n &= \tr(\widetilde{\tau}^2_{E})\tr(\widetilde{\rho}^2_{R}),\\
q^{-2n}\tr O_{SR}F^n \tr O_{SE} F^n &= \tr(\widetilde{\tau}^2_{SE})\tr(\widetilde{\rho}^2_{SR}).
\end{align}
Also,
\begin{equation}
\begin{split}
q^{-2n}\tr O_{SR} \id_{E^0_{odd}}F_{E^0_{even}} \tr O_{SE} \id_{E^d_{odd}}F_{E^d_{even}}
&\leq \max_{E^0}(\frac{\tr O_{SR} \id_{E^0_{odd}}F_{E^0_{even}}}{\tr O_{SR} \id_{E^d_{odd}}F_{E^d_{even}}})\max_{E^d}(q^{-2n}\tr O_{SR} \id_{E^d_{odd}}F_{E^d_{even}} \tr O_{SE} \id_{E^d_{odd}}F_{E^d_{even}}) \\
&\leq \max_{i\in [n]}(\frac{\tr\widetilde{\rho}_{S_iR_i}^2}{\tr\widetilde{\rho}_{R_i}^2}, \frac{\tr\widetilde{\rho}_{R_i}^2}{\tr\widetilde{\rho}_{S_iR_i}^2})^{\abs{E^d_{even}-E^0_{even}}}\max_{A\subseteq [n]} ( \tr_{AR}(\tr_{\Bar{A}} \widetilde{\rho}_{SR})^2 \tr_{AE}(\tr_{\Bar{A}} \widetilde{\tau}_{SE})^2)\\
&\leq \max_{i\in [n]}(\frac{\tr\widetilde{\rho}_{S_iR_i}^2}{\tr\widetilde{\rho}_{R_i}^2}, \frac{\tr\widetilde{\rho}_{R_i}^2}{\tr\widetilde{\rho}_{S_iR_i}^2})^{d\abs{DW(\vec{\gamma}^d)}}\max_{A\subseteq [n]} ( \tr_{AR}(\tr_{\Bar{A}} \widetilde{\rho}_{SR})^2 \tr_{AE}(\tr_{\Bar{A}} \widetilde{\tau}_{SE})^2).\\
\end{split}
\end{equation}
Similarly,
\begin{equation}
\begin{split}
q^{-2n}\tr O_{SR} \id_{E^0_{even}}F_{E^0_{odd}} \tr O_{SE} \id_{E^d_{even}}F_{E^d_{odd}} \leq \max_{i\in [n]}(\frac{\tr\widetilde{\rho}_{S_iR_i}^2}{\tr\widetilde{\rho}_{R_i}^2}, \frac{\tr\widetilde{\rho}_{R_i}^2}{\tr\widetilde{\rho}_{S_iR_i}^2})^{d\abs{DW(\vec{\gamma}^d)}}\max_{A\subseteq [n]} ( \tr_{AR}(\tr_{\Bar{A}} \widetilde{\rho}_{SR})^2 \tr_{AE}(\tr_{\Bar{A}} \widetilde{\tau}_{SE})^2).
\end{split}
\end{equation}

Note that $\abs{\mathcal{S}_U^G}\leq 2^{d\abs{DW(\vec{\gamma}^d)}}$. Then,
\begin{equation}
\begin{split}
\tr O_{SE}\Phi^2_{\mathcal{L}_n^d} (O_{SR}) \leq& \tr(\widetilde{\tau}^2_E)\tr(\widetilde{\rho}^2_R)+\tr(\widetilde{\tau}^2_{SE})\tr(\widetilde{\rho}^2_{SR})+\sum_{\abs{DW(\Vec{\gamma}^{d})}\in \{2, 4, \cdots, 2\lfloor n/4 \rfloor \}}\eta^{(d-1)\abs{DW(\vec{\gamma}^d)}}\times\\
&(2\rho_m)^{d\abs{DW(\vec{\gamma}^d)}}\max_{A\subseteq [n]} ( \tr_{AR}(\tr_{\Bar{A}} \widetilde{\rho}_{SR})^2 \tr_{AE}(\tr_{\Bar{A}} \widetilde{\tau}_{SE})^2)\\
\leq& \tr(\widetilde{\tau}^2_E)\tr(\widetilde{\rho}^2_R)+\tr(\widetilde{\tau}^2_{SE})\tr(\widetilde{\rho}^2_{SR})+\\
&\max_{A\subseteq [n]} ( \tr_{AR}(\tr_{\Bar{A}} \widetilde{\rho}_{SR})^2 \tr_{AE}(\tr_{\Bar{A}} \widetilde{\tau}_{SE})^2\sum_{k_0=1}^{\lfloor n/4 \rfloor}\binom{n}{2k_0}((2\rho_m)^d\eta^{(d-1)})^{2k_0}),
\end{split}
\end{equation}
where we denote $\rho_m = \max_{i\in [n]}(\frac{\tr\widetilde{\rho}_{S_iR_i}^2}{\tr\widetilde{\rho}_{R_i}^2}, \frac{\tr\widetilde{\rho}_{R_i}^2}{\tr\widetilde{\rho}_{S_iR_i}^2})$ for simplicity. In the following, we analyze the coefficient
\begin{equation}
c = \sum_{k_0=1}^{\lfloor n/4 \rfloor}\binom{n}{2k_0}((2\rho_m)^d\eta^{(d-1)})^{2k_0}.
\end{equation}
From Ref.~\cite{Dalzell2022Anticoncentrate}, coefficient $c$ has an upper bound:
\begin{equation}
c = \sum_{k_0=1}^{\lfloor n/4 \rfloor}\binom{n}{2k_0}((2\rho_m)^d\eta^{(d-1)})^{2k_0}\leq (\eta (2\rho_m)^{\frac{d}{d-1}})^{d-d^*},
\end{equation}
where
\begin{equation}
d^* = \frac{\log n}{\log \frac{1}{2\eta}} + \frac{\log(e-1)}{\log \frac{1}{2\eta}}+1.
\end{equation}
Recall that we choose depth $d$ such that $d-d^* = \log \frac{n}{\varepsilon} = O(\log n)$. In the large $n$ limit, $\eta (2\rho_m)^{\frac{d}{d-1}} \leq 2\eta\rho_m+\delta$ with $\delta$ an arbitrarily small number. Then,
\begin{equation}
c\leq (2\eta\rho_m+\delta)^{\log \frac{n}{\varepsilon}} = (\frac{\varepsilon}{n})^{\log\frac{1}{2\eta\rho_m+\delta}}.
\end{equation}
As a result,
\begin{equation}
\tr O_{SE}\Phi^2_{\mathcal{L}_n^d} (O_{SR}) \leq \tr(\widetilde{\tau}^2_E)\tr(\widetilde{\rho}^2_R)+\tr(\widetilde{\tau}^2_{SE})\tr(\widetilde{\rho}^2_{SR})+(\frac{\varepsilon}{n})^{\log\frac{1}{2\eta\rho_m+\delta}}\max_{A\subseteq [n]} ( \tr_{AR}(\tr_{\Bar{A}} \widetilde{\rho}_{SR})^2 \tr_{AE}(\tr_{\Bar{A}} \widetilde{\tau}_{SE})^2).
\end{equation}
\begin{equation}
\mathbb{E}_U\Vert \mathcal{T}_{S\rightarrow E}(U_S\rho_{SR}U_{S}^{\dagger}) - \tau_E\otimes \rho_R \Vert_1 \leq \sqrt{\tr(\widetilde{\tau}^2_{SE})\tr(\widetilde{\rho}^2_{SR})+(\frac{\varepsilon}{n})^{\log\frac{1}{2\eta\rho_m+\delta}}\max_{A\subseteq [n]} ( \tr_{AR}(\tr_{\Bar{A}} \widetilde{\rho}_{SR})^2 \tr_{AE}(\tr_{\Bar{A}} \widetilde{\tau}_{SE})^2}).
\end{equation}
Note that $\widetilde{\rho}_{SR} = \bigotimes_{i=1}^{2N}\widetilde{\rho}_{S_iR_i}$ and $\widetilde{\rho}_{S_iR_i} = \zeta_{R_i}^{-1/4}\rho_{S_iR_i}\zeta_{R_i}^{-1/4}$. We can choose $\zeta_{R_i} = \rho_{R_i}$ or $\zeta_{R} = \rho_{R}$ to let
\begin{equation}
\tr(\widetilde{\rho}^2_{R_i}) = 1, \tr(\widetilde{\rho}^2_{S_iR_i}) = 2^{-H_2(S_i|R_i)_{\rho_{S_iR_i}}}.
\end{equation}
Thus,
\begin{equation}
\tr_{AR}(\tr_{\Bar{A}} \widetilde{\rho}_{SR})^2 = \tr_{AR}( \widetilde{\rho}_{AR}^2 ) = 2^{-H_2(A|R)_{\rho_{AR}}},
\end{equation}
and
\begin{equation}
\rho_m = \max_{i\in [n]}(2^{H_2(S_i|R_i)_{\rho_{S_iR_i}}}, 2^{-H_2(S_i|R_i)_{\rho_{S_iR_i}}}).
\end{equation}

Also, we can choose $\sigma_E = \tau_E$ such that
\begin{equation}
\tr(\widetilde{\tau}^2_{E}) = 1, \tr(\widetilde{\tau}^2_{SE}) = 2^{-H_2(S|E)_{\tau_{SE}}}.
\end{equation}
In this case,
\begin{equation}
\tr_{AE}(\tr_{\Bar{A}} \widetilde{\tau}_{SE})^2 = \tr_{AE}( \widetilde{\tau}_{AE}^2 ) = 2^{-H_2(A|E)_{\tau_{AE}}}.
\end{equation}

Then, for any $\delta > 0$, we have that
\begin{equation}
\mathbb{E}_U\Vert \mathcal{T}_{S\rightarrow E}(U_S\rho_{SR}U_{S}^{\dagger}) - \tau_E\otimes \rho_R \Vert_1 \leq \sqrt{2^{-H_2(S|E)_{\tau_{SE}}-H_2(S|R)_{\rho_{SR}}} + (\frac{\varepsilon}{n})^{\log\frac{1}{2\eta\rho_m+\delta}} \max_{A\subseteq [n]} 2^{-H_2(A|R)_{\rho_{AR}}-H_2(A|E)_{\tau_{AE}}}}.
\end{equation}
In the description of Theorem~\ref{thm:1Dlocalrandomcircuit}, we omit $\delta$ for simplicity. Proof is done.

\subsection{Application of Theorem~\ref{thm:1Dlocalrandomcircuit} in quantum error correction}\label{appendssc:1DLRCQEC}
Here, we present the results of applying Theorem~\ref{thm:1Dlocalrandomcircuit} in quantum error correction and show that the encoding scheme induced by $\mathfrak{B}$ can have a constant threshold against local noise.

For the quantum error correction task, there are $nR$ physical qudits in total. We first fix the encoding rate $R = k/n$, where $R$ is a constant. Then, we set the local dimension $q$ as $2^{R^{-1}}$ composed of $R^{-1}$ qubits. The total qudit number is correspondingly changed from $n$ to $nR$. Since $R$ is a constant, the local dimension is still a constant. Then, each $S_i$ is a qudit with local dimension $2^{R^{-1}}$, and $R_i$ is a qubit. Without loss of generality, the state $\rho_{S_iR_i}$ is set as $\ketbra{\hat{\phi}}\otimes \ketbra{0}^{R^{-1}-1}$ where $\ketbra{\hat{\phi}}$ is an EPR pair. Then, $H_2(S_i|R_i) = -1$, $\eta = \frac{2^{R^{-1}}}{2^{2R^{-1}}+1}$, $\rho_m = 2$, and
\begin{equation}
\frac{1}{2\eta \rho_m} = \frac{2^{2R^{-1}}+1}{2^{R^{-1}+2}} = \frac{2^{2n/k}+1}{2^{n/k+2}} > 2^{n/k-2}.
\end{equation}

Meanwhile, we consider local noise. The channel $\mathcal{T}_{S\rightarrow E}$ will be the complementary channel of local noise, $\mathcal{T}_{S\rightarrow E} = \hat{\mathcal{N}}^{\otimes n}$. In this case, the decoupling theorem can be simplified to
\begin{equation}
\begin{split}
&\mathbb{E}_{U_S\sim \mathfrak{B}_n^{\varepsilon}}\Vert \mathcal{T}_{S\rightarrow E}(U_S\rho_{SR}U_{S}^{\dagger}) - \tau_E\otimes \rho_R \Vert_1\\
\leq& \sqrt{2^{-\sum_{i=1}^{nR}H_2(S_i|E_i)_{\tau_{S_iE_i}}-\sum_{i=1}^{nR}H_2(S_i|R_i)_{\rho_{S_iR_i}}} + (\frac{\varepsilon}{nR})^{\log\frac{1}{2\eta\rho_m}} \prod_{i=1}^{nR} \max(1, 2^{-H_2(S_i|E_i)_{\tau_{S_iE_i}}-H_2(S_i|R_i)_{\rho_{S_iR_i}}})},
\end{split}
\end{equation}
where $H_2(S_i|E_i)_{\tau_{S_iE_i}} = R^{-1}(1-f(\vec{p}))$ with $f(\vec{p})=2\log (\sqrt{p_I} + \sqrt{p_X} + \sqrt{p_Y} + \sqrt{p_Z})$ for strength-$\vec{p}$ local Pauli noise, $H_2(S_i|E_i)_{\tau_{S_iE_i}} = R^{-1}(1-\log(1+3p))$ for strength-$p$ local erasure error, and $H_2(S_i|E_i)_{\tau_{S_iE_i}} = -R^{-1}\log( \frac{1}{2-p} + \sqrt{\frac{p}{2-p}} )$ for strength-$p$ local amplitude damping noise.

For local Pauli noise,
\begin{equation}
H_2(S_i|E_i)_{\tau_{S_iE_i}}+H_2(S_i|R_i)_{\rho_{S_iR_i}} = R^{-1}(1-f(\vec{p}))-1 = R^{-1}(1-f(\vec{p})-\frac{k}{n}).
\end{equation}
When $1-f(\vec{p})-\frac{k}{n} > 0$, the term $\max(1, 2^{-H_2(S_i|E_i)_{\tau_{S_iE_i}}-H_2(S_i|R_i)_{\rho_{S_iR_i}}})\leq 1$. It is also true for local erasure error when $1-\log(1+3p)-\frac{k}{n} > 0$ and for local amplitude damping noise when $-\log( \frac{1}{2-p} + \sqrt{\frac{p}{2-p}} )-\frac{k}{n} > 0$.

Note that the expected Choi error satisfies
\begin{equation}
\mathbb{E}_{U_S\sim \mathfrak{B}_n^{\varepsilon}} \epsilon_{\mathrm{Choi}} \leq \sqrt{\mathbb{E}_{U_S\sim \mathfrak{B}_n^{\varepsilon}}\Vert \mathcal{T}_{S\rightarrow E}(U_S\rho_{SR}U_{S}^{\dagger}) - \tau_E\otimes \rho_R \Vert_1}.
\end{equation}
We obtain that the Choi error is upper-bounded by
\begin{equation}
\mathbb{E}_{U_S\sim \mathfrak{B}_n^{\varepsilon}} \epsilon_{\mathrm{Choi}}\leq \left(2^{-n(1-f(\Vec{p})-\frac{k}{n})} + (\frac{\varepsilon}{k})^{\frac{n}{k}-2}\right)^{\frac{1}{4}}
\end{equation}
for strength-$\vec{p}$ local Pauli noise, upper-bounded by
\begin{equation}
\mathbb{E}_{U_S\sim \mathfrak{B}_n^{\varepsilon}} \epsilon_{\mathrm{Choi}}\leq \left(2^{-n(1-\log(1+3p)-\frac{k}{n})} + (\frac{\varepsilon}{k})^{\frac{n}{k}-2}\right)^{\frac{1}{4}}
\end{equation}
for strength-$p$ local erasure error, and upper-bounded by
\begin{equation}
\mathbb{E}_{U_S\sim \mathfrak{B}_n^{\varepsilon}} \epsilon_{\mathrm{Choi}}\leq \left(2^{-n(-\log( \frac{1}{2-p} + \sqrt{\frac{p}{2-p}} )-\frac{k}{n})} + (\frac{\varepsilon}{k})^{\frac{n}{k}-2}\right)^{\frac{1}{4}}
\end{equation}
for strength-$p$ local amplitude damping noise.

To let the Choi error vanish, we need to ensure $\frac{n}{k}-2>0$. Thus, we show that there exists a threshold $\min(\frac{1}{2}, 1-f(\vec{p}))$ for local Pauli noise, a threshold $\min(\frac{1}{2}, 1-\log(1+3p)))$ for local erasure error, and a threshold $\min(\frac{1}{2}, -\log( \frac{1}{2-p} + \sqrt{\frac{p}{2-p}} ))$ for local amplitude damping noise. Moreover, by choosing appropriate parameters, $(\frac{\varepsilon}{k})^{\frac{n}{k}-2}$ will polynomially decay with $n$.

For strength-$p$ nearest neighbor $ZZ$-coupling noise, we apply the original Theorem~\ref{thm:1Dlocalrandomcircuit}. Similar to the analysis in Appendix~\ref{app:proof_corr}, we obtain that
\begin{align}
H_2(A|R)_{\rho_{AR}} &= -\abs{A}\\
H_2(A|E)_{\tau_{AE}} &\geq R^{-1}\abs{A}[1-2(1+R)\log(\sqrt{1-p}+\sqrt{p})].
\end{align}
Thus, when $H_2(A|R)+H_2(A|E)\geq 0$, or equivalently, $1-R-2(1+R)\log(\sqrt{1-p}+\sqrt{p})\geq 0$, we have that
\begin{equation}
\mathbb{E}_{U_S\sim \mathfrak{B}_n^{\varepsilon}} \epsilon_{\mathrm{Choi}}\leq \left(2^{-n(1-\frac{k}{n}-2(1+\frac{k}{n})\log(\sqrt{1-p}+\sqrt{p}))} + (\frac{\varepsilon}{k})^{\frac{n}{k}-2}\right)^{\frac{1}{4}}.
\end{equation}
Hence, the achievable encoding rate for $\mathfrak{B}_n^{\varepsilon}$ is $1-2(1+\frac{k}{n})\log(\sqrt{1-p}+\sqrt{p})$.

Note that the above analysis also applies to the case of local dimension $2$. In this case, the total number of qubits is $n$, and each state $\rho_{S_iR_i}$ is defined as $\frac{k}{n}\ketbra{\hat{\phi}} + (1-\frac{k}{n})\ketbra{0}$. We then have $H_2(S_i|R_i) = -\frac{k}{n}$, while $H_2(S_i|E_i)$ depends on the noise parameter, typically taking the form $1 - f(\vec{p})$ for local Pauli noise. With $\eta = \frac{2}{5}$ and $\rho_m = 2^{\frac{k}{n}}$, substituting these values into the decoupling theorem yields the same bound, except that the factor $(\frac{\varepsilon}{k})^{\frac{n}{k}-2}$ is replaced by $(\frac{\varepsilon}{n})^{\log\frac{5}{4}-\frac{k}{n}}$.

\section{Lower bound for circuit depth in AQEC}\label{app:lower_bound}
Here, we analyze the lower bound on the circuit depth required for a unitary encoding of
$k$ logical qubits that achieves a Choi error $\epsilon_{\mathrm{Choi}}$. For the $2k$-qubit maximally entangled state $\hat{\phi}_{LR}$ shared between the reference system $R$ and the logical system $L$, where both $L$ and $R$ consist of $k$ qubits each, we denote the $i$-th qubit in $L$ by $L_{i,0}$ and the qubit in $R$ that is maximally entangled with $L_{i,0}$ by $R_{i,0}$. We define the physical system $S$ as the union of $L$ and the ancillary system initialized in the state $\ket{0^{n-k}}$.

Consider a unitary circuit $U$ of depth $d$ acting on $S$, decomposed as
\begin{equation}
U = \prod_{i=1}^d \left(\bigotimes_j V_{i,j}\right),
\end{equation}
where each $\{V_{i,j}\}_j$ represents the set of two-qubit gates in the $i$-th layer of $U$ with disjoint supports. For each layer $j$, where $1 \le j \le d$, let $L_{i,j}$ denote the union of the supports of all two-qubit gates that intersect with $L_{i,j-1}$:
\begin{equation}
L_{i,j} = \bigcup \left\{\mathrm{supp}(V_{i,j}) \;\big|\; \mathrm{supp}(V_{i,j}) \cap L_{i,j-1} \neq \emptyset\right\}.
\end{equation}
In other words, $L_{i,j}$ represents the light cone of the $i$-th qubit in the first $j$ layers of $U$.

Similarly, we define the backward light cone of the $i$-th physical qubit, $i \in S$, recursively as
\begin{equation}
\begin{split}
B_{i,d} &= \{i\}, \\
B_{i,j} &= \bigcup \left\{\mathrm{supp}(V_{i,j+1}) \;\big|\; \mathrm{supp}(V_{i,j+1}) \cap B_{i,j+1} \neq \emptyset\right\}, \quad 0 \le j < d.
\end{split}
\end{equation}
By the definitions of $L_{i,j}$ and $B_{i,j}$, for any logical qubit $i \in L$ and physical qubit $k \in S$, the qubit $k$ is included in $L_{i,d}$ if and only if the logical qubit $i$ is included in $B_{k,0}$.

Finally, we define $M$ as an upper bound on the size of all light cones:
\begin{equation}
M = \max\left(\max_{i \in L} |L_{i,d}|, \max_{i \in S} |B_{i,0}| \right).
\end{equation}
With this definition, we can establish a lower bound on the number of physical qubits in $L$ that have disjoint light cones, as stated in the following lemma.

\begin{lemma}[Disjoint light cones]\label{lem:disjoint}
For any unitary circuit \( U = \prod_{i=1}^d \left(\bigotimes_j V_{i,j}\right) \), there exists a subset \( J \subseteq L \) such that:
\begin{enumerate}
\item For any distinct qubits \( i, j \in J\), their light cones are disjoint: \( L_{i,d} \cap L_{j,d} = \emptyset \).
\item The size of \( J \) satisfies \( |J| \ge \left\lceil \frac{k}{M^2} \right\rceil \).
\end{enumerate}
\end{lemma}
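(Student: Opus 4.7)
The plan is to reduce the statement to a standard independent-set argument on a suitably defined conflict graph over the logical qubits, where the degree bound is governed by the two-sided light-cone parameter $M$.

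First, I would construct the \emph{conflict graph} $\mathsf{H}$ on the vertex set $L$, connecting two logical qubits $i,j \in L$ by an edge whenever $L_{i,d}\cap L_{j,d}\neq\emptyset$. By definition, an independent set in $\mathsf{H}$ is precisely a set $J\subseteq L$ whose forward light cones are pairwise disjoint, so the claim reduces to exhibiting an independent set of size at least $\lceil k/M^2\rceil$.

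Second, I would bound the maximum degree of $\mathsf{H}$ by $M^2-1$ by exploiting the duality between forward and backward light cones already noted in the excerpt: for any $i\in L$ and $p\in S$, one has $p\in L_{i,d}$ if and only if $i\in B_{p,0}$. Fix $i\in L$. Every neighbor $j$ of $i$ in $\mathsf{H}$ shares some physical qubit $p\in L_{i,d}\cap L_{j,d}$, and for that $p$ both $i$ and $j$ lie in $B_{p,0}$. Therefore each neighbor $j\ne i$ belongs to $\bigcup_{p\in L_{i,d}} B_{p,0}$, whose size is at most $|L_{i,d}|\cdot\max_p |B_{p,0}|\le M\cdot M$. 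Subtracting $i$ itself gives $\deg_{\mathsf{H}}(i)\le M^2-1$.

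Finally, I would apply the classical greedy bound (or equivalently Tur\'an's theorem) stating that any graph on $k$ vertices of maximum degree $\Delta$ admits an independent set of size at least $\lceil k/(\Delta+1)\rceil$: repeatedly pick a vertex, delete it and all its neighbors, which removes at most $\Delta+1$ vertices per iteration. With $\Delta+1\le M^2$, this yields an independent set $J\subseteq L$ with $|J|\ge \lceil k/M^2\rceil$, completing the proof. I do not foresee a genuine obstacle here; the only point that requires a moment of care is the symmetric use of $M$ as the common upper bound for both forward and backward light cones, but that is guaranteed by the definition $M=\max(\max_{i\in L}|L_{i,d}|,\max_{i\in S}|B_{i,0}|)$ given in the excerpt.
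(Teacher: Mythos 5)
Your proof is correct and follows essentially the same route as the paper: the paper runs the identical greedy argument directly (iteratively adding any logical qubit outside $\bigcup_{m\in L_{J_a,d}}B_{m,0}$, a set of size at most $|J_a|M^2$), while you merely repackage that counting as a degree bound $\Delta\le M^2-1$ on a conflict graph followed by the standard greedy independent-set bound. The key step---using the forward/backward light-cone duality to bound the number of conflicting logical qubits by $M\cdot M$---is the same in both.
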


\begin{proof}
We construct the set \( J \) iteratively. Start with an empty set \( J_0 = \emptyset \). At each step \( a \), assume we have a set \( J_a \) with disjoint light cones. Define \( L_{J_a,d} = \bigcup_{i \in J_a} L_{i,d} \), the union of the light cones of qubits in \( J_a \).

For a qubit \( j \in L \), the condition that \( L_{j,d} \) does not intersect with any \( L_{i,d} \) for \( i \in J_a \),
\begin{equation}
L_{j,d} \cap L_{J_a,d} = \emptyset,
\end{equation}
is equivalent to
\begin{equation}
j \notin \bigcup_{m \in L_{J_a,d}} B_{m,0}.
\end{equation}
Since each \( L_{i,d} \) has at most \( M \) qubits, the size of \( L_{J_a,d} \) is at most \( |J_a| M \). Consequently, the set \( \bigcup_{m \in L_{J_a,d}} B_{m,0} \) has size at most \( |J_a| M^2 \).

If \( k > |J_a| M^2 \), there exists at least one logical qubit \( i \in L - J_a \) such that \( i \notin \bigcup_{m \in L_{J_a,d}} B_{m,0} \). We can then add this qubit to \( J_a \) to form \( J_{a+1} = J_a \cup \{i\} \), maintaining the property that the light cones of qubits in \( J_{a+1} \) remain disjoint.

This process continues until \( k \le |J_a|M^2 = aM^2 \). Therefore, the final set \( J_a \) satisfies that the light cones of any two distinct qubits in \( J \) are disjoint, and $|J_a| = a \ge \frac{k}{M^2}$.
\end{proof}

\begin{proof}[Proof of Theorem \ref{thm:depolarizing_lower_bound}]
We now combine Lemma \ref{lem:disjoint} with the properties of depolarizing noise to analyze circuit lower bound. Recall that the Choi fidelity is defined as
\begin{equation}
F_{\mathrm{Choi}} = \max_{\mathcal{D}} F(\hat{\phi}_{LR}, (\cD \circ \cN \circ \cU)_L (\hat{\phi}_{LR})).
\end{equation}
Here, and in the following, we omit the identity channel \( I_R \) on the reference system. The encoding channel \( \mathcal{U} \) maps a state \( \rho \) on system \( L \) to another state \( \mathcal{U}(\rho) = U (\rho \otimes \ketbra{0^{n-k}}) U^{\dagger} \) on system \( S \). The noise channel \( \mathcal{N} \) represents i.i.d.~local depolarizing noise on system \( S \), specifically \( \mathcal{N} = \mathcal{N}_d^{S} \), where \( \mathcal{N}_d^{S} = \bigotimes_{i \in S} \mathcal{N}_d^{(i)} \) and \( \mathcal{N}_d^{(i)} \) is the strength-$p$ depolarizing channel acting on the \( i \)-th qubit. The recovery channel \( \mathcal{D} \) maps a state on system \( S \) back to a state on system \( L \).

Consider the subset \( J \) obtained in Lemma \ref{lem:disjoint} under the strength-$p$ local depolarizing noise defined in Eq.~\eqref{eq:depolarizing_channel}. For a single qubit \( i \in J \), the probability that all physical qubits in \( L_{i,d} \) are traced out and replaced by the maximally mixed state is \( p_L = p^{|L_{i,d}|} \ge p^{M} \). Since the light cones of logical qubits in \( J \) are disjoint, the events that these light cones are traced out are independent. Therefore, the probability that none of these light cones are fully traced out is
\begin{equation}\label{eq:pNone}
p_{\mathrm{None}} = \prod_{i \in J} \left(1 - p^{|L_{i,d}|}\right) \le \left(1 - p^{M}\right)^{|J|} \le \exp\left(-|J| \cdot p^{M}\right).
\end{equation}
The second inequality utilizes the bound \( (1 - x) \le e^{-x} \) for \( 0 \le x \le 1 \). This probability provides a lower bound for the Choi error, as later obtained in Eq.~\eqref{eq:ChoiErrorLowerboundJ}. To see this, consider the Choi fidelity under the local depolarizing channel:

\begin{equation}
F_{\mathrm{Choi}} = \max_{\mathcal{D}} F\left(\hat{\phi}_{LR}, \mathcal{D}_{S \rightarrow L} \circ \mathcal{N}_d^{S} \left[ U_d \left(\hat{\phi}_{LR} \otimes \ketbra{0^{n-k}}\right) U_d^{\dagger} \right]\right).
\end{equation}
Expanding the state after passing through the noise channel \( \mathcal{N}_d^{S} \), with probability \( 1 - p_{\mathrm{None}} \), the noise erases all information in at least one of the \( L_{i,d} \) systems for \( i \in J \), replacing it with a maximally mixed state. Consequently, the state after applying the local depolarizing noise is
\begin{equation}
\mathcal{N}_d^{S} \left[ U_d \left(\hat{\phi}_{LR} \otimes \ketbra{0^{n-k}}\right) U_d^{\dagger} \right] = \sum_{l} p_l \psi_l + p_{\mathrm{None}} \psi',
\end{equation}
where \( \sum_l p_l = 1 - p_{\mathrm{None}} \) represents the probability that at least one logical qubit's light cone is traced out. Here, the indices $l$ are used to distinguish states subject to errors on different subsets of physical qubits.
For each $\psi_l$, at least one subsystem \( L_{i,d} \) with $i \in J$ is replaced by the maximally mixed state.

The square of the Choi fidelity can then be expressed as
\begin{equation}\label{eq:square_Choi_fidelity}
\begin{split}
F^2_{\mathrm{Choi}} &= \max_{\mathcal{D}} F^2\left(\hat{\phi}_{LR}, \mathcal{D}_{S \rightarrow L} \circ \mathcal{N}_d^{S} \left[ U_d \left(\hat{\phi}_{LR} \otimes \ketbra{0^{n-k}}\right) U_d^{\dagger} \right]\right) \\
&= \max_{\mathcal{D}} F^2\left(\hat{\phi}_{LR}, \mathcal{D}_{S \rightarrow L} \left[\sum_l p_l \psi_l + p_{\mathrm{None}} \psi'\right]\right) \\
&= \max_{\mathcal{D}} \left\langle \hat{\phi}_{LR} \right| \mathcal{D}_{S \rightarrow L} \left[\sum_l p_l \psi_l + p_{\mathrm{None}} \psi'\right] \left| \hat{\phi}_{LR} \right\rangle \\
&= \max_{\mathcal{D}} \left( \sum_l p_l \left\langle \hat{\phi}_{LR} \right| \mathcal{D}_{S \rightarrow L}(\psi_l) \left| \hat{\phi}_{LR} \right\rangle + p_{\mathrm{None}} \left\langle \hat{\phi}_{LR} \right| \mathcal{D}_{S \rightarrow L}(\psi') \left| \hat{\phi}_{LR} \right\rangle \right) \\
&\le \max_{\mathcal{D}} \left( \sum_l p_l \left\langle \hat{\phi}_{LR} \right| \mathcal{D}_{S \rightarrow L}(\psi_l) \left| \hat{\phi}_{LR} \right\rangle + p_{\mathrm{None}} \right).
\end{split}
\end{equation}
In the third equality, we utilize Eq.~\eqref{eq:fidelity_pure_mixed}. The fourth equality follows from the linearity of quantum channels.

Now, consider the quantity \( \langle \hat{\phi}_{LR} | \mathcal{D}_{S \rightarrow L}(\psi_l) | \hat{\phi}_{LR} \rangle \). Since at least the light cone of one logical qubit in \( L \) is fully traced out, we have

\begin{equation}
\psi_l = \mathrm{Tr}_{T} \left[ U \left(\hat{\phi}_{LR} \otimes \ketbra{0^{n-k}}\right) U^{\dagger} \right] \otimes \frac{\mathbb{I}_T}{2^{|T|}},
\end{equation}
for some subset \( T \subseteq S \) such that \( L_{i,d} \subseteq T \) for some \( i \in L \). Crucially, when the light cone \( L_{i,d} \) is traced out, the unitaries applied within \( L_{i,d} \) are canceled. Therefore, we can replace the initial state \( \hat{\phi}_{LR} \) with \( \hat{\phi}_{L - L_{i,0}, R - R_{i,0}} \otimes \frac{\mathbb{I}_{R_{i,0}}}{2} \otimes \frac{\mathbb{I}_{L_{i,0}}}{2} \). Specifically,

\begin{equation}
\tr_{L_{i,d}} \left[ U \left(\hat{\phi}_{LR} \otimes \ketbra{0^{n-k}}\right) U^{\dagger} \right] = \tr_{L_{i,d}} \left[ U \left(\hat{\phi}_{L - L_{i,0}, R - R_{i,0}} \otimes \frac{\mathbb{I}_{L_{i,0}}}{2} \otimes \ketbra{0^{n-k}} \right) U^{\dagger} \right] \otimes \frac{\mathbb{I}_{R_{i,0}}}{2}.
\end{equation}
Thus, the state $\psi_l$ in Eq.~\eqref{eq:square_Choi_fidelity} can be written as
\begin{equation}
\begin{split}
\psi_l = \hat{\psi}_{l} \otimes \frac{\bI_{T}}{2^{|T|}} \otimes \frac{\bI_{R_0}}{2}.
\end{split}
\end{equation}
where $\hat{\psi}_{l} = \tr_{T \cup R_{i,0}}(\psi_l)$. We then have
\begin{equation}
\bra{\hat{\phi}_{LR}} \cD_{S \rightarrow L}  \left(\hat{\psi}_l \otimes \frac{\bI_{T}}{2^{|T|}} \right)  \otimes \frac{\bI_{R_0}}{2} \ket{\hat{\phi}_{LR}}
= \max_{\cD} \frac{1}{4} \bra{\hat{\phi}_{L-L_{i,0},R-R_{i,0}}} \cD_{S \rightarrow L}  \left(\hat{\psi} \otimes \frac{\bI_{T}}{2^{|T|}} \right) \ket{\hat{\phi}_{L-L_{i,0},R-R_{i,0}}} \le \frac{1}{4}.
\end{equation}
Therefore, the Choi fidelity in Eq.~\eqref{eq:square_Choi_fidelity} satisfies
\begin{equation}
\begin{split}
F^2_{\mathrm{Choi}} &\le \max_{\mathcal{D}} \left( \sum_l p_l \langle \hat{\phi}_{LR} | \mathcal{D}_{S \rightarrow L}(\psi_l) | \hat{\phi}_{LR} \rangle + p_{\mathrm{None}} \right) \\
&\le \sum_l \frac{p_l}{4} + p_{\mathrm{None}} = \frac{1}{4} + \frac{3}{4} p_{\mathrm{None}} \\
&\le \frac{1}{4} + \frac{3}{4} \exp\left(-|J| \cdot p^{M}\right),
\end{split}
\end{equation}
where the last inequality follows from Eq.~\eqref{eq:pNone}. Recall that \( 1 - \epsilon_{\mathrm{Choi}}^2 = F^2_{\mathrm{Choi}} \). For \( \epsilon_{\mathrm{Choi}} < 0.1 \), we have \( \exp\left(-|J| \cdot p^{M}\right) > 0.98 \). In this regime, using the inequality \( e^{-x} \le 1 - \frac{x}{2} \), we obtain

\begin{equation}\label{eq:ChoiErrorLowerboundJ}
\begin{split}
F^2_{\mathrm{Choi}} &\le \frac{1}{4} + \frac{3}{4} \left(1 - \frac{|J|}{2} p^{M}\right) = 1 - \frac{3 |J|}{8} p^{M}, \\
\epsilon^2_{\mathrm{Choi}} &\ge \frac{3 |J|}{8} p^{M}.
\end{split}
\end{equation}
Taking the logarithm of both sides, we obtain
\begin{equation}
M \log\left(\frac{1}{p}\right) - \log(|J|) \ge \log\left(\frac{3}{8 \epsilon^2_{\mathrm{Choi}}}\right).
\end{equation}
According to Lemma \ref{lem:disjoint}, \( |J| \ge \frac{k}{M^2} \). Therefore, we have
\begin{equation}
M \log\left(\frac{1}{p}\right) + 2 \log M \ge \log\left(\frac{3}{8 \epsilon^2_{\mathrm{Choi}}}\right) + \log k.
\end{equation}

For a \( D \)-dimensional quantum circuit, the size of the light cones can be upper bounded by \( M \leq (2d)^D \), since the length of each edge of the light-cone hypercube is upper-bounded by \( 2d \). Substituting this bound into the previous inequality, we obtain

\begin{equation}
(2d)^D \log\left(\frac{1}{p}\right) + 2D \log(2d) \geq \log\left(\frac{3}{8 \epsilon_{\mathrm{Choi}}^2}\right) + \log k.
\end{equation}

In the case of an all-to-all circuit, the size of the light cones is upper bounded by \( M \leq 2^d \). Therefore, the inequality becomes

\begin{equation}
2^d \log\left(\frac{1}{p}\right) + 2d \geq \log\left(\frac{3}{8 \epsilon_{\mathrm{Choi}}^2}\right) + \log k.
\end{equation}

\end{proof}


\bibliography{bibShortQEC}

\begin{thebibliography}{105}%
\makeatletter
\providecommand \@ifxundefined [1]{%
 \@ifx{#1\undefined}
}%
\providecommand \@ifnum [1]{%
 \ifnum #1\expandafter \@firstoftwo
 \else \expandafter \@secondoftwo
 \fi
}%
\providecommand \@ifx [1]{%
 \ifx #1\expandafter \@firstoftwo
 \else \expandafter \@secondoftwo
 \fi
}%
\providecommand \natexlab [1]{#1}%
\providecommand \enquote  [1]{``#1''}%
\providecommand \bibnamefont  [1]{#1}%
\providecommand \bibfnamefont [1]{#1}%
\providecommand \citenamefont [1]{#1}%
\providecommand \href@noop [0]{\@secondoftwo}%
\providecommand \href [0]{\begingroup \@sanitize@url \@href}%
\providecommand \@href[1]{\@@startlink{#1}\@@href}%
\providecommand \@@href[1]{\endgroup#1\@@endlink}%
\providecommand \@sanitize@url [0]{\catcode `\\12\catcode `\$12\catcode
  `\&12\catcode `\#12\catcode `\^12\catcode `\_12\catcode `\%12\relax}%
\providecommand \@@startlink[1]{}%
\providecommand \@@endlink[0]{}%
\providecommand \url  [0]{\begingroup\@sanitize@url \@url }%
\providecommand \@url [1]{\endgroup\@href {#1}{\urlprefix }}%
\providecommand \urlprefix  [0]{URL }%
\providecommand \Eprint [0]{\href }%
\providecommand \doibase [0]{https://doi.org/}%
\providecommand \selectlanguage [0]{\@gobble}%
\providecommand \bibinfo  [0]{\@secondoftwo}%
\providecommand \bibfield  [0]{\@secondoftwo}%
\providecommand \translation [1]{[#1]}%
\providecommand \BibitemOpen [0]{}%
\providecommand \bibitemStop [0]{}%
\providecommand \bibitemNoStop [0]{.\EOS\space}%
\providecommand \EOS [0]{\spacefactor3000\relax}%
\providecommand \BibitemShut  [1]{\csname bibitem#1\endcsname}%
\let\auto@bib@innerbib\@empty
\bibitem [{\citenamefont {Egan}\ \emph {et~al.}(2021)\citenamefont {Egan},
  \citenamefont {Debroy}, \citenamefont {Noel}, \citenamefont {Risinger},
  \citenamefont {Zhu}, \citenamefont {Biswas}, \citenamefont {Newman},
  \citenamefont {Li}, \citenamefont {Brown}, \citenamefont {Cetina},\ and\
  \citenamefont {Monroe}}]{Egan2021corrected}%
  \BibitemOpen
  \bibfield  {author} {\bibinfo {author} {\bibfnamefont {L.}~\bibnamefont
  {Egan}}, \bibinfo {author} {\bibfnamefont {D.~M.}\ \bibnamefont {Debroy}},
  \bibinfo {author} {\bibfnamefont {C.}~\bibnamefont {Noel}}, \bibinfo {author}
  {\bibfnamefont {A.}~\bibnamefont {Risinger}}, \bibinfo {author}
  {\bibfnamefont {D.}~\bibnamefont {Zhu}}, \bibinfo {author} {\bibfnamefont
  {D.}~\bibnamefont {Biswas}}, \bibinfo {author} {\bibfnamefont
  {M.}~\bibnamefont {Newman}}, \bibinfo {author} {\bibfnamefont
  {M.}~\bibnamefont {Li}}, \bibinfo {author} {\bibfnamefont {K.~R.}\
  \bibnamefont {Brown}}, \bibinfo {author} {\bibfnamefont {M.}~\bibnamefont
  {Cetina}},\ and\ \bibinfo {author} {\bibfnamefont {C.}~\bibnamefont
  {Monroe}},\ }\bibfield  {title} {\bibinfo {title} {Fault-tolerant control of
  an error-corrected qubit},\ }\href
  {https://doi.org/10.1038/s41586-021-03928-y} {\bibfield  {journal} {\bibinfo
  {journal} {Nature}\ }\textbf {\bibinfo {volume} {598}},\ \bibinfo {pages}
  {281} (\bibinfo {year} {2021})}\BibitemShut {NoStop}%
\bibitem [{\citenamefont {Gong}\ \emph {et~al.}(2021)\citenamefont {Gong},
  \citenamefont {Yuan}, \citenamefont {Wang}, \citenamefont {Wu}, \citenamefont
  {Zhao}, \citenamefont {Zha}, \citenamefont {Li}, \citenamefont {Zhang},
  \citenamefont {Zhao}, \citenamefont {Liu}, \citenamefont {Liang},
  \citenamefont {Lin}, \citenamefont {Xu}, \citenamefont {Deng}, \citenamefont
  {Rong}, \citenamefont {Lu}, \citenamefont {Benjamin}, \citenamefont {Peng},
  \citenamefont {Ma}, \citenamefont {Chen}, \citenamefont {Zhu},\ and\
  \citenamefont {Pan}}]{Gong2021correcting}%
  \BibitemOpen
  \bibfield  {author} {\bibinfo {author} {\bibfnamefont {M.}~\bibnamefont
  {Gong}}, \bibinfo {author} {\bibfnamefont {X.}~\bibnamefont {Yuan}}, \bibinfo
  {author} {\bibfnamefont {S.}~\bibnamefont {Wang}}, \bibinfo {author}
  {\bibfnamefont {Y.}~\bibnamefont {Wu}}, \bibinfo {author} {\bibfnamefont
  {Y.}~\bibnamefont {Zhao}}, \bibinfo {author} {\bibfnamefont {C.}~\bibnamefont
  {Zha}}, \bibinfo {author} {\bibfnamefont {S.}~\bibnamefont {Li}}, \bibinfo
  {author} {\bibfnamefont {Z.}~\bibnamefont {Zhang}}, \bibinfo {author}
  {\bibfnamefont {Q.}~\bibnamefont {Zhao}}, \bibinfo {author} {\bibfnamefont
  {Y.}~\bibnamefont {Liu}}, \bibinfo {author} {\bibfnamefont {F.}~\bibnamefont
  {Liang}}, \bibinfo {author} {\bibfnamefont {J.}~\bibnamefont {Lin}}, \bibinfo
  {author} {\bibfnamefont {Y.}~\bibnamefont {Xu}}, \bibinfo {author}
  {\bibfnamefont {H.}~\bibnamefont {Deng}}, \bibinfo {author} {\bibfnamefont
  {H.}~\bibnamefont {Rong}}, \bibinfo {author} {\bibfnamefont {H.}~\bibnamefont
  {Lu}}, \bibinfo {author} {\bibfnamefont {S.~C.}\ \bibnamefont {Benjamin}},
  \bibinfo {author} {\bibfnamefont {C.-Z.}\ \bibnamefont {Peng}}, \bibinfo
  {author} {\bibfnamefont {X.}~\bibnamefont {Ma}}, \bibinfo {author}
  {\bibfnamefont {Y.-A.}\ \bibnamefont {Chen}}, \bibinfo {author}
  {\bibfnamefont {X.}~\bibnamefont {Zhu}},\ and\ \bibinfo {author}
  {\bibfnamefont {J.-W.}\ \bibnamefont {Pan}},\ }\bibfield  {title} {\bibinfo
  {title} {{Experimental exploration of five-qubit quantum error-correcting
  code with superconducting qubits}},\ }\href
  {https://doi.org/10.1093/nsr/nwab011} {\bibfield  {journal} {\bibinfo
  {journal} {National Science Review}\ }\textbf {\bibinfo {volume} {9}},\
  \bibinfo {pages} {nwab011} (\bibinfo {year} {2021})}\BibitemShut {NoStop}%
\bibitem [{\citenamefont {Postler}\ \emph {et~al.}(2022)\citenamefont
  {Postler}, \citenamefont {Heu$\beta$en}, \citenamefont {Pogorelov},
  \citenamefont {Rispler}, \citenamefont {Feldker}, \citenamefont {Meth},
  \citenamefont {Marciniak}, \citenamefont {Stricker}, \citenamefont
  {Ringbauer}, \citenamefont {Blatt}, \citenamefont {Schindler}, \citenamefont
  {M{\"u}ller},\ and\ \citenamefont {Monz}}]{Postler2022tolerant}%
  \BibitemOpen
  \bibfield  {author} {\bibinfo {author} {\bibfnamefont {L.}~\bibnamefont
  {Postler}}, \bibinfo {author} {\bibfnamefont {S.}~\bibnamefont
  {Heu$\beta$en}}, \bibinfo {author} {\bibfnamefont {I.}~\bibnamefont
  {Pogorelov}}, \bibinfo {author} {\bibfnamefont {M.}~\bibnamefont {Rispler}},
  \bibinfo {author} {\bibfnamefont {T.}~\bibnamefont {Feldker}}, \bibinfo
  {author} {\bibfnamefont {M.}~\bibnamefont {Meth}}, \bibinfo {author}
  {\bibfnamefont {C.~D.}\ \bibnamefont {Marciniak}}, \bibinfo {author}
  {\bibfnamefont {R.}~\bibnamefont {Stricker}}, \bibinfo {author}
  {\bibfnamefont {M.}~\bibnamefont {Ringbauer}}, \bibinfo {author}
  {\bibfnamefont {R.}~\bibnamefont {Blatt}}, \bibinfo {author} {\bibfnamefont
  {P.}~\bibnamefont {Schindler}}, \bibinfo {author} {\bibfnamefont
  {M.}~\bibnamefont {M{\"u}ller}},\ and\ \bibinfo {author} {\bibfnamefont
  {T.}~\bibnamefont {Monz}},\ }\bibfield  {title} {\bibinfo {title}
  {Demonstration of fault-tolerant universal quantum gate operations},\ }\href
  {https://doi.org/10.1038/s41586-022-04721-1} {\bibfield  {journal} {\bibinfo
  {journal} {Nature}\ }\textbf {\bibinfo {volume} {605}},\ \bibinfo {pages}
  {675} (\bibinfo {year} {2022})}\BibitemShut {NoStop}%
\bibitem [{\citenamefont {Leverrier}\ and\ \citenamefont
  {Zémor}(2022)}]{Leverrier2022QuantumTannerCodes}%
  \BibitemOpen
  \bibfield  {author} {\bibinfo {author} {\bibfnamefont {A.}~\bibnamefont
  {Leverrier}}\ and\ \bibinfo {author} {\bibfnamefont {G.}~\bibnamefont
  {Zémor}},\ }\bibfield  {title} {\bibinfo {title} {Quantum tanner codes},\
  }in\ \href {https://doi.org/10.1109/FOCS54457.2022.00117} {\emph {\bibinfo
  {booktitle} {2022 IEEE 63rd Annual Symposium on Foundations of Computer
  Science (FOCS)}}}\ (\bibinfo {year} {2022})\ pp.\ \bibinfo {pages}
  {872--883}\BibitemShut {NoStop}%
\bibitem [{\citenamefont {Zhao}\ \emph {et~al.}(2022)\citenamefont {Zhao},
  \citenamefont {Ye}, \citenamefont {Huang}, \citenamefont {Zhang},
  \citenamefont {Wu}, \citenamefont {Guan}, \citenamefont {Zhu}, \citenamefont
  {Wei}, \citenamefont {He}, \citenamefont {Cao}, \citenamefont {Chen},
  \citenamefont {Chung}, \citenamefont {Deng}, \citenamefont {Fan},
  \citenamefont {Gong}, \citenamefont {Guo}, \citenamefont {Guo}, \citenamefont
  {Han}, \citenamefont {Li}, \citenamefont {Li}, \citenamefont {Li},
  \citenamefont {Liang}, \citenamefont {Lin}, \citenamefont {Qian},
  \citenamefont {Rong}, \citenamefont {Su}, \citenamefont {Sun}, \citenamefont
  {Wang}, \citenamefont {Wu}, \citenamefont {Xu}, \citenamefont {Ying},
  \citenamefont {Yu}, \citenamefont {Zha}, \citenamefont {Zhang}, \citenamefont
  {Huo}, \citenamefont {Lu}, \citenamefont {Peng}, \citenamefont {Zhu},\ and\
  \citenamefont {Pan}}]{Zhao2022Correcting}%
  \BibitemOpen
  \bibfield  {author} {\bibinfo {author} {\bibfnamefont {Y.}~\bibnamefont
  {Zhao}}, \bibinfo {author} {\bibfnamefont {Y.}~\bibnamefont {Ye}}, \bibinfo
  {author} {\bibfnamefont {H.-L.}\ \bibnamefont {Huang}}, \bibinfo {author}
  {\bibfnamefont {Y.}~\bibnamefont {Zhang}}, \bibinfo {author} {\bibfnamefont
  {D.}~\bibnamefont {Wu}}, \bibinfo {author} {\bibfnamefont {H.}~\bibnamefont
  {Guan}}, \bibinfo {author} {\bibfnamefont {Q.}~\bibnamefont {Zhu}}, \bibinfo
  {author} {\bibfnamefont {Z.}~\bibnamefont {Wei}}, \bibinfo {author}
  {\bibfnamefont {T.}~\bibnamefont {He}}, \bibinfo {author} {\bibfnamefont
  {S.}~\bibnamefont {Cao}}, \bibinfo {author} {\bibfnamefont {F.}~\bibnamefont
  {Chen}}, \bibinfo {author} {\bibfnamefont {T.-H.}\ \bibnamefont {Chung}},
  \bibinfo {author} {\bibfnamefont {H.}~\bibnamefont {Deng}}, \bibinfo {author}
  {\bibfnamefont {D.}~\bibnamefont {Fan}}, \bibinfo {author} {\bibfnamefont
  {M.}~\bibnamefont {Gong}}, \bibinfo {author} {\bibfnamefont {C.}~\bibnamefont
  {Guo}}, \bibinfo {author} {\bibfnamefont {S.}~\bibnamefont {Guo}}, \bibinfo
  {author} {\bibfnamefont {L.}~\bibnamefont {Han}}, \bibinfo {author}
  {\bibfnamefont {N.}~\bibnamefont {Li}}, \bibinfo {author} {\bibfnamefont
  {S.}~\bibnamefont {Li}}, \bibinfo {author} {\bibfnamefont {Y.}~\bibnamefont
  {Li}}, \bibinfo {author} {\bibfnamefont {F.}~\bibnamefont {Liang}}, \bibinfo
  {author} {\bibfnamefont {J.}~\bibnamefont {Lin}}, \bibinfo {author}
  {\bibfnamefont {H.}~\bibnamefont {Qian}}, \bibinfo {author} {\bibfnamefont
  {H.}~\bibnamefont {Rong}}, \bibinfo {author} {\bibfnamefont {H.}~\bibnamefont
  {Su}}, \bibinfo {author} {\bibfnamefont {L.}~\bibnamefont {Sun}}, \bibinfo
  {author} {\bibfnamefont {S.}~\bibnamefont {Wang}}, \bibinfo {author}
  {\bibfnamefont {Y.}~\bibnamefont {Wu}}, \bibinfo {author} {\bibfnamefont
  {Y.}~\bibnamefont {Xu}}, \bibinfo {author} {\bibfnamefont {C.}~\bibnamefont
  {Ying}}, \bibinfo {author} {\bibfnamefont {J.}~\bibnamefont {Yu}}, \bibinfo
  {author} {\bibfnamefont {C.}~\bibnamefont {Zha}}, \bibinfo {author}
  {\bibfnamefont {K.}~\bibnamefont {Zhang}}, \bibinfo {author} {\bibfnamefont
  {Y.-H.}\ \bibnamefont {Huo}}, \bibinfo {author} {\bibfnamefont {C.-Y.}\
  \bibnamefont {Lu}}, \bibinfo {author} {\bibfnamefont {C.-Z.}\ \bibnamefont
  {Peng}}, \bibinfo {author} {\bibfnamefont {X.}~\bibnamefont {Zhu}},\ and\
  \bibinfo {author} {\bibfnamefont {J.-W.}\ \bibnamefont {Pan}},\ }\bibfield
  {title} {\bibinfo {title} {Realization of an error-correcting surface code
  with superconducting qubits},\ }\href
  {https://doi.org/10.1103/PhysRevLett.129.030501} {\bibfield  {journal}
  {\bibinfo  {journal} {Phys. Rev. Lett.}\ }\textbf {\bibinfo {volume} {129}},\
  \bibinfo {pages} {030501} (\bibinfo {year} {2022})}\BibitemShut {NoStop}%
\bibitem [{\citenamefont {Acharya}\ \emph {et~al.}(2023)\citenamefont
  {Acharya}, \citenamefont {Aleiner}, \citenamefont {Allen}, \citenamefont
  {Andersen}, \citenamefont {Ansmann}, \citenamefont {Arute}, \citenamefont
  {Arya}, \citenamefont {Asfaw}, \citenamefont {Atalaya}, \citenamefont
  {Babbush}, \citenamefont {Bacon}, \citenamefont {Bardin}, \citenamefont
  {Basso}, \citenamefont {Bengtsson}, \citenamefont {Boixo}, \citenamefont
  {Bortoli}, \citenamefont {Bourassa}, \citenamefont {Bovaird}, \citenamefont
  {Brill}, \citenamefont {Broughton}, \citenamefont {Buckley}, \citenamefont
  {Buell}, \citenamefont {Burger}, \citenamefont {Burkett}, \citenamefont
  {Bushnell}, \citenamefont {Chen}, \citenamefont {Chen}, \citenamefont
  {Chiaro}, \citenamefont {Cogan}, \citenamefont {Collins}, \citenamefont
  {Conner}, \citenamefont {Courtney}, \citenamefont {Crook}, \citenamefont
  {Curtin}, \citenamefont {Debroy}, \citenamefont {Del Toro~Barba},
  \citenamefont {Demura}, \citenamefont {Dunsworth}, \citenamefont {Eppens},
  \citenamefont {Erickson}, \citenamefont {Faoro}, \citenamefont {Farhi},
  \citenamefont {Fatemi}, \citenamefont {Flores~Burgos}, \citenamefont
  {Forati}, \citenamefont {Fowler}, \citenamefont {Foxen}, \citenamefont
  {Giang}, \citenamefont {Gidney}, \citenamefont {Gilboa}, \citenamefont
  {Giustina}, \citenamefont {Grajales~Dau}, \citenamefont {Gross},
  \citenamefont {Habegger}, \citenamefont {Hamilton}, \citenamefont {Harrigan},
  \citenamefont {Harrington}, \citenamefont {Higgott}, \citenamefont {Hilton},
  \citenamefont {Hoffmann}, \citenamefont {Hong}, \citenamefont {Huang},
  \citenamefont {Huff}, \citenamefont {Huggins}, \citenamefont {Ioffe},
  \citenamefont {Isakov}, \citenamefont {Iveland}, \citenamefont {Jeffrey},
  \citenamefont {Jiang}, \citenamefont {Jones}, \citenamefont {Juhas},
  \citenamefont {Kafri}, \citenamefont {Kechedzhi}, \citenamefont {Kelly},
  \citenamefont {Khattar}, \citenamefont {Khezri}, \citenamefont
  {Kieferov{\'a}}, \citenamefont {Kim}, \citenamefont {Kitaev}, \citenamefont
  {Klimov}, \citenamefont {Klots}, \citenamefont {Korotkov}, \citenamefont
  {Kostritsa}, \citenamefont {Kreikebaum}, \citenamefont {Landhuis},
  \citenamefont {Laptev}, \citenamefont {Lau}, \citenamefont {Laws},
  \citenamefont {Lee}, \citenamefont {Lee}, \citenamefont {Lester},
  \citenamefont {Lill}, \citenamefont {Liu}, \citenamefont {Locharla},
  \citenamefont {Lucero}, \citenamefont {Malone}, \citenamefont {Marshall},
  \citenamefont {Martin}, \citenamefont {McClean}, \citenamefont {McCourt},
  \citenamefont {McEwen}, \citenamefont {Megrant}, \citenamefont
  {Meurer~Costa}, \citenamefont {Mi}, \citenamefont {Miao}, \citenamefont
  {Mohseni}, \citenamefont {Montazeri}, \citenamefont {Morvan}, \citenamefont
  {Mount}, \citenamefont {Mruczkiewicz}, \citenamefont {Naaman}, \citenamefont
  {Neeley}, \citenamefont {Neill}, \citenamefont {Nersisyan}, \citenamefont
  {Neven}, \citenamefont {Newman}, \citenamefont {Ng}, \citenamefont {Nguyen},
  \citenamefont {Nguyen}, \citenamefont {Niu}, \citenamefont {O'Brien},
  \citenamefont {Opremcak}, \citenamefont {Platt}, \citenamefont {Petukhov},
  \citenamefont {Potter}, \citenamefont {Pryadko}, \citenamefont {Quintana},
  \citenamefont {Roushan}, \citenamefont {Rubin}, \citenamefont {Saei},
  \citenamefont {Sank}, \citenamefont {Sankaragomathi}, \citenamefont
  {Satzinger}, \citenamefont {Schurkus}, \citenamefont {Schuster},
  \citenamefont {Shearn}, \citenamefont {Shorter}, \citenamefont {Shvarts},
  \citenamefont {Skruzny}, \citenamefont {Smelyanskiy}, \citenamefont {Smith},
  \citenamefont {Sterling}, \citenamefont {Strain}, \citenamefont {Szalay},
  \citenamefont {Torres}, \citenamefont {Vidal}, \citenamefont {Villalonga},
  \citenamefont {Vollgraff~Heidweiller}, \citenamefont {White}, \citenamefont
  {Xing}, \citenamefont {Yao}, \citenamefont {Yeh}, \citenamefont {Yoo},
  \citenamefont {Young}, \citenamefont {Zalcman}, \citenamefont {Zhang},
  \citenamefont {Zhu},\ and\ \citenamefont {AI}}]{Acharya2023Suppressing}%
  \BibitemOpen
  \bibfield  {author} {\bibinfo {author} {\bibfnamefont {R.}~\bibnamefont
  {Acharya}}, \bibinfo {author} {\bibfnamefont {I.}~\bibnamefont {Aleiner}},
  \bibinfo {author} {\bibfnamefont {R.}~\bibnamefont {Allen}}, \bibinfo
  {author} {\bibfnamefont {T.~I.}\ \bibnamefont {Andersen}}, \bibinfo {author}
  {\bibfnamefont {M.}~\bibnamefont {Ansmann}}, \bibinfo {author} {\bibfnamefont
  {F.}~\bibnamefont {Arute}}, \bibinfo {author} {\bibfnamefont
  {K.}~\bibnamefont {Arya}}, \bibinfo {author} {\bibfnamefont {A.}~\bibnamefont
  {Asfaw}}, \bibinfo {author} {\bibfnamefont {J.}~\bibnamefont {Atalaya}},
  \bibinfo {author} {\bibfnamefont {R.}~\bibnamefont {Babbush}}, \bibinfo
  {author} {\bibfnamefont {D.}~\bibnamefont {Bacon}}, \bibinfo {author}
  {\bibfnamefont {J.~C.}\ \bibnamefont {Bardin}}, \bibinfo {author}
  {\bibfnamefont {J.}~\bibnamefont {Basso}}, \bibinfo {author} {\bibfnamefont
  {A.}~\bibnamefont {Bengtsson}}, \bibinfo {author} {\bibfnamefont
  {S.}~\bibnamefont {Boixo}}, \bibinfo {author} {\bibfnamefont
  {G.}~\bibnamefont {Bortoli}}, \bibinfo {author} {\bibfnamefont
  {A.}~\bibnamefont {Bourassa}}, \bibinfo {author} {\bibfnamefont
  {J.}~\bibnamefont {Bovaird}}, \bibinfo {author} {\bibfnamefont
  {L.}~\bibnamefont {Brill}}, \bibinfo {author} {\bibfnamefont
  {M.}~\bibnamefont {Broughton}}, \bibinfo {author} {\bibfnamefont {B.~B.}\
  \bibnamefont {Buckley}}, \bibinfo {author} {\bibfnamefont {D.~A.}\
  \bibnamefont {Buell}}, \bibinfo {author} {\bibfnamefont {T.}~\bibnamefont
  {Burger}}, \bibinfo {author} {\bibfnamefont {B.}~\bibnamefont {Burkett}},
  \bibinfo {author} {\bibfnamefont {N.}~\bibnamefont {Bushnell}}, \bibinfo
  {author} {\bibfnamefont {Y.}~\bibnamefont {Chen}}, \bibinfo {author}
  {\bibfnamefont {Z.}~\bibnamefont {Chen}}, \bibinfo {author} {\bibfnamefont
  {B.}~\bibnamefont {Chiaro}}, \bibinfo {author} {\bibfnamefont
  {J.}~\bibnamefont {Cogan}}, \bibinfo {author} {\bibfnamefont
  {R.}~\bibnamefont {Collins}}, \bibinfo {author} {\bibfnamefont
  {P.}~\bibnamefont {Conner}}, \bibinfo {author} {\bibfnamefont
  {W.}~\bibnamefont {Courtney}}, \bibinfo {author} {\bibfnamefont {A.~L.}\
  \bibnamefont {Crook}}, \bibinfo {author} {\bibfnamefont {B.}~\bibnamefont
  {Curtin}}, \bibinfo {author} {\bibfnamefont {D.~M.}\ \bibnamefont {Debroy}},
  \bibinfo {author} {\bibfnamefont {A.}~\bibnamefont {Del Toro~Barba}},
  \bibinfo {author} {\bibfnamefont {S.}~\bibnamefont {Demura}}, \bibinfo
  {author} {\bibfnamefont {A.}~\bibnamefont {Dunsworth}}, \bibinfo {author}
  {\bibfnamefont {D.}~\bibnamefont {Eppens}}, \bibinfo {author} {\bibfnamefont
  {C.}~\bibnamefont {Erickson}}, \bibinfo {author} {\bibfnamefont
  {L.}~\bibnamefont {Faoro}}, \bibinfo {author} {\bibfnamefont
  {E.}~\bibnamefont {Farhi}}, \bibinfo {author} {\bibfnamefont
  {R.}~\bibnamefont {Fatemi}}, \bibinfo {author} {\bibfnamefont
  {L.}~\bibnamefont {Flores~Burgos}}, \bibinfo {author} {\bibfnamefont
  {E.}~\bibnamefont {Forati}}, \bibinfo {author} {\bibfnamefont {A.~G.}\
  \bibnamefont {Fowler}}, \bibinfo {author} {\bibfnamefont {B.}~\bibnamefont
  {Foxen}}, \bibinfo {author} {\bibfnamefont {W.}~\bibnamefont {Giang}},
  \bibinfo {author} {\bibfnamefont {C.}~\bibnamefont {Gidney}}, \bibinfo
  {author} {\bibfnamefont {D.}~\bibnamefont {Gilboa}}, \bibinfo {author}
  {\bibfnamefont {M.}~\bibnamefont {Giustina}}, \bibinfo {author}
  {\bibfnamefont {A.}~\bibnamefont {Grajales~Dau}}, \bibinfo {author}
  {\bibfnamefont {J.~A.}\ \bibnamefont {Gross}}, \bibinfo {author}
  {\bibfnamefont {S.}~\bibnamefont {Habegger}}, \bibinfo {author}
  {\bibfnamefont {M.~C.}\ \bibnamefont {Hamilton}}, \bibinfo {author}
  {\bibfnamefont {M.~P.}\ \bibnamefont {Harrigan}}, \bibinfo {author}
  {\bibfnamefont {S.~D.}\ \bibnamefont {Harrington}}, \bibinfo {author}
  {\bibfnamefont {O.}~\bibnamefont {Higgott}}, \bibinfo {author} {\bibfnamefont
  {J.}~\bibnamefont {Hilton}}, \bibinfo {author} {\bibfnamefont
  {M.}~\bibnamefont {Hoffmann}}, \bibinfo {author} {\bibfnamefont
  {S.}~\bibnamefont {Hong}}, \bibinfo {author} {\bibfnamefont {T.}~\bibnamefont
  {Huang}}, \bibinfo {author} {\bibfnamefont {A.}~\bibnamefont {Huff}},
  \bibinfo {author} {\bibfnamefont {W.~J.}\ \bibnamefont {Huggins}}, \bibinfo
  {author} {\bibfnamefont {L.~B.}\ \bibnamefont {Ioffe}}, \bibinfo {author}
  {\bibfnamefont {S.~V.}\ \bibnamefont {Isakov}}, \bibinfo {author}
  {\bibfnamefont {J.}~\bibnamefont {Iveland}}, \bibinfo {author} {\bibfnamefont
  {E.}~\bibnamefont {Jeffrey}}, \bibinfo {author} {\bibfnamefont
  {Z.}~\bibnamefont {Jiang}}, \bibinfo {author} {\bibfnamefont
  {C.}~\bibnamefont {Jones}}, \bibinfo {author} {\bibfnamefont
  {P.}~\bibnamefont {Juhas}}, \bibinfo {author} {\bibfnamefont
  {D.}~\bibnamefont {Kafri}}, \bibinfo {author} {\bibfnamefont
  {K.}~\bibnamefont {Kechedzhi}}, \bibinfo {author} {\bibfnamefont
  {J.}~\bibnamefont {Kelly}}, \bibinfo {author} {\bibfnamefont
  {T.}~\bibnamefont {Khattar}}, \bibinfo {author} {\bibfnamefont
  {M.}~\bibnamefont {Khezri}}, \bibinfo {author} {\bibfnamefont
  {M.}~\bibnamefont {Kieferov{\'a}}}, \bibinfo {author} {\bibfnamefont
  {S.}~\bibnamefont {Kim}}, \bibinfo {author} {\bibfnamefont {A.}~\bibnamefont
  {Kitaev}}, \bibinfo {author} {\bibfnamefont {P.~V.}\ \bibnamefont {Klimov}},
  \bibinfo {author} {\bibfnamefont {A.~R.}\ \bibnamefont {Klots}}, \bibinfo
  {author} {\bibfnamefont {A.~N.}\ \bibnamefont {Korotkov}}, \bibinfo {author}
  {\bibfnamefont {F.}~\bibnamefont {Kostritsa}}, \bibinfo {author}
  {\bibfnamefont {J.~M.}\ \bibnamefont {Kreikebaum}}, \bibinfo {author}
  {\bibfnamefont {D.}~\bibnamefont {Landhuis}}, \bibinfo {author}
  {\bibfnamefont {P.}~\bibnamefont {Laptev}}, \bibinfo {author} {\bibfnamefont
  {K.-M.}\ \bibnamefont {Lau}}, \bibinfo {author} {\bibfnamefont
  {L.}~\bibnamefont {Laws}}, \bibinfo {author} {\bibfnamefont {J.}~\bibnamefont
  {Lee}}, \bibinfo {author} {\bibfnamefont {K.}~\bibnamefont {Lee}}, \bibinfo
  {author} {\bibfnamefont {B.~J.}\ \bibnamefont {Lester}}, \bibinfo {author}
  {\bibfnamefont {A.}~\bibnamefont {Lill}}, \bibinfo {author} {\bibfnamefont
  {W.}~\bibnamefont {Liu}}, \bibinfo {author} {\bibfnamefont {A.}~\bibnamefont
  {Locharla}}, \bibinfo {author} {\bibfnamefont {E.}~\bibnamefont {Lucero}},
  \bibinfo {author} {\bibfnamefont {F.~D.}\ \bibnamefont {Malone}}, \bibinfo
  {author} {\bibfnamefont {J.}~\bibnamefont {Marshall}}, \bibinfo {author}
  {\bibfnamefont {O.}~\bibnamefont {Martin}}, \bibinfo {author} {\bibfnamefont
  {J.~R.}\ \bibnamefont {McClean}}, \bibinfo {author} {\bibfnamefont
  {T.}~\bibnamefont {McCourt}}, \bibinfo {author} {\bibfnamefont
  {M.}~\bibnamefont {McEwen}}, \bibinfo {author} {\bibfnamefont
  {A.}~\bibnamefont {Megrant}}, \bibinfo {author} {\bibfnamefont
  {B.}~\bibnamefont {Meurer~Costa}}, \bibinfo {author} {\bibfnamefont
  {X.}~\bibnamefont {Mi}}, \bibinfo {author} {\bibfnamefont {K.~C.}\
  \bibnamefont {Miao}}, \bibinfo {author} {\bibfnamefont {M.}~\bibnamefont
  {Mohseni}}, \bibinfo {author} {\bibfnamefont {S.}~\bibnamefont {Montazeri}},
  \bibinfo {author} {\bibfnamefont {A.}~\bibnamefont {Morvan}}, \bibinfo
  {author} {\bibfnamefont {E.}~\bibnamefont {Mount}}, \bibinfo {author}
  {\bibfnamefont {W.}~\bibnamefont {Mruczkiewicz}}, \bibinfo {author}
  {\bibfnamefont {O.}~\bibnamefont {Naaman}}, \bibinfo {author} {\bibfnamefont
  {M.}~\bibnamefont {Neeley}}, \bibinfo {author} {\bibfnamefont
  {C.}~\bibnamefont {Neill}}, \bibinfo {author} {\bibfnamefont
  {A.}~\bibnamefont {Nersisyan}}, \bibinfo {author} {\bibfnamefont
  {H.}~\bibnamefont {Neven}}, \bibinfo {author} {\bibfnamefont
  {M.}~\bibnamefont {Newman}}, \bibinfo {author} {\bibfnamefont {J.~H.}\
  \bibnamefont {Ng}}, \bibinfo {author} {\bibfnamefont {A.}~\bibnamefont
  {Nguyen}}, \bibinfo {author} {\bibfnamefont {M.}~\bibnamefont {Nguyen}},
  \bibinfo {author} {\bibfnamefont {M.~Y.}\ \bibnamefont {Niu}}, \bibinfo
  {author} {\bibfnamefont {T.~E.}\ \bibnamefont {O'Brien}}, \bibinfo {author}
  {\bibfnamefont {A.}~\bibnamefont {Opremcak}}, \bibinfo {author}
  {\bibfnamefont {J.}~\bibnamefont {Platt}}, \bibinfo {author} {\bibfnamefont
  {A.}~\bibnamefont {Petukhov}}, \bibinfo {author} {\bibfnamefont
  {R.}~\bibnamefont {Potter}}, \bibinfo {author} {\bibfnamefont {L.~P.}\
  \bibnamefont {Pryadko}}, \bibinfo {author} {\bibfnamefont {C.}~\bibnamefont
  {Quintana}}, \bibinfo {author} {\bibfnamefont {P.}~\bibnamefont {Roushan}},
  \bibinfo {author} {\bibfnamefont {N.~C.}\ \bibnamefont {Rubin}}, \bibinfo
  {author} {\bibfnamefont {N.}~\bibnamefont {Saei}}, \bibinfo {author}
  {\bibfnamefont {D.}~\bibnamefont {Sank}}, \bibinfo {author} {\bibfnamefont
  {K.}~\bibnamefont {Sankaragomathi}}, \bibinfo {author} {\bibfnamefont
  {K.~J.}\ \bibnamefont {Satzinger}}, \bibinfo {author} {\bibfnamefont {H.~F.}\
  \bibnamefont {Schurkus}}, \bibinfo {author} {\bibfnamefont {C.}~\bibnamefont
  {Schuster}}, \bibinfo {author} {\bibfnamefont {M.~J.}\ \bibnamefont
  {Shearn}}, \bibinfo {author} {\bibfnamefont {A.}~\bibnamefont {Shorter}},
  \bibinfo {author} {\bibfnamefont {V.}~\bibnamefont {Shvarts}}, \bibinfo
  {author} {\bibfnamefont {J.}~\bibnamefont {Skruzny}}, \bibinfo {author}
  {\bibfnamefont {V.}~\bibnamefont {Smelyanskiy}}, \bibinfo {author}
  {\bibfnamefont {W.~C.}\ \bibnamefont {Smith}}, \bibinfo {author}
  {\bibfnamefont {G.}~\bibnamefont {Sterling}}, \bibinfo {author}
  {\bibfnamefont {D.}~\bibnamefont {Strain}}, \bibinfo {author} {\bibfnamefont
  {M.}~\bibnamefont {Szalay}}, \bibinfo {author} {\bibfnamefont
  {A.}~\bibnamefont {Torres}}, \bibinfo {author} {\bibfnamefont
  {G.}~\bibnamefont {Vidal}}, \bibinfo {author} {\bibfnamefont
  {B.}~\bibnamefont {Villalonga}}, \bibinfo {author} {\bibfnamefont
  {C.}~\bibnamefont {Vollgraff~Heidweiller}}, \bibinfo {author} {\bibfnamefont
  {T.}~\bibnamefont {White}}, \bibinfo {author} {\bibfnamefont
  {C.}~\bibnamefont {Xing}}, \bibinfo {author} {\bibfnamefont {Z.~J.}\
  \bibnamefont {Yao}}, \bibinfo {author} {\bibfnamefont {P.}~\bibnamefont
  {Yeh}}, \bibinfo {author} {\bibfnamefont {J.}~\bibnamefont {Yoo}}, \bibinfo
  {author} {\bibfnamefont {G.}~\bibnamefont {Young}}, \bibinfo {author}
  {\bibfnamefont {A.}~\bibnamefont {Zalcman}}, \bibinfo {author} {\bibfnamefont
  {Y.}~\bibnamefont {Zhang}}, \bibinfo {author} {\bibfnamefont
  {N.}~\bibnamefont {Zhu}},\ and\ \bibinfo {author} {\bibfnamefont {G.~Q.}\
  \bibnamefont {AI}},\ }\bibfield  {title} {\bibinfo {title} {Suppressing
  quantum errors by scaling a surface code logical qubit},\ }\href
  {https://doi.org/10.1038/s41586-022-05434-1} {\bibfield  {journal} {\bibinfo
  {journal} {Nature}\ }\textbf {\bibinfo {volume} {614}},\ \bibinfo {pages}
  {676} (\bibinfo {year} {2023})}\BibitemShut {NoStop}%
\bibitem [{\citenamefont {Acharya}\ \emph {et~al.}(2024)\citenamefont
  {Acharya}, \citenamefont {Abanin}, \citenamefont {Aghababaie-Beni},
  \citenamefont {Aleiner}, \citenamefont {Andersen}, \citenamefont {Ansmann},
  \citenamefont {Arute}, \citenamefont {Arya}, \citenamefont {Asfaw},
  \citenamefont {Astrakhantsev}, \citenamefont {Atalaya}, \citenamefont
  {Babbush}, \citenamefont {Bacon}, \citenamefont {Ballard}, \citenamefont
  {Bardin}, \citenamefont {Bausch}, \citenamefont {Bengtsson}, \citenamefont
  {Bilmes}, \citenamefont {Blackwell}, \citenamefont {Boixo}, \citenamefont
  {Bortoli}, \citenamefont {Bourassa}, \citenamefont {Bovaird}, \citenamefont
  {Brill}, \citenamefont {Broughton}, \citenamefont {Browne}, \citenamefont
  {Buchea}, \citenamefont {Buckley}, \citenamefont {Buell}, \citenamefont
  {Burger}, \citenamefont {Burkett}, \citenamefont {Bushnell}, \citenamefont
  {Cabrera}, \citenamefont {Campero}, \citenamefont {Chang}, \citenamefont
  {Chen}, \citenamefont {Chen}, \citenamefont {Chiaro}, \citenamefont {Chik},
  \citenamefont {Chou}, \citenamefont {Claes}, \citenamefont {Cleland},
  \citenamefont {Cogan}, \citenamefont {Collins}, \citenamefont {Conner},
  \citenamefont {Courtney}, \citenamefont {Crook}, \citenamefont {Curtin},
  \citenamefont {Das}, \citenamefont {Davies}, \citenamefont {Lorenzo},
  \citenamefont {Debroy}, \citenamefont {Demura}, \citenamefont {Devoret},
  \citenamefont {Paolo}, \citenamefont {Donohoe}, \citenamefont {Drozdov},
  \citenamefont {Dunsworth}, \citenamefont {Earle}, \citenamefont {Edlich},
  \citenamefont {Eickbusch}, \citenamefont {Elbag}, \citenamefont {Elzouka},
  \citenamefont {Erickson}, \citenamefont {Faoro}, \citenamefont {Farhi},
  \citenamefont {Ferreira}, \citenamefont {Burgos}, \citenamefont {Forati},
  \citenamefont {Fowler}, \citenamefont {Foxen}, \citenamefont {Ganjam},
  \citenamefont {Garcia}, \citenamefont {Gasca}, \citenamefont {Genois},
  \citenamefont {Giang}, \citenamefont {Gidney}, \citenamefont {Gilboa},
  \citenamefont {Gosula}, \citenamefont {Dau}, \citenamefont {Graumann},
  \citenamefont {Greene}, \citenamefont {Gross}, \citenamefont {Habegger},
  \citenamefont {Hall}, \citenamefont {Hamilton}, \citenamefont {Hansen},
  \citenamefont {Harrigan}, \citenamefont {Harrington}, \citenamefont {Heras},
  \citenamefont {Heslin}, \citenamefont {Heu}, \citenamefont {Higgott},
  \citenamefont {Hill}, \citenamefont {Hilton}, \citenamefont {Holland},
  \citenamefont {Hong}, \citenamefont {Huang}, \citenamefont {Huff},
  \citenamefont {Huggins}, \citenamefont {Ioffe}, \citenamefont {Isakov},
  \citenamefont {Iveland}, \citenamefont {Jeffrey}, \citenamefont {Jiang},
  \citenamefont {Jones}, \citenamefont {Jordan}, \citenamefont {Joshi},
  \citenamefont {Juh{\'a}s}, \citenamefont {Kafri}, \citenamefont {Kang},
  \citenamefont {Karamlou}, \citenamefont {Kechedzhi}, \citenamefont {Kelly},
  \citenamefont {Khaire}, \citenamefont {Khattar}, \citenamefont {Khezri},
  \citenamefont {Kim}, \citenamefont {Klimov}, \citenamefont {Klots},
  \citenamefont {Kobrin}, \citenamefont {Kohli}, \citenamefont {Korotkov},
  \citenamefont {Kostritsa}, \citenamefont {Kothari}, \citenamefont
  {Kozlovskii}, \citenamefont {Kreikebaum}, \citenamefont {Kurilovich},
  \citenamefont {Lacroix}, \citenamefont {Landhuis}, \citenamefont {Lange-Dei},
  \citenamefont {Langley}, \citenamefont {Laptev}, \citenamefont {Lau},
  \citenamefont {Guevel}, \citenamefont {Ledford}, \citenamefont {Lee},
  \citenamefont {Lee}, \citenamefont {Lensky}, \citenamefont {Leon},
  \citenamefont {Lester}, \citenamefont {Li}, \citenamefont {Li}, \citenamefont
  {Lill}, \citenamefont {Liu}, \citenamefont {Livingston}, \citenamefont
  {Locharla}, \citenamefont {Lucero}, \citenamefont {Lundahl}, \citenamefont
  {Lunt}, \citenamefont {Madhuk}, \citenamefont {Malone}, \citenamefont
  {Maloney}, \citenamefont {Mandr{\`a}}, \citenamefont {Manyika}, \citenamefont
  {Martin}, \citenamefont {Martin}, \citenamefont {Martin}, \citenamefont
  {Maxfield}, \citenamefont {McClean}, \citenamefont {McEwen}, \citenamefont
  {Meeks}, \citenamefont {Megrant}, \citenamefont {Mi}, \citenamefont {Miao},
  \citenamefont {Mieszala}, \citenamefont {Molavi}, \citenamefont {Molina},
  \citenamefont {Montazeri}, \citenamefont {Morvan}, \citenamefont {Movassagh},
  \citenamefont {Mruczkiewicz}, \citenamefont {Naaman}, \citenamefont {Neeley},
  \citenamefont {Neill}, \citenamefont {Nersisyan}, \citenamefont {Neven},
  \citenamefont {Newman}, \citenamefont {Ng}, \citenamefont {Nguyen},
  \citenamefont {Nguyen}, \citenamefont {Ni}, \citenamefont {Niu},
  \citenamefont {O’Brien}, \citenamefont {Oliver}, \citenamefont {Opremcak},
  \citenamefont {Ottosson}, \citenamefont {Petukhov}, \citenamefont {Pizzuto},
  \citenamefont {Platt}, \citenamefont {Potter}, \citenamefont {Pritchard},
  \citenamefont {Pryadko}, \citenamefont {Quintana}, \citenamefont
  {Ramachandran}, \citenamefont {Reagor}, \citenamefont {Redding},
  \citenamefont {Rhodes}, \citenamefont {Roberts}, \citenamefont {Rosenberg},
  \citenamefont {Rosenfeld}, \citenamefont {Roushan}, \citenamefont {Rubin},
  \citenamefont {Saei}, \citenamefont {Sank}, \citenamefont {Sankaragomathi},
  \citenamefont {Satzinger}, \citenamefont {Schurkus}, \citenamefont
  {Schuster}, \citenamefont {Senior}, \citenamefont {Shearn}, \citenamefont
  {Shorter}, \citenamefont {Shutty}, \citenamefont {Shvarts}, \citenamefont
  {Singh}, \citenamefont {Sivak}, \citenamefont {Skruzny}, \citenamefont
  {Small}, \citenamefont {Smelyanskiy}, \citenamefont {Smith}, \citenamefont
  {Somma}, \citenamefont {Springer}, \citenamefont {Sterling}, \citenamefont
  {Strain}, \citenamefont {Suchard}, \citenamefont {Szasz}, \citenamefont
  {Sztein}, \citenamefont {Thor}, \citenamefont {Torres}, \citenamefont
  {Torunbalci}, \citenamefont {Vaishnav}, \citenamefont {Vargas}, \citenamefont
  {Vdovichev}, \citenamefont {Vidal}, \citenamefont {Villalonga}, \citenamefont
  {Heidweiller}, \citenamefont {Waltman}, \citenamefont {Wang}, \citenamefont
  {Ware}, \citenamefont {Weber}, \citenamefont {Weidel}, \citenamefont {White},
  \citenamefont {Wong}, \citenamefont {Woo}, \citenamefont {Xing},
  \citenamefont {Yao}, \citenamefont {Yeh}, \citenamefont {Ying}, \citenamefont
  {Yoo}, \citenamefont {Yosri}, \citenamefont {Young}, \citenamefont {Zalcman},
  \citenamefont {Zhang}, \citenamefont {Zhu},\ and\ \citenamefont
  {Zobrist}}]{Acharya2024QuantumEC}%
  \BibitemOpen
  \bibfield  {author} {\bibinfo {author} {\bibfnamefont {R.}~\bibnamefont
  {Acharya}}, \bibinfo {author} {\bibfnamefont {D.~A.}\ \bibnamefont {Abanin}},
  \bibinfo {author} {\bibfnamefont {L.}~\bibnamefont {Aghababaie-Beni}},
  \bibinfo {author} {\bibfnamefont {I.}~\bibnamefont {Aleiner}}, \bibinfo
  {author} {\bibfnamefont {T.~I.}\ \bibnamefont {Andersen}}, \bibinfo {author}
  {\bibfnamefont {M.}~\bibnamefont {Ansmann}}, \bibinfo {author} {\bibfnamefont
  {F.}~\bibnamefont {Arute}}, \bibinfo {author} {\bibfnamefont
  {K.}~\bibnamefont {Arya}}, \bibinfo {author} {\bibfnamefont {A.~T.}\
  \bibnamefont {Asfaw}}, \bibinfo {author} {\bibfnamefont {N.}~\bibnamefont
  {Astrakhantsev}}, \bibinfo {author} {\bibfnamefont {J.}~\bibnamefont
  {Atalaya}}, \bibinfo {author} {\bibfnamefont {R.}~\bibnamefont {Babbush}},
  \bibinfo {author} {\bibfnamefont {D.}~\bibnamefont {Bacon}}, \bibinfo
  {author} {\bibfnamefont {B.}~\bibnamefont {Ballard}}, \bibinfo {author}
  {\bibfnamefont {J.~C.}\ \bibnamefont {Bardin}}, \bibinfo {author}
  {\bibfnamefont {J.}~\bibnamefont {Bausch}}, \bibinfo {author} {\bibfnamefont
  {A.}~\bibnamefont {Bengtsson}}, \bibinfo {author} {\bibfnamefont
  {A.}~\bibnamefont {Bilmes}}, \bibinfo {author} {\bibfnamefont
  {S.}~\bibnamefont {Blackwell}}, \bibinfo {author} {\bibfnamefont
  {S.}~\bibnamefont {Boixo}}, \bibinfo {author} {\bibfnamefont
  {G.}~\bibnamefont {Bortoli}}, \bibinfo {author} {\bibfnamefont
  {A.}~\bibnamefont {Bourassa}}, \bibinfo {author} {\bibfnamefont
  {J.}~\bibnamefont {Bovaird}}, \bibinfo {author} {\bibfnamefont
  {L.}~\bibnamefont {Brill}}, \bibinfo {author} {\bibfnamefont
  {M.}~\bibnamefont {Broughton}}, \bibinfo {author} {\bibfnamefont {D.~A.}\
  \bibnamefont {Browne}}, \bibinfo {author} {\bibfnamefont {B.}~\bibnamefont
  {Buchea}}, \bibinfo {author} {\bibfnamefont {B.~B.}\ \bibnamefont {Buckley}},
  \bibinfo {author} {\bibfnamefont {D.~A.}\ \bibnamefont {Buell}}, \bibinfo
  {author} {\bibfnamefont {T.}~\bibnamefont {Burger}}, \bibinfo {author}
  {\bibfnamefont {B.}~\bibnamefont {Burkett}}, \bibinfo {author} {\bibfnamefont
  {N.}~\bibnamefont {Bushnell}}, \bibinfo {author} {\bibfnamefont
  {A.}~\bibnamefont {Cabrera}}, \bibinfo {author} {\bibfnamefont {J.~R.}\
  \bibnamefont {Campero}}, \bibinfo {author} {\bibfnamefont {H.-S.}\
  \bibnamefont {Chang}}, \bibinfo {author} {\bibfnamefont {Y.}~\bibnamefont
  {Chen}}, \bibinfo {author} {\bibfnamefont {Z.}~\bibnamefont {Chen}}, \bibinfo
  {author} {\bibfnamefont {B.}~\bibnamefont {Chiaro}}, \bibinfo {author}
  {\bibfnamefont {D.}~\bibnamefont {Chik}}, \bibinfo {author} {\bibfnamefont
  {C.}~\bibnamefont {Chou}}, \bibinfo {author} {\bibfnamefont {J.}~\bibnamefont
  {Claes}}, \bibinfo {author} {\bibfnamefont {A.~Y.}\ \bibnamefont {Cleland}},
  \bibinfo {author} {\bibfnamefont {J.~Z.}\ \bibnamefont {Cogan}}, \bibinfo
  {author} {\bibfnamefont {R.}~\bibnamefont {Collins}}, \bibinfo {author}
  {\bibfnamefont {P.~N.}\ \bibnamefont {Conner}}, \bibinfo {author}
  {\bibfnamefont {W.}~\bibnamefont {Courtney}}, \bibinfo {author}
  {\bibfnamefont {A.~L.}\ \bibnamefont {Crook}}, \bibinfo {author}
  {\bibfnamefont {B.}~\bibnamefont {Curtin}}, \bibinfo {author} {\bibfnamefont
  {S.}~\bibnamefont {Das}}, \bibinfo {author} {\bibfnamefont {A.}~\bibnamefont
  {Davies}}, \bibinfo {author} {\bibfnamefont {L.~D.}\ \bibnamefont {Lorenzo}},
  \bibinfo {author} {\bibfnamefont {D.~M.}\ \bibnamefont {Debroy}}, \bibinfo
  {author} {\bibfnamefont {S.}~\bibnamefont {Demura}}, \bibinfo {author}
  {\bibfnamefont {M.~H.}\ \bibnamefont {Devoret}}, \bibinfo {author}
  {\bibfnamefont {A.~D.}\ \bibnamefont {Paolo}}, \bibinfo {author}
  {\bibfnamefont {P.}~\bibnamefont {Donohoe}}, \bibinfo {author} {\bibfnamefont
  {I.}~\bibnamefont {Drozdov}}, \bibinfo {author} {\bibfnamefont
  {A.}~\bibnamefont {Dunsworth}}, \bibinfo {author} {\bibfnamefont
  {C.}~\bibnamefont {Earle}}, \bibinfo {author} {\bibfnamefont
  {T.}~\bibnamefont {Edlich}}, \bibinfo {author} {\bibfnamefont
  {A.}~\bibnamefont {Eickbusch}}, \bibinfo {author} {\bibfnamefont {A.~M.}\
  \bibnamefont {Elbag}}, \bibinfo {author} {\bibfnamefont {M.}~\bibnamefont
  {Elzouka}}, \bibinfo {author} {\bibfnamefont {C.}~\bibnamefont {Erickson}},
  \bibinfo {author} {\bibfnamefont {L.}~\bibnamefont {Faoro}}, \bibinfo
  {author} {\bibfnamefont {E.}~\bibnamefont {Farhi}}, \bibinfo {author}
  {\bibfnamefont {V.~S.}\ \bibnamefont {Ferreira}}, \bibinfo {author}
  {\bibfnamefont {L.~F.}\ \bibnamefont {Burgos}}, \bibinfo {author}
  {\bibfnamefont {E.}~\bibnamefont {Forati}}, \bibinfo {author} {\bibfnamefont
  {A.~G.}\ \bibnamefont {Fowler}}, \bibinfo {author} {\bibfnamefont
  {B.}~\bibnamefont {Foxen}}, \bibinfo {author} {\bibfnamefont
  {S.}~\bibnamefont {Ganjam}}, \bibinfo {author} {\bibfnamefont
  {G.}~\bibnamefont {Garcia}}, \bibinfo {author} {\bibfnamefont
  {R.}~\bibnamefont {Gasca}}, \bibinfo {author} {\bibfnamefont
  {{\'E}.}~\bibnamefont {Genois}}, \bibinfo {author} {\bibfnamefont
  {W.}~\bibnamefont {Giang}}, \bibinfo {author} {\bibfnamefont
  {C.}~\bibnamefont {Gidney}}, \bibinfo {author} {\bibfnamefont
  {D.}~\bibnamefont {Gilboa}}, \bibinfo {author} {\bibfnamefont
  {R.}~\bibnamefont {Gosula}}, \bibinfo {author} {\bibfnamefont {A.~G.}\
  \bibnamefont {Dau}}, \bibinfo {author} {\bibfnamefont {D.}~\bibnamefont
  {Graumann}}, \bibinfo {author} {\bibfnamefont {A.}~\bibnamefont {Greene}},
  \bibinfo {author} {\bibfnamefont {J.~A.}\ \bibnamefont {Gross}}, \bibinfo
  {author} {\bibfnamefont {S.}~\bibnamefont {Habegger}}, \bibinfo {author}
  {\bibfnamefont {J.}~\bibnamefont {Hall}}, \bibinfo {author} {\bibfnamefont
  {M.~C.}\ \bibnamefont {Hamilton}}, \bibinfo {author} {\bibfnamefont
  {M.}~\bibnamefont {Hansen}}, \bibinfo {author} {\bibfnamefont {M.~P.}\
  \bibnamefont {Harrigan}}, \bibinfo {author} {\bibfnamefont {S.~D.}\
  \bibnamefont {Harrington}}, \bibinfo {author} {\bibfnamefont {F.~J.~H.}\
  \bibnamefont {Heras}}, \bibinfo {author} {\bibfnamefont {S.}~\bibnamefont
  {Heslin}}, \bibinfo {author} {\bibfnamefont {P.}~\bibnamefont {Heu}},
  \bibinfo {author} {\bibfnamefont {O.}~\bibnamefont {Higgott}}, \bibinfo
  {author} {\bibfnamefont {G.}~\bibnamefont {Hill}}, \bibinfo {author}
  {\bibfnamefont {J.~P.}\ \bibnamefont {Hilton}}, \bibinfo {author}
  {\bibfnamefont {G.}~\bibnamefont {Holland}}, \bibinfo {author} {\bibfnamefont
  {S.}~\bibnamefont {Hong}}, \bibinfo {author} {\bibfnamefont {H.-Y.}\
  \bibnamefont {Huang}}, \bibinfo {author} {\bibfnamefont {A.}~\bibnamefont
  {Huff}}, \bibinfo {author} {\bibfnamefont {W.~J.}\ \bibnamefont {Huggins}},
  \bibinfo {author} {\bibfnamefont {L.~B.}\ \bibnamefont {Ioffe}}, \bibinfo
  {author} {\bibfnamefont {S.~V.}\ \bibnamefont {Isakov}}, \bibinfo {author}
  {\bibfnamefont {J.}~\bibnamefont {Iveland}}, \bibinfo {author} {\bibfnamefont
  {E.}~\bibnamefont {Jeffrey}}, \bibinfo {author} {\bibfnamefont
  {Z.}~\bibnamefont {Jiang}}, \bibinfo {author} {\bibfnamefont
  {C.}~\bibnamefont {Jones}}, \bibinfo {author} {\bibfnamefont {S.~P.}\
  \bibnamefont {Jordan}}, \bibinfo {author} {\bibfnamefont {C.}~\bibnamefont
  {Joshi}}, \bibinfo {author} {\bibfnamefont {P.}~\bibnamefont {Juh{\'a}s}},
  \bibinfo {author} {\bibfnamefont {D.}~\bibnamefont {Kafri}}, \bibinfo
  {author} {\bibfnamefont {H.}~\bibnamefont {Kang}}, \bibinfo {author}
  {\bibfnamefont {A.~H.}\ \bibnamefont {Karamlou}}, \bibinfo {author}
  {\bibfnamefont {K.}~\bibnamefont {Kechedzhi}}, \bibinfo {author}
  {\bibfnamefont {J.}~\bibnamefont {Kelly}}, \bibinfo {author} {\bibfnamefont
  {T.~S.}\ \bibnamefont {Khaire}}, \bibinfo {author} {\bibfnamefont
  {T.}~\bibnamefont {Khattar}}, \bibinfo {author} {\bibfnamefont
  {M.}~\bibnamefont {Khezri}}, \bibinfo {author} {\bibfnamefont
  {S.}~\bibnamefont {Kim}}, \bibinfo {author} {\bibfnamefont {P.~V.}\
  \bibnamefont {Klimov}}, \bibinfo {author} {\bibfnamefont {A.~R.}\
  \bibnamefont {Klots}}, \bibinfo {author} {\bibfnamefont {B.}~\bibnamefont
  {Kobrin}}, \bibinfo {author} {\bibfnamefont {P.}~\bibnamefont {Kohli}},
  \bibinfo {author} {\bibfnamefont {A.~N.}\ \bibnamefont {Korotkov}}, \bibinfo
  {author} {\bibfnamefont {F.}~\bibnamefont {Kostritsa}}, \bibinfo {author}
  {\bibfnamefont {R.}~\bibnamefont {Kothari}}, \bibinfo {author} {\bibfnamefont
  {B.~M.}\ \bibnamefont {Kozlovskii}}, \bibinfo {author} {\bibfnamefont
  {J.~M.}\ \bibnamefont {Kreikebaum}}, \bibinfo {author} {\bibfnamefont
  {V.~D.}\ \bibnamefont {Kurilovich}}, \bibinfo {author} {\bibfnamefont
  {N.}~\bibnamefont {Lacroix}}, \bibinfo {author} {\bibfnamefont
  {D.}~\bibnamefont {Landhuis}}, \bibinfo {author} {\bibfnamefont
  {T.}~\bibnamefont {Lange-Dei}}, \bibinfo {author} {\bibfnamefont {B.~W.}\
  \bibnamefont {Langley}}, \bibinfo {author} {\bibfnamefont {P.}~\bibnamefont
  {Laptev}}, \bibinfo {author} {\bibfnamefont {K.~M.}\ \bibnamefont {Lau}},
  \bibinfo {author} {\bibfnamefont {L.~L.}\ \bibnamefont {Guevel}}, \bibinfo
  {author} {\bibfnamefont {J.}~\bibnamefont {Ledford}}, \bibinfo {author}
  {\bibfnamefont {J.}~\bibnamefont {Lee}}, \bibinfo {author} {\bibfnamefont
  {K.}~\bibnamefont {Lee}}, \bibinfo {author} {\bibfnamefont {Y.~D.}\
  \bibnamefont {Lensky}}, \bibinfo {author} {\bibfnamefont {S.}~\bibnamefont
  {Leon}}, \bibinfo {author} {\bibfnamefont {B.~J.}\ \bibnamefont {Lester}},
  \bibinfo {author} {\bibfnamefont {W.~Y.}\ \bibnamefont {Li}}, \bibinfo
  {author} {\bibfnamefont {Y.}~\bibnamefont {Li}}, \bibinfo {author}
  {\bibfnamefont {A.~T.}\ \bibnamefont {Lill}}, \bibinfo {author}
  {\bibfnamefont {W.}~\bibnamefont {Liu}}, \bibinfo {author} {\bibfnamefont
  {W.~P.}\ \bibnamefont {Livingston}}, \bibinfo {author} {\bibfnamefont
  {A.}~\bibnamefont {Locharla}}, \bibinfo {author} {\bibfnamefont
  {E.}~\bibnamefont {Lucero}}, \bibinfo {author} {\bibfnamefont
  {D.}~\bibnamefont {Lundahl}}, \bibinfo {author} {\bibfnamefont
  {A.}~\bibnamefont {Lunt}}, \bibinfo {author} {\bibfnamefont {S.}~\bibnamefont
  {Madhuk}}, \bibinfo {author} {\bibfnamefont {F.~D.}\ \bibnamefont {Malone}},
  \bibinfo {author} {\bibfnamefont {A.}~\bibnamefont {Maloney}}, \bibinfo
  {author} {\bibfnamefont {S.}~\bibnamefont {Mandr{\`a}}}, \bibinfo {author}
  {\bibfnamefont {J.}~\bibnamefont {Manyika}}, \bibinfo {author} {\bibfnamefont
  {L.~S.}\ \bibnamefont {Martin}}, \bibinfo {author} {\bibfnamefont
  {O.}~\bibnamefont {Martin}}, \bibinfo {author} {\bibfnamefont
  {S.}~\bibnamefont {Martin}}, \bibinfo {author} {\bibfnamefont
  {C.}~\bibnamefont {Maxfield}}, \bibinfo {author} {\bibfnamefont {J.~R.}\
  \bibnamefont {McClean}}, \bibinfo {author} {\bibfnamefont {M.~J.}\
  \bibnamefont {McEwen}}, \bibinfo {author} {\bibfnamefont {S.}~\bibnamefont
  {Meeks}}, \bibinfo {author} {\bibfnamefont {A.}~\bibnamefont {Megrant}},
  \bibinfo {author} {\bibfnamefont {X.}~\bibnamefont {Mi}}, \bibinfo {author}
  {\bibfnamefont {K.~C.}\ \bibnamefont {Miao}}, \bibinfo {author}
  {\bibfnamefont {A.}~\bibnamefont {Mieszala}}, \bibinfo {author}
  {\bibfnamefont {R.}~\bibnamefont {Molavi}}, \bibinfo {author} {\bibfnamefont
  {S.}~\bibnamefont {Molina}}, \bibinfo {author} {\bibfnamefont
  {S.}~\bibnamefont {Montazeri}}, \bibinfo {author} {\bibfnamefont
  {A.}~\bibnamefont {Morvan}}, \bibinfo {author} {\bibfnamefont
  {R.}~\bibnamefont {Movassagh}}, \bibinfo {author} {\bibfnamefont
  {W.}~\bibnamefont {Mruczkiewicz}}, \bibinfo {author} {\bibfnamefont
  {O.}~\bibnamefont {Naaman}}, \bibinfo {author} {\bibfnamefont
  {M.}~\bibnamefont {Neeley}}, \bibinfo {author} {\bibfnamefont {C.~J.}\
  \bibnamefont {Neill}}, \bibinfo {author} {\bibfnamefont {A.}~\bibnamefont
  {Nersisyan}}, \bibinfo {author} {\bibfnamefont {H.}~\bibnamefont {Neven}},
  \bibinfo {author} {\bibfnamefont {M.}~\bibnamefont {Newman}}, \bibinfo
  {author} {\bibfnamefont {J.~H.}\ \bibnamefont {Ng}}, \bibinfo {author}
  {\bibfnamefont {A.}~\bibnamefont {Nguyen}}, \bibinfo {author} {\bibfnamefont
  {M.~L.}\ \bibnamefont {Nguyen}}, \bibinfo {author} {\bibfnamefont {C.-H.}\
  \bibnamefont {Ni}}, \bibinfo {author} {\bibfnamefont {M.~Y.}\ \bibnamefont
  {Niu}}, \bibinfo {author} {\bibfnamefont {T.~E.}\ \bibnamefont {O’Brien}},
  \bibinfo {author} {\bibfnamefont {W.~D.}\ \bibnamefont {Oliver}}, \bibinfo
  {author} {\bibfnamefont {A.}~\bibnamefont {Opremcak}}, \bibinfo {author}
  {\bibfnamefont {K.}~\bibnamefont {Ottosson}}, \bibinfo {author}
  {\bibfnamefont {A.}~\bibnamefont {Petukhov}}, \bibinfo {author}
  {\bibfnamefont {A.}~\bibnamefont {Pizzuto}}, \bibinfo {author} {\bibfnamefont
  {J.}~\bibnamefont {Platt}}, \bibinfo {author} {\bibfnamefont
  {R.}~\bibnamefont {Potter}}, \bibinfo {author} {\bibfnamefont
  {O.}~\bibnamefont {Pritchard}}, \bibinfo {author} {\bibfnamefont {L.~P.}\
  \bibnamefont {Pryadko}}, \bibinfo {author} {\bibfnamefont {C.}~\bibnamefont
  {Quintana}}, \bibinfo {author} {\bibfnamefont {G.}~\bibnamefont
  {Ramachandran}}, \bibinfo {author} {\bibfnamefont {M.~J.}\ \bibnamefont
  {Reagor}}, \bibinfo {author} {\bibfnamefont {J.}~\bibnamefont {Redding}},
  \bibinfo {author} {\bibfnamefont {D.~M.}\ \bibnamefont {Rhodes}}, \bibinfo
  {author} {\bibfnamefont {G.}~\bibnamefont {Roberts}}, \bibinfo {author}
  {\bibfnamefont {E.}~\bibnamefont {Rosenberg}}, \bibinfo {author}
  {\bibfnamefont {E.~L.}\ \bibnamefont {Rosenfeld}}, \bibinfo {author}
  {\bibfnamefont {P.}~\bibnamefont {Roushan}}, \bibinfo {author} {\bibfnamefont
  {N.~C.}\ \bibnamefont {Rubin}}, \bibinfo {author} {\bibfnamefont
  {N.}~\bibnamefont {Saei}}, \bibinfo {author} {\bibfnamefont {D.~T.}\
  \bibnamefont {Sank}}, \bibinfo {author} {\bibfnamefont {K.}~\bibnamefont
  {Sankaragomathi}}, \bibinfo {author} {\bibfnamefont {K.~J.}\ \bibnamefont
  {Satzinger}}, \bibinfo {author} {\bibfnamefont {H.~F.}\ \bibnamefont
  {Schurkus}}, \bibinfo {author} {\bibfnamefont {C.}~\bibnamefont {Schuster}},
  \bibinfo {author} {\bibfnamefont {A.~W.}\ \bibnamefont {Senior}}, \bibinfo
  {author} {\bibfnamefont {M.}~\bibnamefont {Shearn}}, \bibinfo {author}
  {\bibfnamefont {A.}~\bibnamefont {Shorter}}, \bibinfo {author} {\bibfnamefont
  {N.}~\bibnamefont {Shutty}}, \bibinfo {author} {\bibfnamefont
  {V.}~\bibnamefont {Shvarts}}, \bibinfo {author} {\bibfnamefont
  {S.}~\bibnamefont {Singh}}, \bibinfo {author} {\bibfnamefont
  {V.}~\bibnamefont {Sivak}}, \bibinfo {author} {\bibfnamefont
  {J.}~\bibnamefont {Skruzny}}, \bibinfo {author} {\bibfnamefont
  {S.}~\bibnamefont {Small}}, \bibinfo {author} {\bibfnamefont
  {V.}~\bibnamefont {Smelyanskiy}}, \bibinfo {author} {\bibfnamefont {W.~C.}\
  \bibnamefont {Smith}}, \bibinfo {author} {\bibfnamefont {R.~D.}\ \bibnamefont
  {Somma}}, \bibinfo {author} {\bibfnamefont {S.}~\bibnamefont {Springer}},
  \bibinfo {author} {\bibfnamefont {G.}~\bibnamefont {Sterling}}, \bibinfo
  {author} {\bibfnamefont {D.}~\bibnamefont {Strain}}, \bibinfo {author}
  {\bibfnamefont {J.}~\bibnamefont {Suchard}}, \bibinfo {author} {\bibfnamefont
  {A.}~\bibnamefont {Szasz}}, \bibinfo {author} {\bibfnamefont {A.~E.}\
  \bibnamefont {Sztein}}, \bibinfo {author} {\bibfnamefont {D.}~\bibnamefont
  {Thor}}, \bibinfo {author} {\bibfnamefont {A.}~\bibnamefont {Torres}},
  \bibinfo {author} {\bibfnamefont {M.~M.}\ \bibnamefont {Torunbalci}},
  \bibinfo {author} {\bibfnamefont {A.}~\bibnamefont {Vaishnav}}, \bibinfo
  {author} {\bibfnamefont {J.}~\bibnamefont {Vargas}}, \bibinfo {author}
  {\bibfnamefont {S.~N.}\ \bibnamefont {Vdovichev}}, \bibinfo {author}
  {\bibfnamefont {G.}~\bibnamefont {Vidal}}, \bibinfo {author} {\bibfnamefont
  {B.}~\bibnamefont {Villalonga}}, \bibinfo {author} {\bibfnamefont {C.~V.}\
  \bibnamefont {Heidweiller}}, \bibinfo {author} {\bibfnamefont
  {S.}~\bibnamefont {Waltman}}, \bibinfo {author} {\bibfnamefont {S.~X.}\
  \bibnamefont {Wang}}, \bibinfo {author} {\bibfnamefont {B.}~\bibnamefont
  {Ware}}, \bibinfo {author} {\bibfnamefont {K.}~\bibnamefont {Weber}},
  \bibinfo {author} {\bibfnamefont {T.}~\bibnamefont {Weidel}}, \bibinfo
  {author} {\bibfnamefont {T.}~\bibnamefont {White}}, \bibinfo {author}
  {\bibfnamefont {K.}~\bibnamefont {Wong}}, \bibinfo {author} {\bibfnamefont
  {B.~W.~K.}\ \bibnamefont {Woo}}, \bibinfo {author} {\bibfnamefont
  {C.}~\bibnamefont {Xing}}, \bibinfo {author} {\bibfnamefont {Z.~J.}\
  \bibnamefont {Yao}}, \bibinfo {author} {\bibfnamefont {P.}~\bibnamefont
  {Yeh}}, \bibinfo {author} {\bibfnamefont {B.}~\bibnamefont {Ying}}, \bibinfo
  {author} {\bibfnamefont {J.}~\bibnamefont {Yoo}}, \bibinfo {author}
  {\bibfnamefont {N.}~\bibnamefont {Yosri}}, \bibinfo {author} {\bibfnamefont
  {G.}~\bibnamefont {Young}}, \bibinfo {author} {\bibfnamefont
  {A.}~\bibnamefont {Zalcman}}, \bibinfo {author} {\bibfnamefont
  {Y.}~\bibnamefont {Zhang}}, \bibinfo {author} {\bibfnamefont
  {N.}~\bibnamefont {Zhu}},\ and\ \bibinfo {author} {\bibfnamefont
  {N.}~\bibnamefont {Zobrist}},\ }\bibfield  {title} {\bibinfo {title} {Quantum
  error correction below the surface code threshold},\ }\href
  {https://api.semanticscholar.org/CorpusID:274608118} {\bibfield  {journal}
  {\bibinfo  {journal} {Nature}\ } (\bibinfo {year} {2024})}\BibitemShut
  {NoStop}%
\bibitem [{\citenamefont {Aharonov}\ and\ \citenamefont
  {Ben-Or}(1997)}]{Aharonov1997FTQC}%
  \BibitemOpen
  \bibfield  {author} {\bibinfo {author} {\bibfnamefont {D.}~\bibnamefont
  {Aharonov}}\ and\ \bibinfo {author} {\bibfnamefont {M.}~\bibnamefont
  {Ben-Or}},\ }\bibfield  {title} {\bibinfo {title} {Fault-tolerant quantum
  computation with constant error},\ }in\ \href
  {https://doi.org/10.1145/258533.258579} {\emph {\bibinfo {booktitle}
  {Proceedings of the Twenty-Ninth Annual ACM Symposium on Theory of
  Computing}}},\ \bibinfo {series and number} {STOC '97}\ (\bibinfo
  {publisher} {Association for Computing Machinery},\ \bibinfo {address} {New
  York, NY, USA},\ \bibinfo {year} {1997})\ p.\ \bibinfo {pages}
  {176–188}\BibitemShut {NoStop}%
\bibitem [{\citenamefont {Aharonov}\ and\ \citenamefont
  {Ben-Or}(2008)}]{Aharonov2008FTQC}%
  \BibitemOpen
  \bibfield  {author} {\bibinfo {author} {\bibfnamefont {D.}~\bibnamefont
  {Aharonov}}\ and\ \bibinfo {author} {\bibfnamefont {M.}~\bibnamefont
  {Ben-Or}},\ }\bibfield  {title} {\bibinfo {title} {Fault-tolerant quantum
  computation with constant error rate},\ }\href
  {https://doi.org/10.1137/S0097539799359385} {\bibfield  {journal} {\bibinfo
  {journal} {SIAM Journal on Computing}\ }\textbf {\bibinfo {volume} {38}},\
  \bibinfo {pages} {1207} (\bibinfo {year} {2008})},\ \Eprint
  {https://arxiv.org/abs/https://doi.org/10.1137/S0097539799359385}
  {https://doi.org/10.1137/S0097539799359385} \BibitemShut {NoStop}%
\bibitem [{\citenamefont
  {Gottesman}(2009)}]{gottesman2009introductionquantumerrorcorrection}%
  \BibitemOpen
  \bibfield  {author} {\bibinfo {author} {\bibfnamefont {D.}~\bibnamefont
  {Gottesman}},\ }\href {https://arxiv.org/abs/0904.2557} {\bibinfo {title} {An
  introduction to quantum error correction and fault-tolerant quantum
  computation}} (\bibinfo {year} {2009}),\ \Eprint
  {https://arxiv.org/abs/0904.2557} {arXiv:0904.2557 [quant-ph]} \BibitemShut
  {NoStop}%
\bibitem [{\citenamefont {Gottesman}(2014)}]{Gottesman2014constant}%
  \BibitemOpen
  \bibfield  {author} {\bibinfo {author} {\bibfnamefont {D.}~\bibnamefont
  {Gottesman}},\ }\bibfield  {title} {\bibinfo {title} {Fault-tolerant quantum
  computation with constant overhead},\ }\href@noop {} {\bibfield  {journal}
  {\bibinfo  {journal} {Quantum Info. Comput.}\ }\textbf {\bibinfo {volume}
  {14}},\ \bibinfo {pages} {1338–1372} (\bibinfo {year} {2014})}\BibitemShut
  {NoStop}%
\bibitem [{\citenamefont {Xu}\ \emph {et~al.}(2024)\citenamefont {Xu},
  \citenamefont {Zeng}, \citenamefont {Xu},\ and\ \citenamefont
  {Jiang}}]{Xu2024BosonicFT}%
  \BibitemOpen
  \bibfield  {author} {\bibinfo {author} {\bibfnamefont {Q.}~\bibnamefont
  {Xu}}, \bibinfo {author} {\bibfnamefont {P.}~\bibnamefont {Zeng}}, \bibinfo
  {author} {\bibfnamefont {D.}~\bibnamefont {Xu}},\ and\ \bibinfo {author}
  {\bibfnamefont {L.}~\bibnamefont {Jiang}},\ }\bibfield  {title} {\bibinfo
  {title} {Fault-tolerant operation of bosonic qubits with discrete-variable
  ancillae},\ }\href {https://doi.org/10.1103/PhysRevX.14.031016} {\bibfield
  {journal} {\bibinfo  {journal} {Phys. Rev. X}\ }\textbf {\bibinfo {volume}
  {14}},\ \bibinfo {pages} {031016} (\bibinfo {year} {2024})}\BibitemShut
  {NoStop}%
\bibitem [{\citenamefont {Nelson}\ \emph {et~al.}(2025)\citenamefont {Nelson},
  \citenamefont {Bentsen}, \citenamefont {Flammia},\ and\ \citenamefont
  {Gullans}}]{Nelson2025FTlowdepth}%
  \BibitemOpen
  \bibfield  {author} {\bibinfo {author} {\bibfnamefont {J.}~\bibnamefont
  {Nelson}}, \bibinfo {author} {\bibfnamefont {G.}~\bibnamefont {Bentsen}},
  \bibinfo {author} {\bibfnamefont {S.~T.}\ \bibnamefont {Flammia}},\ and\
  \bibinfo {author} {\bibfnamefont {M.~J.}\ \bibnamefont {Gullans}},\
  }\bibfield  {title} {\bibinfo {title} {Fault-tolerant quantum memory using
  low-depth random circuit codes},\ }\href
  {https://doi.org/10.1103/PhysRevResearch.7.013040} {\bibfield  {journal}
  {\bibinfo  {journal} {Phys. Rev. Res.}\ }\textbf {\bibinfo {volume} {7}},\
  \bibinfo {pages} {013040} (\bibinfo {year} {2025})}\BibitemShut {NoStop}%
\bibitem [{\citenamefont {Fowler}\ \emph {et~al.}(2012)\citenamefont {Fowler},
  \citenamefont {Mariantoni}, \citenamefont {Martinis},\ and\ \citenamefont
  {Cleland}}]{Fowler2012Surface}%
  \BibitemOpen
  \bibfield  {author} {\bibinfo {author} {\bibfnamefont {A.~G.}\ \bibnamefont
  {Fowler}}, \bibinfo {author} {\bibfnamefont {M.}~\bibnamefont {Mariantoni}},
  \bibinfo {author} {\bibfnamefont {J.~M.}\ \bibnamefont {Martinis}},\ and\
  \bibinfo {author} {\bibfnamefont {A.~N.}\ \bibnamefont {Cleland}},\
  }\bibfield  {title} {\bibinfo {title} {Surface codes: Towards practical
  large-scale quantum computation},\ }\href
  {https://doi.org/10.1103/PhysRevA.86.032324} {\bibfield  {journal} {\bibinfo
  {journal} {Phys. Rev. A}\ }\textbf {\bibinfo {volume} {86}},\ \bibinfo
  {pages} {032324} (\bibinfo {year} {2012})}\BibitemShut {NoStop}%
\bibitem [{\citenamefont {Dennis}\ \emph {et~al.}(2002)\citenamefont {Dennis},
  \citenamefont {Kitaev}, \citenamefont {Landahl},\ and\ \citenamefont
  {Preskill}}]{Dennis2002Topological}%
  \BibitemOpen
  \bibfield  {author} {\bibinfo {author} {\bibfnamefont {E.}~\bibnamefont
  {Dennis}}, \bibinfo {author} {\bibfnamefont {A.}~\bibnamefont {Kitaev}},
  \bibinfo {author} {\bibfnamefont {A.}~\bibnamefont {Landahl}},\ and\ \bibinfo
  {author} {\bibfnamefont {J.}~\bibnamefont {Preskill}},\ }\bibfield  {title}
  {\bibinfo {title} {Topological quantum memory},\ }\href
  {https://doi.org/10.1063/1.1499754} {\bibfield  {journal} {\bibinfo
  {journal} {Journal of Mathematical Physics}\ }\textbf {\bibinfo {volume}
  {43}},\ \bibinfo {pages} {4452} (\bibinfo {year} {2002})},\ \Eprint
  {https://arxiv.org/abs/https://pubs.aip.org/aip/jmp/article-pdf/43/9/4452/19183135/4452\_1\_online.pdf}
  {https://pubs.aip.org/aip/jmp/article-pdf/43/9/4452/19183135/4452\_1\_online.pdf}
  \BibitemShut {NoStop}%
\bibitem [{\citenamefont {Kovalev}\ and\ \citenamefont
  {Pryadko}(2013)}]{Kovalev2013sublinear}%
  \BibitemOpen
  \bibfield  {author} {\bibinfo {author} {\bibfnamefont {A.~A.}\ \bibnamefont
  {Kovalev}}\ and\ \bibinfo {author} {\bibfnamefont {L.~P.}\ \bibnamefont
  {Pryadko}},\ }\bibfield  {title} {\bibinfo {title} {Fault tolerance of
  quantum low-density parity check codes with sublinear distance scaling},\
  }\href {https://doi.org/10.1103/PhysRevA.87.020304} {\bibfield  {journal}
  {\bibinfo  {journal} {Phys. Rev. A}\ }\textbf {\bibinfo {volume} {87}},\
  \bibinfo {pages} {020304} (\bibinfo {year} {2013})}\BibitemShut {NoStop}%
\bibitem [{\citenamefont {Fawzi}\ \emph {et~al.}(2020)\citenamefont {Fawzi},
  \citenamefont {Grospellier},\ and\ \citenamefont
  {Leverrier}}]{Fawzi2020Constant}%
  \BibitemOpen
  \bibfield  {author} {\bibinfo {author} {\bibfnamefont {O.}~\bibnamefont
  {Fawzi}}, \bibinfo {author} {\bibfnamefont {A.}~\bibnamefont {Grospellier}},\
  and\ \bibinfo {author} {\bibfnamefont {A.}~\bibnamefont {Leverrier}},\
  }\bibfield  {title} {\bibinfo {title} {Constant overhead quantum fault
  tolerance with quantum expander codes},\ }\href
  {https://doi.org/10.1145/3434163} {\bibfield  {journal} {\bibinfo  {journal}
  {Commun. ACM}\ }\textbf {\bibinfo {volume} {64}},\ \bibinfo {pages}
  {106–114} (\bibinfo {year} {2020})}\BibitemShut {NoStop}%
\bibitem [{\citenamefont {Panteleev}\ and\ \citenamefont
  {Kalachev}(2022)}]{Panteleev2022goodQLDPC}%
  \BibitemOpen
  \bibfield  {author} {\bibinfo {author} {\bibfnamefont {P.}~\bibnamefont
  {Panteleev}}\ and\ \bibinfo {author} {\bibfnamefont {G.}~\bibnamefont
  {Kalachev}},\ }\bibfield  {title} {\bibinfo {title} {Asymptotically good
  quantum and locally testable classical ldpc codes},\ }in\ \href
  {https://doi.org/10.1145/3519935.3520017} {\emph {\bibinfo {booktitle}
  {Proceedings of the 54th Annual ACM SIGACT Symposium on Theory of
  Computing}}},\ \bibinfo {series and number} {STOC 2022}\ (\bibinfo
  {publisher} {Association for Computing Machinery},\ \bibinfo {address} {New
  York, NY, USA},\ \bibinfo {year} {2022})\ p.\ \bibinfo {pages}
  {375–388}\BibitemShut {NoStop}%
\bibitem [{\citenamefont {Gottesman}(1997)}]{gottesman1997stabilizer}%
  \BibitemOpen
  \bibfield  {author} {\bibinfo {author} {\bibfnamefont {D.}~\bibnamefont
  {Gottesman}},\ }\href@noop {} {\emph {\bibinfo {title} {Stabilizer codes and
  quantum error correction}}}\ (\bibinfo  {publisher} {California Institute of
  Technology},\ \bibinfo {year} {1997})\BibitemShut {NoStop}%
\bibitem [{\citenamefont {Brown}\ and\ \citenamefont
  {Fawzi}(2013{\natexlab{a}})}]{Brown2013short}%
  \BibitemOpen
  \bibfield  {author} {\bibinfo {author} {\bibfnamefont {W.}~\bibnamefont
  {Brown}}\ and\ \bibinfo {author} {\bibfnamefont {O.}~\bibnamefont {Fawzi}},\
  }\bibfield  {title} {\bibinfo {title} {Short random circuits define good
  quantum error correcting codes},\ }in\ \href
  {https://doi.org/10.1109/ISIT.2013.6620245} {\emph {\bibinfo {booktitle}
  {2013 IEEE International Symposium on Information Theory}}}\ (\bibinfo {year}
  {2013})\ pp.\ \bibinfo {pages} {346--350}\BibitemShut {NoStop}%
\bibitem [{\citenamefont {Gullans}\ \emph {et~al.}(2021)\citenamefont
  {Gullans}, \citenamefont {Krastanov}, \citenamefont {Huse}, \citenamefont
  {Jiang},\ and\ \citenamefont {Flammia}}]{Gullans2021LowDepth}%
  \BibitemOpen
  \bibfield  {author} {\bibinfo {author} {\bibfnamefont {M.~J.}\ \bibnamefont
  {Gullans}}, \bibinfo {author} {\bibfnamefont {S.}~\bibnamefont {Krastanov}},
  \bibinfo {author} {\bibfnamefont {D.~A.}\ \bibnamefont {Huse}}, \bibinfo
  {author} {\bibfnamefont {L.}~\bibnamefont {Jiang}},\ and\ \bibinfo {author}
  {\bibfnamefont {S.~T.}\ \bibnamefont {Flammia}},\ }\bibfield  {title}
  {\bibinfo {title} {Quantum coding with low-depth random circuits},\ }\href
  {https://doi.org/10.1103/PhysRevX.11.031066} {\bibfield  {journal} {\bibinfo
  {journal} {Phys. Rev. X}\ }\textbf {\bibinfo {volume} {11}},\ \bibinfo
  {pages} {031066} (\bibinfo {year} {2021})}\BibitemShut {NoStop}%
\bibitem [{\citenamefont {Kong}\ and\ \citenamefont
  {Liu}(2022)}]{Kong2022CQEC}%
  \BibitemOpen
  \bibfield  {author} {\bibinfo {author} {\bibfnamefont {L.}~\bibnamefont
  {Kong}}\ and\ \bibinfo {author} {\bibfnamefont {Z.-W.}\ \bibnamefont {Liu}},\
  }\bibfield  {title} {\bibinfo {title} {Near-optimal covariant quantum
  error-correcting codes from random unitaries with symmetries},\ }\href
  {https://doi.org/10.1103/PRXQuantum.3.020314} {\bibfield  {journal} {\bibinfo
   {journal} {PRX Quantum}\ }\textbf {\bibinfo {volume} {3}},\ \bibinfo {pages}
  {020314} (\bibinfo {year} {2022})}\BibitemShut {NoStop}%
\bibitem [{\citenamefont {Darmawan}\ \emph {et~al.}(2024)\citenamefont
  {Darmawan}, \citenamefont {Nakata}, \citenamefont {Tamiya},\ and\
  \citenamefont {Yamasaki}}]{Darmawan2024Lowdepth}%
  \BibitemOpen
  \bibfield  {author} {\bibinfo {author} {\bibfnamefont {A.~S.}\ \bibnamefont
  {Darmawan}}, \bibinfo {author} {\bibfnamefont {Y.}~\bibnamefont {Nakata}},
  \bibinfo {author} {\bibfnamefont {S.}~\bibnamefont {Tamiya}},\ and\ \bibinfo
  {author} {\bibfnamefont {H.}~\bibnamefont {Yamasaki}},\ }\bibfield  {title}
  {\bibinfo {title} {Low-depth random clifford circuits for quantum coding
  against pauli noise using a tensor-network decoder},\ }\href
  {https://doi.org/10.1103/PhysRevResearch.6.023055} {\bibfield  {journal}
  {\bibinfo  {journal} {Phys. Rev. Res.}\ }\textbf {\bibinfo {volume} {6}},\
  \bibinfo {pages} {023055} (\bibinfo {year} {2024})}\BibitemShut {NoStop}%
\bibitem [{\citenamefont {Leung}\ \emph {et~al.}(1997)\citenamefont {Leung},
  \citenamefont {Nielsen}, \citenamefont {Chuang},\ and\ \citenamefont
  {Yamamoto}}]{Leung1997AQEC}%
  \BibitemOpen
  \bibfield  {author} {\bibinfo {author} {\bibfnamefont {D.~W.}\ \bibnamefont
  {Leung}}, \bibinfo {author} {\bibfnamefont {M.~A.}\ \bibnamefont {Nielsen}},
  \bibinfo {author} {\bibfnamefont {I.~L.}\ \bibnamefont {Chuang}},\ and\
  \bibinfo {author} {\bibfnamefont {Y.}~\bibnamefont {Yamamoto}},\ }\bibfield
  {title} {\bibinfo {title} {Approximate quantum error correction can lead to
  better codes},\ }\href {https://doi.org/10.1103/PhysRevA.56.2567} {\bibfield
  {journal} {\bibinfo  {journal} {Phys. Rev. A}\ }\textbf {\bibinfo {volume}
  {56}},\ \bibinfo {pages} {2567} (\bibinfo {year} {1997})}\BibitemShut
  {NoStop}%
\bibitem [{\citenamefont {Cr{\'e}peau}\ \emph {et~al.}(2005)\citenamefont
  {Cr{\'e}peau}, \citenamefont {Gottesman},\ and\ \citenamefont
  {Smith}}]{Claude2005AQEC}%
  \BibitemOpen
  \bibfield  {author} {\bibinfo {author} {\bibfnamefont {C.}~\bibnamefont
  {Cr{\'e}peau}}, \bibinfo {author} {\bibfnamefont {D.}~\bibnamefont
  {Gottesman}},\ and\ \bibinfo {author} {\bibfnamefont {A.}~\bibnamefont
  {Smith}},\ }\bibfield  {title} {\bibinfo {title} {Approximate quantum
  error-correcting codes and secret sharing schemes},\ }in\ \href@noop {}
  {\emph {\bibinfo {booktitle} {Advances in Cryptology -- EUROCRYPT 2005}}},\
  \bibinfo {editor} {edited by\ \bibinfo {editor} {\bibfnamefont
  {R.}~\bibnamefont {Cramer}}}\ (\bibinfo  {publisher} {Springer Berlin
  Heidelberg},\ \bibinfo {address} {Berlin, Heidelberg},\ \bibinfo {year}
  {2005})\ pp.\ \bibinfo {pages} {285--301}\BibitemShut {NoStop}%
\bibitem [{\citenamefont {Klesse}(2007)}]{Klesse2007AQEC}%
  \BibitemOpen
  \bibfield  {author} {\bibinfo {author} {\bibfnamefont {R.}~\bibnamefont
  {Klesse}},\ }\bibfield  {title} {\bibinfo {title} {Approximate quantum error
  correction, random codes, and quantum channel capacity},\ }\href
  {https://doi.org/10.1103/PhysRevA.75.062315} {\bibfield  {journal} {\bibinfo
  {journal} {Phys. Rev. A}\ }\textbf {\bibinfo {volume} {75}},\ \bibinfo
  {pages} {062315} (\bibinfo {year} {2007})}\BibitemShut {NoStop}%
\bibitem [{\citenamefont {B\'eny}\ and\ \citenamefont
  {Oreshkov}(2010)}]{Beny2010AQEC}%
  \BibitemOpen
  \bibfield  {author} {\bibinfo {author} {\bibfnamefont {C.}~\bibnamefont
  {B\'eny}}\ and\ \bibinfo {author} {\bibfnamefont {O.}~\bibnamefont
  {Oreshkov}},\ }\bibfield  {title} {\bibinfo {title} {General conditions for
  approximate quantum error correction and near-optimal recovery channels},\
  }\href {https://doi.org/10.1103/PhysRevLett.104.120501} {\bibfield  {journal}
  {\bibinfo  {journal} {Phys. Rev. Lett.}\ }\textbf {\bibinfo {volume} {104}},\
  \bibinfo {pages} {120501} (\bibinfo {year} {2010})}\BibitemShut {NoStop}%
\bibitem [{\citenamefont {Wilde}(2013)}]{wilde2013quantum}%
  \BibitemOpen
  \bibfield  {author} {\bibinfo {author} {\bibfnamefont {M.~M.}\ \bibnamefont
  {Wilde}},\ }\href@noop {} {\emph {\bibinfo {title} {Quantum information
  theory}}}\ (\bibinfo  {publisher} {Cambridge university press},\ \bibinfo
  {year} {2013})\BibitemShut {NoStop}%
\bibitem [{\citenamefont {Crepeau}\ \emph {et~al.}(2005)\citenamefont
  {Crepeau}, \citenamefont {Gottesman},\ and\ \citenamefont
  {Smith}}]{crepeau2005approximatequantumerrorcorrectingcodes}%
  \BibitemOpen
  \bibfield  {author} {\bibinfo {author} {\bibfnamefont {C.}~\bibnamefont
  {Crepeau}}, \bibinfo {author} {\bibfnamefont {D.}~\bibnamefont {Gottesman}},\
  and\ \bibinfo {author} {\bibfnamefont {A.}~\bibnamefont {Smith}},\ }\href
  {https://arxiv.org/abs/quant-ph/0503139} {\bibinfo {title} {Approximate
  quantum error-correcting codes and secret sharing schemes}} (\bibinfo {year}
  {2005}),\ \Eprint {https://arxiv.org/abs/quant-ph/0503139}
  {arXiv:quant-ph/0503139 [quant-ph]} \BibitemShut {NoStop}%
\bibitem [{\citenamefont {Yi}\ \emph {et~al.}(2024)\citenamefont {Yi},
  \citenamefont {Ye}, \citenamefont {Gottesman},\ and\ \citenamefont
  {Liu}}]{Yi2024order}%
  \BibitemOpen
  \bibfield  {author} {\bibinfo {author} {\bibfnamefont {J.}~\bibnamefont
  {Yi}}, \bibinfo {author} {\bibfnamefont {W.}~\bibnamefont {Ye}}, \bibinfo
  {author} {\bibfnamefont {D.}~\bibnamefont {Gottesman}},\ and\ \bibinfo
  {author} {\bibfnamefont {Z.-W.}\ \bibnamefont {Liu}},\ }\bibfield  {title}
  {\bibinfo {title} {Complexity and order in approximate quantum
  error-correcting codes},\ }\bibfield  {journal} {\bibinfo  {journal} {Nature
  Physics}\ }\href {https://doi.org/10.1038/s41567-024-02621-x}
  {10.1038/s41567-024-02621-x} (\bibinfo {year} {2024})\BibitemShut {NoStop}%
\bibitem [{\citenamefont {Brand{\~a}o}\ \emph {et~al.}(2016)\citenamefont
  {Brand{\~a}o}, \citenamefont {Harrow},\ and\ \citenamefont
  {Horodecki}}]{Brandao2016ApproximateDesign}%
  \BibitemOpen
  \bibfield  {author} {\bibinfo {author} {\bibfnamefont {F.~G. S.~L.}\
  \bibnamefont {Brand{\~a}o}}, \bibinfo {author} {\bibfnamefont {A.~W.}\
  \bibnamefont {Harrow}},\ and\ \bibinfo {author} {\bibfnamefont
  {M.}~\bibnamefont {Horodecki}},\ }\bibfield  {title} {\bibinfo {title} {Local
  random quantum circuits are approximate polynomial-designs},\ }\href
  {https://doi.org/10.1007/s00220-016-2706-8} {\bibfield  {journal} {\bibinfo
  {journal} {Communications in Mathematical Physics}\ }\textbf {\bibinfo
  {volume} {346}},\ \bibinfo {pages} {397} (\bibinfo {year}
  {2016})}\BibitemShut {NoStop}%
\bibitem [{\citenamefont {Dalzell}\ \emph {et~al.}(2022)\citenamefont
  {Dalzell}, \citenamefont {Hunter-Jones},\ and\ \citenamefont
  {Brand\~ao}}]{Dalzell2022Anticoncentrate}%
  \BibitemOpen
  \bibfield  {author} {\bibinfo {author} {\bibfnamefont {A.~M.}\ \bibnamefont
  {Dalzell}}, \bibinfo {author} {\bibfnamefont {N.}~\bibnamefont
  {Hunter-Jones}},\ and\ \bibinfo {author} {\bibfnamefont {F.~G. S.~L.}\
  \bibnamefont {Brand\~ao}},\ }\bibfield  {title} {\bibinfo {title} {Random
  quantum circuits anticoncentrate in log depth},\ }\href
  {https://doi.org/10.1103/PRXQuantum.3.010333} {\bibfield  {journal} {\bibinfo
   {journal} {PRX Quantum}\ }\textbf {\bibinfo {volume} {3}},\ \bibinfo {pages}
  {010333} (\bibinfo {year} {2022})}\BibitemShut {NoStop}%
\bibitem [{\citenamefont {Schuster}\ \emph {et~al.}(2025)\citenamefont
  {Schuster}, \citenamefont {Haferkamp},\ and\ \citenamefont
  {Huang}}]{schuster2024randomunitariesextremelylow}%
  \BibitemOpen
  \bibfield  {author} {\bibinfo {author} {\bibfnamefont {T.}~\bibnamefont
  {Schuster}}, \bibinfo {author} {\bibfnamefont {J.}~\bibnamefont
  {Haferkamp}},\ and\ \bibinfo {author} {\bibfnamefont {H.-Y.}\ \bibnamefont
  {Huang}},\ }\bibfield  {title} {\bibinfo {title} {Random unitaries in
  extremely low depth},\ }\href {https://doi.org/10.1126/science.adv8590}
  {\bibfield  {journal} {\bibinfo  {journal} {Science}\ }\textbf {\bibinfo
  {volume} {389}},\ \bibinfo {pages} {92} (\bibinfo {year} {2025})},\ \Eprint
  {https://arxiv.org/abs/https://www.science.org/doi/pdf/10.1126/science.adv8590}
  {https://www.science.org/doi/pdf/10.1126/science.adv8590} \BibitemShut
  {NoStop}%
\bibitem [{\citenamefont {LaRacuente}\ and\ \citenamefont
  {Leditzky}(2024)}]{laracuente2024approximateunitarykdesignsshallow}%
  \BibitemOpen
  \bibfield  {author} {\bibinfo {author} {\bibfnamefont {N.}~\bibnamefont
  {LaRacuente}}\ and\ \bibinfo {author} {\bibfnamefont {F.}~\bibnamefont
  {Leditzky}},\ }\href {https://arxiv.org/abs/2407.07876} {\bibinfo {title}
  {Approximate unitary $k$-designs from shallow, low-communication circuits}}
  (\bibinfo {year} {2024}),\ \Eprint {https://arxiv.org/abs/2407.07876}
  {arXiv:2407.07876 [quant-ph]} \BibitemShut {NoStop}%
\bibitem [{\citenamefont {DiVincenzo}\ \emph {et~al.}(1998)\citenamefont
  {DiVincenzo}, \citenamefont {Shor},\ and\ \citenamefont
  {Smolin}}]{DiVincenzo1998channelcapacity}%
  \BibitemOpen
  \bibfield  {author} {\bibinfo {author} {\bibfnamefont {D.~P.}\ \bibnamefont
  {DiVincenzo}}, \bibinfo {author} {\bibfnamefont {P.~W.}\ \bibnamefont
  {Shor}},\ and\ \bibinfo {author} {\bibfnamefont {J.~A.}\ \bibnamefont
  {Smolin}},\ }\bibfield  {title} {\bibinfo {title} {Quantum-channel capacity
  of very noisy channels},\ }\href {https://doi.org/10.1103/PhysRevA.57.830}
  {\bibfield  {journal} {\bibinfo  {journal} {Phys. Rev. A}\ }\textbf {\bibinfo
  {volume} {57}},\ \bibinfo {pages} {830} (\bibinfo {year} {1998})}\BibitemShut
  {NoStop}%
\bibitem [{\citenamefont {Bonilla~Ataides}\ \emph {et~al.}(2021)\citenamefont
  {Bonilla~Ataides}, \citenamefont {Tuckett}, \citenamefont {Bartlett},
  \citenamefont {Flammia},\ and\ \citenamefont {Brown}}]{Ataides2021XZZX}%
  \BibitemOpen
  \bibfield  {author} {\bibinfo {author} {\bibfnamefont {J.~P.}\ \bibnamefont
  {Bonilla~Ataides}}, \bibinfo {author} {\bibfnamefont {D.~K.}\ \bibnamefont
  {Tuckett}}, \bibinfo {author} {\bibfnamefont {S.~D.}\ \bibnamefont
  {Bartlett}}, \bibinfo {author} {\bibfnamefont {S.~T.}\ \bibnamefont
  {Flammia}},\ and\ \bibinfo {author} {\bibfnamefont {B.~J.}\ \bibnamefont
  {Brown}},\ }\bibfield  {title} {\bibinfo {title} {The xzzx surface code},\
  }\bibfield  {journal} {\bibinfo  {journal} {Nature Communications}\ }\textbf
  {\bibinfo {volume} {12}},\ \href {https://doi.org/10.1038/s41467-021-22274-1}
  {10.1038/s41467-021-22274-1} (\bibinfo {year} {2021})\BibitemShut {NoStop}%
\bibitem [{\citenamefont {Brown}\ and\ \citenamefont
  {Fawzi}(2013{\natexlab{b}})}]{brown2013scramblingspeedrandomquantum}%
  \BibitemOpen
  \bibfield  {author} {\bibinfo {author} {\bibfnamefont {W.}~\bibnamefont
  {Brown}}\ and\ \bibinfo {author} {\bibfnamefont {O.}~\bibnamefont {Fawzi}},\
  }\href {https://arxiv.org/abs/1210.6644} {\bibinfo {title} {Scrambling speed
  of random quantum circuits}} (\bibinfo {year} {2013}{\natexlab{b}}),\ \Eprint
  {https://arxiv.org/abs/1210.6644} {arXiv:1210.6644 [quant-ph]} \BibitemShut
  {NoStop}%
\bibitem [{\citenamefont {Baspin}\ \emph {et~al.}(2023)\citenamefont {Baspin},
  \citenamefont {Fawzi},\ and\ \citenamefont
  {Shayeghi}}]{baspin2023lowerboundoverheadquantum}%
  \BibitemOpen
  \bibfield  {author} {\bibinfo {author} {\bibfnamefont {N.}~\bibnamefont
  {Baspin}}, \bibinfo {author} {\bibfnamefont {O.}~\bibnamefont {Fawzi}},\ and\
  \bibinfo {author} {\bibfnamefont {A.}~\bibnamefont {Shayeghi}},\ }\href
  {https://arxiv.org/abs/2302.04317} {\bibinfo {title} {A lower bound on the
  overhead of quantum error correction in low dimensions}} (\bibinfo {year}
  {2023}),\ \Eprint {https://arxiv.org/abs/2302.04317} {arXiv:2302.04317
  [quant-ph]} \BibitemShut {NoStop}%
\bibitem [{\citenamefont {Choi}\ \emph {et~al.}(2020)\citenamefont {Choi},
  \citenamefont {Bao}, \citenamefont {Qi},\ and\ \citenamefont
  {Altman}}]{Choi2020QECMIPT}%
  \BibitemOpen
  \bibfield  {author} {\bibinfo {author} {\bibfnamefont {S.}~\bibnamefont
  {Choi}}, \bibinfo {author} {\bibfnamefont {Y.}~\bibnamefont {Bao}}, \bibinfo
  {author} {\bibfnamefont {X.-L.}\ \bibnamefont {Qi}},\ and\ \bibinfo {author}
  {\bibfnamefont {E.}~\bibnamefont {Altman}},\ }\bibfield  {title} {\bibinfo
  {title} {Quantum error correction in scrambling dynamics and
  measurement-induced phase transition},\ }\href
  {https://doi.org/10.1103/PhysRevLett.125.030505} {\bibfield  {journal}
  {\bibinfo  {journal} {Phys. Rev. Lett.}\ }\textbf {\bibinfo {volume} {125}},\
  \bibinfo {pages} {030505} (\bibinfo {year} {2020})}\BibitemShut {NoStop}%
\bibitem [{\citenamefont {Cerf}\ and\ \citenamefont
  {Cleve}(1997)}]{Cerf1997singleton}%
  \BibitemOpen
  \bibfield  {author} {\bibinfo {author} {\bibfnamefont {N.~J.}\ \bibnamefont
  {Cerf}}\ and\ \bibinfo {author} {\bibfnamefont {R.}~\bibnamefont {Cleve}},\
  }\bibfield  {title} {\bibinfo {title} {Information-theoretic interpretation
  of quantum error-correcting codes},\ }\href
  {https://doi.org/10.1103/PhysRevA.56.1721} {\bibfield  {journal} {\bibinfo
  {journal} {Phys. Rev. A}\ }\textbf {\bibinfo {volume} {56}},\ \bibinfo
  {pages} {1721} (\bibinfo {year} {1997})}\BibitemShut {NoStop}%
\bibitem [{\citenamefont {Arute}\ \emph {et~al.}(2019)\citenamefont {Arute},
  \citenamefont {Arya}, \citenamefont {Babbush}, \citenamefont {Bacon},
  \citenamefont {Bardin}, \citenamefont {Barends}, \citenamefont {Biswas},
  \citenamefont {Boixo}, \citenamefont {Brandao}, \citenamefont {Buell},
  \citenamefont {Burkett}, \citenamefont {Chen}, \citenamefont {Chen},
  \citenamefont {Chiaro}, \citenamefont {Collins}, \citenamefont {Courtney},
  \citenamefont {Dunsworth}, \citenamefont {Farhi}, \citenamefont {Foxen},
  \citenamefont {Fowler}, \citenamefont {Gidney}, \citenamefont {Giustina},
  \citenamefont {Graff}, \citenamefont {Guerin}, \citenamefont {Habegger},
  \citenamefont {Harrigan}, \citenamefont {Hartmann}, \citenamefont {Ho},
  \citenamefont {Hoffmann}, \citenamefont {Huang}, \citenamefont {Humble},
  \citenamefont {Isakov}, \citenamefont {Jeffrey}, \citenamefont {Jiang},
  \citenamefont {Kafri}, \citenamefont {Kechedzhi}, \citenamefont {Kelly},
  \citenamefont {Klimov}, \citenamefont {Knysh}, \citenamefont {Korotkov},
  \citenamefont {Kostritsa}, \citenamefont {Landhuis}, \citenamefont
  {Lindmark}, \citenamefont {Lucero}, \citenamefont {Lyakh}, \citenamefont
  {Mandr{\`a}}, \citenamefont {McClean}, \citenamefont {McEwen}, \citenamefont
  {Megrant}, \citenamefont {Mi}, \citenamefont {Michielsen}, \citenamefont
  {Mohseni}, \citenamefont {Mutus}, \citenamefont {Naaman}, \citenamefont
  {Neeley}, \citenamefont {Neill}, \citenamefont {Niu}, \citenamefont {Ostby},
  \citenamefont {Petukhov}, \citenamefont {Platt}, \citenamefont {Quintana},
  \citenamefont {Rieffel}, \citenamefont {Roushan}, \citenamefont {Rubin},
  \citenamefont {Sank}, \citenamefont {Satzinger}, \citenamefont {Smelyanskiy},
  \citenamefont {Sung}, \citenamefont {Trevithick}, \citenamefont
  {Vainsencher}, \citenamefont {Villalonga}, \citenamefont {White},
  \citenamefont {Yao}, \citenamefont {Yeh}, \citenamefont {Zalcman},
  \citenamefont {Neven},\ and\ \citenamefont {Martinis}}]{arute2019supremacy}%
  \BibitemOpen
  \bibfield  {author} {\bibinfo {author} {\bibfnamefont {F.}~\bibnamefont
  {Arute}}, \bibinfo {author} {\bibfnamefont {K.}~\bibnamefont {Arya}},
  \bibinfo {author} {\bibfnamefont {R.}~\bibnamefont {Babbush}}, \bibinfo
  {author} {\bibfnamefont {D.}~\bibnamefont {Bacon}}, \bibinfo {author}
  {\bibfnamefont {J.~C.}\ \bibnamefont {Bardin}}, \bibinfo {author}
  {\bibfnamefont {R.}~\bibnamefont {Barends}}, \bibinfo {author} {\bibfnamefont
  {R.}~\bibnamefont {Biswas}}, \bibinfo {author} {\bibfnamefont
  {S.}~\bibnamefont {Boixo}}, \bibinfo {author} {\bibfnamefont {F.~G. S.~L.}\
  \bibnamefont {Brandao}}, \bibinfo {author} {\bibfnamefont {D.~A.}\
  \bibnamefont {Buell}}, \bibinfo {author} {\bibfnamefont {B.}~\bibnamefont
  {Burkett}}, \bibinfo {author} {\bibfnamefont {Y.}~\bibnamefont {Chen}},
  \bibinfo {author} {\bibfnamefont {Z.}~\bibnamefont {Chen}}, \bibinfo {author}
  {\bibfnamefont {B.}~\bibnamefont {Chiaro}}, \bibinfo {author} {\bibfnamefont
  {R.}~\bibnamefont {Collins}}, \bibinfo {author} {\bibfnamefont
  {W.}~\bibnamefont {Courtney}}, \bibinfo {author} {\bibfnamefont
  {A.}~\bibnamefont {Dunsworth}}, \bibinfo {author} {\bibfnamefont
  {E.}~\bibnamefont {Farhi}}, \bibinfo {author} {\bibfnamefont
  {B.}~\bibnamefont {Foxen}}, \bibinfo {author} {\bibfnamefont
  {A.}~\bibnamefont {Fowler}}, \bibinfo {author} {\bibfnamefont
  {C.}~\bibnamefont {Gidney}}, \bibinfo {author} {\bibfnamefont
  {M.}~\bibnamefont {Giustina}}, \bibinfo {author} {\bibfnamefont
  {R.}~\bibnamefont {Graff}}, \bibinfo {author} {\bibfnamefont
  {K.}~\bibnamefont {Guerin}}, \bibinfo {author} {\bibfnamefont
  {S.}~\bibnamefont {Habegger}}, \bibinfo {author} {\bibfnamefont {M.~P.}\
  \bibnamefont {Harrigan}}, \bibinfo {author} {\bibfnamefont {M.~J.}\
  \bibnamefont {Hartmann}}, \bibinfo {author} {\bibfnamefont {A.}~\bibnamefont
  {Ho}}, \bibinfo {author} {\bibfnamefont {M.}~\bibnamefont {Hoffmann}},
  \bibinfo {author} {\bibfnamefont {T.}~\bibnamefont {Huang}}, \bibinfo
  {author} {\bibfnamefont {T.~S.}\ \bibnamefont {Humble}}, \bibinfo {author}
  {\bibfnamefont {S.~V.}\ \bibnamefont {Isakov}}, \bibinfo {author}
  {\bibfnamefont {E.}~\bibnamefont {Jeffrey}}, \bibinfo {author} {\bibfnamefont
  {Z.}~\bibnamefont {Jiang}}, \bibinfo {author} {\bibfnamefont
  {D.}~\bibnamefont {Kafri}}, \bibinfo {author} {\bibfnamefont
  {K.}~\bibnamefont {Kechedzhi}}, \bibinfo {author} {\bibfnamefont
  {J.}~\bibnamefont {Kelly}}, \bibinfo {author} {\bibfnamefont {P.~V.}\
  \bibnamefont {Klimov}}, \bibinfo {author} {\bibfnamefont {S.}~\bibnamefont
  {Knysh}}, \bibinfo {author} {\bibfnamefont {A.}~\bibnamefont {Korotkov}},
  \bibinfo {author} {\bibfnamefont {F.}~\bibnamefont {Kostritsa}}, \bibinfo
  {author} {\bibfnamefont {D.}~\bibnamefont {Landhuis}}, \bibinfo {author}
  {\bibfnamefont {M.}~\bibnamefont {Lindmark}}, \bibinfo {author}
  {\bibfnamefont {E.}~\bibnamefont {Lucero}}, \bibinfo {author} {\bibfnamefont
  {D.}~\bibnamefont {Lyakh}}, \bibinfo {author} {\bibfnamefont
  {S.}~\bibnamefont {Mandr{\`a}}}, \bibinfo {author} {\bibfnamefont {J.~R.}\
  \bibnamefont {McClean}}, \bibinfo {author} {\bibfnamefont {M.}~\bibnamefont
  {McEwen}}, \bibinfo {author} {\bibfnamefont {A.}~\bibnamefont {Megrant}},
  \bibinfo {author} {\bibfnamefont {X.}~\bibnamefont {Mi}}, \bibinfo {author}
  {\bibfnamefont {K.}~\bibnamefont {Michielsen}}, \bibinfo {author}
  {\bibfnamefont {M.}~\bibnamefont {Mohseni}}, \bibinfo {author} {\bibfnamefont
  {J.}~\bibnamefont {Mutus}}, \bibinfo {author} {\bibfnamefont
  {O.}~\bibnamefont {Naaman}}, \bibinfo {author} {\bibfnamefont
  {M.}~\bibnamefont {Neeley}}, \bibinfo {author} {\bibfnamefont
  {C.}~\bibnamefont {Neill}}, \bibinfo {author} {\bibfnamefont {M.~Y.}\
  \bibnamefont {Niu}}, \bibinfo {author} {\bibfnamefont {E.}~\bibnamefont
  {Ostby}}, \bibinfo {author} {\bibfnamefont {A.}~\bibnamefont {Petukhov}},
  \bibinfo {author} {\bibfnamefont {J.~C.}\ \bibnamefont {Platt}}, \bibinfo
  {author} {\bibfnamefont {C.}~\bibnamefont {Quintana}}, \bibinfo {author}
  {\bibfnamefont {E.~G.}\ \bibnamefont {Rieffel}}, \bibinfo {author}
  {\bibfnamefont {P.}~\bibnamefont {Roushan}}, \bibinfo {author} {\bibfnamefont
  {N.~C.}\ \bibnamefont {Rubin}}, \bibinfo {author} {\bibfnamefont
  {D.}~\bibnamefont {Sank}}, \bibinfo {author} {\bibfnamefont {K.~J.}\
  \bibnamefont {Satzinger}}, \bibinfo {author} {\bibfnamefont {V.}~\bibnamefont
  {Smelyanskiy}}, \bibinfo {author} {\bibfnamefont {K.~J.}\ \bibnamefont
  {Sung}}, \bibinfo {author} {\bibfnamefont {M.~D.}\ \bibnamefont
  {Trevithick}}, \bibinfo {author} {\bibfnamefont {A.}~\bibnamefont
  {Vainsencher}}, \bibinfo {author} {\bibfnamefont {B.}~\bibnamefont
  {Villalonga}}, \bibinfo {author} {\bibfnamefont {T.}~\bibnamefont {White}},
  \bibinfo {author} {\bibfnamefont {Z.~J.}\ \bibnamefont {Yao}}, \bibinfo
  {author} {\bibfnamefont {P.}~\bibnamefont {Yeh}}, \bibinfo {author}
  {\bibfnamefont {A.}~\bibnamefont {Zalcman}}, \bibinfo {author} {\bibfnamefont
  {H.}~\bibnamefont {Neven}},\ and\ \bibinfo {author} {\bibfnamefont {J.~M.}\
  \bibnamefont {Martinis}},\ }\bibfield  {title} {\bibinfo {title} {Quantum
  supremacy using a programmable superconducting processor},\ }\href
  {https://doi.org/10.1038/s41586-019-1666-5} {\bibfield  {journal} {\bibinfo
  {journal} {Nature}\ }\textbf {\bibinfo {volume} {574}},\ \bibinfo {pages}
  {505} (\bibinfo {year} {2019})}\BibitemShut {NoStop}%
\bibitem [{\citenamefont {Bouland}\ \emph {et~al.}(2019)\citenamefont
  {Bouland}, \citenamefont {Fefferman}, \citenamefont {Nirkhe},\ and\
  \citenamefont {Vazirani}}]{Bouland2019sampling}%
  \BibitemOpen
  \bibfield  {author} {\bibinfo {author} {\bibfnamefont {A.}~\bibnamefont
  {Bouland}}, \bibinfo {author} {\bibfnamefont {B.}~\bibnamefont {Fefferman}},
  \bibinfo {author} {\bibfnamefont {C.}~\bibnamefont {Nirkhe}},\ and\ \bibinfo
  {author} {\bibfnamefont {U.}~\bibnamefont {Vazirani}},\ }\bibfield  {title}
  {\bibinfo {title} {On the complexity and verification of quantum random
  circuit sampling},\ }\href {https://doi.org/10.1038/s41567-018-0318-2}
  {\bibfield  {journal} {\bibinfo  {journal} {Nature Physics}\ }\textbf
  {\bibinfo {volume} {15}},\ \bibinfo {pages} {159} (\bibinfo {year}
  {2019})}\BibitemShut {NoStop}%
\bibitem [{\citenamefont {Wu}\ \emph {et~al.}(2021)\citenamefont {Wu},
  \citenamefont {Bao}, \citenamefont {Cao}, \citenamefont {Chen}, \citenamefont
  {Chen}, \citenamefont {Chen}, \citenamefont {Chung}, \citenamefont {Deng},
  \citenamefont {Du}, \citenamefont {Fan}, \citenamefont {Gong}, \citenamefont
  {Guo}, \citenamefont {Guo}, \citenamefont {Guo}, \citenamefont {Han},
  \citenamefont {Hong}, \citenamefont {Huang}, \citenamefont {Huo},
  \citenamefont {Li}, \citenamefont {Li}, \citenamefont {Li}, \citenamefont
  {Li}, \citenamefont {Liang}, \citenamefont {Lin}, \citenamefont {Lin},
  \citenamefont {Qian}, \citenamefont {Qiao}, \citenamefont {Rong},
  \citenamefont {Su}, \citenamefont {Sun}, \citenamefont {Wang}, \citenamefont
  {Wang}, \citenamefont {Wu}, \citenamefont {Xu}, \citenamefont {Yan},
  \citenamefont {Yang}, \citenamefont {Yang}, \citenamefont {Ye}, \citenamefont
  {Yin}, \citenamefont {Ying}, \citenamefont {Yu}, \citenamefont {Zha},
  \citenamefont {Zhang}, \citenamefont {Zhang}, \citenamefont {Zhang},
  \citenamefont {Zhang}, \citenamefont {Zhao}, \citenamefont {Zhao},
  \citenamefont {Zhou}, \citenamefont {Zhu}, \citenamefont {Lu}, \citenamefont
  {Peng}, \citenamefont {Zhu},\ and\ \citenamefont
  {Pan}}]{YulinWu2021Superconducting}%
  \BibitemOpen
  \bibfield  {author} {\bibinfo {author} {\bibfnamefont {Y.}~\bibnamefont
  {Wu}}, \bibinfo {author} {\bibfnamefont {W.-S.}\ \bibnamefont {Bao}},
  \bibinfo {author} {\bibfnamefont {S.}~\bibnamefont {Cao}}, \bibinfo {author}
  {\bibfnamefont {F.}~\bibnamefont {Chen}}, \bibinfo {author} {\bibfnamefont
  {M.-C.}\ \bibnamefont {Chen}}, \bibinfo {author} {\bibfnamefont
  {X.}~\bibnamefont {Chen}}, \bibinfo {author} {\bibfnamefont {T.-H.}\
  \bibnamefont {Chung}}, \bibinfo {author} {\bibfnamefont {H.}~\bibnamefont
  {Deng}}, \bibinfo {author} {\bibfnamefont {Y.}~\bibnamefont {Du}}, \bibinfo
  {author} {\bibfnamefont {D.}~\bibnamefont {Fan}}, \bibinfo {author}
  {\bibfnamefont {M.}~\bibnamefont {Gong}}, \bibinfo {author} {\bibfnamefont
  {C.}~\bibnamefont {Guo}}, \bibinfo {author} {\bibfnamefont {C.}~\bibnamefont
  {Guo}}, \bibinfo {author} {\bibfnamefont {S.}~\bibnamefont {Guo}}, \bibinfo
  {author} {\bibfnamefont {L.}~\bibnamefont {Han}}, \bibinfo {author}
  {\bibfnamefont {L.}~\bibnamefont {Hong}}, \bibinfo {author} {\bibfnamefont
  {H.-L.}\ \bibnamefont {Huang}}, \bibinfo {author} {\bibfnamefont {Y.-H.}\
  \bibnamefont {Huo}}, \bibinfo {author} {\bibfnamefont {L.}~\bibnamefont
  {Li}}, \bibinfo {author} {\bibfnamefont {N.}~\bibnamefont {Li}}, \bibinfo
  {author} {\bibfnamefont {S.}~\bibnamefont {Li}}, \bibinfo {author}
  {\bibfnamefont {Y.}~\bibnamefont {Li}}, \bibinfo {author} {\bibfnamefont
  {F.}~\bibnamefont {Liang}}, \bibinfo {author} {\bibfnamefont
  {C.}~\bibnamefont {Lin}}, \bibinfo {author} {\bibfnamefont {J.}~\bibnamefont
  {Lin}}, \bibinfo {author} {\bibfnamefont {H.}~\bibnamefont {Qian}}, \bibinfo
  {author} {\bibfnamefont {D.}~\bibnamefont {Qiao}}, \bibinfo {author}
  {\bibfnamefont {H.}~\bibnamefont {Rong}}, \bibinfo {author} {\bibfnamefont
  {H.}~\bibnamefont {Su}}, \bibinfo {author} {\bibfnamefont {L.}~\bibnamefont
  {Sun}}, \bibinfo {author} {\bibfnamefont {L.}~\bibnamefont {Wang}}, \bibinfo
  {author} {\bibfnamefont {S.}~\bibnamefont {Wang}}, \bibinfo {author}
  {\bibfnamefont {D.}~\bibnamefont {Wu}}, \bibinfo {author} {\bibfnamefont
  {Y.}~\bibnamefont {Xu}}, \bibinfo {author} {\bibfnamefont {K.}~\bibnamefont
  {Yan}}, \bibinfo {author} {\bibfnamefont {W.}~\bibnamefont {Yang}}, \bibinfo
  {author} {\bibfnamefont {Y.}~\bibnamefont {Yang}}, \bibinfo {author}
  {\bibfnamefont {Y.}~\bibnamefont {Ye}}, \bibinfo {author} {\bibfnamefont
  {J.}~\bibnamefont {Yin}}, \bibinfo {author} {\bibfnamefont {C.}~\bibnamefont
  {Ying}}, \bibinfo {author} {\bibfnamefont {J.}~\bibnamefont {Yu}}, \bibinfo
  {author} {\bibfnamefont {C.}~\bibnamefont {Zha}}, \bibinfo {author}
  {\bibfnamefont {C.}~\bibnamefont {Zhang}}, \bibinfo {author} {\bibfnamefont
  {H.}~\bibnamefont {Zhang}}, \bibinfo {author} {\bibfnamefont
  {K.}~\bibnamefont {Zhang}}, \bibinfo {author} {\bibfnamefont
  {Y.}~\bibnamefont {Zhang}}, \bibinfo {author} {\bibfnamefont
  {H.}~\bibnamefont {Zhao}}, \bibinfo {author} {\bibfnamefont {Y.}~\bibnamefont
  {Zhao}}, \bibinfo {author} {\bibfnamefont {L.}~\bibnamefont {Zhou}}, \bibinfo
  {author} {\bibfnamefont {Q.}~\bibnamefont {Zhu}}, \bibinfo {author}
  {\bibfnamefont {C.-Y.}\ \bibnamefont {Lu}}, \bibinfo {author} {\bibfnamefont
  {C.-Z.}\ \bibnamefont {Peng}}, \bibinfo {author} {\bibfnamefont
  {X.}~\bibnamefont {Zhu}},\ and\ \bibinfo {author} {\bibfnamefont {J.-W.}\
  \bibnamefont {Pan}},\ }\bibfield  {title} {\bibinfo {title} {Strong quantum
  computational advantage using a superconducting quantum processor},\ }\href
  {https://doi.org/10.1103/PhysRevLett.127.180501} {\bibfield  {journal}
  {\bibinfo  {journal} {Phys. Rev. Lett.}\ }\textbf {\bibinfo {volume} {127}},\
  \bibinfo {pages} {180501} (\bibinfo {year} {2021})}\BibitemShut {NoStop}%
\bibitem [{\citenamefont {Movassagh}(2023)}]{Movassagh2023hardness}%
  \BibitemOpen
  \bibfield  {author} {\bibinfo {author} {\bibfnamefont {R.}~\bibnamefont
  {Movassagh}},\ }\bibfield  {title} {\bibinfo {title} {The hardness of random
  quantum circuits},\ }\href {https://doi.org/10.1038/s41567-023-02131-2}
  {\bibfield  {journal} {\bibinfo  {journal} {Nature Physics}\ }\textbf
  {\bibinfo {volume} {19}},\ \bibinfo {pages} {1719} (\bibinfo {year}
  {2023})}\BibitemShut {NoStop}%
\bibitem [{\citenamefont {Morvan}\ \emph {et~al.}(2024)\citenamefont {Morvan},
  \citenamefont {Villalonga}, \citenamefont {Mi}, \citenamefont {Mandr{\`a}},
  \citenamefont {Bengtsson}, \citenamefont {Klimov}, \citenamefont {Chen},
  \citenamefont {Hong}, \citenamefont {Erickson}, \citenamefont {Drozdov},
  \citenamefont {Chau}, \citenamefont {Laun}, \citenamefont {Movassagh},
  \citenamefont {Asfaw}, \citenamefont {Brand{\~a}o}, \citenamefont {Peralta},
  \citenamefont {Abanin}, \citenamefont {Acharya}, \citenamefont {Allen},
  \citenamefont {Andersen}, \citenamefont {Anderson}, \citenamefont {Ansmann},
  \citenamefont {Arute}, \citenamefont {Arya}, \citenamefont {Atalaya},
  \citenamefont {Bardin}, \citenamefont {Bilmes}, \citenamefont {Bortoli},
  \citenamefont {Bourassa}, \citenamefont {Bovaird}, \citenamefont {Brill},
  \citenamefont {Broughton}, \citenamefont {Buckley}, \citenamefont {Buell},
  \citenamefont {Burger}, \citenamefont {Burkett}, \citenamefont {Bushnell},
  \citenamefont {Campero}, \citenamefont {Chang}, \citenamefont {Chiaro},
  \citenamefont {Chik}, \citenamefont {Chou}, \citenamefont {Cogan},
  \citenamefont {Collins}, \citenamefont {Conner}, \citenamefont {Courtney},
  \citenamefont {Crook}, \citenamefont {Curtin}, \citenamefont {Debroy},
  \citenamefont {Barba}, \citenamefont {Demura}, \citenamefont {Paolo},
  \citenamefont {Dunsworth}, \citenamefont {Faoro}, \citenamefont {Farhi},
  \citenamefont {Fatemi}, \citenamefont {Ferreira}, \citenamefont {Burgos},
  \citenamefont {Forati}, \citenamefont {Fowler}, \citenamefont {Foxen},
  \citenamefont {Garcia}, \citenamefont {Genois}, \citenamefont {Giang},
  \citenamefont {Gidney}, \citenamefont {Gilboa}, \citenamefont {Giustina},
  \citenamefont {Gosula}, \citenamefont {Dau}, \citenamefont {Gross},
  \citenamefont {Habegger}, \citenamefont {Hamilton}, \citenamefont {Hansen},
  \citenamefont {Harrigan}, \citenamefont {Harrington}, \citenamefont {Heu},
  \citenamefont {Hoffmann}, \citenamefont {Huang}, \citenamefont {Huff},
  \citenamefont {Huggins}, \citenamefont {Ioffe}, \citenamefont {Isakov},
  \citenamefont {Iveland}, \citenamefont {Jeffrey}, \citenamefont {Jiang},
  \citenamefont {Jones}, \citenamefont {Juhas}, \citenamefont {Kafri},
  \citenamefont {Khattar}, \citenamefont {Khezri}, \citenamefont
  {Kieferov{\'a}}, \citenamefont {Kim}, \citenamefont {Kitaev}, \citenamefont
  {Klots}, \citenamefont {Korotkov}, \citenamefont {Kostritsa}, \citenamefont
  {Kreikebaum}, \citenamefont {Landhuis}, \citenamefont {Laptev}, \citenamefont
  {Lau}, \citenamefont {Laws}, \citenamefont {Lee}, \citenamefont {Lee},
  \citenamefont {Lensky}, \citenamefont {Lester}, \citenamefont {Lill},
  \citenamefont {Liu}, \citenamefont {Livingston}, \citenamefont {Locharla},
  \citenamefont {Malone}, \citenamefont {Martin}, \citenamefont {Martin},
  \citenamefont {McClean}, \citenamefont {McEwen}, \citenamefont {Miao},
  \citenamefont {Mieszala}, \citenamefont {Montazeri}, \citenamefont
  {Mruczkiewicz}, \citenamefont {Naaman}, \citenamefont {Neeley}, \citenamefont
  {Neill}, \citenamefont {Nersisyan}, \citenamefont {Newman}, \citenamefont
  {Ng}, \citenamefont {Nguyen}, \citenamefont {Nguyen}, \citenamefont {Niu},
  \citenamefont {O'Brien}, \citenamefont {Omonije}, \citenamefont {Opremcak},
  \citenamefont {Petukhov}, \citenamefont {Potter}, \citenamefont {Pryadko},
  \citenamefont {Quintana}, \citenamefont {Rhodes}, \citenamefont {Rocque},
  \citenamefont {Rosenberg}, \citenamefont {Rubin}, \citenamefont {Saei},
  \citenamefont {Sank}, \citenamefont {Sankaragomathi}, \citenamefont
  {Satzinger}, \citenamefont {Schurkus}, \citenamefont {Schuster},
  \citenamefont {Shearn}, \citenamefont {Shorter}, \citenamefont {Shutty},
  \citenamefont {Shvarts}, \citenamefont {Sivak}, \citenamefont {Skruzny},
  \citenamefont {Smith}, \citenamefont {Somma}, \citenamefont {Sterling},
  \citenamefont {Strain}, \citenamefont {Szalay}, \citenamefont {Thor},
  \citenamefont {Torres}, \citenamefont {Vidal}, \citenamefont {Heidweiller},
  \citenamefont {White}, \citenamefont {Woo}, \citenamefont {Xing},
  \citenamefont {Yao}, \citenamefont {Yeh}, \citenamefont {Yoo}, \citenamefont
  {Young}, \citenamefont {Zalcman}, \citenamefont {Zhang}, \citenamefont {Zhu},
  \citenamefont {Zobrist}, \citenamefont {Rieffel}, \citenamefont {Biswas},
  \citenamefont {Babbush}, \citenamefont {Bacon}, \citenamefont {Hilton},
  \citenamefont {Lucero}, \citenamefont {Neven}, \citenamefont {Megrant},
  \citenamefont {Kelly}, \citenamefont {Roushan}, \citenamefont {Aleiner},
  \citenamefont {Smelyanskiy}, \citenamefont {Kechedzhi}, \citenamefont
  {Chen},\ and\ \citenamefont {Boixo}}]{Morvan2024phase}%
  \BibitemOpen
  \bibfield  {author} {\bibinfo {author} {\bibfnamefont {A.}~\bibnamefont
  {Morvan}}, \bibinfo {author} {\bibfnamefont {B.}~\bibnamefont {Villalonga}},
  \bibinfo {author} {\bibfnamefont {X.}~\bibnamefont {Mi}}, \bibinfo {author}
  {\bibfnamefont {S.}~\bibnamefont {Mandr{\`a}}}, \bibinfo {author}
  {\bibfnamefont {A.}~\bibnamefont {Bengtsson}}, \bibinfo {author}
  {\bibfnamefont {P.~V.}\ \bibnamefont {Klimov}}, \bibinfo {author}
  {\bibfnamefont {Z.}~\bibnamefont {Chen}}, \bibinfo {author} {\bibfnamefont
  {S.}~\bibnamefont {Hong}}, \bibinfo {author} {\bibfnamefont {C.}~\bibnamefont
  {Erickson}}, \bibinfo {author} {\bibfnamefont {I.~K.}\ \bibnamefont
  {Drozdov}}, \bibinfo {author} {\bibfnamefont {J.}~\bibnamefont {Chau}},
  \bibinfo {author} {\bibfnamefont {G.}~\bibnamefont {Laun}}, \bibinfo {author}
  {\bibfnamefont {R.}~\bibnamefont {Movassagh}}, \bibinfo {author}
  {\bibfnamefont {A.}~\bibnamefont {Asfaw}}, \bibinfo {author} {\bibfnamefont
  {L.~T. A.~N.}\ \bibnamefont {Brand{\~a}o}}, \bibinfo {author} {\bibfnamefont
  {R.}~\bibnamefont {Peralta}}, \bibinfo {author} {\bibfnamefont
  {D.}~\bibnamefont {Abanin}}, \bibinfo {author} {\bibfnamefont
  {R.}~\bibnamefont {Acharya}}, \bibinfo {author} {\bibfnamefont
  {R.}~\bibnamefont {Allen}}, \bibinfo {author} {\bibfnamefont {T.~I.}\
  \bibnamefont {Andersen}}, \bibinfo {author} {\bibfnamefont {K.}~\bibnamefont
  {Anderson}}, \bibinfo {author} {\bibfnamefont {M.}~\bibnamefont {Ansmann}},
  \bibinfo {author} {\bibfnamefont {F.}~\bibnamefont {Arute}}, \bibinfo
  {author} {\bibfnamefont {K.}~\bibnamefont {Arya}}, \bibinfo {author}
  {\bibfnamefont {J.}~\bibnamefont {Atalaya}}, \bibinfo {author} {\bibfnamefont
  {J.~C.}\ \bibnamefont {Bardin}}, \bibinfo {author} {\bibfnamefont
  {A.}~\bibnamefont {Bilmes}}, \bibinfo {author} {\bibfnamefont
  {G.}~\bibnamefont {Bortoli}}, \bibinfo {author} {\bibfnamefont
  {A.}~\bibnamefont {Bourassa}}, \bibinfo {author} {\bibfnamefont
  {J.}~\bibnamefont {Bovaird}}, \bibinfo {author} {\bibfnamefont
  {L.}~\bibnamefont {Brill}}, \bibinfo {author} {\bibfnamefont
  {M.}~\bibnamefont {Broughton}}, \bibinfo {author} {\bibfnamefont {B.~B.}\
  \bibnamefont {Buckley}}, \bibinfo {author} {\bibfnamefont {D.~A.}\
  \bibnamefont {Buell}}, \bibinfo {author} {\bibfnamefont {T.}~\bibnamefont
  {Burger}}, \bibinfo {author} {\bibfnamefont {B.}~\bibnamefont {Burkett}},
  \bibinfo {author} {\bibfnamefont {N.}~\bibnamefont {Bushnell}}, \bibinfo
  {author} {\bibfnamefont {J.}~\bibnamefont {Campero}}, \bibinfo {author}
  {\bibfnamefont {H.~S.}\ \bibnamefont {Chang}}, \bibinfo {author}
  {\bibfnamefont {B.}~\bibnamefont {Chiaro}}, \bibinfo {author} {\bibfnamefont
  {D.}~\bibnamefont {Chik}}, \bibinfo {author} {\bibfnamefont {C.}~\bibnamefont
  {Chou}}, \bibinfo {author} {\bibfnamefont {J.}~\bibnamefont {Cogan}},
  \bibinfo {author} {\bibfnamefont {R.}~\bibnamefont {Collins}}, \bibinfo
  {author} {\bibfnamefont {P.}~\bibnamefont {Conner}}, \bibinfo {author}
  {\bibfnamefont {W.}~\bibnamefont {Courtney}}, \bibinfo {author}
  {\bibfnamefont {A.~L.}\ \bibnamefont {Crook}}, \bibinfo {author}
  {\bibfnamefont {B.}~\bibnamefont {Curtin}}, \bibinfo {author} {\bibfnamefont
  {D.~M.}\ \bibnamefont {Debroy}}, \bibinfo {author} {\bibfnamefont {A.~D.~T.}\
  \bibnamefont {Barba}}, \bibinfo {author} {\bibfnamefont {S.}~\bibnamefont
  {Demura}}, \bibinfo {author} {\bibfnamefont {A.~D.}\ \bibnamefont {Paolo}},
  \bibinfo {author} {\bibfnamefont {A.}~\bibnamefont {Dunsworth}}, \bibinfo
  {author} {\bibfnamefont {L.}~\bibnamefont {Faoro}}, \bibinfo {author}
  {\bibfnamefont {E.}~\bibnamefont {Farhi}}, \bibinfo {author} {\bibfnamefont
  {R.}~\bibnamefont {Fatemi}}, \bibinfo {author} {\bibfnamefont {V.~S.}\
  \bibnamefont {Ferreira}}, \bibinfo {author} {\bibfnamefont {L.~F.}\
  \bibnamefont {Burgos}}, \bibinfo {author} {\bibfnamefont {E.}~\bibnamefont
  {Forati}}, \bibinfo {author} {\bibfnamefont {A.~G.}\ \bibnamefont {Fowler}},
  \bibinfo {author} {\bibfnamefont {B.}~\bibnamefont {Foxen}}, \bibinfo
  {author} {\bibfnamefont {G.}~\bibnamefont {Garcia}}, \bibinfo {author}
  {\bibfnamefont {{\'E}.}~\bibnamefont {Genois}}, \bibinfo {author}
  {\bibfnamefont {W.}~\bibnamefont {Giang}}, \bibinfo {author} {\bibfnamefont
  {C.}~\bibnamefont {Gidney}}, \bibinfo {author} {\bibfnamefont
  {D.}~\bibnamefont {Gilboa}}, \bibinfo {author} {\bibfnamefont
  {M.}~\bibnamefont {Giustina}}, \bibinfo {author} {\bibfnamefont
  {R.}~\bibnamefont {Gosula}}, \bibinfo {author} {\bibfnamefont {A.~G.}\
  \bibnamefont {Dau}}, \bibinfo {author} {\bibfnamefont {J.~A.}\ \bibnamefont
  {Gross}}, \bibinfo {author} {\bibfnamefont {S.}~\bibnamefont {Habegger}},
  \bibinfo {author} {\bibfnamefont {M.~C.}\ \bibnamefont {Hamilton}}, \bibinfo
  {author} {\bibfnamefont {M.}~\bibnamefont {Hansen}}, \bibinfo {author}
  {\bibfnamefont {M.~P.}\ \bibnamefont {Harrigan}}, \bibinfo {author}
  {\bibfnamefont {S.~D.}\ \bibnamefont {Harrington}}, \bibinfo {author}
  {\bibfnamefont {P.}~\bibnamefont {Heu}}, \bibinfo {author} {\bibfnamefont
  {M.~R.}\ \bibnamefont {Hoffmann}}, \bibinfo {author} {\bibfnamefont
  {T.}~\bibnamefont {Huang}}, \bibinfo {author} {\bibfnamefont
  {A.}~\bibnamefont {Huff}}, \bibinfo {author} {\bibfnamefont {W.~J.}\
  \bibnamefont {Huggins}}, \bibinfo {author} {\bibfnamefont {L.~B.}\
  \bibnamefont {Ioffe}}, \bibinfo {author} {\bibfnamefont {S.~V.}\ \bibnamefont
  {Isakov}}, \bibinfo {author} {\bibfnamefont {J.}~\bibnamefont {Iveland}},
  \bibinfo {author} {\bibfnamefont {E.}~\bibnamefont {Jeffrey}}, \bibinfo
  {author} {\bibfnamefont {Z.}~\bibnamefont {Jiang}}, \bibinfo {author}
  {\bibfnamefont {C.}~\bibnamefont {Jones}}, \bibinfo {author} {\bibfnamefont
  {P.}~\bibnamefont {Juhas}}, \bibinfo {author} {\bibfnamefont
  {D.}~\bibnamefont {Kafri}}, \bibinfo {author} {\bibfnamefont
  {T.}~\bibnamefont {Khattar}}, \bibinfo {author} {\bibfnamefont
  {M.}~\bibnamefont {Khezri}}, \bibinfo {author} {\bibfnamefont
  {M.}~\bibnamefont {Kieferov{\'a}}}, \bibinfo {author} {\bibfnamefont
  {S.}~\bibnamefont {Kim}}, \bibinfo {author} {\bibfnamefont {A.}~\bibnamefont
  {Kitaev}}, \bibinfo {author} {\bibfnamefont {A.~R.}\ \bibnamefont {Klots}},
  \bibinfo {author} {\bibfnamefont {A.~N.}\ \bibnamefont {Korotkov}}, \bibinfo
  {author} {\bibfnamefont {F.}~\bibnamefont {Kostritsa}}, \bibinfo {author}
  {\bibfnamefont {J.~M.}\ \bibnamefont {Kreikebaum}}, \bibinfo {author}
  {\bibfnamefont {D.}~\bibnamefont {Landhuis}}, \bibinfo {author}
  {\bibfnamefont {P.}~\bibnamefont {Laptev}}, \bibinfo {author} {\bibfnamefont
  {K.~M.}\ \bibnamefont {Lau}}, \bibinfo {author} {\bibfnamefont
  {L.}~\bibnamefont {Laws}}, \bibinfo {author} {\bibfnamefont {J.}~\bibnamefont
  {Lee}}, \bibinfo {author} {\bibfnamefont {K.~W.}\ \bibnamefont {Lee}},
  \bibinfo {author} {\bibfnamefont {Y.~D.}\ \bibnamefont {Lensky}}, \bibinfo
  {author} {\bibfnamefont {B.~J.}\ \bibnamefont {Lester}}, \bibinfo {author}
  {\bibfnamefont {A.~T.}\ \bibnamefont {Lill}}, \bibinfo {author}
  {\bibfnamefont {W.}~\bibnamefont {Liu}}, \bibinfo {author} {\bibfnamefont
  {W.~P.}\ \bibnamefont {Livingston}}, \bibinfo {author} {\bibfnamefont
  {A.}~\bibnamefont {Locharla}}, \bibinfo {author} {\bibfnamefont {F.~D.}\
  \bibnamefont {Malone}}, \bibinfo {author} {\bibfnamefont {O.}~\bibnamefont
  {Martin}}, \bibinfo {author} {\bibfnamefont {S.}~\bibnamefont {Martin}},
  \bibinfo {author} {\bibfnamefont {J.~R.}\ \bibnamefont {McClean}}, \bibinfo
  {author} {\bibfnamefont {M.}~\bibnamefont {McEwen}}, \bibinfo {author}
  {\bibfnamefont {K.~C.}\ \bibnamefont {Miao}}, \bibinfo {author}
  {\bibfnamefont {A.}~\bibnamefont {Mieszala}}, \bibinfo {author}
  {\bibfnamefont {S.}~\bibnamefont {Montazeri}}, \bibinfo {author}
  {\bibfnamefont {W.}~\bibnamefont {Mruczkiewicz}}, \bibinfo {author}
  {\bibfnamefont {O.}~\bibnamefont {Naaman}}, \bibinfo {author} {\bibfnamefont
  {M.}~\bibnamefont {Neeley}}, \bibinfo {author} {\bibfnamefont
  {C.}~\bibnamefont {Neill}}, \bibinfo {author} {\bibfnamefont
  {A.}~\bibnamefont {Nersisyan}}, \bibinfo {author} {\bibfnamefont
  {M.}~\bibnamefont {Newman}}, \bibinfo {author} {\bibfnamefont {J.~H.}\
  \bibnamefont {Ng}}, \bibinfo {author} {\bibfnamefont {A.}~\bibnamefont
  {Nguyen}}, \bibinfo {author} {\bibfnamefont {M.}~\bibnamefont {Nguyen}},
  \bibinfo {author} {\bibfnamefont {M.~Y.}\ \bibnamefont {Niu}}, \bibinfo
  {author} {\bibfnamefont {T.~E.}\ \bibnamefont {O'Brien}}, \bibinfo {author}
  {\bibfnamefont {S.}~\bibnamefont {Omonije}}, \bibinfo {author} {\bibfnamefont
  {A.}~\bibnamefont {Opremcak}}, \bibinfo {author} {\bibfnamefont
  {A.}~\bibnamefont {Petukhov}}, \bibinfo {author} {\bibfnamefont
  {R.}~\bibnamefont {Potter}}, \bibinfo {author} {\bibfnamefont {L.~P.}\
  \bibnamefont {Pryadko}}, \bibinfo {author} {\bibfnamefont {C.}~\bibnamefont
  {Quintana}}, \bibinfo {author} {\bibfnamefont {D.~M.}\ \bibnamefont
  {Rhodes}}, \bibinfo {author} {\bibfnamefont {C.}~\bibnamefont {Rocque}},
  \bibinfo {author} {\bibfnamefont {E.}~\bibnamefont {Rosenberg}}, \bibinfo
  {author} {\bibfnamefont {N.~C.}\ \bibnamefont {Rubin}}, \bibinfo {author}
  {\bibfnamefont {N.}~\bibnamefont {Saei}}, \bibinfo {author} {\bibfnamefont
  {D.}~\bibnamefont {Sank}}, \bibinfo {author} {\bibfnamefont {K.}~\bibnamefont
  {Sankaragomathi}}, \bibinfo {author} {\bibfnamefont {K.~J.}\ \bibnamefont
  {Satzinger}}, \bibinfo {author} {\bibfnamefont {H.~F.}\ \bibnamefont
  {Schurkus}}, \bibinfo {author} {\bibfnamefont {C.}~\bibnamefont {Schuster}},
  \bibinfo {author} {\bibfnamefont {M.~J.}\ \bibnamefont {Shearn}}, \bibinfo
  {author} {\bibfnamefont {A.}~\bibnamefont {Shorter}}, \bibinfo {author}
  {\bibfnamefont {N.}~\bibnamefont {Shutty}}, \bibinfo {author} {\bibfnamefont
  {V.}~\bibnamefont {Shvarts}}, \bibinfo {author} {\bibfnamefont
  {V.}~\bibnamefont {Sivak}}, \bibinfo {author} {\bibfnamefont
  {J.}~\bibnamefont {Skruzny}}, \bibinfo {author} {\bibfnamefont {W.~C.}\
  \bibnamefont {Smith}}, \bibinfo {author} {\bibfnamefont {R.~D.}\ \bibnamefont
  {Somma}}, \bibinfo {author} {\bibfnamefont {G.}~\bibnamefont {Sterling}},
  \bibinfo {author} {\bibfnamefont {D.}~\bibnamefont {Strain}}, \bibinfo
  {author} {\bibfnamefont {M.}~\bibnamefont {Szalay}}, \bibinfo {author}
  {\bibfnamefont {D.}~\bibnamefont {Thor}}, \bibinfo {author} {\bibfnamefont
  {A.}~\bibnamefont {Torres}}, \bibinfo {author} {\bibfnamefont
  {G.}~\bibnamefont {Vidal}}, \bibinfo {author} {\bibfnamefont {C.~V.}\
  \bibnamefont {Heidweiller}}, \bibinfo {author} {\bibfnamefont
  {T.}~\bibnamefont {White}}, \bibinfo {author} {\bibfnamefont {B.~W.~K.}\
  \bibnamefont {Woo}}, \bibinfo {author} {\bibfnamefont {C.}~\bibnamefont
  {Xing}}, \bibinfo {author} {\bibfnamefont {Z.~J.}\ \bibnamefont {Yao}},
  \bibinfo {author} {\bibfnamefont {P.}~\bibnamefont {Yeh}}, \bibinfo {author}
  {\bibfnamefont {J.}~\bibnamefont {Yoo}}, \bibinfo {author} {\bibfnamefont
  {G.}~\bibnamefont {Young}}, \bibinfo {author} {\bibfnamefont
  {A.}~\bibnamefont {Zalcman}}, \bibinfo {author} {\bibfnamefont
  {Y.}~\bibnamefont {Zhang}}, \bibinfo {author} {\bibfnamefont
  {N.}~\bibnamefont {Zhu}}, \bibinfo {author} {\bibfnamefont {N.}~\bibnamefont
  {Zobrist}}, \bibinfo {author} {\bibfnamefont {E.~G.}\ \bibnamefont
  {Rieffel}}, \bibinfo {author} {\bibfnamefont {R.}~\bibnamefont {Biswas}},
  \bibinfo {author} {\bibfnamefont {R.}~\bibnamefont {Babbush}}, \bibinfo
  {author} {\bibfnamefont {D.}~\bibnamefont {Bacon}}, \bibinfo {author}
  {\bibfnamefont {J.}~\bibnamefont {Hilton}}, \bibinfo {author} {\bibfnamefont
  {E.}~\bibnamefont {Lucero}}, \bibinfo {author} {\bibfnamefont
  {H.}~\bibnamefont {Neven}}, \bibinfo {author} {\bibfnamefont
  {A.}~\bibnamefont {Megrant}}, \bibinfo {author} {\bibfnamefont
  {J.}~\bibnamefont {Kelly}}, \bibinfo {author} {\bibfnamefont
  {P.}~\bibnamefont {Roushan}}, \bibinfo {author} {\bibfnamefont
  {I.}~\bibnamefont {Aleiner}}, \bibinfo {author} {\bibfnamefont
  {V.}~\bibnamefont {Smelyanskiy}}, \bibinfo {author} {\bibfnamefont
  {K.}~\bibnamefont {Kechedzhi}}, \bibinfo {author} {\bibfnamefont
  {Y.}~\bibnamefont {Chen}},\ and\ \bibinfo {author} {\bibfnamefont
  {S.}~\bibnamefont {Boixo}},\ }\bibfield  {title} {\bibinfo {title} {Phase
  transitions in random circuit sampling},\ }\href
  {https://doi.org/10.1038/s41586-024-07998-6} {\bibfield  {journal} {\bibinfo
  {journal} {Nature}\ }\textbf {\bibinfo {volume} {634}},\ \bibinfo {pages}
  {328} (\bibinfo {year} {2024})}\BibitemShut {NoStop}%
\bibitem [{\citenamefont {Gao}\ \emph {et~al.}(2025)\citenamefont {Gao},
  \citenamefont {Fan}, \citenamefont {Zha}, \citenamefont {Bei}, \citenamefont
  {Cai}, \citenamefont {Cai}, \citenamefont {Cao}, \citenamefont {Chen},
  \citenamefont {Chen}, \citenamefont {Chen}, \citenamefont {Chen},
  \citenamefont {Chen}, \citenamefont {Chen}, \citenamefont {Chen},
  \citenamefont {Chen}, \citenamefont {Chu}, \citenamefont {Deng},
  \citenamefont {Deng}, \citenamefont {Ding}, \citenamefont {Ding},
  \citenamefont {Ding}, \citenamefont {Dong}, \citenamefont {Dong},
  \citenamefont {Fan}, \citenamefont {Fu}, \citenamefont {Gao}, \citenamefont
  {Ge}, \citenamefont {Gong}, \citenamefont {Gui}, \citenamefont {Guo},
  \citenamefont {Guo}, \citenamefont {Guo}, \citenamefont {Han}, \citenamefont
  {He}, \citenamefont {Hong}, \citenamefont {Hu}, \citenamefont {Huang},
  \citenamefont {Huo}, \citenamefont {Jiang}, \citenamefont {Jiang},
  \citenamefont {Jin}, \citenamefont {Leng}, \citenamefont {Li}, \citenamefont
  {Li}, \citenamefont {Li}, \citenamefont {Li}, \citenamefont {Li},
  \citenamefont {Li}, \citenamefont {Li}, \citenamefont {Li}, \citenamefont
  {Li}, \citenamefont {Li}, \citenamefont {Li}, \citenamefont {Li},
  \citenamefont {Liang}, \citenamefont {Liang}, \citenamefont {Liao},
  \citenamefont {Lin}, \citenamefont {Lin}, \citenamefont {Liu}, \citenamefont
  {Liu}, \citenamefont {Liu}, \citenamefont {Liu}, \citenamefont {Liu},
  \citenamefont {Liu}, \citenamefont {Lou}, \citenamefont {Ma}, \citenamefont
  {Meng}, \citenamefont {Mou}, \citenamefont {Nan}, \citenamefont {Nie},
  \citenamefont {Nie}, \citenamefont {Ning}, \citenamefont {Niu}, \citenamefont
  {Peng}, \citenamefont {Qian}, \citenamefont {Rong}, \citenamefont {Rong},
  \citenamefont {Shen}, \citenamefont {Shen}, \citenamefont {Su}, \citenamefont
  {Su}, \citenamefont {Sun}, \citenamefont {Sun}, \citenamefont {Sun},
  \citenamefont {Sun}, \citenamefont {Tan}, \citenamefont {Tan}, \citenamefont
  {Tang}, \citenamefont {Tu}, \citenamefont {Wan}, \citenamefont {Wang},
  \citenamefont {Wang}, \citenamefont {Wang}, \citenamefont {Wang},
  \citenamefont {Wang}, \citenamefont {Wang}, \citenamefont {Wang},
  \citenamefont {Wang}, \citenamefont {Wang}, \citenamefont {Wang},
  \citenamefont {Wang}, \citenamefont {Wang}, \citenamefont {Wang},
  \citenamefont {Wei}, \citenamefont {Wei}, \citenamefont {Wu}, \citenamefont
  {Wu}, \citenamefont {Wu}, \citenamefont {Wu}, \citenamefont {Wu},
  \citenamefont {Xie}, \citenamefont {Xin}, \citenamefont {Xu}, \citenamefont
  {Xue}, \citenamefont {Yan}, \citenamefont {Yang}, \citenamefont {Yang},
  \citenamefont {Yang}, \citenamefont {Ye}, \citenamefont {Ye}, \citenamefont
  {Ying}, \citenamefont {Yu}, \citenamefont {Yu}, \citenamefont {Yu},
  \citenamefont {Zeng}, \citenamefont {Zhan}, \citenamefont {Zhang},
  \citenamefont {Zhang}, \citenamefont {Zhang}, \citenamefont {Zhang},
  \citenamefont {Zhang}, \citenamefont {Zhang}, \citenamefont {Zhang},
  \citenamefont {Zhang}, \citenamefont {Zhao}, \citenamefont {Zhao},
  \citenamefont {Zhao}, \citenamefont {Zhao}, \citenamefont {Zhao},
  \citenamefont {Zhao}, \citenamefont {Zheng}, \citenamefont {Zhou},
  \citenamefont {Zhou}, \citenamefont {Zhou}, \citenamefont {Zhou},
  \citenamefont {Zhou}, \citenamefont {Zhou}, \citenamefont {Zhou},
  \citenamefont {Zhu}, \citenamefont {Zhu}, \citenamefont {Zou}, \citenamefont
  {Zou}, \citenamefont {Zhang}, \citenamefont {Lu}, \citenamefont {Peng},
  \citenamefont {Zhu},\ and\ \citenamefont {Pan}}]{Gao2025Advantage}%
  \BibitemOpen
  \bibfield  {author} {\bibinfo {author} {\bibfnamefont {D.}~\bibnamefont
  {Gao}}, \bibinfo {author} {\bibfnamefont {D.}~\bibnamefont {Fan}}, \bibinfo
  {author} {\bibfnamefont {C.}~\bibnamefont {Zha}}, \bibinfo {author}
  {\bibfnamefont {J.}~\bibnamefont {Bei}}, \bibinfo {author} {\bibfnamefont
  {G.}~\bibnamefont {Cai}}, \bibinfo {author} {\bibfnamefont {J.}~\bibnamefont
  {Cai}}, \bibinfo {author} {\bibfnamefont {S.}~\bibnamefont {Cao}}, \bibinfo
  {author} {\bibfnamefont {F.}~\bibnamefont {Chen}}, \bibinfo {author}
  {\bibfnamefont {J.}~\bibnamefont {Chen}}, \bibinfo {author} {\bibfnamefont
  {K.}~\bibnamefont {Chen}}, \bibinfo {author} {\bibfnamefont {X.}~\bibnamefont
  {Chen}}, \bibinfo {author} {\bibfnamefont {X.}~\bibnamefont {Chen}}, \bibinfo
  {author} {\bibfnamefont {Z.}~\bibnamefont {Chen}}, \bibinfo {author}
  {\bibfnamefont {Z.}~\bibnamefont {Chen}}, \bibinfo {author} {\bibfnamefont
  {Z.}~\bibnamefont {Chen}}, \bibinfo {author} {\bibfnamefont {W.}~\bibnamefont
  {Chu}}, \bibinfo {author} {\bibfnamefont {H.}~\bibnamefont {Deng}}, \bibinfo
  {author} {\bibfnamefont {Z.}~\bibnamefont {Deng}}, \bibinfo {author}
  {\bibfnamefont {P.}~\bibnamefont {Ding}}, \bibinfo {author} {\bibfnamefont
  {X.}~\bibnamefont {Ding}}, \bibinfo {author} {\bibfnamefont {Z.}~\bibnamefont
  {Ding}}, \bibinfo {author} {\bibfnamefont {S.}~\bibnamefont {Dong}}, \bibinfo
  {author} {\bibfnamefont {Y.}~\bibnamefont {Dong}}, \bibinfo {author}
  {\bibfnamefont {B.}~\bibnamefont {Fan}}, \bibinfo {author} {\bibfnamefont
  {Y.}~\bibnamefont {Fu}}, \bibinfo {author} {\bibfnamefont {S.}~\bibnamefont
  {Gao}}, \bibinfo {author} {\bibfnamefont {L.}~\bibnamefont {Ge}}, \bibinfo
  {author} {\bibfnamefont {M.}~\bibnamefont {Gong}}, \bibinfo {author}
  {\bibfnamefont {J.}~\bibnamefont {Gui}}, \bibinfo {author} {\bibfnamefont
  {C.}~\bibnamefont {Guo}}, \bibinfo {author} {\bibfnamefont {S.}~\bibnamefont
  {Guo}}, \bibinfo {author} {\bibfnamefont {X.}~\bibnamefont {Guo}}, \bibinfo
  {author} {\bibfnamefont {L.}~\bibnamefont {Han}}, \bibinfo {author}
  {\bibfnamefont {T.}~\bibnamefont {He}}, \bibinfo {author} {\bibfnamefont
  {L.}~\bibnamefont {Hong}}, \bibinfo {author} {\bibfnamefont {Y.}~\bibnamefont
  {Hu}}, \bibinfo {author} {\bibfnamefont {H.-L.}\ \bibnamefont {Huang}},
  \bibinfo {author} {\bibfnamefont {Y.-H.}\ \bibnamefont {Huo}}, \bibinfo
  {author} {\bibfnamefont {T.}~\bibnamefont {Jiang}}, \bibinfo {author}
  {\bibfnamefont {Z.}~\bibnamefont {Jiang}}, \bibinfo {author} {\bibfnamefont
  {H.}~\bibnamefont {Jin}}, \bibinfo {author} {\bibfnamefont {Y.}~\bibnamefont
  {Leng}}, \bibinfo {author} {\bibfnamefont {D.}~\bibnamefont {Li}}, \bibinfo
  {author} {\bibfnamefont {D.}~\bibnamefont {Li}}, \bibinfo {author}
  {\bibfnamefont {F.}~\bibnamefont {Li}}, \bibinfo {author} {\bibfnamefont
  {J.}~\bibnamefont {Li}}, \bibinfo {author} {\bibfnamefont {J.}~\bibnamefont
  {Li}}, \bibinfo {author} {\bibfnamefont {J.}~\bibnamefont {Li}}, \bibinfo
  {author} {\bibfnamefont {J.}~\bibnamefont {Li}}, \bibinfo {author}
  {\bibfnamefont {N.}~\bibnamefont {Li}}, \bibinfo {author} {\bibfnamefont
  {S.}~\bibnamefont {Li}}, \bibinfo {author} {\bibfnamefont {W.}~\bibnamefont
  {Li}}, \bibinfo {author} {\bibfnamefont {Y.}~\bibnamefont {Li}}, \bibinfo
  {author} {\bibfnamefont {Y.}~\bibnamefont {Li}}, \bibinfo {author}
  {\bibfnamefont {F.}~\bibnamefont {Liang}}, \bibinfo {author} {\bibfnamefont
  {X.}~\bibnamefont {Liang}}, \bibinfo {author} {\bibfnamefont
  {N.}~\bibnamefont {Liao}}, \bibinfo {author} {\bibfnamefont {J.}~\bibnamefont
  {Lin}}, \bibinfo {author} {\bibfnamefont {W.}~\bibnamefont {Lin}}, \bibinfo
  {author} {\bibfnamefont {D.}~\bibnamefont {Liu}}, \bibinfo {author}
  {\bibfnamefont {H.}~\bibnamefont {Liu}}, \bibinfo {author} {\bibfnamefont
  {M.}~\bibnamefont {Liu}}, \bibinfo {author} {\bibfnamefont {X.}~\bibnamefont
  {Liu}}, \bibinfo {author} {\bibfnamefont {X.}~\bibnamefont {Liu}}, \bibinfo
  {author} {\bibfnamefont {Y.}~\bibnamefont {Liu}}, \bibinfo {author}
  {\bibfnamefont {H.}~\bibnamefont {Lou}}, \bibinfo {author} {\bibfnamefont
  {Y.}~\bibnamefont {Ma}}, \bibinfo {author} {\bibfnamefont {L.}~\bibnamefont
  {Meng}}, \bibinfo {author} {\bibfnamefont {H.}~\bibnamefont {Mou}}, \bibinfo
  {author} {\bibfnamefont {K.}~\bibnamefont {Nan}}, \bibinfo {author}
  {\bibfnamefont {B.}~\bibnamefont {Nie}}, \bibinfo {author} {\bibfnamefont
  {M.}~\bibnamefont {Nie}}, \bibinfo {author} {\bibfnamefont {J.}~\bibnamefont
  {Ning}}, \bibinfo {author} {\bibfnamefont {L.}~\bibnamefont {Niu}}, \bibinfo
  {author} {\bibfnamefont {W.}~\bibnamefont {Peng}}, \bibinfo {author}
  {\bibfnamefont {H.}~\bibnamefont {Qian}}, \bibinfo {author} {\bibfnamefont
  {H.}~\bibnamefont {Rong}}, \bibinfo {author} {\bibfnamefont {T.}~\bibnamefont
  {Rong}}, \bibinfo {author} {\bibfnamefont {H.}~\bibnamefont {Shen}}, \bibinfo
  {author} {\bibfnamefont {Q.}~\bibnamefont {Shen}}, \bibinfo {author}
  {\bibfnamefont {H.}~\bibnamefont {Su}}, \bibinfo {author} {\bibfnamefont
  {F.}~\bibnamefont {Su}}, \bibinfo {author} {\bibfnamefont {C.}~\bibnamefont
  {Sun}}, \bibinfo {author} {\bibfnamefont {L.}~\bibnamefont {Sun}}, \bibinfo
  {author} {\bibfnamefont {T.}~\bibnamefont {Sun}}, \bibinfo {author}
  {\bibfnamefont {Y.}~\bibnamefont {Sun}}, \bibinfo {author} {\bibfnamefont
  {Y.}~\bibnamefont {Tan}}, \bibinfo {author} {\bibfnamefont {J.}~\bibnamefont
  {Tan}}, \bibinfo {author} {\bibfnamefont {L.}~\bibnamefont {Tang}}, \bibinfo
  {author} {\bibfnamefont {W.}~\bibnamefont {Tu}}, \bibinfo {author}
  {\bibfnamefont {C.}~\bibnamefont {Wan}}, \bibinfo {author} {\bibfnamefont
  {J.}~\bibnamefont {Wang}}, \bibinfo {author} {\bibfnamefont {B.}~\bibnamefont
  {Wang}}, \bibinfo {author} {\bibfnamefont {C.}~\bibnamefont {Wang}}, \bibinfo
  {author} {\bibfnamefont {C.}~\bibnamefont {Wang}}, \bibinfo {author}
  {\bibfnamefont {C.}~\bibnamefont {Wang}}, \bibinfo {author} {\bibfnamefont
  {J.}~\bibnamefont {Wang}}, \bibinfo {author} {\bibfnamefont {L.}~\bibnamefont
  {Wang}}, \bibinfo {author} {\bibfnamefont {R.}~\bibnamefont {Wang}}, \bibinfo
  {author} {\bibfnamefont {S.}~\bibnamefont {Wang}}, \bibinfo {author}
  {\bibfnamefont {X.}~\bibnamefont {Wang}}, \bibinfo {author} {\bibfnamefont
  {X.}~\bibnamefont {Wang}}, \bibinfo {author} {\bibfnamefont {X.}~\bibnamefont
  {Wang}}, \bibinfo {author} {\bibfnamefont {Y.}~\bibnamefont {Wang}}, \bibinfo
  {author} {\bibfnamefont {Z.}~\bibnamefont {Wei}}, \bibinfo {author}
  {\bibfnamefont {J.}~\bibnamefont {Wei}}, \bibinfo {author} {\bibfnamefont
  {D.}~\bibnamefont {Wu}}, \bibinfo {author} {\bibfnamefont {G.}~\bibnamefont
  {Wu}}, \bibinfo {author} {\bibfnamefont {J.}~\bibnamefont {Wu}}, \bibinfo
  {author} {\bibfnamefont {S.}~\bibnamefont {Wu}}, \bibinfo {author}
  {\bibfnamefont {Y.}~\bibnamefont {Wu}}, \bibinfo {author} {\bibfnamefont
  {S.}~\bibnamefont {Xie}}, \bibinfo {author} {\bibfnamefont {L.}~\bibnamefont
  {Xin}}, \bibinfo {author} {\bibfnamefont {Y.}~\bibnamefont {Xu}}, \bibinfo
  {author} {\bibfnamefont {C.}~\bibnamefont {Xue}}, \bibinfo {author}
  {\bibfnamefont {K.}~\bibnamefont {Yan}}, \bibinfo {author} {\bibfnamefont
  {W.}~\bibnamefont {Yang}}, \bibinfo {author} {\bibfnamefont {X.}~\bibnamefont
  {Yang}}, \bibinfo {author} {\bibfnamefont {Y.}~\bibnamefont {Yang}}, \bibinfo
  {author} {\bibfnamefont {Y.}~\bibnamefont {Ye}}, \bibinfo {author}
  {\bibfnamefont {Z.}~\bibnamefont {Ye}}, \bibinfo {author} {\bibfnamefont
  {C.}~\bibnamefont {Ying}}, \bibinfo {author} {\bibfnamefont {J.}~\bibnamefont
  {Yu}}, \bibinfo {author} {\bibfnamefont {Q.}~\bibnamefont {Yu}}, \bibinfo
  {author} {\bibfnamefont {W.}~\bibnamefont {Yu}}, \bibinfo {author}
  {\bibfnamefont {X.}~\bibnamefont {Zeng}}, \bibinfo {author} {\bibfnamefont
  {S.}~\bibnamefont {Zhan}}, \bibinfo {author} {\bibfnamefont {F.}~\bibnamefont
  {Zhang}}, \bibinfo {author} {\bibfnamefont {H.}~\bibnamefont {Zhang}},
  \bibinfo {author} {\bibfnamefont {K.}~\bibnamefont {Zhang}}, \bibinfo
  {author} {\bibfnamefont {P.}~\bibnamefont {Zhang}}, \bibinfo {author}
  {\bibfnamefont {W.}~\bibnamefont {Zhang}}, \bibinfo {author} {\bibfnamefont
  {Y.}~\bibnamefont {Zhang}}, \bibinfo {author} {\bibfnamefont
  {Y.}~\bibnamefont {Zhang}}, \bibinfo {author} {\bibfnamefont
  {L.}~\bibnamefont {Zhang}}, \bibinfo {author} {\bibfnamefont
  {G.}~\bibnamefont {Zhao}}, \bibinfo {author} {\bibfnamefont {P.}~\bibnamefont
  {Zhao}}, \bibinfo {author} {\bibfnamefont {X.}~\bibnamefont {Zhao}}, \bibinfo
  {author} {\bibfnamefont {X.}~\bibnamefont {Zhao}}, \bibinfo {author}
  {\bibfnamefont {Y.}~\bibnamefont {Zhao}}, \bibinfo {author} {\bibfnamefont
  {Z.}~\bibnamefont {Zhao}}, \bibinfo {author} {\bibfnamefont {L.}~\bibnamefont
  {Zheng}}, \bibinfo {author} {\bibfnamefont {F.}~\bibnamefont {Zhou}},
  \bibinfo {author} {\bibfnamefont {L.}~\bibnamefont {Zhou}}, \bibinfo {author}
  {\bibfnamefont {N.}~\bibnamefont {Zhou}}, \bibinfo {author} {\bibfnamefont
  {N.}~\bibnamefont {Zhou}}, \bibinfo {author} {\bibfnamefont {S.}~\bibnamefont
  {Zhou}}, \bibinfo {author} {\bibfnamefont {S.}~\bibnamefont {Zhou}}, \bibinfo
  {author} {\bibfnamefont {Z.}~\bibnamefont {Zhou}}, \bibinfo {author}
  {\bibfnamefont {C.}~\bibnamefont {Zhu}}, \bibinfo {author} {\bibfnamefont
  {Q.}~\bibnamefont {Zhu}}, \bibinfo {author} {\bibfnamefont {G.}~\bibnamefont
  {Zou}}, \bibinfo {author} {\bibfnamefont {H.}~\bibnamefont {Zou}}, \bibinfo
  {author} {\bibfnamefont {Q.}~\bibnamefont {Zhang}}, \bibinfo {author}
  {\bibfnamefont {C.-Y.}\ \bibnamefont {Lu}}, \bibinfo {author} {\bibfnamefont
  {C.-Z.}\ \bibnamefont {Peng}}, \bibinfo {author} {\bibfnamefont
  {X.}~\bibnamefont {Zhu}},\ and\ \bibinfo {author} {\bibfnamefont {J.-W.}\
  \bibnamefont {Pan}},\ }\bibfield  {title} {\bibinfo {title} {Establishing a
  new benchmark in quantum computational advantage with 105-qubit zuchongzhi
  3.0 processor},\ }\href {https://doi.org/10.1103/PhysRevLett.134.090601}
  {\bibfield  {journal} {\bibinfo  {journal} {Phys. Rev. Lett.}\ }\textbf
  {\bibinfo {volume} {134}},\ \bibinfo {pages} {090601} (\bibinfo {year}
  {2025})}\BibitemShut {NoStop}%
\bibitem [{\citenamefont {Elben}\ \emph {et~al.}(2023)\citenamefont {Elben},
  \citenamefont {Flammia}, \citenamefont {Huang}, \citenamefont {Kueng},
  \citenamefont {Preskill}, \citenamefont {Vermersch},\ and\ \citenamefont
  {Zoller}}]{Elben2023toolbox}%
  \BibitemOpen
  \bibfield  {author} {\bibinfo {author} {\bibfnamefont {A.}~\bibnamefont
  {Elben}}, \bibinfo {author} {\bibfnamefont {S.~T.}\ \bibnamefont {Flammia}},
  \bibinfo {author} {\bibfnamefont {H.-Y.}\ \bibnamefont {Huang}}, \bibinfo
  {author} {\bibfnamefont {R.}~\bibnamefont {Kueng}}, \bibinfo {author}
  {\bibfnamefont {J.}~\bibnamefont {Preskill}}, \bibinfo {author}
  {\bibfnamefont {B.}~\bibnamefont {Vermersch}},\ and\ \bibinfo {author}
  {\bibfnamefont {P.}~\bibnamefont {Zoller}},\ }\bibfield  {title} {\bibinfo
  {title} {The randomized measurement toolbox},\ }\href
  {https://doi.org/10.1038/s42254-022-00535-2} {\bibfield  {journal} {\bibinfo
  {journal} {Nature Reviews Physics}\ }\textbf {\bibinfo {volume} {5}},\
  \bibinfo {pages} {9} (\bibinfo {year} {2023})}\BibitemShut {NoStop}%
\bibitem [{\citenamefont {Huang}\ \emph {et~al.}(2020)\citenamefont {Huang},
  \citenamefont {Kueng},\ and\ \citenamefont {Preskill}}]{huang2020shadow}%
  \BibitemOpen
  \bibfield  {author} {\bibinfo {author} {\bibfnamefont {H.-Y.}\ \bibnamefont
  {Huang}}, \bibinfo {author} {\bibfnamefont {R.}~\bibnamefont {Kueng}},\ and\
  \bibinfo {author} {\bibfnamefont {J.}~\bibnamefont {Preskill}},\ }\bibfield
  {title} {\bibinfo {title} {Predicting many properties of a quantum system
  from very few measurements},\ }\href
  {https://doi.org/10.1038/s41567-020-0932-7} {\bibfield  {journal} {\bibinfo
  {journal} {Nat. Phys.}\ }\textbf {\bibinfo {volume} {16}},\ \bibinfo {pages}
  {1050} (\bibinfo {year} {2020})}\BibitemShut {NoStop}%
\bibitem [{\citenamefont {Chen}\ \emph {et~al.}(2021)\citenamefont {Chen},
  \citenamefont {Yu}, \citenamefont {Zeng},\ and\ \citenamefont
  {Flammia}}]{chen2021robust}%
  \BibitemOpen
  \bibfield  {author} {\bibinfo {author} {\bibfnamefont {S.}~\bibnamefont
  {Chen}}, \bibinfo {author} {\bibfnamefont {W.}~\bibnamefont {Yu}}, \bibinfo
  {author} {\bibfnamefont {P.}~\bibnamefont {Zeng}},\ and\ \bibinfo {author}
  {\bibfnamefont {S.~T.}\ \bibnamefont {Flammia}},\ }\bibfield  {title}
  {\bibinfo {title} {Robust shadow estimation},\ }\href
  {https://link.aps.org/doi/10.1103/PRXQuantum.2.030348} {\bibfield  {journal}
  {\bibinfo  {journal} {PRX Quantum}\ }\textbf {\bibinfo {volume} {2}},\
  \bibinfo {pages} {030348} (\bibinfo {year} {2021})}\BibitemShut {NoStop}%
\bibitem [{\citenamefont {Liu}\ \emph {et~al.}(2024{\natexlab{a}})\citenamefont
  {Liu}, \citenamefont {Li}, \citenamefont {Yuan}, \citenamefont {Zhu},\ and\
  \citenamefont {Zhou}}]{liu2024auxiliaryfreereplicashadowestimation}%
  \BibitemOpen
  \bibfield  {author} {\bibinfo {author} {\bibfnamefont {Q.}~\bibnamefont
  {Liu}}, \bibinfo {author} {\bibfnamefont {Z.}~\bibnamefont {Li}}, \bibinfo
  {author} {\bibfnamefont {X.}~\bibnamefont {Yuan}}, \bibinfo {author}
  {\bibfnamefont {H.}~\bibnamefont {Zhu}},\ and\ \bibinfo {author}
  {\bibfnamefont {Y.}~\bibnamefont {Zhou}},\ }\href
  {https://arxiv.org/abs/2407.20865} {\bibinfo {title} {Auxiliary-free replica
  shadow estimation}} (\bibinfo {year} {2024}{\natexlab{a}}),\ \Eprint
  {https://arxiv.org/abs/2407.20865} {arXiv:2407.20865 [quant-ph]} \BibitemShut
  {NoStop}%
\bibitem [{\citenamefont {Knill}\ \emph {et~al.}(2008)\citenamefont {Knill},
  \citenamefont {Leibfried}, \citenamefont {Reichle}, \citenamefont {Britton},
  \citenamefont {Blakestad}, \citenamefont {Jost}, \citenamefont {Langer},
  \citenamefont {Ozeri}, \citenamefont {Seidelin},\ and\ \citenamefont
  {Wineland}}]{Knill2008RB}%
  \BibitemOpen
  \bibfield  {author} {\bibinfo {author} {\bibfnamefont {E.}~\bibnamefont
  {Knill}}, \bibinfo {author} {\bibfnamefont {D.}~\bibnamefont {Leibfried}},
  \bibinfo {author} {\bibfnamefont {R.}~\bibnamefont {Reichle}}, \bibinfo
  {author} {\bibfnamefont {J.}~\bibnamefont {Britton}}, \bibinfo {author}
  {\bibfnamefont {R.~B.}\ \bibnamefont {Blakestad}}, \bibinfo {author}
  {\bibfnamefont {J.~D.}\ \bibnamefont {Jost}}, \bibinfo {author}
  {\bibfnamefont {C.}~\bibnamefont {Langer}}, \bibinfo {author} {\bibfnamefont
  {R.}~\bibnamefont {Ozeri}}, \bibinfo {author} {\bibfnamefont
  {S.}~\bibnamefont {Seidelin}},\ and\ \bibinfo {author} {\bibfnamefont
  {D.~J.}\ \bibnamefont {Wineland}},\ }\bibfield  {title} {\bibinfo {title}
  {Randomized benchmarking of quantum gates},\ }\href
  {https://doi.org/10.1103/PhysRevA.77.012307} {\bibfield  {journal} {\bibinfo
  {journal} {Phys. Rev. A}\ }\textbf {\bibinfo {volume} {77}},\ \bibinfo
  {pages} {012307} (\bibinfo {year} {2008})}\BibitemShut {NoStop}%
\bibitem [{\citenamefont {Magesan}\ \emph {et~al.}(2011)\citenamefont
  {Magesan}, \citenamefont {Gambetta},\ and\ \citenamefont
  {Emerson}}]{Emerson2011prlRB}%
  \BibitemOpen
  \bibfield  {author} {\bibinfo {author} {\bibfnamefont {E.}~\bibnamefont
  {Magesan}}, \bibinfo {author} {\bibfnamefont {J.~M.}\ \bibnamefont
  {Gambetta}},\ and\ \bibinfo {author} {\bibfnamefont {J.}~\bibnamefont
  {Emerson}},\ }\bibfield  {title} {\bibinfo {title} {Scalable and robust
  randomized benchmarking of quantum processes},\ }\href
  {https://doi.org/10.1103/PhysRevLett.106.180504} {\bibfield  {journal}
  {\bibinfo  {journal} {Phys. Rev. Lett.}\ }\textbf {\bibinfo {volume} {106}},\
  \bibinfo {pages} {180504} (\bibinfo {year} {2011})}\BibitemShut {NoStop}%
\bibitem [{\citenamefont {Liu}\ \emph {et~al.}(2024{\natexlab{b}})\citenamefont
  {Liu}, \citenamefont {Xie}, \citenamefont {Xu},\ and\ \citenamefont
  {Ma}}]{Liu2024multiqubit}%
  \BibitemOpen
  \bibfield  {author} {\bibinfo {author} {\bibfnamefont {G.}~\bibnamefont
  {Liu}}, \bibinfo {author} {\bibfnamefont {Z.}~\bibnamefont {Xie}}, \bibinfo
  {author} {\bibfnamefont {Z.}~\bibnamefont {Xu}},\ and\ \bibinfo {author}
  {\bibfnamefont {X.}~\bibnamefont {Ma}},\ }\bibfield  {title} {\bibinfo
  {title} {Group twirling and noise tailoring for multiqubit controlled phase
  gates},\ }\href {https://doi.org/10.1103/PhysRevResearch.6.043221} {\bibfield
   {journal} {\bibinfo  {journal} {Phys. Rev. Res.}\ }\textbf {\bibinfo
  {volume} {6}},\ \bibinfo {pages} {043221} (\bibinfo {year}
  {2024}{\natexlab{b}})}\BibitemShut {NoStop}%
\bibitem [{\citenamefont {Hayden}\ and\ \citenamefont
  {Preskill}(2007)}]{Patrick2007Blackhole}%
  \BibitemOpen
  \bibfield  {author} {\bibinfo {author} {\bibfnamefont {P.}~\bibnamefont
  {Hayden}}\ and\ \bibinfo {author} {\bibfnamefont {J.}~\bibnamefont
  {Preskill}},\ }\bibfield  {title} {\bibinfo {title} {Black holes as mirrors:
  quantum information in random subsystems},\ }\href
  {https://doi.org/10.1088/1126-6708/2007/09/120} {\bibfield  {journal}
  {\bibinfo  {journal} {Journal of High Energy Physics}\ }\textbf {\bibinfo
  {volume} {2007}},\ \bibinfo {pages} {120} (\bibinfo {year}
  {2007})}\BibitemShut {NoStop}%
\bibitem [{\citenamefont {Almheiri}\ \emph {et~al.}(2015)\citenamefont
  {Almheiri}, \citenamefont {Dong},\ and\ \citenamefont
  {Harlow}}]{Almheiri2015QEC}%
  \BibitemOpen
  \bibfield  {author} {\bibinfo {author} {\bibfnamefont {A.}~\bibnamefont
  {Almheiri}}, \bibinfo {author} {\bibfnamefont {X.}~\bibnamefont {Dong}},\
  and\ \bibinfo {author} {\bibfnamefont {D.}~\bibnamefont {Harlow}},\
  }\bibfield  {title} {\bibinfo {title} {Bulk locality and quantum error
  correction in ads/cft},\ }\href {https://doi.org/10.1007/JHEP04(2015)163}
  {\bibfield  {journal} {\bibinfo  {journal} {Journal of High Energy Physics}\
  }\textbf {\bibinfo {volume} {2015}},\ \bibinfo {pages} {163} (\bibinfo {year}
  {2015})}\BibitemShut {NoStop}%
\bibitem [{\citenamefont {Pastawski}\ \emph {et~al.}(2015)\citenamefont
  {Pastawski}, \citenamefont {Yoshida}, \citenamefont {Harlow},\ and\
  \citenamefont {Preskill}}]{Pastawski2015Holographic}%
  \BibitemOpen
  \bibfield  {author} {\bibinfo {author} {\bibfnamefont {F.}~\bibnamefont
  {Pastawski}}, \bibinfo {author} {\bibfnamefont {B.}~\bibnamefont {Yoshida}},
  \bibinfo {author} {\bibfnamefont {D.}~\bibnamefont {Harlow}},\ and\ \bibinfo
  {author} {\bibfnamefont {J.}~\bibnamefont {Preskill}},\ }\bibfield  {title}
  {\bibinfo {title} {Holographic quantum error-correcting codes: toy models for
  the bulk/boundary correspondence},\ }\href
  {https://doi.org/10.1007/JHEP06(2015)149} {\bibfield  {journal} {\bibinfo
  {journal} {Journal of High Energy Physics}\ }\textbf {\bibinfo {volume}
  {2015}},\ \bibinfo {pages} {149} (\bibinfo {year} {2015})}\BibitemShut
  {NoStop}%
\bibitem [{\citenamefont {Nahum}\ \emph {et~al.}(2017)\citenamefont {Nahum},
  \citenamefont {Ruhman}, \citenamefont {Vijay},\ and\ \citenamefont
  {Haah}}]{Nahum2017Random}%
  \BibitemOpen
  \bibfield  {author} {\bibinfo {author} {\bibfnamefont {A.}~\bibnamefont
  {Nahum}}, \bibinfo {author} {\bibfnamefont {J.}~\bibnamefont {Ruhman}},
  \bibinfo {author} {\bibfnamefont {S.}~\bibnamefont {Vijay}},\ and\ \bibinfo
  {author} {\bibfnamefont {J.}~\bibnamefont {Haah}},\ }\bibfield  {title}
  {\bibinfo {title} {Quantum entanglement growth under random unitary
  dynamics},\ }\href {https://doi.org/10.1103/PhysRevX.7.031016} {\bibfield
  {journal} {\bibinfo  {journal} {Phys. Rev. X}\ }\textbf {\bibinfo {volume}
  {7}},\ \bibinfo {pages} {031016} (\bibinfo {year} {2017})}\BibitemShut
  {NoStop}%
\bibitem [{\citenamefont {Nahum}\ \emph {et~al.}(2018)\citenamefont {Nahum},
  \citenamefont {Ruhman},\ and\ \citenamefont {Huse}}]{Nahum2018randomness}%
  \BibitemOpen
  \bibfield  {author} {\bibinfo {author} {\bibfnamefont {A.}~\bibnamefont
  {Nahum}}, \bibinfo {author} {\bibfnamefont {J.}~\bibnamefont {Ruhman}},\ and\
  \bibinfo {author} {\bibfnamefont {D.~A.}\ \bibnamefont {Huse}},\ }\bibfield
  {title} {\bibinfo {title} {Dynamics of entanglement and transport in
  one-dimensional systems with quenched randomness},\ }\href
  {https://doi.org/10.1103/PhysRevB.98.035118} {\bibfield  {journal} {\bibinfo
  {journal} {Phys. Rev. B}\ }\textbf {\bibinfo {volume} {98}},\ \bibinfo
  {pages} {035118} (\bibinfo {year} {2018})}\BibitemShut {NoStop}%
\bibitem [{\citenamefont {Zhou}\ and\ \citenamefont
  {Nahum}(2019)}]{Zhou2019random}%
  \BibitemOpen
  \bibfield  {author} {\bibinfo {author} {\bibfnamefont {T.}~\bibnamefont
  {Zhou}}\ and\ \bibinfo {author} {\bibfnamefont {A.}~\bibnamefont {Nahum}},\
  }\bibfield  {title} {\bibinfo {title} {Emergent statistical mechanics of
  entanglement in random unitary circuits},\ }\href
  {https://doi.org/10.1103/PhysRevB.99.174205} {\bibfield  {journal} {\bibinfo
  {journal} {Phys. Rev. B}\ }\textbf {\bibinfo {volume} {99}},\ \bibinfo
  {pages} {174205} (\bibinfo {year} {2019})}\BibitemShut {NoStop}%
\bibitem [{\citenamefont {Maslov}\ and\ \citenamefont
  {Roetteler}(2018)}]{Maslov2018Stabilizer}%
  \BibitemOpen
  \bibfield  {author} {\bibinfo {author} {\bibfnamefont {D.}~\bibnamefont
  {Maslov}}\ and\ \bibinfo {author} {\bibfnamefont {M.}~\bibnamefont
  {Roetteler}},\ }\bibfield  {title} {\bibinfo {title} {Shorter stabilizer
  circuits via bruhat decomposition and quantum circuit transformations},\
  }\href {https://doi.org/10.1109/TIT.2018.2825602} {\bibfield  {journal}
  {\bibinfo  {journal} {IEEE Transactions on Information Theory}\ }\textbf
  {\bibinfo {volume} {64}},\ \bibinfo {pages} {4729} (\bibinfo {year}
  {2018})}\BibitemShut {NoStop}%
\bibitem [{\citenamefont {Bravyi}\ and\ \citenamefont
  {Maslov}(2021)}]{Bravyi2021Clifford}%
  \BibitemOpen
  \bibfield  {author} {\bibinfo {author} {\bibfnamefont {S.}~\bibnamefont
  {Bravyi}}\ and\ \bibinfo {author} {\bibfnamefont {D.}~\bibnamefont
  {Maslov}},\ }\bibfield  {title} {\bibinfo {title} {Hadamard-free circuits
  expose the structure of the clifford group},\ }\href
  {https://doi.org/10.1109/TIT.2021.3081415} {\bibfield  {journal} {\bibinfo
  {journal} {IEEE Transactions on Information Theory}\ }\textbf {\bibinfo
  {volume} {67}},\ \bibinfo {pages} {4546} (\bibinfo {year}
  {2021})}\BibitemShut {NoStop}%
\bibitem [{\citenamefont {Bennett}\ \emph {et~al.}(1997)\citenamefont
  {Bennett}, \citenamefont {DiVincenzo},\ and\ \citenamefont
  {Smolin}}]{Bennett1997Erasure}%
  \BibitemOpen
  \bibfield  {author} {\bibinfo {author} {\bibfnamefont {C.~H.}\ \bibnamefont
  {Bennett}}, \bibinfo {author} {\bibfnamefont {D.~P.}\ \bibnamefont
  {DiVincenzo}},\ and\ \bibinfo {author} {\bibfnamefont {J.~A.}\ \bibnamefont
  {Smolin}},\ }\bibfield  {title} {\bibinfo {title} {Capacities of quantum
  erasure channels},\ }\href {https://doi.org/10.1103/PhysRevLett.78.3217}
  {\bibfield  {journal} {\bibinfo  {journal} {Phys. Rev. Lett.}\ }\textbf
  {\bibinfo {volume} {78}},\ \bibinfo {pages} {3217} (\bibinfo {year}
  {1997})}\BibitemShut {NoStop}%
\bibitem [{\citenamefont {Dupuis}\ \emph {et~al.}(2014)\citenamefont {Dupuis},
  \citenamefont {Berta}, \citenamefont {Wullschleger},\ and\ \citenamefont
  {Renner}}]{Dupuis2014Decoupling}%
  \BibitemOpen
  \bibfield  {author} {\bibinfo {author} {\bibfnamefont {F.}~\bibnamefont
  {Dupuis}}, \bibinfo {author} {\bibfnamefont {M.}~\bibnamefont {Berta}},
  \bibinfo {author} {\bibfnamefont {J.}~\bibnamefont {Wullschleger}},\ and\
  \bibinfo {author} {\bibfnamefont {R.}~\bibnamefont {Renner}},\ }\bibfield
  {title} {\bibinfo {title} {One-shot decoupling},\ }\href
  {https://doi.org/10.1007/s00220-014-1990-4} {\bibfield  {journal} {\bibinfo
  {journal} {Communications in Mathematical Physics}\ }\textbf {\bibinfo
  {volume} {328}},\ \bibinfo {pages} {251} (\bibinfo {year}
  {2014})}\BibitemShut {NoStop}%
\bibitem [{\citenamefont {Szehr}\ \emph {et~al.}(2013)\citenamefont {Szehr},
  \citenamefont {Dupuis}, \citenamefont {Tomamichel},\ and\ \citenamefont
  {Renner}}]{Szehr2013Decoupling}%
  \BibitemOpen
  \bibfield  {author} {\bibinfo {author} {\bibfnamefont {O.}~\bibnamefont
  {Szehr}}, \bibinfo {author} {\bibfnamefont {F.}~\bibnamefont {Dupuis}},
  \bibinfo {author} {\bibfnamefont {M.}~\bibnamefont {Tomamichel}},\ and\
  \bibinfo {author} {\bibfnamefont {R.}~\bibnamefont {Renner}},\ }\bibfield
  {title} {\bibinfo {title} {Decoupling with unitary approximate two-designs},\
  }\href {https://doi.org/10.1088/1367-2630/15/5/053022} {\bibfield  {journal}
  {\bibinfo  {journal} {New Journal of Physics}\ }\textbf {\bibinfo {volume}
  {15}},\ \bibinfo {pages} {053022} (\bibinfo {year} {2013})}\BibitemShut
  {NoStop}%
\bibitem [{\citenamefont {Nielsen}\ and\ \citenamefont
  {Chuang}(2010)}]{nielsen2010quantum}%
  \BibitemOpen
  \bibfield  {author} {\bibinfo {author} {\bibfnamefont {M.~A.}\ \bibnamefont
  {Nielsen}}\ and\ \bibinfo {author} {\bibfnamefont {I.~L.}\ \bibnamefont
  {Chuang}},\ }\href {https://doi.org/10.1017/CBO9780511976667} {\emph
  {\bibinfo {title} {Quantum Computation and Quantum Information: 10th
  Anniversary Edition}}}\ (\bibinfo  {publisher} {Cambridge University Press},\
  \bibinfo {year} {2010})\BibitemShut {NoStop}%
\bibitem [{\citenamefont {Horodecki}\ \emph {et~al.}(1999)\citenamefont
  {Horodecki}, \citenamefont {Horodecki},\ and\ \citenamefont
  {Horodecki}}]{horodecki1999generalteleportationchannelsinglet}%
  \BibitemOpen
  \bibfield  {author} {\bibinfo {author} {\bibfnamefont {P.}~\bibnamefont
  {Horodecki}}, \bibinfo {author} {\bibfnamefont {M.}~\bibnamefont
  {Horodecki}},\ and\ \bibinfo {author} {\bibfnamefont {R.}~\bibnamefont
  {Horodecki}},\ }\href {https://arxiv.org/abs/quant-ph/9807091} {\bibinfo
  {title} {General teleportation channel, singlet fraction and
  quasi-distillation}} (\bibinfo {year} {1999}),\ \Eprint
  {https://arxiv.org/abs/quant-ph/9807091} {arXiv:quant-ph/9807091 [quant-ph]}
  \BibitemShut {NoStop}%
\bibitem [{\citenamefont {Gilchrist}\ \emph {et~al.}(2005)\citenamefont
  {Gilchrist}, \citenamefont {Langford},\ and\ \citenamefont
  {Nielsen}}]{Gilchrist2005DistanceMeasures}%
  \BibitemOpen
  \bibfield  {author} {\bibinfo {author} {\bibfnamefont {A.}~\bibnamefont
  {Gilchrist}}, \bibinfo {author} {\bibfnamefont {N.~K.}\ \bibnamefont
  {Langford}},\ and\ \bibinfo {author} {\bibfnamefont {M.~A.}\ \bibnamefont
  {Nielsen}},\ }\bibfield  {title} {\bibinfo {title} {Distance measures to
  compare real and ideal quantum processes},\ }\href
  {https://doi.org/10.1103/PhysRevA.71.062310} {\bibfield  {journal} {\bibinfo
  {journal} {Phys. Rev. A}\ }\textbf {\bibinfo {volume} {71}},\ \bibinfo
  {pages} {062310} (\bibinfo {year} {2005})}\BibitemShut {NoStop}%
\bibitem [{\citenamefont {Devetak}\ and\ \citenamefont
  {Shor}(2005)}]{Devetak2005Capacity}%
  \BibitemOpen
  \bibfield  {author} {\bibinfo {author} {\bibfnamefont {I.}~\bibnamefont
  {Devetak}}\ and\ \bibinfo {author} {\bibfnamefont {P.~W.}\ \bibnamefont
  {Shor}},\ }\bibfield  {title} {\bibinfo {title} {The capacity of a quantum
  channel for simultaneous transmission of classical and quantum information},\
  }\href {https://doi.org/10.1007/s00220-005-1317-6} {\bibfield  {journal}
  {\bibinfo  {journal} {Communications in Mathematical Physics}\ }\textbf
  {\bibinfo {volume} {256}},\ \bibinfo {pages} {287} (\bibinfo {year}
  {2005})}\BibitemShut {NoStop}%
\bibitem [{\citenamefont {Bennett}\ \emph {et~al.}(1996)\citenamefont
  {Bennett}, \citenamefont {DiVincenzo}, \citenamefont {Smolin},\ and\
  \citenamefont {Wootters}}]{Bennett1996QEC}%
  \BibitemOpen
  \bibfield  {author} {\bibinfo {author} {\bibfnamefont {C.~H.}\ \bibnamefont
  {Bennett}}, \bibinfo {author} {\bibfnamefont {D.~P.}\ \bibnamefont
  {DiVincenzo}}, \bibinfo {author} {\bibfnamefont {J.~A.}\ \bibnamefont
  {Smolin}},\ and\ \bibinfo {author} {\bibfnamefont {W.~K.}\ \bibnamefont
  {Wootters}},\ }\bibfield  {title} {\bibinfo {title} {Mixed-state entanglement
  and quantum error correction},\ }\href
  {https://doi.org/10.1103/PhysRevA.54.3824} {\bibfield  {journal} {\bibinfo
  {journal} {Phys. Rev. A}\ }\textbf {\bibinfo {volume} {54}},\ \bibinfo
  {pages} {3824} (\bibinfo {year} {1996})}\BibitemShut {NoStop}%
\bibitem [{\citenamefont {Grassl}\ \emph {et~al.}(1997)\citenamefont {Grassl},
  \citenamefont {Beth},\ and\ \citenamefont {Pellizzari}}]{Grassl1997erasure}%
  \BibitemOpen
  \bibfield  {author} {\bibinfo {author} {\bibfnamefont {M.}~\bibnamefont
  {Grassl}}, \bibinfo {author} {\bibfnamefont {T.}~\bibnamefont {Beth}},\ and\
  \bibinfo {author} {\bibfnamefont {T.}~\bibnamefont {Pellizzari}},\ }\bibfield
   {title} {\bibinfo {title} {Codes for the quantum erasure channel},\ }\href
  {https://doi.org/10.1103/PhysRevA.56.33} {\bibfield  {journal} {\bibinfo
  {journal} {Phys. Rev. A}\ }\textbf {\bibinfo {volume} {56}},\ \bibinfo
  {pages} {33} (\bibinfo {year} {1997})}\BibitemShut {NoStop}%
\bibitem [{\citenamefont {Giovannetti}\ and\ \citenamefont
  {Fazio}(2005)}]{Giovannetti2005amplitude}%
  \BibitemOpen
  \bibfield  {author} {\bibinfo {author} {\bibfnamefont {V.}~\bibnamefont
  {Giovannetti}}\ and\ \bibinfo {author} {\bibfnamefont {R.}~\bibnamefont
  {Fazio}},\ }\bibfield  {title} {\bibinfo {title} {Information-capacity
  description of spin-chain correlations},\ }\href
  {https://doi.org/10.1103/PhysRevA.71.032314} {\bibfield  {journal} {\bibinfo
  {journal} {Phys. Rev. A}\ }\textbf {\bibinfo {volume} {71}},\ \bibinfo
  {pages} {032314} (\bibinfo {year} {2005})}\BibitemShut {NoStop}%
\bibitem [{\citenamefont {Faist}\ \emph {et~al.}(2020)\citenamefont {Faist},
  \citenamefont {Nezami}, \citenamefont {Albert}, \citenamefont {Salton},
  \citenamefont {Pastawski}, \citenamefont {Hayden},\ and\ \citenamefont
  {Preskill}}]{Faist2020AQEC}%
  \BibitemOpen
  \bibfield  {author} {\bibinfo {author} {\bibfnamefont {P.}~\bibnamefont
  {Faist}}, \bibinfo {author} {\bibfnamefont {S.}~\bibnamefont {Nezami}},
  \bibinfo {author} {\bibfnamefont {V.~V.}\ \bibnamefont {Albert}}, \bibinfo
  {author} {\bibfnamefont {G.}~\bibnamefont {Salton}}, \bibinfo {author}
  {\bibfnamefont {F.}~\bibnamefont {Pastawski}}, \bibinfo {author}
  {\bibfnamefont {P.}~\bibnamefont {Hayden}},\ and\ \bibinfo {author}
  {\bibfnamefont {J.}~\bibnamefont {Preskill}},\ }\bibfield  {title} {\bibinfo
  {title} {Continuous symmetries and approximate quantum error correction},\
  }\href {https://doi.org/10.1103/PhysRevX.10.041018} {\bibfield  {journal}
  {\bibinfo  {journal} {Phys. Rev. X}\ }\textbf {\bibinfo {volume} {10}},\
  \bibinfo {pages} {041018} (\bibinfo {year} {2020})}\BibitemShut {NoStop}%
\bibitem [{\citenamefont {Fulton}\ and\ \citenamefont
  {Harris}(2004)}]{fulton2004Representation}%
  \BibitemOpen
  \bibfield  {author} {\bibinfo {author} {\bibfnamefont {W.}~\bibnamefont
  {Fulton}}\ and\ \bibinfo {author} {\bibfnamefont {J.}~\bibnamefont
  {Harris}},\ }\href {https://doi.org/10.1007/978-1-4612-0979-9} {\emph
  {\bibinfo {title} {Representation Theory: A First Course}}}\ (\bibinfo
  {publisher} {Springer New York},\ \bibinfo {address} {New York, NY},\
  \bibinfo {year} {2004})\BibitemShut {NoStop}%
\bibitem [{\citenamefont
  {Dankert}(2005)}]{dankert2005efficientsimulationrandomquantum}%
  \BibitemOpen
  \bibfield  {author} {\bibinfo {author} {\bibfnamefont {C.}~\bibnamefont
  {Dankert}},\ }\href {https://arxiv.org/abs/quant-ph/0512217} {\bibinfo
  {title} {Efficient simulation of random quantum states and operators}}
  (\bibinfo {year} {2005}),\ \Eprint {https://arxiv.org/abs/quant-ph/0512217}
  {arXiv:quant-ph/0512217 [quant-ph]} \BibitemShut {NoStop}%
\bibitem [{\citenamefont {Gross}\ \emph {et~al.}(2007)\citenamefont {Gross},
  \citenamefont {Audenaert},\ and\ \citenamefont {Eisert}}]{Gross2007deisgn}%
  \BibitemOpen
  \bibfield  {author} {\bibinfo {author} {\bibfnamefont {D.}~\bibnamefont
  {Gross}}, \bibinfo {author} {\bibfnamefont {K.}~\bibnamefont {Audenaert}},\
  and\ \bibinfo {author} {\bibfnamefont {J.}~\bibnamefont {Eisert}},\
  }\bibfield  {title} {\bibinfo {title} {Evenly distributed unitaries: On the
  structure of unitary designs},\ }\href {https://doi.org/10.1063/1.2716992}
  {\bibfield  {journal} {\bibinfo  {journal} {Journal of Mathematical Physics}\
  }\textbf {\bibinfo {volume} {48}},\ \bibinfo {pages} {052104} (\bibinfo
  {year} {2007})},\ \Eprint
  {https://arxiv.org/abs/https://pubs.aip.org/aip/jmp/article-pdf/doi/10.1063/1.2716992/14847389/052104\_1\_online.pdf}
  {https://pubs.aip.org/aip/jmp/article-pdf/doi/10.1063/1.2716992/14847389/052104\_1\_online.pdf}
  \BibitemShut {NoStop}%
\bibitem [{\citenamefont {Dankert}\ \emph {et~al.}(2009)\citenamefont
  {Dankert}, \citenamefont {Cleve}, \citenamefont {Emerson},\ and\
  \citenamefont {Livine}}]{Dankert2009design}%
  \BibitemOpen
  \bibfield  {author} {\bibinfo {author} {\bibfnamefont {C.}~\bibnamefont
  {Dankert}}, \bibinfo {author} {\bibfnamefont {R.}~\bibnamefont {Cleve}},
  \bibinfo {author} {\bibfnamefont {J.}~\bibnamefont {Emerson}},\ and\ \bibinfo
  {author} {\bibfnamefont {E.}~\bibnamefont {Livine}},\ }\bibfield  {title}
  {\bibinfo {title} {Exact and approximate unitary 2-designs and their
  application to fidelity estimation},\ }\href
  {https://doi.org/10.1103/PhysRevA.80.012304} {\bibfield  {journal} {\bibinfo
  {journal} {Phys. Rev. A}\ }\textbf {\bibinfo {volume} {80}},\ \bibinfo
  {pages} {012304} (\bibinfo {year} {2009})}\BibitemShut {NoStop}%
\bibitem [{\citenamefont {Webb}(2016)}]{Webb2016Clifford3design}%
  \BibitemOpen
  \bibfield  {author} {\bibinfo {author} {\bibfnamefont {Z.}~\bibnamefont
  {Webb}},\ }\bibfield  {title} {\bibinfo {title} {The clifford group forms a
  unitary 3-design},\ }\href
  {https://doi.org/https://doi.org/10.26421/QIC16.15-16-8} {\bibfield
  {journal} {\bibinfo  {journal} {Quantum Info. Comput.}\ }\textbf {\bibinfo
  {volume} {16}},\ \bibinfo {pages} {1379–1400} (\bibinfo {year}
  {2016})}\BibitemShut {NoStop}%
\bibitem [{\citenamefont {Zhu}(2017)}]{Zhu2017MultiqubitClifford}%
  \BibitemOpen
  \bibfield  {author} {\bibinfo {author} {\bibfnamefont {H.}~\bibnamefont
  {Zhu}},\ }\bibfield  {title} {\bibinfo {title} {Multiqubit clifford groups
  are unitary 3-designs},\ }\bibfield  {journal} {\bibinfo  {journal} {Physical
  Review A}\ }\textbf {\bibinfo {volume} {96}},\ \href
  {https://doi.org/10.1103/physreva.96.062336} {10.1103/physreva.96.062336}
  (\bibinfo {year} {2017})\BibitemShut {NoStop}%
\bibitem [{\citenamefont {DiVincenzo}\ \emph {et~al.}(2002)\citenamefont
  {DiVincenzo}, \citenamefont {Leung},\ and\ \citenamefont
  {Terhal}}]{DiVincenzo2002hiding}%
  \BibitemOpen
  \bibfield  {author} {\bibinfo {author} {\bibfnamefont {D.}~\bibnamefont
  {DiVincenzo}}, \bibinfo {author} {\bibfnamefont {D.}~\bibnamefont {Leung}},\
  and\ \bibinfo {author} {\bibfnamefont {B.}~\bibnamefont {Terhal}},\
  }\bibfield  {title} {\bibinfo {title} {Quantum data hiding},\ }\href
  {https://doi.org/10.1109/18.985948} {\bibfield  {journal} {\bibinfo
  {journal} {IEEE Transactions on Information Theory}\ }\textbf {\bibinfo
  {volume} {48}},\ \bibinfo {pages} {580} (\bibinfo {year} {2002})}\BibitemShut
  {NoStop}%
\bibitem [{\citenamefont {D\"ur}\ \emph {et~al.}(2005)\citenamefont {D\"ur},
  \citenamefont {Hein}, \citenamefont {Cirac},\ and\ \citenamefont
  {Briegel}}]{Dur2005depolarization}%
  \BibitemOpen
  \bibfield  {author} {\bibinfo {author} {\bibfnamefont {W.}~\bibnamefont
  {D\"ur}}, \bibinfo {author} {\bibfnamefont {M.}~\bibnamefont {Hein}},
  \bibinfo {author} {\bibfnamefont {J.~I.}\ \bibnamefont {Cirac}},\ and\
  \bibinfo {author} {\bibfnamefont {H.-J.}\ \bibnamefont {Briegel}},\
  }\bibfield  {title} {\bibinfo {title} {Standard forms of noisy quantum
  operations via depolarization},\ }\href
  {https://doi.org/10.1103/PhysRevA.72.052326} {\bibfield  {journal} {\bibinfo
  {journal} {Phys. Rev. A}\ }\textbf {\bibinfo {volume} {72}},\ \bibinfo
  {pages} {052326} (\bibinfo {year} {2005})}\BibitemShut {NoStop}%
\bibitem [{\citenamefont {Terhal}(2015)}]{Terhal2015QEC}%
  \BibitemOpen
  \bibfield  {author} {\bibinfo {author} {\bibfnamefont {B.~M.}\ \bibnamefont
  {Terhal}},\ }\bibfield  {title} {\bibinfo {title} {Quantum error correction
  for quantum memories},\ }\href {https://doi.org/10.1103/RevModPhys.87.307}
  {\bibfield  {journal} {\bibinfo  {journal} {Rev. Mod. Phys.}\ }\textbf
  {\bibinfo {volume} {87}},\ \bibinfo {pages} {307} (\bibinfo {year}
  {2015})}\BibitemShut {NoStop}%
\bibitem [{\citenamefont {Flammia}\ and\ \citenamefont
  {Wallman}(2020)}]{Flammia2020Estimation}%
  \BibitemOpen
  \bibfield  {author} {\bibinfo {author} {\bibfnamefont {S.~T.}\ \bibnamefont
  {Flammia}}\ and\ \bibinfo {author} {\bibfnamefont {J.~J.}\ \bibnamefont
  {Wallman}},\ }\bibfield  {title} {\bibinfo {title} {Efficient estimation of
  pauli channels},\ }\bibfield  {journal} {\bibinfo  {journal} {ACM
  Transactions on Quantum Computing}\ }\textbf {\bibinfo {volume} {1}},\ \href
  {https://doi.org/10.1145/3408039} {10.1145/3408039} (\bibinfo {year}
  {2020})\BibitemShut {NoStop}%
\bibitem [{\citenamefont {Wallman}\ and\ \citenamefont
  {Emerson}(2016)}]{Wallman2016compiling}%
  \BibitemOpen
  \bibfield  {author} {\bibinfo {author} {\bibfnamefont {J.~J.}\ \bibnamefont
  {Wallman}}\ and\ \bibinfo {author} {\bibfnamefont {J.}~\bibnamefont
  {Emerson}},\ }\bibfield  {title} {\bibinfo {title} {Noise tailoring for
  scalable quantum computation via randomized compiling},\ }\href
  {https://doi.org/10.1103/PhysRevA.94.052325} {\bibfield  {journal} {\bibinfo
  {journal} {Phys. Rev. A}\ }\textbf {\bibinfo {volume} {94}},\ \bibinfo
  {pages} {052325} (\bibinfo {year} {2016})}\BibitemShut {NoStop}%
\bibitem [{\citenamefont {Uhlmann}(1976)}]{Uhlmann1976transition}%
  \BibitemOpen
  \bibfield  {author} {\bibinfo {author} {\bibfnamefont {A.}~\bibnamefont
  {Uhlmann}},\ }\bibfield  {title} {\bibinfo {title} {The “transition
  probability” in the state space of a $\star$-algebra},\ }\href
  {https://doi.org/10.1016/0034-4877(76)90060-4} {\bibfield  {journal}
  {\bibinfo  {journal} {Reports on Mathematical Physics}\ }\textbf {\bibinfo
  {volume} {9}},\ \bibinfo {pages} {273–279} (\bibinfo {year}
  {1976})}\BibitemShut {NoStop}%
\bibitem [{\citenamefont {Moore}\ and\ \citenamefont
  {Nilsson}(2001)}]{Moore2001Parallel}%
  \BibitemOpen
  \bibfield  {author} {\bibinfo {author} {\bibfnamefont {C.}~\bibnamefont
  {Moore}}\ and\ \bibinfo {author} {\bibfnamefont {M.}~\bibnamefont
  {Nilsson}},\ }\bibfield  {title} {\bibinfo {title} {Parallel quantum
  computation and quantum codes},\ }\href
  {https://doi.org/10.1137/S0097539799355053} {\bibfield  {journal} {\bibinfo
  {journal} {SIAM Journal on Computing}\ }\textbf {\bibinfo {volume} {31}},\
  \bibinfo {pages} {799} (\bibinfo {year} {2001})},\ \Eprint
  {https://arxiv.org/abs/https://doi.org/10.1137/S0097539799355053}
  {https://doi.org/10.1137/S0097539799355053} \BibitemShut {NoStop}%
\bibitem [{\citenamefont {Jiang}\ \emph {et~al.}()\citenamefont {Jiang},
  \citenamefont {Sun}, \citenamefont {Teng}, \citenamefont {Wu}, \citenamefont
  {Wu},\ and\ \citenamefont {Zhang}}]{Jiang2020OptimalSpacedepth}%
  \BibitemOpen
  \bibfield  {author} {\bibinfo {author} {\bibfnamefont {J.}~\bibnamefont
  {Jiang}}, \bibinfo {author} {\bibfnamefont {X.}~\bibnamefont {Sun}}, \bibinfo
  {author} {\bibfnamefont {S.-H.}\ \bibnamefont {Teng}}, \bibinfo {author}
  {\bibfnamefont {B.}~\bibnamefont {Wu}}, \bibinfo {author} {\bibfnamefont
  {K.}~\bibnamefont {Wu}},\ and\ \bibinfo {author} {\bibfnamefont
  {J.}~\bibnamefont {Zhang}},\ }\bibinfo {title} {Optimal space-depth trade-off
  of cnot circuits in quantum logic synthesis},\ in\ \href
  {https://doi.org/10.1137/1.9781611975994.13} {\emph {\bibinfo {booktitle}
  {Proceedings of the 2020 ACM-SIAM Symposium on Discrete Algorithms
  (SODA)}}},\ pp.\ \bibinfo {pages} {213--229},\ \Eprint
  {https://arxiv.org/abs/https://epubs.siam.org/doi/pdf/10.1137/1.9781611975994.13}
  {https://epubs.siam.org/doi/pdf/10.1137/1.9781611975994.13} \BibitemShut
  {NoStop}%
\bibitem [{\citenamefont {Swingle}(2018)}]{Swingle2018OTOC}%
  \BibitemOpen
  \bibfield  {author} {\bibinfo {author} {\bibfnamefont {B.}~\bibnamefont
  {Swingle}},\ }\bibfield  {title} {\bibinfo {title} {Unscrambling the physics
  of out-of-time-order correlators},\ }\href
  {https://doi.org/10.1038/s41567-018-0295-5} {\bibfield  {journal} {\bibinfo
  {journal} {Nature Physics}\ }\textbf {\bibinfo {volume} {14}},\ \bibinfo
  {pages} {988} (\bibinfo {year} {2018})}\BibitemShut {NoStop}%
\bibitem [{\citenamefont {Garcia}\ \emph {et~al.}(2021)\citenamefont {Garcia},
  \citenamefont {Zhou},\ and\ \citenamefont {Jaffe}}]{Garcia2021scrambling}%
  \BibitemOpen
  \bibfield  {author} {\bibinfo {author} {\bibfnamefont {R.~J.}\ \bibnamefont
  {Garcia}}, \bibinfo {author} {\bibfnamefont {Y.}~\bibnamefont {Zhou}},\ and\
  \bibinfo {author} {\bibfnamefont {A.}~\bibnamefont {Jaffe}},\ }\bibfield
  {title} {\bibinfo {title} {Quantum scrambling with classical shadows},\
  }\href {https://doi.org/10.1103/PhysRevResearch.3.033155} {\bibfield
  {journal} {\bibinfo  {journal} {Phys. Rev. Res.}\ }\textbf {\bibinfo {volume}
  {3}},\ \bibinfo {pages} {033155} (\bibinfo {year} {2021})}\BibitemShut
  {NoStop}%
\bibitem [{\citenamefont {Bravyi}\ \emph {et~al.}(2010)\citenamefont {Bravyi},
  \citenamefont {Poulin},\ and\ \citenamefont {Terhal}}]{Bravyi2010Tradeoffs}%
  \BibitemOpen
  \bibfield  {author} {\bibinfo {author} {\bibfnamefont {S.}~\bibnamefont
  {Bravyi}}, \bibinfo {author} {\bibfnamefont {D.}~\bibnamefont {Poulin}},\
  and\ \bibinfo {author} {\bibfnamefont {B.}~\bibnamefont {Terhal}},\
  }\bibfield  {title} {\bibinfo {title} {Tradeoffs for reliable quantum
  information storage in 2d systems},\ }\href
  {https://doi.org/10.1103/PhysRevLett.104.050503} {\bibfield  {journal}
  {\bibinfo  {journal} {Phys. Rev. Lett.}\ }\textbf {\bibinfo {volume} {104}},\
  \bibinfo {pages} {050503} (\bibinfo {year} {2010})}\BibitemShut {NoStop}%
\bibitem [{\citenamefont {Heinrich}\ \emph {et~al.}(2025)\citenamefont
  {Heinrich}, \citenamefont {Haferkamp}, \citenamefont {Roth},\ and\
  \citenamefont {Helsen}}]{heinrich2025anticoncentrationalmostneed}%
  \BibitemOpen
  \bibfield  {author} {\bibinfo {author} {\bibfnamefont {M.}~\bibnamefont
  {Heinrich}}, \bibinfo {author} {\bibfnamefont {J.}~\bibnamefont {Haferkamp}},
  \bibinfo {author} {\bibfnamefont {I.}~\bibnamefont {Roth}},\ and\ \bibinfo
  {author} {\bibfnamefont {J.}~\bibnamefont {Helsen}},\ }\href
  {https://arxiv.org/abs/2510.23719} {\bibinfo {title} {Anti-concentration is
  (almost) all you need}} (\bibinfo {year} {2025}),\ \Eprint
  {https://arxiv.org/abs/2510.23719} {arXiv:2510.23719 [quant-ph]} \BibitemShut
  {NoStop}%
\bibitem [{\citenamefont {Belkin}\ \emph {et~al.}(2025)\citenamefont {Belkin},
  \citenamefont {Allen},\ and\ \citenamefont
  {Clark}}]{belkin2025apparentuniversalbehaviorsecond}%
  \BibitemOpen
  \bibfield  {author} {\bibinfo {author} {\bibfnamefont {D.}~\bibnamefont
  {Belkin}}, \bibinfo {author} {\bibfnamefont {J.}~\bibnamefont {Allen}},\ and\
  \bibinfo {author} {\bibfnamefont {B.~K.}\ \bibnamefont {Clark}},\ }\href
  {https://arxiv.org/abs/2510.23726} {\bibinfo {title} {Apparent universal
  behavior in second moments of random quantum circuits}} (\bibinfo {year}
  {2025}),\ \Eprint {https://arxiv.org/abs/2510.23726} {arXiv:2510.23726
  [quant-ph]} \BibitemShut {NoStop}%
\bibitem [{\citenamefont {Berta}\ \emph {et~al.}(2011)\citenamefont {Berta},
  \citenamefont {Christandl},\ and\ \citenamefont
  {Renner}}]{Berta2011ReverseShannon}%
  \BibitemOpen
  \bibfield  {author} {\bibinfo {author} {\bibfnamefont {M.}~\bibnamefont
  {Berta}}, \bibinfo {author} {\bibfnamefont {M.}~\bibnamefont {Christandl}},\
  and\ \bibinfo {author} {\bibfnamefont {R.}~\bibnamefont {Renner}},\
  }\bibfield  {title} {\bibinfo {title} {The quantum reverse shannon theorem
  based on one-shot information theory},\ }\href
  {https://doi.org/10.1007/s00220-011-1309-7} {\bibfield  {journal} {\bibinfo
  {journal} {Communications in Mathematical Physics}\ }\textbf {\bibinfo
  {volume} {306}},\ \bibinfo {pages} {579–615} (\bibinfo {year}
  {2011})}\BibitemShut {NoStop}%
\bibitem [{\citenamefont {Horodecki}\ \emph {et~al.}(2005)\citenamefont
  {Horodecki}, \citenamefont {Oppenheim},\ and\ \citenamefont
  {Winter}}]{Horodecki2005partialquantuminformation}%
  \BibitemOpen
  \bibfield  {author} {\bibinfo {author} {\bibfnamefont {M.}~\bibnamefont
  {Horodecki}}, \bibinfo {author} {\bibfnamefont {J.}~\bibnamefont
  {Oppenheim}},\ and\ \bibinfo {author} {\bibfnamefont {A.}~\bibnamefont
  {Winter}},\ }\bibfield  {title} {\bibinfo {title} {Partial quantum
  information},\ }\href {https://doi.org/10.1038/nature03909} {\bibfield
  {journal} {\bibinfo  {journal} {Nature}\ }\textbf {\bibinfo {volume} {436}},\
  \bibinfo {pages} {673–676} (\bibinfo {year} {2005})}\BibitemShut {NoStop}%
\bibitem [{\citenamefont
  {Dupuis}(2022)}]{dupuis2022privacyamplificationdecouplingsmoothing}%
  \BibitemOpen
  \bibfield  {author} {\bibinfo {author} {\bibfnamefont {F.}~\bibnamefont
  {Dupuis}},\ }\href {https://arxiv.org/abs/2105.05342} {\bibinfo {title}
  {Privacy amplification and decoupling without smoothing}} (\bibinfo {year}
  {2022}),\ \Eprint {https://arxiv.org/abs/2105.05342} {arXiv:2105.05342
  [quant-ph]} \BibitemShut {NoStop}%
\bibitem [{\citenamefont {Linden}\ \emph {et~al.}(2009)\citenamefont {Linden},
  \citenamefont {Popescu}, \citenamefont {Short},\ and\ \citenamefont
  {Winter}}]{Linden2009thermalequilibrium}%
  \BibitemOpen
  \bibfield  {author} {\bibinfo {author} {\bibfnamefont {N.}~\bibnamefont
  {Linden}}, \bibinfo {author} {\bibfnamefont {S.}~\bibnamefont {Popescu}},
  \bibinfo {author} {\bibfnamefont {A.~J.}\ \bibnamefont {Short}},\ and\
  \bibinfo {author} {\bibfnamefont {A.}~\bibnamefont {Winter}},\ }\bibfield
  {title} {\bibinfo {title} {Quantum mechanical evolution towards thermal
  equilibrium},\ }\bibfield  {journal} {\bibinfo  {journal} {Physical Review
  E}\ }\textbf {\bibinfo {volume} {79}},\ \href
  {https://doi.org/10.1103/physreve.79.061103} {10.1103/physreve.79.061103}
  (\bibinfo {year} {2009})\BibitemShut {NoStop}%
\bibitem [{\citenamefont {Rio}\ \emph {et~al.}(2011)\citenamefont {Rio},
  \citenamefont {Åberg}, \citenamefont {Renner}, \citenamefont {Dahlsten},\
  and\ \citenamefont {Vedral}}]{Rio2011thermodynamic}%
  \BibitemOpen
  \bibfield  {author} {\bibinfo {author} {\bibfnamefont {L.~d.}\ \bibnamefont
  {Rio}}, \bibinfo {author} {\bibfnamefont {J.}~\bibnamefont {Åberg}},
  \bibinfo {author} {\bibfnamefont {R.}~\bibnamefont {Renner}}, \bibinfo
  {author} {\bibfnamefont {O.}~\bibnamefont {Dahlsten}},\ and\ \bibinfo
  {author} {\bibfnamefont {V.}~\bibnamefont {Vedral}},\ }\bibfield  {title}
  {\bibinfo {title} {The thermodynamic meaning of negative entropy},\ }\href
  {https://doi.org/10.1038/nature10123} {\bibfield  {journal} {\bibinfo
  {journal} {Nature}\ }\textbf {\bibinfo {volume} {474}},\ \bibinfo {pages}
  {61–63} (\bibinfo {year} {2011})}\BibitemShut {NoStop}%
\bibitem [{\citenamefont {Brandão}\ and\ \citenamefont
  {Horodecki}(2014)}]{Brand_o_2014arealaw}%
  \BibitemOpen
  \bibfield  {author} {\bibinfo {author} {\bibfnamefont {F.~G. S.~L.}\
  \bibnamefont {Brandão}}\ and\ \bibinfo {author} {\bibfnamefont
  {M.}~\bibnamefont {Horodecki}},\ }\bibfield  {title} {\bibinfo {title}
  {Exponential decay of correlations implies area law},\ }\href
  {https://doi.org/10.1007/s00220-014-2213-8} {\bibfield  {journal} {\bibinfo
  {journal} {Communications in Mathematical Physics}\ }\textbf {\bibinfo
  {volume} {333}},\ \bibinfo {pages} {761–798} (\bibinfo {year}
  {2014})}\BibitemShut {NoStop}%
\bibitem [{\citenamefont {Rubboli}\ \emph {et~al.}(2024)\citenamefont
  {Rubboli}, \citenamefont {Goodarzi},\ and\ \citenamefont
  {Tomamichel}}]{rubboli2024quantumconditionalentropies}%
  \BibitemOpen
  \bibfield  {author} {\bibinfo {author} {\bibfnamefont {R.}~\bibnamefont
  {Rubboli}}, \bibinfo {author} {\bibfnamefont {M.~M.}\ \bibnamefont
  {Goodarzi}},\ and\ \bibinfo {author} {\bibfnamefont {M.}~\bibnamefont
  {Tomamichel}},\ }\href {https://arxiv.org/abs/2410.21976} {\bibinfo {title}
  {Quantum conditional entropies}} (\bibinfo {year} {2024}),\ \Eprint
  {https://arxiv.org/abs/2410.21976} {arXiv:2410.21976 [quant-ph]} \BibitemShut
  {NoStop}%
\bibitem [{\citenamefont {Tomamichel}\ \emph {et~al.}(2014)\citenamefont
  {Tomamichel}, \citenamefont {Berta},\ and\ \citenamefont
  {Hayashi}}]{Tomamichel2014renyi}%
  \BibitemOpen
  \bibfield  {author} {\bibinfo {author} {\bibfnamefont {M.}~\bibnamefont
  {Tomamichel}}, \bibinfo {author} {\bibfnamefont {M.}~\bibnamefont {Berta}},\
  and\ \bibinfo {author} {\bibfnamefont {M.}~\bibnamefont {Hayashi}},\
  }\bibfield  {title} {\bibinfo {title} {{Relating different quantum
  generalizations of the conditional Rényi entropy}},\ }\href
  {https://doi.org/10.1063/1.4892761} {\bibfield  {journal} {\bibinfo
  {journal} {Journal of Mathematical Physics}\ }\textbf {\bibinfo {volume}
  {55}},\ \bibinfo {pages} {082206} (\bibinfo {year} {2014})},\ \Eprint
  {https://arxiv.org/abs/https://pubs.aip.org/aip/jmp/article-pdf/doi/10.1063/1.4892761/15618300/082206\_1\_online.pdf}
  {https://pubs.aip.org/aip/jmp/article-pdf/doi/10.1063/1.4892761/15618300/082206\_1\_online.pdf}
  \BibitemShut {NoStop}%
\bibitem [{\citenamefont {Müller-Lennert}\ \emph {et~al.}(2013)\citenamefont
  {Müller-Lennert}, \citenamefont {Dupuis}, \citenamefont {Szehr},
  \citenamefont {Fehr},\ and\ \citenamefont {Tomamichel}}]{Lennert2013renyi}%
  \BibitemOpen
  \bibfield  {author} {\bibinfo {author} {\bibfnamefont {M.}~\bibnamefont
  {Müller-Lennert}}, \bibinfo {author} {\bibfnamefont {F.}~\bibnamefont
  {Dupuis}}, \bibinfo {author} {\bibfnamefont {O.}~\bibnamefont {Szehr}},
  \bibinfo {author} {\bibfnamefont {S.}~\bibnamefont {Fehr}},\ and\ \bibinfo
  {author} {\bibfnamefont {M.}~\bibnamefont {Tomamichel}},\ }\bibfield  {title}
  {\bibinfo {title} {{On quantum Rényi entropies: A new generalization and
  some properties}},\ }\href {https://doi.org/10.1063/1.4838856} {\bibfield
  {journal} {\bibinfo  {journal} {Journal of Mathematical Physics}\ }\textbf
  {\bibinfo {volume} {54}},\ \bibinfo {pages} {122203} (\bibinfo {year}
  {2013})},\ \Eprint
  {https://arxiv.org/abs/https://pubs.aip.org/aip/jmp/article-pdf/doi/10.1063/1.4838856/15705273/122203\_1\_online.pdf}
  {https://pubs.aip.org/aip/jmp/article-pdf/doi/10.1063/1.4838856/15705273/122203\_1\_online.pdf}
  \BibitemShut {NoStop}%
\bibitem [{\citenamefont {Beigi}(2013)}]{Beigi2013renyi}%
  \BibitemOpen
  \bibfield  {author} {\bibinfo {author} {\bibfnamefont {S.}~\bibnamefont
  {Beigi}},\ }\bibfield  {title} {\bibinfo {title} {{Sandwiched Rényi
  divergence satisfies data processing inequality}},\ }\href
  {https://doi.org/10.1063/1.4838855} {\bibfield  {journal} {\bibinfo
  {journal} {Journal of Mathematical Physics}\ }\textbf {\bibinfo {volume}
  {54}},\ \bibinfo {pages} {122202} (\bibinfo {year} {2013})},\ \Eprint
  {https://arxiv.org/abs/https://pubs.aip.org/aip/jmp/article-pdf/doi/10.1063/1.4838855/15705026/122202\_1\_online.pdf}
  {https://pubs.aip.org/aip/jmp/article-pdf/doi/10.1063/1.4838855/15705026/122202\_1\_online.pdf}
  \BibitemShut {NoStop}%
\bibitem [{\citenamefont {Tomamichel}\ \emph {et~al.}(2009)\citenamefont
  {Tomamichel}, \citenamefont {Colbeck},\ and\ \citenamefont
  {Renner}}]{Tomamichel2009QAEP}%
  \BibitemOpen
  \bibfield  {author} {\bibinfo {author} {\bibfnamefont {M.}~\bibnamefont
  {Tomamichel}}, \bibinfo {author} {\bibfnamefont {R.}~\bibnamefont
  {Colbeck}},\ and\ \bibinfo {author} {\bibfnamefont {R.}~\bibnamefont
  {Renner}},\ }\bibfield  {title} {\bibinfo {title} {A fully quantum asymptotic
  equipartition property},\ }\href {https://doi.org/10.1109/TIT.2009.2032797}
  {\bibfield  {journal} {\bibinfo  {journal} {IEEE Transactions on Information
  Theory}\ }\textbf {\bibinfo {volume} {55}},\ \bibinfo {pages} {5840}
  (\bibinfo {year} {2009})}\BibitemShut {NoStop}%
\bibitem [{\citenamefont
  {Tomamichel}(2013)}]{tomamichel2013frameworknonasymptoticquantuminformation}%
  \BibitemOpen
  \bibfield  {author} {\bibinfo {author} {\bibfnamefont {M.}~\bibnamefont
  {Tomamichel}},\ }\href {https://arxiv.org/abs/1203.2142} {\bibinfo {title} {A
  framework for non-asymptotic quantum information theory}} (\bibinfo {year}
  {2013}),\ \Eprint {https://arxiv.org/abs/1203.2142} {arXiv:1203.2142
  [quant-ph]} \BibitemShut {NoStop}%
\bibitem [{\citenamefont {Zheng}\ \emph {et~al.}(2024)\citenamefont {Zheng},
  \citenamefont {He}, \citenamefont {Lee},\ and\ \citenamefont
  {Jiang}}]{Guo2024NearOptimal}%
  \BibitemOpen
  \bibfield  {author} {\bibinfo {author} {\bibfnamefont {G.}~\bibnamefont
  {Zheng}}, \bibinfo {author} {\bibfnamefont {W.}~\bibnamefont {He}}, \bibinfo
  {author} {\bibfnamefont {G.}~\bibnamefont {Lee}},\ and\ \bibinfo {author}
  {\bibfnamefont {L.}~\bibnamefont {Jiang}},\ }\bibfield  {title} {\bibinfo
  {title} {Near-optimal performance of quantum error correction codes},\ }\href
  {https://doi.org/10.1103/PhysRevLett.132.250602} {\bibfield  {journal}
  {\bibinfo  {journal} {Phys. Rev. Lett.}\ }\textbf {\bibinfo {volume} {132}},\
  \bibinfo {pages} {250602} (\bibinfo {year} {2024})}\BibitemShut {NoStop}%
\bibitem [{\citenamefont {Barnum}\ and\ \citenamefont
  {Knill}(2002)}]{Barnum2002Reversing}%
  \BibitemOpen
  \bibfield  {author} {\bibinfo {author} {\bibfnamefont {H.}~\bibnamefont
  {Barnum}}\ and\ \bibinfo {author} {\bibfnamefont {E.}~\bibnamefont {Knill}},\
  }\bibfield  {title} {\bibinfo {title} {Reversing quantum dynamics with
  near-optimal quantum and classical fidelity},\ }\href
  {https://doi.org/10.1063/1.1459754} {\bibfield  {journal} {\bibinfo
  {journal} {Journal of Mathematical Physics}\ }\textbf {\bibinfo {volume}
  {43}},\ \bibinfo {pages} {2097} (\bibinfo {year} {2002})},\ \Eprint
  {https://arxiv.org/abs/https://pubs.aip.org/aip/jmp/article-pdf/43/5/2097/19281461/2097\_1\_online.pdf}
  {https://pubs.aip.org/aip/jmp/article-pdf/43/5/2097/19281461/2097\_1\_online.pdf}
  \BibitemShut {NoStop}%
\end{thebibliography}%

\end{document}